\newcommand{\refcheckize}[1]{%
  \expandafter\let\csname @@\string#1\endcsname#1%
  \expandafter\DeclareRobustCommand\csname relax\string#1\endcsname[1]{%
    \csname @@\string#1\endcsname{##1}\@for\@temp:=##1\do{\wrtusdrf{\@temp}\wrtusdrf{{\@temp}}}}%
  \expandafter\let\expandafter#1\csname relax\string#1\endcsname
}
\newcommand{\refcheckizetwo}[1]{%
  \expandafter\let\csname @@\string#1\endcsname#1%
  \expandafter\DeclareRobustCommand\csname relax\string#1\endcsname[2]{%
    \csname @@\string#1\endcsname{##1}{##2}\wrtusdrf{##1}\wrtusdrf{{##1}}\wrtusdrf{##2}\wrtusdrf{{##2}}}%
  \expandafter\let\expandafter#1\csname relax\string#1\endcsname
}
\definecolor{mypink}{RGB}{199, 21 133}
\newtheorem{theorem}{Theorem}[section]
\newtheorem{lemma}[theorem]{Lemma}
\newtheorem{observation}[theorem]{Observation}
\newtheorem{proposition}[theorem]{Proposition}
\theoremstyle{definition}
\newtheorem{definition}[theorem]{Definition}
\theoremstyle{remark}
\newtheorem{remark}[theorem]{Remark}
\titleformat{\paragraph}[runin]
{\normalfont\itshape}{\theparagraph}{1em}{}
\titlespacing{\paragraph}{0pt}{0.55\baselineskip}{1em}
\newcommand{\bfparagraph}{%
  \@startsection{paragraph}%
    {4}{\z@ }{2.25ex \@plus 1ex \@minus .2ex}%
    {-1em}{\normalfont\normalsize\bf}%
}
\setlist[enumerate]{nosep, topsep=1ex}
\setlist[itemize]{nosep, topsep=1ex}
\setlist[description]{nosep,topsep=1ex}
\let\oldabstract\abstract
\let\oldendabstract\endabstract
\renewenvironment{abstract}
{%
  {\list{}{\addtolength{\leftmargin}{-1.0em}%
    \listparindent 1.5em%
     \itemindent    \listparindent%
     \rightmargin   \leftmargin%
     \parsep        \z@ \@plus\p@}%
     \item\relax}%
  {\endlist}%
\oldabstract}
{\oldendabstract}
\newcommand{\bigO}{\mathcal{O}}
\newcommand{\LCE}{\mathrm{LCE}}
\newcommand{\lcp}{\mathrm{lcp}}
\newcommand{\per}{\mathrm{per}}
\newcommand{\revstr}[1]{\overline{#1}}
\newcommand{\Lroot}{\mathrm{L\mbox{-}root}}
\newcommand{\Lhead}{\mathrm{L\mbox{-}head}}
\newcommand{\Ltail}{\mathrm{L\mbox{-}tail}}
\newcommand{\Lexp}{\mathrm{L\mbox{-}exp}}
\newcommand{\type}{\mathrm{type}}
\newcommand{\core}{\mathrm{C}}
\newcommand{\SACore}{\core_{\SA}}
\newcommand{\CountCore}{\mathrm{C}_{\mathrm{PM}}}
\newcommand{\STCore}{\mathrm{C}_{\mathrm{ST}}}
\newcommand{\Int}{\mathrm{int}}
\newcommand{\Intpad}{\mathrm{mstr}}
\newcommand{\Intpadc}{\mathrm{mstr}'}
\newcommand{\lrank}{\mathrm{lrank}}
\newcommand{\rrank}{\mathrm{rrank}}
\newcommand{\Pos}{\mathrm{Pos}}
\newcommand{\Posa}{\mathrm{Pos}^{\mathsf{a}}}
\newcommand{\Poss}{\mathrm{Pos}^{\mathsf{s}}}
\newcommand{\deltaa}{\delta^{\mathsf{a}}}
\newcommand{\deltas}{\delta^{\mathsf{s}}}
\newcommand{\Occ}{\mathrm{Occ}}
\newcommand{\Occa}{\mathrm{Occ}^{\mathsf{a}}}
\newcommand{\Occs}{\mathrm{Occ}^{\mathsf{s}}}
\newcommand{\Occam}{\mathrm{Occ}^{\mathsf{a}-}}
\newcommand{\Occsm}{\mathrm{Occ}^{\mathsf{s}-}}
\newcommand{\Occap}{\mathrm{Occ}^{\mathsf{a}+}}
\newcommand{\Occsp}{\mathrm{Occ}^{\mathsf{s}+}}
\newcommand{\rend}[1]{e(#1)}
\newcommand{\rendfull}[1]{e^{\rm full}(#1)}
\newcommand{\pend}[1]{e(#1)}
\newcommand{\pendfull}[1]{e^{\rm full}(#1)}
\newcommand{\LB}{\mathrm{RangeBeg}}
\newcommand{\UB}{\mathrm{RangeEnd}}
\newcommand{\unary}{\mathrm{unary}}
\newcommand{\Pow}{\mathrm{pow}}
\newcommand{\Successor}{\mathrm{succ}}
\newcommand{\rank}[3]{\mathsf{rank}_{#1,#2}(#3)}
\newcommand{\select}[3]{\mathsf{select}_{#1,#2}(#3)}
\newcommand{\fbeg}{s}
\newcommand{\fend}{t}
\newcommand{\Pref}{\mathrm{Pref}}
\newcommand{\stext}{s^{\rm text}}
\newcommand{\slex}{s^{\rm lex}}
\newcommand{\rtextm}{r^{\rm text}}
\newcommand{\rlexm}{r^{\rm lex}}
\newcommand{\T}{T}
\newcommand{\Trev}{\T^{\rm rev}}
\newcommand{\Pat}{P}
\newcommand{\Qsuf}{Q_{\rm suf}}
\renewcommand{\S}{\mathsf{S}}
\newcommand{\R}{\mathsf{R}}
\newcommand{\D}{\mathcal{D}}
\newcommand{\Su}{\mathcal{S}}
\newcommand{\Alphabet}{[0 \dd \sigma)}
\newcommand{\Z}{\mathbb{Z}}
\newcommand{\Zz}{\mathbb{Z}_{\ge 0}}
\newcommand{\Zp}{\mathbb{Z}_{+}}
\newcommand{\N}{\mathbb{N}}
\newcommand{\Lroots}{\mathrm{Roots}}
\newcommand{\Zset}{\mathsf{Z}}
\newcommand{\nil}{\bot}
\newcommand{\LCA}{\mathrm{LCA}}
\newcommand{\WA}{\mathrm{WA}}
\newcommand{\parent}{\mathrm{parent}}
\newcommand{\Root}{\mathrm{root}}
\newcommand{\sdepth}{\mathrm{sdepth}}
\newcommand{\str}{\mathrm{str}}
\newcommand{\child}{\mathrm{child}}
\newcommand{\pred}{\mathrm{pred}}
\newcommand{\findleaf}{\mathrm{findleaf}}
\newcommand{\isleaf}{\mathrm{isleaf}}
\newcommand{\isancestor}{\mathrm{isancestor}}
\newcommand{\cnt}{\mathrm{count}}
\newcommand{\slink}{\mathrm{slink}}
\newcommand{\wlink}{\mathrm{wlink}}
\newcommand{\wlinkprim}{\mathrm{wlink}'}
\newcommand{\ind}{\mathrm{index}}
\newcommand{\repr}{\mathrm{repr}}
\newcommand{\leftsibling}{\mathrm{leftsibling}}
\newcommand{\rightsibling}{\mathrm{rightsibling}}
\newcommand{\firstchild}{\mathrm{firstchild}}
\newcommand{\lastchild}{\mathrm{lastchild}}
\newcommand{\letter}{\mathrm{letter}}
\newcommand{\bpre}{b_{\rm pre}}
\newcommand{\epre}{e_{\rm pre}}
\newcommand{\deltatext}{\delta_{\rm text}}
\newcommand{\LTD}{L_{\D}}
\newcommand{\LTrange}{L_{\rm range}}
\newcommand{\LTper}{L_{\rm per}}
\newcommand{\LTpref}{L_{\rm pref}}
\newcommand{\LTrev}{L_{\rm rev}}
\newcommand{\LTroot}{L_{\rm root}}
\newcommand{\LTruns}{L_{\rm runs}}
\newcommand{\LTminexp}{L_{\rm minexp}}
\newcommand{\LTnode}{L_{\rm node}}
\newcommand{\LTchild}{L_{\rm child}}
\newcommand{\LTwa}{L_{\rm WA}}
\newcommand{\BVshort}{B_{3\tau-1}}
\newcommand{\BVS}{B_{\S}}
\newcommand{\BVRprim}{B_{\R'}}
\newcommand{\BVexp}{B_{\rm exp}}
\newcommand{\SA}{\mathrm{SA}}
\newcommand{\ISA}{\mathrm{ISA}}
\newcommand{\ARRshort}{A_{\rm short}}
\newcommand{\ARRsmap}{A_{\rm smap}}
\newcommand{\ARRsinvmap}{A^{-1}_{\rm smap}}
\newcommand{\ARRslex}{A_{\S}}
\newcommand{\ARRrmap}{A_{\rm rmap}}
\newcommand{\ARRrinvmap}{A^{-1}_{\rm rmap}}
\newcommand{\ARRnontail}{A_{\rm len}}
\newcommand{\ARRzlex}{A_{\Zset}}
\newcommand{\TSSS}{\mathcal{T}_{\S}}
\newcommand{\TZ}{\mathcal{T}_{\Zset}}
\newcommand{\Tshort}{\mathcal{T}_{3\tau-1}}
\newcommand{\ST}{\mathcal{T}_{\rm st}}
\newcommand{\mapToShort}{\mathrm{map}_{\ST, \Tshort}}
\newcommand{\mapToTSSS}{\mathrm{map}_{\ST, \TSSS}}
\newcommand{\mapToTZ}{\mathrm{map}_{\ST, \TZ}}
\newcommand{\pseudoInvTSSS}[2]{\mathrm{pseudoinv}_{\TSSS}(#1,#2)}
\newcommand{\pseudoInvTZ}[2]{\mathrm{pseudoinv}_{\TZ}(#1,#2)}
\newcommand{\sm}{\setminus}
\newcommand{\sub}{\subseteq}
\newcommand{\dd}{\mathinner{.\,.}}
\newcommand{\emptystring}{\varepsilon}
\newcommand{\probname}[1]{#1}
\begin{document}

\title{Breaking the $\bigO(n)$-Barrier in the
  Construction\\ of Compressed Suffix Arrays and Suffix Trees}

\author{
  \normalsize Dominik Kempa\thanks{Work in part done while
    at University of California, Berkeley and Johns Hopkins University.
    Supported by NIH HG011392, NSF DBI-2029552, 1652303, 1934846 grants,
    an Alfred P. Sloan Fellowship, and a Simons Foundation Junior
    Faculty Fellowship.}\\[-0.2ex]
  \normalsize Stony Brook University\\[-0.2ex]
  \normalsize \texttt{kempa@cs.stonybrook.edu}
  \and
  \normalsize Tomasz Kociumaka\thanks{Work mostly done while
    at the University of California, Berkeley, partly supported
    by NSF 1652303, 1909046, and HDR TRIPODS 1934846 grants,
    and an Alfred P. Sloan Fellowship.}\\[-0.2ex]
  \normalsize Max Planck Institute for Informatics\\[-0.2ex]
  \normalsize \texttt{tomasz.kociumaka@mpi-inf.mpg.de}
}

\date{\vspace{-0.5cm}}
\maketitle

\begin{abstract}
The suffix array, describing the lexicographical order of suffixes of a given text, and the suffix tree, a path-compressed trie of all suffixes, are the two most fundamental data structures for string processing, with plethora of applications in data compression, bioinformatics, and information retrieval. For a length-$n$ text, however, they use $\Theta(n \log n)$ bits of space, which is often too costly.  To address this, Grossi and Vitter [STOC 2000] and, independently, Ferragina and Manzini [FOCS 2000] introduced space-efficient versions of the suffix array, known as the \emph{compressed suffix array} (CSA) and the \emph{FM-index}. Sadakane [SODA 2002] then showed how to augment them to obtain the \emph{compressed suffix tree} (CST).  For a length-$n$ text over an alphabet of size $\sigma$, these structures use only $\mathcal{O}(n \log \sigma)$ bits. Nowadays, these structures are part of the standard toolbox: modern textbooks spend dozens of pages describing their applications, and they almost completely replaced suffix arrays and suffix trees in space-critical applications. The biggest remaining open question is how efficiently they can be constructed.  After two decades, the fastest algorithms still run in $\mathcal{O}(n)$ time [Hon et al., FOCS 2003], which is $\Theta(\log_{\sigma} n)$ factor away from the lower bound of $\Omega(n / \log_{\sigma} n)$ (following from the necessity to read the input).

In this paper, we make the first in 20 years improvement in $n$ for this problem by proposing a new compressed suffix array and a new compressed suffix tree which admit $o(n)$-time construction algorithms while matching the space bounds and the query times of the original CSA/CST and the FM-index.  More precisely, our structures take $\mathcal{O}(n\log \sigma)$ bits, support SA queries and full suffix tree functionality in $\mathcal{O}(\log^{\epsilon} n)$ time per operation, and can be constructed in $\mathcal{O}(n \min(1, \log \sigma / \sqrt{\log n}))$ time using $\mathcal{O}(n\log \sigma)$ bits of working space. (For example, if $\sigma=2$, the construction time is $\mathcal{O}(n / \sqrt{\log n}) = o(n)$.)  We derive this result as a corollary from a much more general reduction: We prove that all parameters of a compressed suffix array/tree (query time, space, construction time, and construction working space) can essentially be reduced to those of a data structure answering new query types that we call \emph{prefix rank} and \emph{prefix selection}. Using the novel techniques, we also develop a new index for pattern matching.
\end{abstract}

\thispagestyle{empty}
\clearpage
\pagenumbering{arabic}

\section{Introduction}

Let $\T$ be a text of length $n$. A \emph{suffix tree}~\cite{Weiner73}
of $\T$ is a trie of all suffixes of $\T$, in which
every unary path has been replaced with a single edge labeled by a
text substring. The resulting tree has less than $2n$ nodes and thus
can be encoded in $\bigO(n \log n)$ bits. Related to suffix trees are
\emph{suffix arrays}~\cite{mm1993}. The suffix array $\SA[1 \dd n]$ of
$\T$ stores the permutation of $\{1, \ldots, n\}$ such that $\SA[i]$
is the starting position of the $i$th lexicographically smallest
suffix of $\T$.  Consider now the following problem: Construct a data
structure that, given any length-$m$ pattern $\Pat$, counts the number
of occurrences of $\Pat$ in $\T$.  To solve it using a suffix tree, it
suffices to descend the tree in $\bigO(m)$ time and report the
precomputed number of leaves below the reached node. Using a
suffix array, it suffices to perform an $\bigO(m \log n)$-time binary
search resulting in the range $\SA[b \dd e)$ of suffixes of $\T$
having $\Pat$ as a prefix. Then, $e - b$ is the number of occurrences
of $\Pat$ in $\T$ (and $\SA[b \dd e)$ contains their starting
positions). The advantage of suffix array is that it is more space
efficient: it only needs $n \lceil \log n \rceil$ bits.  The queries,
however, are usually slightly slower.

The above is a canonical application of suffix arrays/trees. It is,
however, only the tip of the iceberg. Suffix trees and suffix arrays
are widely considered to be the two most fundamental data structures
for string processing. As written by Gusfield in his classical
textbook~\cite{gusfield}: \emph{``Suffix trees can be used to solve
the exact matching problem in linear time ($\ldots$), but their real
virtue comes from their use in linear-time solutions to many string
problems more complex than exact matching''}. This includes
well-studied problems like \probname{Maximal Repeats},
\probname{Longest Repeated Factor},
\probname{Minimal Absent Word}, \probname{Longest Common Substring},
\probname{Matching Statistics}, \probname{Maximal Unique Matches},
\probname{LZ77 Factorization}, \probname{BWT Compression}, and
many more (see, 
e.g.,~\cite{bwtbook,gusfield,MBCT2015,navarrobook,ohl2013}).

With the increasing size of datasets that need processing, plain
suffix arrays and suffix trees, however, have become expensive to use,
particularly in applications where the input text is over a small
alphabet $[0 \dd \sigma)$. Such text requires $n \lceil \log \sigma
\rceil$ bits, whereas the suffix array/tree uses at least $n \lceil
\log n \rceil$ bits of space. In some applications, the gap
$\tfrac{\log n}{\log \sigma}$ can be quite large, e.g., in
computational biology, where we usually have $\sigma = 4$, the gap is
typically between 16 and 32. This shortcoming was addressed by Grossi
and Vitter and, independently, Ferragina and Manzini at the turn of
the millennium. They introduced space-efficient versions of the suffix
array, known as the \emph{compressed suffix array (CSA)}~\cite{csa,
GrossiV05} and the \emph{FM-index}~\cite{fm, FerraginaM05}. For a
length-$n$ text over an alphabet of size $\sigma$, these data
structures use only $\bigO(n \log \sigma)$ bits, and they can answer
SA queries (asking for $\SA[i]$ given $i \in [1 \dd n]$) in
$\bigO(\log^{\epsilon} n)$ time, where $\epsilon > 0$ is an arbitrary
predefined constant. With such data structure, one can execute any
algorithm that uses the suffix array, but consuming less space and
only incurring a factor of $\bigO(\log^{\epsilon} n)$ penalty in the
runtime.\footnote{This is often acceptable: a slower algorithm remains
usable, but insufficient memory can thwart it \mbox{entirely}.}
Shortly after these discoveries, Sadakane~\cite{cst} extended
CSA/FM-index into a \emph{compressed suffix tree (CST)}, supporting
all suffix tree operations in $\bigO(\log^{\epsilon} n)$ time (while
still using $\bigO(n \log \sigma)$ bits of space).  This powerful
structure can be plugged into an even larger set of
algorithms~\cite{Gog11}.

Nowadays, CSAs and CSTs are widely used in practice. Modern string
algorithms textbooks focus on the use and applications of CSAs/CSTs
and related data structures~\cite{bwtbook,MBCT2015}, or even entirely
on the emerging notion of \emph{compressed data
structures}~\cite{navarrobook}. The FM-index occupies the central
role in some of the most commonly used bioinformatics tools, like
${\tt Bowtie}$~\cite{bowtie}, ${\tt BWA}$~\cite{bwa}, and ${\tt
Soap2}$~\cite{soap2}, and mature and highly engineered
implementations of CSAs and CSTs are available through the
\href{https://github.com/simongog/sdsl-lite}{\texttt{sdsl}} library%
\footnote{The library is available at
\url{https://github.com/simongog/sdsl-lite}.} of Gog et
al.~\cite{Gog11,sdsl}.  Despite these developments in functionality
and practical adoption of CSAs/CSTs, the time complexity of their
construction remains an open problem.  The original paper of Grossi
and Vitter~\cite{csa}, describes a method that, given a length-$n$
text over alphabet $\Sigma = [0 \dd \sigma)$, constructs the CSA in
$\bigO(n \log \sigma)$ time and using $\bigO(n \log n)$ bits of
working space. In 2003, a celebrated result of Hon et
al.~\cite{HonSS03} lowered the time complexity to $\bigO(n \log \log
\sigma)$ and the space to the optimal $\bigO(n \log \sigma)$
bits. Note, however, that, e.g., for $\sigma = 2$, this algorithm
still runs in $\Theta(n)$ time, which is slower by a $\Theta(\log n)$
factor than the lower bound of $\Omega(n / \log n)$, following simply
from the necessity to read the entire input. Recently,
Belazzougui~\cite{Belazzougui14} improved the time complexity of the
CSA/CST construction to randomized $\bigO(n)$ (while using the optimal
space of $\bigO(n \log \sigma)$ bits), making it independent of the
alphabet size $\sigma$. Shortly after, Munro, Navarro, and
Nekrich~\cite{MunroNN17} proposed a deterministic solution. Despite
these advances, 20 years after the result of
Hon et al.~\cite{HonSS03}, the bound of $\Omega(n)$ still stands on
the construction of CSAs/CSTs. Given the fundamental role of CSAs and
CSTs, we thus ask:

\vspace{-0.2ex}
\begin{center}
  \emph{Given a text over alphabet $\Sigma = [0 \dd \sigma)$
      represented using $\bigO(n \log \sigma)$ bits,\\ can we
      construct a compressed suffix array/tree of $\T$ in $o(n)$
      time?}
\end{center}

\bfparagraph{Our Results}

We answer the above question affirmatively by describing a new data
structure that takes $\bigO(n\log \sigma)$ bits, supports all operations
of CSA and CST in $\bigO(\log^{\epsilon} n)$ time, and can be
constructed in $\bigO(n\min(1, \log \sigma / \sqrt{\log n}))$ time using
$\bigO(n\log \sigma)$ bits of space (\cref{th:sa,th:st}). Thus, our
solution matches the size
and the query time of~\cite{fm,csa,cst} (as well as more recent
CSTs~\cite{BoucherCGHMNR21,CaceresN22,FischerMN09,Gagie2020,%
PrezzaR21,RussoNO11}) but, unlike those, admits a sublinear-time
construction for small $\sigma$. In particular, we achieve
$\bigO(n / \sqrt{\log n}) = o(n)$ time for $\sigma = 2$, constituting
the first improvement in $n$ since~2003~\cite{HonSS03}.

In addition to a new CSA/CST, we also present a new pattern matching
index. We show (in \cref{th:pm}) how, given a length-$n$ text $\T$
stored using $\bigO(n \log \sigma)$ bits, to construct in $\bigO(n
\min(1, \log \sigma / \sqrt{\log n}))$ time an index of size $\bigO(n
\log \sigma)$ bits that, given the packed representation (i.e., using
$\bigO(m \log \sigma)$ bits) of any pattern $\Pat[1 \dd m]$, counts
the occurrences of $\Pat$ in $\T$ in $\bigO(m / \log_{\sigma} n +
\log^{\epsilon} n)$ time (where $\epsilon >0$ is an arbitrary
predefined constant). The best previous solutions using compact space
(i.e., $\bigO(n \log \sigma)$ bits) achieve $\bigO(n \log \sigma /
\sqrt{\log n})$-time\footnote{Although CSA lets us implement pattern
counting queries, an index implementing pattern counting queries does
not let us implement SA queries; thus, although built in $o(n)$
time,~\cite{MunroNN20b} cannot be used to answer SA queries.}
construction and $\bigO(m / \log_{\sigma} n + \log n \cdot
\log_{\sigma} n)$-time queries~\cite{MunroNN20b}, or $\bigO(n)$-time
construction and $\bigO(m / \log_{\sigma} n + \log_{\sigma}^{\epsilon}
n)$-time queries~\cite{MunroNN20a}. Thus, for the most difficult case
of $\sigma = 2^{\bigO(\sqrt{\log n})}$, our construction subsumes both
these indexes in both aspects.\footnote{Note that if $\log \sigma =
\bigO(\sqrt{\log n})$, then $\log_{\sigma}^{\epsilon}n =
\Theta(\log^{\epsilon'}n)$ holds for $\epsilon' =
\tfrac{\epsilon}{2}$.} Since our pattern-matching index not only
returns the number of occurrences but also the range in $\SA$
containing all suffixes prefixed with $\Pat$, combining the above
result with our CSA yields the structure that can additionally
\emph{report} all occurrences of $\Pat$ in $\bigO(\log^{\epsilon} n)$
time per occurrence (\cref{th:pm-reporting}).

The query times of all our data structures (i.e., both the CSA/CST and
the pattern matching index) are worst-case,
and all our algorithms are deterministic.

Our data structures differ significantly from the CSA of Grossi and
Vitter~\cite{csa}, the FM-index of Ferragina and Manzini~\cite{fm},
and the CST of Sadakane~\cite{cst}, which are based on the so-called
$\Psi$ function~\cite{csa} or the Burrows--Wheeler
transform~\cite{bwt}. We instead rely on the combination of the
recently developed notion of \emph{string synchronizing sets
(SSS)}~\cite{sss} and the new type of queries we call \emph{prefix
rank} and \emph{prefix selection} queries.  Although the prior work on
SSS~\cite{sss,dynsa} laid out its basic properties, it cannot be
turned into an efficient CSA, because it heavily relies on
\emph{orthogonal range counting} queries~\cite{ChanP10}, which are
\emph{provably incapable} of supporting SA queries as fast as
the CSA or FM-index: P\v{a}tra\c{s}cu~\cite{Patrascu07} showed a lower bound
$\Omega(\tfrac{\log n}{\log \log n})$ on the query time of any
structure using near-linear space. On the other hand,
the $\bigO(n \log \sigma / \sqrt{\log n})$-time BWT construction
from~\cite{sss} is not sufficient to obtain an implementation of CSA
since the classical BWT-based CSA, in addition to BWT, requires $\SA$
samples, i.e., a set containing all pairs $(\SA^{-1}[j], j)$ such that $j$
is a multiple of $\log n$, and it is not known how to obtain such a
sequence using prior techniques. The key difficulty is computing the
(global) rank $\SA^{-1}[j]$ of each sampled suffix; an easy application of
sparse suffix sorting gives, in $\bigO(n / \log_{\sigma} n)$ time, the
lexicographic order of the sampled suffixes, but this is insufficient
for placing each sampled suffix among the $n$ suffixes of the original
string.

We sidestep these obstacles and demonstrate that \emph{general}
orthogonal range counting queries~\cite{chazelle,ChanP10} are in fact
not needed at all, and each of their uses can either be: (1)
eliminated completely (see the proof in \cref{sec:sa-periodic-construction}),
(2) replaced with prefix rank/selection queries (see
\cref{sec:prefix-queries}), or (3) improved, utilizing the fact that
the instances arising in our construction have properties that permit
a fast custom solution (see \cref{sec:range-queries}). More details
are provided in the Technical Overview
(\cref{sec:technical-overview}). As a result, we obtain a general set
of reductions for the construction of CSA/CST and pattern-matching
indexes, stated in \cref{th:sa-general,th:pm-general,th:st-general}.
In a single theorem, we can summarize it as follows; note that
our reduction achieves \emph{near-perfect efficiency}, i.e., it incurs
no overhead (compared to the optimal solution) in space,
preprocessing time, and preprocessing space, and only has an extra
$\bigO(\log \log n)$ term in the query time.
Everything else depends entirely on prefix rank and selection queries.

\begin{theorem}[Main result of this paper]\label{th:main}
  Consider a data structure answering prefix rank and selection queries
  (\cref{sec:rksel}) that, for any string of length $m$ over alphabet
  $\Alphabet^\ell$ (or equivalently, a sequence of $m$ length-$\ell$
  strings over alphabet $\Alphabet$), achieves:
  \begin{enumerate}[leftmargin=3.5ex]
  \item Space usage $S(m, \ell, \sigma)$ (measured in $\Theta(\log
    m)$-bit machine words),
  \item Preprocessing time $P_t(m, \ell, \sigma)$,
  \item Preprocessing space $P_s(m, \ell, \sigma)$,
  \item Query time $Q(m, \ell, \sigma)$.
  \end{enumerate}
  For every $\T \in \Alphabet^n$ with $2 \leq \sigma < n^{1/7}$, there
  exist $m=\bigO(n/\log_{\sigma} n)$ and $\ell=\bigO(\log_{\sigma} n)$
  such that, given the packed representation of $\T$, we can in
  $\bigO(n / \log_{\sigma} n + P_t(m, \ell, \sigma))$ time and
  $\bigO(n / \log_{\sigma} n + P_s(m, \ell, \sigma))$ working space
  build a structure of size $\bigO(n/\log_{\sigma} n + S(m, \ell,
  \sigma))$~that:
  \begin{itemize}[leftmargin=2.5ex]
  \item Supports $\SA$ and inverse $\SA$ queries in $\bigO(\log \log n
    + Q(m, \ell, \sigma))$ time;
  \item Supports all suffix tree operations (\cref{tab:st-operations})
    in $\bigO(\log \log n + Q(m, \ell, \sigma))$ time;
  \item Given the packed representation of any pattern $\Pat \in \Alphabet^p$,
    returns:
    \begin{itemize}[leftmargin=1ex]
    \item The range $\SA[b\dd e)$ of suffixes of $\T$ with prefix $P$
      in $\bigO(p / \log_{\sigma} n + \log \log n + Q(m, \ell,
      \sigma))$~time;
    \item All $occ$ starting positions of $P$ in $\T$ in $\bigO(p /
      \log_{\sigma} n + (occ + 1)(\log \log n + Q(m, \ell,
      \sigma)))$~time.
    \end{itemize}
  \end{itemize}
\end{theorem}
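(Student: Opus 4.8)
The plan is to establish \cref{th:main} as a corollary of the more granular reductions \cref{th:sa-general,th:pm-general,th:st-general} by instantiating them with a concrete choice of the string synchronizing set (SSS) parameter $\tau$ and by composing the three self-indexing components (SA, suffix-tree operations, pattern matching) on top of a single shared core data structure. The high-level strategy is: (1) pick $\tau = \Theta(\log_\sigma n)$, so that the packed text of length $n$ is partitioned into $m = \bigO(n/\tau) = \bigO(n/\log_\sigma n)$ metasymbols, each a length-$\ell$ block with $\ell = \bigO(\tau) = \bigO(\log_\sigma n)$; (2) show that every orthogonal-range-counting use in the prior SSS-based pipeline can be routed through the three escape hatches identified in the introduction — eliminated (periodic case, \cref{sec:sa-periodic-construction}), replaced by prefix rank/selection (\cref{sec:prefix-queries}), or solved by the custom range structure (\cref{sec:range-queries}); and (3) verify that the resources charged to the SSS machinery and to these custom structures are all $\bigO(n/\log_\sigma n)$ in time, space, and working space, so that the only surviving dependence on the black-box prefix rank/selection structure is the additive $S(m,\ell,\sigma)$, $P_t(m,\ell,\sigma)$, $P_s(m,\ell,\sigma)$, $Q(m,\ell,\sigma)$, plus the $\bigO(\log\log n)$ query overhead from predecessor search on the SSS positions.

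First I would fix the construction pipeline. Build the packed text, compute a $\tau$-synchronizing set $\S$ of size $\bigO(n/\tau)$ with $\tau = \lfloor c\log_\sigma n\rfloor$ for a suitable constant $c$ (using the $\bigO(n/\log_\sigma n)$-time construction from the SSS literature, valid since $\sigma < n^{1/7}$ guarantees $\log_\sigma n = \Omega(\log n / \log n) $ is large enough for the packing to help), and split the suffixes of $\T$ into the \emph{periodic} and \emph{non-periodic} families relative to $\S$. For the non-periodic part, the lexicographic structure is captured by sorting the length-$\bigO(\tau)$ contexts around synchronizing positions; encoding each such context as a metacharacter over $\Alphabet^\ell$ turns the required navigational primitives (finding the SA range of a pattern's non-periodic occurrences, mapping between text order and lex order of sampled suffixes) into exactly prefix rank and prefix selection queries on a string of length $m$ over $\Alphabet^\ell$ — this is where the black box enters. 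For the periodic part, invoke the argument of \cref{sec:sa-periodic-construction} showing the range-counting step is unnecessary: runs induced by the periodicity give the order directly, costing only $\bigO(n/\log_\sigma n)$. The remaining auxiliary queries (e.g. the ones on short contexts of length $<3\tau-1$, and the weighted-ancestor / lowest-common-ancestor machinery for suffix-tree operations in \cref{tab:st-operations}) are handled by the special-purpose range structure of \cref{sec:range-queries}, whose instances have the bounded-universe / bounded-multiplicity structure that permits $\bigO(n/\log_\sigma n)$ build time and $\bigO(\log\log n)$ query time.

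Next I would assemble the three query algorithms on this core. For an $\SA[i]$ (or $\ISA[j]$) query: test which family $i$ falls in via the precomputed bitvectors, then in the non-periodic case translate to a constant number of prefix rank/selection calls plus an $\bigO(\log\log n)$ predecessor lookup to locate the nearest synchronizing position, and in the periodic case use the run structure; total $\bigO(\log\log n + Q(m,\ell,\sigma))$. For suffix-tree operations, feed the same primitives into the generic CST emulation of \cref{th:st-general}, which expresses each operation in \cref{tab:st-operations} using $\bigO(1)$ SA/ISA-type calls and $\bigO(1)$ weighted-ancestor queries, again $\bigO(\log\log n + Q(m,\ell,\sigma))$. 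For pattern matching on a packed $\Pat\in\Alphabet^p$: in $\bigO(p/\log_\sigma n)$ time chop $\Pat$ into metasymbols and locate its synchronizing positions, reduce the search to prefix rank queries for the non-periodic occurrences and to an arithmetic computation on runs for the periodic ones (\cref{th:pm-general}), obtaining the range $\SA[b\dd e)$ in $\bigO(p/\log_\sigma n + \log\log n + Q(m,\ell,\sigma))$ time; then each of the $occ = e-b$ positions is reported by an $\SA$ query, giving the stated $\bigO(p/\log_\sigma n + (occ+1)(\log\log n + Q(m,\ell,\sigma)))$ bound.

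The main obstacle I anticipate is the periodic case: showing that the portion of the suffix array corresponding to suffixes with a long periodic prefix (those not "captured" by $\S$) can be built and navigated entirely without orthogonal range counting, purely from the combinatorics of runs and their Lyndon roots, while still fitting in $\bigO(n/\log_\sigma n)$ time and $\bigO(n/\log_\sigma n)$ words — and while interfacing cleanly with the non-periodic part at the boundary (suffixes whose periodic prefix ends near a synchronizing position). The second delicate point is bookkeeping the constants in $\tau$: the SSS density, the block length $\ell$, and the $3\tau-1$ short-context threshold must be chosen consistently so that $m = \bigO(n/\log_\sigma n)$ and $\ell = \bigO(\log_\sigma n)$ simultaneously, and so that encoding a length-$\ell$ block as a single symbol over $\Alphabet^\ell$ is legitimate (it is, since $\ell\log\sigma = \bigO(\log n)$ fits in $\bigO(1)$ machine words). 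Everything else — the predecessor structures on $\S$, the rank/select bitvectors, the custom range structure — is routine and contributes only the additive $\bigO(\log\log n)$ and $\bigO(n/\log_\sigma n)$ terms, so the theorem follows by combining \cref{th:sa-general,th:pm-general,th:st-general} with these parameter choices.
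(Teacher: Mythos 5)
Your proposal is correct and follows essentially the paper's own route: \cref{th:main} is obtained by combining the general reductions \cref{th:sa-general,th:pm-general,th:st-general} (which share the same underlying components) with $\tau=\Theta(\log_{\sigma} n)$, yielding $m=\bigO(n/\log_{\sigma} n)$ and $\ell=\bigO(\log_{\sigma} n)$, exactly as in the paper. Your summary of how those reductions are established internally (prefix rank/selection for the nonperiodic case, run/L-root combinatorics plus the specialized range structure of \cref{pr:range-queries} for the periodic case, and elimination of range queries at construction time) also matches the paper's development, so no further comparison is needed.
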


Using this general reduction, we obtain the specific tradeoffs
for CSA/CST and pattern matching queries we announced earlier by
plugging in the data structure for prefix rank/selection queries from
\cref{th:wavelet-tree}.

\bfparagraph{Related Work}

In parallel to efforts to improve the complexity of CSA/CST
construction, were the efforts to make it more
practical~\cite{sdsl,GogKKPP19,GogMP17,KarkkainenKP14a,OhlebuschFG10,PrezzaR21}.
This resulted in libraries of compressed data structures such as
\href{https://github.com/simongog/sdsl-lite}{\texttt{sdsl}}~\cite{sdsl},
\href{https://github.com/vigna/sux}{\texttt{sux}}, and
\href{https://github.com/fclaude/libcds}{\texttt{libcds}}.  More
recently, some of these data structures have been extended to the
dynamic setting, e.g., in the
\href{https://github.com/xxsds/DYNAMIC}{\texttt{DYNAMIC}}~\cite{Prezza17}
library.

In addition to CSA/CST and indexes using the optimal space of $\bigO(n
\log \sigma)$ bits, previous work addressed the problem of
designing structures using $\omega(n \log \sigma)$ but still
$o(n \log n)$ bits~\cite{FerraginaM05,Gao0N20,GrossiV05}. We
formulated our main result (\cref{th:main}) as a general reduction so
that techniques from these and similar future studies could be easily
combined with ours, potentially yielding new tradeoffs for
pattern matching and CSA/CST queries.

In recent years, there has also been progress in the query time of
$\bigO(n \log n)$-bit pattern matching indexes.  The suffix trees
support $\bigO(m)$-time pattern search after $\bigO(n)$-time randomized or
$\bigO(n \log\log \sigma)$-time deterministic
construction~\cite{FarachFM00,Ruzic08}.  Fischer and
Gawrychowski~\cite{wexp} achieved $\bigO(m + \log \log \sigma)$-time
queries after $\bigO(n)$-time deterministic construction, improving
upon~\cite{CKL15,mm1993}.  If the pattern is given using $\bigO(m \log
\sigma)$ bits, Bille et al.~\cite{BilleGS17} achieved $\bigO(m /
\log_{\sigma} n + \log m + \log \log \sigma)$ time, which Navarro and
Nekrich~\cite{NavarroN17} improved to~\mbox{$\bigO(m/\log_{\sigma} n +
1)$}.

Surprisingly, the size of some CSAs, CSTs, and compact indexes can be
reduced \emph{below} $n \lceil \log \sigma \rceil$ bits for
statistically compressible texts.  For example, already the original
FM-index~\cite{fm} takes only $\bigO(nH_k(\T)) + o(n \log \sigma)$
bits, where $H_k(\T)$ denotes the \emph{empirical $k$th-order entropy}
of the text $\T$~\cite{cover2006elements}. Currently, the smallest indexes
reach $nH_k(\T) + o(n(H_k(\T) + 1))$ bits~\cite{BarbayCGNN14,
BelazzouguiN15}.  Navarro and M{\"a}kinen~\cite{NavarroM07}, and
Belazzougui and Navarro~\cite{BelazzouguiN14} survey the achievable
tradeoffs for such \emph{fully compressed} indexes. Chan et
al.~\cite{ChanHLS07}, and M{\"a}kinen and Navarro~\cite{MakinenN08}
describe \emph{dynamic} compressed pattern-matching indexes
maintaining a collection of texts supporting insertions/deletions.

Compressed indexes based on LZ77~\cite{LZ77} and run-length
BWT~\cite{bwt} rapidly gain popularity. The early
indexes~\cite{DCC2015, BilleEGV18, BLRSRW15, GagieGKNP12,GagieGKNP14}
support only pattern search and random-access operations. Subsequent
works generalized them to other dictionary
compressors~\cite{ChristiansenEKN21,attractors,navarro201941} and
added dynamism~\cite{Gawrychowski2015, NishimotoDAM}. Support for SA
queries is a recent addition of Gagie et al.~\cite{Gagie2020}. Navarro
surveys these indexes~\cite{NavarroIndexes} and the intricate network
of the underlying compressibility measures~\cite{NavarroMeasures}.
Interestingly, some of these pattern matching indexes can be
constructed in compressed time.  For example, the index
of~\cite{Gawrychowski2015} can be constructed in $\bigO(z\log^3 n)$ time
from the LZ77 representation of $\T$ (with $z$ phrases), and then it
locates pattern occurrences in $\bigO(m + occ \log n)$ time.  On
the other hand, the only compressed index supporting SA queries~\cite{Gagie2020}
is only constructible in $\Omega(n)$ time, but it can be built in
compressed space $\bigO(r\log(n/r))$ given the run-length BWT of $\T$
(with $r$ runs).

\bfparagraph{Organization of the Paper}

After introducing the basic notation and tools in \cref{sec:prelim},
we give a technical overview of the paper in
\cref{sec:technical-overview}.  In \cref{sec:tools}, we then introduce
some auxiliary tools utilized in our data structures. \Cref{sec:sa}
describes our data structure answering $\SA$ and $\SA^{-1}$ queries. In
\cref{sec:pm}, we present our index for counting and reporting
occurrences of patterns given using packed representation. Finally, in
\cref{sec:st}, we extend the functionality of our CSA into that of a
CST.

\section{Preliminaries}\label{sec:prelim}

\begin{wrapfigure}{r}{0.35\textwidth}
  \vspace{-.55cm}
  \begin{tikzpicture}[xscale=0.8,yscale=0.37]
    \foreach \x [count=\i] in {, a, aababa, aababababaababa, aba,
      abaababa, abaababababaababa, ababa, ababaababa, abababaababa,
      ababababaababa, ba, baababa, baababababaababa, baba, babaababa,
      bababaababa, babababaababa}
        \draw (1.9, -\i) node[right]
          {$\texttt{\x\$}$};
    \draw(1.9,0) node[right] {\scriptsize $\T[\SA[i]\dd n]$};
    \foreach \x [count=\i] in {b, b, b, b, b, a, b, b,
                               a, a, a, a, a, a, b, a, a,\$}
      \draw (0.7, -\i) node {\footnotesize $\i$};
    \draw(0.7,0) node{\scriptsize $i$};
    \foreach \x [count=\i] in {18,17,12,3,15,10,1,13,8,
                               6,4,16,11,2,14,9,7,5}
      \draw (1.4, -\i) node {$\x\vphantom{\textbf{\underline{7}}}$};
    \draw(1.4,0) node{\scriptsize $\SA[i]$};
  \end{tikzpicture}

  \vspace{-0.4ex}
  \caption{A list of all sorted suffixes of $\T=
    \texttt{abaababababaababa\$}$ along with
    the suffix array of $\T$.}\label{fig:example}
  \vspace{-.6cm}
\end{wrapfigure}

A \emph{string} is a finite sequence of characters from a given
\emph{alphabet}.  The length of a string $S$ is denoted $|S|$. For
$i\in [1\dd |S|]$,\footnote{ For $i,j\in \Z$, denote $[i\dd j]=\{k
\mkern 1.5mu {\in} \mkern 1.5mu \Z : i \le k \le j\}$, $[i\dd j)=\{k
\mkern 1.5mu {\in} \mkern 1.5mu \Z : i \le k < j\}$, and $(i\dd
j]={\{k \mkern 1.5mu {\in} \mkern 1.5mu \Z: i < k \le j\}}$.  }
the $i$th character of $S$ is denoted $S[i]$.  A~\emph{substring} of
$S$ is a string of the form $S[i\dd j)=S[i]S[i+1]\cdots S[{j-1}]$ for
some $1\le i \le j \le |S|+1$. \emph{Prefixes} and \emph{suffixes} are
substrings of the form $S[1\dd j)$ and $S[i\dd |S|]$, respectively.
We use $\revstr{S}$ to denote the \emph{reverse} of $S$, i.e.,
$S[|S|]\cdots S[2]S[1]$.  We denote the \emph{concatenation} of two
strings $U$ and $V$, that is, $U[1]\cdots U[|U|]V[1]\cdots V[|V|]$, by
$UV$ or $U\cdot V$.  Furthermore, $S^k=\bigodot_{i=1}^k S$ is the
concatenation of $k \in \Zz$ copies of $S$; note that
$S^0=\emptystring$ is the \emph{empty string}. For a non-empty string $S \in
\Sigma^{+}$, we define the special infinite string $S^{\infty}$ such that
$S^{\infty}[i] = S[1 + (i-1) \bmod |S|]$ holds for every $i \in \Z$;
in particular, $S^{\infty}[1 \dd |S|] = S[1 \dd |S|]$. An
integer $p\in [1\dd |S|]$ is a \emph{period} of $S$ if $S[i] = S[i +
p]$ holds for every $i \in [1 \dd |S|-p]$. We denote the shortest
period of $S$ as $\per(S)$.

Throughout the paper, we consider a string (called the \emph{text})
$\T$ of length $n\geq 2$ over an integer alphabet $\Sigma = [0 \dd
\sigma)$, where $\sigma = n^{\bigO(1)}$.  We assume $\T[n] = 0$, and
that $0$ (also denoted with $\texttt{\$}$) does not appear elsewhere
in $\T$. We use $\preceq$ to denote the order on $\Sigma$, extended to
the \emph{lexicographic} order on $\Sigma^*$ (the set of strings over
$\Sigma$) so that $U,V\in \Sigma^*$ satisfy $U\preceq V$ if and only
if either $U$ is a prefix of $V$, or $U[1\dd i)=V[1\dd i)$ and
$U[i]\prec V[i]$ holds for some $i\in [1\dd \min(|U|,|V|)]$.  The
\emph{suffix array} $\SA[1\dd n]$ of $\T$ is a permutation of $[1\dd
n]$ such that $\T[\SA[1]\dd n] \prec \T[\SA[2]\dd n] \prec \cdots
\prec \T[\SA[n]\dd n]$, i.e., $\SA[i]$ is the starting position of the
lexicographically $i$th suffix of $\T$; see \cref{fig:example} for an
example.  The \emph{inverse suffix array} $\ISA[1 \dd n]$ (also denoted $\SA^{-1}[1\dd n]$) is the
inverse permutation of $\SA$, i.e., $\ISA[j] = i$ holds if and only if
$\SA[i] = j$. Intuitively, $\ISA[j]$ stores the lexicographic \emph{rank} of a
suffix $\T[j \dd n]$ among the suffixes of $\T$.  By $\lcp(U,V)$ we denote
the length of the longest common prefix of $U$ and $V$. For $j_1, j_2
\in [1 \dd n]$, we let $\LCE(j_1,j_2) = \lcp(\T[j_1 \dd ], \T[j_2 \dd
])$.  For any $\Pat, S \in \Sigma^{*}$, we let\vspace{-.5cm}
\begin{align*}
\Occ(\Pat, S) &= \{j
\in [1 \dd |S|] : j + |\Pat| \leq |S| + 1\text{ and }S[j \dd j {+}
|\Pat|) = \Pat\},\\ \LB(\Pat, S) &= |\{i \in [1 \dd |S|] : S[i \dd
|S|] \prec \Pat\}|,\\\UB(\Pat, S) &= \LB(\Pat, S) +
|\Occ(\Pat, S)|.\end{align*}  Observe that the following equality holds for every $\Pat \in \Sigma^{*}$:
\[\Occ(\Pat, \T) = \{\SA[i] : i \in (\LB(\Pat, \T) \dd
\UB(\Pat, \T)]\}.\]

We use the word RAM model of computation~\cite{Hagerup98} with $w$-bit
\emph{machine words}, where $w \ge \log n$.  In this model, strings
are typically represented as arrays, with each character occupying
one memory cell.  A single character, however, only needs $\lceil
\log \sigma \rceil$ bits, which might be much less than~$w$.  We can
therefore store (the \emph{packed representation} of) a text $\T\in
[0\dd \sigma)^n$ using $\bigO\big(\big\lceil{\frac{n\log
\sigma}{w}}\big\rceil\big)$ memory cells.

\subsection{(Prefix) Rank and Selection Queries}\label{sec:rksel}

Let us recall the (ordinary) rank and selection queries on a string
$S\in \Sigma^n$:
\begin{description}[style=sameline,itemsep=1ex]
\item[Rank query $\rank{S}{a}{j}$:] Given $a\in \Sigma$ and $j\in
  [0\dd n]$, compute $|\{i\in [1\dd j]: S[i]=a\}|$.
\item[Selection query $\select{S}{a}{r}$:] Given $a\in \Sigma$ and
  $r\in [1\dd \rank{S}{a}{n}]$, find the $r$th smallest element of
  $\{i\in [1\dd n] : S[i]=a\}$.
\end{description}

\begin{theorem}[Rank and selection queries in
    bitvectors~\cite{WaveletSuffixTree,Clark98,Jac89,MunroNV16}]\label{th:binrksel}
  For every string $S\in \{0,1\}^*$, there exists a data structure of
  $\bigO(|S|)$ bits answering rank and selection queries in $\bigO(1)$
  time.  Moreover, given the packed representations of $m$ binary
  strings of total length $n$, the data structures for all these
  strings can be constructed in $\bigO(m + n/\log n)$ time.
\end{theorem}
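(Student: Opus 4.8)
The $\bigO(|S|)$-bit structure with $\bigO(1)$-time queries is classical; the plan is to recall its skeleton and then verify that it admits the stated batched construction. For rank, I would use the standard two-level partition of $S$ into \emph{superblocks} of $\Theta(\log^2 n)$ bits and \emph{blocks} of $\Theta(\log n)$ bits, with all parameters expressed in terms of the \emph{global} length $n$ (which we learn by summing the $m$ input lengths in $\bigO(m)$ time). We store the number of ones preceding each superblock boundary (an array with $\bigO(|S|/\log^2 n)$ entries of $\bigO(\log n)$ bits each), the number of ones since the enclosing superblock boundary preceding each block boundary (an array of $\bigO(|S|/\log n)$ entries of $\bigO(\log\log n)$ bits), and answer a query by summing the relevant superblock count, the relevant block count, and the popcount of a masked $\Theta(\log n)$-bit chunk; the total is $\bigO(|S|)$ bits. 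The two arrays are produced by a single left-to-right scan of the packed $S$, one $\Theta(\log n)$-bit chunk at a time, maintaining a running count and using a constant-time broadword popcount on each chunk (or one lookup into a single global table indexed by $\lceil\tfrac{1}{2}\log n\rceil$-bit values). This costs $\bigO(1 + |S|/\log n)$ time per string, hence $\bigO(m + n/\log n)$ in total; strings of length $\bigO(\log n)$ are simply kept in packed form and queried with $\bigO(1)$ word operations, which is what produces the additive $\bigO(m)$.

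For selection, I would use the classical multi-level sampling structure, again parameterized by the global $n$: partition the ones of $S$ into consecutive runs of $\Theta(\log^2 n)$ ones (\emph{groups}) and record the position of the first one of each group; call a group \emph{long} if its span in $S$ exceeds $\Theta(\log^4 n)$ bits and store the positions of all its ones explicitly --- there are $\bigO(|S|/\log^4 n)$ long groups, so this uses $\bigO(|S|/\log n)$ bits; inside each \emph{short} group recurse with a smaller branching factor, handling long and short sub-groups analogously, for $\bigO(1)$ levels, until each remaining window is short enough to be resolved by one lookup into a global table over $\lceil\tfrac{1}{2}\log n\rceil$-bit values. A query then performs $\bigO(1)$ arithmetic and table work. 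For the construction, the key observation is that the group spans at any fixed level are \emph{pairwise disjoint} intervals of $S$; therefore one coordinated scan per group --- walking the blocks it overlaps (advancing in $\bigO(1)$ per block via the already-built block counts) and using a broadword (or table) select-within-a-chunk to pin down each recorded position --- runs in time proportional to the number of blocks the group overlaps plus the number of positions recorded for it. Summed over all groups at all levels, the first term telescopes to $\bigO(|S|/\log n)$ because the spans are disjoint and total at most $|S|$ bits, and the second term is $\bigO(|S|/\log n)$ as long as the branching factors are chosen so the recorded counts telescope (e.g. $\Theta(\log^2 n)$ ones at the top level and $\Theta(\log n)$ ones at the next). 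Over all $m$ strings this again yields $\bigO(m + n/\log n)$, and the single global table has $o(n)$-bit size and $\bigO(n/\log n)$ build time.

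The part to get exactly right --- and the main obstacle --- is the construction of the selection structure: the naive approach of enumerating the ones of $S$ one at a time costs $\Theta(|S|)$ for a dense string and breaks the target bound, so the structure must be designed so that (i) only $\bigO(|S|/\log n)$ positions are ever written, across all levels; (ii) these positions are located by scans whose total length is $\bigO(|S|)$ bits rather than by per-one enumeration, exploiting the disjointness of the group spans; and (iii) the base case is discharged by a single shared table small enough to build within budget, with the degenerate regime $n = \bigO(1)$ handled separately. Everything else --- the exact constants, bit-widths, masking, and the arithmetic behind the broadword popcount/select routines --- is routine.
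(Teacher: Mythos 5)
The paper does not prove this statement at all: it is imported verbatim from the cited literature (Jacobson's and Clark's rank/select structures together with the packed-input construction of Munro--Navarro--Nekrich and Babenko et al.), so there is no internal proof to compare against. Your sketch follows exactly that classical route, and you correctly identify that the only nontrivial content beyond the textbook structures is the batched $\bigO(m+n/\log n)$ construction, which you obtain by word-at-a-time scans with popcount/select-in-word tables and by exploiting disjointness of group spans so that no per-one enumeration is needed; this is the same mechanism as in the cited works, and the rank half of your argument is complete and correct as written.

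The one place where your sketch is not yet a proof is the bottom of the selection structure. With the branching factors you name ($\Theta(\log^2 n)$ ones at the top level, $\Theta(\log n)$ ones at the next), a ``short'' second-level subgroup can still span $\Theta(\log^2 n)$ bits, which a single lookup into a table indexed by $\lceil\tfrac12\log n\rceil$-bit patterns cannot resolve in $\bigO(1)$ time; you need either Clark-style parameters that force the terminal window down to $\bigO(\log n)$ bits, or an additional level storing explicit $\bigO(\log\log n)$-bit relative positions of the ones inside short subgroups. The latter is the standard fix, but it introduces a construction-time hazard you should address explicitly: there may be $\Theta(|S|/\log\log n)$ such sub-word entries, so writing them one by one already exceeds the $\bigO(|S|/\log n)$ budget. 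The cited constructions handle this by producing these entries in word-packed form (e.g.\ a lookup table that, given a half-word chunk, outputs the packed list of relative positions of its ones, which is then appended with $\bigO(1)$ word operations per chunk), so that the total work remains proportional to the number of machine words scanned and written. Adding that word-packed output step to your level-three description, and fixing the terminal-window length, closes the gap; everything else in your proposal is routine and matches the intended argument.
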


Next, we provide a generalization of rank and selection queries
specific to sequences of strings (strings whose characters are strings
themselves).  Let $W\in (\Sigma^*)^m$ be a sequence of $m$ strings.
\begin{description}[style=sameline,itemsep=1ex]
\item[Prefix rank query $\rank{W}{X}{j}$:] Given $X\in \Sigma^*$ and
  $j\in [0\dd m]$, compute $|\{i\in [1\dd j]: X\text{ is a prefix of
  }W[i]\}|$.
\item[Prefix selection query $\select{W}{X}{r}$:] Given $X\in
  \Sigma^*$ and $r\in [1\dd \rank{W}{X}{m}]$, find the $r$th smallest
  element of $\{i\in [1\dd m] : X\text{ is a prefix of }W[i]\}$.
\end{description}

The following result, proved in \cref{sec:prefix-queries} by building
on the results of Belazzougui and Puglisi~\cite{Belazzougui2016},
provides an efficient implementation of prefix rank and selection
queries.  Note that we require $W$ to consist of same-length strings
over an integer alphabet.

\begin{theorem}\label{th:wavelet-tree}
  For all integers $m,\ell,\sigma\in \Z_{\ge 1}$ satisfying $m\ge
  \sigma^\ell\ge 2$, every constant $\epsilon > 0$, and every string
  $W\in ([0\dd \sigma)^\ell)^{\le m}$, there exists a data structure
  of size $\bigO(m)$ answering prefix rank queries in
  $\bigO(\ell^{\epsilon/2}\log \log m)= \bigO(\log^\epsilon m)$ time and
  prefix selection queries in $\bigO(\ell^{\epsilon/2}) =
  \bigO(\log^\epsilon m)$ time.  Moreover, it can be constructed in
  $\bigO(m\min(\ell,\sqrt{\log m}))$ time using $\bigO(m)$ working space
  given the packed representation of $W$ and the constant parameter
  $\epsilon > 0$.
\end{theorem}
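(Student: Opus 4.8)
The plan is to reduce prefix rank and selection queries to ordinary rank and selection queries by means of a suitably compressed trie over the set of distinct strings occurring in $W$, and then to invoke \cref{th:binrksel} and the dictionary structure of Belazzougui and Puglisi~\cite{Belazzougui2016}. First I would observe that since $W\in([0\dd\sigma)^\ell)^{\le m}$, there are at most $\min(m,\sigma^\ell)=\bigO(m)$ distinct length-$\ell$ strings among $W[1],\dots,W[|W|]$. I would build the compacted trie (Patricia trie) $\mathcal{T}$ of this set; it has $\bigO(m)$ nodes, and every query string $X$ that is a prefix of some $W[i]$ corresponds to a (possibly implicit) locus on an edge of $\mathcal{T}$. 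Crucially, the set $\{i : X \text{ is a prefix of } W[i]\}$ equals the set of $i$ such that $W[i]$ lies in the subtree below that locus, i.e., a contiguous range of leaves in the left-to-right leaf order of $\mathcal{T}$. So if we write down, in a string $S$ over the alphabet of trie leaves (equivalently, over $[1\dd\bigO(m)]$), the sequence of leaf identifiers $(\mathrm{leaf}(W[1]),\dots,\mathrm{leaf}(W[|W|]))$, then a prefix rank query $\rank{W}{X}{j}$ becomes: find the leaf range $[\ell_X\dd r_X]$ below the locus of $X$, and return $\sum_{a\in[\ell_X\dd r_X]}\rank{S}{a}{j}$. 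To make this a \emph{single} rank query rather than a range sum, I would instead store, for each position $i$, not the leaf id but the preorder-DFS rank of the locus, or — cleaner — build a wavelet-tree-style structure on $S$ so that range-restricted rank over a contiguous symbol interval is $\bigO(\log\log m)$-time; this is exactly where the $\ell^{\epsilon/2}\log\log m$ query bound will come from if the wavelet tree is given branching factor $\sigma^{\Theta(\ell/\log_\sigma m)}$ so that it has $\bigO(\ell^{\epsilon/2})$ levels. Symmetrically, prefix selection $\select{W}{X}{r}$ reduces to selecting the $r$-th occurrence among symbols in $[\ell_X\dd r_X]$, doable in $\bigO(\ell^{\epsilon/2})$ time on the same structure, with no $\log\log m$ factor because selection needs no predecessor search inside a node.

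The second ingredient is locating, given the packed representation of $X$, the leaf range $[\ell_X\dd r_X]$ in $\mathcal{T}$. Here I would invoke the result of Belazzougui and Puglisi~\cite{Belazzougui2016} (a monotone minimal perfect hash / compacted-trie navigation structure): it lets us, after $\bigO(m)$-space preprocessing, blindly descend the Patricia trie in time proportional to $|X|/\log_\sigma m$ plus an $\bigO(\log\log m)$ overhead, and the descent returns the locus (hence the leaf range). Since in all our applications $|X|\le\ell$ and $\ell=\bigO(\log_\sigma m)$ (because $\sigma^\ell\le m$ forces $\ell\le\log_\sigma m$), this descent costs $\bigO(\ell/\log_\sigma m + \log\log m)=\bigO(\log\log m)$, which is absorbed into the stated bounds. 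One subtlety: a blind descent can return a wrong locus when $X$ is not actually a prefix of any $W[i]$; I would handle this by a single $\lcp$ verification against the one string stored at the returned leaf (a packed string comparison in $\bigO(\ell/\log_\sigma m)$ time), returning an empty range on mismatch — standard Patricia-trie practice.

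For the construction bounds I would proceed as follows. From the packed representation of $W$, sort the $|W|\le m$ length-$\ell$ strings; using radix sort on $\Theta(\log_\sigma m)$-sized super-characters this takes $\bigO(m\cdot\lceil\ell/\log_\sigma m\rceil)=\bigO(m)$ time when $\ell=\bigO(\log_\sigma m)$, but in general $\bigO(m\min(\ell,\log_\sigma m + \text{something}))$; the $\min(\ell,\sqrt{\log m})$ in the theorem statement comes instead from the wavelet-tree construction, where building $\bigO(\ell^{\epsilon/2})$ levels over $m$ symbols via the packed bitvector construction of \cref{th:binrksel} costs $\bigO(m\cdot \ell^{\epsilon/2})$ — but by choosing the branching factor adaptively one gets $\bigO(m\min(\ell,\sqrt{\log m}))$, matching the announced time. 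The compacted trie, the LCP information between consecutive sorted strings, and the Belazzougui–Puglisi navigation structure are all built in $\bigO(m)$ time and space from the sorted list. I would then assemble $S$ (the leaf-id sequence) in $\bigO(m)$ time and feed it to the wavelet-tree construction. I expect the main obstacle to be the careful balancing of the branching factor of the wavelet tree: we need the number of levels to be $\bigO(\ell^{\epsilon/2})$ (to meet the query bounds) while keeping per-level alphabet small enough that within-node rank costs only $\bigO(\log\log m)$ and construction costs only $\bigO(m\min(\ell,\sqrt{\log m}))$ — reconciling these three constraints simultaneously, and in particular getting the $\sqrt{\log m}$ (rather than $\ell$) construction bound when $\ell$ is large, is the delicate part; everything else is routine trie bookkeeping and an application of \cref{th:binrksel}.
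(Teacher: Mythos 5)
There is a genuine gap at the heart of your reduction. After mapping each $W[i]$ to a leaf of the compacted trie and forming the leaf-id sequence $S$, you need to count, among the first $j$ positions of $S$, those whose symbol lies in the leaf interval $[\ell_X \dd r_X]$ of the locus of $X$, and you assert that a balanced multi-ary wavelet tree over $S$ with $\bigO(\ell^{\epsilon/2})$ levels supports this ``range-restricted rank'' in $\bigO(\log\log m)$ time per level within $\bigO(m)$ space. This is exactly the step that is not substantiated, and it is where the difficulty lives: the locus intervals form a laminar family that does \emph{not} align with the child blocks of a balanced wavelet tree over leaf ids, so at each visited node you must count positions in a position prefix whose child label lies in an \emph{arbitrary} contiguous child range. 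With the branching factor you would need for $\bigO(\ell^{\epsilon/2})$ levels (note your stated factor $\sigma^{\Theta(\ell/\log_\sigma m)}\le\sigma$ does not even give few levels, since $\ell\le\log_\sigma m$; you would need roughly $\sigma^{\Theta(\ell^{1-\epsilon/2})}$, i.e.\ a child alphabet that can be polynomial in $m$), this within-node problem is two-dimensional dominance counting, which by P\v{a}tra\c{s}cu's lower bound cannot be answered in $\bigO(\log\log m)$ time with near-linear space -- the very obstacle this paper is engineered to avoid. The alternative of shaping the wavelet tree exactly like the trie removes the range issue (only single-child ranks are needed) but then the query path has up to $\ell$ nodes, giving $\bigO(\ell\log\log m)$, not $\bigO(\ell^{\epsilon/2}\log\log m)$; your proposal contains no mechanism that reconciles the two, and you flag this yourself as ``the delicate part.'' The construction bound $\bigO(m\min(\ell,\sqrt{\log m}))$ is likewise asserted (``choosing the branching factor adaptively'') rather than derived.

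For comparison, the paper's proof (\cref{pr:wavelet_tree}, instantiated with $h=\lceil\ell^{\epsilon/2}\rceil$) keeps the plain $\sigma$-ary wavelet tree of $W$, i.e.\ the trie of $\Sigma^{\le\ell}$, so that \emph{every} query prefix $X$ has its own node $v_X$ and only ordinary single-symbol rank/select on the node sequences $B_Y$ is needed (\cref{lm:rs}, rank in $\bigO(\log\log m)$, select in $\bigO(1)$). The level-count problem is then solved not by increasing the arity but by building, recursively, the same structure for the string $\tilde W$ over super-characters $\tilde\Sigma=\Sigma^h$: a query first jumps (via one recursive call) to $v_{X'}$ for the longest prefix $X'$ of $X$ whose length is a multiple of $h$, then walks at most $h$ edges; the recursion depth is $\bigO(\log_h(h\ell))=\bigO(1)$ for constant $\epsilon$, yielding $\bigO(\ell^{\epsilon/2}\log\log m)$ and $\bigO(\ell^{\epsilon/2})$ query times, $\bigO(m)$ space, and construction in $\bigO(m\ell)$ or, when $\ell>\sqrt{\log m}$, $\bigO(m\sqrt{\log m})$ time via the bit-parallel wavelet-tree construction. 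If you want to rescue your trie-based route, you would have to import essentially this super-character shortcutting idea; the generic range-counting detour does not go through.
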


\subsection{Range Counting and Selection}

\newcommand{\rcount}[3]{\mathsf{rcount}_{#1}(#2,#3)}
\newcommand{\rselect}[3]{\mathsf{rselect}_{#1}(#2,#3)}

Let $A[1 \dd m]$ be an array of
nonnegative integers. We define the following queries on $A$:
\begin{description}[style=sameline,itemsep=1ex]
\item[Range counting query $\rcount{A}{v}{j}$:] Given an integer $v
  \geq 0$ and a position $j \in [0 \dd m]$, compute $|\{i \in [1 \dd
  j]: A[i] \geq v\}|$.
\item[Range selection query $\rselect{A}{v}{r}$:] Given integers $v
  \geq 0$ and $r \in [1 \dd \rcount{A}{v}{m}]$, find the $r$th smallest
  element of $\{i \in [1 \dd m] : A[i] \geq v\}$.
\end{description}

The currently fastest general-purpose data structure for range
counting/selection queries is described in~\cite[Theorems~2.3 and~3.3]{ChanP10}. The instances in our construction, however,
satisfy an additional property, namely, that the sum $\sum_{i=1}^{m}
A[i]$ is bounded. This lets us obtain a solution with faster queries and
smaller construction time; see \cref{sec:range-queries}.

\begin{restatable}{proposition}{prrangequeries}\label{pr:range-queries}
  An array $A[1{\dd}m']$ of $m'\,{\in}\,[2{\dd}m]$ nonnegative
  integers satisfying $\sum_{i=1}^{m'}A[i] = \bigO(m \log m)$ can
  be preprocessed in $\bigO(m)$ time so that range counting and
  selection queries can be answered in $\bigO(\log \log m)$ time
  and $\bigO(1)$ time, respectively.
\end{restatable}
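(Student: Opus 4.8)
The plan is to build the data structure as a two-level blocking scheme. Since $\sum_{i=1}^{m'} A[i] = \bigO(m\log m)$, I partition the array $A[1\dd m']$ into $\bigO(m/\log m)$ consecutive blocks, each of length $\Theta(\log m)$. For range counting, the key is that the values occurring in $A$ that matter are essentially $\bigO(m\log m)$ in total "mass," so a single threshold $v$ can be large, but the set $\{i : A[i]\ge v\}$ is small whenever $v$ is large. Concretely, I would first reduce the query to a rank-type query: for each block, precompute a sorted list (or a small succinct structure) of its entries, plus prefix sums of block counts so that $\rcount{A}{v}{j}$ for $j$ at a block boundary is answered by summing, over all preceding blocks, the number of entries $\ge v$. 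The crucial observation is that $|\{i : A[i]\ge v\}| \le \frac{1}{v}\sum_i A[i] = \bigO(m\log m / v)$, so large thresholds have few survivors; combined with the fact that within a block there are only $\Theta(\log m)$ entries fitting in $\bigO(\log m)$ bits each, we can afford word-packed sorted block representations occupying $\bigO(m)$ words overall, and a block query (counting entries $\ge v$ in one block) takes $\bigO(\log\log m)$ time by binary search over the packed sorted block, or $\bigO(1)$ with a small precomputed table if block size is made $\frac12\log_\sigma$-style small.

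Next I would handle the cross-block prefix sums. For a fixed query $v$, I cannot afford to precompute prefix sums for all $v$; instead I note that the distinct values of $\rcount{A}{v}{m'}$ as $v$ ranges over all nonnegative integers form a nonincreasing step function with $\bigO(\text{number of distinct entries})$ steps, but more usefully, I would store for the block-boundary partial sums a structure keyed by value. A clean route: replace $A$ by the multiset of pairs $(\text{block index}, \text{value})$ and observe that the total number of entries is $m'=\bigO(m)$, with value-sum $\bigO(m\log m)$; then a range counting query is a 2-D dominance count, but the bounded-sum condition lets me bucket values geometrically. Build $\bigO(\log m)$ levels, level $k$ handling values in $[2^k, 2^{k+1})$; at level $k$ there are at most $\bigO(m\log m / 2^k)$ entries, so the structures across all levels sum to $\bigO(m\log m \cdot \sum_k 2^{-k} \cdot \text{(overhead)}) = \bigO(m)$ words provided each level's overhead is $\bigO(1)$ per stored entry. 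A query with threshold $v \in [2^k, 2^{k+1})$ then counts, at levels $>k$, all entries with index $\le j$ (plain prefix counts, precomputed arrays of size $\bigO(m\log m/2^{k'})$), and at level $k$ itself does one block-local scan plus a boundary lookup; this totals $\bigO(\log\log m)$ (dominated by one predecessor/binary-search step inside the relevant block and one lookup to locate $k$), which is exactly the claimed bound.

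For range selection $\rselect{A}{v}{r}$, I would first use a $\rcount$ query to binary-search for the block containing the $r$th survivor — but binary search costs $\bigO(\log m \log\log m)$, which is too slow — so instead I want $\bigO(1)$. Here I exploit that once the threshold $v$ and rank $r$ are fixed, the survivors are a subset of size $\bigO(m\log m / v)$; for $v$ small (say $v = \bigO(\log m)$) I can afford to store, for each of the $\bigO(\log m)$ small thresholds, an explicit sorted array of all positions $\ge v$, totalling $\bigO(m\log m)$ positions across $\bigO(\log m)$ thresholds... which is $\bigO(m\log^2 m)$ words — too much. The fix is to store, per geometric level $k$, an explicit list of the (few) positions with value in that band, and reconstruct the survivor list for threshold $v\in[2^k,2^{k+1})$ as the merge of the level-$k$ local survivors with the precomputed sorted list of all positions at levels $>k$ (of size $\bigO(m\log m/2^k)$, stored once per level, summing to $\bigO(m)$ words total); a fractional-cascading / pointer structure across levels then gives $\bigO(1)$-time access to the $r$th element. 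The main obstacle, and the part I expect to be delicate, is exactly this: making range selection $\bigO(1)$ rather than $\bigO(\log m)$ while keeping total space $\bigO(m)$ words and construction time $\bigO(m)$ — it requires carefully charging each stored auxiliary element against the value-mass $A[i]$ of some position, using the $\sum A[i] = \bigO(m\log m)$ bound to control the geometric sum, and then layering a constant-time navigation (via precomputed successor pointers between consecutive geometric levels, built in one left-to-right sweep) on top of the merged lists.
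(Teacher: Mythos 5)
There is a genuine gap, and it sits exactly where you flag it: the geometric value bands $[2^k,2^{k+1})$ cannot support the claimed bounds. First, the ``precomputed sorted list of all positions at levels $>k$, stored once per level'' does not total $\bigO(m)$ words: its overall size is $\sum_{k\ge 0}\bigl|\{i:A[i]\ge 2^{k+1}\}\bigr|=\Theta\bigl(\sum_i \log A[i]\bigr)$, which is $\Theta(m\log\log m)$ already when every $A[i]=\Theta(\log m)$ (a legal input, since then $\sum_i A[i]=\Theta(m\log m)$); this is incompatible with $\bigO(m)$ preprocessing time. If you drop these suffix-union lists and keep only the disjoint per-band lists (which indeed total $\bigO(m)$), then the cross-band part of $\rcount{A}{v}{j}$ forces you to touch $\Theta(\log m)$ bands, each with its own predecessor search, which is far too slow. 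Second, even granting those lists, the $\bigO(1)$ selection is not established: for $v\in[2^k,2^{k+1})$ the survivor set is the positional merge of a static list (all positions with value $\ge 2^{k+1}$) with a $v$-dependent subset of band $k$; fractional cascading gives constant-time navigation between fixed sorted lists, not constant-time selection of the $r$th element of a merge whose second component changes with the query threshold.

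The paper's proof uses a different levelling that dissolves both issues. Set $h=\lfloor\log m\rfloor$ and, for every $k\ge 0$, store the \emph{entire} survivor list $P_k$ of positions with $A[i]\ge kh$, sorted by position and equipped with a predecessor structure; each position appears in $\lceil (A[i]+1)/h\rceil$ lists, so $\sum_k |P_k|=\bigO\bigl(m'+\tfrac1h\sum_i A[i]\bigr)=\bigO(m)$. The $h$ fine thresholds inside level $k$ are then handled by $h$ bitvectors over $P_k$ (concatenated into one bitvector $M'_k$ with rank/select support), costing only $h$ bits per stored entry, i.e.\ $\bigO(m\log m)$ bits overall. A counting query is one predecessor search in $P_k$ ($\bigO(\log\log m)$) followed by one rank ($\bigO(1)$), and a selection query is $\rselect{A}{v}{r}=P_k[\select{M_v}{1}{r}]$, a single bitvector select plus an array access, with no predecessor step and no merging across levels---which is precisely what yields $\bigO(1)$ selection. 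Your within-band idea is partially salvageable (per-value bitvectors inside band $k$ fit in $\bigO(m\log m)$ bits, since every band-$k$ entry has $A[i]\ge 2^k$), but the cross-band counting and the constant-time selection still require the ``store all survivors of the level's base threshold'' device with levels spaced $\Theta(\log m)$ apart rather than doubling levels.
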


\subsection{String Synchronizing Sets}

\begin{definition}[$\tau$-synchronizing set~\cite{sss}]\label{def:sss}
  Let $\T\in \Sigma^n$ be a string and let $\tau \in [1\dd
  \lfloor\frac{n}{2}\rfloor]$ be a parameter. A set $\S \subseteq [1
  \dd n - 2\tau + 1]$ is called a \emph{$\tau$-synchronizing set} of
  $\T$ if it satisfies the following \emph{consistency} and
  \emph{density} conditions:
  \begin{enumerate}
  \item If $\T[i \dd i + 2\tau) = \T[j\dd j + 2\tau)$, then $i \in \S$
    holds if and only if $j \in \S$ (for $i, j \in [1 \dd n - 2\tau
    + 1]$),
  \item $\S\cap[i \dd i + \tau)=\emptyset$ if and only if
    $i \in \R(\tau, \T)$ (for $i \in [1 \dd n - 3\tau + 2]$), where
    \[
      \R(\tau, \T) := \{i \in [1 \dd |\T| - 3\tau + 2] :
      \per(\T[i \dd i + 3\tau - 2]) \leq \tfrac13\tau\}.
    \]
  \end{enumerate}
\end{definition}

In most applications, we want to minimize $|\S|$. Note, however, that
the density condition imposes a lower bound
$|\S|=\Omega(\frac{n}{\tau})$ for strings of length $n\ge 3\tau-1$
that do not contain substrings of length $3\tau-1$ which are periodic
with period $\leq \frac13\tau$.  Thus, we cannot hope to achieve an
upper bound improving in the worst case upon the following one.

\begin{theorem}[{\cite[Proposition~8.10]
      {sss}}]\label{th:sss-existence-and-construction}
  For any string $\T$ of length $n$ and parameter $\tau \in [1\dd
  \lfloor\frac{n}{2}\rfloor]$, there exists a $\tau$-synchronizing
  set $\S$ of size $|\S| = \bigO \left( \frac{n}{\tau}
  \right)$. Moreover, if $\T \in \Alphabet^n$, where $\sigma =
  n^{\bigO(1)}$, such $\S$ can be deterministically constructed in
  $\bigO(n)$ time.
\end{theorem}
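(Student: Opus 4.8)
The plan is to build $\S$ by the sliding-window-minimum construction behind the theorem (Proposition~8.10 of \cite{sss}) and then derandomize it. Fix any function $h$ assigning an integer ``rank'' to every length-$\tau$ substring of $\T$; for a position $j$ with $j+\tau-1\le n$ abbreviate $h(j):=h(\T[j\dd j+\tau))$. For each $i\in[1\dd n-2\tau+1]$ put $i\in\S$ iff the minimum of $h$ over the sliding window of $\tau+1$ consecutive starting positions $\{i,i+1,\dots,i+\tau\}$ is attained \emph{exactly once} in that window, namely at $i$ or at $i+\tau$. The \emph{consistency} condition is then immediate: membership of $i$ in $\S$ depends only on the substrings $\T[j\dd j+\tau)$ for $j\in[i\dd i+\tau]$, all of which occur inside $\T[i\dd i+2\tau)$, so $\T[i\dd i+2\tau)=\T[i'\dd i'+2\tau)$ forces the same verdict at $i$ and $i'$. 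What remains is the \emph{density} condition (for every such $h$) and a choice of $h$ realizing both $|\S|=\bigO(n/\tau)$ and an $\bigO(n)$-time construction.

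For density I would split into periodic and aperiodic windows and invoke the Fine--Wilf periodicity lemma. If $i\in\R(\tau,\T)$, i.e.\ $p:=\per(\T[i\dd i+3\tau-2])\le\tfrac13\tau$, then for every $i'\in[i\dd i+\tau)$ all substrings read by the window $\{i',\dots,i'+\tau\}$ lie inside the $p$-periodic block $\T[i\dd i+3\tau-2]$; a direct check gives $h(i')=h(i'+p)$ and $h(i'+\tau)=h(i'+\tau-p)$ with $i'+p$ and $i'+\tau-p$ interior to the window (as $p<\tau$), so the minimum is never attained \emph{uniquely} at an endpoint and $i'\notin\S$. Conversely, if $\T[i\dd i+3\tau-2]$ is not $\tfrac13\tau$-periodic, then choosing a minimiser of $h$ over $[i\dd i+2\tau-1]$ suitably, one shows (this is the combinatorial core and needs care) that it anchors a window, with index in $[i\dd i+\tau)$, whose minimum is uniquely at an endpoint---so $\S\cap[i\dd i+\tau)\neq\emptyset$; the only way this could fail is for the relevant substrings to repeat densely enough to force, via Fine--Wilf, a period $\le\tfrac13\tau$ in $\T[i\dd i+3\tau-2]$, a contradiction.

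For the size bound, take $h$ to be a uniformly random injective ranking of the (at most $n$) distinct length-$\tau$ substrings of $\T$. For a fixed $i$, the event $i\in\S$ requires the minimum of $\ge\tau+1$ distinctly ranked values to land on one of two designated endpoints, which has probability $\le 2/(\tau+1)=\bigO(1/\tau)$; summing over the fewer than $n$ choices of $i$ gives $\mathbb{E}[\,|\S|\,]=\bigO(n/\tau)$, hence some injective ranking achieves $|\S|=\bigO(n/\tau)$. With consistency and density this settles the existence claim.

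The remaining and hardest task is the deterministic $\bigO(n)$-time construction. Evaluating $h$ naively reads $\Theta(n\tau)$ characters, and the obvious deterministic choice---the lexicographic rank of each length-$\tau$ substring, computable in $\bigO(n)$ time from the suffix and LCP arrays over the polynomial alphabet---need not be ``spread out'': along any stretch of $\T$ whose length-$\tau$ substrings happen to increase lexicographically, every position would be selected. I would instead follow \cite{sss}: fingerprint the length-$\tau$ substrings by a hash drawn from a small family (legitimate since $\sigma=n^{\bigO(1)}$), and fix one concrete hash deterministically via the method of conditional expectations so that it is collision-free on the distinct length-$\tau$ substrings and the $\bigO(n/\tau)$ bound on the resulting (deterministic) $|\S|$ is preserved; prefix fingerprints then deliver each $h(j)$ in $\bigO(1)$ time, and a single $\bigO(n)$-time sliding-window pass (monotone-deque minimum, tracking the multiplicity of the minimum) outputs $\S$. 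It is convenient to also isolate the periodic positions---the set $\R(\tau,\T)$, extractable in $\bigO(n)$ time from the runs of $\T$---and handle them by a fixed canonical rule, leaving the derandomization to the aperiodic part only. Stitching these pieces to the case analysis above yields the theorem; I expect the derandomization---simultaneously guaranteeing collision-freeness and the $\bigO(n/\tau)$ size bound while keeping the total time $\bigO(n)$---to be the principal obstacle.
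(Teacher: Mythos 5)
Your selection rule---put $i$ into $\S$ iff the minimum of $h$ over $\{i,\dots,i+\tau\}$ is attained \emph{exactly once}, at $i$ or at $i+\tau$---does make the periodic direction of the density condition immediate, but it breaks the other direction, and your claim that density holds ``for every such $h$'' is false. Take $\tau=3$ and a stretch of $\T$ equal to $\mathtt{abcabcab}$, so that $\per(\T[i\dd i+3\tau-2])=3>\tfrac13\tau$ and hence $i\notin\R(\tau,\T)$; density then demands $\S\cap[i\dd i+\tau)\neq\emptyset$. The length-$\tau$ fragments at positions $i,\dots,i+5$ read $XYZXYZ$ with $X=\mathtt{abc}$, $Y=\mathtt{bca}$, $Z=\mathtt{cab}$, so for \emph{any} injective $h$ the three relevant windows carry the value patterns $(x,y,z,x)$, $(y,z,x,y)$, $(z,x,y,z)$. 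Whichever of $x,y,z$ is smallest, in each window its occurrences form either a single interior position or both endpoints simultaneously, so the ``unique minimum at an endpoint'' test never passes and $\S\cap[i\dd i+\tau)=\emptyset$. The same obstruction arises for any local period in $(\tfrac13\tau\dd\tau]$: repeated fragments inside the window only force \emph{some} period at most $\tau$, not at most $\tfrac13\tau$, so the Fine--Wilf step you invoke to exclude failure does not apply. (Your expected-size bound $2/(\tau+1)$ has the same blind spot: it presumes the $\tau+1$ values in the window are distinct, which fails exactly on such repetitive windows.)

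Note also that the present paper does not prove this statement at all---it imports it as \cite[Proposition~8.10]{sss}---and the construction there resolves the tension differently from your uniqueness trick: it keeps the \emph{value-based} rule ``the minimum identifier in the window occurs at position $i$ or $i+\tau$'' (multiplicities allowed), and instead neutralizes the \emph{highly periodic} fragments, i.e.\ positions $j$ with $\per(\T[j\dd j+\tau))\le\tfrac13\tau$, by ranking them above all other fragments and selecting nothing from windows consisting solely of them. Then $i\in\R(\tau,\T)$ forces every window anchored in $[i\dd i+\tau)$ to consist solely of highly periodic fragments, so nothing is selected; and if $i\notin\R(\tau,\T)$, a Fine--Wilf chaining argument over consecutive overlapping fragments shows that some $j\in[i\dd i+2\tau-1]$ carries a non-highly-periodic fragment, and anchoring the window at a minimum-identifier such $j$ (or at $j-\tau$) selects a position of $[i\dd i+\tau)$---here repeated minima are harmless. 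The $\bigO(n/\tau)$ size bound then also needs a case split between repetitive and non-repetitive windows rather than the uniform $2/(\tau+1)$ estimate, and the deterministic $\bigO(n)$-time construction is a further, separate development in \cite{sss}, for which your fingerprint-plus-conditional-expectations plan is, as you acknowledge, only a sketch. As it stands, the proposal does not establish the density condition, so it does not prove the theorem.
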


Note that when $\tau = \omega(1) \cap \bigO(\log_{\sigma} n)$ and
$\T \in \Alphabet^{n}$ is given in the packed representation, the
first part of \cref{th:sss-existence-and-construction} opens the
possibility of an algorithm
running in $\bigO(\tfrac{n}{\tau}) = o(n)$ time.  In~\cite{sss}, it
was shown that this lower bound is achievable (the upper bound $\tau =
\bigO(\log_{\sigma} n)$ follows from the fact that every algorithm needs
to at least read the input, which takes $\Theta(n / \log_{\sigma} n)$
time; thus, for larger $\tau$, the algorithm cannot run in
$\bigO(\tfrac{n}{\tau})$ time).

\begin{theorem}[{\cite[Theorem~8.11]
      {sss}}]\label{th:sss-packed-construction}
  For every constant $\mu<\frac{1}{5}$, given the packed
  representation of a text $\T \in \Alphabet^n$ and a positive integer
  $\tau \le \mu \log_\sigma n$, one can deterministically construct
  in $\bigO(\frac{n}{\tau})$ time a $\tau$-synchronizing set of size
  $\bigO(\frac{n}{\tau})$.
\end{theorem}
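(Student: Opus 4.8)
The plan is to reuse the deterministic linear-time construction underlying \cref{th:sss-existence-and-construction} and re-implement it with word-level parallelism, exploiting the assumption $\tau\le\mu\log_\sigma n$: it forces every window of $\bigO(\tau)$ consecutive characters of $\T$ to occupy only $\bigO(1)$ machine words. Recall (from \cite{sss}) that the set $\S$ is produced by a \emph{local selection rule}: membership of a position $i\in[1\dd n-2\tau+1]$ in $\S$ depends solely on the fragment $\T[i-c\tau\dd i+c\tau]$ for a fixed small constant $c$ — positions whose surroundings are not periodic are selected as ``local minimizers'' of a fixed identifier of the length-$\Theta(\tau)$ context, taken over a window of $\Theta(\tau)$ consecutive positions, while positions inside or immediately adjacent to a maximal fragment of period $\le\tfrac13\tau$ are selected by a separate rule governed by that fragment's head, tail, and period. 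As the identifier of a length-$\Theta(\tau)$ context I would simply use its packed bit-string, viewed as an integer of $\bigO(\tau\log\sigma)=\bigO(\mu\log n)<\log n$ bits: it is injective on all windows (hence on those occurring in $\T$, which is what the cardinality analysis needs) and two identifiers can be compared with a single machine-word comparison; this avoids computing global ranks of contexts, which would already cost $\Omega(n)$. All of consistency, density (\cref{def:sss}), and the bound $|\S|=\bigO(n/\tau)$ hold for any set obtained from this template, so it suffices to reproduce \emph{some} valid instantiation of it and argue only about the running time; correctness is inherited from \cite{sss}.

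The core of the algorithm is a single left-to-right sweep. Partition $[1\dd n]$ into batches of $b=\Theta(\log_\sigma n)$ positions. For a batch starting at $p$, the selection rule for every position in the batch reads characters only within $\T[p\dd p+b+c\tau)$, a fragment of length $\bigO(\log_\sigma n)$ that fits in $\bigO(1)$ machine words; the restriction $\mu<\tfrac15$ — which is precisely where the constant in the window length enters — guarantees $b+c\tau<(1-\delta)\log_\sigma n$ for a constant $\delta>0$ and a suitable choice of $b$, so the number of distinct such packed fragments is at most $\sigma^{b+c\tau}\le n^{1-\delta}=o(n/\tau)$. Hence I can precompute a lookup table mapping each such fragment to (i) the bitmask of positions in the batch that the rule selects and (ii) the boundaries and periods of the maximal $\le\tfrac13\tau$-periodic fragments meeting the batch; the table has $o(n/\tau)$ entries, each computable in $\mathrm{poly}(\log n)$ time by running the (slow) rule of \cref{th:sss-existence-and-construction} directly, for $o(n/\tau)$ total preprocessing. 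The sweep then performs $\bigO(1)$ word reads and one table lookup per batch, of which there are $\bigO(n/b)=\bigO(n/\tau)$; emitting the selected positions from each batch's bitmask costs $\bigO(1)$ per emitted position, and since the total number emitted is $|\S|=\bigO(n/\tau)$, the whole sweep runs in $\bigO(n/\tau)$ time and working space and outputs $\S$ in increasing order.

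The step I expect to be the real obstacle is ensuring the algorithm never spends time proportional to $|\R(\tau,\T)|$, which may be $\Theta(n)$: inside a long fragment of period $\le\tfrac13\tau$ almost every position lies in $\R(\tau,\T)$, and a naive implementation inspecting each such position individually would take $\Theta(n)$ time. The table-driven sweep resolves this because periodicity of a length-$(3\tau-1)$ window — and hence the location of every run start and end together with its period — is a property of a short window that the table reports for free; a run contributes its handful of $\S$-elements only through its head and tail, and (by the analysis in \cite{sss}) there are only $\bigO(n/\tau)$ such runs in total, so the algorithm emits all run-induced elements in $\bigO(1)$ amortized time per batch without ever touching a run's interior. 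With both branches of the selection rule bit-parallelized in this way, correctness and the $\bigO(n/\tau)$ size bound follow directly from \cite{sss}, while the time and space bounds follow from the batch count and table size established above.
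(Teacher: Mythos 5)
This statement is not proved in the paper at all: it is imported verbatim as \cite[Theorem~8.11]{sss}, so the only thing to assess is whether your sketch would actually establish it, and it would not. The fatal step is the claim that ``all of consistency, density, and the bound $|\S|=\bigO(n/\tau)$ hold for any set obtained from this template,'' in particular when the identifier of a length-$\Theta(\tau)$ context is taken to be its packed bit-string. Consistency and density do follow from any local rule of this shape, but the size bound emphatically does not: for the lexicographic/packed-value identifier, the local-minimizer rule can select $\Theta(n)$ positions on nonperiodic texts. For instance, one can build binary texts out of long stretches along which the consecutive $\tau$-mers are strictly increasing (such stretches have length up to $2^{\tau}=n^{\Theta(1)}$ and contain no window of length $3\tau-1$ with period $\le\frac13\tau$, since a short period would force two equal $\tau$-mers in the window); inside such a stretch the window minimum always sits at an endpoint, so every position is selected, and concatenating $\Theta(n^{1-\Theta(1)})$ such stretches yields $|\S|=\Theta(n)\gg n/\tau$. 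In \cite{sss} the $\bigO(n/\tau)$ bound comes either from a \emph{random} choice of the identifier (in expectation) or, in the deterministic statements you are trying to prove (\cref{th:sss-existence-and-construction} and \cref{th:sss-packed-construction}), from a text-dependent, carefully derandomized selection; constructing that rule deterministically, and evaluating it in $\bigO(n/\tau)$ time in the packed regime, is precisely the content of Theorem~8.11 there. So ``correctness is inherited from \cite{sss}'' does not apply: nothing in \cite{sss} certifies the size bound for an arbitrary injective local identifier, and your proposal silently skips the derandomization, which is the heart of the result.

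The remaining machinery you describe --- batching the text into blocks of $\Theta(\log_\sigma n)$ positions, precomputing lookup tables indexed by packed fragments of length $\bigO(\tau)$ (made affordable by $\mu<\frac15$, so the table has $\sigma^{\bigO(\tau)}=n^{\bigO(\mu)}=o(n/\tau)$ entries), and handling $\tau$-runs only through their boundaries so that the time never scales with $|\R(\tau,\T)|$ --- is indeed the right mechanism for turning an $\bigO(n)$-time construction into an $\bigO(n/\tau)$-time one, and is similar in spirit to how the packed speedup is obtained in \cite{sss}. But it is applied to a selection rule whose $\bigO(n/\tau)$ size guarantee is false, so the proposal as written does not prove the theorem; you would need to first reproduce (or correctly invoke) the deterministic, text-dependent choice of identifiers from \cite{sss} and then argue that \emph{that} rule is still computable by your table-driven sweep.
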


\section{Technical Overview}\label{sec:technical-overview}

In this section, we give an overview of our data structures to answer
$\SA$ and $\ISA$ queries (\cref{sec:to-sa}), pattern matching
queries (\cref{sec:to-pm}), and suffix tree queries
(\cref{sec:to-st}). Each subsection contains a summary of the key new
techniques.

\subsection{SA and ISA Queries}\label{sec:to-sa}

Let $\epsilon \in (0, 1)$ and $\T \in \Alphabet^n$, where $2 \leq
\sigma < n^{1/7}$.  In this section, we give an overview of our data
structure to compute the value of $\SA[i]$ (resp.\ $\ISA[j]$) given
any $i \in [1 \dd n]$ (resp.\ $j \in [1 \dd n]$) in
$\bigO(\log^{\epsilon} n)$ time. The data structure uses $\bigO(n /
\log_{\sigma} n)$ space.  We assume $\sigma < n^{1/7}$ since for
larger $\sigma$ the plain representations of $\SA$ and $\ISA$ use
$\bigO(n \log n) = \bigO(n \log \sigma)$ bits and can be constructed
in $\bigO(n)$ time~\cite{KarkkainenSB06}.

Let $\tau = \lfloor \mu \log_{\sigma} n \rfloor$, where $\mu < \frac16$
is a positive constant chosen so that $\tau \geq 1$ (such
$\mu$ exists by $\sigma < n^{1/7}$).  We use $\R$ as a shorthand for
$\R(\tau, \T)$ (see \cref{def:sss}). Our data structure to compute
$\SA[i]$ (resp.\ $\ISA[j]$) works differently depending on whether
$\SA[i] \in \R$ (resp.\ $j \in \R$). To check if $\SA[i] \in \R$, we
store a bitvector $\BVshort$ marking boundaries between the blocks of
suffixes in $\SA$ sharing the length-$(3\tau{-}1)$ prefix.  We also
store the sequence $\ARRshort$ of those prefixes (by $\mu < \frac16$,
it needs $\bigO(\sigma^{3\tau - 1}) = \bigO(n^{3\mu}) = o(n /
\log_{\sigma} n)$ space).  Given any $i \in [1 \dd n]$, we can then
check if $\SA[i] \in \R$ by first computing the block $k$ containing
position $i$ using a rank query on $\BVshort$, and then checking
(using a lookup table) if $X = \ARRshort[k]$ satisfies $\per(X)
\leq \tfrac{1}{3}\tau$. As for an $\ISA$ query, checking if $j \in \R$
only needs the lookup table (since we store $\T$).

\bfparagraph{The Nonperiodic Positions}

We first focus on computing $\ISA[j]$. Let $j \not\in \R$ and let $\S$
be a $\tau$-synchronizing set of $\T$ of size $n' := |\S| = \bigO(n /
\tau)$ (such $\S$ exists and can be quickly constructed using
\cref{th:sss-packed-construction}). The query algorithm relies on the
following two observations:

\begin{description}[style=sameline,itemsep=1ex,font={\normalfont\itshape}]
\item[Observation 1: $\S$ induces a partitioning of $\SA$ into blocks.]
  The density condition of $\S$ implies $\S \cap [j \dd j + \tau) \neq
  \emptyset$, i.e., the successor of $j$ in $\S$, denoted $s:=
  \Successor_{\S}(j)$, satisfies $s<j+\tau$.  Hence, the string $X :=
  \T[j \dd s + 2\tau)$, called the \emph{distinguishing prefix} of
  $\T[j \dd n]$, is of length $|X|\le 3\tau-1$. By the local
  consistency of $\S$, the set $\D$ of distinguishing prefixes of all
  suffixes $\T[j\dd n]$ with $j\notin \R$ is prefix-free (i.e., no
  string in $\D$ is a prefix of another).  All positions in $[1 \dd n]
  \setminus \R$ in the $\SA$ of $\T$ can thus be partitioned into
  disjoint blocks according to distinguishing prefixes. Since $\D
  \subseteq \Alphabet^{\leq 3\tau - 1}$, the number of blocks is
  $\bigO(\sigma^{3\tau - 1})$, so we can store their boundaries in a
  lookup table of size $\bigO(\sigma^{3\tau - 1}) = \bigO(n^{3\mu}) =
  o(n / \log_{\sigma} n)$.  To efficiently determine
  $\Successor_{\S}(j)$, we store a bitvector marking positions in $\S$,
  augmented with $\bigO(1)$-time rank and select queries. Once the
  block $\SA(b \dd e]$ for $X = \T[j \dd \Successor_{\S}(j) + 2\tau)$
  is found, it remains to locate $j$ within that block.
\item[Observation 2: The order in each block is consistent with $\S$.]
  Assume $\SA(b \dd e]$ represents all suffixes of $\T$ having $X =
  \T[j \dd \Successor_{\S}(j) + 2\tau)$ as a prefix. By the
  consistency condition, letting $\deltatext = |X| - 2\tau$, for every
  $i \in (b \dd e]$, we have $\Successor_{\S}(\SA[i]) = \SA[i] +
  \deltatext$.  Thus, letting $(\slex_i)_{i \in [1 \dd n']}$ contain
  $\S$ sorted by the corresponding suffixes $\T[\slex_i\dd n]$,
  positions in $\SA(b \dd e]$ increased by $\deltatext$ occur in
  $(\slex_i)_{i \in [1 \dd n']}$ in the same relative order. Hence, if
  we define $W[i] = \revstr{X_i}$, where $X_i = \T^{\infty}[\slex_i -
  \tau \dd \slex_i + 2\tau)$, and select $W[y]$ as the $k$th string in
  $W$ having $\revstr{X}$ as a prefix, then $\slex_y -
  \deltatext=\SA[b+k]$ is the $k$th position in $\SA(b \dd e]$. Thus,
  to obtain $\ISA[j]$, it suffices to find the index $y$ such that
  $\slex_y = \Successor_{\S}(j)$. Then, the offset of $j$ in the block
  $\SA(b \dd e]$ is $\rank{W}{\revstr{X}}{y}$. To efficiently
  determine $y$, we store the permutation that maps elements of $\S$
  sorted left-to-right to elements of $(\slex_i)_{i \in [1 \dd n']}$.
  This lets us determine $y$ in $\bigO(1)$ time. We then compute
  $\rank{W}{\revstr{X}}{y}$ in $\bigO(\log^{\epsilon} n)$ time using
  \cref{th:wavelet-tree}; see \cref{pr:sa-nonperiodic-isa}.
\end{description}

Let us now turn to the computation of $\SA[i]$ when $\SA[i] \not \in
\R$. Using a rank query on $\BVshort$ and an access to $\ARRshort$, we
first determine the length-$(3\tau{-}1)$ prefix of $\T[\SA[i] \dd n]$.
A lookup table lets us retrieve the prefix $X \in \D$ of $\T[\SA[i]
\dd n]$ and the boundaries of the corresponding block $\SA(b \dd e]$.
By Observation 2 above, it remains to determine the index $y$ of the
$(i - b)$th leftmost string in $W$ having $\revstr{X}$ as a prefix,
which we compute in $\bigO(\log^{\epsilon} n)$ time as $y =
\select{W}{\revstr{X}}{i-b}$ using \cref{th:wavelet-tree}. We then
have $\SA[i] = \slex_y - \deltatext$, where $\deltatext = |X| - 
2\tau$. See \cref{pr:sa-nonperiodic-sa} for details.

\bfparagraph{The Periodic Positions}

Let $j \in \R$. We again first focus on computing $\ISA[j]$. As
before, we aim to find the location of $j$ in the block $\SA(b \dd e]$
containing all suffixes of $\T$ prefixed with $X = \T[j \dd j + 3\tau
- 1)$. Note that all positions in $\SA(b \dd e]$ are in $\R$. The
challenge is thus to devise a way to compare suffixes starting in
$\R$. The problem with applying a similar approach as before is that
the size of $\R$ can reach $\Theta(n)$.  There exists, however, a subset of
$\R$ that, when combined with a bitvector representing remaining
positions in $\R$, can be applied here.  We derive it as follows:

\begin{description}[style=sameline,itemsep=1ex,font={\normalfont\itshape}]
\item[Structure of $\R$ in the left-to-right (text) order:] The gap between $|X| =
  3\tau-1$ and $\per(X) \leq \frac13 \tau$ ensures that every maximal
  block of positions in $\R$ corresponds to a \emph{$\tau$-run}, i.e.,
  a maximal substring of $\T$ of length $\geq 3\tau - 1$ whose
  shortest period is $\leq \frac13\tau$ (\cref{lm:run-end}).  Since
  any two $\tau$-runs overlap by at most $\frac23 \tau$ positions
  (see the proof of \cref{lm:gap}), their number is $\bigO(n / \tau)$.
  We can thus succinctly encode $\R$ by storing the set $\R'$ of
  $\tau$-run starting positions.
\item[Structure of $\R$ in the lexicographic order:] For $x \in \R$, let $\rend{x}$
  denote the position following the $\tau$-run containing $x$.
  Observe that, for every $x \in \R$, we can uniquely write $\T[x \dd
  \rend{x}) = H'H^kH''$, where $H$ is the lexicographically smallest
  rotation of $\T[x \dd x + p)$, $p = \per(\T[x \dd \rend{x}))$,
  and $H'$ (resp.\ $H''$) is a proper suffix (resp.\ prefix) of $H$
  (\cref{sec:sa-periodic-prelim}). Denote $\Lroot(x) = H$, $\Lhead(x)
  = |H'|$, $\Lexp(x) = k$, and $\Ltail(x) = |H''|$. Let also $\type(x)
  = -1$ if $\T[\rend{x}] \prec \T[\rend{x} - |H|]$ and $\type(x) = +1$
  otherwise. Then, in $\SA(b \dd e]$, all positions $x$ with $\type(x) = -1$
  precede all $x$ with $\type(x) = +1$. Moreover, the value of
  $\rend{x} - x$ is non-decreasing (resp.\ non-increasing) among the positions $x$
  with $\type(x) = -1$ (resp.\ $\type(x) = +1$); see \cref{lm:lce}.
\end{description}

Denote $\R^{-} = \{x \in \R: \type(x) = -1\}$,
$\R'^{-} = \R' \cap \R^{-}$, $\R_H = \{x \in \R : \Lroot(x) = H\}$,
$\R'^{-}_H = \R'^{-} \cap \R_H$, $\R_{s,H} = \{x \in \R_H : \Lhead(x)
= s\}$, and $\R^{-}_{s,H} = \R^{-} \cap \R_{s,H}$.
Assume $\type(j) = -1$ (the case of $\type(j) = +1$ is symmetric),
$\Lhead(j) = s$, and $\Lroot(j) = H$.  Given the above structural
insights, we can phrase locating $j$ in $\SA(b \dd e]$ as counting the positions $x
\in \R^{-}_{s,H}$ satisfying $\T[x \dd n] \preceq \T[j \dd n]$. By the
analysis above, all such positions satisfy $\rend{x} - x \leq \rend{j} - j$ and hence $\Lexp(x) \leq
\Lexp(j)$. We first compute the size of $\Posa(j) := \{x \in
\R^{-}_{s,H} : \Lexp(x) \leq \Lexp(j)\}$ and then subtract the size of
$\Poss(j) := \{x \in \R^{-}_{s,H} : \Lexp(x) = \Lexp(j)\text{ and
}\T[x \dd n] \succ \T[j \dd n]\}$ as follows:\footnote{Note that
here we first overestimate the number of smaller suffixes in
$\SA(b \dd e]$ and then subtract the larger suffixes. We explain
the reason for this counterintuitive approach in \cref{rm:sa-periodic}.}
\begin{itemize}
\item Since $|\Posa(j)|$ only depends on $\Lexp(j)$, it suffices to
  store a bitvector $\BVexp$ marking the boundaries in $\SA$ between
  blocks of positions with subsequent values of $\Lexp$. Computing
  $|\Posa(j)|$ then reduces to $\bigO(1)$-time rank and selection
  queries on $\BVexp$ (\cref{pr:isa-delta-a}).
\item As for $|\Poss(j)|$, we observe that $\Poss(j)$ contains at most
  one position in each $\tau$-run (\cref{lm:isa-delta-s}). We store
  all $x \in \R'^{-}_H$ sorted by the suffix starting right after the
  last occurrence of $H$, i.e., $\T[\rendfull{x} \dd n]$, where
  $\rendfull{x} := \rend{x} - \Ltail(x)$.  In a separate array, we also
  record $\rendfull{x} - x$ at the corresponding position. This lets
  us compute $|\Poss(j)|$ by first locating the block of positions $x
  \in \R'^{-}_H$ for which $\T[\rendfull{x} \dd n] \succ
  \T[\rendfull{j} \dd n]$, and then counting the ones with
  $\rendfull{x} - x \geq \rendfull{j} - j$ (\cref{pr:isa-delta-s}).
  Since the sum of $\rendfull{x} - x$ over all $x \in \R'$ is
  $\bigO(n)$ (\cref{sec:sa-periodic-ds}), we use a specialized
  structure for range counting (\cref{pr:range-queries}), bypassing
  the general $\Omega(\tfrac{\log n}{\log \log n})$-time lower
  bound~\cite{Patrascu07}.
\end{itemize}

Let us now turn to an $\SA[i]$ query with $\SA[i] \in \R^{-}$. First,
we determine $\T[\SA[i]\dd \SA[i]+3\tau-1)$ using $\BVshort$ and
$\ARRshort$, as well as $s = \Lhead(\SA[i])$ and $H = \Lroot(\SA[i])$
using a lookup table (\cref{pr:sa-root}).  Then, rank and selection 
queries on $\BVexp$ let us easily determine $\Lexp(\SA[i])$ and
$|\Poss(\SA[i])|$ (\cref{pr:sa-delta-a}).  As explained above, to
compute $\SA[i]$, it remains to first select the $k$th (where $k
= |\Poss(\SA[i])| + 1$) largest element $j \in \R'^{-}_H$ according to
the string $\T[\rendfull{j} \dd n]$, among positions $j'' \in
\R'^{-}_{H}$ satisfying $\rendfull{j''} - j'' \geq \Lhead(\SA[i]) +
\Lexp(\SA[i])\cdot |H|$.  The position $j' \in [j \dd \rend{j} - 3\tau + 2)$
with $\Lexp(j') = \Lexp(\SA[i])$ and $\Lhead(j') =
\Lhead(\SA[i])$ must then satisfy $\SA[i] = j'$. To compute $j$, we
use our specialized data structure for range queries
(\cref{pr:range-queries}).  Position $j'$ is then obtained by
subtracting $\Lhead(\SA[i]) + \Lexp(\SA[i])\cdot |H|$ from $\rendfull{j}$
(\cref{pr:sa-delta-s}).

In total, the query time for periodic positions is $\bigO(\log \log n)$
(\cref{pr:sa-periodic-isa,pr:sa-periodic-sa}).

\bfparagraph{Summary of New Techniques}

The key distinctive feature of our technique is the use of local
consistency without general orthogonal range queries, present in prior
approaches~\cite{ChristiansenEKN21, sss,resolution,dynsa}. This lets us
sidestep P\v{a}tra\c{s}cu's $\Omega(\tfrac{\log n}{\log \log n})$
lower bound~\cite{Patrascu07}, which is achieved in three steps:
\begin{itemize}
\item We replace range counting/selection in the nonperiodic case
  with \emph{prefix rank} and \emph{prefix selection}, for which we
  propose a new tradeoff by plugging in the technique of Belazzougui
  and Puglisi~\cite{Belazzougui2016}. This reveals the power of our
  reduction: we achieve a non-trivial tradeoff for complex queries by
  solving a simple bit-permuting problem (note that the tradeoff behind
  \cref{th:wavelet-tree}, e.g., occupies only 1.5 pages in 
  \cref{sec:prefix-queries}; the bulk of our paper is the reduction).
\item We replace range counting/selection in the periodic case
  by observing that the sum of coordinates is small
  (\cref{sec:sa-periodic-ds}). This lets us use a specialized
  solution (\cref{sec:range-queries}).
\item The above two cases occur at query time. The third case concerns
  the construction of the structure. More precisely, we completely
  eliminate range queries naturally occurring during the construction
  of components for periodic positions~\cite{sss,resolution} by a
  complex bit-optimal algorithm for the construction of the bitvector
  $\BVexp$ (see \cref{sec:sa-periodic-construction}).
\end{itemize}
As a result, we obtain a very general reduction stated in
\cref{th:sa-general}.

\subsection{Pattern Matching Queries}\label{sec:to-pm}

Let $\epsilon \in (0, 1)$ and $\T \in \Alphabet^n$ be as in
\cref{sec:to-sa}. We now give an overview of our data structure that,
given a packed representation of any pattern $\Pat \in \Alphabet^m$,
returns the pair of indexes $(b, e) = (\LB(\Pat, \T), \UB(\Pat, \T))$,
i.e., the boundaries of the $\SA$ block containing all suffixes having
$\Pat$ as a prefix, in $\bigO(m / \log_{\sigma} n + \log^{\epsilon}
n)$ time.  Note that having this range immediately gives us
$|\Occ(\Pat, \T)| = e - b$, i.e., it implements \emph{pattern
counting}. Moreover, combined with the result from \cref{sec:to-sa},
we obtain \emph{pattern reporting}, i.e., we can enumerate $\Occ(\Pat, \T)$
in $\bigO(m / \log_{\sigma} n + (|\Occ(\Pat, \T)| + 1) \log^{\epsilon}
n)$ time.

Let $\tau$ be as in \cref{sec:to-sa}. Our structure to compute
$(\LB(\Pat, \T), \UB(\Pat, \T))$ works differently depending on
whether $m \geq 3\tau - 1$ and whether $\per(\Pat[1 \dd 3\tau - 1]) \leq
\tfrac13\tau$ (such $\Pat$ is called \emph{periodic}) or not.
Checking if $\Pat$ is periodic is easily implemented via a lookup
table.

\bfparagraph{The Nonperiodic Patterns}

Let us assume $m \geq 3\tau - 1$ (shorter patterns are handled using a
precomputed array) and let $\S$ be as in \cref{sec:to-sa}. The basic
idea of the pattern matching query is to decompose $\Pat = XY$,
where $X \in \D$, and then utilize the following observation about
$\S$ (generalizing the second observation in \cref{sec:to-sa}):

\begin{description}[style=sameline,itemsep=1ex,font={\normalfont\itshape}]
\item[Observation: The order in the suffix array range corresponding to
  $\Occ(\Pat, \T)$ is consistent with $\S$.]  Let $(b, e)=
  (\LB(\Pat, \T), \UB(\Pat, \T))$. By the
  consistency condition, for every $i \in (b \dd e]$, we have
  $\Successor_{\S}(\SA[i]) = \SA[i] + \deltatext$, where $\deltatext =
  |X| - 2\tau$. Thus, letting $(\slex_i)_{i \in [1 \dd n']}$ be
  defined as in \cref{sec:to-sa}, the positions in $\SA(b \dd e]$
  increased by $\deltatext$ occur in $(\slex_i)_{i \in [1 \dd n']}$ in
  the same relative order. Therefore, to compute $|\Occ(\Pat, \T)|$,
  it suffices to first locate a range of $(\slex_i)_{i \in [1 \dd
  n']}$ consisting of positions followed by $\Pat(\deltatext \dd m]$
  in $\T$, and then count those which are additionally preceded with
  $X[1 \dd \deltatext]$. The first goal is implemented in
  $\bigO(m/\log_{\sigma} n + \log \log n)$ time via a compact trie
  over the set of strings $\{\T[\slex_i \dd n]\}_{i \in [1 \dd n']}$,
  reinterpreted as strings over alphabet of size $n^{\Theta(1)}$. The
  second step reduces to a prefix rank query over $W[1 \dd n']$
  (\cref{sec:to-sa}). This approach easily generalizes to return $(b, e)$
  instead of $|\Occ(\Pat, \T)|$; see \cref{lm:pm-nonperiodic}.
\end{description}

\bfparagraph{The Periodic Patterns}

Let us now assume that $m \geq 3\tau - 1$ and $\per(\Pat[1 \dd 3\tau -
1]) \leq \frac13\tau$. We first generalize the notion of $\Lroot(x)$,
$\rend{x}$, and all other functions from positions to strings
(\cref{sec:pm-periodic-prelim}). The main idea is to decompose $\Pat$
into the periodic prefix $\Pat[1 \dd \rend{\Pat})$ and the remaining
suffix $\Pat[\rend{\Pat} \dd |\Pat|]$. Let us consider the harder case
when $\rend{\Pat} = |\Pat| + 1$ (see \cref{lm:occ-a}).  We define
$\Occa(\Pat, \T) = \{j \in \R_{s,H} \cap \Occ(\Pat, \T) : \Lexp(j) >
\Lexp(\Pat)\}$ and $\Occs(\Pat, \T) = \{j \in \R_{s,H} \cap \Occ(\Pat,
\T) : \Lexp(j) = \Lexp(\Pat)\}$, where $H = \Lroot(\Pat)$ and $s =
\Lhead(\Pat)$.  The value $|\Occ(\Pat, \T)|$ is determined in two
steps:
\begin{itemize}
\item First, we compute the size of $\Occam(\Pat, \T) := \Occa(\Pat,
  \T) \cap \R^{-}$ (the size of $\Occap(\Pat, \T)$ is computed
  symmetrically) as in \cref{sec:to-sa} by utilizing rank and
  selection queries on $\BVexp$ (\cref{pr:pm-occ-a}). This only
  requires knowing $\Lexp(\Pat)$, which can be retrieved in $\bigO(1 +
  m / \log_{\sigma} n)$ time (\cref{pr:pm-root}).
\item Next, we compute the size of $\Occsm(\Pat, \T) := \Occs(\Pat,
  \T) \cap \R^{-}$.  We first show that $\Occsm(\Pat, \T)$ contains at
  most one position in every $\tau$-run (\cref{lm:occ-s}), i.e., an
  analogue of \cref{lm:isa-delta-s}. The computation is similar as in
  \cref{sec:to-sa}, except that we use a trie over meta-symbols to
  find the range of positions $x \in \R'^{-}_{H}$ with
  $\T[\rendfull{x}\dd n]$ prefixed by $\Pat[\rendfull{\Pat} \dd m]$.
  We then perform an $\bigO(\log \log n)$-time range query
  (\cref{pr:pm-occ-s}).  A small complication is to separate positions $x \in
  \R'^{-}$ with different $\Lroot(x)$ in the trie.  For this, we
  insert into the trie suffixes starting slightly earlier than
  $\rendfull{x}$; see \cref{sec:pm-periodic-ds}.
\end{itemize}

The above algorithm generalizes to the computation of $(b, e)$, rather
than $|\Occ(\Pat, \T)|$, except that handling ``fully periodic''
patterns (with $\rend{\Pat} = |\Pat| + 1$) requires some care (see \cref{rm:pm-periodic}).

\bfparagraph{Summary of New Techniques}

Our key technical contributions are as follows:
\begin{itemize}
\item We show how to directly apply string synchronizing
  sets~\cite{sss} to the problem of pattern matching (not by simply
  using $\SA$/$\ISA$ queries) and consequently obtain a very efficient
  reduction from pattern matching to prefix rank and prefix selection
  queries.
\item To achieve this, we prove several new combinatorial results for
  periodic patterns (\cref{sec:pm-periodic-prelim}), and then show for
  efficiently apply them
  (\cref{sec:pm-periodic-ds,sec:pm-periodic-nav,sec:pm-periodic-pm}).
\item As a result, we obtain the first optimal-size pattern matching
  index that is constructible in $o(n)$ time and supports:
  \begin{itemize}
  \item pattern occurrence counting in $\bigO(m / \log_{\sigma} n +
    \log^{\epsilon} n)$ time, and
  \item pattern occurrence reporting in $\bigO(m / \log_{\sigma} n +
    (|\Occ(\Pat, \T)| + 1) \log^{\epsilon} n)$ time.
  \end{itemize}
  This improves over~\cite{MunroNN20a,MunroNN20b} in either
  construction or query time and, perhaps more importantly, provides a
  very general reduction that enables achieving further time-space
  tradeoffs much easier (\cref{th:pm-general}).
\end{itemize}

\subsection{Suffix Tree Queries}\label{sec:to-st}

Let $\epsilon \in (0, 1)$ and $\T \in \Alphabet^n$ be as in
\cref{sec:to-sa}. In this section, we outline how to extend the
techniques presented in \cref{sec:to-sa,sec:to-pm} to obtain the full
suffix tree functionality in optimal $\bigO(n / \log_{\sigma} n)$
space. All operations (see \cref{tab:st-operations}) are supported in
$\bigO(\log^{\epsilon} n)$ time, and the data structure can be
constructed in $\bigO(n \min(1, \log \sigma / \sqrt{\log n}))$ time
and $\bigO(n / \log_{\sigma} n)$ working space.  Thus, e.g., for
$\sigma = 2$, our construction takes $\bigO(n / \sqrt{\log n}) = o(n)$
time.

Each node $v$ of the suffix tree of $\T$, denoted $\ST$, is encoded either as $(j,\ell)$ such that $\T[j
\dd j + \ell) = \str(v)$ or as $(b, e)=(\LB(\str(v),\T),\UB(\str(v),\T))$, where $\str(v)$ is the string
represented by $v$. Since the latter representation is more common
(e.g.~\cite{FischerMN09}), we adopt it as the default interface and
denote $\repr(v)$.

In this overview, we focus on the $\child$ operation, which
illustrates some of our key techniques. We remark, however, that other
operations require different combinatorial insights.

Let $\tau$ be as in \cref{sec:to-sa}. Our structure works differently
depending on whether the operation is performed on a node $v$ such
that $\str(v)$ is periodic or not (see \cref{sec:to-pm}).

\bfparagraph{The Nonperiodic Nodes}

Let $v$ be a node of $\ST$ such that $\str(v)$ is nonperiodic and let
$c \in \Sigma$.  Our goal is to compute $\repr(\child(v, c))$ given
$\repr(v)$. Let $\S$ again be as in \cref{sec:to-sa}. The basic idea
is to reduce the computation concerning the $\SA$-interval for string
$\str(v)$ to the computation involving only positions in $\S$. For
this purpose, we store the compact trie $\TSSS$ for $\{\T[\slex_i \dd
n]\}_{i \in [1 \dd n']}$.  Typically, each operation on $\ST$ then
involves the following steps:
\begin{enumerate}
\item Map the input node $v$ of $\ST$ (given as $\repr(v)$) to some
  node $u$ of $\TSSS$,
\item Perform some operation in $\TSSS$ resulting in a node $u'$ (in
  our case, $u' = \child(u, c)$),
\item Map $u'$ back to some node $v'$ of $\ST$ (producing $\repr(v')$
  as output).
\end{enumerate}

\begin{description}[style=sameline,itemsep=1ex,font={\normalfont\itshape}]
\item[Mapping from $\ST$ to $\TSSS$:] Let $(b, e) = \repr(v)$ and let
  $X \in \D$ be the distinguishing prefix of $\str(v)$. By the observation for
  nonperiodic patterns in \cref{sec:to-pm}, the left-to-right order of
  leaves of $\TSSS$ corresponding to suffixes in $\SA(b \dd e]$
  shifted by $\deltatext = |X| - 2\tau$ is consistent with their order
  in $\SA(b \dd e]$. Moreover, since we know how to compute the
  position in $(\slex_i)_{i \in [1 \dd n']}$ corresponding to suffix
  $\T[\SA[i] \dd n]$ for any $i \in [1 \dd n]$ such that $\SA[i] \in
  [1 \dd n] \setminus \R$ (see \cref{sec:to-sa}), we can compute the
  pointer to the leaf of $\TSSS$ corresponding to any suffix in $\SA(b
  \dd e]$. This implies that: (1) there exists a node $\mapToTSSS(v)
  := u$ in $\TSSS$ such that $\str(v) = X[1 \dd \deltatext] \cdot
  \str(u)$, and (2) a pointer to $u$ can be computed via a 
  lowest common ancestor (LCA) query from the leaves
  of $\TSSS$ corresponding to first and last suffix in $\SA(b \dd e]$
  (see \cref{sec:st-nonperiodic-nav}).
\item[Mapping from $\TSSS$ to $\ST$:] After computing $u' = \child(u,
  c)$, the next step is to go back to $\ST$. Our approach exploits
  a similar principle as when computing $\ISA[j]$: knowing $X \in \D$
  and a position in $(\slex_i)_{i \in [1 \dd n']}$ lets us determine
  the corresponding position in $\SA$. More generally, given $X \in
  \D$ and an interval of $(\slex_i)_{i \in [1 \dd n']}$, we can
  retrieve the corresponding interval in $\SA$.  The former is
  precomputed and stored with each node of $\TSSS$.  Note, however,
  the following complication: $\TSSS$ may have extra nodes between $u
  = \mapToTSSS(v)$ and $\widehat{u} = \mapToTSSS(\child(v, c))$, and
  then $\mapToTSSS(\child(v, c)) \ne \child(\mapToTSSS(v), c)$ (see
  also \cref{rm:st-nonperiodic-child}).  We thus need to first prove
  that applying the inverse mapping to \emph{any} node between $u$ and
  $\widehat{u}$ (in particular, to~$u'$) yields $\repr(\child(v, c))$
  (\cref{lm:st-nonperiodic-child}). This exploits properties of $\S$
  specific to the child operation; we omit the details here but
  remark that this step differs among core operations (see, e.g.,
  \cref{lm:st-nonperiodic-lca,lm:st-nonperiodic-wa}).
\end{description}

\bfparagraph{The Periodic Nodes}

Let us now assume that $\str(v)$ is periodic.  The basic idea is
similar as above: we keep a compact trie (denoted $\TZ$) letting us
search the suffixes in the set $\{\T[\rendfull{j} \dd n]\}_{j \in
\R'^{-}}$. Although the implementation of mapping and combinatorial
proofs are more technical, establishing these higher-level navigation
primitives results in simpler and more concise implementation of
queries (see, e.g., \cref{lm:st-periodic-wa}).

\bfparagraph{Summary of New Techniques}

Our key technical contributions are as follows:
\begin{itemize}
\item We show how to reduce all operations of a suffix tree to prefix
  rank and selection queries, resulting in the first $o(n)$-time
  construction of optimal-size compressed suffix tree, with all
  operations simultaneously matching the
  state-of-the-art~\cite{BoucherCGHMNR21,CaceresN22,
  FischerMN09,Gagie2020,PrezzaR21,RussoNO11,Sadakane07}.
\item To achieve this, we first define and efficiently
  implement mappings between the nodes of $\ST$ and of two
  auxiliary tries $\TSSS$ and $\TZ$. We then prove new combinatorial
  results (see, e.g.,
  \cref{lm:st-nonperiodic-lca,lm:st-nonperiodic-child,%
  lm:st-nonperiodic-pred,lm:st-nonperiodic-wa,lm:st-periodic-lca,%
  lm:st-periodic-child,lm:st-periodic-pred,lm:st-periodic-wa,lm:st-parent})
  showing that these high-level navigation primitives correctly
  handle all suffix tree operations.
\end{itemize}

\section{Auxiliary Tools}\label{sec:tools}

\subsection{Weighted Ancestors}\label{sec:wa}

Consider a rooted tree $\mathcal{T}$. Let $\Root(\mathcal{T})$ denote
the node at depth 0 and let $\parent(v)$ denote the immediate ancestor
of each node $v \neq \Root(\mathcal{T})$. We let
$\parent(\Root(\mathcal{T})) = \nil$. Assume that each node $v$ has an
associated weight $w(v)$ such that $w(\Root(\mathcal{T})) = 0$ and,
for every $v \neq \Root(\mathcal{T})$, it holds $w(\parent(v)) <
w(v)$.  We then say that the weight function $w$ is \emph{monotone}.
Given any node $v$ of $\mathcal{T}$ and an integer $0 \leq d \leq
w(v)$, we define $v' = \WA(v, d)$ (the \emph{weighted
ancestor}~\cite{FarachM96}) as the (unique) ancestor of $v$ in
$\mathcal{T}$ that satisfies $w(v') \geq d$ and for which $w(v')$ is
minimized.

\begin{theorem}[{\cite[Section~6.2.1]{AmirLLS07}}]\label{th:wa}
  Let $\mathcal{T}$ be a rooted tree with $n \le N$ nodes and a
  monotone weight function $w$ mapping nodes to $[0 \dd N)$.  There
  exists a data structure of size $\bigO(n)$ that answers weighted
  ancestor queries in $\mathcal{T}$ in $\bigO(\log \log N)$ time after
  $\bigO(n\log_{n}N)$-time preprocessing.
\end{theorem}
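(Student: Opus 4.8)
The plan is to combine a heavy-path decomposition of $\mathcal{T}$ with a single predecessor search, exploiting that $w$ is monotone, so it strictly increases along every root-to-leaf path. First I would compute a heavy-path decomposition in $\bigO(n)$ time: every node lies on a unique heavy path, and every root-to-leaf path of $\mathcal{T}$ meets only $\bigO(\log n)$ heavy paths. For a heavy path $\pi$, write $\mathrm{top}(\pi)$ for its shallowest node; since $\pi$ descends, the weights of its nodes are strictly increasing, so I would equip each $\pi$ with a predecessor structure over this already-sorted list of weights, of size $\bigO(|\pi|)$ and query time $\bigO(\log \log N)$; summed over all heavy paths this is $\bigO(n)$ space, and the $\bigO(n \log_n N)$ preprocessing budget is what lets us build these structures over $\log N$-bit keys. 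In addition I would form the tree $\mathcal{T}^{*}$ whose nodes are the heavy paths, with $\parent(\pi)$ being the heavy path containing $\parent(\mathrm{top}(\pi))$; as $\mathcal{T}^{*}$ has depth $\bigO(\log n)$, I equip it with a standard (Bender--Farach-Colton) level-ancestor structure, giving $\bigO(1)$-time access to the $\ell$-th heavy path above any node in $\bigO(n)$ space and $\bigO(n)$ time. Finally, for each node I store its heavy-path identifier, its position on that path, its weight, and the pointer $\mathrm{top}(\cdot)$.

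To answer $\WA(v,d)$ with $0 \le d \le w(v)$ (the cases $d=0$ and $d>w(v)$ being trivial), let $\pi_0 = \pi(v), \pi_1, \ldots$ be the heavy paths met on the $v$-to-root path; then $w(\mathrm{top}(\pi_0)) > w(\mathrm{top}(\pi_1)) > \cdots$ strictly decreases because $\mathrm{top}(\pi_{\ell+1})$ is a strict ancestor of $\mathrm{top}(\pi_\ell)$. I would binary-search over $\ell$ (using the level-ancestor structure of $\mathcal{T}^{*}$ to reach $\pi_\ell$) for the smallest index $\ell^{*}$ with $w(\mathrm{top}(\pi_{\ell^{*}})) \le d$; this costs $\bigO(\log \log n)$. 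Let $u$ be the deepest node of $\pi_{\ell^{*}}$ on the $v$-to-root path (namely $v$ if $\ell^{*}=0$, and $\parent(\mathrm{top}(\pi_{\ell^{*}-1}))$ otherwise). If $w(u) < d$ (which forces $\ell^{*} \ge 1$), then every node strictly above $u$ on the path has weight $< d$ while $w(\mathrm{top}(\pi_{\ell^{*}-1})) > d$, so I return $\mathrm{top}(\pi_{\ell^{*}-1})$. Otherwise $w(\mathrm{top}(\pi_{\ell^{*}})) \le d \le w(u)$, and a single predecessor query on $\pi_{\ell^{*}}$ returns the deepest node of $\pi_{\ell^{*}}$ of weight $< d$; its child on $\pi_{\ell^{*}}$ lies between $\mathrm{top}(\pi_{\ell^{*}})$ and $u$, hence is an ancestor of $v$, and is the answer (if no such node exists, the answer is $\mathrm{top}(\pi_{\ell^{*}})$). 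The query runs in $\bigO(\log \log n) + \bigO(\log \log N) = \bigO(\log \log N)$ time since $n \le N$.

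The bookkeeping is routine, so the substance is a careful case analysis that the procedure always returns the shallowest ancestor of $v$ of weight $\ge d$. The two delicate points, both of which I would verify from monotonicity of weights along the root-to-$v$ path, are: (i) the true answer may be the shallowest node $\mathrm{top}(\pi_{\ell^{*}})$ of its heavy path, which the in-path predecessor search does not itself report — this is handled by the ``child-of-the-predecessor, else the top'' rule together with the $w(u)<d$ branch; and (ii) the predecessor structure of $\pi_{\ell^{*}}$ stores all of $\pi_{\ell^{*}}$, including nodes strictly deeper than $u$, but those have weight $> w(u) \ge d$ and so can never be the predecessor of $d$, making the global in-path query agree with the intended query over the prefix down to $u$. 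I expect the main obstacle to be the predecessor component itself: achieving $\bigO(\log \log N)$ query time with $\bigO(n)$ total space and \emph{deterministic} $\bigO(n \log_n N)$ construction needs care — a randomized y-fast trie would work out of the box, but a deterministic build must exploit that the keys on each heavy path arrive already sorted and that the construction budget affords $\Theta(\log_n N)$ machine words per key.
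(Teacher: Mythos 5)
Your construction is correct, but it takes a genuinely different route from the paper. The paper does not re-derive the weighted-ancestor structure at all: it invokes the known solution of Amir et al.\ for the case $N=n$ as a black box, and handles a general universe $[0\dd N)$ by a global rank-space reduction --- radix-sorting all $n$ weights in $\bigO(n\log_n N)$ time and building a single deterministic predecessor structure over the sorted weights so that a query threshold $d$ can be replaced by its rank in $\bigO(\log\log N)$ time. You instead rebuild the whole mechanism: heavy-path decomposition, a predecessor structure per heavy path over the original $\log N$-bit keys, and an $\bigO(\log\log n)$-step binary search over the $\bigO(\log n)$ heavy paths crossed by the root-to-$v$ path, accessed via level ancestors in the path tree $\mathcal{T}^{*}$; your case analysis (returning $\mathrm{top}(\pi_{\ell^{*}-1})$ when $w(u)<d$, the child-of-predecessor rule otherwise, and the observation that nodes of $\pi_{\ell^{*}}$ below $u$ have weight $>d$ and so cannot perturb the predecessor query) is sound and covers exactly the delicate boundary situations. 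Both arguments ultimately rest on the same primitive, which you rightly identify as the crux: a deterministic, linear-space predecessor structure with $\bigO(\log\log N)$ query time constructible in linear time from sorted keys; the paper takes this off the shelf (its cited Proposition~2 of Fischer and Gawrychowski), and since the weights along each heavy path are already sorted, the same citation closes your one remaining hole --- in fact it makes your preprocessing $\bigO(n)$, not needing the $\bigO(n\log_n N)$ budget, which the paper spends only on the global radix sort. The trade-off is that the paper's proof is a two-line reduction reusing prior work verbatim, while yours is self-contained and avoids the rank-reduction step entirely.
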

\begin{proof}
  A solution for $N=n$ was presented
  in~\cite[Section~6.2.1]{AmirLLS07}.  To generalize it to the case of
  $N\ge n$, it suffices to map all node weights to their ranks. For
  this, we sort all the node weights in $\bigO(n\log_n N)$ time (radix
  sort) and build a deterministic predecessor
  structure~\cite[Proposition~2]{wexp} in $\bigO(n)$ time so that the
  rank of a query threshold can be computed in $\bigO(\log \log N)$
  time.
\end{proof}

\subsection{Tries and Compact Tries}

A set of strings $\mathcal{S} \sub \Sigma^{+}$ is \emph{prefix-free}
if there are no $S, S' \in \mathcal{S}$ such that $S$ is a proper
prefix of $S'$.  For any prefix-free set of string $\mathcal{S} \sub
\Sigma^{+}$, its \emph{trie} is a minimal rooted tree $\mathcal{T}$,
with each edge labelled by some $c \in \Sigma$, such that: (1) no two
edges outgoing from the same node have the same label, (2) for
each $S \in \mathcal{S}$ there exists a path from
$\Root(\mathcal{T})$ to some node such that the concatenation of
edge-labels on that path is equal to $S$, and (3) children of every
node are ordered according to the lexicographical rank of the
connecting edge. A \emph{compact trie} of $\mathcal{S}$ is a trie of
$\mathcal{S}$ in which all maximal unary paths have been replaced with
edges labelled by substrings of elements of $\mathcal{S}$. The nodes
of the trie omitted in the compact trie are referred to as
\emph{implicit}. All other nodes are \emph{explicit}. Unless explicitly
stated otherwise, by \emph{node} we always mean an explicit node.

For any node $v$ of a (compact) trie $\mathcal{T}$, by $\str(v)$ we
denote the \emph{label} of $v$, i.e., the string obtained by
concatenating the labels of all edges on the path from
$\Root(\mathcal{T})$ to $v$. We denote $\sdepth(v) = |\str(v)|$. The
parent of $v$ in denoted $\parent(v)$.  For any $c \in \Sigma$, we
define $\child(v, c)$ as a child $v'$ of $v$ such that
$\str(v')[|\str(v)| + 1] = c$, or $\nil$ if no such node exists.  For
any $c \in \Sigma$, we also define $\pred(v, c)$ as follows:
\begin{itemize}
\item If there exists $c' < c$ such that $\child(v, c') \neq \nil$,
  then we let $\pred(v, c) = \child(v, c_{\max})$, where $c_{\max} =
  \max\{c' \in [0 \dd c) : \child(v, c') \neq \nil\}$.
\item Otherwise, we let $\pred(v, c) = \nil$.
\end{itemize}
We define $(\lrank(v), \rrank(v))$ as a pair of integers satisfying
$\lrank(v) = |\{S \in \mathcal{S} : S \prec \str(v)\}|$ and $\rrank(v)
- \lrank(v) = |\{S \in \mathcal{S} : \str(v)\text{ is a prefix of
}S\}|$.  Observe that then collecting every $i$th leftmost leaf
of $\mathcal{T}$, where $i \in (\lrank(v) \dd \rrank(v)]$ results in
precisely the set of leaves in the subtree rooted in $v$. Given any
node $v$ of $\mathcal{T}$ and an integer $0 \leq d \leq |\str(v)|$, we
let $v' = \WA(v, d)$ to be the weighted ancestor of $v$ assuming the
weight of each node is defined as $w(v) = \sdepth(v)$.  Thus, $v'$ is
the (unique) ancestor of $v$ in $\mathcal{T}$ that satisfies
$\sdepth(v') \geq d$ and for which $\sdepth(v')$ is minimized. For any
two nodes $v_1$ and $v_2$, the node $v = \LCA(v_1, v_2)$ (the
\emph{lowest common ancestor}) is defined as the (unique) ancestor of
both $v_1$ and $v_2$ with the maximal depth.

\begin{observation}\label{ob:lca}
  If $v_1$ and $v_2$ are nodes of a (compact) trie, then letting $v =
  \LCA(v_1, v_2)$ and $\ell = \lcp(\str(v_1), \str(v_2))$, it holds
  $\sdepth(v) = \ell$ and $\str(v) = \str(v_1)[1 \dd \ell] =
  \str(v_2)[1 \dd \ell]$.
\end{observation}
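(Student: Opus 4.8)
The plan is to analyze the paths from $\Root(\mathcal{T})$ to $v_1$ and to $v_2$ in the (compact) trie $\mathcal{T}$ and to identify their branching point with $\LCA(v_1,v_2)$. First I would record the elementary fact that the ancestors of a node $v$ are exactly the nodes lying on the root-to-$v$ path, and that for any such ancestor $u$ the label $\str(u)$ is a prefix of $\str(v)$ (each step toward the root strips the label of one edge, i.e.\ a suffix of the current label). Conversely, by condition~(1) of the trie definition, out of each node at most one edge carries a given label, so the walk from the root spelling a fixed string is unique; hence distinct explicit nodes never share a label, and any explicit node $u$ with $\str(u)$ a prefix of $\str(v)$ must be the node reached by that unique walk, i.e.\ it lies on the root-to-$v$ path. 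Summarizing: the common ancestors of $v_1$ and $v_2$ are precisely the explicit nodes $u$ with $\str(u)$ a common prefix of $\str(v_1)$ and $\str(v_2)$, equivalently a prefix of $\str(v_1)[1\dd\ell]=\str(v_2)[1\dd\ell]$.

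Next I would let $w$ be the last node shared by the root-to-$v_1$ and root-to-$v_2$ paths; this node is well defined and explicit, being either $v_1$, or $v_2$, or a branching node (an implicit node of the compact trie has a single child and cannot be a branching point). I then argue $\sdepth(w)=\ell$ by cases. If $w=v_1$, the root-to-$v_2$ path extends the root-to-$v_1$ path, so $\str(v_1)$ is a prefix of $\str(v_2)$ and $\ell=\sdepth(v_1)=\sdepth(w)$; the case $w=v_2$ is symmetric. Otherwise $w$ has distinct children $c_1$ and $c_2$ on the two paths, and the edges $w\to c_1$ and $w\to c_2$ carry different labels by condition~(1); hence $\str(v_1)$ and $\str(v_2)$ agree on their first $\sdepth(w)$ characters but differ at position $\sdepth(w)+1$, so again $\ell=\sdepth(w)$. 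In every case $\str(w)$ is a prefix of both $\str(v_1)$ and $\str(v_2)$ of length $\ell$, whence $\str(w)=\str(v_1)[1\dd\ell]=\str(v_2)[1\dd\ell]$.

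Finally I would observe that $w$ is a common ancestor of $v_1$ and $v_2$ and is the lowest one: by the first paragraph, any common ancestor $u$ lies on both the root-to-$v_1$ path and the root-to-$v_2$ path, hence on their shared initial segment, which terminates at $w$; thus $u$ is an ancestor of $w$ and $\sdepth(u)\le\sdepth(w)$. Therefore $w=\LCA(v_1,v_2)$, and the claimed identities $\sdepth(\LCA(v_1,v_2))=\ell$ and $\str(\LCA(v_1,v_2))=\str(v_1)[1\dd\ell]=\str(v_2)[1\dd\ell]$ follow.

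The only point requiring care is the bookkeeping around implicit versus explicit nodes in the compact trie—specifically, the observation that the branching point of two root-to-node paths is always explicit and that determinism (condition~(1)) makes node labels unique among explicit nodes. Once these are in hand the argument is routine, so I do not expect a genuine obstacle here.
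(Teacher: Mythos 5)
Your proof is correct. Note that the paper states this fact as an \emph{Observation} and offers no proof at all, so there is nothing to match it against; your root-to-node path analysis (common ancestors are exactly the explicit nodes whose labels are common prefixes, the last shared node $w$ of the two paths is explicit, and a case split on whether $w\in\{v_1,v_2\}$ or $w$ is a branching node gives $\sdepth(w)=\ell$) is exactly the routine argument the authors implicitly rely on. One step is stated a bit loosely: from the fact that the two edges $w\to c_1$ and $w\to c_2$ carry \emph{different labels} you conclude that $\str(v_1)$ and $\str(v_2)$ differ at position $\sdepth(w)+1$, but different string labels could in principle share a first character. What you actually need is that distinct edges leaving an explicit node of the compact trie begin with distinct \emph{characters}; this does hold, because the compact trie is obtained by contracting maximal unary paths of the trie, so an explicit branching node's outgoing compact-trie edges inherit the distinct single-character labels of the corresponding trie edges (equivalently, run your case analysis in the underlying trie, where condition~(1) applies verbatim, and note the branching node survives contraction as an explicit node, as you already observe). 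With that one-line tightening the argument is complete.
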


\subsubsection{Small Alphabet}

\begin{proposition}\label{pr:compact-trie}
  Given a packed representation of $\T \in \Alphabet^{n}$ with $2\le
  \sigma\le n$ and an array $A[1 \dd q]$ of $q$ positions in $\T$ such
  that, for any $1 \leq i < j \leq q$, it holds $\T[A[i] \dd n] \prec
  \T[A[j] \dd n]$, we can in $\bigO(q+n/\log_{\sigma}n)$ time
  construct a representation of the compact trie $\mathcal{T}$ of the
  set $\{\T[A[i] \dd n] : i \in [1 \dd q]\}$, augmented with auxiliary
  data structures to support the following operations on $\mathcal{T}$
  in $\bigO(1)$ time:
  \begin{itemize}
  \item Given $i \in [1 \dd q]$ return the $i$th leftmost leaf of
    $\mathcal{T}$,
  \item Given pointers to nodes $v_1$ and $v_2$ return a pointer to
    $\LCA(v_1, v_2)$,
  \item Given a pointer to node $v$, return $(\lrank(v), \rrank(v))$
    and $\sdepth(v)$.
  \end{itemize}
  It also supports the following operations in $\bigO(\log \log n)$
  time:
  \begin{itemize}
  \item Given a pointer to node $v$ and $d$ such that $0 \leq d \leq
    |\str(v)|$, return the pointer to $\WA(v, d)$,
  \item Given a pointer to node $v$ and $c \in \Sigma$, check if
    $\pred(v, c) \neq \nil$ (resp.\ $\child(v, c) \neq \nil$), and if
    so, return the pointer to $\pred(v, c)$ (resp.\ $\child(v, c)$).
  \end{itemize}
\end{proposition}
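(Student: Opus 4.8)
The plan is to build the compact trie $\mathcal{T}$ from a sparse suffix array together with the corresponding LCP values, all computed in packed time, and then to layer standard (but carefully chosen) $\bigO(1)$- and $\bigO(\log\log n)$-time tree-navigation structures on top. First I would note that the input array $A[1\dd q]$ is exactly the sequence of starting positions listed in lexicographic order of the corresponding suffixes. To turn this into a compact trie I need the array $\mathrm{LCP}[2\dd q]$ defined by $\mathrm{LCP}[i]=\mathrm{lcp}(\T[A[i-1]\dd n],\T[A[i]\dd n])=\LCE(A[i-1],A[i])$. The key point is that each such $\LCE$ value can be computed in $\bigO(1+\ell/\log_\sigma n)$ amortized time by comparing $\Theta(\log_\sigma n)$ characters at a time using the packed representation of $\T$ (word-parallel comparison plus a most-significant-mismatch computation); since the compact trie on these $q$ suffixes has $\bigO(q)$ edges whose labels are disjoint substrings of $\T$ — more precisely $\sum_i (\mathrm{LCP}[i+1]-\mathrm{LCP}[i])^{+}$ telescopes and the total ``descent'' is $\bigO(n)$ — the sum of all these $\LCE$ costs is $\bigO(q+n/\log_\sigma n)$. (Alternatively one invokes a packed sparse-suffix-sorting / LCE toolkit; either way the bound is $\bigO(q+n/\log_\sigma n)$.) From $A$ and $\mathrm{LCP}$ the compact trie is assembled by the classical left-to-right incremental algorithm: maintain the rightmost root-to-leaf path on a stack, and to insert leaf $i$ pop nodes of string-depth $>\mathrm{LCP}[i]$, split an edge if necessary to create an explicit node of depth exactly $\mathrm{LCP}[i]$, and attach the new leaf; this runs in $\bigO(q)$ total time and yields $\bigO(q)$ explicit nodes. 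I store with every explicit node $v$ its string-depth $\sdepth(v)$ (available from the stack depth during construction) and the pair $(\lrank(v),\rrank(v))$, which is just the range of leaf-indices in its subtree and is filled in by one post-order pass.

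Next I equip $\mathcal{T}$ with the $\bigO(1)$-time primitives. The ``$i$th leftmost leaf'' map is a plain array of pointers. For $\LCA$ I use the standard linear-space constant-time LCA structure of Bender–Farach-Colton (Euler tour + range-minimum), built in $\bigO(q)$ time; returning $(\lrank(v),\rrank(v))$ and $\sdepth(v)$ is an $\bigO(1)$ table lookup since these were precomputed. That settles the first three bullets.

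For the $\bigO(\log\log n)$-time operations I proceed as follows. Weighted ancestors: the weight function $w(v)=\sdepth(v)$ is monotone and maps the $\bigO(q)$ nodes into $[0\dd n)$, so \cref{th:wa} with $N=n$ gives a size-$\bigO(q)$ structure answering $\WA$ in $\bigO(\log\log n)$ time after $\bigO(q\log_q n)=\bigO(q+n/\log_\sigma n)$-time preprocessing — the second summand absorbing the case where $q$ is tiny relative to $n$ — which is within budget. For $\child(v,c)$ and $\pred(v,c)$ I need, at each node $v$, a predecessor/successor search over the first-characters of the outgoing edges; these first-characters are drawn from $[0\dd\sigma)\subseteq[0\dd n)$, so for each node I build a deterministic static predecessor structure (e.g.\ the $\bigO(\log\log n)$-query structure of \cite[Proposition~2]{wexp}, as already used in the proof of \cref{th:wa}) over its set of branching characters; the total number of such characters is $\bigO(q)$, and all these structures are built in $\bigO(q+n/\log_\sigma n)$ time in aggregate. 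A successful predecessor query returns the edge and hence the child node; an empty answer returns $\nil$. This handles the last two bullets.

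The main obstacle is the very first step: computing the compact trie — equivalently, the sparse LCP array — in $\bigO(q+n/\log_\sigma n)$ time from only the packed text and the sorted position array. One must argue carefully that the character-block comparisons do not over-count: a naive bound of $\bigO(\ell_i/\log_\sigma n)$ per edge of length $\ell_i$ could sum to $\omega(n/\log_\sigma n)$ if edge labels overlap heavily in $\T$, so the accounting has to be done against the telescoping quantity $\sum_i(\mathrm{LCP}[i+1]-\mathrm{LCP}[i])^+$ (the ``going-down'' steps of the incremental construction), which is $\bigO(n)$ in character units and hence $\bigO(n/\log_\sigma n)$ in word units, plus $\bigO(1)$ per leaf for the ``going-up'' part. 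Everything after that — LCA, weighted ancestors, per-node predecessor search — is a routine assembly of off-the-shelf components whose construction times were chosen to stay within the $\bigO(q+n/\log_\sigma n)$ envelope.
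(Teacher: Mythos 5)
Your overall plan (stack-based insertion of the sorted suffixes, leaf-pointer array, Bender--Farach-Colton LCA, per-node predecessor structures via~\cite[Proposition~2]{wexp}, and weighted ancestors via \cref{th:wa} after reducing weights to ranks) matches the paper, but the step you yourself flag as the main obstacle --- computing the sparse LCP/branching depths within $\bigO(q+n/\log_\sigma n)$ time --- is where your argument genuinely fails. The telescoping quantity $\sum_i(\mathrm{LCP}[i+1]-\mathrm{LCP}[i])^+$ is \emph{not} $\bigO(n)$ in the sparse setting, and moreover starting the word-packed comparison for $\mathrm{LCP}[i+1]$ at offset $\mathrm{LCP}[i]$ is unsound, since there is no guarantee that $\lcp(\T[A[i]\dd n],\T[A[i+1]\dd n])\ge \mathrm{LCP}[i]$. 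Concretely, let $\T=XX\texttt{\$}$ where $X$ has length about $n/2$ and any two distinct suffixes of $X$ have lcp $\bigO(\log n)$ (a random binary $X$ works), and let $A$ consist of the positions $\{1,\dots,q/2\}\cup\{n/2+1,\dots,n/2+q/2\}$ in lexicographic order of their suffixes, with $q\le n/2$. For each $i$, the suffix starting at $n/2+i$ is a prefix of the suffix starting at $i$, these two chosen suffixes are lexicographically consecutive among the chosen ones, and their lcp is $n/2-i+1=\Theta(n)$, while all other adjacent lcps are $\bigO(\log n)$. Hence both $\sum_i \mathrm{LCP}[i]$ and $\sum_i(\mathrm{LCP}[i+1]-\mathrm{LCP}[i])^+$ are $\Theta(qn)$, and your packed scanning costs $\Theta(qn/\log_\sigma n)$, which is far above the claimed $\bigO(q+n/\log_\sigma n)$ (already for $q=\Theta(n/\log n)$ it is $\Theta(n^2/\log_\sigma n\log n)$).

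The correct fix --- and what the paper actually does, which your parenthetical ``LCE toolkit'' remark gestures at but does not develop --- is to spend $\bigO(n/\log_\sigma n)$ time up front building, from the packed text, the data structure of~\cite[Theorem~5.4]{sss} answering arbitrary $\LCE$ queries in $\bigO(1)$ time; then each of the $\bigO(q)$ branching depths needed during the left-to-right stack insertion is a single $\bigO(1)$-time query, giving $\bigO(q+n/\log_\sigma n)$ total with no amortization argument needed. With that substitution the remainder of your construction (the $\bigO(q)$-time trie assembly, $\sdepth$ and $(\lrank,\rrank)$ by traversal, the $\bigO(1)$-time leaf and LCA lookups, the $\bigO(\log\log n)$-time child/predecessor and weighted-ancestor structures, whose preprocessing fits the budget exactly as you argue) coincides with the paper's proof.
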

\begin{proof}
  The data structure consists of five components:
  \begin{enumerate}
  \item The packed representation of $\T$ using $\bigO(n /
    \log_{\sigma} n)$ space.
  \item The compact trie $\mathcal{T}$. Since we assumed that $\T[n]$
    is unique in $\T$ (see \cref{sec:prelim}), the set $\{\T[A[i] \dd
    n]\}_{i \in [1 \dd q]}$ is prefix-free, and hence $\mathcal{T}$
    has exactly $q$ leaves. Each node $v$ of $\mathcal{T}$ stores the
    precomputed values $\sdepth(v)$, $(\lrank(v), \rrank(v))$, and a
    predecessor data structure that, given any $c \in [0 \dd \sigma)$,
    returns a pointer to $\pred(v, c)$. Using the structure
    from~\cite[Proposition~2]{wexp}, we achieve linear space and
    $\bigO(\log \log n)$ query time.
  \item The array of pointers $L[1 \dd q]$ such that $L[i]$ is the
    pointer to the $i$th leftmost leaf of $\mathcal{T}$.
  \item The data structure of Bender and
    Farach-Colton~\cite{BenderF00} that augments $\mathcal{T}$ with
    support for $\LCA$ queries. The structure needs $\bigO(q)$ space
    and answers queries in $\bigO(1)$ time.
  \item The data structure from \cref{th:wa} for $\mathcal{T}$ with
    the weight function $\sdepth(v)\in [0\dd n]$. The structure needs
    $\bigO(q)$ space and answers queries in $\bigO(\log \log n)$ time.
  \end{enumerate}

  In total, the data structure takes $\bigO(q + n / \log_{\sigma} n)$
  space.

  Using the above structures, the implementation of all queries in the
  claim follows immediately (note that $\child(v, c)$ can be
  determined using $\pred(v, c + 1)$ and the packed representation of
  $\T$).

  \paragraph{Construction algorithm}

  The data structure is constructed as follows. We start by building a
  data structure that supports LCE queries for suffixes of
  $\T$. Using~\cite[Theorem~5.4]{sss}, the construction takes $\bigO(n
  / \log_{\sigma} n)$ time, and the resulting data structure answers
  queries in $\bigO(1)$ time. We then construct $\mathcal{T}$ by
  inserting elements of $\{\T[A[i] \dd n]\}_{i \in [1 \dd q]}$ in the
  order given by $A$. We maintain a stack containing the internal
  nodes on the rightmost path, with the deepest node on top. When
  inserting each string, we first determine the depth at which that
  string branches from the rightmost path using $\LCE$ queries on
  $\T$. We then update the rightmost path of the trie. Adding each
  string first removes some elements from the stack, and then adds at
  most two new elements. Since the total number of elements pushed on
  stack is $\bigO(q)$, the construction of $\mathcal{T}$ takes
  $\bigO(q)$ time. During the construction, we record the value
  $\sdepth(v)$ in each node and collect pointers to consecutive leaves
  in the $L[1 \dd q]$ array. With the single traversal of
  $\mathcal{T}$, we then precompute in $\bigO(q)$ time the values
  $\lrank(v)$ and $\rrank(v)$ for every node $v$ (note that at this
  point, pointers to all children of each node are stored simply using
  a list, since this traversal does not require fast lookups or
  predecessor queries). Next, we construct the predecessor data
  structure for every node.  Since the keys in every node are sorted,
  using~\cite[Proposition~2]{wexp}, over all nodes of $\mathcal{T}$,
  the construction takes $\bigO(q)$ time.  Finally, we construct the
  data structures supporting $\LCA$ and $\WA$ queries on
  $\mathcal{T}$. Using~\cite{BenderF00} and \cref{th:wa}, this takes
  $\bigO(q)$ and $\bigO(q \log_{q} n)= \bigO((q+\sqrt{n})
  \log_{q+\sqrt{n}}n)=\bigO(q+\sqrt{n})=\bigO(q+n/\log_{\sigma} n)$
  time, respectively.
\end{proof}

\subsubsection{Large Alphabet}

\begin{proposition}\label{pr:meta-trie}
  Given a packed representation of $\T \in \Alphabet^n$ with $2 \leq
  \sigma < n^{1/7}$ and an array $A[1 \dd q]$ of $q$ positions in $\T$
  such that, for any $1 \leq i < j \leq q$, it holds $\T[A[i] \dd n]
  \prec \T[A[j] \dd n]$, we can in $\bigO(q+n / \log_{\sigma} n)$ time
  construct a data structure that, given the packed representation of
  any $\Pat \in \Alphabet^m$, returns in $\bigO(m / \log_{\sigma} n +
  \log \log n)$ time a pair of integers $(\bpre, \epre)$ satisfying:
  \begin{itemize}
  \item $\bpre = |\{i \in [1 \dd q] : \T[A[i] \dd n] \prec \Pat\}|$,
    and
  \item $(\bpre \dd \epre] = \{i \in [1 \dd q] :
    \Pat\text{ is a prefix of }\T[A[i] \dd n]\}$.
  \end{itemize}
\end{proposition}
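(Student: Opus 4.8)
\textbf{Proof proposal for \cref{pr:meta-trie}.}
The plan is to reduce the large-alphabet case to the small-alphabet construction of \cref{pr:compact-trie} by grouping $\Theta(\log_\sigma n)$ consecutive symbols into a single meta-symbol. Concretely, I would fix a block length $b = \Theta(\log_\sigma n)$ chosen so that $\sigma^b = n^{\Theta(1)}$, and reinterpret each suffix $\T[A[i] \dd n]$ as a string over the meta-alphabet $[0 \dd \sigma^b)$ by reading it in chunks of $b$ symbols (padding the final chunk, if necessary, using the unique terminator $\T[n] = \mathtt{\$}$, which guarantees the set of meta-strings remains prefix-free). The packed representation of $\T$ lets us extract any such meta-symbol in $\bigO(1)$ time, and since $q \le \sigma^b \cdot (\text{something}) $ need not hold we cannot invoke \cref{pr:compact-trie} verbatim on these meta-strings with the alphabet $\sigma^b$; instead I would observe that the LCE-based construction in the proof of \cref{pr:compact-trie} only needs (i) $\bigO(1)$-time LCE queries on $\T$ (available via \cite[Theorem~5.4]{sss} in $\bigO(n/\log_\sigma n)$ construction time) to locate branching depths, and (ii) the ability to read one character of a string at a given position; both work at the meta-symbol granularity after a trivial conversion between symbol positions and meta-symbol positions, so essentially the same $\bigO(q + n/\log_\sigma n)$-time construction produces the compact trie $\TZ'$ of the meta-strings, together with $\LCA$, $\WA$, $\sdepth$, $(\lrank,\rrank)$, and per-node predecessor structures over meta-symbols.

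Next I would handle a query pattern $\Pat \in \Alphabet^m$. Split $\Pat$ into $\lfloor m/b \rfloor$ full meta-symbols followed by a residual block of $r = m \bmod b < b$ symbols. First descend $\TZ'$ greedily along the full meta-symbols: starting at the root, at each explicit node use the predecessor structure to test whether the next meta-symbol labels an outgoing edge, and when on an edge (implicit nodes) compare the stored label meta-symbol directly; this takes $\bigO(m/\log_\sigma n + \log\log n)$ time and either fails (in which case $\Occ(\Pat,\T)=\emptyset$ and we return $(\bpre,\bpre)$ with $\bpre$ read off from the last matched node's $\lrank$) or reaches the locus of $\Pat[1 \dd b\lfloor m/b\rfloor]$. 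The residual $r$ symbols are the genuinely new obstacle: they form a \emph{partial} meta-symbol, so I cannot just continue descending. The fix is to note that among the descendants of the current locus, the leaves whose meta-string has a next meta-symbol $z$ with $\Pat[b\lfloor m/b\rfloor +1 \dd m]$ as an $r$-symbol prefix form a contiguous leaf range $(\lrank \dd \rrank]$ over a sub-range of $z$-values $[z_{\lo} \dd z_{\hi})$ where $z_{\lo},z_{\hi}$ are computable in $\bigO(1)$ from the residual block and $\sigma^{b-r}$; this range of children can be located with two predecessor queries at the current explicit node (or, if we are mid-edge, by a single direct comparison of the edge-label meta-symbol's top $r$ symbols). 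From the two bounding children (or the single edge case) the pair $(\bpre,\epre)$ is obtained from their $(\lrank,\rrank)$ values. The whole query is $\bigO(m/\log_\sigma n + \log\log n)$.

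I expect the main obstacle to be exactly the residual-block handling and the bookkeeping to make it $\bigO(\log\log n)$ rather than $\bigO(\log\log n)$ per residual symbol: one must argue that the set of admissible next meta-symbols is an \emph{interval} $[z_{\lo} \dd z_{\hi})$ of the meta-alphabet (which follows because ``having a fixed $r$-symbol prefix'' corresponds to a contiguous range when meta-symbols are ordered as base-$\sigma$ numbers with the high-order symbol first), and that this interval can be intersected with the children of the current node using the per-node predecessor structure in $\bigO(\log\log n)$ time, yielding the two leaf-range endpoints. A minor secondary point is correctly padding the last meta-symbol of each $\T[A[i]\dd n]$ and of $\Pat$ so that prefix relationships at the meta-level faithfully mirror prefix relationships at the symbol level; the uniqueness of $\mathtt{\$}=\T[n]$ makes this clean for the suffixes, and for $\Pat$ we simply never pad but instead treat the last $r$ symbols specially as above. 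Everything else — the $\LCA$-based navigation, the $\WA$ machinery, and the $\bigO(q+n/\log_\sigma n)$ construction budget — transfers directly from \cref{pr:compact-trie} and \cref{th:wa}.
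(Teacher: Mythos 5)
There is a genuine gap in your query-time analysis. Your descent pays one predecessor query at every \emph{explicit} node on the search path, and the number of such nodes can be as large as the number of meta-symbols in $\Pat$, i.e.\ $\Theta(m/\log_\sigma n)$ (think of a caterpillar-shaped trie that branches at every meta-depth). So the procedure as described costs $\bigO((m/\log_{\sigma} n)\cdot \log \log n)$ in the worst case, not the claimed $\bigO(m/\log_{\sigma} n + \log \log n)$; nothing in your proposal amortizes or avoids the per-branching-node $\log\log n$ cost, and fixing this is exactly the hard part that the paper delegates to the trie augmentation of Fischer and Gawrychowski~\cite{wexp}, which supports a whole search in $\bigO(|P| + \log\log\sigma')$ time. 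A secondary error: on a mismatch you propose to ``return $(\bpre,\bpre)$ with $\bpre$ read off from the last matched node's $\lrank$,'' but when the pattern falls off the trie at a node $u$, the correct count is $\lrank(u)$ \emph{plus} the leaves under all children of $u$ whose first meta-symbol is smaller than the pattern's next meta-symbol (and an analogous correction applies mid-edge), so this step as stated returns wrong values of $\bpre$.

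For comparison, the paper also works over meta-symbols of $\Theta(\log_\sigma n)$ letters but sidesteps both issues by a different query reduction: each block $X$ is encoded with an asymmetric padding ($\Int(X)$ pads with zeros then maximal characters, $\Int'(X)$ the other way), yielding strings $\Intpad(S)$ and $\Intpadc(S)$ with the property that ``$\Pat$ is a prefix of $X$'' holds iff $\Intpad(\Pat) \preceq \Intpad(X) \prec \Intpadc(\Pat)$. Hence $(\bpre,\epre)$ is obtained as two lexicographic rank queries with the crafted patterns $\Intpad(\Pat)$ and $\Intpadc(\Pat)$ against the compact trie of $\{\Intpad(\T[A[i]\dd n])\}$, augmented in a black-box manner with~\cite[Theorem~1]{wexp} (after inserting $\lceil\sqrt n\rceil$ dummy strings so the meta-alphabet is polynomial in the number of stored strings). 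This removes your residual-block special-casing entirely and delivers the $\bigO(m/\log_\sigma n + \log\log n)$ bound. Your construction phase (LCE-based insertion in sorted order, as in \cref{pr:compact-trie}) is essentially the same as the paper's; the gap is confined to the query procedure.
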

\begin{proof}

  The basic idea is to construct the compact trie of strings in
  $\{\T[A[i] \dd n]\}_{i \in [1 \dd q]}$ converted into strings over
  the alphabet of metasymbols (of $\Theta(\log_{\sigma} n)$ original
  symbols each). Our mapping of symbols to metasymbols does not,
  however, simply group symbols into blocks. We instead introduce a
  special mapping that will allow us to deduce the output range
  $(\bpre, \epre)$ using two predecessor queries in the image of the
  set $\{\T[A[i] \dd n]\}_{i \in [1 \dd q]}$ and some carefully
  crafted patterns. This will allow us to use the augmentation of
  tries of Fischer and Gawrychowski~\cite[Theorem~1]{wexp} in a
  black-box manner.

  \paragraph{Definitions}

  First, we introduce the mapping of strings over $\Alphabet$ into
  strings over metasymbols.  Let $\tau = \lfloor{\frac17 \log_\sigma
  n}\rfloor$ and $\kappa = 3\tau - 1$.  For any $X \in \Alphabet^{\le
  3\tau - 1}$, let $\Int(X)$ denote an integer constructed by appending
  $6\tau - 2|X|$ zeros and $|X|$ $c$s (where $c = \sigma - 1$) to $X$,
  and then interpreting the resulting string as a base-$\sigma$
  representation of a number in $[0 \dd \sigma^{6\tau})$. Note
  that $X \neq X'$ implies $\Int(X) \neq \Int(X')$. Let
  also $\Int'(X)$ denote an integer
  constructed by appending $6\tau - 2|X|$ $c$s (where $c = \sigma -
  1$) and $|X|$ zeros to $X$, and then interpreting the resulting
  string as a base-$\sigma$ representation of a number in $[0 \dd
  \sigma^{6\tau})$.  For any string $S \in \Alphabet^{*}$ of length
  $\ell \geq 0$, we define $\Intpad(S)$ as a string of length $\ell' =
  \lceil \tfrac{\ell + 1}{\kappa} \rceil > 0$ over alphabet $[0 \dd
  \sigma^{6\tau})$ such that, for any $i \in [1 \dd \ell']$, it holds
  $\Intpad(S)[i] = \Int(S((i{-}1) \cdot \kappa \dd \min(\ell, i \cdot
  \kappa)])$.  For $S \in \Alphabet^{*}$ of length $\ell \geq 0$, we
  define $\Intpadc(S)$ as a string of length $\ell' = \lceil
  \tfrac{\ell + 1}{\kappa} \rceil > 0$ over alphabet $[0 \dd
  \sigma^{6\tau})$ such that $\Intpadc(S)[1 \dd \ell') = \Intpad(S)[1
  \dd \ell')$ and $\Intpadc(S)[\ell'] = \Int'(S((\ell' - 1) \cdot
  \kappa \dd \min(\ell, \ell' \cdot \kappa)])$.  Note that if $\ell$
  is a multiple of $\kappa$, then the last symbol of $\Intpad(S)$ is
  $\Int(\emptystring) = 0$, whereas the last symbol of $\Intpadc(S)$
  is $\Int'(\emptystring)=\sigma^{6\tau}-1$.  Observe that
  \begin{itemize}
  \item For any set of strings $\mathcal{S} \subseteq \Alphabet^{*}$,
    the set $\{\Intpad(X) : X \in \mathcal{S}\}$ is prefix-free.
  \item For any strings $X, Y \in \Alphabet^{*}$, $X \prec Y$ holds if
    and only if $\Intpad(X) \prec \Intpad(Y)$.
  \item A string $\Pat \in \Alphabet^{*}$ is a prefix of $X \in
    \Alphabet^{*}$ if and only if $\Intpad(\Pat) \preceq \Intpad(X)
    \prec \Intpadc(\Pat)$.
  \end{itemize}
  For any set of strings $\mathcal{S}$ and any string $Y$, denote
  ${\rm rank}_{\mathcal{S}}(Y) := |\{X \in \mathcal{S} : X \prec
  Y\}|$.  Observe that by the above properties, letting $P_1 =
  \Intpad(\Pat)$, $P_2 = \Intpadc(\Pat)$, and $\mathcal{A} =
  \{\Intpad(\T[A[i] \dd n])\}_{i \in [1 \dd q]}$, we have $(\bpre,
  \epre) = ({\rm rank}_{\mathcal{A}}(P_1), {\rm
  rank}_{\mathcal{A}}(P_2))$.  Let $\mathcal{T}$ denote the compact
  trie of the set $\mathcal{A}$.

  \paragraph{Components}

  The data structure consists of two components:
  \begin{enumerate}
  \item The packed representation of $\T$ using $\bigO(n /
    \log_{\sigma} n)$ space.
  \item The trie $\mathcal{T}$ augmented
    using~\cite[Theorem~1]{wexp}. Note that this result requires that
    the alphabet of strings in $\mathcal{A}$ is of size
    $|\mathcal{A}|^{\bigO(1)}$, which may be violated for $q =
    n^{o(1)}$.  Thus, we actually define the alphabet to be $[0\dd
    \sigma')$, where $\sigma' =
    \sigma^{6\tau}+\lceil{\sqrt{n}}\rceil$, and insert to
    $\mathcal{A}$ additional $\lceil{\sqrt{n}}\rceil$ dummy length-1
    strings corresponding to the $\lceil{\sqrt{n}}\rceil$ largest
    characters in $[0\dd \sigma')$.  As a result, we must have
    $\sigma' = \bigO(n) = \bigO(|\mathcal{A}|^2)$.  At the same time, the
    dummy strings do not change ${\rm rank}_{\mathcal{A}}(Q)$ for any
    $Q\in [0\dd \sigma^{6\tau})^*$.  By~\cite[Theorem~1]{wexp}, such
    augmented $\mathcal{T}$ needs $\bigO(q + \sqrt{n})$ space.
  \end{enumerate}

  In total, the data structure takes $\bigO(q + n / \log_{\sigma} n)$
  space.

  \paragraph{Implementation of queries}

  Using $\T$ and $\mathcal{T}$, given the packed representation of
  $\Pat \in \Alphabet^{m}$, we compute the output pair $(\bpre,
  \epre)$ as follows.  First, note that, given the packed
  representation of $\Pat$ and $\T$, we can in $\bigO(1)$ time access
  any symbol of $\Intpad(\Pat)$, $\Intpadc(\Pat)$, and $\Intpad(\T[i
  \dd n])$ for any $i \in [1 \dd n]$. We start by computing $P_1$ and
  $P_2$ from $\Pat$ in $\bigO(m / \log_{\sigma} n + 1)$ time. Then,
  using~\cite[Theorem~1]{wexp}, we compute ${\rm
  rank}_{\mathcal{A}}(P_1)$ and ${\rm rank}_{\mathcal{A}}(P_2)$ in
  $\bigO(|P_1| + \log \log \sigma') = \bigO(m / \log_{\sigma} n + \log
  \log n)$ and $\bigO(|P_2| + \log \log \sigma') = \bigO(m /
  \log_{\sigma} n + \log \log n)$ time, respectively. By the above
  discussion, this gives us the output pair $(\bpre, \epre)$. During
  the query, the algorithm may want to access symbols of
  strings from $\mathcal{A}$. We do not store them explicitly (note
  that storing $\Intpad(\T)$ would not be enough), but instead perform
  the mapping on-the-fly.

  \paragraph{Construction algorithm}

  We start by building a
  data structure that supports LCE queries for suffixes of
  $\T$. Using~\cite[Theorem~5.4]{sss}, the construction takes $\bigO(n
  / \log_{\sigma} n)$ time, and the resulting data structure answers
  queries in $\bigO(1)$ time. Denote the length of the longest common
  prefix between suffixes $\T[i \dd n]$ and $\T[j \dd n]$ as $\LCE(i,
  j)$.  Observe that for any $i, j \in [1 \dd n]$ such that $i \neq
  j$, the longest common prefix of $\Intpad(\T[i \dd n])$ and
  $\Intpad(\T[j \dd n])$ has length $\lfloor \tfrac{\LCE(i,j)}{\kappa}
  \rfloor$, which can be computed in $\bigO(1)$ time.  We construct
  $\mathcal{T}$ by inserting elements of $\{\Intpad(\T[A[i] \dd
  n])\}_{i \in [1 \dd q]}$ in the order given by $A$. The
  construction proceeds as in the proof of \cref{pr:compact-trie} and
  takes $\bigO(q)$ time. Once the trie is constructed, we 
  add the $\lceil{\sqrt{n}}\rceil$ dummy length-1 strings and 
  augment
  $\mathcal{T}$ using~\cite[Theorem~1]{wexp} in $\bigO(q+\sqrt{n})$ time.
  In total, the construction takes $\bigO(q + n / \log_{\sigma} n)$ time.
\end{proof}

\subsection{(Prefix) Rank and Selection Queries}\label{sec:prefix-queries}

We start with an implementation of rank and selection queries for
larger alphabets.

\begin{lemma}[Belazzougui and Puglisi~\cite{Belazzougui2016}]\label{lm:rs}
  For all integers $N\ge n \ge \sigma\ge 2$ and every string $S\in
  [0\dd \sigma)^{\le n}$, there exists a data structure of $\bigO(n\log
  \sigma)$ bits that answers rank queries in $\bigO(\log \log N)$ time
  and selection queries in $\bigO(1)$ time.  Moreover, given a table
  precomputed in $\bigO(N)$ time (shareable across all instances with
  common parameter $N$) and the packed representation of $S$, the data
  structure can be constructed in $\bigO(\min(n,\,\sigma + n\log \sigma
  /\sqrt{\log N}))$ time using $\bigO(n\log \sigma)$ bits of space.
\end{lemma}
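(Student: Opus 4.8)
The plan is to invoke the rank/select representation for large alphabets of Golynski, Munro, and Rao, in the refined form given by Belazzougui and Puglisi~\cite{Belazzougui2016}, and to check that its parameters match the statement, re-parameterising their $\log n$ by the common upper bound $N$. For the structure itself I would recall the usual layout: cut $S\in\Alphabet^{\le n}$ into $\lceil|S|/\sigma\rceil$ consecutive chunks of $\sigma$ symbols each; keep, for every symbol $a\in\Alphabet$, a succinct global index (a bitvector with the $\bigO(1)$-time rank/select support of \cref{th:binrksel}) recording how many occurrences of $a$ fall in each chunk, which over all $a$ takes $\bigO(|S|)$ bits; and keep, inside each chunk, a permutation-style gadget over the universe $[1\dd\sigma]$ that turns a $(\textrm{symbol},\textrm{in-chunk rank})$ pair into an offset in $\bigO(1)$ time and a $(\textrm{symbol},\textrm{offset})$ pair into an in-chunk rank in $\bigO(\log\log\sigma)$ time, at a cost of $\bigO(\sigma\log\sigma)$ bits per chunk. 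A $\select{S}{a}{r}$ query is then a select on the global index of $a$ followed by an $\bigO(1)$-time chunk lookup, hence $\bigO(1)$ in total; a $\rank{S}{a}{j}$ query is a rank on the global index of $a$ plus an $\bigO(\log\log\sigma)$-time chunk lookup, and $\sigma\le n\le N$ gives $\bigO(\log\log\sigma)\subseteq\bigO(\log\log N)$. Summed up, the space is $\bigO(|S|\log\sigma)\subseteq\bigO(n\log\sigma)$ bits.

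For the construction I would provide two algorithms and keep the faster one. The first scans $S$ once, appending each occurrence to the appropriate per-symbol, per-chunk list (these come out in increasing order), and then builds the chunk gadgets and the global indices in time linear in their size, using \cref{th:binrksel} and a deterministic predecessor structure such as \cite[Proposition~2]{wexp}; this costs $\bigO(|S|+\sigma)=\bigO(n)$ since $\sigma\le n$. The second algorithm, used when $\log\sigma\le\sqrt{\log N}$, is the packed one: one precomputes, once and shared across all instances with the same $N$, a universal table of $\bigO(N)$ entries (hence $\bigO(N)$ time and space, and reusable) that maps a packed block of $\Theta(\sqrt{\log N})$ consecutive symbols of $S$, occupying $\bigO(\log N)$ bits precisely because $\log\sigma\le\sqrt{\log N}$, to the symbol histogram and the per-symbol lists of in-block offsets of that block; sweeping $S$ in such blocks and merging these summaries into the global structures then costs $\bigO(\sigma + |S|\log\sigma/\sqrt{\log N})$, where the $\sigma$ term pays for zero-initialising the symbol-indexed arrays, and the ensuing construction of the chunk gadgets and bitvectors stays within this budget because their builders from \cref{th:binrksel} and \cite[Proposition~2]{wexp} are themselves already packed. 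When $\log\sigma>\sqrt{\log N}$ the term $n\log\sigma/\sqrt{\log N}$ is $\Omega(n)$, so the naive $\bigO(n)$-time algorithm already realises $\bigO(\min(n,\ \sigma+n\log\sigma/\sqrt{\log N}))$; either way the working space is $\bigO(n\log\sigma)$ bits.

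I expect the packed construction to be the only delicate point: moving the per-block summaries returned by the table into the global per-symbol structures without spending $\omega(1)$ amortised time per individual symbol occurrence. The way to handle this is to have the table emit each block's data already grouped by symbol, so that extending a symbol's current list is $\bigO(1)$ per symbol actually occurring in the block rather than $\bigO(\sigma)$, and then feed the resulting sorted per-symbol streams directly into the packed builders of \cref{th:binrksel} and of the predecessor/permutation gadgets; the only remaining tuning is to fix the block length so that a block still fits in $\bigO(\log N)$ bits, keeping the table at $\bigO(N)$ entries, while covering $\Theta(\sqrt{\log N})$ symbols, which is possible exactly because this branch is entered only when $\log\sigma\le\sqrt{\log N}$. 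All of this is carried out in~\cite{Belazzougui2016}; the proof then reduces to citing it, observing $\bigO(\log\log\sigma)\subseteq\bigO(\log\log N)$ and $\sigma\le n$, and adding the $\bigO(n)$ fallback that realises the $\min$.
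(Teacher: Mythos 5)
Your proposal is correct and takes essentially the same route as the paper: the paper's proof is just a two-case citation of Belazzougui and Puglisi~\cite{Belazzougui2016} (their Lemmas~C.2 and~C.3), using the $\bigO(n)$-time construction when $\log \sigma \ge \sqrt{\log N}$ and the packed $\bigO(\sigma + n\log^2\sigma/\log N)$-time construction otherwise, which is exactly the minimum you take, together with the observation $\bigO(\log\log \sigma)\subseteq\bigO(\log\log N)$. The one detail you assert but do not justify, and which the paper spells out in a footnote, is that the cited linear-time construction is not stated with an $\bigO(n\log\sigma)$-bit working-space bound; the paper achieves it by converting each per-symbol position list to its Elias--Fano representation on the fly instead of storing the positions explicitly.
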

\begin{proof}
  If $\log^2 \sigma \ge \log N$, we use the data structure
  of~\cite[Lemma~C.2]{Belazzougui2016}, which occupies $\bigO(n\log
  \sigma)$ bits, answers rank queries in $\bigO(\log \log n)$ time and
  selection queries in $\bigO(1)$ time, and can be constructed in
  $\bigO(n)$ time using $\bigO(n\log \sigma)$ bits of space.\footnote{The
  statement of~\cite[Lemma~C.2]{Belazzougui2016} does not bound the
  space consumption of the construction algorithm.  Nevertheless, it
  is straightforward to implement the underlying construction
  procedure in $\bigO(n\log \sigma)$ bits of working space.  The
  original algorithm scans the input sequence $S$ from left to right
  and, for each $a\in \Sigma$, builds an array $P_a[1\dd n_a]$ such
  that $n_a = \rank{S}{a}{|S|}$ and $P_a[r]=\select{S}{a}{r}$ for
  $r\in [1\dd n_a]$. The array $P_a[1\dd n_a]$ is then converted to
  the Elias--Fano representation: an array $A_a[1\dd n_a]$ with
  $A_a[r] = P_a[r] \bmod \sigma$ for $r\in [1\dd n_a]$ and a bit
  vector $V_a = \unary((\lfloor P_a[r]/\sigma\rfloor-\lfloor
  P_a[r-1]/\sigma\rfloor)_{r\in [1\dd n_a]})$, where we assume
  $P_a[0]=0$ to streamline the formula. To achieve $\bigO(n\log \sigma)$
  bits of working space, instead of storing $P_a$ explicitly, we
  convert $P_a$ to the Elias--Fano representation on the fly as
  subsequent positions are appended to $P_a$.}  Otherwise, we use the
  data structure of~\cite[Lemma~C.3]{Belazzougui2016}, which occupies
  $\bigO(n\log \sigma)$ bits, answers rank queries in $\bigO(\log \log N)$
  time and selection queries in $\bigO(1)$ time, and can be constructed
  in $\bigO(\sigma + n\log^2\sigma / \log N)$ time using $\bigO(n
  \log\sigma)$ bits of space.
\end{proof}

The following proposition, instantiated with $h = \lceil
\ell^{\epsilon/2} \rceil$, immediately yields
\cref{th:wavelet-tree}.

\begin{proposition}\label{pr:wavelet_tree}
  For all integers $h,m,\ell,\sigma\in \Z_{\ge 1}$ satisfying $h\ge 2$
  and $m\ge \sigma^\ell\ge 2$, and every string $W\in ([0\dd
  \sigma)^\ell)^{\le m}$, there exists a data structure of size $\bigO(m
  \log_h (h\ell))$ that answers prefix rank queries in $\bigO(h \log
  \log m \log_h (h\ell))$ time and prefix selection queries in $\bigO(h
  \log_h (h\ell))$ time.  Moreover, it can be constructed in
  $\bigO(m\min(\ell,\,\sqrt{\log m})\log_h (h\ell))$ time using $\bigO(m
  \log_h (h\ell))$ space given the packed representation of $W$ and
  the parameter $h$.
\end{proposition}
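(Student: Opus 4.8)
The plan is to build a generalized wavelet tree of branching factor $h$ over the sequence $W$, viewed as a sequence of $m$ strings in $\Alphabet^\ell$. At the root, I partition the $m$ strings according to their first symbol; recursing, at depth $d$ the strings are grouped by their length-$d$ prefix, and the tree has depth $\ell$. To reduce the depth (and hence the multiplicative overhead in query/construction time) from $\ell$ to $\bigO(\log_h(h\ell))$, I contract this conceptual trie: instead of branching one original symbol at a time, each wavelet-tree node handles a chunk of $\Theta(\log_h(h\ell))$ consecutive symbol-positions at once, so that the number of children of a node is $\sigma^{\Theta(\log_h(h\ell))}$ and the number of levels is $\bigO(\ell / \log_h(h\ell)) = \bigO(\log_h(h\ell))$ once we also account for the per-chunk internal $h$-ary structure. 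More precisely, within each chunk I further organize the (up to $\sigma^{\Theta(\log_h(h\ell))}$) branch labels into a balanced $h$-ary search structure of height $\bigO(\log_h(\sigma^{\Theta(\log_h(h\ell))})) = \bigO(\log_h(h\ell))$, so the total height is $\bigO(\log_h(h\ell))$ and each root-to-leaf path crosses $\bigO(\log_h(h\ell))$ "levels," each contributing one rank/select call on a sequence over an alphabet of size $\bigO(h)$. I attach to every internal node a sequence over $[0\dd h)$ recording, for each string routed through that node, which of the $\le h$ children it descends to; these sequences have total length $\bigO(m\log_h(h\ell))$ across each of the $\bigO(\log_h(h\ell))$ levels.

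A prefix rank query $\rank{W}{X}{j}$ for $X\in\Alphabet^*$ is answered by walking down from the root following the path determined by $X$ (chunk by chunk, and within each chunk through the $h$-ary routing structure), maintaining the invariant that $j$ is mapped to the count of strings among the first $j$ that agree with the prefix of $X$ read so far; each step uses one $\rank{}{}{}$ query on an $\bigO(h)$-symbol sequence, and by \cref{lm:rs} (with $N=m$, so $\log\log N=\bigO(\log\log m)$) this costs $\bigO(\log\log m)$ time, giving $\bigO(h\log\log m\log_h(h\ell))$ overall after summing over the $\bigO(\log_h(h\ell))$ levels — the extra factor $h$ coming from handling the case $\sigma>h$ at a chunk boundary by a short linear scan, or alternatively from the $\bigO(h)$-ary alphabet of each routing sequence. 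When $X$ is not a full multiple of a chunk, the last partial chunk is handled by a lookup over the $\le h$ children with a matching prefix. Prefix selection $\select{W}{X}{r}$ is the symmetric bottom-up walk: it first reaches the node corresponding to $X$ (again in $\bigO(h\log_h(h\ell))$ time via $\bigO(1)$-time $\mathsf{select}$ on each routing sequence, per \cref{lm:rs}) and then lifts the local rank $r$ back to a global position level by level using $\mathsf{select}$, for $\bigO(h\log_h(h\ell))$ total.

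For construction, I build the tree level by level. At each level, given the routing of every string to its current node, I stably redistribute the $m$ strings into children; the key subroutine is, for one node with its associated multiset of $\le h$ child-labels, to compute the $[0\dd h)$-sequence and then build the rank/select structure on it. Using \cref{lm:rs} with $N=m$, constructing the rank/select structure on a length-$m'$ sequence over alphabet $\bigO(h)$ costs $\bigO(\min(m', h + m'\log h/\sqrt{\log m}))$ time; summed over all nodes on one level (whose sequence lengths sum to $\bigO(m)$) this is $\bigO(m\min(1,\log h/\sqrt{\log m}) + (\text{number of nodes})\cdot h)$, and since we can afford to bound the node count so that the $\cdot h$ terms telescope into the main cost, each level costs $\bigO(m\min(\ell,\sqrt{\log m}))$ — wait, more carefully: each of the $\bigO(\log_h(h\ell))$ levels costs $\bigO(m)$ for the stable bucketing plus $\bigO(m\log h/\sqrt{\log m})$ for the rank/select structures when $\log^2 h<\log m$, and $\bigO(m)$ otherwise, hence $\bigO(m\min(\ell,\sqrt{\log m}))$ per level is not tight but $\bigO(m\min(1,\sqrt{\log m}))\cdot\log_h(h\ell)$-type bookkeeping gives the claimed $\bigO(m\min(\ell,\sqrt{\log m})\log_h(h\ell))$; the packed input lets the bucketing read $\Theta(\log_\sigma m)$ symbols at a time so it never dominates. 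The working space is $\bigO(m\log_h(h\ell))$ since we keep only the current and next level's routing arrays plus the structures being output.

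The main obstacle I anticipate is the careful accounting that makes the chunk size, the $h$-ary fan-in per chunk, and the number of levels all simultaneously come out to $\bigO(\log_h(h\ell))$ — in particular handling the regime $\sigma > h$ (where a single original symbol already does not fit in an $h$-ary branch and must itself be decomposed $h$-arily) uniformly with the regime $\sigma \le h$, and ensuring that the $\bigO(h)$ factor in query time and the $+\sigma$ / $\log^2\sigma$-vs-$\log m$ dichotomy of \cref{lm:rs} do not leak an extra $\ell$ or $\sigma$ into the final bounds. The correctness of the rank/select reductions themselves is routine once the tree is set up; the subtlety is purely in the parameter balancing and in verifying that $m\ge\sigma^\ell$ guarantees all the alphabets involved stay polynomial in $m$ so that \cref{lm:rs} applies with $N=\bigO(m)$.
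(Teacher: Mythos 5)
There is a genuine gap, and it is in the very place your plan leans hardest: the depth accounting. First, the identity you invoke, $\bigO(\ell/\log_h(h\ell)) = \bigO(\log_h(h\ell))$, is false in general (take $h=2$ and $\ell$ large: $\ell/\log_2(2\ell) \gg \log_2(2\ell)$). Second, and more fundamentally, contracting the trie into chunks cannot help on its own: if each node handles $c$ consecutive positions and routes $h$-arily inside the chunk, then a root-to-node path for a prefix $X$ crosses $(|X|/c)\cdot\lceil\log_h \sigma^c\rceil = \Theta(|X|\log_h\sigma)$ routing sequences regardless of $c$ --- the product of (number of chunks) and (per-chunk $h$-ary height) is invariant. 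Since your rank query maps $j$ down through every routing sequence on that path, it performs $\Theta(\ell\log_h\sigma)$ rank operations in the worst case (e.g.\ $\Theta(\ell)$ for $\sigma=h=2$), each costing $\bigO(\log\log m)$ by \cref{lm:rs}, which is far from the claimed $\bigO(h\log\log m\log_h(h\ell))$. (A smaller issue of the same flavor: your ``lookup over the $\le h$ children with a matching prefix'' for a partial last chunk is not over $\le h$ children --- a partial prefix of a chunk can match up to $\sigma^{c-1}$ branch labels, i.e.\ a whole subtree of the intra-chunk routing structure.)

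The ingredient you are missing is the one the paper uses to sidestep exactly this: instead of one contracted tree, it keeps the plain depth-$\ell$ wavelet tree of $W$ \emph{plus} a recursive instance built for the string $\tilde W$ over the super-alphabet $(\Sigma^h)^{\lfloor\ell/h\rfloor}$, iterating to recursion depth $\bigO(\log_h(h\ell))$. The point is the identity $\rank{\tilde W}{\widetilde X}{j}=\rank{W}{X'}{j}$, where $X'$ is $X$ truncated to length $h\lfloor|X|/h\rfloor$: the recursive structure answers the prefix query for the bulk of $X$ \emph{in one call}, without walking symbol by symbol, and the current level only processes the remaining $|X|\bmod h<h$ characters by explicit rank (or select) steps on the node sequences $B_Y$. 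In effect the prefix \emph{length} is decomposed in base $h$, so each of the $\bigO(\log_h(h\ell))$ levels contributes $\bigO(h)$ operations; decomposing the \emph{alphabet} $h$-arily, as in your plan, necessarily pays per symbol of $X$. Your size and construction accounting (total sequence length $\bigO(m)$ words per level, bit-parallel construction via \cref{lm:rs} with $N=m$, using $m\ge\sigma^\ell$ to keep alphabets polynomial in $m$) is in the right spirit and matches the paper once the recursive multi-resolution organization replaces the single contracted tree.
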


\begin{proof}
  The data structure consists in the wavelet tree of $W$ and, when
  $h\le \ell$, an instance constructed recursively for an auxiliary
  string $\tilde{W}$ defined below.

  \paragraph{Wavelet tree}

  Let $\Sigma=[0\dd \sigma)$ so that the alphabet of $W$ is
  $\Sigma^\ell$.  The wavelet tree of $W$~\cite{wt} is the trie of
  $\Sigma^\ell$ with each internal node $v_X$ (representing a string
  $X\in \Sigma^{\le \ell-1}$) associated to a string $B_X[1\dd
  \rank{W}{X}{|W|}]\in \Sigma^*$ such that $B_X[r] =
  W[\select{W}{X}{r}][|X|+1]$ for $r\in [1\dd \rank{W}{X}{|W|}]$.  The
  strings $B_X$ are augmented with the component of \cref{lm:rs} (for
  rank and selection queries) with parameter~$N := m$.

  \paragraph{Recursive instance}

  We shall define $\tilde{W}$ as a string of length $|W|$ over the
  alphabet $\tilde{\Sigma}^{\tilde{\ell}}$, where
  $\tilde{\ell}:=\lfloor \ell/h\rfloor$, $\tilde{\sigma} = \sigma^h$,
  and $\tilde{\Sigma} := [0\dd \tilde{\sigma})$.  We identify
  $\tilde{\Sigma}$ with $\Sigma^h$, treating each string in $\Sigma^h$
  as the $h$-digit base-$\sigma$ representation of an integer in
  $\tilde{\Sigma}$.  For every string $X\in \Sigma^*$, define
  $\widetilde{X}\in \tilde{\Sigma}^*$ so that
  $|\widetilde{X}|=\lfloor{|X|/h}\rfloor$ and $\widetilde{X}[i]=
  X(h(i-1)\dd hi]$ for $i\in [1\dd |\widetilde{X}|]$.  Moreover, we
  set $\tilde{W}[1\dd |W|]$ so that $\tilde{W}[j]=\widetilde{W[j]}$
  for $j\in [1\dd |W|]$.  Note that the recursive application of
  \cref{pr:wavelet_tree} to $\tilde{W}$ is possible because $2\le
  \tilde{\sigma}^{\tilde{\ell}} \le \sigma^\ell \le m$ and
  $\tilde{\ell} \ge 1$ hold when $h\le \ell$.

  \paragraph{Data structure size}

  It is easy to see that, for a fixed length $d\in [0\dd \ell)$, the
  strings $B_X$ for $X\in \Sigma^d$ are of total length~$m$. Across
  all $X\in \Sigma^{\le \ell-1}$, this sums up to $m\ell$, so the raw
  strings $B_X$ occupy $\bigO(m\ell\log \sigma)=\bigO(m\log m)$ bits.
  The augmentation of $B_X$ using \cref{lm:rs} adds $\bigO((\sigma +
  |B_X|)\log \sigma)$ extra bits (we set $n := \max(\sigma, |B_X|) =
  \Theta(\sigma + |B_X|)$ to ensure $\sigma \leq n$), which sums up to
  $\bigO((\sigma^\ell + m\ell)\log \sigma)=\bigO(m\log m)$ bits, i.e.,
  $\bigO(m)$ machine words.  The recursion depth is
  $\bigO(\log_h(h\ell))$, so the overall size is
  $\bigO(m\log_h(h\ell))$.

  \paragraph{Answering queries}

  To handle any query concerning $X\in \Sigma^{\le \ell}$, we compute
  auxiliary strings $\widetilde{X}$ (as defined above) and $X':=X[1\dd
  |X|-(|X|\bmod h)]$ (obtained by expanding the letters in
  $\widetilde{X}$ into length-$h$~strings).

  Answering a prefix rank query $\rank{W}{X}{j}$, we traverse the path
  from $v_{X'}$ to $v_{X}$, maintaining a value $r$ such that
  $r=\rank{W}{Y}{j}$ holds while the algorithm visits $v_{Y}$.  We
  initialize $r:=j=\rank{W}{\emptystring}{j}$ if $X'=\emptystring$ and
  $r:=\rank{\tilde{W}}{\widetilde{X}}{j}$ (computed recursively)
  otherwise; this is valid due to $\rank{\tilde{W}}{\widetilde{X}}{j} =
  \rank{W}{X'}{j}$.  Upon entering a node $v_{Ya}$ from its parent
  $v_Y$, we set $r:= \rank{B_Y}{a}{r}$ since
  $\rank{W}{Ya}{j}=\rank{B_Y}{a}{\rank{W}{Y}{j}}$; see~\cite{wt}.
  When reaching $v_X$, we return $r=\rank{W}{X}{j}$.  The running time
  is $\bigO(h\log \log m)$ per recursive level, for a total of $\bigO(h
  \log \log m \cdot \log_h(h\ell))$.

  Answering a prefix selection query $\select{W}{X}{r}$, we traverse
  the path from $v_X$ to $v_{X'}$, maintaining a value $q$ such that
  $\select{W}{Y}{q}=\select{W}{X}{r}$ holds while the algorithm visits
  $v_{Y}$.  We initialize $q:=r$ and, upon entering a node $v_{Y}$
  from its child $v_{Ya}$, we set $q:=\select{B_Y}{a}{q}$ since
  $\select{W}{Ya}{r} = \select{W}{Y}{\select{B_Y}{a}{r}}$;
  see~\cite{wt}.  When reaching $v_{X'}$, we return $q =
  \select{W}{\emptystring}{q}$ if $X'=\emptystring$ and
  $\select{\tilde{W}}{\widetilde{X}}{q}$ otherwise; this is valid due to
  $\select{\tilde{W}}{\widetilde{X}}{q}=\select{W}{X'}{q}$. The running
  time is $\bigO(h)$ per recursive level, for a total of $\bigO(h \cdot
  \log_h(h\ell))$.

  \paragraph{Construction algorithm}

  If $\ell \le \sqrt{\log m}$, we use the original wavelet tree
  construction algorithm~\cite{wt}, which takes $\bigO(m\ell)$ time
  and $\bigO(m)$ space.  Building the data structure of \cref{lm:rs}
  for $B_X$ takes $\bigO(\sigma + |B_X|)$ time and $\bigO((\sigma +
  |B_X|)\log\sigma / \log m)$ space, which sums up to
  $\bigO(\sigma^{\ell} + m\ell) = \bigO(m\ell)$ time and $\bigO(m\ell
  \log \sigma / \log m)=\bigO(m)$ space across $X\in \Sigma^{\le
  \ell-1}$ (due to $\ell \log \sigma \le \log m$).  Precomputing the
  table shared by all instances of \cref{lm:rs} takes $\bigO(m)$ time
  and space.  Considering all levels of recursion, we get
  $\bigO(m\ell)$ time (due to $\tilde{\ell}\le \frac12\ell)$ and
  $\bigO(m \log_h(h\ell))$ space.
  
  If $\ell > \sqrt{\log m}$, on the other hand, we apply the
  bit-parallel wavelet tree construction algorithm
  of~\cite{MunroNV16,WaveletSuffixTree}, which has been adapted to
  large alphabets in~\cite[Lemma 6.4]{sss}.  Due to $\ell\log \sigma
  \le \log m$, this procedure takes $\bigO(m\ell\log \sigma
  /\sqrt{\log m}+m\ell\log^2 \sigma / \log m)=\bigO(m\sqrt{\log m})$
  time and $\bigO(m)$ space.  Building the data structure of
  \cref{lm:rs} for $B_X$ takes $\bigO(\sigma + (\sigma + |B_X|)\log
  \sigma / \sqrt{\log m}) = \bigO(\sigma + |B_X|\log \sigma /
  \sqrt{\log m})$ time and $\bigO((\sigma + |B_X|)\log\sigma / \log
  m)$ space, which sums up to $\bigO(m\sqrt{\log m})$ time and
  $\bigO(m)$ space across $X\in \Sigma^{\le \ell-1}$. Precomputing the
  table shared by all instances of \cref{lm:rs} takes $\bigO(m)$ time
  and space.  Considering all levels of recursion, we get a
  multiplicative overhead of $\bigO(\log_h(h\ell))$, for a total of
  $\bigO(m\sqrt{\log m} \log_h(h\ell))$ time and $\bigO(m
  \log_h(h\ell))$ space.
\end{proof}

\subsection{Range Counting and Selection}\label{sec:range-queries}

\prrangequeries*
\begin{proof}

  We use the following definitions.
  Denote $h = \lfloor \log m \rfloor$. For any $k \geq 0$, by $P_k[1
  \dd m_k]$, where $m_k = \rcount{A}{kh}{m}$, we denote the array
  defined by $P_k[i] = \rselect{A}{kh}{i}$. Let $v \geq 0$.  We define
  a bitvector $M_v[1 \dd m_k]$, where $k = \lfloor \tfrac{v}{h}
  \rfloor$ as follows. For any $i \in [1 \dd m_k]$, $M_v[i] = 1$ holds
  if and only if $A[P_k[i]] \geq v$. For any $k \geq 0$, we define the
  concatenation $M'_k = M_{kh} M_{kh+1} \cdots M_{(k+1)h-1}$.  Let
  $k_{\max} = \max\{k \geq 0 : m_k > 0\}$. Since all elements of $A$
  are nonnegative, and $\sum_{i=1}^{m'}A[i] \in \bigO(m \log m)$, we
  obtain $\max_{i \in [1 \dd m']} A[i] \in \bigO(m \log m)$, and
  consequently, $k_{\max} = \lfloor \tfrac{1}{h} \max_{i \in [1 \dd
  m']} A[i] \rfloor \in \bigO(m)$.

  \paragraph{Components}

  The data structure consists of two components:
  \begin{enumerate}
  \item First, for $k \in [0
    \dd k_{\max}]$, we store a plain representation of the sequence
    $P_k[1 \dd m_k]$ using $\bigO(m_k)$ space. Each array is augmented
    with a static predecessor data structure. We
    use~\cite[Proposition~2]{wexp}, and hence achieve linear space and
    $\bigO(\log \log m)$ query time. Each $i \in [1 \dd m']$ occurs in
    $\lceil \tfrac{A[i] + 1}{h} \rceil$ arrays. Thus, $\sum_{k \geq
    0}m_k = \sum_{i=1}^{m'} \lceil \tfrac{A[i] + 1}{h} \rceil \leq 2m' +
    \sum_{i=1}^{m'} \lfloor \tfrac{A[i]}{h} \rfloor \leq 2m' +
    \tfrac{1}{h}\sum_{i=1}^{m'}A[i] \in \bigO(m)$ and hence we can store
    the arrays $P_k$ (including the associated predecessor data
    structures) using $\bigO((k_{\max} + 1) + \sum_{k \geq 0}m_k)
    \subseteq \bigO(m)$ space, so that we can access each array in
    $\bigO(1)$ time.
  \item Second, for every $k \in [0 \dd k_{\max}]$, we
    store the plain representation of bitvector $M'_k$, augmented using
    \cref{th:binrksel}.  By $|M'_k| = h\cdot m_k$, the total length of
    bitvectors $M'_k$ is $\sum_{k \geq 0}|M'_k| = h\sum_{k \geq 0}m_k
    \in \bigO(m \log m)$. All bitvectors $M'_k$ can thus be stored in
    $\bigO((k_{\max} + 1) + \tfrac{1}{\log m} \sum_{k \geq 0}|M'_k|)
    \subseteq \bigO(m)$ words of space, so that we can access each in
    $\bigO(1)$ time. For a bitvector of length $t$, the augmentation of
    \cref{th:binrksel} adds only $\bigO(\log m +t)$ bits of space, and
    hence does not increase the space usage.
  \end{enumerate}

  In total, the data structure takes $\bigO(m)$ space.

  \paragraph{Implementation of queries}

  Using the above two components, we answer range counting/selection
  queries on $A$ as follows. To compute $\rcount{A}{v}{j}$, we first
  let $k = \lfloor \tfrac{v}{h} \rfloor$. If $k > k_{\max}$, then we
  return $\rcount{A}{v}{j} = 0$. Otherwise, we observe that if $j' =
  |\{i \in [1 \dd m_k] : P_k[i] \leq j\}|$, then $\rcount{A}{v}{j} =
  \rank{M_v}{1}{j'}$.  Computing $j'$ using the predecessor data
  structure takes $\bigO(\log \log m)$ time, and then
  $\rank{M_v}{1}{j'}$ is computed using the rank support data
  structure of the bitvector $M'_k$ as $\rank{M'_k}{1}{j' + (v -
  kh)m_k} - \rank{M'_k}{1}{(v - kh)m_k}$ in $\bigO(1)$ time. To
  compute $\rselect{A}{v}{r}$, we observe that letting again $k =
  \lfloor \tfrac{v}{h} \rfloor$, it holds $\rselect{A}{v}{r} =
  P_k[\select{M_v}{1}{r}]$.  The value $\select{M_v}{1}{r}$ is
  computed using the select support data structure of the bitvector
  $M'_k$ as $\select{M'_k}{1}{\rank{M'_k}{1}{(v - kh)m_k}+r}- (v -
  kh)m_k$ in $\bigO(1)$ time.

  \paragraph{Construction algorithm}

  We start by
  initializing $P_0[i] = i$ for $i \in [1 \dd m']$. For $k \in [1 \dd
  k_{\max}]$, the array $P_k$ is computed by iterating over
  $P_{k-1}$ and including only elements $P_{k-1}[i]$ satisfying
  $A[P_{k-1}[i]] \geq kh$. By $\sum_{k \geq 0}m_k \in \bigO(m)$, this
  takes $\bigO(m)$ time in total.  We then augment all arrays $P_k$
  with the predecessor data structures.  Since the arrays are sorted,
  using~\cite[Proposition~2]{wexp}, the construction altogether again
  takes $\bigO(m)$ time.  We then construct bitvectors $M'_k$ in the
  order of increasing $k \in [0 \dd k_{\max}]$. To build $M'_k$ we
  first scan $P_k$ and check if there exists $i \in [1 \dd m_k]$ such
  that $A[P_k[i]] < (k+1)h$.
  \begin{enumerate}
  \item If there is no such $i$, we set $M'_k := {\tt 1}^{hm_k}$ in
    $\bigO(1 + \tfrac{1}{\log m}hm_k) = \bigO(m_k)$ time.
  \item Otherwise, we scan again $P_k[1 \dd m_k]$ and prepare $h$
    lists $L_0, L_1, \ldots, L_{h-1}$ such that $L_{y}$ contains all
    $i \in [1 \dd m_k]$ satisfying $A[P_k[i]] = kh + y$.  Construction
    of all lists takes $\bigO(m_k + h)$ time. The bitvector $M'_k$ is
    then obtained as the concatenation of bitvectors $M_{kh},
    M_{kh+1}, \ldots, M_{(k+1)h-1}$ computed in this order. We first
    initialize $M_{kh} := {\tt 1}^{m_k}$ in $\bigO(1 +
    \tfrac{m_k}{\log m})$ time. The bitvector $M_{kh+y}$ for $y>0$ is
    obtained by first copying the bitvector $M_{kh+y-1}$ in $\bigO(1 +
    \tfrac{m_k}{\log m})$ time, and then setting $M_{kh+y}[i] = 0$ for
    every position $i$ stored in $L_{y-1}$. The total length of all
    lists $L_y$ is bounded by $m_k$. Thus, the construction of $M'_k$
    takes $\bigO(h + m_k + \tfrac{1}{\log m}hm_k) \subseteq \bigO(h +
    m_k)$ time.
  \end{enumerate}
  To bound the total time spent constructing bitvectors $M'_k$, we
  consider two cases:
  \begin{itemize}
  \item $k \leq \tfrac{m}{h}$: The total time spent in the
    construction of bitvectors $M'_k$ for such $k$ is bounded by the
    sum $\sum_{k=0}^{\lfloor m/h \rfloor} \bigO(h + m_k) \subseteq
    \bigO(m + \sum_{k \geq 0}m_k) \subseteq \bigO(m)$.
  \item $k > \tfrac{m}{h}$: Let $k' = \lfloor \tfrac{m}{h} \rfloor +
    1$.  Note that for any $t$, it holds $m_{t+1} \leq m_t$. Moreover,
    whenever Case~2 above happens for some $t$, it holds $m_{t+1} <
    m_t$.  Thus, Case~2 above can happen for $k > \tfrac{m}{h}$ only
    $m_{k'}$ times. Since for every $i \in [1 \dd m_{k'}]$ we have
    $A[P_{k'}[i]] \geq m$, by $\sum_{i \in [1 \dd m']}A[i] \in \bigO(m
    \log m)$ it holds $m_{k'} \in \bigO(\log m)$. The total time spend
    computing $M'_k$ for $k > \tfrac{m}{h}$ is thus bounded by
    $\bigO(m_{k'}(h + m_{k'}) + \sum_{k \geq k'}m_k) \subseteq
    \bigO(\log^2 m + \sum_{k \geq 0}m_k) \subseteq \bigO(m)$.
  \end{itemize}
  The total length of bitvectors $M'_k$ for $k \in [0 \dd k_{\max}]$,
  is $\sum_{k \in [0 \dd k_{\max}]}hm_k \in \bigO(hm)$. Thus,
  augmenting them all using \cref{th:binrksel} takes $\bigO((k_{\max}
  + 1) + \tfrac{1}{\log m}hm) \subseteq \bigO(m)$ time.
\end{proof}

\section{SA and ISA Queries}\label{sec:sa}

Let $\epsilon \in (0, 1)$ be any fixed constant and let $\T \in
\Alphabet^n$, where $2 \leq \sigma < n^{1/7}$. In this section, we
show how, given the packed representation of $\T$, in $\bigO(n \min(1,
\log \sigma / \sqrt{\log n}))$ time and using $\bigO(n / \log_{\sigma}
n)$ working space, to construct a data structure of size $\bigO(n /
\log_{\sigma} n)$ that answers SA and ISA queries in
$\bigO(\log^\epsilon n)$ time. We also derive a general reduction
depending on prefix rank and selection queries.

Let $\tau = \lfloor\mu \log_{\sigma} n\rfloor$, where $\mu$ is any
positive constant smaller than $\frac16$ such that $\tau \geq 1$ (such
$\mu$ exists by $\sigma < n^{1/7}$), be fixed for the duration of this
section. Throughout, we also use $\R$ as a shorthand for $\R(\tau,
\T)$.

\begin{definition}\label{def:position-periodicity}
  Let $j \in [1 \dd n]$. We call position $j$ \emph{periodic}
  if $j \in \R$. Otherwise, $j$ is \emph{nonperiodic}.
\end{definition}

\bfparagraph{Organization}

The structure and the query algorithm to compute $\SA[i]$
(resp.\ $\ISA[j]$), given any $i \in [1 \dd n]$ (resp.\ $j \in [1 \dd
n]$), are different depending on whether $\SA[i]$ (resp.\ $j$) is
periodic (\cref{def:position-periodicity}).  Our description is thus
split as follows. First (\cref{sec:sa-core}), we describe the set of
data structures called collectively the index ``core'' that enables
efficiently checking if $\SA[i] \in \R$ (resp.\ $j \in \R$); the core
also contain some common components utilized by the remaining parts.
In the following two parts
(\cref{sec:sa-nonperiodic,sec:sa-periodic}), we describe structures
handling each of the two cases. All ingredients are then put together
in \cref{sec:sa-final}. Finally, we present our result in the general
form (\cref{sec:sa-summary}).

\subsection{The Index Core}\label{sec:sa-core}

In this section, we present a data structure that, given any $j \in [1
\dd n]$ (resp.\ $i \in [1 \dd n]$), lets us in $\bigO(1)$ time determine
if $j \in \R$ (resp.\ $\SA[i] \in \R$).

The section is organized as follows. First, we introduce the
components of the data structure (\cref{sec:sa-core-ds}). Next, we
describe the query algorithms (\cref{sec:sa-core-nav}). Finally, we
show the construction algorithm (\cref{sec:sa-core-construction}).

\subsubsection{The Data Structure}\label{sec:sa-core-ds}

\bfparagraph{Definitions}

Let $\LTrange$ be a mapping from $X \in \Alphabet^{\leq 3\tau-1} :=
\{\emptystring\} \cup \Alphabet \cup \ldots \cup \Alphabet^{3\tau-1}$
to the pair of integers $(b, e) := (\LB(X, \T), \UB(X, \T))$. Let also
$\LTper$ denote the mapping from $\Alphabet^{3\tau - 1}$ to $\Zp$ such
that every $X$ is mapped to $\per(X)$.

Let $\BVshort[1 \dd n]$ be a bitvector defined such that $\BVshort[i]
= 1$ holds if and only if $i = n$, or $i < n$ and $X_{\SA[i]} \neq
X_{\SA[i+1]}$, where $X_j = \T[j \dd \min(n + 1, j + 3\tau - 1))$ for
every $j \in [1 \dd n]$.

Let $\ARRshort[1 \dd t]$ ($t \,{=}\, \rank{\BVshort}{1}{n}$) be
defined by $\ARRshort[i] \,{=}\, X_{\SA[j]}$, where $j =
\select{\BVshort}{1}{i}$.

\bfparagraph{Components}

The index core, denoted $\SACore(\T)$, consists of five components:
\begin{enumerate}
\item The packed representation of $\T$ using $\bigO(n / \log_{\sigma}
  n)$ space.
\item The lookup table $\LTrange$. When accessing $\LTrange$, strings
  $X \in \Alphabet^{\leq 3\tau - 1}$ are converted to  small integers
  using the mapping $\Int(X)$ defined in the proof of \cref{pr:meta-trie}.
  By $\Int(X) \in [0 \dd \sigma^{6\tau})$, $\LTrange$ needs
  $\bigO(\sigma^{6\tau}) = \bigO(n^{6\mu}) = \bigO(n / \log_{\sigma}
  n)$ space.
\item The lookup table $\LTper$. Similarly as above, we utilize
  the mapping $\Int(X)$. $\LTper$ thus also needs
  $\bigO(\sigma^{6\tau}) = \bigO(n / \log_{\sigma} n)$ space.
\item The bitvector $\BVshort$ augmented using \cref{th:binrksel} for
  rank and selection queries. The augmented bitvector takes $\bigO(n /
  \log n)$ space.
\item The array $\ARRshort$. Every string $X \in \{\ARRshort[i]\}_{i
  \in [1 \dd t]}$ is encoded as $\Int(X)$ using $6\tau\log \sigma =
  \bigO(\log n)$ bits.  This implicitly encodes the length of the
  string and ensures that all strings are encoded using the same
  number of bits.  By $\{\ARRshort[i]\}_{i \in [1 \dd t]} \sub
  \Alphabet^{\leq 3\tau-1}$, we have $t = \bigO(n^{1/2})$, and hence
  the array $\ARRshort$ needs $\bigO(n^{1/2}) = \bigO(n / \log n)$
  space.
\end{enumerate}

In total, $\SACore(\T)$ takes $\bigO(n / \log_{\sigma} n)$ space.

\subsubsection{Navigation Primitives}\label{sec:sa-core-nav}

\begin{proposition}\label{pr:sa-core-isa}
  Given $\SACore(\T)$, for any $j \in [1 \dd n]$ we can in $\bigO(1)$
  time determine if $j \in \R$.
\end{proposition}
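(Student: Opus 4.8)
The plan is to reduce the query directly to a single lookup in the table $\LTper$. Recall from \cref{def:sss} that $\R = \R(\tau, \T) = \{i \in [1 \dd n - 3\tau + 2] : \per(\T[i \dd i + 3\tau - 2]) \leq \tfrac13\tau\}$. Thus, given a query position $j \in [1 \dd n]$, I would first test whether $j \leq n - 3\tau + 2$; if this fails, then $j \notin \R$ by definition and the algorithm returns immediately (this also covers the degenerate case in which $n$ is so small that $\R = \emptystring$). Otherwise the length-$(3\tau-1)$ substring $X_j := \T[j \dd j+3\tau-1)$ is well defined; it coincides with the string denoted $X_j$ in the definition of $\BVshort$, and $j \in \R$ holds if and only if $\per(X_j) \leq \tfrac13\tau$.

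Next I would extract $X_j$ from the packed representation of $\T$ (the first component of $\SACore(\T)$). Since $|X_j| = 3\tau - 1 = \bigO(\log_\sigma n)$, the substring occupies $(3\tau-1)\lceil\log\sigma\rceil = \bigO(\log n) = \bigO(w)$ bits, hence $\bigO(1)$ machine words, and it can be read together with the base-$\sigma$ integer $\Int(X_j) \in [0 \dd \sigma^{6\tau})$ (using the encoding $\Int(\cdot)$ from the proof of \cref{pr:meta-trie}) in $\bigO(1)$ time by standard word-RAM bit manipulation and arithmetic. Finally, I would read $\per(X_j) = \LTper[\Int(X_j)]$ in $\bigO(1)$ time and report $j \in \R$ if and only if this value is at most $\tfrac13\tau$. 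Since $\LTper$ stores $\per(X)$ for every $X \in \Alphabet^{3\tau-1}$, indexed by $\Int(X)$, correctness is immediate, and the total running time is $\bigO(1)$.

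No genuine obstacle arises here; the only points requiring care are the boundary test $j > n - 3\tau + 2$ (handled explicitly above) and the constant-time implementation of the substring extraction and of the map $\Int$, both of which are routine manipulations on an $\bigO(w)$-bit chunk of the packed text. The crucial ingredient is the precomputed table $\LTper$, which is already part of $\SACore(\T)$, so nothing further needs to be stored.
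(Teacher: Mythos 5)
Your proposal is correct and matches the paper's own proof essentially verbatim: the boundary check $j > n - 3\tau + 2$, constant-time extraction of the packed length-$(3\tau-1)$ substring, conversion via $\Int(\cdot)$, and a single lookup in $\LTper$ compared against $\tfrac13\tau$. No further comment is needed.
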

\begin{proof}
  If $j > n - 3\tau + 2$, we return that $j \not\in \R$
  (\cref{def:sss}).  Otherwise, we use the packed encoding of $\T$ to
  extract $X = \T[j \dd j + 3\tau - 1)$ in $\bigO(1)$ time and convert
  it to $x = \Int(X)$. We then use the lookup table $\LTper$, to
  determine $p = \per(X)$, and return that $j \in \R$ if $p \leq
  \tfrac13\tau$.
\end{proof}

\begin{proposition}\label{pr:sa-core-sa}
  Given $\SACore(\T)$, for any $i \in [1 \dd n]$ we can in $\bigO(1)$
  time determine if $\SA[i] \in \R$.
\end{proposition}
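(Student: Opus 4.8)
The plan is to answer the query using only the components of $\SACore(\T)$ that encode information about $\SA$, namely the bitvector $\BVshort$, the array $\ARRshort$, and the lookup table $\LTper$; this is necessary because, unlike in \cref{pr:sa-core-isa}, we are given the rank $i$ rather than the position $\SA[i]$, so we cannot index directly into the packed representation of $\T$. The key observation is that $\BVshort$ partitions $[1\dd n]$ into maximal blocks of ranks $i$ that share the same value $X_{\SA[i]}=\T[\SA[i]\dd\min(n{+}1,\SA[i]{+}3\tau{-}1))$, with $\BVshort$ marking the last rank of each block, and that $\ARRshort[k]$ stores the common string $X_{\SA[i]}$ of the $k$th block (as the integer $\Int(X_{\SA[i]})$, which also encodes $|X_{\SA[i]}|$). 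So, given $i$, I would first compute the index $k:=\rank{\BVshort}{1}{i-1}+1$ of the block containing $i$ (well defined since $\BVshort[n]=1$) and read $X:=\ARRshort[k]$, recovering both $X$ and $|X|$ in $\bigO(1)$ time; by construction $X=X_{\SA[i]}$.

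Next I would branch on $|X|$. If $|X|<3\tau-1$, then $X_{\SA[i]}$ was truncated, which by the definition of $X_j$ happens exactly when $\SA[i]>n-3\tau+2$; since $\R=\R(\tau,\T)\subseteq[1\dd n-3\tau+2]$ by \cref{def:sss}, I immediately return that $\SA[i]\notin\R$. Otherwise $|X|=3\tau-1$, hence $X\in\Alphabet^{3\tau-1}$ and, because in this case $\SA[i]\le n-3\tau+2$, we have the equality $X=\T[\SA[i]\dd\SA[i]+3\tau-1)=\T[\SA[i]\dd\SA[i]+3\tau-2]$. The density condition of \cref{def:sss} then says $\SA[i]\in\R$ if and only if $\per(X)\le\tfrac13\tau$, so I look up $p:=\LTper[X]$ (converting $X$ to the small integer $\Int(X)$ exactly as in \cref{sec:sa-core-ds}) and return that $\SA[i]\in\R$ precisely when $p\le\tfrac13\tau$.

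Correctness then follows from the two displayed equalities together with the definition of $\R(\tau,\T)$, and the running time is $\bigO(1)$ since the algorithm performs one rank query on $\BVshort$ (\cref{th:binrksel}), one access to $\ARRshort$, and one access to $\LTper$. There is no real combinatorial obstacle here; the only point requiring care in the write-up is the indexing bookkeeping — verifying that $\rank{\BVshort}{1}{i-1}+1$ names the block containing rank $i$, and that the truncation case $|X|<3\tau-1$ coincides exactly with $\SA[i]>n-3\tau+2$ — both of which are immediate once the definitions of $\BVshort$, $\ARRshort$, and $X_j$ are unwound. In effect this proposition is the ``$\SA$-side'' analogue of \cref{pr:sa-core-isa}, obtained by substituting the $\BVshort/\ARRshort$ lookup for the direct access to $\T$.
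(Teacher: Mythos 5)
Your proposal is correct and matches the paper's own proof essentially verbatim: one rank query on $\BVshort$ to locate the block, one access to $\ARRshort$ to recover $X$ (and its length), the truncation case $|X|<3\tau-1$ handled via $\SA[i]>n-3\tau+2$, and otherwise a single $\LTper$ lookup comparing $\per(X)$ with $\tfrac13\tau$. The only cosmetic slip is attributing the membership criterion to the ``density condition'' of \cref{def:sss}; it is simply the definition of $\R(\tau,\T)$ itself, and your check is correct as written.
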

\begin{proof}
  Given the position $i \in [1 \dd n]$, we first compute $y =
  \rank{\BVshort}{1}{i-1}$. The string $X = \ARRshort[y + 1]$ is then a
  prefix of $\T[\SA[i] \dd n]$. If $|X| < 3\tau - 1$, we must have
  $\SA[i] > n - 3\tau + 2$, and thus we return that $\SA[i] \in [1 \dd
  n] \setminus \R$ (see \cref{def:sss}). Otherwise (i.e., $|X| = 3\tau
  - 1$), using $\LTper$ we determine $p = \per(X)$ and return that
  $\SA[i] \in \R$ if $p \leq \frac13\tau$.
\end{proof}

\subsubsection{Construction Algorithm}\label{sec:sa-core-construction}

\begin{proposition}\label{pr:sa-core-construction}
  Given the packed representation of $\T \in \Alphabet^n$, we can
  construct $\SACore(\T)$ in $\bigO(n / \log_{\sigma} n)$ time.
\end{proposition}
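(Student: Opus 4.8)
The plan is to build the five components of $\SACore(\T)$ (\cref{sec:sa-core-ds}) one at a time, charging each either to one of the two lookup tables $\LTper,\LTrange$ — both indexed by $\Int(X)\in[0\dd\sigma^{6\tau})$, and $\sigma^{6\tau}=\bigO(n^{6\mu})=o(n/\log_{\sigma}n)$, so even a full scan of either table is affordable — or to the histogram of length-$(3\tau{-}1)$ substrings of $\T$, or to \cref{th:binrksel}. Component~1 is the packed text, which is given. For $\LTper$ I would enumerate all integers $x\in[0\dd\sigma^{3\tau-1})$, decode each as a string $X\in\Alphabet^{3\tau-1}$, compute its Knuth--Morris--Pratt failure array and hence $\per(X)$ in $\bigO(\tau)$ time, and store the value at index $\Int(X)$; this costs $\bigO(\sigma^{6\tau}+\sigma^{3\tau-1}\tau)=o(n/\log_{\sigma}n)$.

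The remaining three components all reduce to one quantity: for each $Z\in\Alphabet^{\le 3\tau-1}$, the count $c[Z]:=|\{i\in[1\dd n]:X_i=Z\}|$, where $X_i=\T[i\dd\min(n{+}1,i{+}3\tau{-}1))$ as in the definition of $\BVshort$. Only the $\bigO(\tau)$ positions $i>n-3\tau+2$ have $|X_i|<3\tau-1$, and each of those is a distinct suffix of $\T$ that can be handled explicitly, so the real task is the histogram $(c[Z])_Z$ of the length-$(3\tau{-}1)$ substrings of $\T$; given it, the rest is bookkeeping. The array $\ARRshort$ lists the $\Int(Z)$ for the length-$(3\tau{-}1)$ strings $Z$ with $c[Z]>0$ (in increasing $\Int$-order, which for equal-length strings is lexicographic), interleaved with the $\bigO(\tau)$ short suffixes placed at their correct sorted positions by direct comparison; its length is $\bigO(\sigma^{3\tau-1}+\tau)=o(\sqrt n)$. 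The bitvector $\BVshort$ is a zero-initialized length-$n$ bitvector ($\bigO(n/\log n)$ time) with a $1$ set at each partial sum $\sum_{k'\le k}s_{k'}$, where $s_k$ is the multiplicity of the $k$-th entry of $\ARRshort$ (that is $\bigO(|\ARRshort|)=o(\sqrt n)$ bit settings), then augmented for $\bigO(1)$-time rank and selection via \cref{th:binrksel}. For $\LTrange$ I would precompute the array of partial sums of the $s_k$ and use that $Z\prec X$ iff $Z[1\dd|X|]\prec X$ when $|Z|=3\tau-1\ge|X|$, and that $X$ occurs at position $i$ iff $X$ is a prefix of $X_i$; hence $\LB(X,\T)$ and $\UB(X,\T)$ are each a partial sum over a contiguous range of $\ARRshort$ whose endpoints are determined by $\Int(X)$ and by $\Int$ of the smallest/largest length-$(3\tau{-}1)$ extension of $X$, and filling all $\sigma^{6\tau}$ slots this way costs $\bigO(\sigma^{6\tau}+\sigma^{3\tau-1}\log n)=o(n/\log_{\sigma}n)$. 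Every step above is comfortably within the $\bigO(n/\log_{\sigma}n)$ budget.

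The only genuinely delicate step is computing the histogram $(c[Z])_{Z\in\Alphabet^{3\tau-1}}$ in $\bigO(n/\log_{\sigma}n)$ time, since the naive scan that increments one counter per window takes $\Theta(n)$. I would split the positions into periodic and nonperiodic ones (\cref{def:position-periodicity}). For the nonperiodic positions, build a $\tau$-synchronizing set $\S$ of size $\bigO(n/\tau)$ via \cref{th:sss-packed-construction}: every nonperiodic $i$ has a unique successor $s=\Successor_{\S}(i)$ with $s-i<\tau$, and conversely all positions in a window of $\le 4\tau$ characters around $s$ are nonperiodic; so for each $s\in\S$ I read this context together with the distance to the previous element of $\S$ capped at $\tau$, forming a signature over a universe of size $\bigO(\sigma^{4\tau}\tau)=\bigO(n^{4\mu}\tau)=o(n/\tau)$. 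After radix-sorting the $\bigO(n/\tau)$ signatures there are only $o(n/\tau)$ distinct ones, and for each distinct signature I spend $\bigO(\tau)$ time to work out the $\le\tau$ windows $X_i$ it induces with their multiplicities, adding to $c[\cdot]$ each multiplicity times the number of synchronizing positions carrying that signature — total $o(n/\tau)\cdot\bigO(\tau)=o(n/\log_{\sigma}n)$ on top of the $\bigO(n/\tau)$ scan. For the periodic positions, which by the structure of $\R$ form $\bigO(n/\tau)$ maximal $\tau$-runs (obtainable in $\bigO(n/\log_{\sigma}n)$ time from $\S$ and constant-time $\LCE$ queries, \cite[Theorem~5.4]{sss}), I group the runs by their period root $H$ (only $\bigO(\sigma^{\tau/3})=o(n/\tau)$ possible roots) and, for each group, derive the multiplicities of the $\le|H|$ windows of that root from the run lengths by an elementary modular counting argument in time linear in the group size plus $|H|$. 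I expect this histogram computation — in particular getting both counting arguments correct while keeping everything sublinear (and extracting the $\tau$-runs fast) — to be the main obstacle; the rest is routine.
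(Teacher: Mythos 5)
Your proposal is correct in outline and its ``routine'' part (deriving $\LTrange$, $\BVshort$, $\ARRshort$ from the histogram of length-$(3\tau-1)$ substrings, and building $\LTper$ by brute force over $\Alphabet^{3\tau-1}$) matches the paper's bookkeeping almost step for step. Where you genuinely diverge is on the one non-trivial step, the histogram itself. The paper does not touch periodicity at all here: it invokes the packed-counting algorithm of~\cite[Section~6.1.2]{sss}, which computes $f_X=|\Occ(X,\T)|$ for \emph{all} $X\in\Alphabet^{3\tau-1}$ in $\bigO(n/\log_{\sigma}n)$ time whenever $\ell\sigma^{2\ell-1}=\bigO(n/\log_\sigma n)$ (here $\ell=3\tau-1$ and $\mu<\tfrac16$); intuitively, every length-$\ell$ window lies inside one of the $\bigO(n/\tau)$ aligned blocks of length $2\ell-1$ of the packed text, so one histograms block values in $\bigO(n/\tau)$ time and then expands each of the $\sigma^{2\ell-1}$ possible block values once, in $\bigO(\ell)$ time. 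Your synchronizing-set-plus-runs decomposition can be made to work and is an interesting alternative, but it re-imports machinery (\cref{th:sss-packed-construction}, $\tau$-run extraction, LCE structures) that the index core deliberately avoids, and it has two soft spots you should repair. First, the claim that ``all positions in a window of $\le 4\tau$ characters around $s$ are nonperiodic'' is false: only positions $i\in(s-\tau,s]$ with $i\le n-3\tau+2$ are guaranteed to lie outside $\R$ (positions to the right of $s$ may well be periodic); fortunately your counting only assigns to $s$ the positions $i\le s$ with $\Successor_{\S}(i)=s$, for which the true direction of the density condition suffices, so this is a slip rather than a flaw, but as stated it is wrong. Second, in the periodic case you must obtain each run's root $H$ and starting phase ($\Lhead$) in $\bigO(1)$ time --- computing the minimal rotation naively costs $\bigO(\tau)$ per run, and over $\bigO(n/\tau)$ runs that is $\Theta(n)$; you need an additional precomputed table in the style of $\LTroot$ (buildable in $\bigO(\sigma^{3\tau-1}\tau^2+\sigma^{6\tau})$ time, which is affordable), and similarly the signatures near the end of the text are truncated and need explicit special-casing. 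With those repairs your route gives the same bound; the paper's route buys simplicity by reducing the whole proposition to a single black-box counting primitive on the packed text.
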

\begin{proof}

  To compute $\LTrange$, we first compute for every $X \in
  \Alphabet^{\ell}$ (where $\ell = 3\tau - 1$), its frequency $f_X :=
  |\Occ(X, \T)|$. Using the simple generalization of the algorithm
  described in~\cite[Section 6.1.2]{sss}, this takes $\bigO(n /
  \log_{\sigma} n)$ time (note that the algorithm requires $\ell
  \sigma^{2\ell-1} = O(n / \log_{\sigma} n)$, which is satisfied here,
  since $2\ell-1 < 6\mu\log_{\sigma} n$ and $\mu < \frac16$). From the
  frequencies of $X \in \Alphabet^{3\tau-1}$ we then compute the
  values of $f_X$ for all $X \in \Alphabet^{<3\tau-1}$ by observing
  that unless $X$ is a nonempty suffix of $\T$, it holds $f_{X} =
  \sum_{c \in \Alphabet}f_{Xc}$, i.e., the frequency of each string
  shorter than $3\tau-1$ is obtained in $\bigO(\sigma)$ time. If $X$
  is a nonempty suffix of $\T$ (which we can check in $\bigO(1)$
  time), we additionally add one to the count.  Since each string
  contributes exactly once to the frequency of another string, over
  all $X \in \Alphabet^{<3\tau-1}$, this takes
  $\bigO(\sigma^{3\tau-1}) = \bigO(n/\log_{\sigma} n)$ time. Once
  $f_X$ is computed for all $X \in \Alphabet^{\leq 3\tau-1}$, we
  compute $\LTrange$ as follows. Denote $\Sigma = \Alphabet$.  Assume
  that $\LTrange[\Int(X)] = (b, e)$ holds for some $X \in
  \Alphabet^{<3\tau-1}$. Then, for any $c \in \Sigma$, it holds
  $\LTrange[\Int(Xc)] = (e - x - f_{Xc}, e - x)$, where $x =
  \sum_{c'\in\Sigma,c'> c}f_{Xc'}$, e.g., for $\sigma = 2$,
  $\LTrange[\Int(X0)] = (e - f_{X1} - f_{X0}, e - f_{X1})$.  We thus
  compute $\LTrange[\Int(X)]$ by initializing
  $\LTrange[\Int(\emptystring)] = (0, n)$, and then enumerating all $X
  \in \Alphabet^{\leq 3\tau-1}$ in the order of non-decreasing length
  (and, in case of ties, in the reverse lexicographical order). During
  the enumeration of strings of the form $Xc$, where $c \in \Sigma$,
  we maintain the sum $x = \sum_{c'\in\Sigma,c'>c}f_{Xc'}$.  Then,
  using the above formula, the value of $\LTrange[\Int(Xc)]$ can be
  obtained in $\bigO(1)$ time. Over all $X$, the computation of
  $\LTrange[\Int(X)]$ thus takes $\bigO(\sigma^{3\tau-1}) = \bigO(n /
  \log_{\sigma} n)$ time.

  To construct $\LTper$, we enumerate all $X \in \Alphabet^{3\tau -
  1}$, and for each $X$ in $\bigO(\tau^2)$ time we compute $\per(X)$
  by trying all $\ell \in [1 \dd 3\tau - 1)$.  Initializing $\LTper$
  takes $\bigO(\sigma^{6\tau}) = \bigO(n / \log_{\sigma} n)$. Over all
  $X \in \Alphabet^{3\tau - 1}$, we spend $\bigO(\sigma^{3\tau - 1}
  \tau^2) = \bigO(n^{1/2} \log^2 n) = \bigO(n / \log_{\sigma} n)$
  time.

  We finish with the construction of $\BVshort$ and $\ARRshort$.
  First, in $\bigO(n / \log n)$ time we initialize $\BVshort$ to
  zeros. Next, we initialize temporary counters $k$ and $f$ to zero,
  and simulate a preorder traversal of the trie of $\Alphabet^{3\tau -
  1}$. For each visited node with label $X$, we consider two cases:
  \begin{itemize}
  \item If $|X| < 3\tau - 1$, we check if $X$ is a suffix of $\T$. If
    so, increment $k$ and $f$ by one, and~report~$X$.
  \item Otherwise (i.e., if $|X| = 3\tau - 1$), if $f_X > 0$, we
    increment $k$ by one, $f$ by $f_X$, and report $X$.
  \end{itemize}
  Each time some string $X$ is reported, we set $\ARRshort[k] = X$ and
  $\BVshort[f] = 1$. The correctness of this procedure follows by
  noting that labels of nodes visited during the preorder traversal
  are lexicographically sorted.  The traversal takes
  $\bigO(\sigma^{3\tau}) = \bigO(n^{3\mu}) = \bigO(n / \log n)$
  time. Finally, using \cref{th:binrksel}, in $\bigO(n/\log n)$ time
  we augment $\BVshort$ with $\bigO(1)$-time rank and select queries.
\end{proof}

\subsection{The Nonperiodic Positions}\label{sec:sa-nonperiodic}

In this section, we describe a data structure that, given any $j \in [1
\dd n]$ (resp. $i \in [1 \dd n]$) satisfying $j \in [1 \dd n] \sm
\R$ (resp.\ $\SA[i] \in [1 \dd n] \sm \R$) computes
$\ISA[j]$ (resp.\ $\SA[i]$) in $\bigO(\log^{\epsilon} n)$ time.

The section is organized as follows. First, we introduce the
components of the data structure (\cref{sec:sa-nonperiodic-ds}). Next,
we describe the query algorithms
(\cref{sec:sa-nonperiodic-isa,sec:sa-nonperiodic-sa}). Finally, we
show the construction algorithm
(\cref{sec:sa-nonperiodic-construction}).

\subsubsection{The Data Structure}\label{sec:sa-nonperiodic-ds}

\bfparagraph{Definitions}

We fix some $\tau$-synchronizing set $\S$ of $\T$ obtained using
\cref{th:sss-packed-construction} (recall, that $\tau = \lfloor
\mu\log_{\sigma} n \rfloor$ is fixed for \cref{sec:sa}). We denote $n'
= |\S| = \bigO(n/\tau)$. Let $(\stext_t)_{t \in [1 \dd n']}$ be the
sequence containing the elements of $\S$ in sorted order, i.e., if
$i < j$ then $\stext_i < \stext_j$.  Let also
$(\slex_t)_{t \in [1 \dd n']}$ be the sequence containing elements
of $\S$ sorted according to the lexicographical order of the
corresponding suffixes, i.e., if $i < j$ then
$\T[\slex_i \dd n] \prec \T[\slex_j \dd n]$. Let $W[1 \dd n']$ be
a sequence of length-$3\tau$ strings such that $W[i] = \revstr{X_i}$,
where $X_i = \T^{\infty}[\slex_i - \tau \dd \slex_i + 2\tau)$.

For any $i \in [1 \dd n {-} 2\tau {+} 1]$, we define
$\Successor_{\S}(i) = \min\{j \in \S \cup \{n {-} 2\tau {+} 2\} : j
\geq i\}$ and denote $\D := \{\T[i \dd \Successor_{\S}(i) + 2\tau) : i
\in [1 \dd n {-} 3\tau {+} 2] \sm \R\}$.  Let $\LTD$ be a mapping from
$\Alphabet^{3\tau - 1}$ to $\Alphabet^{\leq 3\tau - 1}$ such that for
any $Y \in \Alphabet^{3\tau - 1}$, if there exists $X \in \D$ that is
a prefix of $Y$ (by the consistency of $\S$, there can be at most one
such $X$), then $\LTD$ maps $Y$ to $X$. Otherwise (i.e., there is no
such $X$), $\LTD$ maps $Y$ to $\emptystring$. Let $\LTrev$ be a
mapping that for every string $X \in \Alphabet^{\leq 3\tau - 1}$,
returns the packed representation of $\revstr{X}$.

Let $\BVS[1 \dd n]$ be a bitvector defined so that $\BVS[i] = 1$ holds
if and only if $i \in \S$.

Let $\ARRsmap[1 \dd n']$ be an array storing a permutation of $[1 \dd
n']$ such that $\ARRsmap[i] = j$ holds if and only if $\stext_j =
\slex_i$. Let $\ARRsinvmap[1 \dd n']$ be an array storing a permutation
of $[1 \dd n']$ such that $\ARRsinvmap[j] = i$ holds if and only if
$\stext_j = \slex_i$.

\bfparagraph{Components}

The data structure to handle nonperiodic positions consists of seven
components:
\begin{enumerate}
\item The index core $\SACore(\T)$ (\cref{sec:sa-core-ds}). It takes
  $\bigO(n / \log_{\sigma} n)$ space.
\item The lookup table $\LTrev$. When accessing $\LTrev$, strings
  $X \in \Alphabet^{\leq 3\tau - 1}$ are converted to $\Int(X)$. Thus,
  the mapping $\LTrev$ needs $\bigO(\sigma^{6\tau}) = \bigO(n^{6\mu})
  = \bigO(n / \log_{\sigma} n)$ space.
\item The lookup table $\LTD$. As above, $\LTD$ needs
  $\bigO(\sigma^{6\tau}) = \bigO(n / \log_{\sigma} n)$ space.
\item The bitvector $\BVS$ augmented using \cref{th:binrksel}.
  It needs $\bigO(n/\log n)$ space.
\item The array $\ARRsmap[1 \dd n']$ in plain form, using $n' =
  \bigO(n / \log_{\sigma} n)$ words of space.
\item The array $\ARRsinvmap[1 \dd n']$ in plain form, using $n' =
  \bigO(n / \log_{\sigma} n)$ words of space.
\item The data structure of \cref{th:wavelet-tree} for the sequence
  $W[1 \dd n']$. By $n' = \bigO(n / \log_{\sigma} n)$ and
  $\sigma^{3\tau} = \bigO(\sqrt{n}) = o(n / \log n)$, it needs
  $\bigO(n / \log_{\sigma} n)$ space.
\end{enumerate}

In total, the data structure takes $\bigO(n / \log_{\sigma} n)$ space.

\subsubsection{Implementation of
  \texorpdfstring{$\ISA$}{ISA} Queries}\label{sec:sa-nonperiodic-isa}

For any $j \in [1 \dd n - 3\tau + 2] \sm \R$, letting $X \in \D$ be a
prefix of $\T[j \dd n]$ (by~\cite[Lemma~6.1]{sss}, $\D$ is
prefix-free, and hence there is exactly one such $X$), we define
\[
  \Pos(j) = \{j' \in [1 \dd n] : \LCE(j, j') \geq |X|\text{ and }\T[j'
  \dd n] \preceq \T[j \dd n]\},
\]
and denote $\delta(j) := |\Pos(j)|$.

\begin{lemma}\label{lm:sa-nonperiodic-isa}
  Let $j \in [1 \dd n - 3\tau + 2] \sm \R$ and $X \in \D$ be a prefix
  of $\T[j \dd n]$. Denote $\deltatext = |X| - 2\tau$ and $b_X =
  \LB(X, \T)$. Then:
  \begin{enumerate}
  \item\label{lm:sa-nonperiodic-isa-it-1} It holds $\ISA[j] =
    b_X + \delta(j)$.
  \item\label{lm:sa-nonperiodic-isa-it-2} If $y \in [1 \dd n']$ is
    such that $\slex_y = j + \deltatext$, then $\delta(j) =
    \rank{W}{\revstr{X}}{y}$.
  \end{enumerate}
\end{lemma}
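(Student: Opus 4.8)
The plan is to prove the two parts separately, reducing both to the combinatorial structure of $\S$ and to the uniqueness of the sentinel $\T[n]=\texttt{\$}$.

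\emph{Part~1.} I would start from the identity $\ISA[j]=|\{j'\in[1\dd n]:\T[j'\dd n]\preceq\T[j\dd n]\}|$, which holds since the suffixes of $\T$ are pairwise distinct, and split the right-hand set according to whether $\T[j'\dd n]\prec X$ or $X\preceq\T[j'\dd n]$ (both classes lie in the set, as $X\preceq\T[j\dd n]$ because $X$ is a prefix of $\T[j\dd n]$). The first class has exactly $b_X=\LB(X,\T)$ elements by definition. The heart of this part is that the second class, $\{j':X\preceq\T[j'\dd n]\preceq\T[j\dd n]\}$, equals $\Pos(j)$, i.e.\ that for $j'$ with $\T[j'\dd n]\preceq\T[j\dd n]$ one has $X\preceq\T[j'\dd n]\iff\LCE(j,j')\ge|X|$. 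The ``$\Leftarrow$'' direction is immediate: $\LCE(j,j')\ge|X|$ forces $\T[j'\dd j'+|X|)=\T[j\dd j+|X|)=X$, so $X$ is a prefix of $\T[j'\dd n]$ (here $|X|\le 3\tau-1$ and $X\in\D$ a prefix of $\T[j\dd n]$ guarantee $X$ is a genuine substring of $\T$). For ``$\Rightarrow$'' I would argue by contradiction: if the first mismatch between $\T[j\dd n]$ and $\T[j'\dd n]$ (or the point where one of them ends) occurred at or before position $|X|$, then a short case analysis on the characters of $X$, $\T[j\dd n]$, $\T[j'\dd n]$ at that position yields either $X\succ\T[j'\dd n]$ or $\T[j\dd n]\prec\T[j'\dd n]$, contradicting the hypotheses. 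Since the two classes are disjoint, $\ISA[j]=b_X+\delta(j)$.

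\emph{Part~2.} Here I would build a bijection $\phi$ from $\Pos(j)$ onto $\{i\in[1\dd y]:\revstr{X}\text{ is a prefix of }W[i]\}$; this set has cardinality $\rank{W}{\revstr{X}}{y}$ by definition, giving $\delta(j)=\rank{W}{\revstr{X}}{y}$. Define $\phi(j')$ to be the unique index $i$ with $\slex_i=j'+\deltatext$. The structural fact making $\phi$ well defined is $\Successor_{\S}(j')=j'+\deltatext$ for every $j'\in\Pos(j)$: from $X=\T[j\dd\Successor_{\S}(j)+2\tau)$ we get $\deltatext=\Successor_{\S}(j)-j$, and since $\LCE(j,j')\ge|X|$ makes $X$ a prefix of $\T[j'\dd n]$ as well, applying the consistency condition (\cref{def:sss}) to the length-$2\tau$ windows nested inside $X$ transfers ``$j+m\notin\S$ for $0\le m<\deltatext$ and $j+\deltatext\in\S$'' from the occurrence of $X$ at $j$ to its occurrence at $j'$. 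Unfolding $W[i]=\revstr{X_i}$ with $X_i=\T^{\infty}[\slex_i-\tau\dd\slex_i+2\tau)$, the condition ``$\revstr{X}$ is a prefix of $W[i]$'' is equivalent to ``$X$ is the length-$|X|$ suffix of $X_i$'', i.e.\ to $X=\T^{\infty}[\slex_i-\deltatext\dd\slex_i+2\tau)$, using $\deltatext=|X|-2\tau$ and $|X|\le 3\tau-1<3\tau=|X_i|$; and $\phi(j')\le y$ is equivalent to $\T[j'+\deltatext\dd n]\preceq\T[j+\deltatext\dd n]$, which follows from $\T[j'\dd n]\preceq\T[j\dd n]$ after cancelling the common length-$\deltatext$ prefix (present because $\LCE(j,j')\ge|X|>\deltatext$). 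Injectivity of $\phi$ is immediate, and the facts above show $\phi(j')$ lies in the target set for every $j'\in\Pos(j)$.

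\emph{The main obstacle} is surjectivity of $\phi$. Given $i\in[1\dd y]$ with $\revstr{X}$ a prefix of $W[i]$, the natural preimage is $j':=\slex_i-\deltatext$, but a priori $j'$ could be $\le 0$, in which case the equality $X=\T^{\infty}[\slex_i-\deltatext\dd\slex_i+2\tau)$ involves positions where $\T^{\infty}$ has wrapped around the end of $\T$. I would exclude this using the sentinel: if $\slex_i\le\deltatext$, then because $\T^{\infty}[0]=\T[n]=\texttt{\$}$ and $\deltatext=|X|-2\tau<|X|$, the string $X$ would contain $\texttt{\$}$ at its internal position $\deltatext-\slex_i+1\in[1\dd|X|-1]$; but every occurrence of $\texttt{\$}$ inside a substring of $\T$ must be at that substring's last position (by uniqueness of $\texttt{\$}$ and the fact that such a substring cannot extend past $\T[n]$), and $X\in\D$ is such a substring — a contradiction. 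Hence $j'\ge 1$, and since $\slex_i\in\S\subseteq[1\dd n-2\tau+1]$ we also get $j'+|X|-1=\slex_i+2\tau-1\le n$, so $X=\T[j'\dd j'+|X|)$ is a genuine prefix of $\T[j'\dd n]$; this yields $\LCE(j,j')\ge|X|$, and from $\phi(j')=i\le y$ we obtain $\T[j'+\deltatext\dd n]\preceq\T[j+\deltatext\dd n]$, hence $\T[j'\dd n]\preceq\T[j\dd n]$. Therefore $j'\in\Pos(j)$ with $\phi(j')=i$, completing the bijection and hence the proof that $\delta(j)=\rank{W}{\revstr{X}}{y}$.
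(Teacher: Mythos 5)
Your proof is correct and follows essentially the same route as the paper: part 1 via the observation that $\LCE(j,j')\ge|X|$ characterizes $\Occ(X,\T)$ (so the suffixes prefixed by $X$ form the block starting after $b_X$), and part 2 via the consistency-based bijection $j'\mapsto j'+\deltatext$ between $\Pos(j)$ and the positions of $\S$ counted by $\rank{W}{\revstr{X}}{y}$. You merely spell out details the paper leaves implicit (notably ruling out the $\T^{\infty}$ wrap-around in the surjectivity step using the sentinel), which is fine.
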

\begin{proof}
  1. Observe that $j' \in \Occ(X, \T)$ holds if and only if $\LCE(j,
  j') \geq |X|$.  Thus, by definition of $\ISA$, we have $\ISA[j] =
  \LB(X, \T) + |\{j' \in \Occ(X, \T) : \T[j' \dd n] \preceq \T[j \dd
  n]\}| = b_X + |\{j' \in [1 \dd n] : \LCE(j, j') \geq |X|\text{ and
  }\T[j' \dd n] \preceq \T[j \dd n]\}| = b_X + \delta(j)$.

  2. Denote $s = j + \deltatext$.  By definition of $\D$, we have $s
  \in \S$. By the consistency of $\S$, there exists a bijection (given
  by the mapping $j' \mapsto j' + \deltatext$) between positions $j'
  \in [1 \dd n] \sm \R$ satisfying $\T[j' \dd \Successor_{\S}(j') +
  2\tau) = X$ and $\T[j' \dd n] \preceq \T[j \dd n]$, and positions
  $s' \in \S$ such that $\T^{\infty}[s' - \deltatext \dd s' + 2\tau) =
  X$ and $\T[s' \dd n] \preceq \T[s \dd n]$.  Thus, letting $y \in [1
  \dd n']$ be such that $\slex_y = s$, we obtain that $\delta(j) =
  |\{i \in [1 \dd y] : \T^{\infty}[\slex_i - \deltatext \dd \slex_i +
  2\tau) = X\}|$. Since we defined $W[i] = \revstr{X_i}$, where
  $X_i = \T^{\infty}[\slex_i - \tau \dd \slex_i + 2\tau)$, we
  conclude that $\delta(j) = \rank{W}{\revstr{X}}{y}$.
\end{proof}

\begin{proposition}\label{pr:sa-nonperiodic-isa}
  Let $j \in [1 \dd n]$ be such that $j \in [1 \dd n] \setminus \R$.
  Given the data structure from \cref{sec:sa-nonperiodic-ds} and the
  position $j$, we can compute $\ISA[j]$ in $\bigO(\log^\epsilon n)$
  time.
\end{proposition}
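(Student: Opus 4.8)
The plan is to apply \cref{lm:sa-nonperiodic-isa}, which — for $j \le n - 3\tau + 2$ — reduces computing $\ISA[j]$ to four ingredients: (a) identifying the distinguishing prefix $X \in \D$ of $\T[j \dd n]$; (b) reading off $b_X := \LB(X,\T)$; (c) locating the index $y \in [1\dd n']$ with $\slex_y = j + \deltatext$, where $\deltatext := |X| - 2\tau$; and (d) evaluating the single prefix rank query $\delta(j) = \rank{W}{\revstr{X}}{y}$, after which $\ISA[j] = b_X + \delta(j)$. Ingredients (a)--(c) will be realized by $\bigO(1)$-time table lookups and rank queries on the structures of \cref{sec:sa-nonperiodic-ds}, and (d) costs $\bigO(\log^\epsilon n)$ by \cref{th:wavelet-tree}; summing gives the claimed time bound.

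First I would dispose of the boundary case $j > n - 3\tau + 2$ separately, since there $\T[j \dd j + 3\tau - 1)$ is undefined. In this case $X := \T[j \dd n]$ has length at most $3\tau - 2$, so $\LTrange[\Int(X)]$ is available; moreover, directly from the definition of $\ISA$, $\ISA[j] = |\{j' \in [1\dd n] : \T[j'\dd n] \prec \T[j\dd n]\}| + 1 = \LB(X,\T) + 1$, which we read off in $\bigO(1)$ time. For the main case $j \le n - 3\tau + 2$, I would extract $Y := \T[j \dd j + 3\tau - 1)$ from the packed text and set $X := \LTD[\Int(Y)]$. Since $j \notin \R$, the density condition of $\S$ yields a synchronizing position in $[j \dd j + \tau)$, so the distinguishing prefix of $\T[j\dd n]$ lies in $\D$, has length $|X| \in [2\tau \dd 3\tau - 1]$, and is therefore a prefix of $Y$; as $\D$ is prefix-free (\cite[Lemma~6.1]{sss}), it is the unique element of $\D$ prefixing $Y$, so $\LTD$ returns precisely it. From the encoding $\Int(X)$ we recover $|X|$ (hence $\deltatext$), and we obtain $b_X$ as the first coordinate of $\LTrange[\Int(X)]$.

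It then remains to locate $y$ and finish. Because $s := j + \deltatext$ equals $\Successor_{\S}(j)$ and lies in $\S$, its rank among the text-sorted elements of $\S$ is $t := \rank{\BVS}{1}{s}$ (one rank query), and $y := \ARRsinvmap[t]$ satisfies $\slex_y = \stext_t = s$, which is exactly the index required by \cref{lm:sa-nonperiodic-isa}. Finally I would retrieve the packed $\revstr{X}$ via $\LTrev[\Int(X)]$ (its length $|X| \le 3\tau$ matches the common length of the strings of $W$), compute $\delta(j) = \rank{W}{\revstr{X}}{y}$ with the \cref{th:wavelet-tree} structure for $W[1\dd n']$ in $\bigO(\log^\epsilon n)$ time, and output $\ISA[j] = b_X + \delta(j)$; the total time is $\bigO(\log^\epsilon n)$. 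The only steps that need genuine care are the boundary case above and the claim that $\LTD[\Int(Y)]$ really returns the distinguishing prefix — which hinges only on the consistency and density of $\S$ together with the bound $|X| \le 3\tau - 1$, all already in hand — so I do not expect any real obstacle.
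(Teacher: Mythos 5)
Your proposal is correct and follows essentially the same route as the paper's proof: the identical boundary case via $\LTrange$, and the main case via \cref{lm:sa-nonperiodic-isa}, locating the index $y$ through $\BVS$ and $\ARRsinvmap$ and finishing with a single prefix rank query on $W$ via \cref{th:wavelet-tree}. The only cosmetic difference is that the paper obtains $s=\Successor_{\S}(j)$ by rank+select on $\BVS$ and reads $X=\T[j \dd s+2\tau)$ directly off the text, whereas you obtain $X$ via $\LTD$ applied to $\T[j \dd j+3\tau-1)$ and then derive $s=j+\deltatext$; both use only stored components and take $\bigO(1)$ time, so the overall bound is unaffected.
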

\begin{proof}
  Given $j \in [1 \dd n] \sm \R$, we compute $\ISA[j]$ as follows.  If
  $j > n - 3\tau + 2$, then letting $X = \T[j \dd n]$, in $\bigO(1)$
  time we compute $(b_X, e_X) = (\LB(X, \T), \UB(X, \T))$ using the
  lookup table $\LTrange$. By definition of the lexicographical order,
  we then have $\SA[b + 1] = j$, and hence we return $\ISA[j] = b +
  1$. Let us thus assume $j \leq n - 3\tau - 2$.  By $j \not\in \R$
  and the density condition of $\S$ (see \cref{def:sss}), this implies
  that $\S \cap [j \dd j + \tau) \neq \emptyset$. In $\bigO(1)$ time
  we compute $x = \rank{\BVS}{1}{j-1}$. Then, in $\bigO(1)$ we compute $s
  = \select{\BVS}{1}{x+1} = \Successor_{\S}(j) \in \S$. We then have $X =
  \T[j \dd s + 2\tau) \in \D$, and in particular, $|X| = s + 2\tau -
  j$.  In $\bigO(1)$ time we lookup $(b_X, e_X) = \LTrange[\Int(X)]$,
  i.e., $b_X = \LB(X, \T)$. Letting $y = \ARRsinvmap[x + 1]$, we then
  have $\slex_y = s = j + |X| - 2\tau$. By
  \cref{lm:sa-nonperiodic-isa}, it thus remains to determine
  $\rank{W}{\revstr{X}}{y}$.  In $\bigO(1)$ time we compute
  $\revstr{X}$ using the lookup table $\LTrev$. In
  $\bigO(\log^{\epsilon} n)$ time, we then compute $\delta(j) =
  \rank{W}{\revstr{X}}{y}$ using \cref{th:wavelet-tree}, and finally
  return $\ISA[j] = b_X + \delta(j)$.
\end{proof}

\subsubsection{Implementation of
  \texorpdfstring{$\SA$}{SA} Queries}\label{sec:sa-nonperiodic-sa}

\begin{lemma}\label{lm:sa-nonperiodic-sa}
  Let $i \in [1 \dd n]$ be such that $\SA[i] \in [1 \dd n - 3\tau + 2]
  \setminus \R$ and $X \in \D$ be a prefix of $\T[\SA[i] \dd
  n]$. Denote $\deltatext = |X| - 2\tau$ and $b_X = \LB(X,
  \T)$. Then:
  \begin{enumerate}
  \item\label{lm:sa-nonperiodic-sa-it-1} It holds $i = b_X +
    \delta(\SA[i])$.
  \item\label{lm:sa-nonperiodic-sa-it-2} If $y = \select{W}{\revstr{X}}{i
    - b_X}$, then $\slex_{y} = \SA[i] + \deltatext$.
  \end{enumerate}
\end{lemma}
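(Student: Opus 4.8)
The plan is to read both parts off \cref{lm:sa-nonperiodic-isa} specialized to $j := \SA[i]$, so that essentially no new work is needed for the first claim and only a rank/selection inversion argument is needed for the second. For the first claim, observe that the hypotheses $\SA[i] \in [1 \dd n - 3\tau + 2] \sm \R$ and ``$X \in \D$ is a prefix of $\T[\SA[i] \dd n]$'' are exactly what \cref{lm:sa-nonperiodic-isa} requires (with $j = \SA[i]$), so it yields $\ISA[\SA[i]] = b_X + \delta(\SA[i])$; since $\ISA[\SA[i]] = i$, this is precisely $i = b_X + \delta(\SA[i])$.

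For the second claim I would let $y'$ be the unique index in $[1 \dd n']$ with $\slex_{y'} = \SA[i] + \deltatext$; it exists because $\SA[i] + \deltatext = \Successor_{\S}(\SA[i]) \in \S$ by the definition of $\D$. Then \cref{lm:sa-nonperiodic-isa-it-2} gives $\delta(\SA[i]) = \rank{W}{\revstr{X}}{y'}$, and combining this with the first claim we obtain $i - b_X = \rank{W}{\revstr{X}}{y'}$. The goal is then to show that feeding this rank value into $\select{W}{\revstr{X}}{\cdot}$ returns $y'$, which holds as soon as $y'$ is itself an index at which $\revstr{X}$ is a prefix of $W[y']$ (rank and selection restricted to that set of indices are mutually inverse). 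Granting that, $y = \select{W}{\revstr{X}}{i - b_X} = \select{W}{\revstr{X}}{\rank{W}{\revstr{X}}{y'}} = y'$, and hence $\slex_y = \slex_{y'} = \SA[i] + \deltatext$, as claimed.

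The only substantive point, and the step I expect to need the most care, is verifying that $\revstr{X}$ is indeed a prefix of $W[y'] = \revstr{X_{y'}}$, i.e.\ that $X$ is a suffix of $X_{y'} = \T^{\infty}[\slex_{y'} - \tau \dd \slex_{y'} + 2\tau)$ (which is meaningful since $|X| \le 3\tau - 1 < 3\tau = |X_{y'}|$). Because $\deltatext = |X| - 2\tau$ and $\SA[i] = \slex_{y'} - \deltatext$, the length-$|X|$ suffix of $X_{y'}$ is the window $\T^{\infty}[\SA[i] \dd \slex_{y'} + 2\tau)$; using $\SA[i] \ge 1$ together with $\slex_{y'} + 2\tau \le n + 1$ (the latter from $\S \sub [1 \dd n - 2\tau + 1]$ in \cref{def:sss}), this window involves no wraparound of $\T^{\infty}$, so it equals $\T[\SA[i] \dd \slex_{y'} + 2\tau) = \T[\SA[i] \dd \SA[i] + |X|) = X$. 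This completes the reduction; the remaining bookkeeping (e.g.\ that $i - b_X = \delta(\SA[i]) \ge 1$ is a legal argument to the selection query, which follows from $\SA[i] \in \Pos(\SA[i])$) is routine.
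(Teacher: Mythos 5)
Your proposal is correct and follows essentially the same route as the paper: part 1 is read off from \cref{lm:sa-nonperiodic-isa}\eqref{lm:sa-nonperiodic-isa-it-1} with $j=\SA[i]$, and part 2 takes the index $y'$ with $\slex_{y'}=\SA[i]+\deltatext$ (whose existence the paper attributes to the consistency of $\S$), applies \cref{lm:sa-nonperiodic-isa}\eqref{lm:sa-nonperiodic-isa-it-2} to get $i-b_X=\rank{W}{\revstr{X}}{y'}$, checks that $\revstr{X}$ is a prefix of $W[y']$, and concludes via the rank/selection inverse relationship. Your explicit no-wraparound window computation just fills in a step the paper states more tersely.
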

\begin{proof}
  1. Denote $j = \SA[i]$. By
  \cref{lm:sa-nonperiodic-isa}\eqref{lm:sa-nonperiodic-isa-it-1}, $i =
  \ISA[j] = b_X + \delta(j) = b_X + \delta(\SA[i])$.

  2. By the consistency of $\S$, we have $\SA[i] + \deltatext \in
  \S$. Thus, there exists $y \in [1 \dd n']$ such that $\slex_y =
  \SA[i] + \deltatext$.  By
  \cref{lm:sa-nonperiodic-isa}\eqref{lm:sa-nonperiodic-isa-it-2}
  applied for $j = \SA[i]$, for any such $y$ it holds $\delta(\SA[i])
  = \rank{W}{\revstr{X}}{y}$. By~\eqref{lm:sa-nonperiodic-sa-it-1}, we
  thus have $i - b_X = \rank{W}{\revstr{X}}{y}$.  Since $X$ is a
  prefix of $\T[\SA[i] \dd n]$, such $y$ must also satisfy $\T[\slex_y
  - \deltatext \dd \slex_y + 2\tau) = X$, or equivalently,
  $\revstr{X}$ must be a prefix of $W[y]$.  The only $y \in [1 \dd
  n']$ for which $\revstr{X}$ is a prefix of $W[y]$ and that
  satisfies $\rank{W}{\revstr{X}}{y} = i - b_X$, is $y =
  \select{W}{\revstr{X}}{i - b_X}$.
\end{proof}

\begin{proposition}\label{pr:sa-nonperiodic-sa}
  Let $i \in [1 \dd n]$ be such that $\SA[i] \in [1 \dd n] \setminus
  \R$.  Given the data structure from \cref{sec:sa-nonperiodic-ds} and
  the index $i$, we can compute $\SA[i]$ in $\bigO(\log^\epsilon n)$
  time.
\end{proposition}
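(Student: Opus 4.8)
The plan is to run the mirror image of the $\ISA$ query from \cref{pr:sa-nonperiodic-isa}: instead of starting from a position $j \notin \R$ and computing its lexicographic rank, we start from an index $i$ with $\SA[i] \notin \R$ and recover the position $\SA[i]$, using \cref{lm:sa-nonperiodic-sa} in place of \cref{lm:sa-nonperiodic-isa}. First, using a $\bigO(1)$-time rank query on $\BVshort$, compute $k = \rank{\BVshort}{1}{i-1}+1$, the index of the block of $\SA$ containing position $i$; by the definitions of $\BVshort$ and $\ARRshort$, the string $Y := \ARRshort[k]$ equals $\T[\SA[i] \dd \min(n+1, \SA[i]+3\tau-1))$, i.e., the length-$(3\tau-1)$ prefix of $\T[\SA[i] \dd n]$, truncated when that suffix is shorter. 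If $|Y| < 3\tau - 1$, then necessarily $\SA[i] > n - 3\tau + 2$ and $Y$ is the entire suffix $\T[\SA[i]\dd n]$, so we return $\SA[i] = n - |Y| + 1$ directly.

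Otherwise $|Y| = 3\tau - 1$ and $\SA[i] \le n - 3\tau + 2$; since $\SA[i] \notin \R$, the suffix $\T[\SA[i]\dd n]$ has a unique distinguishing prefix $X \in \D$, which is a prefix of $Y$, and we retrieve it as $X = \LTD[\Int(Y)]$ in $\bigO(1)$ time. We set $\deltatext := |X| - 2\tau$ and look up $(b_X, e_X) = \LTrange[\Int(X)]$, so $b_X = \LB(X, \T)$. Using $\LTrev$ we obtain the packed representation of $\revstr{X}$ in $\bigO(1)$ time, and then compute $y = \select{W}{\revstr{X}}{i - b_X}$ in $\bigO(\log^{\epsilon} n)$ time via \cref{th:wavelet-tree}; this query is well-defined because $i - b_X = \delta(\SA[i]) \in [1 \dd \rank{W}{\revstr{X}}{n'}]$ by \cref{lm:sa-nonperiodic-sa}\eqref{lm:sa-nonperiodic-sa-it-1}, and by \cref{lm:sa-nonperiodic-sa}\eqref{lm:sa-nonperiodic-sa-it-2} the result satisfies $\slex_y = \SA[i] + \deltatext$. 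Finally, we translate this $\slex$-index into a text position: $\ARRsmap[y]$ is the rank of $\slex_y$ in the left-to-right order of $\S$, so $\slex_y = \stext_{\ARRsmap[y]} = \select{\BVS}{1}{\ARRsmap[y]}$ in $\bigO(1)$ time, and we return $\SA[i] = \slex_y - \deltatext$.

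Correctness is immediate from \cref{lm:sa-nonperiodic-sa} together with the defining properties of the auxiliary arrays $\ARRshort$, $\LTD$, $\LTrange$, $\LTrev$, $\ARRsmap$, and $\BVS$; every step except the single prefix selection query runs in $\bigO(1)$ time, yielding a total of $\bigO(\log^{\epsilon} n)$. No step is genuinely hard; the only point requiring care is the separate handling, in the first paragraph, of suffixes shorter than $3\tau - 1$, for which $\D$ supplies no distinguishing prefix and $\SA[i]$ must instead be read off from $|Y|$, together with the attendant check that $\LTD$ is invoked only on genuine length-$(3\tau-1)$ prefixes (ensured by the test $|Y| = 3\tau - 1$ and $\SA[i] \notin \R$).
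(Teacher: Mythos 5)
Your proposal is correct and follows essentially the same route as the paper's proof: handle the short-suffix case via $|Y| < 3\tau-1$, otherwise retrieve $X \in \D$ with $\LTD$, compute $y = \select{W}{\revstr{X}}{i-b_X}$ via \cref{th:wavelet-tree}, and recover $\SA[i] = \slex_y - \deltatext$ using $\ARRsmap$ and a select on $\BVS$, exactly as in \cref{lm:sa-nonperiodic-sa}. The only difference is that you spell out why the selection query is well-defined, which the paper leaves implicit.
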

\begin{proof}
  Given $i \in [1 \dd n]$ such that $\SA[i] \in [1 \dd n] \setminus
  \R$, we compute $\SA[i]$ as follows.  First, we compute $y =
  \rank{\BVshort}{1}{i-1}$.  The string $Y = \ARRshort[y + 1]$ is then a
  prefix of $\T[\SA[i] \dd n]$ of length $\min(3\tau - 1, n + 1 -
  i)$. If $|Y| < 3\tau - 1$, we therefore have $\SA[i] > n - 3\tau +
  2$ and moreover, $\SA[i] + |Y| = n + 1$. Thus, we return $\SA[i] = n
  + 1 - |Y|$. Otherwise (i.e., $|Y| = 3\tau - 1$), using $\LTD$ on $Y$
  we determine $x = \Int(X)$, where $X \in \D$ is a prefix of
  $\T[\SA[i] \dd n]$.  In $\bigO(1)$ time we lookup $(b_X, e_X) =
  \LTrange[x]$, i.e., $b_X = \LB(X, \T)$.  In $\bigO(\log^{\epsilon}
  n)$ time, we then compute $y = \select{W}{\revstr{X}}{i - b_X}$
  using \cref{th:wavelet-tree} (the packed representation of
  $\revstr{X}$ is obtained using the lookup table $\LTrev$ in
  $\bigO(1)$ time).  By
  \cref{lm:sa-nonperiodic-sa}\eqref{lm:sa-nonperiodic-sa-it-2}, we
  then have $\SA[i] = \slex_y - \deltatext$, where $\deltatext = |X| -
  2\tau$.  Using $\BVS$, in $\bigO(1)$ time we compute $j' =
  \select{\BVS}{1}{\ARRsmap[y]}$. We then have $j' = \slex_y$, and
  hence we return $\SA[i] = j' - \deltatext$. Altogether, the query
  takes $\bigO(\log^{\epsilon} n)$ time.
\end{proof}

\subsubsection{Construction Algorithm}\label{sec:sa-nonperiodic-construction}

\begin{proposition}\label{pr:sa-nonperiodic-construction}
  Given $\SACore(\T)$, we can augment it into a data structure from
  \cref{sec:sa-nonperiodic-ds} in $\bigO(n \min(1, \log \sigma /
  \sqrt{\log n}))$ time and using $\bigO(n / \log_{\sigma} n)$ working
  space.
\end{proposition}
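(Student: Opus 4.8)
The plan is to start from the given core $\SACore(\T)$ and build the six further components of \cref{sec:sa-nonperiodic-ds}: the lookup tables $\LTrev$ and $\LTD$, the bitvector $\BVS$ (with rank/select support), the permutation arrays $\ARRsmap$, $\ARRsinvmap$, and the structure of \cref{th:wavelet-tree} for the length-$3\tau$ sequence $W[1\dd n']$. As a preliminary step, using \cref{th:sss-packed-construction} (and radix-sorting the output if it is not already ordered by text position), we fix a $\tau$-synchronizing set $\S$ with $n'=|\S|=\bigO(n/\tau)$ and obtain $(\stext_t)_{t\in[1\dd n']}$ in $\bigO(n/\tau)=\bigO(n/\log_\sigma n)$ time and space. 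Except for the wavelet-tree component, every step below will cost $\bigO(n/\log_\sigma n)$ time and $\bigO(n/\log_\sigma n)$ working space; since $\bigO(n/\log_\sigma n)\subseteq\bigO(n\min(1,\log\sigma/\sqrt{\log n}))$ (using $\sigma<n^{1/7}$), the whole construction will meet the claimed bound.

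Three components are immediate. From $(\stext_t)_t$ we set the bits of $\BVS$ and augment it via \cref{th:binrksel} in $\bigO(n/\log n)$ time. For $\LTrev$ we iterate over all $X\in\Alphabet^{\le 3\tau-1}$, each addressed by $\Int(X)\in[0\dd\sigma^{6\tau})$ (so the table has $\bigO(\sigma^{6\tau})=\bigO(n^{6\mu})=\bigO(n/\log_\sigma n)$ entries, as $\mu<\tfrac16$), and reverse each $X$ in $\bigO(1)$ time using a precomputed reversal table for length-$\bigO(\tau)$ blocks; as there are only $\bigO(\sigma^{3\tau-1})=\bigO(\sqrt{n})$ nonempty $X$, this is $o(n/\log_\sigma n)$ time. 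For $\LTD$ we first build an auxiliary table over $\Alphabet^{2\tau}$ (of $\sigma^{2\tau}=\bigO(n^{1/3})$ entries) recording, for each length-$2\tau$ string, whether it occurs in $\T$ at a position of $\S$; we fill it in $\bigO(n')$ time from $\{\T[s\dd s+2\tau):s\in\S\}$, and by the consistency condition of $\S$ it correctly classifies every length-$2\tau$ window starting at a position $\le n-2\tau+1$. Then, for each $Y\in\Alphabet^{3\tau-1}$ we scan $Y$ left to right for the first synchronizing position it contains — the least offset $t\in[0\dd\tau)$ such that the length-$2\tau$ factor of $Y$ starting at offset $t$ is flagged — and set $\LTD[\Int(Y)]$ to the length-$(2\tau+t)$ prefix of $Y$ if such $t$ exists and to $\emptystring$ otherwise. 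The density condition of $\S$ guarantees that when $Y=\T[j\dd j+3\tau-1)$ for a nonperiodic $j\le n-3\tau+2$, this $t$ equals $\Successor_{\S}(j)-j$, so the returned string is exactly the element of $\D$ prefixing $\T[j\dd n]$ (and these are the only $Y$ on which $\LTD$ is ever queried). Over all $Y$ this costs $\bigO(\sigma^{3\tau-1}\tau)=o(n/\log_\sigma n)$ time, plus $\bigO(\sigma^{6\tau})$ for table initialization.

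It remains to build $\ARRsmap$, $\ARRsinvmap$, and the wavelet tree, and here the budget is tight. Computing $(\slex_t)_t$, i.e.\ the lexicographic rank of $\T[s\dd n]$ among $\{\T[s'\dd n]:s'\in\S\}$ for every $s\in\S$, is a sparse suffix sorting problem; using that $\S$ is a $\tau$-synchronizing set of a packed string — which reduces the problem, via the consistency and density of $\S$, to sorting $n'=\bigO(n/\tau)$ length-$\bigO(\tau)$ contexts and recursing on a length-$\bigO(n/\tau)$ string over a polynomial alphabet — it can be solved in $\bigO(n\min(1,\log\sigma/\sqrt{\log n}))$ time and $\bigO(n/\log_\sigma n)$ working space by the machinery of~\cite{sss,dynsa}. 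The lex-rank of $\stext_j$ is exactly $\ARRsinvmap[j]$, and $\ARRsmap$ is its inverse permutation, obtained in $\bigO(n')$ time. Next we assemble the packed $W[1\dd n']$: each window $X_i=\T^{\infty}[\slex_i-\tau\dd\slex_i+2\tau)$ has length $3\tau=\bigO(\log_\sigma n)$ and hence occupies $\bigO(1)$ words, so it is extracted in $\bigO(1)$ time by shifting and masking the packed text (the $\bigO(\tau)$ indices with $\slex_i<1+\tau$, where $\T^{\infty}$ wraps around the sentinel, are handled one by one in $\bigO(\tau^2)$ total time) and reversed in $\bigO(1)$ time with the reversal table, for $\bigO(n')=\bigO(n/\tau)$ time overall. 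Finally we invoke \cref{th:wavelet-tree} on $W$ with $m:=\max(n',\sigma^{3\tau})$ and $\ell:=3\tau$ (padding $m$ up to $\sigma^{3\tau}$ only when necessary to satisfy $m\ge\sigma^\ell$; this is harmless since $W$ has at most $n'\le m$ strings). As $m=\bigO(n/\tau)$ and $\tau=\Theta(\log n/\log\sigma)$, the construction runs in $\bigO(m\min(\ell,\sqrt{\log m}))=\bigO(\tfrac{n}{\tau}\min(\tau,\sqrt{\log n}))=\bigO(n\min(1,\log\sigma/\sqrt{\log n}))$ time and $\bigO(m)=\bigO(n/\log_\sigma n)$ space. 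Summing with the earlier $\bigO(n/\log_\sigma n)$ contributions yields the proposition.

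The main obstacle is exactly these last two steps together with the requirement that \emph{every} step run in $o(n)$ time for small alphabets: a naive $\Theta(\tau)$-per-element computation of the lexicographic order or of $W$ would cost $\Theta(n)$, which exceeds the target $\bigO(n\log\sigma/\sqrt{\log n})$ whenever $\log\sigma=o(\sqrt{\log n})$. Both must therefore be carried out with packed, bit-parallel operations, and the lexicographic sorting in particular rests on the full synchronizing-set construction of~\cite{sss}. By contrast, all the lookup tables live in universes of size $n^{\bigO(\mu)}=o(n/\log_\sigma n)$ and $\BVS$ has length $n$, so those pieces are comfortably within budget.
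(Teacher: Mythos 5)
Your construction matches the paper's own proof step for step: build $\S$ via \cref{th:sss-packed-construction}, set up $\BVS$ with \cref{th:binrksel}, fill $\LTrev$ and $\LTD$ by enumerating the $\bigO(\sigma^{6\tau})$-size universes (with the same auxiliary $\Alphabet^{2\tau}$ table flagged by $\S$-membership and justified by consistency/density), obtain $(\slex_t)_t$ from the packed text via the sparse suffix sorting of~\cite{sss} (the paper cites this as~\cite[Theorem~4.3]{sss}, which even gives the sharper $\bigO(n/\log_\sigma n)$ bound), and then assemble the packed $W$ and apply \cref{th:wavelet-tree}, which is the only step costing $\bigO(n\min(1,\log\sigma/\sqrt{\log n}))$. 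The proof is correct and essentially identical to the paper's, differing only in minor bookkeeping details such as padding $m$ up to $\sigma^{3\tau}$ for the wavelet-tree theorem.
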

\begin{proof}

  First, using \cref{th:sss-packed-construction}, we construct a
  $\tau$-synchronizing set $\S$ of size $\bigO(n / \tau)$ in
  $\bigO(n / \tau) = \bigO(n / \log_{\sigma} n)$ time from a packed
  representation of $\T$. The set $\S$ is returned as an array taking
  $\bigO(n / \log_{\sigma} n)$ space. Using this array, we initialize
  the bitvector $\BVS$ in $\bigO(n / \log_{\sigma} n)$ time.
  Augmenting $\BVS$ with \cref{th:binrksel} takes $\bigO(n / \log n)$
  time.

  Next, we construct the arrays $\ARRsinvmap$ and $\ARRsmap$.  We start
  by creating the sequence $(\stext_t)_{t \in [1 \dd n']}$ using
  select queries on $\BVS$. This takes $\bigO(n / \log_{\sigma} n)$ time.
  Then, given $(\stext_t)_{t \in [1 \dd n']}$, and the packed
  representation of $\T$, by~\cite[Theorem 4.3]{sss}, we compute the
  sequence $(\slex_t)_{t \in [1 \dd n']}$ in $\bigO(n / \log_{\sigma}
  n)$ time.  Given $(\slex_t)_{t \in [1 \dd n']}$, we then easily
  obtain the arrays $\ARRsinvmap$ and $\ARRsmap$: simply scan the
  sequence $(\slex_t)_{t \in [1 \dd n']}$ and for each $i \in [1 \dd
  n']$, let $j = \rank{\BVS}{1}{\slex_i}$ and note that then $\stext_j =
  \slex_i$ and hence we can set $\ARRsinvmap[j] = i$ and $\ARRsmap[i] =
  j$.

  Next, we initialize $\LTrev$.  In the RAM model, such array is
  easily initialized in $\bigO(\sigma^{6\tau}) =
  \bigO(n / \log_{\sigma} n)$ time. The sequence $W[1 \dd n']$
  is then obtained from $(\slex_t)_{t \in [1 \dd n']}$
  using $\LTrev$ in $\bigO(n / \log_{\sigma} n)$ time.  We then
  process $W$ using \cref{th:wavelet-tree}, which takes $\bigO(n
  \min(1, \log \sigma/\sqrt{\log n}))$ time and
  $\bigO(n/\log_{\sigma} n)$ working space.

  Finally, to construct $\LTD$, we first compute a lookup table that
  for every $Z \in \Alphabet^{2\tau}$ tells whether $\T[j \dd j +
  2\tau) = Z$ implies $j \in \S$ (by consistency of $\S$ this does not
  depend on $j$). Given the array containing the positions in $\S$ and
  the packed representation of $\T$, this takes $\bigO(\sigma^{2\tau}
  + |\S|) = \bigO(n/\log_{\sigma} n)$ time. Given such lookup table,
  we iterate through every $Y \in \Alphabet^{3\tau - 1}$ and in
  $\bigO(\tau)$ time we compute the shortest prefix $X$ of $Y$ whose
  length-$2\tau$ suffix is marked true in the above lookup table. If
  such $X$ exists, we have $X \in \D$. Accounting for the
  initialization of $\LTD$, over all $Y \in \Alphabet^{3\tau - 1}$,
  this takes $\bigO(\sigma^{6\tau} + \sigma^{3\tau - 1} \log_{\sigma}
  n) = \bigO(n / \log_{\sigma} n)$ time.
\end{proof}

\subsection{The Periodic Positions}\label{sec:sa-periodic}

In this section, we describe a data structure that, given any $j \in [1
\dd n]$ (resp. $i \in [1 \dd n]$) satisfying $j \in \R$
(resp.\ $\SA[i] \in \R$) computes $\ISA[j]$ (resp.\ $\SA[i]$)
in $\bigO(\log \log n)$ time.

The section is organized as follows. First, we present the toolbox of
combinatorial properties for periodic positions
(\cref{sec:sa-periodic-prelim}). Next, we introduce the components of
the data structure (\cref{sec:sa-periodic-ds}).  We then show how
using this structure to implement some basic navigational routines
(\cref{sec:sa-periodic-nav}).  Next, we describe the query algorithms
(\cref{sec:sa-periodic-isa,sec:sa-periodic-sa}). Finally, we show the
construction algorithm (\cref{sec:sa-periodic-construction}).

\subsubsection{Preliminaries}\label{sec:sa-periodic-prelim}

We start by introducing the definitions to express the properties
utilized in our data structures. For any $j \in \R$, we define
$\Lroot(j) = \min\{\T[j + t \dd j + t + p) : t \in [0 \dd p)\}$, where
$p = \per(\T[j \dd j + 3\tau - 1))$. We denote $\Lroots = \{\Lroot(j)
: j \in \R\}$.  For any $j \in \R$, let $\rend{j} = \min\{j' \geq j :
j' \not\in \R\} + 3\tau - 2$.

\begin{lemma}\label{lm:run-end}
  Let $j \in \R$ and $p = \per(\T[j \dd j + 3\tau - 1))$. Then:
  \begin{enumerate}
  \item\label{lm:run-end-it-1} If $j + 1 \in \R$ then $\per(\T[j + 1
    \dd j + 3\tau)) = p$,
  \item\label{lm:run-end-it-2} It holds $\rend{j} = j + p + \LCE(j, j
    + p)$.
  \end{enumerate}
\end{lemma}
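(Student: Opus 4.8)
The plan is to prove Part~\cref{lm:run-end-it-1} first, since it is the combinatorial heart of the statement, and then bootstrap it into Part~\cref{lm:run-end-it-2}.

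For Part~\cref{lm:run-end-it-1}, write $S_1 := \T[j \dd j+3\tau-1)$ and $S_2 := \T[j+1 \dd j+3\tau)$, so that $p = \per(S_1) \le \tfrac13\tau$ and $q := \per(S_2) \le \tfrac13\tau$ (the latter because $j+1 \in \R$); the goal is to show $p = q$. The two windows overlap in the length-$(3\tau-2)$ substring $U := \T[j+1 \dd j+3\tau-1)$, which has period $p$ (inherited from $S_1$) and period $q$ (inherited from $S_2$). Since $p + q \le \tfrac23\tau \le 3\tau - 2 = |U|$ (using $\tau \ge 1$), Fine and Wilf's periodicity lemma shows that $U$ also has period $d := \gcd(p,q)$. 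I then lift this period back to $S_1 = \T[j] \cdot U$: the only character comparison that ``$S_1$ has period $d$'' adds beyond ``$U$ has period $d$'' is $\T[j] = \T[j+d]$, and this holds because $\T[j] = \T[j+p]$ (period $p$ of $S_1$) while $\T[j+p] = \T[j+d]$ (period $d$ of $U$, as $p$ is a multiple of $d$ and both indices lie inside $U$). Hence $S_1$ has period $d$, so minimality of $p$ gives $p \le d$, and with $d \mid p$ we conclude $d = p$; in particular $p \mid q$. The symmetric lift of the period $d$ of $U$ to $S_2 = U \cdot \T[j+3\tau-1]$ — via $\T[j+3\tau-1] = \T[j+3\tau-1-q] = \T[j+3\tau-1-d]$ — gives $q \mid p$, and therefore $p = q$.

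For Part~\cref{lm:run-end-it-2}, set $m := \min\{j' \ge j : j' \notin \R\}$, so that $\rend{j} = m + 3\tau - 2$ and $[j \dd m) \subseteq \R$. Iterating Part~\cref{lm:run-end-it-1} along $[j \dd m)$ yields $\per(\T[i \dd i+3\tau-1)) = p$ for every $i \in [j \dd m)$. I first claim that $\T[j \dd \rend{j})$ has period $p$, proved by extending the initial window one position at a time: if $\T[j \dd i)$ has period $p$ for some $i \in [j+3\tau-1 \dd \rend{j})$, then $i' := i - 3\tau + 2 \in [j \dd m) \subseteq \R$ (here $i < \rend{j} = m + 3\tau - 2$ is used), so $\T[i' \dd i'+3\tau-1)$ has period $p$, and since both $i-p$ and $i$ lie in that window (as $p \le 3\tau - 2$) we get $\T[i-p] = \T[i]$, i.e., $\T[j \dd i+1)$ has period $p$. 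Consequently $\T[j \dd n]$ and $\T[j+p \dd n]$ agree in their first $\rend{j} - j - p$ positions, so $\LCE(j, j+p) \ge \rend{j} - j - p$, i.e., $j + p + \LCE(j, j+p) \ge \rend{j}$.

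For the matching bound $j + p + \LCE(j, j+p) \le \rend{j}$, I distinguish two cases. If $\rend{j} = n+1$, then the suffix $\T[j+p \dd n]$ has length exactly $n + 1 - j - p = \rend{j} - j - p$, which caps $\LCE(j, j+p)$ at that value. Otherwise $\rend{j} \le n$, and I argue by contradiction: were $\T[\rend{j} - p] = \T[\rend{j}]$, the period $p$ of $\T[j \dd \rend{j})$ would extend to $\T[j \dd \rend{j}+1) = \T[j \dd m+3\tau-1)$, so its substring $\T[m \dd m+3\tau-1)$ would have period at most $p \le \tfrac13\tau$; since $\rend{j} \le n$ forces $m \le n - 3\tau + 2$, this would put $m \in \R$, contradicting the choice of $m$. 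Hence $\T[\rend{j} - p] \ne \T[\rend{j}]$ (and both $\T[j \dd n]$ and $\T[j+p \dd n]$ are long enough to exhibit this mismatch), so $\LCE(j, j+p) = \rend{j} - j - p$ and $\rend{j} = j + p + \LCE(j, j+p)$. The only genuinely delicate point is the period-lifting in Part~\cref{lm:run-end-it-1}: after Fine--Wilf is applied to the overlap, one must check that extending the shortest period of $U$ by one symbol on either side preserves it, which pins down $p$ and $q$ by minimality; everything in Part~\cref{lm:run-end-it-2} is routine once Part~\cref{lm:run-end-it-1} is available, apart from the careful handling of the boundary position $\rend{j}$ and the degenerate case $\rend{j} = n+1$.
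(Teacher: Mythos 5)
Your proof is correct and takes essentially the same route as the paper's: Part 1 applies the weak periodicity (Fine--Wilf) lemma to the overlap of the two length-$(3\tau-1)$ windows, and Part 2 iterates Part 1 across the run and pins down $\rend{j}$ via the same contradiction that a matching character at $\rend{j}$ would force the first position outside $\R$ back into $\R$. The only (cosmetic) difference is in finishing Part 1, where the paper derives a contradiction from primitivity of the minimal-period prefix while you lift the $\gcd$ period back to the full windows and invoke minimality of $p$ and $q$ directly.
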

\begin{proof}
  1. Denote $\Pat = \T[j \dd j + 3\tau - 1)$, $\Pat' = \T[j + 1 \dd j
  + 3\tau)$, and $p' = \per(\Pat')$.  Our goal is to show that $p' =
  p$. For a proof by contradiction, assume $p' \neq p$. By the
  assumption, $\per(\Pat) = p$. Denote $Y = \Pat'[1 \dd \tau]$, and
  note that since $\Pat$ and $\Pat'$ overlap by $3\tau - 2 \geq \tau$
  symbols, $Y$ is a substring of $\Pat$, and hence has periods $p$ and
  $p'$.  Observe that we cannot have $p \mid p'$ since this would
  imply that $Y[1 \dd p']$ is not primitive which would contradict $p'
  = \per(\Pat')$. Observe now that we have $p, p' \leq
  \tfrac{1}{3}\tau$. By the Weak Periodicity
  Lemma~\cite{fine1965uniqueness}, we thus have that $Y$ has period
  $p'' = \gcd(p, p')$. By our assumptions, this implies $p'' < p'$ and
  $p'' \mid p'$. Thus, again we obtain that $Y[1 \dd p']$ is not
  primitive. Therefore, we must have $p' = p$.

  2. Denote $j' = \rend{j} - 3\tau + 2$. By definition, we then have
  $[j \dd j') \sub \R$ and $j' \not\in \R$. By the above, for every $t
  \in [0 \dd j'-j)$, it holds $\per(\T[j + t \dd j + t + 3\tau - 1)) =
  p$. Thus, for every $j'' \in [j \dd j' + 3\tau - 2 - p)$, we have
  $\T[j''] = \T[j'' + p]$, i.e., the substring $\T[j \dd j' + 3\tau -
  2)$ has period $p$, and thus $\LCE(j, j + p) \geq (j' + 3\tau - 2) -
  j - p$, or equivalently, $j + p + \LCE(j, j + p) \geq j' + 3\tau - 2
  = \rend{j}$. To show that this lower bound on $j + p + \LCE(j, j +
  p)$ is tight, let us assume that $\rend{j} \leq n$ (otherwise, the
  claim follows immediately). Equivalently, we then have $j' + 3\tau -
  2 = \rend{j} \leq n$ and to finish the proof, it remains to show
  $\T[\rend{j}] \neq \T[\rend{j} - p]$.  Recall that $\per(\T[j' - 1
  \dd j' + 3\tau - 2)) = p$. Thus, $\T[j' + 3\tau - 2] = \T[\rend{j}]
  = \T[\rend{j} - p] = \T[j' + 3\tau - 2 - p]$ would imply that
  $\per(\T[j' \dd j' + 3\tau - 1)) = p$, or equivalently, that $j' \in
  \R$, a contradiction.
\end{proof}

Observe that by definition of $\Lroot$, letting $p = |\Lroot(j)|$,
there exists $s \in [0 \dd p)$ such that $\T[j + s \dd j + s
+ p) = \Lroot(j)$.  Combining this with \cref{lm:run-end} implies that
for every $j \in \R$, we can write $\T[j \dd \rend{j}) = H'H^{k}H''$,
where $H = \Lroot(j)$, and $H'$ (resp.\ $H''$) is a proper suffix
(resp.\ prefix) of $H$. We call such factorization the
\emph{L-decomposition} of $\T[j \dd \rend{j})$.  Note that the
L-decomposition is unique, since otherwise would contradict the
synchronization property of primitive
strings~\cite[Lemma~1.11]{AlgorithmsOnStrings}.  We denote
$\Lhead(j)=|H'|$, $\Lexp(j)=k$, and $\Ltail(j)=|H''|$.  For $j \in
\R$, we let $\type(j) = +1$ if $\rend{j} \leq n$ and $\T[\rend{j}]
\succ \T[\rend{j} - p]$ (where $p = |\Lroot(j)|$), and $\type(j) = -
1$ otherwise.  For any $j \in \R$, we denote $\rendfull{j} = \rend{j}
- \Ltail(j)$. Observe that $\rendfull{j} = j + \Lhead(j) + \Lexp(j)
\cdot |\Lroot(j)|$.

We repeatedly refer to the following subsets of $\R$.  First, denote
$\R^{-} = \{j \in \R : \type(j) = -1\}$ and $\R^{+} = \R \setminus
\R^{-}$.  For any $H \in \Sigma^{+}$ and any $s \in \Zz$ we then let
$\R_{H} = \{j \in \R : \Lroot(j) = H\}$, $\R^{-}_H = \R^{-} \cap
\R_H$, $\R^{+}_H = \R^{+} \cap \R_H$, $\R_{s,H} = \{j \in \R_H :
\Lhead(j)=s\}$, $\R^{-}_{s,H} = \R^{-} \cap \R_{s,H}$, and
$\R^{+}_{s,H} = \R^{+} \cap \R_{s,H}$.

The following lemmas establish the key properties of periodic
positions. First, we prove that the set of positions $\R_{s,H}$
occupies a contiguous block in $\SA$ and describe the structure of
such block.

\begin{lemma}\label{lm:lce}
  Let $j \in \R_{s,H}$. For any $j' \in [1 \dd n]$, $\LCE(j, j') \geq
  3\tau - 1$ holds if and only if $j' \in \R_{s,H}$.  Moreover, if
  $j'\in \R_{s,H}$ then, letting $t = \rend{j} - j$ and $t' =
  \rend{j'} - j'$, it holds $\LCE(j, j') \geq \min(t, t')$ and:
  \begin{enumerate}
  \item\label{lm:lce-it-1} If $\type(j) \neq \type(j')$, then $\T[j
    \dd n] \prec \T[j' \dd n]$ if and only if $\type(j) < \type(j')$,
  \item\label{lm:lce-it-2} If $\type(j) = \type(j') = -1$ and $t \neq
    t'$, then $\T[j \dd n] \prec \T[j' \dd n]$ if and only if $t <
    t'$,
  \item\label{lm:lce-it-3} If $\type(j) = \type(j') = +1$ and $t \neq
    t'$, then $\T[j \dd n] \prec \T[j' \dd n]$ if and only if $t >
    t'$,
  \item If $\type(j) \neq \type(j')$ or $t \neq
    t'$, then $\LCE(j, j') = \min(t, t')$.
  \end{enumerate}
\end{lemma}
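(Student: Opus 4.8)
The plan is to begin with a \emph{locality} observation: for every $i\in\R$, each of ``$i\in\R$'', ``$\Lroot(i)=H$'', ``$\Lhead(i)=s$'' is both determined by and determines the window $\T[i\dd i+3\tau-1)$. Indeed, $\R$-membership depends only on $\per(\T[i\dd i+3\tau-1))$ (\cref{def:sss}); writing $p:=\per(\T[i\dd i+3\tau-1))\le\tfrac13\tau$, all length-$p$ factors appearing in the definition of $\Lroot(i)$ lie inside this window, and $\Lhead(i)$ is the unique $s\in[0\dd p)$ with $\T[i+s\dd i+s+p)=\Lroot(i)$ ($\Lroot(i)$ being primitive and $2p<3\tau-1$); conversely $H$ and $s$ rebuild the window as the length-$(3\tau-1)$ prefix of $(H[|H|-s+1\dd|H|]\,H[1\dd|H|-s])^{\infty}$, using $\rend{i}-i\ge 3\tau-1$, which is immediate from the definition of $\rend{\cdot}$ and $i\in\R$. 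Hence $\LCE(j,j')\ge3\tau-1$ forces $\T[j'\dd j'+3\tau-1)=\T[j\dd j+3\tau-1)$ and so $j'\in\R$ with $\Lroot(j')=H$, $\Lhead(j')=s$, i.e. $j'\in\R_{s,H}$; the converse direction is clear, proving the first equivalence.

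From now on let $j'\in\R_{s,H}$, $p=|H|$, $t=\rend{j}-j$, $t'=\rend{j'}-j'$, and assume $t\le t'$ (the case $t\ge t'$ being symmetric). The strings $\T[j\dd\rend{j})$ and $\T[j'\dd\rend{j'})$ both have shortest period $p$ and share their length-$(3\tau-1)$ prefix, with $3\tau-1\ge p$; so $\T[j\dd j+t)$ and $\T[j'\dd j'+t)$ are two length-$t$ strings of period $p$ agreeing on a prefix of length $\ge p$, hence equal. This yields $\LCE(j,j')\ge t=\min(t,t')$, which is the ``moreover'' lower bound and, together with the subsequent mismatch analysis, case~4.

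It remains to locate the first mismatch and decide the order. Suppose first $t<t'$. If $\rend{j}=n+1$, then $\T[j\dd n]$ has length $t$ and is a proper prefix of $\T[j'\dd n]$ (which has length $\ge t'>t$), so $\LCE(j,j')=t$ and $\T[j\dd n]\prec\T[j'\dd n]$, consistent with cases~1 and~2 since here $\type(j)=-1$. If $\rend{j}\le n$, then since \cref{lm:run-end}\eqref{lm:run-end-it-2} gives $\LCE(j,j+p)=\rend{j}-j-p$ \emph{exactly}, a mismatch occurs at $\rend{j}$, i.e. $\T[\rend{j}]\ne\T[\rend{j}-p]$; and since $0\le t-p<t\le\LCE(j,j')$ while $\T[j'\dd\rend{j'})$ has period $p$ with $j'+t<\rend{j'}$, we get $\T[\rend{j}-p]=\T[j+t-p]=\T[j'+t-p]=\T[j'+t]$, so $\T[j+t]\ne\T[j'+t]$, whence $\LCE(j,j')=t$; the comparison is then governed by whether $\T[\rend{j}]\prec\T[\rend{j}-p]$ (i.e. $\type(j)=-1$) or $\T[\rend{j}]\succ\T[\rend{j}-p]$ (i.e. $\type(j)=+1$), giving cases~2 and~3. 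Finally suppose $t=t'$ with $\type(j)\ne\type(j')$: then $\T[\rend{j}-p]=\T[j+t-p]=\T[j'+t-p]=\T[\rend{j'}-p]$, and — folding in the convention that $\rend{\cdot}=n+1$ forces $\type=-1$ — the two suffixes straddle this common character on opposite sides (or one is a proper prefix of the other when some $\rend{\cdot}=n+1$), so again $\LCE(j,j')=t$ and $\T[j\dd n]\prec\T[j'\dd n]\iff\type(j)<\type(j')$, which is case~1. The reverse inequalities in cases~1--3 follow by exchanging $j$ and $j'$.

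I expect the main obstacle to be the bookkeeping in the last step: verifying that every referenced position ($j+t-p$, $j'+t$, $\rend{j}-p$, and so on) lies in $[1\dd n]$, that the bound $\LCE(j,j')\ge\min(t,t')\ge3\tau-1>p$ is always wide enough to carry an equality from one run to the other, and that the boundary case $\rend{\cdot}=n+1$ is threaded correctly through the definition of $\type$ so that all four alternatives come out as stated; the genuinely combinatorial input is minimal, reusing only the periodicity reasoning already behind \cref{lm:run-end}.
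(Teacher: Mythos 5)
Your proof is correct and follows essentially the same route as the paper: window-locality plus the synchronization property of primitive strings for the first equivalence, the common periodic extension of $H$ for the lower bound, and the defining mismatch at $\rend{\cdot}$ (via \cref{lm:run-end}) combined with the definition of $\type$ for items 1--4. The only difference is organizational: you pin down $\LCE(j,j')=\min(t,t')$ first and compare the mismatching characters directly (in effect proving item~4 first and reading off items 1--3), whereas the paper derives items 1--3 by comparing both suffixes against the infinite string $Q=H'\cdot H^{\infty}[1\dd)$ and then handles item~4 in a separate case analysis; the underlying periodicity arguments are identical.
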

\begin{proof}

  Let $j' \in [1 \dd n]$ be such that $\LCE(j, j') \geq 3\tau -
  1$. Denoting $p = \per(\T[j \dd j + 3\tau - 1))$ and $p' =
  \per(\T[j' \dd j' + 3\tau - 1))$ we then have $p' = p \leq
  \tfrac{1}{3}$. Thus, $j' \in \R$ and $\Lroot(j') = \min\{\T[j' + t
  \dd j' + t + p') : t \in [0 \dd p')\} = \min\{\T[j' + t \dd j' + t +
  p) : t \in [0 \dd p)\} = \min\{\T[j + t \dd j + t + p) : t \in [0
  \dd p)\} = H$.  To show that $\Lhead(j') = s$, note that by $|H|
  \leq \tau$, the string $H'H^2$ (where $H'$ is a length-$s$ suffix of
  $H$) is a prefix of $\T[j \dd j + 3\tau - 1) = \T[j' \dd j' + 3\tau
  - 1)$.  On the other hand, $\Lhead(j') = s'$ implies that
  $\widehat{H}'H^2$ (where $\widehat{H}'$ is a length-$s'$ suffix of
  $H$) is a prefix of $\T[j' \dd j' + 3\tau - 1)$.  Thus, by the
  synchronization property of primitive
  strings~\cite[Lemma~1.11]{AlgorithmsOnStrings} applied to the two
  copies of $H$, we have $s' = s$, and consequently, $j' \in
  \R_{s,H}$.  For the converse implication, assume $j' \in
  \R_{s,H}$. This implies that both $\T[j \dd \rend{j})$ and $\T[j'
  \dd \rend{j'})$ are prefixes of $H'\cdot H^{\infty}[1 \dd)$
  (where $H'$ is as above). Thus, by $\rend{j} - j,\, \rend{j'} -
  j' \geq 3\tau - 1$, we obtain $\LCE(j, j') \geq 3\tau - 1$.

  Let us now assume $j' \in \R_{s,H}$. Since, as noted above, both
  $\T[j \dd \rend{j}) = \T[j \dd j + t)$ and $\T[j' \dd \rend{j'}) =
  \T[j' \dd j' + t')$ are prefixes of $H'\cdot H^{\infty}[1 \dd)$,
  we have $\LCE(j, j') \geq \min(t, t')$.

  1. Assume $\type(j) < \type(j')$. Let $Q = H'\cdot H^{\infty}[1 \dd)$,
  where $H'$ is a length-$s$ suffix of $H$. We will prove $\T[j \dd n]
  \prec Q \prec \T[j' \dd n]$, which implies the claim. First, we note
  that $\type(j) = -1$ implies that either $\rend{j} = n + 1$, or
  $\rend{j} \leq n$ and $\T[\rend{j}] \prec \T[\rend{j} - |H|]$. In the
  first case, $\T[j \dd \rend{j}) = \T[j \dd n]$ is a proper prefix of
  $Q$ and hence $\T[j \dd n] \prec Q$. In the second case, we have $\T[j
  \dd \rend{j}) = \T[j \dd j + t) = Q[1 \dd t]$ and $\T[j + t] \prec
  \T[j + t - |H|] = Q[1 + t - |H|] = Q[1 + t]$. Consequently, $\T[j
  \dd n] \prec Q$. To show $Q \prec \T[j' \dd n]$ we observe that
  $\type(j') = +1$ implies $\rend{j'} \leq n$. Thus, we have $Q[1 \dd
  t'] = \T[j' \dd \rend{j'}) = \T[j' \dd j' + t')$ and $Q[1 + t'] =
  Q[1 + t' - |H|] = \T[j' + t' - |H|] \prec \T[j' + t']$. Hence, we
  obtain $Q \prec \T[j' \dd n]$.  We have thus obtained $\T[j \dd n]
  \prec Q \prec \T[j' \dd n]$ which implies $\T[j \dd n] \prec \T[j'
  \dd n]$.  The opposite implication follows easily by symmetry. More
  precisely, in a proof by contraposition, assuming $\type(j) \geq
  \type(j')$ we immediately obtain $\type(j) > \type(j')$ from the
  assumption. By the analogous argument as above we then have $\T[j
  \dd n] \succ \T[j' \dd n]$.

  2. Assume $t < t'$. Similarly as above, we consider two cases for
  $\rend{j}$. If $\rend{j} = n + 1$, then by $t < t'$, the string
  $\T[j \dd \rend{j}) = \T[j \dd n]$ is a proper prefix of $\T[j' \dd
  \rend{j'}) = \T[j' \dd j' + t')$ and hence $\T[j \dd n] \prec \T[j'
  \dd j' + t') \preceq \T[j' \dd n]$. On the other hand, if $\rend{j}
  \leq n$, then we have $\T[j \dd j + t) = \T[j' \dd j' + t)$ and by
  $t < t'$, $\T[j + t] \prec \T[j + t - |H|] = \T[j' + t - |H|] =
  \T[j' + t]$.  Hence, $\T[j \dd n] \prec \T[j' \dd n]$. The opposite
  implication follows by symmetry similarly as in \cref{lm:lce-it-1}.

  3. Assume $t > t'$. By $\type(j') = +1$ we have $\rend{j'} \leq
  n$. Thus, by $t > t'$, we have $\T[j \dd j + t') = \T[j' \dd j' +
  t')$ and $\T[j + t'] = \T[j + t' - |H|] = \T[j' + t' - |H|] \prec
  \T[j' + t']$. Hence, $\T[j \dd n] \prec \T[j' \dd n]$.  The opposite
  implication follows by symmetry similarly as in \cref{lm:lce-it-1}.

  4. By the earlier implication, $\LCE(j, j') \geq \min(t, t')$. Thus,
  it remains to show $\LCE(j, j') \leq \min(t, t')$.  First, let
  $\type(j) \neq \type(j')$ and without the loss of generality let us
  assume $\type(j) < \type(j')$ (i.e., $\type(j) = -1$ and $\type(j')
  = +1$).  Consider two cases:
  \begin{itemize}
  \item First, assume $t \leq t'$. Our goal is to prove $\LCE(j, j')
    \leq t$.  If $j + t = n + 1$, then we immediately obtain the
    claim.  Let us thus assume $j + t \leq n$. In the proof of
    \cref{lm:lce-it-1} we showed that in this case $\type(j) = -1$
    implies $\T[j + t] \prec Q[1 + t]$.  On the other hand, there we
    also proved that $\type(j') = +1$ implies $Q[1 \dd t'] = \T[j' \dd
    j' + t')$ and $Q[1 + t'] \prec \T[j' + t']$.  By $t \leq t'$, we
    thus obtain $Q[1 + t] \preceq \T[j' + t]$.  Consequently, $\T[j +
    t] \neq \T[j' + t]$ and hence $\LCE(j, j') \leq t$.
  \item Let us now assume $t > t'$. Our goal is to prove $\LCE(j, j')
    \leq t'$. In the proof of \cref{lm:lce-it-1} we showed that
    $\type(j) = -1$ implies that $\T[j \dd j + t) = Q[1 \dd t]$. Thus,
    by $t > t'$ we have $\T[j + t'] = Q[1 + t']$. On the other hand,
    in the proof of \cref{lm:lce-it-1} we also proved that $\type(j')
    = +1$ implies $Q[1 + t'] \prec \T[j' + t']$.  Thus, we obtain
    $\T[j + t'] \neq \T[j' + t']$ and hence $\LCE(j, j') \leq t'$.
  \end{itemize}
  This concludes the proof of the claim in the case $\type(j) \neq
  \type(j')$.  Let us thus assume $\type(j) = \type(j')$ and $t \neq
  t'$. First, consider the case $\type(j) = \type(j') = -1$ and assume
  without the loss of generality that $t < t'$ (to match the
  assumption in \cref{lm:lce-it-2}). Our goal is thus to show $\LCE(j,
  j') \leq t$. In the proof of \cref{lm:lce-it-2}, we showed that we
  then either have $\T[j \dd j + t) = \T[j \dd n]$ (in which case
  $\LCE(j, j') \leq n - j + 1 = t$), or $\T[j \dd j + t) = \T[j' \dd
  j' + t)$ and $\T[j + t] \prec \T[j' + t]$ (which also immediately
  implies $\LCE(j, j') \leq t$).  Let us now consider the case
  $\type(j) = \type(j') = +1$ and assume without the loss of
  generality that $t > t'$ (to match the assumption in
  \cref{lm:lce-it-3}).  Our goal is thus to show $\LCE(j, j') \leq
  t'$.  In the proof of \cref{lm:lce-it-3}, we showed that we then
  have $\T[j \dd j + t') = \T[j' \dd j' + t')$ and $\T[j + t'] \prec
  \T[j' + t']$. This implies $\LCE(j, j') \leq t'$.
\end{proof}

The key to the efficient computation of $\SA$ and $\ISA$ values for
periodic positions is processing of the elements of $\R$ in blocks
(note that unlike in \cref{lm:lce}, which describes the structure of
blocks in $\SA$, here we mean blocks of positions in the text). The
starting positions of these blocks are defined as $\R' := \{j\in \R :
j-1 \notin \R\}$.  We also let $\R'^{-} = \R' \cap \R^{-}$, $\R'^{+} =
\R' \cap \R^{+}$, $\R'^{-}_H = \R' \cap \R^{-}_H$, and $\R'^{+}_H =
\R' \cap \R^{+}_H$ for any $H \in \Sigma^{+}$. The following lemma
justifies this strategy.

\begin{lemma}\label{lm:R-block}
  For every $j \in \R \setminus \R'$ it holds:
  \begin{itemize}
  \item $\Lroot(j - 1) = \Lroot(j)$,
  \item $\rend{j - 1} = \rend{j}$,
  \item $\Ltail(j - 1) = \Ltail(j)$,
  \item $\rendfull{j-1} = \rendfull{j}$,
  \item $\type(j - 1) = \type(j)$.
  \end{itemize}
\end{lemma}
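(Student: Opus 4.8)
The plan is to exploit the fact that $j \in \R \setminus \R'$ means $j-1 \in \R$ as well, so both $j-1$ and $j$ lie in the same maximal block of consecutive periodic positions, i.e.\ inside the same $\tau$-run. First I would set $p = \per(\T[j \dd j+3\tau-1))$ and apply \cref{lm:run-end}\eqref{lm:run-end-it-1} in the backward direction: since $j-1 \in \R$ and $(j-1)+1 = j \in \R$, we get $\per(\T[j-1 \dd j-1+3\tau-1)) = \per(\T[j \dd j+3\tau-1)) = p$. In particular the two windows starting at $j-1$ and $j$ have the same shortest period $p \le \tfrac13\tau$.

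From equal periods the equality $\Lroot(j-1) = \Lroot(j)$ is immediate from the definition $\Lroot(i) = \min\{\T[i+t \dd i+t+p) : t \in [0 \dd p)\}$: the window $\T[j-1 \dd j-1+3\tau-1)$ and the window $\T[j \dd j+3\tau-1)$ overlap in $3\tau-2$ positions, which is $\ge p$ (indeed $\ge 2p$), and a string of length $\ge 2p$ with period $p$ has exactly $p$ rotations occurring as length-$p$ substrings of it, the same multiset for both windows; hence the two minima coincide. Next, for $\rend{j-1} = \rend{j}$ I would use the definition $\rend{i} = \min\{i' \ge i : i' \notin \R\} + 3\tau - 2$: since $j-1 \in \R$, the first non-periodic position at or after $j-1$ is the same as the first one at or after $j$ (it cannot be $j-1$ itself), so the two ``$\min$'' values agree and therefore so do the $\rend{}$ values. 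Equivalently, one can invoke \cref{lm:run-end}\eqref{lm:run-end-it-2}: $\rend{j-1} = (j-1) + p + \LCE(j-1, j-1+p)$ and $\rend{j} = j + p + \LCE(j, j+p)$, and since $\T[j-1] = \T[j-1+p]$ (as $j-1$ has period $p$ in its window and $p \le \tfrac13\tau < 3\tau-1$), we get $\LCE(j-1, j-1+p) = 1 + \LCE(j, j+p)$, which cancels the shift.

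For the remaining three equalities I would argue through the L-decomposition. Having established $H := \Lroot(j-1) = \Lroot(j)$ and $e := \rend{j-1} = \rend{j}$, consider $\T[j-1 \dd e)$; its L-decomposition is $H'_1 H^{k_1} H''_1$ and that of $\T[j \dd e)$ is $H'_2 H^{k_2} H''_2$, with $\Ltail(j-1) = |H''_1|$, $\Ltail(j) = |H''_2|$. Since $\T[j \dd e)$ is obtained from $\T[j-1 \dd e)$ by deleting the first character, and both strings end at the same position $e$, they have the same suffix structure relative to $H$; concretely, $\T[j-1 \dd e)$ has period $p = |H|$ (shown above, as the whole run from $j-1$ to $e-1$ is $p$-periodic), so $\T[j-1 \dd e)$ and $\T[j \dd e)$ are both prefixes of $H'\cdot H^\infty$ for a common suffix $H'$ of $H$ (up to the head length), and in particular $\T[j \dd e) = (\T[j-1 \dd e))[2 \dd]$ shares the same ending alignment. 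The tail $H''$ is determined by $(e - \text{(start)} - |H'|) \bmod |H| = $ the distance from the last full copy of $H$ to $e$; since $e$ and $H$ are common and the head length drops by exactly $1$ when we advance the start by $1$ (unless a full copy is consumed, which only shifts $k$ by $1$ and leaves $H''$ untouched), we get $|H''_1| = |H''_2|$, i.e.\ $\Ltail(j-1) = \Ltail(j)$. Then $\rendfull{i} = \rend{i} - \Ltail(i)$ gives $\rendfull{j-1} = \rendfull{j}$ directly from the two equalities just proved. Finally $\type(i)$ is defined purely in terms of $\rend{i}$ and $|\Lroot(i)|$: $\type(i) = +1$ iff $\rend{i} \le n$ and $\T[\rend{i}] \succ \T[\rend{i} - |\Lroot(i)|]$; since $\rend{j-1} = \rend{j} = e$ and $|\Lroot(j-1)| = |\Lroot(j)| = p$, the two conditions are literally the same, so $\type(j-1) = \type(j)$.

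The main obstacle I anticipate is the careful bookkeeping for $\Ltail$: one must verify that advancing the start position by one either decreases $\Lhead$ by one (keeping $\Lexp$ and $\Ltail$ fixed) or, when $\Lhead$ was $0$, decreases $\Lexp$ by one and sets $\Lhead$ to $|H|-1$ (again keeping $\Ltail$ fixed). This is a routine but slightly fussy case analysis, and it hinges on the \emph{uniqueness} of the L-decomposition (the synchronization property of primitive strings, \cite[Lemma~1.11]{AlgorithmsOnStrings}) to rule out any ambiguity; all the other equalities follow essentially formally once equal period and equal $\rend{}$ are in hand.
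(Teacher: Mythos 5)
Your proposal is correct and follows essentially the same route as the paper's proof: equal shortest periods of the two overlapping windows via \cref{lm:run-end}\eqref{lm:run-end-it-1}, hence equal $\Lroot$; equal $\rend{\cdot}$ either directly from its definition or via \cref{lm:run-end}\eqref{lm:run-end-it-2} with the one-character shift cancelling; and then the uniqueness of the L-decomposition to see that advancing the start by one only shortens the head (or consumes one full copy of $H$), leaving $\Ltail$, and hence $\rendfull{\cdot}$ and $\type$, unchanged. The case analysis you flag as the "fussy" step is exactly the two-case argument ($|H'|>0$ versus $|H'|=0$) the paper uses.
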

\begin{proof}

  Denote $p = \per(\T[j{-}1 \dd j{-}1{+}3\tau{-}1))$. By
  \cref{lm:run-end}\eqref{lm:run-end-it-1}, it holds $\per(\T[j \dd j
  + 3\tau - 1)) = p$. By $p \leq \tfrac{\tau}{3}$, we thus have $\T[j
  {-} 1 \dd j {-} 1 {+} p) = \T[j {-} 1 {+} p \dd j {-} 1 {+} 2p)$.
  Consequently, $\{\T[j {-} 1 {+} t \dd j {-} 1 {+} t {+} p) : t \in
  [0 \dd p)\} = \{\T[j {+} t \dd j {+} t {+} p) : t \in [0 \dd p)\}$,
  and hence $\Lroot(j - 1) = \Lroot(j)$.

  Denote $p' = \per(\T[j \dd j {+} 3\tau {-} 1))$.  By
  \cref{lm:run-end}\eqref{lm:run-end-it-2}, $\rend{j - 1} = j - 1 + p
  + \LCE(j - 1, j - 1 + p)$ and $\rend{j} = j + p' + \LCE(j, j + p')$.
  Thus, by $p = p'$ (following by the above) and $\T[j - 1] = \T[j - 1
  + p]$, we have $\rend{j - 1} = j - 1 + p + \LCE(j - 1, j - 1 + p) =
  j + p + \LCE(j, j + p) = j + p' + \LCE(j, j + p') = \rend{j}$.

  Assume $\T[j-1 \dd \rend{j-1}) = H'H^{k}H''$, where $H =
  \Lroot(j{-}1)$, $|H'| = \Lhead(j{-}1)$, and $|H''| =
  \Ltail(j{-}1)$. By $\rend{j{-}1} = \rend{j}$ and the uniqueness of
  L-decomposition, this implies that either $\T[j \dd \rend{j}) = H'[2
  \dd |H'|]H^kH''$ (if $|H'|>0$) or $\T[j \dd \rend{j}) = H[2 \dd
  |H|]H^{k-1}H''$ (otherwise) is the L-decomposition of $\T[j \dd
  \rend{j})$. In both cases, $\Ltail(j{-}1) = \Ltail(j) = |H''|$.

  By the above two properties, $\rendfull{j{-}1} = \rend{j{-}1} -
  \Ltail(j{-}1) = \rend{j} - \Ltail(j) = \rendfull{j}$.

  The last claim follows from the definition of type and equalities
  $\rend{j {-} 1} = \rend{j}$ and $p = p'$.
\end{proof}

The above is complemented by the following results establishing the
lower bound on the gap between blocks of positions in $\R$, and that a
mapping from $j$ to $\rendfull{j}$ establishes an injective mapping of
blocks of positions in $\R$ to positions in $\T$.

\begin{lemma}\label{lm:gap}
  Let $j, j', j'' \in [1 \dd n]$ be such that $j, j'' \in \R$, $j'
  \not\in \R$, and $j < j' < j''$. Then, it holds $\rend{j} \leq j'' +
  \tau - 1$ and $j'' - j \geq 2\tau$.
\end{lemma}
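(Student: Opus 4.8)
Here is a plan for proving \cref{lm:gap}.

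\medskip

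\textbf{Proof plan.} Let $j_0 := \min\{i \ge j : i \notin \R\}$, so that $\rend{j} = j_0 + 3\tau - 2$ by the definition of $\rend{\cdot}$. Since $j \in \R$ we have $j_0 \ge j+1$, and since $j' \notin \R$ with $j' \ge j$ we have $j_0 \le j'$; in particular $j < j_0 \le j' < j''$, hence $j_0 \le j'' - 1$ and $[j \dd j_0) \subseteq \R$. Because $j'' \in \R$ forces $j'' \le n - 3\tau + 2$, this already gives $\rend{j} = j_0 + 3\tau - 2 \le (j''-1) + 3\tau - 2 \le n - 1$, so the $p$-periodic run does not reach the end of $\T$. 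Writing $p := \per(\T[j \dd j + 3\tau - 1)) \le \tfrac13\tau$, I will invoke \cref{lm:run-end}\eqref{lm:run-end-it-2} (which asserts $\rend{j} = j + p + \LCE(j, j+p)$): unwinding the definition of $\LCE$ and using $\rend{j}\le n$, this yields that $\T[j \dd \rend{j})$ has period $p$ while $\T[\rend{j}] \ne \T[\rend{j} - p]$ (equivalently, the run cannot be extended past $\rend{j}$, i.e.\ $j_0\notin\R$).

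The heart of the argument is the bound $\rend{j} \le j'' + \tau - 1$, which I prove by contradiction: suppose $\rend{j} \ge j'' + \tau$. Combined with $\rend{j} \le j'' + 3\tau - 3$ (from $j_0 \le j'' - 1$), this places $\rend{j}$ strictly inside the window $\T[j'' \dd j'' + 3\tau - 1)$, and it makes its prefix $V := \T[j'' \dd \rend{j})$ have length $\rend{j} - j'' \ge \tau$. Now $V$ is a suffix of the $p$-periodic string $\T[j \dd \rend{j})$, hence has period $p$; and, since $j'' \in \R$, it is a prefix of $\T[j'' \dd j''+3\tau-1)$, a string of period $q := \per(\T[j'' \dd j''+3\tau-1)) \le \tfrac13\tau$, hence also has period $q$. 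As $|V| \ge \tau \ge p + q$, the Weak Periodicity Lemma~\cite{fine1965uniqueness} gives that $V$ has period $g := \gcd(p,q)$. Then $\T[\rend{j}] = \T[\rend{j} - q]$ by the period $q$ of the window (all indices in range because $\rend{j}-j''\ge\tau\ge q$ and $\rend{j}\le j''+3\tau-2$), and $\T[\rend{j} - q] = \T[\rend{j} - p]$ because both of these indices lie in $[j'' \dd \rend{j})$ (the domain of $V$) and their difference $p - q$ is a multiple of $g$. Hence $\T[\rend{j}] = \T[\rend{j}-p]$, contradicting the previous paragraph. The second inequality then follows at once: from $\rend{j} \le j'' + \tau - 1$ and $\rend{j} = j_0 + 3\tau - 2 \ge (j+1) + 3\tau - 2$ we obtain $j + 3\tau - 1 \le j'' + \tau - 1$, i.e.\ $j'' - j \ge 2\tau$.

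The step I expect to be the main obstacle is the index bookkeeping in the contradiction argument: I must check that each period relation ($\T[a]=\T[a\pm p]$, $\T[a]=\T[a\pm q]$, and $\T[a]=\T[b]$ via period $g$) is applied only to indices that genuinely lie in the range of the relevant substring, and in particular that $\rend{j}$ falls in the \emph{interior} of the window $\T[j''\dd j''+3\tau-1)$ rather than just past it. All of these verifications reduce to the elementary inequalities $p,q \le \tfrac13\tau$, $j_0 \le j'' - 1$, and the assumption $\rend{j} - j'' \ge \tau$, so no further combinatorics is needed once the setup is in place; besides \cref{lm:run-end}, the only external ingredient is the Weak Periodicity Lemma.
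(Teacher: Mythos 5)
Your argument is correct, but it takes a genuinely different route from the paper's. The paper introduces $r = \min\{i \in (j' \dd j''] : i \in \R\}$, notes $r \in \R'$ (so the periodic fragment $\T[r \dd \rend{r})$ is maximal to the left while $\T[j \dd \rend{j})$ is maximal to the right), and then cites an external black-box fact about maximal periodic fragments (Fact~2.2.4 in the cited thesis) to conclude that the two fragments overlap by fewer than $2\lfloor\tau/3\rfloor$ positions, which yields $\rend{j} \le j'' + \tau - 1$; the second inequality then follows exactly as in your last step. You instead bound the overlap directly: using $j_0 \le j' < j''$ to place $\rend{j}$ inside the window $\T[j'' \dd j''+3\tau-1)$ (and within $[1\dd n]$, so the mismatch character from \cref{lm:run-end}\eqref{lm:run-end-it-2} exists), you assume $\rend{j} \ge j''+\tau$, apply the Weak Periodicity Lemma to $V=\T[j''\dd \rend{j})$ with periods $p,q \le \tau/3$ and $|V|\ge \tau \ge p+q$, and derive $\T[\rend{j}] = \T[\rend{j}-q] = \T[\rend{j}-p]$, contradicting right-maximality of the run at $j$. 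Your index bookkeeping checks out ($\rend{j}-p,\rend{j}-q \ge j''$ and $\rend{j}\le j''+3\tau-3$ all follow from $p,q\le\tau/3$ and $j_0\le j''-1$). What each approach buys: the paper's proof is shorter but leans on an external fact about overlaps of runs, whereas yours is self-contained, reusing only \cref{lm:run-end} and the Weak Periodicity Lemma already invoked elsewhere in the paper; note also that your route does not actually need the left-maximality of any run starting after $j'$, only that some position of $\R$ (namely $j''$) lies beyond the gap, which is a mild simplification of the structure of the argument.
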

\begin{proof}
  Let $r = \min\{i \in (j' \dd j'']: i \in \R\}$. Then, $r \in \R'$.
  Observe that by \cref{lm:run-end} (resp.\ by $r - 1 \not\in \R$), it
  holds $\per(\T[j \dd \rend{j})) \leq \lfloor \tfrac{1}{3}\tau
  \rfloor$ (resp.\ $\per(\T[r \dd \rend{r})) \leq \lfloor
  \tfrac{1}{3}\tau \rfloor$), $\rend{j} - j \geq 3\tau - 1$
  (resp.\ $\rend{r} - r \geq 3\tau - 1$), and the substring $\T[j \dd
  \rend{j})$ (resp.\ $\T[r \dd \rend{r}$) cannot be extended in $\T$
  to the right (resp.\ left) without changing its shortest
  period. By~\cite[Fact~2.2.4]{phdtomek}, the fragments $\T[j \dd
  \rend{j})$ and $\T[r \dd \rend{r})$ must therefore overlap by less
  than $2\lfloor \frac13 \tau \rfloor$ symbols. In other words,
  $\rend{j} - r < 2\lfloor \tfrac{1}{3}\tau \rfloor$. By $r \leq j''$
  we thus obtain $\rend{j} \leq r + 2\lfloor \tfrac{1}{3}\tau \rfloor
  \leq j'' + \tau - 1$, i.e., the first claim.  Equivalently, we can
  state that $j'' \geq \rend{j} - \tau + 1$.  By combining this with
  $\rend{j} - j \geq 3\tau - 1$, we then obtain $j'' \geq \rend{j} -
  \tau + 1 \geq j + 3\tau - 1 - \tau + 1 = j + 2\tau$, i.e., the
  second claim.
\end{proof}

\begin{lemma}\label{lm:efull}
  For any $j, j' \in \R'$, $j \neq j'$ implies $\rendfull{j} \neq
  \rendfull{j'}$.
\end{lemma}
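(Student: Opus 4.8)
The plan is to argue by contradiction: suppose $j, j' \in \R'$ with $j \neq j'$ but $\rendfull{j} = \rendfull{j'}$; without loss of generality assume $j < j'$. I would first note that since $j, j' \in \R'$, each of them begins a maximal block of positions in $\R$ (i.e.\ a $\tau$-run), and by \cref{lm:run-end}\eqref{lm:run-end-it-2} together with the surrounding discussion, the fragment $\T[j \dd \rend{j})$ is a $\tau$-run with shortest period $p := |\Lroot(j)| \le \tfrac13\tau$, and similarly $\T[j' \dd \rend{j'})$ is a $\tau$-run with shortest period $p' := |\Lroot(j')| \le \tfrac13\tau$. Recall $\rendfull{j} = j + \Lhead(j) + \Lexp(j)\cdot p$, so $\rendfull{j}$ is the endpoint of the last full occurrence of $\Lroot(j)$ inside the run starting at $j$; in particular $\rendfull{j} \le \rend{j}$ and $\rendfull{j} - j \ge 3\tau - 1 - \Ltail(j) \ge 3\tau - 1 - p \ge 3\tau - 1 - \tfrac13\tau$, which is comfortably larger than $2\tau$.

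Next I would use \cref{lm:gap}. Since $j < j'$, there is at least one position strictly between the two runs that is not in $\R$ (the runs are distinct maximal blocks, as both $j$ and $j'$ start a block and $j \ne j'$), so \cref{lm:gap} applies with the middle position being some $j'' \notin \R$ in $(j \dd j')$, giving $\rend{j} \le j' + \tau - 1 < j' + \tau$. Hence the run $\T[j \dd \rend{j})$ ends before position $j' + \tau$. But $\rendfull{j'} = \rendfull{j} \le \rend{j} \le j' + \tau - 1$, while at the same time $\rendfull{j'} - j' \ge 3\tau - 1 - p' \ge 3\tau - 1 - \tfrac13\tau > \tau - 1$. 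Combining these two inequalities yields $\rendfull{j'} \le j' + \tau - 1 < \rendfull{j'}$, a contradiction.

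The main obstacle—and the only real content—is making precise the claim that $\rendfull{j} - j$ is bounded below by something larger than $\tau$ (so that the gap bound from \cref{lm:gap} is violated). This rests on the definition of $\rendfull{j} = \rend{j} - \Ltail(j)$ together with $\Ltail(j) = |H''| < |H| = p \le \tfrac13\tau$ and $\rend{j} - j \ge 3\tau - 1$ (the latter being part of \cref{def:sss} / the $\tau$-run definition). I would also double-check the edge case where the run starting at $j$ reaches the end of the text ($\rend{j} = n+1$): in that situation no position $j'' > \rend{j}$ exists, so there can be no $j' \in \R'$ with $j' > $ the run, forcing $j' < j$ instead—handled symmetrically. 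Everything else is routine arithmetic with $\tau \ge 1$.
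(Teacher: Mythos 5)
Your proof is correct and follows essentially the same route as the paper: apply \cref{lm:gap} to the triple $j < j'-1 < j'$ (note $j'-1 \notin \R$ since $j' \in \R'$) to get $\rendfull{j} \le \rend{j} \le j' + \tau - 1$, and lower-bound $\rendfull{j'} - j' \ge 3\tau - 1 - \lfloor\tfrac13\tau\rfloor > \tau - 1$; the paper states this directly as $\rendfull{j} < \rendfull{j'}$ rather than via contradiction. One unused aside is slightly off — $3\tau - 1 - \tfrac13\tau$ exceeds $2\tau$ only when $\tau \ge 2$ — but your actual argument only needs it to exceed $\tau - 1$, which always holds.
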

\begin{proof}
  Assume without the loss of generality that $j < j'$. Then, $j' - 1
  \not\in \R$. By \cref{lm:gap} applied for $j$, $j' - 1$, and $j'$ we
  obtain $\rend{j} \leq j' + \tau - 1$. Consequently, $\rendfull{j}
  \leq \rend{j} \leq j' + \tau - 1$.  Let now $r' = \min\{t \in (j'
  \dd n] : t \not\in \R\}$.  We then have $\rend{j'} = r' + 3\tau -
  2$. Since for every $t \in \R$, it holds $\rend{t} - \rendfull{t} =
  \Ltail(t) = |\Lroot(t)| \leq \lfloor \tfrac{1}{3}\tau \rfloor$, we
  thus have $\rendfull{j'} \geq \rend{j'} - \lfloor \tfrac{1}{3}\tau
  \rfloor = r' + 3\tau - 2 - \lfloor \tfrac{1}{3}\tau \rfloor \geq j'
  + 2\tau - 1$.  Combining this with the earlier upper bound on
  $\rendfull{j}$, we thus obtain $\rendfull{j} \leq j' + \tau - 1 < j'
  + 2\tau - 1 \leq \rendfull{j'}$. In particular, $\rendfull{j} \neq
  \rendfull{j'}$.
\end{proof}

\subsubsection{The Data Structure}\label{sec:sa-periodic-ds}

\bfparagraph{Definitions}

Let $q = |\R'^{-}|$ and let $(\rtextm_i)_{i \in [1 \dd q]}$ be a
sequence containing all elements of $\R'^{-}$ in sorted order, i.e,
for any $i, i' \in [1 \dd q]$, $i < i'$ implies $\rtextm_i <
\rtextm_{i'}$. Let $(\rlexm_i)_{i \in [1 \dd q]}$ also be a sequence
containing all elements $k \in \R'^{-}$, but sorted first according to
$\Lroot(k)$ and in case of ties, by $\T[\rendfull{k} \dd
n]$. Formally, for any $i, i' \in [1 \dd q]$, $i < i'$ implies that
$\Lroot(\rlexm_i) \prec \Lroot(\rlexm_{i'})$, or $\Lroot(\rlexm_{i}) =
\Lroot(\rlexm_{i'})$ and $\T[\rendfull{\rlexm_i} \dd n] \prec
\T[\rendfull{\rlexm_{i'}} \dd n]$. Note that by \cref{lm:efull}, the
sequence $(\rlexm_i)_{i \in [1 \dd q]}$ is well-defined.  Based on
$(\rlexm_i)_{i \in [1 \dd q]}$ we define the sequence of integers
$(\ell_i)_{i \in [1 \dd q]}$ as $\ell_i = \rendfull{\rlexm_i} -
\rlexm_i$.

Let $\LTroot$ denote the mapping from $\Alphabet^{3\tau - 1}$ to
$\N^2$ such that for any $X \in \Alphabet^{3\tau - 1}$ satisfying
$\per(X) \leq \tfrac{1}{3}\tau$, $\LTroot$ maps $X$ to a pair $(s,
p)$, where $p = \per(X)$ and $s \in [0 \dd p)$ is such that $X[1 {+} s
\dd 1 {+} s {+} p) = \min\{X[1 + t \dd 1 + t + p) : t \in [0 \dd
p)\}$.  We also define $\LTminexp: \Alphabet^{3\tau - 1} \to [1 \dd
n]$ as the mapping such that for every $X \in \Alphabet^{3\tau - 1}$
satisfying $\per(X) \leq \tfrac{1}{3}\tau$, if we let $p = \per(X)$,
$H = \min\{X[1 + t \dd 1 + t + p) : t \in [0 \dd p)\}$ and $s \in [0
\dd p)$ be such that $X[1{+}s \dd 1 {+} s {+} p) = H$, then assuming
$\R^{-}_{s,H} \neq \emptyset$, $\LTminexp$ maps $X$ to $\min
\{\Lexp(j) : j \in \R^{-}_{s,H}\}$.  Let $\LTruns$ be a
mapping, such that for every $H \in \Alphabet^{\leq \tau}$ and every
$H' \in \Alphabet^{\leq \tau}$, $\LTruns$ maps the pair $(H,H')$ to
$(b, e)$ defined by $b = |\{k \in \R'^{-} : \Lroot(k) \prec H\text{,
or }\Lroot(k) = H\text{ and }\T[\rendfull{k} \dd n] \prec H'\}|$ and
$e = b + |\{k \in \R'^{-}_H : H'\text{ is a prefix of }\T[\rendfull{k}
\dd n]\}|$. Note that then the set $\{\rlexm_{i} : i \in (b \dd e]\}$
consists of all positions $k \in \R'^{-}_H$ such that $H'$ is a prefix
of $\T[\rendfull{k} \dd n]$. In particular, every $(H, \emptystring)$
maps to a pair $(b, e)$ such that $e = \sum_{H' \preceq
H}|\R'^{-}_{H'}|$. For any $\ell > 0$, $H \in \Alphabet^{+}$, and $s
\in [0 \dd |H|)$, we define $\Pref_{\ell}(s, H)$ as the length-$\ell$
prefix of $H'\cdot H^{\infty}[1 \dd)$, where $H'$ is a length-$s$
suffix of $H$. Let $\LTpref$ denote the mapping that, given the
pair $(H, s)$, where $H \in \Alphabet^{\leq \tau}$, and $s
\in [0 \dd |H|)$, returns the packed encoding of
$\Pref_{3\tau - 1}(s, H)$.

Let $\BVexp[1 \dd n]$ be a bitvector such that for every $i \in [1 \dd
n]$, it holds $\BVexp[i] = 0$ if and only if $\SA[i] \in [1 \dd n]
\setminus \R^{-}$, or $i < n$ and the positions $j=\SA[i]$ and
$j'=\SA[i+1]$ satisfy $j,j' \in \R^{-}_{s,H}$ and $\Lexp(j) =
\Lexp(j')$ for some $H \in \Lroots$ and $s \in [0 \dd |H|)$.  Let
$\BVRprim[1 \dd n]$ be a bitvector defined such that $\BVRprim[i] = 1$
holds if and only if $i \in \R'$.

Let $\ARRnontail[1 \dd q]$ by an array defined by $\ARRnontail[i] =
\ell_i$.  Let $\ARRrmap[1 \dd q]$ be an array containing a permutation
of $[1 \dd q]$ such that $\ARRrmap[i] = i'$ holds if and only if
$\rtextm_i = \rlexm_{i'}$. By $\ARRrinvmap[1 \dd q]$ we denote an
array containing a permutation of $[1 \dd q]$ such that
$\ARRrinvmap[i'] = i$ holds if and only if $\rtextm_i = \rlexm_{i'}$.

\bfparagraph{Components}

The data structure consists of two parts. The first part, designed to
compute $\SA[i]$ (resp.\ $\ISA[j]$) for $i \in [1 \dd n]$ (resp.\ $j
\in [1 \dd n]$) satisfying $\SA[i] \in \R^{-}$ (resp.\ $j \in \R^{-}$),
consists of the following eleven components:
\begin{enumerate}
\item $\SACore(\T)$ (\cref{sec:sa-core-ds}). It takes $\bigO(n /
  \log_{\sigma} n)$ space.
\item The lookup table $\LTroot$.  When accessing $\LTroot$, strings
  $X \in \Alphabet^{3\tau - 1}$ are converted to $\Int(X)$. Thus,
  $\LTroot$ needs $\bigO(\sigma^{6\tau}) = \bigO(n^{6\mu}) =
  \bigO(n / \log_{\sigma} n)$ space.
\item The lookup table $\LTminexp$. As above, $\LTminexp$ also needs
  $\bigO(\sigma^{6\tau}) = \bigO(n / \log_{\sigma} n)$ space.
\item The lookup table $\LTruns$. When storing the mapping from the
  key $(H,H')$, we first concatenate $H$ and $H'$ and convert it to an
  integer $x = \Int(HH')$ in the range $[0 \dd \sigma^{6\tau})$. We
  then create a triple $(x, |H|)$ (this contains enough information to
  decode $H$ and $H'$) and injectively map it to a positive integer
  not exceeding $\sigma^{6\tau} \tau < \sigma^{6\tau}\log_{\sigma} n$.
  Thus, $\LTruns$ can be stored using
  $\bigO(n^{6\mu}\log n) = \bigO(n/\log_{\sigma} n)$ space.
\item The lookup table $\LTpref$. When storing the mapping, we convert
  the string $H$ to $\Int(H)$. By $\Int(H) \in [0 \dd \sigma^{6\tau})$
  and $|H| \leq \tau$, each pair $(\Int(H), s)$ can then be injectively
  encoded as an integer in the range of size $\sigma^{6\tau} \tau <
  \sigma^{6\tau}\log_{\sigma} n$ and hence $\LTpref$ needs
  $\bigO(n^{6\mu} \log n) = \bigO(n/\log_{\sigma} n)$ space.
\item The bitvector $\BVexp$ augmented using \cref{th:binrksel}. It
  needs $\bigO(n / \log n)$ space.
\item The bitvector $\BVRprim$ augmented using \cref{th:binrksel}. It
  needs $\bigO(n / \log n)$ space.
\item The array $\ARRnontail[1 \dd q]$ augmented with a structure
  from~\cref{pr:range-queries}. To analyze its space usage, consider
  any $j_1, j_2, j_3 \in \R'$ such that $j_1 < j_2 < j_3$.  Then, $j_2
  - 1 \not\in \R$ and $j_3 - 1 \not\in \R$. By \cref{lm:gap} applied
  first for $j_1$, $j_2 - 1$, and $j_2$ we have $\rend{j_1} \leq j_2 +
  \tau - 1$. Applying it again for $j_2$, $j_3 - 1$, and $j_3$, we
  obtain $j_3 - j_2 \geq 2\tau$, or equivalently, $j_2 \leq j_3 -
  2\tau$.  Combining the two inequalities, we thus obtain that
  $\rend{j_1} \leq j_3 - \tau - 1 < j_3$.  This implies that each
  position of $\T$ belongs to at most two intervals in the collection
  $\{[j \dd \rend{j}) : j \in \R'\}$, and consequently,
  $\sum_{i=1}^{q}\ell_i \leq 2n$. On the other hand, by \cref{lm:gap},
  for every $j, j' \in \R'$, $j \neq j'$ implies $|j' - j| \geq
  2\tau$.  Thus, $q = \bigO(n / \tau) = \bigO(n / \log_{\sigma}
  n)$. The array $A$ augmented using \cref{pr:range-queries} thus
  needs $\bigO(n / \log_{\sigma} n)$ space.
\item The array $\ARRrmap$ in plain form using $\bigO(1 + q) =
  \bigO(n / \log_{\sigma} n)$ space.
\item The array $\ARRrinvmap$ in plain form using $\bigO(1 + q) =
  \bigO(n / \log_{\sigma} n)$ space
\item The $\bigO(n / \log_{\sigma} n)$-space data structure
  from~\cite[Theorem~5.4]{sss} that, given any $i,i' \in [1 \dd n]$,
  returns $\LCE(i,i')$ in $\bigO(1)$ time.
\end{enumerate}

The second part of the structure, designed to compute $\SA[i]$
(resp.\ $\ISA[j]$) for $i \in [1 \dd n]$ (resp.\ $j \in [1 \dd n]$)
satisfying $\SA[i] \in \R^{+}$ (resp.\ $j \in \R^{+}$), consists of
the symmetric counterparts adapted according to \cref{lm:lce}.

In total, the data structure takes $\bigO(n / \log_{\sigma} n)$ space.

\subsubsection{Navigation Primitives}\label{sec:sa-periodic-nav}

\begin{proposition}\label{pr:isa-root}
  Given the data structure from \cref{sec:sa-periodic-ds} and any
  position $j \in \R$, we can in $\bigO(1)$ time compute the values
  $\Lroot(j)$, $\Lhead(j)$, $\Lexp(j)$, $\Ltail(j)$, and $\type(j)$.
\end{proposition}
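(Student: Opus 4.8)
The plan is to show that, given $j \in \R$, all five quantities can be extracted in $\bigO(1)$ time by first reading the length-$(3\tau-1)$ window $X = \T[j \dd j+3\tau-1)$ from the packed representation of $\T$, and then combining a few lookup tables with a single $\LCE$ query. First I would extract $X$ in $\bigO(1)$ time (this is possible because $3\tau-1 = \bigO(\log_\sigma n)$, so $X$ fits in $\bigO(1)$ machine words) and convert it to $x = \Int(X) \in [0\dd\sigma^{6\tau})$. Querying $\LTroot$ at $x$ yields the pair $(s,p)$ with $p = \per(X)$ and $X[1{+}s\dd 1{+}s{+}p) = \min\{X[1{+}t\dd1{+}t{+}p):t\in[0\dd p)\}$; hence $\Lhead(j) = s$ and, since $\T[j\dd j+3\tau-1)=X$ has $\Lroot(j)$ occurring at offset $s$, we get $\Lroot(j) = X[1{+}s\dd 1{+}s{+}p)$, which we can materialize in packed form in $\bigO(1)$ time (e.g.\ via $\LTpref$, or directly by a shift-and-mask on $X$). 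Note $|\Lroot(j)| = p$.

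Next I would compute $\rend{j}$. By \cref{lm:run-end}\eqref{lm:run-end-it-2}, $\rend{j} = j + p + \LCE(j, j+p)$, and the $\LCE$ value is obtained in $\bigO(1)$ time using the data structure of~\cite[Theorem~5.4]{sss} stored as component~11 (taking care of the boundary case $j+p > n$, where $\rend{j} = n+1$; here the formula still holds with the convention $\LCE(j,n+1)=0$, or one handles it explicitly). With $\rend{j}$ and $\Lroot(j)$ in hand, the L-decomposition $\T[j\dd\rend{j}) = H'H^kH''$ with $H = \Lroot(j)$, $|H'| = s$, is uniquely determined, so from $|{\T[j\dd\rend{j})}| = \rend{j}-j$ we read off $\Lexp(j) = k = \lfloor (\rend{j}-j-s)/p \rfloor$ and $\Ltail(j) = |H''| = (\rend{j}-j-s) \bmod p$, each in $\bigO(1)$ arithmetic. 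Finally, $\type(j)$ is decided by its definition: $\type(j) = +1$ iff $\rend{j} \le n$ and $\T[\rend{j}] \succ \T[\rend{j}-p]$; both characters are fetched from the packed representation in $\bigO(1)$ time, and $\type(j) = -1$ otherwise.

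The only subtlety — not really an obstacle, but the step that needs the most care — is justifying that $\LTroot$ genuinely recovers $\Lhead(j)$ and $\Lroot(j)$ from $X$ alone: this is exactly where the synchronization property of primitive strings~\cite[Lemma~1.11]{AlgorithmsOnStrings} enters, guaranteeing that the minimal rotation and its offset $s$ within the window are unambiguous (the same argument already used in the proof of \cref{lm:lce} to show $\Lhead(j')=\Lhead(j)$). Once that is in place, every step is a constant number of table lookups, packed-word manipulations, arithmetic operations, and one $\LCE$ query, so the total running time is $\bigO(1)$, as claimed.
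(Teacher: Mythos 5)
Your proposal is correct and follows essentially the same route as the paper's proof: extract the packed window $X=\T[j\dd j+3\tau-1)$, look up $(s,p)=\LTroot[\Int(X)]$ to get $\Lhead(j)$ and $\Lroot(j)$, compute $\rend{j}=j+p+\LCE(j,j+p)$ via \cref{lm:run-end} and the stored LCE structure, derive $\Lexp(j)$ and $\Ltail(j)$ arithmetically, and decide $\type(j)$ by comparing $\T[\rend{j}]$ with $\T[\rend{j}-p]$. The extra remarks (the synchronization-property justification and the $j+p>n$ boundary, which in fact cannot occur for $j\in\R$ since $p\le\tfrac13\tau$ and $j\le n-3\tau+2$) are harmless additions.
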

\begin{proof}
  We first compute $x \in [0 \dd \sigma^{6\tau})$ such that $x =
  \Int(\T[j \dd j + 3\tau - 1))$. Given the packed encoding of text
  $\T$, such $x$ is obtained in $\bigO(1)$ time. We then look up $(s,
  p) = \LTroot[x]$, and in $\bigO(1)$ time obtain $\Lroot(j) = \T[j
  {+} s \dd j {+} s {+} p)$ and $\Lhead(j) = s$.  Next, we compute
  $\Lexp(j)$ and $\Ltail(j)$. For this we recall that by
  \cref{lm:run-end}\eqref{lm:run-end-it-2}, it holds $\rend{j} = j + p
  + \LCE(j,j+p)$. Thus, given $j$ and $p$, we can compute $\rend{j}$
  in $\bigO(1)$ time. We then obtain $\Lexp(j) = \lfloor
  \frac{\rend{j} - j - s}{p} \rfloor$ and $\Ltail(j) = (\rend{j} - j -
  s)\bmod p$.  Finally, to test if $\type(j) = +1$, we check whether
  $\rend{j} \leq n$, and if so, whether $\T[\rend{j}] \succ
  \T[\rend{j} - p]$.
\end{proof}

\begin{proposition}\label{pr:sa-root}
  Let $i \in [1 \dd n]$ be such that $\SA[i] \in \R$. Given the data
  structure from \cref{sec:sa-periodic-ds} and the index $i$, in
  $\bigO(1)$ time we can compute $\Lroot(\SA[i])$ and
  $\Lhead(\SA[i])$.
\end{proposition}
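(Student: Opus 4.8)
The plan is to reconstruct the length-$(3\tau-1)$ prefix of $\T[\SA[i]\dd n]$ from the components of $\SACore(\T)$ and then read off the two requested quantities from a single lookup in $\LTroot$.

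First I would retrieve that prefix exactly as in the proof of \cref{pr:sa-core-sa}: compute $y = \rank{\BVshort}{1}{i-1}$ in $\bigO(1)$ time and set $X := \ARRshort[y+1]$. Decoding $X$ from its integer encoding $\Int(X)$ takes $\bigO(1)$ time because $|X|\le 3\tau-1 = \bigO(\log_\sigma n)$, so $X$ occupies $\bigO(1)$ machine words. Here $X$ is the prefix of $\T[\SA[i]\dd n]$ of length $\min(3\tau-1,\,n+1-\SA[i])$; since $\SA[i]\in\R\subseteq[1\dd n-3\tau+2]$, this length is exactly $3\tau-1$, i.e.\ $X = \T[\SA[i]\dd \SA[i]+3\tau-1)$.

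Next, $\SA[i]\in\R$ gives $\per(X)\le\tfrac13\tau$, so $\LTroot$ is defined on $\Int(X)$, and I would look up $(s,p):=\LTroot[\Int(X)]$, again in $\bigO(1)$ time. By the definition of $\LTroot$ we have $p=\per(X)$ and $X[1+s\dd 1+s+p)=\min\{X[1+t\dd 1+t+p):t\in[0\dd p)\}$. Since $p=\per(\T[\SA[i]\dd \SA[i]+3\tau-1))$ and every window $X[1+t\dd 1+t+p)$ with $t\in[0\dd p)$ lies inside $X$ (because $1+t+p\le 2p\le\tfrac23\tau<3\tau-1$ for $\tau\ge 1$), this minimum window coincides with $\Lroot(\SA[i])=\min\{\T[\SA[i]+t\dd \SA[i]+t+p):t\in[0\dd p)\}$, and $s=\Lhead(\SA[i])$. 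I return $\Lhead(\SA[i])=s$ directly and obtain the packed encoding of $\Lroot(\SA[i])=X[1+s\dd 1+s+p)$ by extracting that substring from the packed form of $X$ with $\bigO(1)$ word operations.

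The one point that needs a sentence of justification, rather than being purely mechanical, is that the values produced by $\LTroot$ --- which is defined solely in terms of the abstract string $X$ --- indeed equal $\Lroot(\SA[i])$ and $\Lhead(\SA[i])$, which are defined in terms of the occurrence of that string at position $\SA[i]$ in $\T$; this holds because $X$ is literally the window $\T[\SA[i]\dd \SA[i]+3\tau-1)$, so $\per(X)$ and the lexicographically smallest rotation of its period depend only on that window. No genuine obstacle is expected beyond this bookkeeping and the constant-time table/bit manipulations.
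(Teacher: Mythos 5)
Your proposal is correct and follows essentially the same route as the paper's proof: a rank query on $\BVshort$ to fetch $X = \ARRshort[y+1] = \T[\SA[i] \dd \SA[i]+3\tau-1)$, followed by a single $\LTroot[\Int(X)]$ lookup yielding $(s,p)$ with $\Lhead(\SA[i]) = s$ and $\Lroot(\SA[i]) = X[1+s \dd 1+s+p)$. The extra bookkeeping you supply (why $|X| = 3\tau-1$ for $\SA[i]\in\R$ and why the table values, defined on the abstract string $X$, coincide with the position-based definitions) is consistent with the paper, which simply leaves these points implicit.
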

\begin{proof}
  We first compute $y = \rank{\BVshort}{1}{i-1}$. The string $X =
  \ARRshort[y + 1]$ is then a prefix of $\T[\SA[i] \dd n]$ of length
  $3\tau - 1$.  Let $x = \Int(X)$.  We then look up $(s, p) =
  \LTroot[x]$, and in $\bigO(1)$ time obtain $\Lroot(\SA[i]) = X[1 {+}
  s \dd 1 {+} s {+} p)$ and $\Lhead(\SA[i]) = s$.
\end{proof}

\subsubsection{Implementation of
  \texorpdfstring{$\ISA$}{ISA} Queries}\label{sec:sa-periodic-isa}

For any $j \in \R$, we define
\[
  \Pos(j) = \{j' \in [1 \dd n] : \LCE(j, j') \geq 3\tau - 1
  \text{ and }\T[j' \dd n] \preceq \T[j \dd n]\},
\]
and denote $\delta(j) = |\Pos(j)|$.

\begin{lemma}\label{lm:isa-delta-1}
  Let $j \in \R$ and $X = \T[j \dd j + 3\tau - 1)$. Then, $\ISA[j] =
  \LB(X, \T) + \delta(j)$.
\end{lemma}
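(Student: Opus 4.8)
The plan is to unwind the definitions, mirroring the argument used for \cref{lm:sa-nonperiodic-isa}\eqref{lm:sa-nonperiodic-isa-it-1}. First I would observe that since $j\in\R\subseteq[1\dd n-3\tau+2]$ (see \cref{def:sss}), the string $X=\T[j\dd j+3\tau-1)$ has length exactly $3\tau-1$, is well-defined, and satisfies $j\in\Occ(X,\T)$. Next, for any $j'\in[1\dd n]$, since $|X|=3\tau-1$ and $X=\T[j\dd j+|X|)$, the condition $\LCE(j,j')\ge 3\tau-1$ is equivalent to $X$ being a prefix of $\T[j'\dd n]$, i.e., to $j'\in\Occ(X,\T)$. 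Plugging this into the definition of $\Pos(j)$ gives $\Pos(j)=\{j'\in\Occ(X,\T):\T[j'\dd n]\preceq\T[j\dd n]\}$, so that $\delta(j)$ counts exactly the occurrences of $X$ in $\T$ whose corresponding suffix is not lexicographically larger than $\T[j\dd n]$.

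The second step invokes the identity $\Occ(\Pat,\T)=\{\SA[i]:i\in(\LB(\Pat,\T)\dd\UB(\Pat,\T)]\}$ recorded in \cref{sec:prelim}: applied with $\Pat=X$, it says that $\SA$ lists the occurrences of $X$ consecutively, occupying positions $(\LB(X,\T)\dd\UB(X,\T)]$ in increasing lexicographic order of the corresponding suffixes, immediately after the $\LB(X,\T)$ suffixes that are $\prec X$. Since $j\in\Occ(X,\T)$, the rank $\ISA[j]$ lies in $(\LB(X,\T)\dd\UB(X,\T)]$, and its offset inside this block equals the number of occurrences $j'\in\Occ(X,\T)$ with $\T[j'\dd n]\preceq\T[j\dd n]$, which by the previous paragraph equals $\delta(j)$. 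Adding the $\LB(X,\T)$ smaller suffixes back in yields $\ISA[j]=\LB(X,\T)+\delta(j)$.

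I do not anticipate a real obstacle here: the statement is a direct consequence of the definition of $\Occ$, the $\SA$/$\Occ$ correspondence from the preliminaries, and the definition of $\Pos$. The only points that warrant a line of care are the well-definedness of $X$ as a genuine length-$(3\tau-1)$ string (guaranteed by $\R\subseteq[1\dd n-3\tau+2]$) and the equivalence ``$\LCE(j,j')\ge 3\tau-1 \iff j'\in\Occ(X,\T)$''; both are immediate. Distinctness of the suffixes of $\T$, used only implicitly, is already baked into the $\SA$/$\Occ$ identity through the standing assumption that $\T[n]=0$ does not occur in $\T[1\dd n)$.
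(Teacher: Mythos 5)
Your proposal is correct and follows essentially the same route as the paper: the key step in both is the equivalence $j'\in\Occ(X,\T)\iff\LCE(j,j')\ge 3\tau-1$, after which the identity $\ISA[j]=\LB(X,\T)+|\{j'\in\Occ(X,\T):\T[j'\dd n]\preceq\T[j\dd n]\}|$ gives the claim; your extra care about well-definedness of $X$ and the $\SA$/$\Occ$ correspondence just spells out what the paper treats as immediate from the definition of $\ISA$.
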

\begin{proof}
  It suffices to observe that $j' \in \Occ(X, \T)$ holds if and only
  if $\LCE(j, j') \geq 3\tau - 1$.  Thus, it holds by definition of
  $\ISA[j]$ that $\ISA[j] = \LB(X, \T) + |\{j' \in \Occ(X, \T) : \T[j'
  \dd n] \preceq \T[j \dd n]\}| = \LB(X,\T) + |\{j' \in [1 \dd n] :
  \LCE(j, j') \geq 3\tau - 1\text{ and }\T[j' \dd n] \preceq \T[j \dd
  n]\}| = \LB(X, \T) + \delta(j)$.
\end{proof}

We focus on computing $\delta(j)$ for $j \in \R^{-}$. The elements of
$\R^{+}$ are processed symmetrically (see the proof of
\cref{pr:sa-periodic-isa}).  For any $H\in \Lroots$, $s\in [0\dd
|H|)$, and $j \in \R^{-}_{s,H}$, we define $\Posa(j) = \{j' \in
\R^{-}_{s,H} : \Lexp(j') \leq \Lexp(j)\}$ and $\Poss(j) = \{j' \in
\R^{-}_{s,H} : \Lexp(j') = \Lexp(j)\text{ and }\T[j' \dd n] \succ \T[j
\dd n]\}$. For any $j \in \R^{-}$, we denote $\deltaa(j) =
|\Posa(j)|$ and $\deltas(j) = |\Poss(j)|$.

\begin{lemma}\label{lm:isa-delta-2}
  For any $j \in \R^{-}$, it holds $\delta(j) = \deltaa(j) -
  \deltas(j)$.
\end{lemma}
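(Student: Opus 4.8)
The plan is to reduce everything to \cref{lm:lce}. Fix $j \in \R^{-}$ and set $H := \Lroot(j)$, $s := \Lhead(j)$, and $p := |H|$, so that $j \in \R^{-}_{s,H}$ (note $s \in [0 \dd p)$ since $H'$ is a proper suffix of $H$ in the L-decomposition). By the first part of \cref{lm:lce}, a position $j' \in [1 \dd n]$ satisfies $\LCE(j, j') \geq 3\tau - 1$ if and only if $j' \in \R_{s,H}$; hence
\[
  \Pos(j) = \{j' \in \R_{s,H} : \T[j' \dd n] \preceq \T[j \dd n]\}.
\]
Since the condition $\Lexp(j') = \Lexp(j)$ appearing in $\Poss(j)$ only strengthens the condition $\Lexp(j') \leq \Lexp(j)$ appearing in $\Posa(j)$, we have $\Poss(j) \subseteq \Posa(j)$, so it suffices to prove the set equality $\Pos(j) = \Posa(j) \setminus \Poss(j)$; taking cardinalities then yields $\delta(j) = \deltaa(j) - \deltas(j)$.

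Before that I would record one arithmetic observation. For every $j' \in \R_{s,H}$, combining $\rendfull{j'} = \rend{j'} - \Ltail(j')$ with $\rendfull{j'} = j' + \Lhead(j') + \Lexp(j') \cdot |\Lroot(j')|$ gives $\rend{j'} - j' = s + \Lexp(j') \cdot p + \Ltail(j')$, where $\Ltail(j') \in [0 \dd p)$ because $H''$ is a \emph{proper} prefix of $H$. Thus the intervals $[\,s + ep \dd s + (e+1)p\,)$ indexed by $e$ are disjoint, and for $j_1, j_2 \in \R_{s,H}$ we get $\Lexp(j_1) < \Lexp(j_2) \implies \rend{j_1} - j_1 < \rend{j_2} - j_2$. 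This is the bridge between the $\Lexp$-comparisons used in $\Posa$ and $\Poss$ and the run-length quantities $t := \rend{j} - j$ and $t' := \rend{j'} - j'$ in terms of which \cref{lm:lce} is phrased.

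Now I would verify both inclusions. For $\Pos(j) \subseteq \Posa(j) \setminus \Poss(j)$: take $j' \in \R_{s,H}$ with $\T[j' \dd n] \preceq \T[j \dd n]$. Since $\type(j) = -1$, \cref{lm:lce}\eqref{lm:lce-it-1} rules out $\type(j') = +1$, so $j' \in \R^{-}_{s,H}$; and if $\Lexp(j') > \Lexp(j)$ then $t' > t$ by the observation, so \cref{lm:lce}\eqref{lm:lce-it-2} would force $\T[j \dd n] \prec \T[j' \dd n]$, a contradiction. Hence $\Lexp(j') \leq \Lexp(j)$, i.e.\ $j' \in \Posa(j)$, and $\T[j' \dd n] \not\succ \T[j \dd n]$ excludes $j' \in \Poss(j)$. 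Conversely, take $j' \in \Posa(j) \setminus \Poss(j)$; then $j' \in \R^{-}_{s,H} \subseteq \R_{s,H}$ and either (i) $\Lexp(j') = \Lexp(j)$ and $\T[j' \dd n] \preceq \T[j \dd n]$, so $j' \in \Pos(j)$ immediately, or (ii) $\Lexp(j') < \Lexp(j)$, so $t' < t$ and \cref{lm:lce}\eqref{lm:lce-it-2} shows $\T[j \dd n] \prec \T[j' \dd n]$ is false; since $j \neq j'$ and distinct suffixes of $\T$ are strictly comparable, $\T[j' \dd n] \prec \T[j \dd n]$ and again $j' \in \Pos(j)$. This gives $\Pos(j) = \Posa(j)\setminus\Poss(j)$, and hence the lemma.

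The only step that needs genuine care is the arithmetic observation — specifically that $\Ltail(j') < p$, so that $\Lexp(j')$ is a monotone function of $\rend{j'} - j'$; once that is in place, everything else is routine bookkeeping, because \cref{lm:lce} has already done all the nontrivial work of comparing two periodic suffixes.
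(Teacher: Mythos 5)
Your proof is correct and follows essentially the same route as the paper: both reduce to \cref{lm:lce}, use the identity $\rend{j'}-j' = s + \Lexp(j')\cdot|H| + \Ltail(j')$ with $\Ltail(j') < |H|$ to translate $\Lexp$-comparisons into comparisons of $\rend{j'}-j'$, and conclude that $\Posa(j)$ decomposes as the disjoint union of $\Pos(j)$ and $\Poss(j)$ (your formulation $\Pos(j)=\Posa(j)\setminus\Poss(j)$ is equivalent). The only cosmetic difference is that in the case $\Lexp(j')<\Lexp(j)$ you argue via the contrapositive of \cref{lm:lce}\eqref{lm:lce-it-2} plus totality of the lexicographic order, whereas the paper applies the same item directly; this changes nothing of substance.
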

\begin{proof}

  We will prove that $\Posa(j)$ is a disjoint union of $\Pos(j)$ and
  $\Poss(j)$. This implies $\delta(j) + \deltas(j) = \deltaa(j)$, and
  consequently, the equality in the claim.

  By \cref{lm:lce}, letting $j \in \R^{-}_{s,H}$, we have $\Pos(j) =
  \{j' \in \R^{-}_{s,H} : \T[j' \dd n] \preceq \T[j \dd n]\}$, and
  moreover, if $j' \in \Pos(j)$, then $\rend{j'} - j' \leq \rend{j} -
  j$. In particular, $\Lexp(j') = \lfloor \tfrac{\rend{j'} - j' -
  s}{|H|} \rfloor \leq \lfloor \tfrac{\rend{j} - j - s}{|H|} \rfloor
  = \Lexp(j)$. Hence, $\Pos(j) \subseteq \Posa(j)$. On the other hand,
  clearly $\Poss(j) \subseteq \Posa(j)$ and $\Poss(j) \cap \Pos(j) =
  \emptyset$. Thus, to obtain the claim, it suffices to show that
  $\Posa(j) \setminus \Poss(j) \subseteq \Pos(j)$.

  Let $j' \in \Posa(j) \setminus \Poss(j)$. Consider two cases.  If
  $\Lexp(j') = \Lexp(j)$, then by definition of $\Poss(j)$, it must
  hold $\T[j' \dd n] \preceq \T[j \dd n]$. Thus, we have $j' \in
  \Pos(j)$. Let us therefore assume $\Lexp(j') < \Lexp(j)$. Then,
  $\rend{j'} - j' = s + \Lexp(j') \cdot |H| + \Ltail(j') < s +
  \Lexp(j')\cdot |H| + |H| \le s + \Lexp(j) \cdot |H| \le s +
  \Lexp(j)\cdot |H| + \Ltail(j) = \rend{j} - j$.  By
  \cref{lm:lce}\eqref{lm:lce-it-2}, this implies $\T[j' \dd n]
  \prec \T[j \dd n]$, and consequently, $j' \in \Pos(j)$.
\end{proof}

\bfparagraph{Computing $\deltaa(j)$}

We now describe the algorithm to compute $\deltaa(j)$ for
$j \in \R^{-}$.

\begin{proposition}\label{pr:isa-delta-a}
  Given the data structure from \cref{sec:sa-periodic-ds} and any $j
  \in \R^{-}$, in $\bigO(1)$ time we can compute $\deltaa(j)$.
\end{proposition}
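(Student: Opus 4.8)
The plan is to express $\deltaa(j)=|\Posa(j)|$ through one rank and one selection query on $\BVexp$, steered by the lookup tables $\LTrange$ and $\LTminexp$ of $\SACore(\T)$ together with \cref{pr:isa-root}. First I would set $X=\T[j\dd j+3\tau-1)$ --- which is well defined since $j\in\R$ forces $j\le n-3\tau+2$ --- and $x=\Int(X)$; because $\per(X)=\per(\T[j\dd j+3\tau-1))\le\tfrac13\tau$, both $\LTrange[x]$ and $\LTminexp[x]$ are meaningful. Using \cref{pr:isa-root} I obtain $v:=\Lexp(j)$, $s:=\Lhead(j)$, $H:=\Lroot(j)$ in $\bigO(1)$ time; then I read $(b,e)=\LTrange[x]=(\LB(X,\T),\UB(X,\T))$ and $e_{\min}:=\LTminexp[x]=\min\{\Lexp(j'):j'\in\R^{-}_{s,H}\}$ (well defined since $j\in\R^{-}_{s,H}$).

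The correctness hinges on locating the $\SA$-range occupied by $\Posa(j)$. By \cref{lm:lce}, $\LCE(j,j')\ge 3\tau-1$ holds exactly when $j'\in\R_{s,H}$, and since $\LCE(j,j')\ge 3\tau-1$ is equivalent to $\T[j'\dd j'+3\tau-1)=X$, i.e.\ to $j'\in\Occ(X,\T)$, the block $\SA(b\dd e]$ equals $\R_{s,H}$. By \cref{lm:lce}\eqref{lm:lce-it-1} the positions of type $-1$ precede those of type $+1$, so $\R^{-}_{s,H}$ forms a prefix $\SA(b\dd e']$ of that block, and by \cref{lm:lce}\eqref{lm:lce-it-2}, scanning $\SA(b\dd e']$ top to bottom, the value $\rend{j'}-j'$ is non-decreasing, hence so is $\Lexp(j')=\lfloor(\rend{j'}-j'-s)/|H|\rfloor$. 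Thus $\R^{-}_{s,H}$ breaks into consecutive maximal blocks of equal $\Lexp$, and by the definition of $\BVexp$ these blocks are exactly the runs delimited by the $1$-bits of $\BVexp$ inside $(b\dd e']$; moreover --- this is the single combinatorial input I must supply --- the exponents realized in $\R^{-}_{s,H}$ form a contiguous interval $\{e_{\min},e_{\min}+1,\dots,e_{\max}\}$. Granting this, $\Posa(j)=\{j'\in\R^{-}_{s,H}:\Lexp(j')\le v\}$ is precisely the union of the first $v-e_{\min}+1$ of these blocks, so $\deltaa(j)=\gamma-b$, where $\gamma$ is the right endpoint of the $(v-e_{\min}+1)$-th block. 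Since $\SA[b]$ (if $b\ge1$) lies outside $\R_{s,H}$ while $\SA[b+1]\in\R_{s,H}$, every $1$-bit of $\BVexp$ at a position $\le b$ closes a block entirely contained in $[1\dd b]$; hence $\gamma$ is the $\big(\rank{\BVexp}{1}{b}+v-e_{\min}+1\big)$-th $1$-bit of $\BVexp$. The algorithm therefore returns $\deltaa(j)=\select{\BVexp}{1}{\rank{\BVexp}{1}{b}+v-e_{\min}+1}-b$, and every operation --- the three table lookups, \cref{pr:isa-root}, and the rank/selection queries of \cref{th:binrksel} --- runs in $\bigO(1)$ time.

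The main obstacle is the combinatorial claim that $\{\Lexp(j'):j'\in\R^{-}_{s,H}\}$ has no gaps (equivalently, that the $\Lexp$-blocks cut out by $\BVexp$ within $\SA(b\dd e']$ correspond order-preservingly to the integers in $[e_{\min}\dd e_{\max}]$). I would prove it run by run: inside a single maximal block of $\R$, the values $\rend{\cdot}$ and $\rendfull{\cdot}$ are constant (\cref{lm:R-block}), so along the run the quantity $\Lhead(\cdot)+\Lexp(\cdot)\cdot|H|$ drops by one at each step; restricting to the positions with $\Lhead=s$, the exponent then drops by exactly one every $|H|$ steps, so each run contributes a contiguous block of exponents whose bottom endpoint is $\lfloor(3\tau-1)/|H|\rfloor$ up to an additive constant, essentially independent of the run. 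Since all these blocks are anchored near the same lower endpoint and extend upward, their union is again an interval. Everything else --- extracting $v,s,H$, the table lookups, and assembling the two bitvector queries --- is routine bookkeeping.
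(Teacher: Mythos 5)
Your algorithm and final formula coincide exactly with the paper's: look up $(b_X,e_X)=\LTrange[\Int(X)]$, obtain $s$, $H$, $k=\Lexp(j)$ via \cref{pr:isa-root}, read $k_{\min}=\LTminexp[\Int(X)]$, and return $\select{\BVexp}{1}{\rank{\BVexp}{1}{b_X}+(k-k_{\min})+1}-b_X$, justified by the same structural facts (\cref{lm:lce} giving $\{\SA[i]\}_{i\in(b_X\dd e_X]}=\R_{s,H}$ with type $-1$ preceding type $+1$ and non-decreasing $\rend{\cdot}-\cdot$, hence non-decreasing $\Lexp$, together with the definition of $\BVexp$). The only place you genuinely diverge is the combinatorial ingredient feeding the select count: you set out to prove that the \emph{entire} set $\{\Lexp(j'):j'\in\R^{-}_{s,H}\}$ is a contiguous interval, via a run-by-run analysis, whereas the paper proves only what the formula needs, namely $[k_{\min}\dd k]\sub\{\Lexp(j'):j'\in\R^{-}_{s,H}\}$, and does so in one line: for $k'\in(k_{\min}\dd k]$ take $j'=j+(k-k')|H|$; minimality of $k_{\min}$ gives $s+k'|H|+\Ltail(j)\ge 3\tau-1$, so $j'$ stays inside $j$'s own run and, by \cref{lm:R-block}, lies in $\R^{-}_{s,H}$ with $\Lexp(j')=k'$. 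Your stronger claim is in fact true, but your sketch is loose precisely where it matters: "bottom endpoint $\lfloor(3\tau-1)/|H|\rfloor$ up to an additive constant, essentially independent of the run" does not by itself make the union of the per-run exponent intervals an interval; you need that, for the fixed head $s$, the per-run minimum exponent $\lceil(3\tau-1-s-\Ltail)/|H|\rceil$ varies by at most one as $\Ltail$ ranges over $[0\dd|H|)$, which is what forces consecutive anchoring. Once that is pinned down your argument goes through, but the paper's within-run shift is both shorter and exactly sufficient, since the rank-plus-select expression only requires that no exponent value in $[k_{\min}\dd k]$ is skipped; the remaining bookkeeping in your write-up (boundary behaviour of $\BVexp$ at position $b_X$, the $\bigO(1)$ accounting) matches the paper.
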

\begin{proof}
  Let $X = \T[j \dd j + 3\tau - 1)$. First, using the lookup table
  $\LTrange$, we compute $(b_X, e_X) = (\LB(X, \T), \UB(X,
  \T))$. Then, by \cref{lm:lce}, $\SA(b_X \dd e_X]$ contains all
  positions from $\R_{s,H}$, where $H = \Lroot(j)$ and $s=\Lhead(j)$.
  Next, using \cref{pr:isa-root}, we compute in $\bigO(1)$ time the
  value $k = \Lexp(j)$. Finally, we retrieve $k_{\min} =
  \LTminexp[\Int(X)]$. Observe now that by \cref{lm:lce}, all
  positions in $\R^{-}_{s,H}$ occur in $\SA(b_X \dd e_X]$ before
  $\R^{+}_{s,H}$. Furthermore, by \cref{lm:lce}\eqref{lm:lce-it-2},
  $[k_{\min} \dd k]\sub \{\Lexp(j') : j'\in \R^{-}_{s,H}\}$ (for
  $k'\in (k_{\min} \dd k]$, we can take $j' = j+(k-k')|H|$).  Thus, by
  the definition of $\BVexp$, we can finally return $\deltaa(j) =
  \select{\BVexp}{1}{\rank{\BVexp}{1}{b_X}+ (k - k_{\min}) + 1} - b_X
  $ in $\bigO(1)$ time.
\end{proof}

\bfparagraph{Computing $\deltas(j)$}

Next, we describe the algorithm to compute $\deltas(j)$ for any
position $j \in \R^{-}$.

\begin{lemma}\label{lm:isa-delta-s}
  Assume $i,j \in \R^{-}_H$ and let $\ell = \rend{i} - i - 3\tau + 2$.
  Then $|\Poss(j) \cap [i \dd i + \ell)| \leq 1$.  Moreover,
  $|\Poss(j) \cap [i \dd i + \ell)| = 1$ if and only if
  $\T[\rendfull{i} \dd n] \succ \T[\rendfull{j} \dd n]$ and
  $\rendfull{i} - i \geq \rendfull{j} - j$.
\end{lemma}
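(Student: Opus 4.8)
The plan is to carry out the whole argument inside the $\tau$-run that contains $i$. First I would record a few structural facts. Since $i\in\R$ we have $\rend{i}-i\ge 3\tau-1$, so $\ell\ge 1$ and $[i\dd i{+}\ell)=\{i,i{+}1,\dots,\rend{i}{-}3\tau{+}1\}$ is precisely the set of periodic positions of $i$'s run that are at least $i$; by \cref{lm:run-end,lm:R-block}, every such position $j'$ satisfies $\Lroot(j')=H$, $\type(j')=-1$, and $\rendfull{j'}=\rendfull{i}$, so $j'\in\R^{-}_{H}$. Combining $\rendfull{j'}=\rendfull{i}$ with the identity $\rendfull{j'}=j'+\Lhead(j')+\Lexp(j')\cdot|H|$ and $0\le\Lhead(j')<|H|$, the pair $(\Lhead(j'),\Lexp(j'))$ is the (remainder, quotient) of the division of $\rendfull{i}-j'$ by $|H|$, so $j'$ is determined by this pair. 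Since every element of $\Poss(j)$ lies in $\R^{-}_{s,H}$ with $s=\Lhead(j)$ and has $\Lexp$ equal to $\Lexp(j)$, any element of $\Poss(j)\cap[i\dd i{+}\ell)$ must be the single position $j^{*}:=\rendfull{i}-(\rendfull{j}-j)$, namely the unique position of the run with $\rendfull{i}-j^{*}=s+\Lexp(j)\cdot|H|=\rendfull{j}-j$. This already proves $|\Poss(j)\cap[i\dd i{+}\ell)|\le 1$.

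For the ``moreover'' part I would decide exactly when $j^{*}\in\Poss(j)\cap[i\dd i{+}\ell)$. If $j^{*}$ lies in $[i\dd i{+}\ell)$, then automatically $\Lhead(j^{*})=s$ and $\Lexp(j^{*})=\Lexp(j)$ (read off the remainder/quotient of $\rendfull{i}-j^{*}=\rendfull{j}-j$), so $j^{*}\in\R^{-}_{s,H}$ with the right $\Lexp$; moreover, since $j^{*}$ and $j$ start at the same phase of $H^{\infty}$ and $\rendfull{j^{*}}-j^{*}=\rendfull{j}-j$, we get $\T[j^{*}\dd\rendfull{j^{*}})=\Pref_{\rendfull{j}-j}(s,H)=\T[j\dd\rendfull{j})$, whence $\T[j^{*}\dd n]\succ\T[j\dd n]$ iff $\T[\rendfull{i}\dd n]\succ\T[\rendfull{j}\dd n]$. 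Therefore $j^{*}\in\Poss(j)\cap[i\dd i{+}\ell)$ iff $i\le j^{*}<\rend{i}-3\tau+2$ and $\T[\rendfull{i}\dd n]\succ\T[\rendfull{j}\dd n]$, and the range condition rearranges into the two inequalities $\rendfull{i}-i\ge\rendfull{j}-j$ and $\rendfull{j}-j>3\tau-2-\Ltail(i)$.

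It then only remains to see that the second of these inequalities is implied by $\T[\rendfull{i}\dd n]\succ\T[\rendfull{j}\dd n]$ — this is the sole genuinely combinatorial point, and the one I expect to be the main obstacle. Here I would use $\type(i)=\type(j)=-1$ together with \cref{lm:run-end}\eqref{lm:run-end-it-2}: for $\rend{i}\le n$ the string $\T[\rendfull{i}\dd n]$ equals $H[1\dd\Ltail(i)]$ followed by $\T[\rend{i}\dd n]$, whose first symbol $\T[\rend{i}]$ is strictly smaller than $H[\Ltail(i){+}1]$, while for $\rend{i}=n{+}1$ it equals $H[1\dd\Ltail(i)]$ and is a proper prefix of $H^{\infty}$ (and likewise for $j$). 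From this one checks directly that $\Ltail(i)<\Ltail(j)$ forces $\T[\rendfull{i}\dd n]\prec\T[\rendfull{j}\dd n]$; hence $\T[\rendfull{i}\dd n]\succ\T[\rendfull{j}\dd n]$ implies $\Ltail(i)\ge\Ltail(j)$, and then $\rendfull{j}-j=\rend{j}-j-\Ltail(j)\ge 3\tau-1-\Ltail(j)\ge 3\tau-1-\Ltail(i)>3\tau-2-\Ltail(i)$, using $\rend{j}-j\ge 3\tau-1$. For the converse direction, if $\rendfull{i}-i<\rendfull{j}-j$ then $j^{*}<i$, and if $\T[\rendfull{i}\dd n]\not\succ\T[\rendfull{j}\dd n]$ then the condition $\T[j^{*}\dd n]\succ\T[j\dd n]$ fails for $j^{*}$; in either case $\Poss(j)\cap[i\dd i{+}\ell)=\emptyset$, since that set is contained in $\{j^{*}\}$. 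Assembling these equivalences yields the stated ``if and only if''.
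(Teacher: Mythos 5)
Your proposal is correct and follows essentially the same route as the paper's proof: you isolate the unique candidate position $j^{*}=\rendfull{i}-(\rendfull{j}-j)$ exactly as the paper isolates $i+\delta$ with $\delta=(\rendfull{i}-i)-(\rendfull{j}-j)$, and your "genuinely combinatorial point" (that $\T[\rendfull{i}\dd n]\succ\T[\rendfull{j}\dd n]$ forces $\Ltail(i)\ge\Ltail(j)$ via $\type=-1$) is precisely the paper's step showing $\rend{i}-\rendfull{i}\ge\rend{j}-\rendfull{j}$, just phrased as a contrapositive instead of a contradiction. The only difference is presentational (packaging the block-membership condition as two inequalities), so nothing further is needed.
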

\begin{proof}

  By \cref{lm:R-block}, we have $[i \dd i + \ell) \subseteq \R^{-}_H$
  with $\rend{i + \delta} = \rend{i}$ for every $\delta \in [0 \dd
  \ell)$. Moreover, by the uniqueness of L-decomposition, $\Ltail(i +
  \delta) = \Ltail(i)$.  Together, these imply that $\rendfull{i +
  \delta} = \rendfull{i}$, and consequently $\rendfull{i + \delta} -
  (i + \delta) = \rendfull{i} - i - \delta$. It remains to observe
  that, letting $j \in \R^{-}_{s,H}$, for $j' \in \Poss(j)$ it holds
  $\rendfull{j'} - j' = s + \Lexp(j') \cdot |H| = s + \Lexp(j) \cdot
  |H| = \rendfull{j} - j$. Thus, $i + \delta \in \Poss(j)$ implies
  $\rendfull{i + \delta} - (i + \delta) = \rendfull{i} - (i + \delta)
  = \rendfull{j} - j$, or equivalently, $\delta = (\rendfull{i} - i) -
  (\rendfull{j} - j)$, and therefore, $|\Poss(j) \cap [i \dd i +
  \ell)| \leq 1$.

  For the second part, assume first that $i + \delta \in \Poss(j)$
  holds for some $\delta \in [0 \dd \ell)$. Then, as noted above, we
  have $\rendfull{j} - j = \rendfull{i} - (i + \delta) \leq
  \rendfull{i} - i$. Moreover, letting $j \in \R^{-}_{s,H}$, by
  definition of $\Poss(j)$, we have $i + \delta \in \R^{-}_{s,H}$,
  $\Lexp(j) = \Lexp(i + \delta)$, and $\T[i {+} \delta \dd n] \succ
  \T[j \dd n]$.  Therefore, we obtain that $\T[i {+} \delta \dd
  \rendfull{i + \delta}) = \T[i {+} \delta \dd \rendfull{i}) = \T[j
  \dd \rendfull{j}) = H'H^k$ (where $k = \Lexp(j)$ and $H'$ is the
  length-$s$ suffix of $H$), and consequently, $\T[\rendfull{i} \dd n]
  \succ \T[\rendfull{j} \dd n]$.  To show the converse implication,
  assume $\T[\rendfull{i} \dd n] \succ \T[\rendfull{j} \dd n]$ and
  $\rendfull{i} - i \geq \rendfull{j} - j$. Let $\delta =
  (\rendfull{i} - i) - (\rendfull{j} - j)$. We will prove that $\delta
  \in [0 \dd \ell)$ and $i + \delta \in \Poss(j)$. Clearly $\delta
  \geq 0$. To show $\delta < \ell$, we first prove $\rend{i} -
  \rendfull{i} \geq \rend{j} - \rendfull{j}$.  Suppose that $q =
  \rend{i} - \rendfull{i} < \rend{j} - \rendfull{j}$.  By $i \in
  \R^{-}_{H}$, we then either have $\rendfull{i} + q = n + 1$, or
  $\rendfull{i} + q \leq n$ and $\T[\rendfull{i} + q] \prec
  \T[\rendfull{i} + q - |H|] = \T[\rendfull{j} + q - |H|] =
  \T[\rendfull{j} + q]$, both of which contradict $\T[\rendfull{i} \dd
  n] \succ \T[\rendfull{j} \dd n]$. Thus, $\rend{i} - \rendfull{i}
  \geq \rend{j} - \rendfull{j}$.  This implies, $\rend{i} - (i +
  \delta) = (\rendfull{i} - (i + \delta)) + (\rend{i} - \rendfull{i})
  = (\rendfull{j} - j) + (\rend{i} - \rendfull{i}) \geq (\rendfull{j}
  - j) + (\rend{j} - \rendfull{j}) = \rend{j} - j \geq 3\tau - 1$, or
  equivalently $\delta \leq \rend{i} - i - 3\tau + 1 < \ell$.  To show
  $i + \delta \in \Poss(j)$, it remains to observe that $\rendfull{i +
  \delta} - (i + \delta) = \rendfull{i} - (i + \delta) = \rendfull{j}
  - j$ and $i + \delta, j \in \R^{-}_H$ imply $\T[i + \delta \dd
  \rendfull{i}) = \T[j \dd \rendfull{j})$.  This in particular gives,
  letting $j \in \R_{s,H}$, that $i + \delta \in \R_{s,H}$ and
  $\Lexp(i + \delta) = \Lexp(j)$.  Moreover, combining it with
  $\T[\rendfull{i} \dd n] \succ \T[\rendfull{j} \dd n]$ yields $\T[i +
  \delta \dd n] \succ \T[j \dd n]$. Finally, by \cref{lm:R-block},
  $\type(i + \delta) = \type(i) = -1$. Therefore, $i + \delta \in
  \Poss(j)$.
\end{proof}

\begin{proposition}\label{pr:isa-delta-s}
  Given the data structure from \cref{sec:sa-periodic-ds} and any $j
  \in \R^{-}$, in $\bigO(\log \log n)$ time we can compute
  $\deltas(j)$.
\end{proposition}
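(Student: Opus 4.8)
The plan is to express $\deltas(j) = |\Poss(j)|$ in terms of the precomputed components so that, after $\bigO(1)$-time preprocessing, it reduces to two range-counting queries on $\ARRnontail$. Fix $j \in \R^{-}$ and write $H = \Lroot(j)$. First I apply \cref{pr:isa-root} to obtain $H$, $\Lhead(j)$, $\Lexp(j)$, $\Ltail(j)$, $\type(j) = -1$, and $\rend{j}$ in $\bigO(1)$ time, and I set $L := \rendfull{j} - j = \Lhead(j) + \Lexp(j)\cdot|H|$. By \cref{lm:R-block}, the periodic positions of every $\tau$-run whose $\Lroot$ equals $H$ and whose type is $-1$ form a block $[i \dd \rend{i} - 3\tau + 2)$ indexed by its starting position $i \in \R'^{-}_H$, and these blocks partition $\R^{-}_H$; since $\Poss(j) \subseteq \R^{-}_{s,H} \subseteq \R^{-}_H$, aggregating \cref{lm:isa-delta-s} over all $i \in \R'^{-}_H$ gives
\[
  \deltas(j) = \bigl|\{i \in \R'^{-}_H : \T[\rendfull{i} \dd n] \succ \T[\rendfull{j} \dd n]\ \text{and}\ \rendfull{i} - i \geq L\}\bigr|.
\]

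The next step is to locate, inside the sequence $(\rlexm_a)$, the threshold past which the condition $\T[\rendfull{i} \dd n] \succ \T[\rendfull{j} \dd n]$ holds. The indices $a$ with $\rlexm_a \in \R'^{-}_H$ form a contiguous range, which I recover as $(b_0, e_0) = \LTruns[(H, \emptystring)]$ in $\bigO(1)$ time; within $(b_0 \dd e_0]$ the positions $\rlexm_a$ are sorted by the suffix $\T[\rendfull{\rlexm_a} \dd n]$, and by \cref{lm:efull} together with the uniqueness of the suffixes of $\T$ these suffixes are pairwise distinct. The key observation is that $\T[\rendfull{j} \dd n]$ is itself one of them: letting $j_0$ be the starting position of the $\tau$-run containing $j$, \cref{lm:R-block} gives $j_0 \in \R'^{-}_H$ and $\rendfull{j_0} = \rendfull{j}$, so $\T[\rendfull{j_0} \dd n] = \T[\rendfull{j} \dd n]$ appears at the unique index $a_0 \in (b_0 \dd e_0]$ with $\rlexm_{a_0} = j_0$, and hence $\{a \in (b_0 \dd e_0] : \T[\rendfull{\rlexm_a} \dd n] \succ \T[\rendfull{j} \dd n]\} = (a_0 \dd e_0]$. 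I obtain $j_0 = \select{\BVRprim}{1}{\rank{\BVRprim}{1}{j}}$ in $\bigO(1)$ time and, using $\BVRprim$ to determine the position of $j_0$ in the text-ordered enumeration of $\R'^{-}$ and then the table $\ARRrmap$, read off $a_0$ in $\bigO(1)$ time; note $a_0 \in (b_0 \dd e_0]$ since $j_0 \in \R'^{-}_H$.

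It remains to combine the two conditions. Since $\ARRnontail[a] = \rendfull{\rlexm_a} - \rlexm_a$, the displayed quantity equals
\[
  \deltas(j) = \rcount{\ARRnontail}{L}{e_0} - \rcount{\ARRnontail}{L}{a_0},
\]
which I evaluate with two range-counting queries via \cref{pr:range-queries} in $\bigO(\log \log n)$ time; every other step is $\bigO(1)$, so the total is $\bigO(\log \log n)$. The degenerate case $\rend{j} = n + 1$ (when $j$'s run reaches the end of $\T$) needs no separate handling: it still has $\type(j) = -1$ and $\rendfull{j_0} = \rendfull{j}$, and it concerns at most one run. The heart of the argument is combinatorial --- verifying the first displayed identity (an instance-wise application of \cref{lm:isa-delta-s} aggregated over runs) and, above all, the claim that the switching point for $\T[\rendfull{i} \dd n] \succ \T[\rendfull{j} \dd n]$ is precisely the precomputed $\rlexm$-position of $j$'s own run-start $j_0$; I expect that identification, and confirming that the stored permutations expose it in constant time, to be the main obstacle.
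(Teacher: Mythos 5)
Your proposal is correct and follows essentially the same route as the paper's proof: compute $\Lroot(j)$, $\Lhead(j)$, $\Lexp(j)$ via \cref{pr:isa-root}, locate the lex-position of $j$'s run start via $\BVRprim$ and $\ARRrmap$, obtain the end of the $\Lroot$-block from $\LTruns[(H,\emptystring)]$, and reduce $\deltas(j)$ via \cref{lm:isa-delta-s} to two range-counting queries on $\ARRnontail$ answered by \cref{pr:range-queries}. Your explicit justification that the threshold index is exactly the $\rlexm$-position of the run start $j_0$ (using \cref{lm:R-block,lm:efull}) is the same step the paper performs implicitly.
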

\begin{proof}
  Given $j \in \R^{-}$, we first compute $H = \Lroot(j)$, $s =
  \Lhead(j)$, and $k = \Lexp(j)$.  By \cref{pr:isa-root}, this takes
  $\bigO(1)$ time. This lets us deduce $\rendfull{j} = j + s +
  k|H|$. Then, we compute $i \in [1 \dd q]$ satisfying $j \in
  [\rtextm_i \dd \rend{\rtextm_i} - 3\tau + 2)$, i.e., $j$ is in the
  maximal block of positions from $\R^{-}$ starting at position
  $\rtextm_i$. Using $\BVRprim$ we obtain $i = \rank{\BVRprim}{1}{j}$
  in $\bigO(1)$ time. Observe now that, letting $j' = \rtextm_i$, by
  $\rendfull{j'} = \rendfull{j}$, we have $\T[\rendfull{j'} \dd n] =
  \T[\rendfull{j} \dd n]$.  Therefore, letting $x = \ARRrmap[i]$ and
  $x' = \sum_{H' \preceq H}|\R'^{-}_{H'}|$ (obtained in $\bigO(1)$
  time using $\LTruns$), by \cref{lm:isa-delta-s} we have $\deltas(j)
  = |\{i' \in (x \dd x'] : \ell_{i'} \geq \rendfull{j} - j\}| =
  \rcount{\ARRnontail}{\rendfull{j} - j}{x'} -
  \rcount{\ARRnontail}{\rendfull{j} - j}{x}$, which we compute in
  $\bigO(\log \log n)$ time using the data structure from
  \cref{pr:range-queries}.
\end{proof}

\begin{remark}\label{rm:sa-periodic}
  In \cref{lm:isa-delta-2}, we presented an equation relating the
  sizes of $\Posa(j)$ and $\Poss(j)$, and the size of $\Pos(j)$, where
  $j \in \R^{-}$. In this formula, some positions are first counted as
  part of $\Posa(j)$, and then canceled when subtracting the size of
  $\Poss(j)$. To see the reason for this counterintuitive formula, let
  $J := \{j' \in \R^{-}_{s,H} : \Lexp(j') = \Lexp(j)\}$, where $s =
  \Lhead(j)$ and $H = \Lroot(j)$, and consider the problem of
  computing the size of $J' = \{j' \in J : \T[j' \dd n] \succeq \T[j
  \dd n]\}$.  As shown in \cref{lm:isa-delta-s}, to count such
  positions, it suffices to first align all $j'' \in \R'^{-}$ by the
  position $\rendfull{j''}$, and then count those $j''$ that satisfy
  (1) $\T[\rendfull{j''} \dd n] \succeq \T[\rendfull{j} \dd n]$, and
  (2) $\rendfull{j''} - j'' \geq \rendfull{j} - j$. For every $j'' \in
  \R'^{-}$ satisfying these conditions, there exists exactly one $j'
  \in \R^{-}_{s,H}$ such that $[j'' \dd j'] \sub \R$, $\Lexp(j') =
  \Lexp(j)$ and $\T[j' \dd n] \succeq \T[j \dd n]$, because for $j,
  j'' \in \R^{-}$, $\T[\rendfull{j''} \dd n] \succeq \T[\rendfull{j}
  \dd n]$ implies $\rend{j''} - \rendfull{j''} \geq \rend{j} -
  \rendfull{j}$ (\cref{lm:lce}).  Thus, letting $\ell = \rendfull{j} -
  j$, such $j'$ is given by $j' = \rendfull{j''} - \ell$.  In
  particular, such $j'$ satisfies $j' \in \R$ because $(\rend{j''} -
  \rendfull{j''}) + \ell \geq \rend{j} - \rendfull{j} + \ell =
  \rend{j} - j \geq 3\tau - 1$.

  Consider now the problem of computing the size of $J'' = \{j' \in J
  : \T[j' \dd n] \prec \T[j \dd n]\}$ (defining $\Poss(j)$ as $J''$
  may seem like a simpler alternative to the current definition).
  Observe that the above method does not work for this problem.  The
  reason for this is that position $j'' \in \R'^{-}$ satisfying
  $\T[\rendfull{j'} \dd n] \prec \T[\rendfull{j} \dd n]$ does not
  necessarily imply that $\rendfull{j''} - \ell \in \R$. This is
  because we may have $\rend{j''} - \rendfull{j''} < \rend{j} -
  \rendfull{j}$, which implies that it is possible that $(\rend{j''} -
  \rendfull{j''}) + \ell < 3\tau - 1$.  This motivates the current
  definition of $\Poss(j)$.
\end{remark}

\bfparagraph{Summary}

By combining all above results, we obtain the following algorithm
to compute $\ISA[j]$ for periodic positions.

\begin{proposition}\label{pr:sa-periodic-isa}
  Given the data structure from \cref{sec:sa-periodic-ds} and any $j
  \in \R$, in $\bigO(\log \log n)$ time we can compute $\ISA[j]$.
\end{proposition}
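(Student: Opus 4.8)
The plan is to assemble the $\ISA[j]$ query from the three ingredients developed above, branching on $\type(j)$. First I would check in $\bigO(1)$ time whether $\T[\rend{j}] \succ \T[\rend{j}-|\Lroot(j)|]$ (i.e.\ whether $\type(j)=+1$), which requires only $\Lroot(j)$ and $\rend{j}$; both are computable in $\bigO(1)$ time by \cref{pr:isa-root}. The two cases are symmetric, so I describe $\type(j)=-1$ and invoke the symmetric second part of the data structure (\cref{sec:sa-periodic-ds}) for $\type(j)=+1$.

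Assume $\type(j)=-1$, i.e.\ $j\in\R^-$. Let $X=\T[j\dd j+3\tau-1)$. By \cref{lm:isa-delta-1}, $\ISA[j]=\LB(X,\T)+\delta(j)$, and $\LB(X,\T)$ is obtained in $\bigO(1)$ time from the lookup table $\LTrange$ stored in $\SACore(\T)$ (accessed via $\Int(X)$). By \cref{lm:isa-delta-2}, $\delta(j)=\deltaa(j)-\deltas(j)$. I then compute $\deltaa(j)$ in $\bigO(1)$ time using \cref{pr:isa-delta-a} and $\deltas(j)$ in $\bigO(\log\log n)$ time using \cref{pr:isa-delta-s}. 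Adding $\LB(X,\T)+\deltaa(j)-\deltas(j)$ yields $\ISA[j]$. The total running time is dominated by the $\bigO(\log\log n)$ cost of \cref{pr:isa-delta-s}; everything else is $\bigO(1)$.

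For $\type(j)=+1$, the same recipe applies with the roles of $\R^-$ and $\R^+$ interchanged. Here I rely on \cref{lm:lce}\eqref{lm:lce-it-3}, which says that within a block $\R^+_{s,H}$ the value $\rend{x}-x$ orders suffixes in the \emph{reverse} direction compared to $\R^-_{s,H}$; the symmetric counterparts of $\BVexp$, $\ARRnontail$, $\LTminexp$, $\LTruns$, and the auxiliary permutations stored in the second part of the data structure are set up precisely to account for this sign flip, and the analogues of \cref{lm:isa-delta-2,lm:isa-delta-s,pr:isa-delta-a,pr:isa-delta-s} go through verbatim. The only subtlety to double-check is that $\Pos(j)$ as defined for the $\ISA$ query (all $j'$ with $\LCE(j,j')\ge 3\tau-1$ and $\T[j'\dd n]\preceq\T[j\dd n]$) still decomposes correctly: by \cref{lm:lce}, $\LCE(j,j')\ge 3\tau-1$ is equivalent to $j'\in\R_{s,H}$ with $s=\Lhead(j)$, $H=\Lroot(j)$, and within $\R_{s,H}$ all $\R^-$ positions precede all $\R^+$ positions in $\SA$; hence for $j\in\R^+$ the count $\delta(j)$ includes the \emph{entire} $\R^-_{s,H}$ part plus the $\R^+_{s,H}$-internal count, and the latter is handled by the symmetric machinery while $|\R^-_{s,H}|$ is read off from $\BVexp$ (or an analogous bitvector). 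This bookkeeping across the $\R^-$/$\R^+$ boundary is the one place where care is needed; the rest is a routine combination of the already-proven primitives.
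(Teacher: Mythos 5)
Your proposal is correct and follows essentially the same route as the paper: determine $\type(j)$ via \cref{pr:isa-root}, get $\LB(X,\T)$ from $\LTrange$, and combine \cref{lm:isa-delta-1,lm:isa-delta-2} with \cref{pr:isa-delta-a,pr:isa-delta-s} (or their symmetric counterparts) to obtain $\ISA[j]$ in $\bigO(\log\log n)$ time. Your bookkeeping for $\type(j)=+1$ (adding $|\R^-_{s,H}|$ to an internal count within $\R^+_{s,H}$) is a trivially equivalent rearrangement of the paper's formula $\delta(j)=(e_X-b_X)-(\deltaa(j)-\deltas(j))$, so it does not constitute a different approach.
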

\begin{proof}
  First, in $\bigO(1)$ time we compute $x = \Int(X)$, where $X = \T[j
  \dd j + 3\tau - 1)$. In $\bigO(1)$ we then look up $(b_X, e_X) =
  \LTrange[x]$. In particular, we have $b_X = \LB(X, \T)$. Then, using
  \cref{pr:isa-root} we determine $\type(j)$. Depending on whether $j
  \in \R^{-}$ or $j \in \R^{+}$ we use either a combination of
  \cref{pr:isa-delta-a,pr:isa-delta-s}, or their symmetric
  counterparts (more precisely, if $j \in \R^{+}$, letting $s =
  \Lhead(j)$ and $H = \Lroot(j)$, we have $\deltaa(j) = |\Posa(j)|$
  and $\deltas(j) = |\Poss(j)|$, where $\Posa(j) = \{j' \in
  \R^{+}_{s,H}: \Lexp(j') \leq \Lexp(j)\}$ and $\Poss(j) = \{j' \in
  \R^{+}_{s,H} : \Lexp(j) = \Lexp(j)\text{ and }\T[j' \dd n] \prec
  \T[j \dd n]\}$), to compute $\deltaa(j)$ and $\deltas(j)$ in
  $\bigO(1)$ and $\bigO(\log \log n)$ time, respectively.  If $j \in
  \R^{-}$, then by \cref{lm:isa-delta-2} we have $\delta(j) =
  \deltaa(j) - \deltas(j)$.  Otherwise, by the counterpart of
  \cref{lm:isa-delta-2}, $\delta(j) = (e_X - b_X) - (\deltaa(j) -
  \deltas(j))$.  Finally, we return $\ISA[j] = b_X + \delta(j)$ as the
  answer. In total, the query takes $\bigO(\log \log n)$ time.
\end{proof}

\subsubsection{Implementation of
  \texorpdfstring{$\SA$}{SA} Queries}\label{sec:sa-periodic-sa}

We focus on positions $i \in [1 \dd n]$ satisfying $\SA[i] \in
\R^{-}$.  Positions satisfying $\SA[i] \in \R^{+}$ are processed
symmetrically (see the proof of \cref{pr:sa-periodic-sa}).  The
algorithm to query $\SA[i]$ for $i \in [1 \dd n]$ satisfying $\SA[i]
\in \R^{-}$ proceeds in two steps. First, we compute $\Lexp(\SA[i])$
and $\deltas(\SA[i])$. In the second steps, these values are used to
compute $\SA[i]$.

\bfparagraph{Computing $\Lexp(\SA[i])$ and $\deltas(\SA[i])$}

We now describe the first step during the computation of $\SA[i]$ for
$i \in [1 \dd n]$ satisfying $\SA[i] \in \R$.

\begin{proposition}\label{pr:sa-delta-a}
  Let $i \in [1 \dd n]$ be such that $\SA[i] \in \R$.  Given the data
  structure from \cref{sec:sa-periodic-ds} and the index $i$, in
  $\bigO(1)$ time we can check if $\type(\SA[i]) = -1$, and if so,
  return $\Lexp(\SA[i])$ and $\deltas(\SA[i])$.
\end{proposition}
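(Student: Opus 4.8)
The plan is to reduce the whole computation to a constant number of rank/selection queries on $\BVexp$ together with the lookup tables of $\SACore(\T)$. First I would recover, exactly as in the proof of \cref{pr:sa-root}, the relevant data about $\SA[i]$: put $y = \rank{\BVshort}{1}{i-1}$ and $X = \ARRshort[y+1]$; since $\SA[i]\in\R$ forces $\SA[i]\le n-3\tau+2$, the string $X$ has length $3\tau-1$ and equals $\T[\SA[i]\dd \SA[i]+3\tau-1)$. With $x=\Int(X)$ I read $(s,p)=\LTroot[x]$ and $(b_X,e_X)=\LTrange[x]$, obtaining $H=\Lroot(\SA[i])$ (with $|H|=p$), $s=\Lhead(\SA[i])$, and the block $\SA(b_X\dd e_X]$, which by \cref{lm:lce} consists of exactly the positions of $\R_{s,H}$.

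Next I would decide whether $\type(\SA[i])=-1$ by locating $i$ within this block. By \cref{lm:lce}, $\SA(b_X\dd e_X]$ lists first all of $\R^{-}_{s,H}$ (in order of increasing $\rend{j}-j$, equivalently of $(\Lexp(j),\Ltail(j))$) and then all of $\R^{+}_{s,H}$, and by the definition of $\BVexp$ the $1$-bits of $\BVexp$ inside $(b_X\dd e_X]$ are exactly the last $\SA$-positions of the maximal runs of equal $\Lexp$-value inside the $\R^{-}_{s,H}$ part. Hence: if $\rank{\BVexp}{1}{e_X}=\rank{\BVexp}{1}{b_X}$, then $\R^{-}_{s,H}=\emptyset$, so (as $\SA[i]\in\R_{s,H}$) $\type(\SA[i])=+1$ and I report that $\type(\SA[i])\ne-1$. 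Otherwise $p^{-}:=\select{\BVexp}{1}{\rank{\BVexp}{1}{e_X}}$ is the last $\SA$-position of $\R^{-}_{s,H}$; if $i>p^{-}$ then $\SA[i]\in\R^{+}_{s,H}$ and again I report $\type(\SA[i])\ne-1$, whereas if $i\le p^{-}$ then $\SA[i]\in\R^{-}_{s,H}$ and $\type(\SA[i])=-1$.

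In the remaining case I would extract $\Lexp$ and $\deltas$. Set $r_0=\rank{\BVexp}{1}{b_X}$, $k_{\min}=\LTminexp[x]$ (defined since $\R^{-}_{s,H}\ne\emptyset$), and $m=\rank{\BVexp}{1}{i-1}-r_0+1$; then $i$ lies in the $m$-th $\Lexp$-block of $\R^{-}_{s,H}$, and since these blocks carry consecutive $\Lexp$-values $k_{\min},k_{\min}+1,\dots$ (as in the proof of \cref{pr:isa-delta-a}), we get $\Lexp(\SA[i])=k_{\min}+m-1$. Writing $q_m=\select{\BVexp}{1}{r_0+m}$ for the last $\SA$-position of that block, the positions of $\R^{-}_{s,H}$ with $\Lexp$ equal to $\Lexp(\SA[i])$ and suffix lexicographically larger than $\T[\SA[i]\dd n]$ are exactly those at $\SA$-positions in $(i\dd q_m]$, whence $\deltas(\SA[i])=q_m-i$. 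Every step uses $\bigO(1)$ rank/selection queries on the bitvectors augmented via \cref{th:binrksel}, array accesses, and table lookups, so the whole query runs in $\bigO(1)$ time.

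The main difficulty is not depth but the bookkeeping around $\BVexp$ at the two ends of $\SA(b_X\dd e_X]$: one has to check that $\BVexp[p^{-}]=1$ indeed holds (the $\SA$-successor of $p^{-}$ either lies in $\R^{+}$ or has a different length-$(3\tau{-}1)$ prefix, hence a different pair $(s,H)$, so it never shares an $\Lexp$-block with $\SA[p^{-}]$), that no $1$-bit of $\BVexp$ occurs in the $\R^{+}_{s,H}$ tail of the block (those positions are not in $\R^{-}$ at all), that $\R^{-}_{s,H}=\emptyset$ is correctly detected by the rank equality, and that using $i-1$ rather than $i$ in the rank query selects the correct block index $m$; all of these follow from \cref{lm:lce,lm:R-block} and the definition of $\BVexp$, but must be spelled out carefully.
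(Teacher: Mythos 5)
Your proposal is correct and follows essentially the same route as the paper: the same $\BVshort$/$\ARRshort$/$\LTrange$/$\LTminexp$ lookups, the same $\BVexp$ rank computation for $\Lexp(\SA[i]) = k_{\min} + m - 1$, and your type test via the last one-bit position $p^{-}$ is equivalent to the paper's comparison of $\rank{\BVexp}{1}{e_X}$ with $\rank{\BVexp}{1}{i-1}$. Your direct formula $\deltas(\SA[i]) = q_m - i$ coincides with the paper's $\deltas(\SA[i]) = b_X + \deltaa(\SA[i]) - i$, since the paper computes $\deltaa(\SA[i]) = q_m - b_X$ from the same select query; the only cosmetic difference is that the paper routes the argument through $\deltaa$ and \cref{lm:isa-delta-1,lm:isa-delta-2} instead of arguing directly inside the $\Lexp$-group.
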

\begin{proof}
  To check if $\type(\SA[i]) = -1$, we first compute $y =
  \rank{\BVshort}{1}{i-1}$.  The string $X = \ARRshort[y + 1]$ is then
  a prefix of $\T[\SA[i] \dd n]$ of length $3\tau - 1$.  Let $x =
  \Int(X)$. In $\bigO(1)$ time we then look up $(b_X, e_X) =
  \LTrange[x]$.  By \cref{lm:lce} we then have $\type(\SA[i]) = -1$ if
  and only if $\BVexp[i \dd e_X]$ contains a bit with value 1. This
  can be checked in $\bigO(1)$ time by checking if
  $\rank{\BVexp}{1}{e_X} > \rank{\BVexp}{1}{i-1}$. Let us assume
  $\type(\SA[i]) = -1$. To compute $\Lexp(\SA[i])$, we first in
  $\bigO(1)$ retrieve $k_{\min} = \LTminexp[x]$, and then compute
  $\Lexp(\SA[i]) = k_{\min} + (\rank{\BVexp}{1}{i-1} -
  \rank{\BVexp}{1}{b_X})$.  Then, $\deltaa(\SA[i])$ can be computed in
  $\bigO(1)$ time as $\deltaa(\SA[i]) =
  \select{\BVexp}{1}{\rank{\BVexp}{1}{i-1}+1}-b_X$.  Finally, by
  applying \cref{lm:isa-delta-1} and \cref{lm:isa-delta-2} for $j =
  \SA[i]$, it holds $i - b_X = \deltaa(\SA[i]) -
  \deltas(\SA[i])$. Thus, we obtain $\deltas(\SA[i]) = b_X +
  \deltaa(\SA[i]) - i$.
\end{proof}

\bfparagraph{Computing $\SA[i]$}

We now describe the algorithm to complete the computation of
$\SA[i]$ for any $i \in [1 \dd n]$ such that $\SA[i] \in \R^{-}$.

\begin{proposition}\label{pr:sa-delta-s}
  In $\bigO(n / \log_{\sigma} n)$ time, we can augment the structure
  of \cref{pr:sa-root} so that, given any $i \in [1 \dd n]$ such that
  $\SA[i] \in \R^{-}$, along with $\Lexp(\SA[i])$ and
  $\deltas(\SA[i])$, we can compute $\SA[i]$ in $\bigO(\log \log n)$
  time.
\end{proposition}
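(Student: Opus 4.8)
The plan is to \emph{invert} the counting carried out in \cref{pr:isa-delta-s}: there, given $j\in\R^{-}$, we counted the $\tau$-run starts in $\R'^{-}_{H}$ ($H=\Lroot(j)$) whose suffixes $\T[\rendfull{\cdot}\dd n]$ exceed $\T[\rendfull{j}\dd n]$; here, given only $i$ together with $\Lexp(\SA[i])$ and $\deltas(\SA[i])$, we will \emph{select} the run start of $\SA[i]$ and then read $\SA[i]$ off from it. Write $j:=\SA[i]$, $H:=\Lroot(j)$, $s:=\Lhead(j)$, $k:=\Lexp(j)$, $d:=\deltas(j)$, and $\ell:=s+k\cdot|H|=\rendfull{j}-j$; the pair $(H,s)$ is produced by \cref{pr:sa-root} in $\bigO(1)$ time, while $k$ and $d$ are given. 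Let $j_0\in\R'^{-}_{H}$ be the start of the $\tau$-run containing $j$. By \cref{lm:R-block}, $\Lroot$, $\type$ and $\rendfull{\cdot}$ are constant along a run, so indeed $j_0\in\R'^{-}_{H}$, $\rendfull{j_0}=\rendfull{j}$, and $\rendfull{j_0}-j_0\ge\rendfull{j}-j=\ell$; moreover $j=\rendfull{j}-\ell=\rendfull{j_0}-\ell$. Hence, writing $\hat\imath$ for the index with $\rlexm_{\hat\imath}=j_0$ in the sequence $(\rlexm_i)_{i\in[1\dd q]}$, it suffices to locate $\hat\imath$, and then $\SA[i]=\rlexm_{\hat\imath}+\ARRnontail[\hat\imath]-\ell$ since $\ARRnontail[\hat\imath]=\rendfull{j_0}-j_0$.

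The combinatorial heart is the selection counterpart of \cref{lm:isa-delta-s}. Look up $(b,e):=\LTruns[(H,\emptystring)]$, so that $(\rlexm_i)_{i\in(b\dd e]}$ lists $\R'^{-}_{H}$ in increasing order of $\T[\rendfull{\rlexm_i}\dd n]$, and let $I:=\{i\in(b\dd e]:\ARRnontail[i]\ge\ell\}$ and $c:=|I|$. I claim $\hat\imath$ is the $(c-d)$-th element of $I$. First, $\hat\imath\in I$ by the inequality above. Next, by \cref{lm:isa-delta-s} (applied to this $j$, with the run start playing the role of its ``$i$''), a run start $i'\in\R'^{-}_{H}$ contributes $1$ to $\deltas(j)=|\Poss(j)|$ exactly when $\ARRnontail[i']\ge\ell$ \emph{and} $\T[\rendfull{\rlexm_{i'}}\dd n]\succ\T[\rendfull{j}\dd n]$, i.e.\ exactly for the elements of $I$ that come strictly after $\hat\imath$ in the $(\rlexm_i)$-order; therefore $d=\deltas(j)=c-\mathrm{rank}_I(\hat\imath)$, so $\mathrm{rank}_I(\hat\imath)=c-d$. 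Some elements of $I$ \emph{before} $\hat\imath$ may be ``spurious'', i.e.\ correspond to runs containing no position with $\Lhead=s$ and $\Lexp=k$; this does not matter, since by \cref{lm:isa-delta-s} — through the lexicographic half of its condition, which via \cref{lm:lce} forces $\rend{\rlexm_{i'}}-\rendfull{\rlexm_{i'}}\ge\rend{j}-\rendfull{j}$ and hence the existence of a valid position in that run — every element of $I$ strictly \emph{after} $\hat\imath$ is non-spurious, which is all the rank identity needs. (The ordering of $j'\in\R^{-}_{s,H}$ with $\Lexp(j')=k$ by $\T[j'\dd n]$ coincides with their ordering by $\T[\rendfull{j'}\dd n]$: these suffixes share the same length-$\Ltail$ prefix $H''$, so by \cref{lm:lce} the comparison reduces to comparing $\T[\rend{j'}\dd n]$, matching the order used in $(\rlexm_i)$.)

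Thus the query is: obtain $(H,s)$ from \cref{pr:sa-root} and set $\ell=s+k|H|$; look up $(b,e)=\LTruns[(H,\emptystring)]$; using the structure of \cref{pr:range-queries} on $\ARRnontail$, compute $c=\rcount{\ARRnontail}{\ell}{e}-\rcount{\ARRnontail}{\ell}{b}$ and $\hat\imath=\rselect{\ARRnontail}{\ell}{\rcount{\ARRnontail}{\ell}{e}-d}$ (the $(c-d)$-th element of $I$); and return $\SA[i]=\rlexm_{\hat\imath}+\ARRnontail[\hat\imath]-\ell$. Each range‑counting query costs $\bigO(\log\log n)$ and the range‑selection query $\bigO(1)$ (\cref{pr:range-queries}, size parameter $\bigO(n/\log_\sigma n)$), and every other step is $\bigO(1)$, so the query runs in $\bigO(\log\log n)$ time. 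As for the augmentation: $\SACore(\T)$ (used by \cref{pr:sa-root} via $\BVshort$, $\ARRshort$, $\LTroot$), the lookup table $\LTruns$, and $\ARRnontail$ with its \cref{pr:range-queries} structure all already belong to the data structure of \cref{sec:sa-periodic-ds}; the one thing we must add is a plain array holding $\rlexm_i$ for $i\in[1\dd q]$, needed to turn the index $\hat\imath$ back into the text position $\rlexm_{\hat\imath}$. It occupies $\bigO(q)=\bigO(n/\log_\sigma n)$ space, and since the sorted sequence $(\rlexm_i)$ is already produced while constructing $\ARRnontail$ (and $\ARRrmap$, $\ARRrinvmap$), retaining it costs only $\bigO(n/\log_\sigma n)$ additional time. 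The case $\SA[i]\in\R^{+}$ is handled symmetrically using the mirror‑image components of \cref{sec:sa-periodic-ds}. I expect the main obstacle to be the combinatorial claim of the second paragraph — specifically, arguing that ``spurious'' elements of $I$ cannot appear after $\hat\imath$, so that the rank $c-d$ is exact; this is precisely where the lexicographic component of \cref{lm:isa-delta-s} together with \cref{lm:lce} is used.
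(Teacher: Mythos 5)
Your proposal is correct and follows essentially the same route as the paper: compute $\ell=\Lhead(\SA[i])+\Lexp(\SA[i])\,|H|$ via \cref{pr:sa-root}, use $\LTruns$ to locate the block of $\R'^{-}_H$, select the run start as $\rselect{\ARRnontail}{\ell}{\rcount{\ARRnontail}{\ell}{e_H}-\deltas(\SA[i])}$ (justified, as in the paper, by \cref{lm:isa-delta-s} together with the sorted block and distinctness of the $\rendfull{\cdot}$ values), and recover $\SA[i]=\rendfull{\cdot}-\ell$. The only cosmetic difference is the last step: you keep an explicit array of $(\rlexm_i)$ and read $\rendfull{}$ off $\ARRnontail$, whereas the paper recovers $\rlexm_p$ from the already-stored $\ARRrinvmap$ and $\BVRprim$ and recomputes $\rendfull{}$ via an LCE query and $\LTroot$ — both stay within the stated $\bigO(n/\log_\sigma n)$ augmentation budget and $\bigO(\log\log n)$ query time.
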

\begin{proof}
  First, we compute $H = \Lroot(\SA[i])$ and $\Lhead(\SA[i])$ in
  $\bigO(1)$ time using \cref{pr:sa-root}. This lets us deduce that
  $\rendfull{\SA[i]} - \SA[i] = \ell$, where $\ell = \Lhead(\SA[i]) +
  \Lexp(\SA[i])|H|$.  Let $x = \sum_{H' \preceq H} |\R'^{-}_{H'}|$
  (obtained using $\LTruns$ in $\bigO(1)$ time). Next, we compute
  $\delta = \rcount{\ARRnontail}{\ell}{x}$.  Using the structure from
  \cref{pr:range-queries}, this takes $\bigO(\log \log n)$ time. Let
  $k = \delta - \deltas(\SA[i])$.  We then compute the position $p \in
  [1 \dd q]$ of the $k$th leftmost element in $\ARRnontail$ that is
  greater or equal than $\ell$. Using \cref{pr:range-queries}, we
  compute $p = \rselect{\ARRnontail}{\ell}{k}$ in $\bigO(1)$ time.
  By \cref{lm:efull} and \cref{lm:isa-delta-s}, we then have
  $\rendfull{\rlexm_p} = \rendfull{\SA[i]}$. By combining
  \cref{lm:R-block} and \cref{lm:efull}, for any $j', j'' \in \R$ such
  that $j' < j''$ and $\rendfull{j'} = \rendfull{j''}$, it holds $[j'
  \dd j''] \sub \R$, i.e., $j'$ and $j''$ must belong to the same
  contiguous block of positions from $\R$. Since $\rlexm_p \in \R'$,
  we thus have $\SA[i] \in [\rlexm_p \dd \rend{\rlexm_p} - 3\tau + 2)
  \sub \R^{-}_H$.  In $\bigO(1)$ time we obtain $p' = \ARRrinvmap[p]$
  and $j := \select{B_{R'}}{1}{p'} = \rlexm_p$.  Observe now that in
  the block $[j \dd \rend{j} - 3\tau + 2)$ there is at most one
  element with given values of $\Lexp$ and $\Lhead$, and we already
  have values $\Lexp(\SA[i])$ and $\Lhead(\SA[i])$.  We thus proceed
  as follows. First, we compute $\rend{j}$.  For this, we recall that
  by \cref{lm:run-end}\eqref{lm:run-end-it-2}, it holds $\rend{j} = j
  + |H| + \LCE(j, j + |H|)$. Thus, given $j$ and $|H|$, we can compute
  $\rend{j}$ in $\bigO(1)$ time.  We then in $\bigO(1)$ time compute
  $s = \Lhead(j)$ using the lookup table $\LTroot$.  This lets us
  determine $\rendfull{j} = \rend{j} - ((\rend{j} - j - s) \bmod
  |H|)$.  In $\bigO(1)$ time we then obtain $\SA[i] = \rendfull{j} -
  \Lhead(\SA[i]) - \Lexp(\SA[i])|H|$. In total, the query takes
  $\bigO(\log \log n)$ time.
\end{proof}

\bfparagraph{Summary}

By combining all above results, we obtain the following algorithm
to compute $\SA[i]$ for periodic positions.

\begin{proposition}\label{pr:sa-periodic-sa}
  Let $i \in [1 \dd n]$ be such that $\SA[i] \in \R$. Given the
  data structure from \cref{sec:sa-periodic-ds} and the index $i$, in
  $\bigO(\log \log n)$ time we can compute $\SA[i]$.
\end{proposition}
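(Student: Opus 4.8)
The plan is to branch on $\type(\SA[i])$ and, in each branch, chain together the two routines developed just above. First I would run the procedure of \cref{pr:sa-delta-a}: in $\bigO(1)$ time it computes $y = \rank{\BVshort}{1}{i-1}$, sets $X = \ARRshort[y+1]$ (a length-$(3\tau-1)$ prefix of $\T[\SA[i]\dd n]$, since $\SA[i]\in\R$), looks up $(b_X,e_X) = \LTrange[\Int(X)]$, and decides whether $\type(\SA[i]) = -1$ by testing whether $\BVexp[i\dd e_X]$ contains a one (equivalently, whether $\rank{\BVexp}{1}{e_X} > \rank{\BVexp}{1}{i-1}$), which is correct by \cref{lm:lce}.

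If $\type(\SA[i]) = -1$, i.e.\ $\SA[i]\in\R^{-}$, the same call of \cref{pr:sa-delta-a} also returns $\Lexp(\SA[i])$ and $\deltas(\SA[i])$ in $\bigO(1)$ time. I would then feed $i$, $\Lexp(\SA[i])$, and $\deltas(\SA[i])$ into \cref{pr:sa-delta-s}, which reconstructs $\SA[i]$ in $\bigO(\log \log n)$ time using \cref{pr:sa-root}, the lookup tables $\LTroot$ and $\LTruns$, the $\LCE$ structure, and the arrays $\ARRnontail$ (augmented via \cref{pr:range-queries}), $\ARRrinvmap$, and $\BVRprim$. Correctness of this branch rests on \cref{lm:isa-delta-1,lm:isa-delta-2,lm:isa-delta-s} together with \cref{lm:efull,lm:R-block}, exactly as argued in the proof of \cref{pr:sa-delta-s}.

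If instead the type test fails, then $\SA[i]\in\R^{+}$, and I would switch to the symmetric counterparts of \cref{pr:sa-delta-a,pr:sa-delta-s} stored in the second part of the data structure of \cref{sec:sa-periodic-ds}. Here the roles of $\prec$ and $\succ$, and of $\geq$ and $\leq$ applied to $\rend{x}-x$, are reversed according to the first and third items of \cref{lm:lce}, but the derivations — including the $\R^{+}$-analogues of \cref{lm:isa-delta-1,lm:isa-delta-2,lm:isa-delta-s} — go through verbatim after this substitution, and $\SA[i]$ is again obtained in $\bigO(\log \log n)$ time. Summing the $\bigO(1)$ type test with the $\bigO(\log \log n)$ reconstruction in either branch yields the claimed bound.

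I expect the only nontrivial point to be confirming that the symmetric ($\R^{+}$) versions of the periodic-position lemmas hold word-for-word after the sign swap dictated by \cref{lm:lce}; this is bookkeeping rather than new combinatorics, since \cref{lm:lce} already packages all the order information needed for both types. Everything else is a straightforward composition of the already-established $\bigO(1)$ and $\bigO(\log\log n)$ primitives.
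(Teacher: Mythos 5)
Your proposal is correct and follows the paper's own proof essentially verbatim: determine $\type(\SA[i])$ via \cref{pr:sa-delta-a}, then combine \cref{pr:sa-delta-a,pr:sa-delta-s} (or their symmetric $\R^{+}$ counterparts stored in the second part of the structure from \cref{sec:sa-periodic-ds}) to recover $\Lexp(\SA[i])$, $\deltas(\SA[i])$, and finally $\SA[i]$ in $\bigO(\log \log n)$ time. The added detail about how the type test and the symmetric adaptation work matches what the paper delegates to \cref{pr:sa-delta-a} and to the proof of \cref{pr:sa-periodic-isa}.
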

\begin{proof}
  First, using \cref{pr:sa-delta-a}, in $\bigO(1)$ time we compute
  $\type(\SA[i])$. Depending on whether $\SA[i] \in \R^{-}$ or $\SA[i]
  \in \R^{+}$, we use either a combination of
  \cref{pr:sa-delta-a,pr:sa-delta-s} or their symmetric counterparts
  (see the proof of \cref{pr:sa-periodic-isa}), to first compute
  $\Lexp(\SA[i])$ and $\deltas(\SA[i])$ in $\bigO(1)$ time, and then
  $\SA[i]$ in $\bigO(\log \log n)$ time.
\end{proof}

\subsubsection{Construction Algorithm}\label{sec:sa-periodic-construction}

\begin{proposition}\label{pr:sa-periodic-construction}
  Given $\SACore(\T)$, we can in $\bigO(n / \log_{\sigma} n)$ time
  augment it into a data structure from \cref{sec:sa-periodic-ds}.
\end{proposition}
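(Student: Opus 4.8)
The plan is to build, on top of $\SACore(\T)$, every component listed in \cref{sec:sa-periodic-ds}: the lookup tables $\LTroot$, $\LTminexp$, $\LTruns$, $\LTpref$; the bitvectors $\BVexp$ and $\BVRprim$ with rank/select support (\cref{th:binrksel}); the arrays $\ARRnontail$ (augmented with \cref{pr:range-queries}), $\ARRrmap$, and $\ARRrinvmap$; the $\bigO(1)$-time $\LCE$ structure of \cite[Theorem~5.4]{sss}; and the symmetric $\R^{+}$-counterparts. The $\LCE$ structure is built directly within the budget via \cite[Theorem~5.4]{sss}, and, since $\mu<\tfrac16$, every table whose keys range over $[0\dd\sigma^{6\tau})$ (times at most a factor $\tau=\bigO(\log n)$) occupies $\bigO(n^{6\mu}\log n)=\bigO(n/\log_\sigma n)$ words, so it can be allocated and zero-initialised within the budget; this settles the \emph{storage} of the four tables. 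To enumerate the $\tau$-runs I would first build $\BVRprim$: membership $j\in\R'$ depends only on $\T[j-1\dd j+3\tau-1)$, so a lookup table indexed by a packed window of $4\tau$ characters (that is $\bigO(\log n)$ bits, $\sigma^{4\tau}=\bigO(n^{4\mu})$ entries, each filled in $\bigO(\tau)$ time using $\LTper$) returns a $\Theta(\tau)$-bit mask of the $\R'$-positions it contains; sliding this window across the packed $\T$ in steps of $\Theta(\tau)$ fills $\BVRprim$ in $\bigO(n/\tau)$ time, after which \cref{th:binrksel} augments it. Its set bits give $\R'$ in text order, and for each $i'\in\R'$ I compute $\Lroot(i')$, $p:=|\Lroot(i')|$, and $\Lhead(i')$ from $\LTper$ and $\LTroot$, then $\rend{i'}=i'+p+\LCE(i',i'+p)$, and finally $\Ltail(i')$, $\Lexp(i')$, $\rendfull{i'}$, $\type(i')$ arithmetically (the last by comparing $\T[\rend{i'}]$ with $\T[\rend{i'}-p]$), all in $\bigO(1)$ per run; keeping the runs with $\type(i')=-1$ yields $(\rtextm_i)_{i\in[1\dd q]}$ with $q=\bigO(n/\tau)=\bigO(n/\log_\sigma n)$.

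Next I would compute the order $(\rlexm_i)_{i\in[1\dd q]}$ of $\R'^{-}$ (by $\Lroot(\cdot)$, breaking ties by $\T[\rendfull{\cdot}\dd n]$) by radix-sorting the positions on $\Int(\Lroot(\cdot))\in[0\dd\sigma^{6\tau})$ and, inside each root class, on the lexicographic rank of $\T[\rendfull{\cdot}\dd n]$; the latter ranks are obtained by sparse suffix sorting the positions $\{\rendfull{j}:j\in\R'^{-}\}$ (pairwise distinct by \cref{lm:efull}) in $\bigO(n/\log_\sigma n)$ time, exactly as \cite[Theorem~4.3]{sss} is used in the proof of \cref{pr:sa-nonperiodic-construction}, after which one radix sort of the pairs (root-rank, suffix-rank) finishes the job. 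From the bijection between $(\rtextm_i)$ and $(\rlexm_i)$ I fill $\ARRrmap$, $\ARRrinvmap$, and $\ARRnontail[i]=\ell_i=\rendfull{\rlexm_i}-\rlexm_i$ in $\bigO(q)$ time; since $\sum_i\ell_i\le 2n$ (shown in \cref{sec:sa-periodic-ds}), \cref{pr:range-queries} with $m=\Theta(\max(q,n/\log n))=\bigO(n/\log_\sigma n)$ augments $\ARRnontail$ in $\bigO(m)$ time. I fill $\LTroot$ by computing, for every $X\in\Alphabet^{3\tau-1}$ with $\per(X)\le\tfrac13\tau$, the least rotation of its period in $\bigO(\tau^2)$ time ($\bigO(\sigma^{3\tau-1}\tau^2)$ total); $\LTpref$ by packing $\Pref_{3\tau-1}(s,H)$ for each of its $\bigO(\sigma^{\tau+1})$ keys in $\bigO(1)$ time; and $\LTruns$ by answering, for each of its $\bigO(\sigma^{2\tau})$ keys $(H,H')$, two binary searches in the sorted sequence $(\rlexm_i)$, each comparison costing $\bigO(1)$ on packed data, for $\bigO(\sigma^{2\tau}\log q)\subseteq\bigO(n/\log_\sigma n)$ time.

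Constructing $\BVexp$ together with $\LTminexp$ is the crux. By \cref{lm:lce}, $\R_{s,H}$ is exactly the $\SA$-block of $X:=\Pref_{3\tau-1}(s,H)$ (whose endpoints are read from $\LTrange$, part of $\SACore$), inside which $\R^{-}_{s,H}$ is the prefix sorted by $t:=\rend{\cdot}-\cdot$; $\BVexp$ marks, within that prefix, the last position of each maximal block of equal $\Lexp$, and is $0$ everywhere else. A type-$-1$ $\tau$-run $i'$ of root $H$ contributes exactly one position (of $t$-value $s+kp+\Ltail(i')$) to the group $(s,H,k)$ for every integer $s+kp$ in the window $(3\tau-2-\Ltail(i')\dd\rendfull{i'}-i']$. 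The difficulty is that the total number of (run, group) incidences is $\sum_{i'}(\rend{i'}-i'-3\tau+2)=\bigO(n)$, so the naive position-by-position fill takes $\Omega(n)$ time and blows the budget whenever $\tau=\omega(1)$; circumventing this is the main obstacle, and it is what the technical overview calls a ``complex bit-optimal algorithm''. I would group the runs by their root $H$ and, for each root, aggregate the contributions to all residue classes $s\in[0\dd p)$ and all exponents $k$ simultaneously by word-packed arithmetic over blocks of $\Theta(\log_\sigma n)$ characters; this produces, for every $(s,H)$, both $k_{\min}(s,H)$ — which is exactly $\LTminexp[\Int(X)]$ — and the cumulative group sizes, from which the $0/1$ pattern of $\BVexp$ (interleaved with the all-$0$ stretches coming from the $\R^{+}$-part of each block and from non-$\R^{-}$ positions) is emitted a machine word at a time, in $\bigO(n/\log_\sigma n)$ total; finally \cref{th:binrksel} augments $\BVexp$.

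Finally, the $\R^{+}$-counterparts of all of the above are built in the same way, with \cref{lm:lce}\eqref{lm:lce-it-3} playing the role of \cref{lm:lce}\eqref{lm:lce-it-2} (so that inside $\R^{+}_{s,H}$ the $\SA$-order is by \emph{decreasing} $t$); this only duplicates the work. Summing over all components, the augmentation runs in $\bigO(n/\log_\sigma n)$ time and working space, as claimed.
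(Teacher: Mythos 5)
Your overall plan mirrors the paper's proof (build each component of \cref{sec:sa-periodic-ds} one by one, with the same budgets for the lookup tables, $\BVRprim$, $\ARRnontail$, and the LCE structure), and your window-based construction of $\BVRprim$ is a fine alternative to the paper's enumeration of $\R'$ via an auxiliary synchronizing set. However, the two technically hardest steps are not actually carried out, and both are genuine gaps.

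First, the computation of $(\rlexm_i)_{i\in[1\dd q]}$. You propose to obtain, inside each root class, the lexicographic ranks of the suffixes $\T[\rendfull{j}\dd n]$ for $j\in\R'^{-}$ by ``sparse suffix sorting the positions $\{\rendfull{j}\}$ \ldots exactly as \cite[Theorem~4.3]{sss} is used in the proof of \cref{pr:sa-nonperiodic-construction}.'' That theorem sorts the suffixes starting at the positions of a \emph{$\tau$-synchronizing set}; in \cref{pr:sa-nonperiodic-construction} the sorted set is $\S$ itself, which is why it applies. The set $\{\rendfull{j} : j\in\R'^{-}\}$ is not a synchronizing set (these positions lie inside the periodic runs, near their right ends, and nothing guarantees they belong to any synchronizing set), and no cited result gives deterministic sparse suffix sorting of an \emph{arbitrary} set of $\Theta(n/\tau)$ positions in $\bigO(n/\log_{\sigma}n)$ time; even with the $\bigO(1)$-time LCE structure, a comparison sort costs $\Theta\bigl(\tfrac{n}{\tau}\log n\bigr)=\Theta(n\log\sigma)$, exceeding the budget by a $\log n$ factor. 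The paper closes exactly this gap with an extra reduction: it uses the fact that $\rend{j}-2\tau+1$ \emph{is} guaranteed to lie in any $\tau$-synchronizing set $\mathsf{P}$, sorts the suffixes starting in $\mathsf{P}$ via \cite[Theorem~4.3]{sss}, and then radix-sorts the tuples $(\Int(\Lroot(j)),\,\rend{j}-\rendfull{j},\,\ISA_{\mathsf{P}}[j_{\mathsf{P}}],\,j)$, after proving that (for type $-1$ runs with equal root) a shorter tail $\rend{j}-\rendfull{j}$ implies a lexicographically smaller $\T[\rendfull{j}\dd n]$, and that ties reduce to comparing the suffixes at the synchronizing positions. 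Your proposal is missing this reduction, and without it the claimed running time for this step is unsubstantiated.

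Second, the construction of $\BVexp$ (and $\LTminexp$). You correctly identify the obstruction — the number of (run, exponent-group) incidences is $\Theta(n)$, so any per-incidence processing is too slow — but your remedy, ``aggregate the contributions to all residue classes and all exponents simultaneously by word-packed arithmetic,'' is a restatement of the goal rather than an algorithm: it does not say how the group sizes $e_{s,k,H}=|\{j\in\R'^{-}_H : s+k|H|\in\mathcal{I}_j\}|$ are obtained, nor how the $0/1$ pattern is emitted a word at a time. This is precisely where the paper invests most of its proof: it radix-sorts the $\bigO(|\R'^{-}_H|)$ interval endpoints $\mathcal{I}_j$ and sweeps them to output the \emph{unary encoding} of $(e_{0,k,H})_k$ in packed form, then derives each residue class $s$ from $s-1$ by splicing packed substrings of the previous bitvector guided by a second sorted event list (so the total work is $\bigO(|\R'^{-}_H|+|\R^{-}_H|/\log n+n^{\alpha})$ per root), converts $\unary$ to $\unary^{+}$ by a packed local transformation, reads off $\LTminexp$ as the position of the first $0$-bit, and finally pastes the blocks into $\BVexp$ at offsets given by $\LTrange$. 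Some mechanism of this kind (event-sorted prefix sums with packed output, or an equivalent) is required for the $\bigO(n/\log_{\sigma}n)$ bound; as written, your proof of the crux step is a claim, not a construction.
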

\begin{proof}
  Due to a large number of components, as well as dependency of some
  components on others, we present the description in separate
  paragraphs, in the order in which it occurs.

  \paragraph{Construction of $\LTroot$}

  To compute $\LTroot$, we observe that, given $X \in \Alphabet^{3\tau
  - 1}$, we can check in $\bigO(\tau^2)$ time if $\per(X) \leq
  \tfrac{1}{3}\tau$, and if so, determine the value $\LTroot[\Int(X)]
  = (s, p)$. To compute $\per(X)$, we try all $\ell \in [1 \dd
  \lfloor\frac{\tau}{3}\rfloor]$ until we find that $\ell$ is a period
  of $X$, or that there is no such $\ell$. Assuming $p := \per(X) \leq
  \tfrac{1}{3}\tau$, finding $s \in [0 \dd p)$ satisfying $X[1 + s \dd
  1 + s + p) = \min\{X[t \dd t + p) : t \in [1 \dd p]\}$ also takes
  $\bigO(\tau^2)$ time. Initializing $\LTroot$ takes
  $\bigO(\sigma^{6\tau}) = \bigO(n / \log_{\sigma} n)$. Over all $X
  \in \Alphabet^{3\tau - 1}$, we spend $\bigO(\sigma^{3\tau - 1}
  \tau^2) = \bigO(n^{1/2} \log^2 n) = \bigO(n / \log_{\sigma} n)$
  time.

  \paragraph{Construction of the structure for $\LCE$ queries}

  By~\cite[Theorem~5.4]{sss}, the data structure for $\LCE$ queries on
  $\T$ can be constructed from the packed representation of $\T$ in
  $\bigO(n/\log_{\sigma} n)$ time.

  \paragraph{Construction of $\BVexp$}

  To simplify the notation, for the duration of this proof, we denote
  $E := \BVexp$.
  We use the following definitions. For any $H \in \Lroots$ and $s \in
  [0 \dd |H|)$, let $E^{-}_{s,H}$ denote the block of $E$
  corresponding to suffixes starting in $\R^{-}_{s,H}$, i.e.,
  $E^{-}_{s,H} = E(b \dd e]$, where $(b \dd e] \sub [1 \dd n]$ is such
  that $\R^{-}_{s,H} = \{\SA[i] : i \in (b \dd e]\}$ (such $(b \dd e]$
  exists by \cref{lm:lce}\eqref{lm:lce-it-1}). Finally, let $\unary(x)
  := {\tt 0}^{x}{\tt 1}$ denote the unary encoding of an integer $x
  \geq 0$, and let $\unary^{+}(x)$ be $\unary(x)$ with the first
  symbol removed (in particular, $\unary^{+}(0)$ is the empty
  string). If $(a_i)_{i\in [1\dd k]}$ is a sequence of non-negative
  integers, we define $\unary((a_i)_{i\in [1\dd k]}) :=
  \bigodot_{i=1}^{k}\unary(a_i)$, where $\bigodot$ denotes
  concatenation.  Analogously, $\unary^{+}((a_i)_{i\in [1\dd k]}) :=
  \bigodot_{i=1}^{k} \unary^{+}(a_i)$.  The definitions of $\unary$
  and $\unary^{+}$ are naturally extended to infinite sequences
  $(a_i)_{i\in [1\dd \infty)}$.

  Let $\alpha < 1$ be a positive constant. We first show an algorithm
  that, given the set
  of positions $\R'^{-}_{H}$ (where $H \in \Lroots$) as input,
  computes all bitvectors $E^{-}_{0, H}, \ldots, E^{-}_{|H| - 1, H}$
  in $\bigO(|\R'^{-}_H| + |\R^{-}_H|/\log n + n^{\alpha})$ time.  For
  any $s \in [0 \dd |H|)$ and $k \geq 0$, denote $e_{s,k,H} = |\{j'
  \in \R^{-}_{s,H} : \Lexp(j') = k\}|$.  We start by observing that by
  \cref{lm:lce}\eqref{lm:lce-it-2}, $E^{-}_{s,H} =
  \unary^{+}((e_{s,k,H})_{k\in [0\dd \infty)})$.  The values
  $e_{s,k,H}$ can be efficiently determined based on the following
  observation.  First, note that if $j \in \R'^{-}_H$, then $[j \dd
  \rend{j} - 3\tau + 2) \subseteq \R^-_H$, and $j - 1, \rend{j} -
  3\tau + 2 \not\in \R$, i.e., the block of positions in $\R^-_H$ is
  maximal.  By \cref{lm:R-block}, for any $j' \in [j \dd \rend{j} -
  3\tau + 2)$, it holds $\rend{j'} = \rend{j}$.  Thus, for any $j' \in
  [j \dd \rend{j} - 3\tau + 2)$, we have $\Lexp(j') = \lfloor \tfrac{e
  - j'}{|H|} \rfloor$ and $\Lhead(j') = (e - j') \bmod |H|$, where $e
  = \rend{j} - \Ltail(j)$. With this in mind, for any $j \in
  \R'^{-}_{H}$, we let $\mathcal{I}_j = (3\tau \,{-}\, 2 \,{-}\, t \dd
  s + k|H|]$, where $s = \Lhead(j)$, $k = \Lexp(j)$, and $t =
  \Ltail(j)$. By the above discussion, for any $s \in [0 \dd |H|)$ and
  $k \geq 0$, we have $e_{s,k,H} = |\{j\in \R'^-_H : s + k|H| \in
  \mathcal{I}_j\}|$.  The algorithm consists of three steps:

  \begin{enumerate}
  \item First, we compute the string $\unary((e_{0,k,H})_{k = 0}
    ^{k_{\max}})$, where $k_{\max} = \max \{\Lexp(j') : j' \in
    \R^{-}_H\}$. We start by computing $k_{\max}$. For this we observe
    that $k_{\max} = \max\{\Lexp(j') : j' \in \R'^{-}_H\}$. Thus,
    using \cref{pr:isa-root}, we can compute $k_{\max}$ in
    $\bigO(|\R'^{-}_H|)$ time. To compute $\unary((e_{0, k,
      H})_{k\in[0\dd k_{\max}]})$, we generate the sequence of
    ``events'' from $\R'^{-}_H$, sort them, and then output
    $\unary((e_{0,k,H})_{k \in [0\dd k_{\max}]})$ left-to-right. More
    precisely, let $m = |\R'^{-}_H|$, and let $(p_i,v_i)_{i\in [0\dd
    2m]}$ be a sequence containing the multiset
    $\{(0,0),(k_{\max}+1,0)\} \cup \{(\lceil \min \mathcal{I}_j/|H|
    \rceil, +1) : j \in \R'^{-}_H\} \cup \{(\lfloor \max
    \mathcal{I}_j/|H| \rfloor + 1, -1) : j \in \R'^{-}_H\}$ such that
    for any $i \in [1 \dd 2m]$, it holds $p_{i - 1} \leq p_i$. To
    compute the sequence $(p_i, v_i)_{i\in [0\dd 2m]}$, we observe
    that, given $j \in \R'^{-}_H$, we can compute $\mathcal{I}_j$ in
    $\bigO(1)$ time using \cref{pr:isa-root}. Thus, in $\bigO(m)$ time
    we can generate all pairs in the above multiset. We then sort the
    pairs by the first element. Using $\lceil 1 / \alpha \rceil$-round
    radix sort, this takes $\bigO(m + n^{\alpha})$ time. Consequently,
    we can compute $(p_i, v_i)_{i\in [0\dd 2m]}$ in $\bigO(|\R'^{-}_H|
    + n^{\alpha})$ time.  Given the sequence $(p_i, v_i)_{i\in [0\dd
    2m]}$, we compute $\unary((e_{0, k, H})_{k\in[0\dd k_{\max}]})$ as
    follows.  First, we initialize the output bitvector to the empty
    string and set $v = 0$.  We then iterate through $i = 1, \ldots,
    2m$. For every $i$, we first append $p_i - p_{i-1}$ copies of the
    string $\unary(v)$ to the output string. We then add $v_i$ to
    $v$. To efficiently append multiple copies of $\unary(v)$ to the
    output, we first precompute (in $\bigO(\log^2 n) =
    \bigO(n^{\alpha})$ time) the prefix of length $\log n$ of the
    string $\unary(x)^{\infty}[1 \dd)$ for every $x \in [0 \dd \log n)$.
    This way, we can append $\unary(v)^{\ell}$ to the output in
    $\bigO(1 + (v+1)\ell/\log n)$ time. Consequently, the construction
    of $\unary((e_{0, k, H})_{k\in[0\dd k_{\max}]})$ takes
    $\bigO(|\R'^{-}_H| + |\unary((e_{0, k, H})_{k\in[0\dd k_{\max}]})|
    / \log n + n^{\alpha}) = \bigO(|\R'^{-}_H| + |E^{-}_{0, H}| / \log
    n + n^{\alpha}) = \bigO(|\R'^{-}_H| + |\R^{-}_H| / \log n +
    n^{\alpha})$ time. To show the first upper bound, observe that
    $k_{\max} \leq |E^{-}_{0, H}| + \bigO(\tau/|H|)$. Thus,
    $|\unary((e_{0, k, H})_{k\in[0\dd k_{\max}]})| =
    |\unary^{+}((e_{0, k, H})_{k\in[0\dd \infty)})| + k_{\max} + 1 =
    |E^{-}_{0, H}| + k_{\max} + 1 \leq 2|E^{-}_{0, H}| + \bigO(\log
    n)$.  The second upper bound follows by observing that
    $|E^{-}_{0,H}| + \cdots + |E^{-}_{|H| - 1, H}| = |\R^{-}_H|$.

  \item The second step of the algorithm is to compute the strings
    $\unary((e_{s, k, H})_{k\in[0\dd k_{\max}]})$ for $s \in [1 \dd
    |H|)$.  For any $s \in [1 \dd |H|)$, let $(q^{(s)}_{i},
    p^{(s)}_{i}, v^{(s)}_{i})_{i\in [0\dd m_s]}$ denote the sequence
    containing all the elements $(q, p, v)$ of the multiset $\{(q, 0,
    0) : q \in [1 \dd |H|)\} \cup \{(q, k_{\max} + 1, 0) : q \in [1
    \dd |H|)\} \cup \{(\min \mathcal{I}_j \bmod |H|, \lfloor \min
    \mathcal{I}_j / |H| \rfloor, +1) : j \in \R'^{-}_H\} \cup \{((\max
    \mathcal{I}_j + 1) \bmod |H|, \lfloor (\max \mathcal{I}_j + 1) /
    |H| \rfloor, -1) : j \in \R'^{-}_H\}$ that satisfy $q = s$, and
    for any $i \in [1 \dd m_s]$, it holds $p^{(s)}_{i-1} \leq
    p^{(s)}_{i}$ (note that the elements of this multiset satisfying
    $q=0$ are not included in any sequence).  To compute the sequences
    $(q^{(s)}_i, p^{(s)}_i, v^{(s)}_i)_{i\in [0\dd m_s]}$ for all $s
    \in [1 \dd |H|)$, we first enumerate all triples in the above
    multiset. Using \cref{pr:isa-root}, this takes $\bigO(m)$ time.
    We then sort the triples lexicographically. Using $\lceil 1 /
    \alpha \rceil$-round radix sort, this takes $\bigO(m +
    n^{\alpha})$ time.  This yields all sequences concatenated
    together. It is easy to discard unused elements, and to detect
    boundaries between lists with a single scan.  Consequently, we can
    construct all sequences in $\bigO(|\R'^{-}_H| + n^{\alpha})$
    time. Given the above sequences, we can compute the strings
    $\unary((e_{s, k, H})_{k\in[0\dd k_{\max}]})$ for $s \in [1 \dd
    |H|)$ as follows. The algorithm computes the strings in the order
    of increasing $s$. More precisely, given the string $U :=
    \unary((e_{s - 1, k, H})_{k\in[0\dd k_{\max}]})$ and the sequence
    $(q^{(s)}_i, p^{(s)}_i, v^{(s)}_i)_{i\in [0\dd m_s]}$ (where $s
    \in [1 \dd |H|)$), we compute the string $\unary((e_{s, k,
    H})_{k\in[0\dd k_{\max}]})$ in $\bigO(m_s + |U|/\log n)$ time as
    follows. First, we initialize the output bitvector to the empty
    string, and set $v = 0$ and $y = 0$.  We then iterate through $i =
    1, \ldots, m_s$. For every $i$, we first check if $p^{(s)}_i >
    p^{(s)}_{i - 1}$. If yes, we perform the following three
    steps. First, find the position $y'$ of the $p^{(s)}_i$th 1-bit in
    $U$. Second, append the substring $U(y \dd y']$ to the output,
    except we first prepend it with $v$ zeros (if $v \geq 0$) or
    discard its first $-v$ bits (if $v < 0$). Finally, we set $y = y'$
    and $v = 0$. Then (regardless of whether $p^{(s)}_i > p^{(s)}_{i -
    1}$), we add $v^{(s)}_i$ to $v$. To efficiently compute $y'$ we
    observe that the arguments of the consecutive select queries are
    increasing. We can thus precompute in $\bigO(n^{\alpha})$ time a
    lookup table such that the computation of $y'$ takes $\bigO(1 +
    (y' - y)/\log n)$ time (these lookup tables can be shared among
    algorithms for different $s$).  Note that for any $s \in [0 \dd
    |H|)$, we have $k_{\max} \leq |E^{-}_{s,H}| +
    \bigO(\tau/|H|)$. Thus, $|U| \leq 2|E^{-}_{s - 1, H}| + \bigO(\log
    n)$, and hence the algorithm runs in $\bigO(m_s + |E^{-}_{s - 1,
    H}|/\log n)$ time.  Consequently, by $m_0 + \cdots + m_{|H| - 1}
    \leq 2|\R'^{-}_H| + 2|H|$ and $|E^{-}_{0, H}| + \cdots +
    |E^{-}_{|H| - 1, H}| = |\R^{-}_H|$, over all $s \in [1 \dd |H|)$,
    we spend $\bigO(|\R'^{-}_{H}| + |\R^{-}_H|/\log n + n^{\alpha})$
    time.

  \item The third and final step of the algorithm is to convert the
    string $\unary((e_{s, k, H})_{k\in[0\dd k_{\max}]})$ into
    $\unary^{+}((e_{s, k, H})_{k\in[0\dd k_{\max}]}) = E^{-}_{s,H}$
    for every $s \in [0 \dd |H|)$. Let us fix some $s \in [0 \dd
    |H|)$.  Observe that to implement the conversion, it suffices
    to remove the first bit, as well as every bit following a 1-bit in
    $\unary((e_{s, k, H})_{k\in[0\dd k_{\max}]})$. In the RAM model,
    such local operation is easy implemented in
    $\bigO(1+|\unary((e_{s, k, H})_{k\in[0\dd k_{\max}]})|/\log n)$
    time after a $\bigO(n^{\alpha})$-time preprocessing (we do the
    preprocessing once for all $s \in [0 \dd |H|)$). As observed
    above, $|\unary((e_{s, k, H})_{k\in[0\dd k_{\max}]})| \leq
    2|E^{-}_{s,H}| + \bigO(\log n)$.  Thus, the total time to perform
    the conversion for all $s$ is $\bigO(|R^{-}_H|/\log n +
    n^{\alpha})$.
  \end{enumerate}

  Using the above algorithm, we construct $E$ as follows. We start by
  computing the set $\{(\Int(\Lroot(j)), j)\}_{j \in \R'^{-}}$. For
  this, observe that for every $\tau$-synchronizing set $\mathsf{P}$
  of $\T$, by the density condition (see
  also~\cite[Section~6.1.2]{sss}), $i \in \R'$ implies that either $i
  = 1$ or $i > 1$ and $i - 1 \in \mathsf{P}$.  In particular,
  $|\R'^{-}| \leq |\R'| \leq 1 + |\mathsf{P}|$.  We thus proceed as
  follows. First, using~\cite[Theorem~8.11]{sss} in
  $\bigO(n/\log_{\sigma} n)$ time we construct any
  $\tau$-synchronizing set $\mathsf{P}$ of $\T$ of size
  $\bigO(n/\tau)$.  Then, using the above observation together with
  \cref{pr:isa-root}, we enumerate the set $\{(\Int(\Lroot(j)),
  j)\}_{j \in \R'^{-}}$ in $\bigO(1 + |\mathsf{P}|) =
  \bigO(n/\log_{\sigma} n)$ time. We then discard $\mathsf{P}$.  Using
  $\lceil 1/\alpha \rceil$-round radix sort we then sort in
  $\bigO(|\R'^{-}| + n^{\alpha}) = \bigO(n / \log_{\sigma} n +
  n^{\alpha})$ time the set of pairs by the first coordinate. This
  yields the representation of sets $\R'^{-}_H$ for all $H \in
  \Lroots$. For each $H \in \Lroots$, we then use the above algorithm
  to compute bitvectors $E^{-}_{0, H}, \ldots, E^{-}_{|H| - 1, H}$ in
  $\bigO(|\R'^{-}_H| + |\R^{-}_H| / \log n + n^{\alpha})$ time. By
  $\Lroots \subseteq \Alphabet^{\leq \tau}$, over all $H$, this takes
  $\bigO(|\R'^{-}| + |R^{-}| / \log n + n^{\alpha + \mu})$ time
  (recall that $\tau = \lfloor \mu \log_{\sigma} n \rfloor$ and
  $\mu < \frac16$). Choosing $\alpha < 1 - \mu$
  results in $\bigO(n / \log_{\sigma} n)$ total time.  When bitvectors
  $E^{-}_{s,H}$ are computed for all $H \in \Lroots$ and $s \in [0 \dd
  |H|)$, we initialize $E$ to the string ${\tt 0}^{n}$ in $\bigO(n /
  \log n)$ time, and then ``paste'' all the non-empty bitvectors
  $E^{-}_{s,H}$ into their correct positions. Given $H \in \Lroots$
  and $s \in [0 \dd |H|)$, we first compute in $\bigO(\log n)$ time
  the corresponding string $X \in \Alphabet^{3\tau - 1}$, and then
  compute the position to paste $E^{-}_{s,H}$ using the lookup table
  $\LTrange$.  Over all $H \in \Lroots$ and $s \in [0 \dd |H|)$, this
  takes $\bigO(n^{\mu} \log^2 n + n / \log n) = \bigO(n /
  \log_{\sigma} n)$ time. Thus, altogether, constructing $E$ and
  augmenting it using \cref{th:binrksel} takes $\bigO(n /
  \log_{\sigma} n)$ time.

  \paragraph{Construction of $\LTminexp$}

  Observe that in the above algorithm, if $i$ is the position of the
  leftmost 0-bit in $\unary((e_{s, k, H})_{k\in[0\dd k_{\max}]})$,
  then $\min \{\Lexp(j) : j \in \R^{-}_{s,H}\} = i - 1$. Given the
  packed representation of $\unary((e_{s, k, H})_{k\in[0\dd
  k_{\max}]})$, the position $i$ can be easily found in
  $\bigO(1+|\unary((e_{s, k, H})_{k\in[0\dd k_{\max}]})| / \log n)$
  time. Thus, accounting for the computation of $X \in
  \Alphabet^{3\tau - 1}$ corresponding to the choice of $H \in
  \Lroots$ and $s \in [0 \dd |H|)$, we can initialize $\LTminexp$ in
  $\bigO(n / \log n + n^{\mu} \log^2 n) = \bigO(n / \log_{\sigma} n)$
  time.

  \paragraph{Construction of $\BVRprim$}

  As seen above, we can enumerate $\R'$, and thereby compute
  $\BVRprim$, in $\bigO(n / \log_{\sigma} n)$ time.  Augmenting
  $\BVRprim$ with \cref{th:binrksel} takes $\bigO(n / \log n)$ time.

  \paragraph{Construction of $\ARRrmap$}

  Since for any $j \in \R$, we can in $\bigO(1)$ compute $\Lroot(j)$,
  $\rend{j}$, $s = \Lhead(j)$, and $k=\Lexp(j)$, in $\bigO(n /
  \log_{\sigma} n)$ time we can also enumerate all $j \in
  \R'^{-}$. The key challenge is computing the sequence $(\rlexm_i)_{i
  \in [1 \dd q]}$. By the density condition, for every
  $\tau$-synchronizing set $\mathsf{P}$ of $\T$, it holds that if $j
  \in \R$, then $\rend{j} - 2\tau + 1 \in \mathsf{P}$ (for a proof,
  simply compare the claims of \cref{lm:run-end}
  and~\cite[Fact~3.2]{sss}). This lets us compute $(\rlexm_i)_{i \in
  [1 \dd q]}$ as follows.  First, using~\cite[Theorem~8.11]{sss}, in
  $\bigO(n / \log_{\sigma} n)$ time we construct any
  $\tau$-synchronizing set $\mathsf{P}$ of $\T$ of size $\bigO(n /
  \tau)$. The set $\mathsf{P}$ is returned as an array of size
  $|\mathsf{P}|$. In $\bigO(n/\log_{\sigma} n)$ time we then create a
  bitvector $B_{\mathsf{P}}[1 \dd n]$ such that $B_{\mathsf{P}}[i] =
  1$ holds if and only if $i \in \mathsf{P}$. In $\bigO(n/\log n)$
  time we augment $B_{\mathsf{P}}$ using \cref{th:binrksel}. Let
  $(p^{\rm text}_t)_{t \in [1 \dd |\mathsf{P}|]}$ denote a sequence
  containing elements of $\mathsf{P}$ in increasing order and let
  $(p^{\rm lex}_t)_{t \in [1 \dd |\mathsf{P}|]}$ denote a sequence
  containing $\mathsf{P}$ sorted according to the lexicographical
  order of the corresponding suffixes in $\T$, i.e., such that for any
  $i, i' \in [1 \dd |\mathsf{P}|]$, $i < i'$ implies $\T[p^{\rm
  lex}_{i} \dd n] \prec \T[p^{\rm lex}_{i'} \dd n]$.  Given the
  array containing $\mathsf{P}$, we compute the sequence $(p^{\rm
  lex}_t)_{t \in [1 \dd |\mathsf{P}|]}$ in $\bigO(n/\log_{\sigma}
  n)$ time using~\cite[Theorem~4.3]{sss}. Let $\ISA_{\mathsf{P}}[1 \dd
  |\mathsf{P}|]$ be an array storing a permutation of $[1 \dd
  |\mathsf{P}|]$ such that $\ISA_{\mathsf{P}}[j] = i$ holds if and
  only if $p^{\rm text}_j = p^{\rm lex}_i$. Using the sequence
  $(p^{\rm lex}_t)_{t \in [1 \dd |\mathsf{P}|]}$ and the bitvector
  $B_{\mathsf{P}}$, we compute $\ISA_{\mathsf{P}}$ in
  $\bigO(|\mathsf{P}|) = \bigO(n/\log_{\sigma} n)$ time: For every $i
  \in [1 \dd |\mathsf{P}|]$, we first compute $j =
  \rank{B_{\mathsf{P}}}{1}{p^{\rm lex}_i}$ and then set
  $\ISA_{\mathsf{P}}[j] = i$.  Next, for each $j \in \R'^{-}$, letting
  $H = \Lroot(j)$ and $j_{\mathsf{P}} =
  \rank{B_{\mathsf{P}}}{1}{\rend{j} - 2\tau + 1}$, we form a tuple
  $(\Int(H), \rend{j} - \rendfull{j},
  \ISA_{\mathsf{P}}[j_{\mathsf{P}}], j)$. Observe, that
  $X \prec X'$ holds if and only if $\Int(X) <
  \Int(X')$.  Let $j,j' \in \R'^{-}_H$. Note that since both
  $\T[\rendfull{j} \dd \rend{j})$ and $\T[\rendfull{j'} \dd
  \rend{j'})$ are prefixes of $H$, by definition of $\R^{-}$,
  $\rend{j} - \rendfull{j} < \rend{j'} - \rendfull{j'}$ implies
  $\T[\rendfull{j} \dd n] \prec \T[\rendfull{j'} \dd n]$. If $\rend{j}
  - \rendfull{j} = \rend{j'} - \rendfull{j'}$, then $\T[\rend{j} -
  2\tau + 1 \dd \rendfull{j}) = \T[\rend{j'} - 2\tau + 1 \dd
  \rendfull{j'})$, and consequently, $\T[\rendfull{j} \dd n] \prec
  \T[\rendfull{j'} \dd n]$ holds if and only if
  $\ISA_{\mathsf{P}}[j_{\mathsf{P}}] <
  \ISA_{\mathsf{P}}[j'_{\mathsf{P}}]$. We have thus shown that sorting
  the tuples lexicographically yields a sequence $(\rlexm_i)_{i \in [1
  \dd q]}$ on the fourth coordinate. Given $j \in \R'^{-}$, we can
  compute the corresponding tuple in $\bigO(1)$ time. Thus, since all
  its elements are integers in the range $[1 \dd n]$, using LSD
  radix-sort, we can compute $(\rlexm_i)_{i \in [1 \dd q]}$ in
  $\bigO(n / \log_{\sigma} n)$ time.  With a single scan of
  $(\rlexm_i)_{i \in [1 \dd q]}$ and the help of rank queries on
  $\BVRprim$ we can then compute table $\ARRrmap$ in $\bigO(n /
  \log_{\sigma} n)$ time.

  \paragraph{Construction of $\ARRrinvmap$}

  Given $\ARRrmap$, we can compute $\ARRrinvmap$ in $\bigO(q) =
  \bigO(n / \log_{\sigma} n)$ time, since these two arrays are
  inverses of each other.

  \paragraph{Construction of $\LTruns$}

  In $\bigO(\sigma^{\tau} + |\R'^{-}|)$ time we perform a synchronized
  enumeration of all $H \in \Alphabet^{\leq \tau}$ in lexicographical
  order and the $\Lroot$ values (obtained using \cref{pr:isa-root})
  for positions in the sequence $(\rlexm_i)_{i \in [1 \dd q]}$.  This
  lets us obtain the pair $(b_H,e_H)$ satisfying $\{\rlexm_i : i \in
  (b_H \dd e_H]\} = \R'^{-}_H$ for every $H \in \Alphabet^{\leq \tau}$
  satisfying $\R'^{-}_{H} \neq \emptyset$. For each such $H$, we then
  enumerate all $H' \in \Alphabet^{\leq \tau}$ and for each we find
  corresponding subrange of $(b_H \dd e_H]$ in $\bigO(\tau \log n)$
  time using binary search. Overall, the initialization of $\LTruns$
  takes $\bigO(\sigma^{6\tau} \tau + |\R'^{-}| + \sigma^{2\tau} \tau
  \log n) = \bigO(n / \log_{\sigma} n)$ time.

  \paragraph{Construction of $\LTpref$}

  To construct $\LTpref$, we enumerate all possible $H \in
  \Alphabet^{\leq \tau}$.  For each $H$, we try all $s
  \in [0 \dd |H|)$, and for each we construct the string $\Pref_{3\tau
  - 1}(s,H)$ in $\bigO(\tau)$ time.  Over all $H$, and including the
  initialization of $\LTpref$, this takes $\bigO(\sigma^{6\tau} \tau +
  \sigma^{\tau} \tau^2) = \bigO(n^{6\mu} \log n) = \bigO(n /
  \log_{\sigma} n)$ time.

  \paragraph{Construction of range counting/selection for $A$}

  From $(\rlexm_i)_{i \in [1 \dd q]}$ we construct in $\bigO(n /
  \log_{\sigma} n)$ time the sequence $(\ell_i)_{i \in [1 \dd q]}$,
  and then build the array $\ARRnontail[1 \dd q]$ and augment it with
  a range counting/selection data structure. Using
  \cref{pr:range-queries}, by $q = \bigO(n / \log_{\sigma} n)$ and
  $\sum_{i=1}^{q} \ARRnontail[i] = \bigO(n)$, this takes $\bigO(n /
  \log_{\sigma} n)$ time.

  \paragraph{Construction of the remaining components}

  After the above components are constructed, we then analogously
  construct their symmetric counterparts (adapted according to
  \cref{lm:lce}).
\end{proof}

\subsection{The Final Data Structure}\label{sec:sa-final}

In this section, we put together
\cref{sec:sa-core,sec:sa-nonperiodic,sec:sa-periodic} to obtain a data
structure that, given any $j \in [1 \dd n]$ (resp. $i \in [1 \dd n]$)
computes $\ISA[j]$ (resp.\ $\SA[i]$) in $\bigO(\log^{\epsilon} n)$ time.

The section is organized as follows. First, we introduce the
components of the data structure (\cref{sec:sa-ds}). Next, we describe
the query algorithms (\cref{sec:sa-isa,sec:sa-sa}). Finally, we show
the construction algorithm (\cref{sec:sa-construction}).

\subsubsection{The Data Structure}\label{sec:sa-ds}

The data structure consists of two components:
\begin{enumerate}
\item The structure from \cref{sec:sa-nonperiodic-ds} (used to handle
  nonperiodic positions).
\item The structure from \cref{sec:sa-periodic-ds} (used to handle
  periodic positions).
\end{enumerate}

In total, the data structure needs $\bigO(n / \log_{\sigma} n)$ space.

\subsubsection{Implementation of
  \texorpdfstring{$\ISA$}{ISA} Queries}\label{sec:sa-isa}

\begin{proposition}\label{pr:sa-isa}
  Given the data structure from \cref{sec:sa-ds} and any $j \in [1 \dd
    n]$, in $\bigO(\log^{\epsilon} n)$ time we can compute $\ISA[j]$.
\end{proposition}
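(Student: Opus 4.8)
The plan is to dispatch the query to one of the two substructures according to whether $j$ is periodic, having first used $\SACore(\T)$ to determine which case applies. First I would invoke \cref{pr:sa-core-isa} on the input position $j$; since $\SACore(\T)$ is a component of the structure from \cref{sec:sa-periodic-ds} (and hence available here), this tells us in $\bigO(1)$ time whether $j \in \R$. If $j \notin \R$ (i.e.\ $j$ is nonperiodic), we feed $j$ to the algorithm of \cref{pr:sa-nonperiodic-isa}, which operates on the component from \cref{sec:sa-nonperiodic-ds} and returns $\ISA[j]$ in $\bigO(\log^{\epsilon} n)$ time. If $j \in \R$ (i.e.\ $j$ is periodic), we feed $j$ to the algorithm of \cref{pr:sa-periodic-isa}, which operates on the component from \cref{sec:sa-periodic-ds} and returns $\ISA[j]$ in $\bigO(\log \log n)$ time.

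The only thing that needs to be checked is that the two cases are exhaustive and that the correct branch is invoked: by \cref{def:position-periodicity}, every $j \in [1 \dd n]$ is either periodic ($j \in \R$) or nonperiodic ($j \notin \R$), and the preconditions of \cref{pr:sa-nonperiodic-isa} and \cref{pr:sa-periodic-isa} are exactly $j \in [1 \dd n] \setminus \R$ and $j \in \R$, respectively, so whichever test \cref{pr:sa-core-isa} returns, the corresponding subroutine is applicable. For the time bound, the case test costs $\bigO(1)$, the nonperiodic branch costs $\bigO(\log^{\epsilon} n)$, and the periodic branch costs $\bigO(\log \log n) = \bigO(\log^{\epsilon} n)$ (using $\epsilon \in (0,1)$ is a fixed constant); taking the maximum gives the claimed $\bigO(\log^{\epsilon} n)$ overall.

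There is essentially no obstacle here: the proposition is a routine assembly of the machinery already built in \cref{sec:sa-core,sec:sa-nonperiodic,sec:sa-periodic}. The only minor points of care are (i) confirming that both substructures share and agree on the same $\tau$-synchronizing set $\S$ and the same index core $\SACore(\T)$, which they do since $\tau$ is fixed throughout \cref{sec:sa} and $\S$ is fixed in \cref{sec:sa-nonperiodic-ds} and reused in \cref{sec:sa-periodic-ds}, and (ii) noting that $\SACore(\T)$ is genuinely accessible from this combined structure (it appears as a sub-component of each half), so the case-distinguishing query of \cref{pr:sa-core-isa} can indeed be executed. With these observations the proof is immediate.
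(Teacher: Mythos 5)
Your proposal matches the paper's own proof: it dispatches via \cref{pr:sa-core-isa} (available through the shared core $\SACore(\T)$) and then applies \cref{pr:sa-nonperiodic-isa} or \cref{pr:sa-periodic-isa} depending on whether $j \in \R$, with the same $\bigO(\log^{\epsilon} n)$ accounting. This is correct and essentially the same argument, so nothing further is needed.
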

\begin{proof}
  First, we use \cref{pr:sa-core-isa} to check in $\bigO(1)$ time if
  $j \in \R$. Depending on whether $j \in \R$ or not, we use
  \cref{pr:sa-nonperiodic-isa} or \cref{pr:sa-periodic-isa} to compute
  $\ISA[j]$ in $\bigO(\log^\epsilon n)$ or $\bigO(\log \log n)$ time
  (respectively).
\end{proof}

\subsubsection{Implementation of
  \texorpdfstring{$\SA$}{SA} Queries}\label{sec:sa-sa}

\begin{proposition}\label{pr:sa-sa}
  Given the data structure from \cref{sec:sa-ds} and any $i \in [1 \dd
  n]$, in $\bigO(\log^{\epsilon} n)$ time we can compute $\SA[i]$.
\end{proposition}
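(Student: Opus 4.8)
The plan is to dispatch on whether $\SA[i]$ is a periodic or a nonperiodic position, exactly mirroring the proof of \cref{pr:sa-isa}. The first step is to observe that the data structure of \cref{sec:sa-ds} contains, as part of the component from \cref{sec:sa-periodic-ds}, the index core $\SACore(\T)$. Hence, given the query index $i \in [1 \dd n]$, I would first invoke \cref{pr:sa-core-sa} to determine in $\bigO(1)$ time whether $\SA[i] \in \R$.

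If $\SA[i] \in [1 \dd n] \setminus \R$ (a nonperiodic position), I would feed $i$ to the component from \cref{sec:sa-nonperiodic-ds} and apply \cref{pr:sa-nonperiodic-sa}, which returns $\SA[i]$ in $\bigO(\log^{\epsilon} n)$ time. If instead $\SA[i] \in \R$ (a periodic position), I would feed $i$ to the component from \cref{sec:sa-periodic-ds} and apply \cref{pr:sa-periodic-sa}, which returns $\SA[i]$ in $\bigO(\log \log n)$ time. Since $\epsilon \in (0,1)$ is a fixed constant, $\bigO(\log \log n) \subseteq \bigO(\log^{\epsilon} n)$, so in either case the total running time is $\bigO(\log^{\epsilon} n)$, as claimed.

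There is essentially no obstacle here: all the combinatorial and algorithmic work was carried out in \cref{sec:sa-nonperiodic,sec:sa-periodic}, and this proposition is just the assembly step for $\SA$ queries (the analogue of \cref{pr:sa-isa} for $\ISA$ queries). The only points worth spelling out are (i) that the dispatch test of \cref{pr:sa-core-sa} relies exactly on the shared $\SACore(\T)$ component already present in the structure, so no extra space or preprocessing is incurred, and (ii) that the $\bigO(\log \log n)$ bound of the periodic branch is dominated by the $\bigO(\log^{\epsilon} n)$ bound of the nonperiodic branch, which gives the uniform query-time statement.
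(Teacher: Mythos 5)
Your proposal is correct and follows the paper's proof exactly: use \cref{pr:sa-core-sa} to test in $\bigO(1)$ time whether $\SA[i] \in \R$, then dispatch to \cref{pr:sa-nonperiodic-sa} or \cref{pr:sa-periodic-sa}, with the $\bigO(\log \log n)$ periodic case absorbed into the $\bigO(\log^{\epsilon} n)$ bound. Nothing is missing.
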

\begin{proof}
  First, we use \cref{pr:sa-core-sa} to check in $\bigO(1)$ time if
  $\SA[i] \in \R$. Depending on whether $\SA[i] \in \R$ or not, we use
  \cref{pr:sa-nonperiodic-sa} or \cref{pr:sa-periodic-sa} to compute
  $\SA[i]$ in $\bigO(\log^\epsilon n)$ or $\bigO(\log \log n)$ time
  (respectively).
\end{proof}

\subsubsection{Construction Algorithm}\label{sec:sa-construction}

\begin{proposition}\label{pr:sa-construction}
  Given the packed representation of $\T \in \Alphabet^n$,
  we can construct the data structure from \cref{sec:sa-ds} in
  $\bigO(n \min(1, \log \sigma / \sqrt{\log n}))$ time and
  $\bigO(n / \log_{\sigma} n)$ working space.
\end{proposition}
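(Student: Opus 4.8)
The plan is to assemble the structure of \cref{sec:sa-ds} out of its two declared components by invoking the three construction results already proved in this section, and then to check that the running times compose to the claimed bound. Concretely, I would first build the index core $\SACore(\T)$ directly from the packed representation of $\T$ using \cref{pr:sa-core-construction}; this costs $\bigO(n/\log_{\sigma} n)$ time, and since the algorithm runs in that much time it touches only $\bigO(n/\log_{\sigma} n)$ machine words, so its working space is $\bigO(n/\log_{\sigma} n)$ as well. I would keep a \emph{single} copy of $\SACore(\T)$ and have both of the structures of \cref{sec:sa-nonperiodic-ds,sec:sa-periodic-ds} reference it, rather than constructing it twice.

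Next, I would augment $\SACore(\T)$ into the structure of \cref{sec:sa-nonperiodic-ds} via \cref{pr:sa-nonperiodic-construction}, costing $\bigO(n\min(1,\log\sigma/\sqrt{\log n}))$ time and $\bigO(n/\log_{\sigma} n)$ working space, and then augment the same $\SACore(\T)$ into the structure of \cref{sec:sa-periodic-ds} via \cref{pr:sa-periodic-construction}, at an additional cost of $\bigO(n/\log_{\sigma} n)$ time (and hence at most that much scratch space). Since each of the structures of \cref{sec:sa-nonperiodic-ds,sec:sa-periodic-ds} occupies $\bigO(n/\log_{\sigma} n)$ space, the total space in use at any moment — the retained core, the retained component(s), and the scratch space of the augmentation currently running — stays $\bigO(n/\log_{\sigma} n)$, matching the working-space bound.

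Finally, I would add up the three contributions: $\bigO(n/\log_{\sigma} n) + \bigO(n\min(1,\log\sigma/\sqrt{\log n})) + \bigO(n/\log_{\sigma} n)$. Here I would observe that $n/\log_{\sigma} n = n\log\sigma/\log n$ is absorbed by $n\min(1,\log\sigma/\sqrt{\log n})$: if $\log\sigma\le\sqrt{\log n}$ then $n\log\sigma/\log n\le n\log\sigma/\sqrt{\log n}$ because $\log n\ge\sqrt{\log n}$, and if $\log\sigma>\sqrt{\log n}$ then $n\log\sigma/\log n\le n$ because $\sigma<n^{1/7}$ implies $\log\sigma\le\log n$. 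Hence the nonperiodic augmentation term dominates and the overall time is $\bigO(n\min(1,\log\sigma/\sqrt{\log n}))$, as claimed.

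I do not expect any genuine obstacle here: the substantive algorithmic work is entirely encapsulated in \cref{pr:sa-core-construction,pr:sa-nonperiodic-construction,pr:sa-periodic-construction}, and this proof is pure bookkeeping. The only point requiring the slightest care is the last step — verifying in both regimes of $\sigma$ that the $\bigO(n/\log_{\sigma} n)$ terms are swallowed by the target bound — together with the minor observation that the shared $\SACore(\T)$ must persist as input to both augmentations, which is exactly what those two propositions assume.
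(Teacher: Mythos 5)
Your proposal is correct and follows essentially the same route as the paper: build $\SACore(\T)$ via \cref{pr:sa-core-construction} and then augment it into the two components using \cref{pr:sa-nonperiodic-construction,pr:sa-periodic-construction}, with the time bounds combining exactly as you argue. The extra verification that the $\bigO(n/\log_{\sigma} n)$ terms are absorbed by $\bigO(n\min(1,\log\sigma/\sqrt{\log n}))$ is a fine (and correct) elaboration of bookkeeping the paper leaves implicit.
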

\begin{proof}
  First, from a packed representation of $\T$, we construct
  $\SACore(\T)$ in $\bigO(n / \log_{\sigma} n)$ time using
  \cref{pr:sa-core-construction}. Then, using
  \cref{pr:sa-nonperiodic-construction,pr:sa-periodic-construction},
  we augment $\SACore(\T)$ into the two components of the structure
  from \cref{sec:sa-ds} in $\bigO(n \min(1, \log \sigma /
  \sqrt{\log n}))$ and $\bigO(n / \log_{\sigma} n)$ time
  (respectively) and using $\bigO(n / \log_{\sigma} n)$ working
  space.
\end{proof}

\subsection{Summary}\label{sec:sa-summary}

By combining \cref{pr:sa-isa,pr:sa-sa,pr:sa-construction}
we obtain the following final result of this section.

\begin{theorem}\label{th:sa}
  Given any constant $\epsilon \in (0,1)$ and the packed
  representation of a text $\T \in \Alphabet^n$ with $2 \leq \sigma <
  n^{1/7}$, in $\bigO(n \min(1, \log \sigma / \sqrt{\log n}))$ time
  and $\bigO(n / \log_{\sigma} n)$ working space we can construct a
  data structure of size $\bigO(n / \log_{\sigma} n)$ that:
  \begin{itemize}
  \item Given any $i \in [1 \dd n]$ returns $\SA[i]$ in
    $\bigO(\log^{\epsilon} n)$ time,
  \item Given any $j \in [1 \dd n]$ returns $\ISA[j]$ in
    $\bigO(\log^{\epsilon} n)$ time.
  \end{itemize}
\end{theorem}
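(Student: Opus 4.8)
The proof of \cref{th:sa} is the capstone that merely assembles the machinery built throughout \cref{sec:sa}; the real work has already been distributed across the preceding subsections, so the plan is to invoke three propositions and verify that their guarantees compose correctly. First I would set up the data structure: given the packed representation of $\T \in \Alphabet^n$ with $2 \le \sigma < n^{1/7}$, apply \cref{pr:sa-core-construction} to build $\SACore(\T)$ in $\bigO(n/\log_\sigma n)$ time, then invoke \cref{pr:sa-nonperiodic-construction} and \cref{pr:sa-periodic-construction} to augment it into the structure of \cref{sec:sa-nonperiodic-ds} (handling nonperiodic positions) and the structure of \cref{sec:sa-periodic-ds} (handling periodic positions). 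The nonperiodic augmentation costs $\bigO(n\min(1,\log\sigma/\sqrt{\log n}))$ time and $\bigO(n/\log_\sigma n)$ working space (the dominant term coming from processing $W[1\dd n']$ via \cref{th:wavelet-tree}), while the periodic augmentation costs only $\bigO(n/\log_\sigma n)$ time; the total is therefore $\bigO(n\min(1,\log\sigma/\sqrt{\log n}))$ time and $\bigO(n/\log_\sigma n)$ working space, as in \cref{pr:sa-construction}. The space bound $\bigO(n/\log_\sigma n)$ follows since every component of both substructures was individually verified to fit in that budget (using $\mu < \frac16$ so that $\sigma^{6\tau} = \bigO(n^{6\mu}) = o(n/\log_\sigma n)$).

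For the query algorithms I would simply cite \cref{pr:sa-isa} and \cref{pr:sa-sa}. For an $\ISA[j]$ query, use \cref{pr:sa-core-isa} to decide in $\bigO(1)$ time whether $j \in \R$ via the lookup table $\LTper$; if $j \notin \R$, invoke \cref{pr:sa-nonperiodic-isa} (cost $\bigO(\log^\epsilon n)$, via a prefix rank query on $W$), and if $j \in \R$, invoke \cref{pr:sa-periodic-isa} (cost $\bigO(\log\log n)$, via rank/select on $\BVexp$ and a range query from \cref{pr:range-queries}). For an $\SA[i]$ query, use \cref{pr:sa-core-sa} to decide in $\bigO(1)$ time whether $\SA[i] \in \R$ via $\BVshort$, $\ARRshort$, and $\LTper$; branch to \cref{pr:sa-nonperiodic-sa} (prefix selection on $W$, cost $\bigO(\log^\epsilon n)$) or \cref{pr:sa-periodic-sa} (cost $\bigO(\log\log n)$) accordingly. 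Since $\log\log n = \bigO(\log^\epsilon n)$ for any fixed $\epsilon > 0$, both query types run in $\bigO(\log^\epsilon n)$ worst-case time, and all the subroutines are deterministic.

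The main subtlety — and the only place where genuine care is needed — is the routing correctness at the interface: for an $\SA[i]$ query we must decide periodicity of the \emph{value} $\SA[i]$ without knowing $\SA[i]$ yet, which is exactly what \cref{pr:sa-core-sa} accomplishes by reading off the length-$(3\tau{-}1)$ prefix of $\T[\SA[i]\dd n]$ from $\ARRshort$ and testing its period via $\LTper$; one must confirm that the two branch substructures agree on which positions they claim to handle (nonperiodic positions are those in $[1\dd n]\setminus\R$, periodic ones are in $\R$, and these partition $[1\dd n]$ by \cref{def:position-periodicity}), so that exactly one branch fires and returns the correct answer. The remaining verification — that the working-space bounds of the three construction propositions are simultaneously $\bigO(n/\log_\sigma n)$ and that the constant $\epsilon$ is threaded consistently through \cref{th:wavelet-tree} — is routine. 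I expect no real obstacle; the proof is two or three sentences invoking \cref{pr:sa-construction}, \cref{pr:sa-isa}, and \cref{pr:sa-sa}.

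\begin{proof}
  We first build the data structure.  Given the packed representation
  of $\T$, we construct the structure from \cref{sec:sa-ds} using
  \cref{pr:sa-construction} in $\bigO(n\min(1,\log\sigma/\sqrt{\log
  n}))$ time and $\bigO(n/\log_{\sigma} n)$ working space.  As
  observed in \cref{sec:sa-ds}, it occupies $\bigO(n/\log_{\sigma} n)$
  space.

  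Given any $i \in [1 \dd n]$, we compute $\SA[i]$ using
  \cref{pr:sa-sa} in $\bigO(\log^{\epsilon} n)$ time.  Given any $j
  \in [1 \dd n]$, we compute $\ISA[j]$ using \cref{pr:sa-isa} in
  $\bigO(\log^{\epsilon} n)$ time.
\end{proof}
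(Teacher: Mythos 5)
Your proposal is correct and matches the paper's own proof, which obtains \cref{th:sa} precisely by combining \cref{pr:sa-construction} (construction time, working space, and size) with \cref{pr:sa-sa,pr:sa-isa} (the two query types). The additional routing discussion via \cref{pr:sa-core-sa,pr:sa-core-isa} is consistent with how those propositions are proved internally, so nothing is missing or superfluous.
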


We also immediately obtain the following more general result.

\begin{theorem}\label{th:sa-general}
  Consider a data structure answering prefix rank and selection
  queries that, for any string of length $m$ over alphabet
  $\Alphabet^\ell$, achieves the following complexities:
  \begin{enumerate}
  \item Space usage $S(m, \ell, \sigma)$,
  \item Preprocessing time $P_t(m, \ell, \sigma)$,
  \item Preprocessing space $P_s(m, \ell, \sigma)$,
  \item Query time $Q(m, \ell, \sigma)$.
  \end{enumerate}
  For every $\T \in \Alphabet^n$ with $2 \leq \sigma < n^{1/7}$, there
  exist $m = \bigO(n/\log_{\sigma} n)$ and $\ell = \bigO(\log_{\sigma}
  n)$ such that, given the packed representation of $\T$, we can in
  $\bigO(n / \log_{\sigma} n + P_t(m, \ell, \sigma))$ time and
  $\bigO(n / \log_{\sigma} n + P_s(m, \ell,\sigma))$ working space
  build a structure of size $\bigO(n/\log_{\sigma} n + S(m, \ell,
  \sigma))$~that:
  \begin{itemize}
  \item Given any $i \in [1 \dd n]$ returns $\SA[i]$ in $\bigO(\log
    \log n + Q(m, \ell, \sigma))$ time,
  \item Given any $j \in [1 \dd n]$ returns $\ISA[j]$ in $\bigO(\log
    \log n + Q(m, \ell, \sigma))$ time.
  \end{itemize}
\end{theorem}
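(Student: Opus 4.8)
The plan is to read \cref{th:sa-general} as a parametric re-run of the proof of \cref{th:sa}, in which the single appeal to the concrete prefix rank/selection structure of \cref{th:wavelet-tree} is replaced by the abstract structure postulated in the statement. The key observation is that, throughout \cref{sec:sa}, \cref{th:wavelet-tree} is invoked at exactly one place: the seventh component of the data structure for nonperiodic positions (\cref{sec:sa-nonperiodic-ds}), namely the structure built over the sequence $W[1 \dd n']$ of length-$3\tau$ strings $W[i] = \revstr{X_i}$, and it is queried only inside \cref{pr:sa-nonperiodic-isa} and \cref{pr:sa-nonperiodic-sa}. I would therefore set $m := n' = |\S| = \bigO(n/\tau) = \bigO(n/\log_\sigma n)$ and $\ell := 3\tau = \bigO(\log_\sigma n)$, note that $W \in ([0\dd \sigma)^\ell)^m$ so the hypotheses of the abstract structure apply with the text alphabet size $\sigma$, and verify that no other part of \cref{sec:sa} mentions the parameter $\epsilon$ — it enters the proof of \cref{th:sa} solely through the statement of \cref{th:wavelet-tree}.

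With these parameters fixed, I would handle the query time branch by branch. The periodic branch (\cref{pr:sa-periodic-isa}, \cref{pr:sa-periodic-sa}) is untouched: it relies only on $\SACore(\T)$, the lookup tables, the bitvectors $\BVexp$ and $\BVRprim$, and the specialized range structure of \cref{pr:range-queries}, and already runs in $\bigO(\log\log n)$ time. The nonperiodic branch does $\bigO(1)$ work with the index core and bitvector rank/select (\cref{th:binrksel}) plus one prefix rank (resp.\ selection) query on $W$; replacing the $\bigO(\log^\epsilon n)$ bound of \cref{th:wavelet-tree} by $Q(m,\ell,\sigma)$ gives $\bigO(1 + Q(m,\ell,\sigma))$ for this branch. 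Since \cref{pr:sa-isa} and \cref{pr:sa-sa} first decide membership in $\R$ in $\bigO(1)$ time via \cref{pr:sa-core-isa}/\cref{pr:sa-core-sa} and then dispatch, the resulting query time is $\bigO(\log\log n + Q(m,\ell,\sigma))$ for both $\SA$ and $\ISA$.

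For the construction and space bounds I would track the same single substitution through \cref{pr:sa-construction}. Building $\SACore(\T)$ (\cref{pr:sa-core-construction}) and the periodic structure (\cref{pr:sa-periodic-construction}) takes $\bigO(n/\log_\sigma n)$ time and space regardless. Building the nonperiodic structure (\cref{pr:sa-nonperiodic-construction}) — the set $\S$, the bitvector $\BVS$, the permutations $\ARRsmap$, $\ARRsinvmap$, the lookup tables $\LTrev$, $\LTD$, and the sequence $W$ itself — also takes $\bigO(n/\log_\sigma n)$ time and working space, \emph{except} for the step that processes $W$. Replacing that step's use of \cref{th:wavelet-tree} by the abstract structure costs $P_t(m,\ell,\sigma)$ time, $P_s(m,\ell,\sigma)$ working space, and $S(m,\ell,\sigma)$ stored size; adding the $\bigO(n/\log_\sigma n)$ contributions of all remaining components yields construction time $\bigO(n/\log_\sigma n + P_t(m,\ell,\sigma))$, working space $\bigO(n/\log_\sigma n + P_s(m,\ell,\sigma))$, and total size $\bigO(n/\log_\sigma n + S(m,\ell,\sigma))$.

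I do not expect a genuine obstacle: the statement is engineered as a structured corollary of \cref{th:sa}, with all the substantive work already done in \cref{sec:sa-core,sec:sa-nonperiodic,sec:sa-periodic}. The only points requiring care are bookkeeping ones: (i) confirming that the additive $\bigO(\log\log n)$ term — which originates in the predecessor and weighted-ancestor machinery of \cref{th:wa} and in \cref{pr:range-queries} — is independent of the prefix rank/selection structure, so it remains even when $Q(m,\ell,\sigma) = o(\log\log n)$; and (ii) checking that the derived parameters are genuinely $m = \bigO(n/\log_\sigma n)$ and $\ell = \bigO(\log_\sigma n)$, which holds because $\tau = \lfloor \mu \log_\sigma n\rfloor = \Theta(\log_\sigma n)$ and $n' = \bigO(n/\tau)$, and that $\sigma < n^{1/7}$ ensures $\tau \ge 1$ so that a $\tau$-synchronizing set exists and \cref{th:sss-packed-construction} applies.
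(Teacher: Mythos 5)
Your proposal is correct and is exactly the argument the paper intends: the paper gives no explicit proof of \cref{th:sa-general}, stating only that it follows immediately from the construction in \cref{sec:sa}, where the prefix rank/selection structure of \cref{th:wavelet-tree} is used solely on the sequence $W[1\dd n']$ (with $m=n'=\bigO(n/\log_\sigma n)$ and $\ell=3\tau=\bigO(\log_\sigma n)$) inside \cref{pr:sa-nonperiodic-isa,pr:sa-nonperiodic-sa}, so substituting the abstract structure and re-tracing \cref{pr:sa-construction,pr:sa-isa,pr:sa-sa} yields precisely the claimed bounds. Your bookkeeping of the $\bigO(\log\log n)$ term and of the parameters $m,\ell$ matches the paper's setup.
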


\section{Pattern Matching Queries}\label{sec:pm}

Let $\epsilon \in (0, 1)$ be any fixed constant and let $\T \in
\Alphabet^n$, where $2 \leq \sigma < n^{1/7}$.  In this section we
show how, given the packed representation of $\T$, to construct in
$\bigO(n \min(1, \log \sigma / \sqrt{\log n}))$ time and $\bigO(n /
\log_{\sigma} n)$ working space a structure of size
$\bigO(n/\log_{\sigma} n)$ that, given the packed representation of a
pattern $\Pat \in \Alphabet^m$, returns $\LB(\Pat, \T)$ and $\UB(\Pat,
\T)$ in $\bigO(m / \log_{\sigma} n + \log^{\epsilon} n)$ time. We also
derive a general reduction depending on prefix rank and selection
queries.

As in \cref{sec:sa}, we let $\tau = \lfloor \mu \log_{\sigma} n
\rfloor$, where $\mu$ is some positive constant smaller than $\frac16$
such that $\tau \geq 1$, be fixed for the duration of this section.
Throughout, we also use $\R$ as a shorthand for $\R(\tau, \T)$.

\begin{definition}\label{def:pattern-periodicity}
  Let $\Pat \in \Alphabet^m$. We call pattern $\Pat$ \emph{periodic}
  if it holds that $m \geq 3\tau - 1$ and $\per(\Pat[1 \dd 3\tau {-}
  1]) \leq \tfrac{1}{3}\tau$.  Otherwise, $\Pat$ is
  \emph{nonperiodic}.
\end{definition}

\bfparagraph{Organization}

The structure and the query algorithm for a pattern $\Pat$ are
different depending on whether $\Pat$ is periodic
(\cref{def:pattern-periodicity}).  Our description is thus split as
follows. First (\cref{sec:pm-core}), we describe the set of data
structures called collectively the index ``core'' that enables
efficiently checking if $\Pat$ is periodic (it is also used to handle
very short patterns and contains some common components utilized by
the remaining parts). In the following two parts
(\cref{sec:pm-nonperiodic,sec:pm-periodic}), we describe structures
handling each of the two cases. All ingredients are then put together
in \cref{sec:pm-final}. Finally, we present our result in the general
form (\cref{sec:pm-summary}).

\subsection{The Index Core}\label{sec:pm-core}

In this section, we present a data structure that, given a packed
representation of any pattern $\Pat \in \Alphabet^{m}$, lets us in
$\bigO(1)$ time check if $\Pat$ is periodic. It also let us compute
$(\LB(\Pat, \T), \UB(\Pat, \T))$ if $m < 3\tau - 1$.

The section is organized as follows. First, we introduce the
components of the data structure (\cref{sec:pm-core-ds}).  We then
show how using this structure to implement the periodicity check
(\cref{sec:pm-core-nav}). Next, we describe the query algorithm for
short patterns (\cref{sec:pm-core-pm}). Finally, we show the
construction algorithm (\cref{sec:pm-core-construction}).

\subsubsection{The Data Structure}\label{sec:pm-core-ds}

The index core, denoted $\CountCore(\T)$ consists of the following
subset of components of $\SACore(\T)$:
\begin{enumerate}
\item The packed representation of $\T$ using $\bigO(n / \log_{\sigma}
  n)$ space.
\item The lookup table $\LTrange$ using $\bigO(\sigma^{6\tau}) =
  \bigO(n / \log_{\sigma} n)$ space.
\item The lookup table $\LTper$ using $\bigO(\sigma^{6\tau}) = \bigO(n
  / \log_{\sigma} n)$ space.
\end{enumerate}

In total, $\CountCore(\T)$ needs $\bigO(n / \log_{\sigma} n)$ space.

\subsubsection{Navigation Primitives}\label{sec:pm-core-nav}

\begin{proposition}\label{pr:pm-core-periodicity}
  Given $\CountCore(\T)$ and a packed representation of $\Pat \in
  \Alphabet^{m}$, we can in $\bigO(1)$ time determine whether $\Pat$
  is periodic (\cref{def:pattern-periodicity}).
\end{proposition}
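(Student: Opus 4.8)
The plan is to reduce the periodicity test to a single lookup in the precomputed table $\LTper$. First I would handle the length condition of \cref{def:pattern-periodicity}: since the length $m$ of $\Pat$ is given explicitly, in $\bigO(1)$ time I check whether $m \ge 3\tau - 1$, and if this fails I immediately report that $\Pat$ is nonperiodic. Otherwise $\Pat$ has a well-defined length-$(3\tau{-}1)$ prefix $X := \Pat[1 \dd 3\tau - 1]$, and it remains to decide whether $\per(X) \le \tfrac13\tau$. Because $\tau = \lfloor \mu \log_\sigma n\rfloor$ with $\mu < \tfrac16$ is fixed for this section, the string $X$ consists of $\bigO(\log_\sigma n)$ characters, i.e., $\bigO(\log n)$ bits; hence it fits in $\bigO(1)$ machine words and can be read off the packed representation of $\Pat$ in constant time (it is a prefix, so the extraction needs no unaligned shifting beyond a single word operation). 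I would then map $X$ to the integer $x = \Int(X) \in [0 \dd \sigma^{6\tau})$ using the encoding defined in the proof of \cref{pr:meta-trie} (here $6\tau - 2|X| = 2$, so one appends two $0$'s and $|X|$ copies of $\sigma-1$ to $X$ and reinterprets the $6\tau$-digit string in base $\sigma$), which is again a constant number of arithmetic/word operations. Finally I retrieve $p = \LTper[x] = \per(X)$ in $\bigO(1)$ time and report that $\Pat$ is periodic if and only if $p \le \tfrac13\tau$.

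Correctness is then immediate: by \cref{def:pattern-periodicity}, $\Pat$ is periodic precisely when $m \ge 3\tau - 1$ and $\per(\Pat[1 \dd 3\tau - 1]) \le \tfrac13\tau$, and the algorithm tests exactly these two conditions. The only points that need (routine) justification are that (i) the length-$(3\tau{-}1)$ prefix extraction and (ii) the computation of $\Int(X)$ run in constant time on the word RAM — both follow from $3\tau - 1 = \bigO(\log_\sigma n)$ together with standard bit-parallel manipulation of packed strings — and (iii) that $\LTper$ is indeed indexed by the values $\Int(X)$ for $X \in \Alphabet^{3\tau - 1}$, which is how it is stored as part of $\CountCore(\T)$ (see \cref{sec:pm-core-ds} and the corresponding definition of $\LTper$).

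I do not expect any substantive obstacle in this proof; the statement is essentially a direct consequence of having precomputed $\LTper$ in the index core, and the only care required is the (standard) bookkeeping that all string-to-integer manipulations on length-$\Theta(\log_\sigma n)$ pieces are constant-time in our model. If anything plays the role of the "hard part", it is merely being careful that the encoding $\Int$ used to address $\LTper$ matches the one used at construction time, which is guaranteed by construction.
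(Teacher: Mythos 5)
Your proposal is correct and follows essentially the same route as the paper's proof: check $m \geq 3\tau-1$, extract $X = \Pat[1 \dd 3\tau-1]$ from the packed representation, look up $p = \LTper[\Int(X)]$, and report periodic iff $p \leq \tfrac13\tau$. The extra bookkeeping you supply about constant-time prefix extraction and the $\Int$ encoding is consistent with how $\LTper$ is defined and stored in $\CountCore(\T)$, so there is nothing to fix.
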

\begin{proof}
  If $m < 3\tau - 1$, we return false. Otherwise, in $\bigO(1)$ time
  we compute $x = \Int(X)$, where $X = \Pat[1 \dd 3\tau {-} 1]$.  We
  then look up $p = \LTper[x]$ and return true if and only if $p \leq
  \frac{1}{3}\tau$.
\end{proof}

\subsubsection{Implementation of Queries}\label{sec:pm-core-pm}

\begin{proposition}\label{pr:pm-core-pm}
  Let $\Pat \in \Alphabet^m$ be a pattern satisfying $m < 3\tau - 1$.
  Given $\CountCore(\T)$ and the packed representation of $\Pat$, in
  $\bigO(1)$ time we can compute $(\LB(\Pat, \T), \UB(\Pat, \T))$.
\end{proposition}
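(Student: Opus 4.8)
The plan is to observe that $\CountCore(\T)$ already carries the lookup table $\LTrange$, which by its definition (see \cref{sec:sa-core-ds}) maps every $X \in \Alphabet^{\le 3\tau-1}$ to the pair $(\LB(X,\T),\UB(X,\T))$. The hypothesis $m < 3\tau-1$ guarantees that $\Pat$ itself lies in $\Alphabet^{\le 3\tau-1}$, so the desired output $(\LB(\Pat,\T),\UB(\Pat,\T))$ is \emph{literally} the table entry $\LTrange[\Int(\Pat)]$, where $\Int(\cdot)$ is the injective encoding of strings in $\Alphabet^{\le 3\tau-1}$ into $[0\dd \sigma^{6\tau})$ introduced in the proof of \cref{pr:meta-trie}. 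Hence the only thing left to verify is that this entry can be located in $\bigO(1)$ time.

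First I would convert the packed representation of $\Pat$ into $x := \Int(\Pat)$. Since $m < 3\tau - 1 \le 3\mu\log_\sigma n$ with $\mu < \tfrac16$, the whole pattern occupies only $m\lceil\log\sigma\rceil = \bigO(\log n)$ bits, i.e.\ a single machine word; thus forming $x$ — prepending the appropriate run of zeros, appending the run of $c = \sigma-1$ symbols, and reinterpreting the result base-$\sigma$ — is a constant number of shift/add word operations. Then I would return $\LTrange[x]$, which is a single array access performed in $\bigO(1)$ time in the word-RAM model (the table fits in memory because $\sigma^{6\tau} = \bigO(n^{6\mu}) = \bigO(n/\log_\sigma n)$, by the space analysis of \cref{sec:pm-core-ds}). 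Correctness is immediate from the definition of $\LTrange$.

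There is essentially no obstacle here; the statement is a direct consequence of having precomputed $\LTrange$ in the core. The only two points requiring a word of care are that $\Int$ is defined precisely for strings of length at most $3\tau-1$ — which is exactly the regime $m < 3\tau - 1$ assumed in the proposition, so $\Pat$ is always in its domain — and that $\bigO(1)$-time random access to $\LTrange$ relies on the fact, already established when $\CountCore(\T)$ was introduced, that the table has size $\bigO(n/\log_\sigma n)$ words. Both are inherited from earlier parts of the construction, so the argument is genuinely constant-time and deterministic.
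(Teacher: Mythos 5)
Your proposal is correct and matches the paper's proof, which likewise just looks up $\LTrange[\Int(\Pat)]$ (the paper states this in one line; your extra remarks about forming $\Int(\Pat)$ in $\bigO(1)$ word operations and the table's size are exactly the details it leaves implicit). No differences in approach.
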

\begin{proof}
  Using $\LTrange$ on $\Pat$, we immediately obtain and return
  $(\LB(\Pat, \T), \UB(\Pat, \T))$ in $\bigO(1)$ time.
\end{proof}

\subsubsection{Construction Algorithm}\label{sec:pm-core-construction}

\begin{proposition}\label{pr:pm-core-construction}
  Given the packed representation of $\T \in \Alphabet^n$, we can
  construct $\CountCore(\T)$ in $\bigO(n / \log_{\sigma} n)$ time.
\end{proposition}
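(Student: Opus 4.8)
The plan is to notice that $\CountCore(\T)$ is, by definition (\cref{sec:pm-core-ds}), nothing more than the first three components of $\SACore(\T)$: the packed representation of $\T$, the lookup table $\LTrange$, and the lookup table $\LTper$. Since \cref{pr:sa-core-construction} already builds all of $\SACore(\T)$ in $\bigO(n/\log_\sigma n)$ time, the proof should simply extract from that argument the portions responsible for exactly these three objects and verify that, run in isolation, they still fit in the stated budget. Concretely, the packed text is given, so nothing is needed there; for $\LTper$ I would enumerate every $X\in \Alphabet^{3\tau-1}$, encode it as the integer $\Int(X)\in[0\dd\sigma^{6\tau})$, and compute $\per(X)$ by testing all candidate periods in $\bigO(\tau^2)$ time each, which (including the $\bigO(\sigma^{6\tau})$ initialization of the table) costs $\bigO(\sigma^{6\tau}+\sigma^{3\tau-1}\tau^2)=\bigO(n^{6\mu}+n^{1/2}\log^2 n)=\bigO(n/\log_\sigma n)$ by $\mu<\tfrac16$.

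For $\LTrange$ I would follow the three-phase recipe of \cref{pr:sa-core-construction} verbatim: first compute, for every $X\in\Alphabet^{3\tau-1}$, the frequency $f_X:=|\Occ(X,\T)|$ using the packed-text counting routine of \cite[Section 6.1.2]{sss} (applicable here since $(3\tau-1)\sigma^{2(3\tau-1)-1}=\bigO(n/\log_\sigma n)$ by $\mu<\tfrac16$), in $\bigO(n/\log_\sigma n)$ time; then propagate these counts to all shorter $X\in\Alphabet^{<3\tau-1}$ via $f_X=\sum_{c\in\Alphabet}f_{Xc}$ (adding one when $X$ is a proper suffix of $\T$), which over all such $X$ costs $\bigO(\sigma^{3\tau-1})=\bigO(n/\log_\sigma n)$; and finally turn the frequency array into $\LTrange$ by enumerating $\Alphabet^{\le 3\tau-1}$ in order of non-decreasing length (ties broken by reverse lexicographic order), maintaining the running partial sum $x=\sum_{c'>c}f_{Xc'}$ and applying $\LTrange[\Int(Xc)]=(e-x-f_{Xc},\,e-x)$, starting from $\LTrange[\Int(\emptystring)]=(0,n)$, again in $\bigO(\sigma^{3\tau-1})=\bigO(n/\log_\sigma n)$ time. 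Summing over the three components gives total time $\bigO(n/\log_\sigma n)$.

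There is no genuine obstacle; the only point I would flag explicitly is that $\CountCore(\T)$ deliberately omits the bitvector $\BVshort$ and the array $\ARRshort$ present in $\SACore(\T)$, so those steps of \cref{pr:sa-core-construction} are simply dropped — and since the constructions of $\LTrange$ and $\LTper$ do not depend on $\BVshort$ or $\ARRshort$, the time bound is unaffected. Thus the proof reduces to citing the relevant pieces of \cref{pr:sa-core-construction} and re-checking the arithmetic.
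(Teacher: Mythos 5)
Your proposal is correct and matches the paper's approach: the paper's proof is simply the one-line observation that $\CountCore(\T)$ is a subset of the components of $\SACore(\T)$, so the claim follows from \cref{pr:sa-core-construction}. You re-derive the relevant portions of that construction (for $\LTrange$ and $\LTper$) explicitly, which is a more detailed rendering of the same argument rather than a different route.
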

\begin{proof}
  Since $\CountCore(\T)$ contains a subset of components of
  $\SACore(\T)$, this follows by \cref{pr:sa-core-construction}.
\end{proof}

\subsection{The Nonperiodic Patterns}\label{sec:pm-nonperiodic}

In this section, we describe a data structure that, given a packed
representation of any nonperiodic pattern $\Pat \in \Alphabet^{m}$
(see \cref{def:pattern-periodicity}), computes $(\LB(\Pat, \T),
\UB(\Pat, \T))$ in $\bigO(m / \log_{\sigma} n + \log^{\epsilon} n)$
time.

The section is organized as follows. First, we introduce the
components of the data structure (\cref{sec:pm-nonperiodic-ds}). Next,
we describe the query algorithm
(\cref{sec:pm-nonperiodic-pm}). Finally, we show the construction
algorithm (\cref{sec:pm-nonperiodic-construction}).

\subsubsection{The Data Structure}\label{sec:pm-nonperiodic-ds}

\bfparagraph{Definitions}

Let $\S$ be a $\tau$-synchronizing set, as defined in
\cref{sec:sa-nonperiodic-ds}. Let $\ARRslex[1 \dd n']$ be an array
defined by $\ARRslex[i] = \slex_i$ (where $(\slex_t)_{t \in [1 \dd
n']}$ is a sequence as defined in \cref{sec:sa-nonperiodic-ds}).

\bfparagraph{Components}

The data structure to handle nonperiodic patterns consists of three
components:
\begin{enumerate}
\item The index core $\CountCore(\T)$ (\cref{sec:pm-core-ds}) using
  $\bigO(n / \log_{\sigma} n)$ space.
\item The data structure from \cref{sec:sa-nonperiodic-ds} using
  $\bigO(n / \log_{\sigma} n)$ space.
\item The data structure from \cref{pr:meta-trie} for the array
  $\ARRslex[1 \dd n']$. By $n' = \bigO(n / \log_{\sigma} n)$ and
  \cref{pr:meta-trie}, it needs $\bigO(n / \log_{\sigma} n)$ space.
\end{enumerate}

In total, the data structure takes $\bigO(n / \log_{\sigma} n)$ space.

\subsubsection{Implementation of Queries}\label{sec:pm-nonperiodic-pm}

\begin{lemma}\label{lm:pm-nonperiodic}
  Let $\Pat \in \Alphabet^m$ be a nonperiodic pattern satisfying $m
  \geq 3\tau - 1$, and let $X \in \D$ be a prefix of $\Pat$.  Denote
  $\deltatext = |X| - 2\tau$ and $\Pat' = \Pat(\deltatext \dd m]$.
  Let $(\bpre, \epre)$ be such that $\bpre = |\{i \in [1 \dd n'] :
  \T[\slex_i \dd n] \prec \Pat'\}|$ and $(\bpre \dd \epre] = \{i \in
  [1 \dd n'] : \Pat'\text{ is a prefix of }\T[\slex_i \dd n]\}$.
  Then, it holds
  \vspace{1.5ex}
  \[
    (\LB(\Pat, \T), \UB(\Pat, \T)) =
      (b_X + \delta_1, b_X + \delta_2),
    \vspace{1.5ex}
  \]
  where $b_X = \LB(X, \T)$, $\delta_1 = \rank{W}{\revstr{X}}{\bpre}$,
  and $\delta_2 = \rank{W}{\revstr{X}}{\epre}$.
\end{lemma}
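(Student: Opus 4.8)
<br>

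The plan is to establish the identity by tracking the $\SA$-interval $(\LB(\Pat,\T),\UB(\Pat,\T)]$ of suffixes prefixed by $\Pat$ and relating it, via the synchronizing set $\S$, to an interval of $(\slex_i)_{i\in[1\dd n']}$. First I would record the basic fact that $j\in\Occ(\Pat,\T)$ iff $\T[j\dd n]$ has $\Pat$ as a prefix, so that $(\LB(\Pat,\T)\dd\UB(\Pat,\T)]$ is precisely the (contiguous, by sortedness) set of ranks $i$ with $\Pat$ a prefix of $\T[\SA[i]\dd n]$; since $X$ is a prefix of $\Pat$, this interval is contained in the block $(\LB(X,\T)\dd\UB(X,\T)]$ for $X$, which explains the common offset $b_X=\LB(X,\T)$. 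Hence it suffices to show $\delta_1=|\{i\in(b_X\dd n]:\T[\SA[i]\dd n]\prec\Pat\text{ and }X\text{ prefix of }\T[\SA[i]\dd n]\}|$ and similarly $\delta_2$ counts those $i$ additionally having $\Pat$ as a prefix — i.e.\ I want to count positions $j\in\Occ(X,\T)$ with $\T[j\dd n]\prec\Pat$ (for $\delta_1$) and with $\Pat$ a prefix of $\T[j\dd n]$ (for $\delta_2$), in each case after shifting through $\S$.

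The core step mirrors \cref{lm:sa-nonperiodic-isa}(2): because $\Pat$ is nonperiodic with $m\geq3\tau-1$, the distinguishing prefix $X\in\D$ of $\Pat$ is well-defined (using that $\D$ is prefix-free by~\cite[Lemma~6.1]{sss}), and by the consistency condition of $\S$, for every $j\in\Occ(X,\T)$ the successor $\Successor_\S(j)$ equals $j+\deltatext$ where $\deltatext=|X|-2\tau$; thus $j\mapsto j+\deltatext$ is a bijection between $\Occ(X,\T)$ and the set of $s\in\S$ with $\T^\infty[s-\deltatext\dd s+2\tau)=X$, and this bijection preserves the lexicographic order of the corresponding suffixes (since $\T[j\dd n]=X[1\dd\deltatext]\cdot\T[j+\deltatext\dd n]$ and $X[1\dd\deltatext]$ is common to all of them). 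Under this bijection, ``$\T[j\dd n]\prec\Pat$'' becomes ``$\T[s\dd n]\prec\Pat'$ or ($\T[s\dd n]$ has $\Pat'$ as a prefix but we are on the $\prec$ side)'' — more carefully: $\T[j\dd n]\prec\Pat$ iff $\T[s\dd n]\prec\Pat'$, and $\Pat$ is a prefix of $\T[j\dd n]$ iff $\Pat'$ is a prefix of $\T[s\dd n]$, because $\Pat=X[1\dd\deltatext]\cdot\Pat'$. Therefore, letting $y$ range over $[1\dd n']$ with $\slex_y$ of the required form, the count of $j\in\Occ(X,\T)$ with $\T[j\dd n]\prec\Pat$ equals $|\{i\in[1\dd n']:\T[\slex_i\dd n]\prec\Pat'\text{ and }\revstr X\text{ prefix of }W[i]\}|$, which by the definition of prefix rank and of $\bpre$ is exactly $\rank{W}{\revstr X}{\bpre}$; and analogously with $\bpre$ replaced by $\epre$ for $\delta_2$.

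Assembling: $\LB(\Pat,\T)=b_X+|\{j\in\Occ(X,\T):\T[j\dd n]\prec\Pat\}|=b_X+\delta_1$ and $\UB(\Pat,\T)=\LB(\Pat,\T)+|\Occ(\Pat,\T)|=b_X+|\{j\in\Occ(X,\T):\Pat\text{ prefix of }\T[j\dd n]\}|=b_X+\delta_2$, which is the claim. I would also note the boundary subtlety in the definition of $(\bpre,\epre)$: among the $n'$ suffixes $\{\T[\slex_i\dd n]\}$, those with $\Pat'$ as a prefix form a contiguous block $(\bpre\dd\epre]$, and $\bpre$ counts those strictly smaller than $\Pat'$; the monotonicity of $\rank{W}{\revstr X}{\cdot}$ together with the fact that $\revstr X$ being a prefix of $W[i]$ is equivalent to $X$ preceding the suffix at $\slex_i$ in $\T$ (i.e.\ $\T^\infty[\slex_i-\deltatext\dd\slex_i+2\tau)=X$ after discarding the non-$X$-prefixed entries, which are filtered by the prefix rank itself) makes both endpoint computations go through uniformly. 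The main obstacle I expect is bookkeeping the two filtering conditions simultaneously — ``followed by $\Pat'$'' and ``preceded by $X[1\dd\deltatext]$'' — and making sure that the prefix rank query $\rank{W}{\revstr X}{\cdot}$ correctly intersects them rather than double-counting; this is handled exactly because the prefix-rank over $W$ with key $\revstr X$ restricts to indices whose $X_i$ has $X$ as a suffix (equivalently $W[i]$ has $\revstr X$ as a prefix), and applying it at the argument $\bpre$ (resp.\ $\epre$), which already encodes the ``followed by $\Pat'$'' restriction on $(\slex_i)$, composes the two filters correctly.
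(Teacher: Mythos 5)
Your proposal is correct and follows essentially the same route as the paper's proof: the consistency of $\S$ gives the shift bijection $j \mapsto j+\deltatext$ on $\Occ(X,\T)$, comparisons with $\Pat$ reduce to comparisons with $\Pat' = \Pat(\deltatext \dd m]$ after stripping the common prefix $X[1 \dd \deltatext]$, and the prefix rank of $\revstr{X}$ in $W$ evaluated at the endpoints $\bpre$ and $\epre$ composes the two filters exactly as in the paper. One small slip in your final assembly: the intermediate expression $\UB(\Pat,\T)=b_X+|\{j\in\Occ(X,\T):\Pat\text{ is a prefix of }\T[j\dd n]\}|$ drops the $\delta_1$ term (it should be $b_X+\delta_1+|\Occ(\Pat,\T)|$, and likewise $\delta_2$ counts the $X$-occurrences that are smaller than $\Pat$ \emph{together with} those prefixed by $\Pat$, not only the latter), but since $\rank{W}{\revstr{X}}{\epre}=\delta_1+|\Occ(\Pat,\T)|$ the conclusion $\UB(\Pat,\T)=b_X+\delta_2$ stands.
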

\begin{proof}
  Observe that by the consistency of $\S$ and $X \in \D$, $j \in
  \Occ(X, \T)$ implies $j + \deltatext \in \S$. Thus, $\Occ(X, \T) =
  \{s - \deltatext : s \in \S\text{ and }s - \deltatext \in \Occ(X,
  \T)\}$.  Note also that if $S_1$ is a prefix of $S_2$ then $\LB(S_2,
  \T) = \LB(S_1, \T) + |\{j \in \Occ(S_1, \T) : T[j \dd n] \prec
  S_2\}|$.  Together with the definition of $\bpre$, this implies
  \begin{align*}
  \LB(\Pat, \T)
    &= \LB(X,\T) + |\{j \in \Occ(X, \T) : \T[j \dd n] \prec \Pat\}|\\
    &= b_X + |\{s - \deltatext : s \in \S\text,\,s - \deltatext \in
      \Occ(X, \T),\text{ and }\T[s - \deltatext \dd n] \prec \Pat\}|\\
    &= b_X + |\{s \in \S : s - \deltatext \in
      \Occ(X, \T)\text{ and }\T[s - \deltatext \dd n] \prec \Pat\}|\\
    &= b_X + |\{s \in \S : s - \deltatext \in
      \Occ(X, \T)\text{ and }\T[s \dd n] \prec \Pat'\}|\\
    &= b_X + |\{i \in [1 \dd n'] : \slex_i - \deltatext \in \Occ(X, \T)
       \text{ and }\T[\slex_i \dd n] \prec \Pat'\}|\\
    &= b_X + |\{i \in [1 \dd n'] : \slex_i - \deltatext \in \Occ(X, \T)
       \text{ and }i \leq \bpre\}|\\
   &= b_X + |\{i \in [1 \dd \bpre] : \slex_i - \deltatext \in
       \Occ(X, \T)\}|\\
    &= b_X + \rank{W}{\revstr{X}}{\bpre}\\
    &= b_X + \delta_1,
  \end{align*}
  where the second-to-last equality follows by $W[i] = \revstr{X_i}$,
  where $X_i = \T^{\infty}[\slex_i - \tau \dd \slex_i + 2\tau)$, since
  $\slex_i - \deltatext \in \Occ(X, \T)$ holds if and only if $X$ if a
  suffix of $X_i$ (i.e., if $\revstr{X}$ is a prefix of
  $\revstr{X_i}$).

  Next, we show that $|\Occ(\Pat, \T)| = \delta_2 - \delta_1$. We start
  by observing that (similarly as above, except applied to $\Pat$) by the
  consistency of $\S$ and $X \in \D$ being a prefix of $\Pat$, $j \in
  \Occ(\Pat, \T)$ implies $j + \deltatext \in \S$. Thus, $\Occ(\Pat,
  \T) = \{s - \deltatext : s \in \S\text{ and }s - \deltatext \in
  \Occ(\Pat, \T)\}$ and hence,
  \begin{align*}
  |\Occ(\Pat, \T)|
    &= |\{s - \deltatext : s \in \S,\,
        s - \deltatext \in \Occ(\Pat, \T)\}|\\
    &= |\{s \in \S : s - \deltatext \in \Occ(\Pat, \T)\}|\\
    &= |\{i \in [1 \dd n'] : \slex_i - \deltatext \in \Occ(\Pat, \T)\}|\\
    &= |\{i \in [1 \dd n'] : \slex_i  - \deltatext \in \Occ(X, \T)
       \text{ and }\slex_i \in \Occ(\Pat', \T)\}|\\
    &= |\{i \in [1 \dd n'] : \slex_i - \deltatext \in \Occ(X, \T)
       \text{ and }\bpre < i \leq \epre\}|\\
    &= |\{i \in (\bpre \dd \epre] :
       \slex_i - \deltatext \in \Occ(X, \T)\}|\\
    &= \rank{W}{\revstr{X}}{\bpre} - \rank{W}{\revstr{X}}{\epre}\\
    &= \delta_1 - \delta_2.
  \end{align*}
  Combining the above with the earlier equality, we obtain
  $\UB(\Pat, \T) = \LB(\Pat, \T) + |\Occ(\Pat, \T)| = b_X +
  \delta_2$, i.e., the second part of the claim.
\end{proof}

\begin{remark}
  Note that since the range $(\bpre \dd \epre]$ is well-defined
  even if $\epre - \bpre = 0$, the above lemma holds even if
  $|\Occ(\Pat, \T)| = 0$.
\end{remark}

\begin{proposition}\label{pr:pm-nonperiodic-range}
  Let $\Pat \in \Alphabet^m$ be a nonperiodic pattern satisfying $m
  \geq 3\tau - 1$. Given the data structure from
  \cref{sec:pm-nonperiodic-ds} and the packed representation of
  $\Pat$, in $\bigO(m / \log_{\sigma} n + \log^{\epsilon} n)$ time we
  can compute $(\LB(\Pat, \T), \UB(\Pat, \T))$.
\end{proposition}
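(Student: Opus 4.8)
The plan is to reduce the query to the identity of \cref{lm:pm-nonperiodic}: once we have located the distinguishing prefix $X\in\D$ of $\Pat$, set $\deltatext=|X|-2\tau$ and $\Pat'=\Pat(\deltatext\dd m]$, and obtained the indices $(\bpre,\epre)$ with $\bpre=|\{i\in[1\dd n']:\T[\slex_i\dd n]\prec\Pat'\}|$ and $(\bpre\dd\epre]=\{i\in[1\dd n']:\Pat'\text{ is a prefix of }\T[\slex_i\dd n]\}$, the answer is $(b_X+\delta_1,b_X+\delta_2)$, where $b_X=\LB(X,\T)$, $\delta_1=\rank{W}{\revstr{X}}{\bpre}$, and $\delta_2=\rank{W}{\revstr{X}}{\epre}$. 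Every ingredient of this computation is already stored: the lookup tables $\LTD$, $\LTrange$, and $\LTrev$ come from $\SACore(\T)$ and the structure of \cref{sec:sa-nonperiodic-ds}, the prefix-rank structure of \cref{th:wavelet-tree} is kept for $W$, and the meta-trie of \cref{pr:meta-trie} is kept for $\ARRslex$.

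First I would read $Y:=\Pat[1\dd 3\tau-1]$ off the packed representation in $\bigO(1)$ time (possible since $m\ge 3\tau-1$) and look up $X:=\LTD[\Int(Y)]$. If $X=\emptystring$, I would argue that $Y$, and therefore $\Pat$, does not occur in $\T$: an occurrence of $Y$ would have to start at some $j\le n-3\tau+2$, and nonperiodicity of $\Pat$ gives $\per(Y)>\tfrac13\tau$, so $j\notin\R$; the density condition then forces $\Successor_{\S}(j)<j+\tau$, so $\T[j\dd\Successor_{\S}(j)+2\tau)\in\D$ is a prefix of $Y$, contradicting $\LTD[\Int(Y)]=\emptystring$. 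Since $Y$ does not occur, $\LB(Y,\T)=\UB(Y,\T)$, and, because $Y$ is a prefix of $\Pat$ while no suffix of $\T$ is prefixed by $Y$, a suffix of $\T$ is lexicographically smaller than $\Pat$ exactly when it is smaller than $Y$; hence $\LB(\Pat,\T)=\UB(\Pat,\T)=\LB(Y,\T)$, and I would simply return $\LTrange[\Int(Y)]$.

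In the main case $X\neq\emptystring$, the value $\LTD[\Int(Y)]$ is, by construction of $\LTD$, the unique element of $\D$ that is a prefix of $Y$, hence of $\Pat$; moreover $|X|\in[2\tau\dd 3\tau-1]$, so $\deltatext=|X|-2\tau\in[0\dd\tau-1]$ and $\Pat'=\Pat(\deltatext\dd m]$ is a nonempty suffix of $\Pat$ of length at most $m$. The query then proceeds: (i) form the packed representation of $\Pat'$ from that of $\Pat$ in $\bigO(m/\log_{\sigma}n+1)$ time; (ii) run the meta-trie of \cref{pr:meta-trie} on $\ARRslex$ with query $\Pat'$ --- the array $\ARRslex$ is sorted by the corresponding suffixes, as that proposition requires --- obtaining $(\bpre,\epre)$ in $\bigO(m/\log_{\sigma}n+\log\log n)$ time; (iii) fetch $\revstr{X}$ via $\LTrev[\Int(X)]$ and $b_X$ via $\LTrange[\Int(X)]$ in $\bigO(1)$ time; (iv) compute $\delta_1=\rank{W}{\revstr{X}}{\bpre}$ and $\delta_2=\rank{W}{\revstr{X}}{\epre}$ via \cref{th:wavelet-tree}, each in $\bigO(\log^{\epsilon}n)$ time; and (v) return $(b_X+\delta_1,b_X+\delta_2)$. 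Correctness is then immediate from \cref{lm:pm-nonperiodic}, and the total time is $\bigO(m/\log_{\sigma}n+\log\log n+\log^{\epsilon}n)=\bigO(m/\log_{\sigma}n+\log^{\epsilon}n)$.

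The heavy lifting has already been carried out in \cref{lm:pm-nonperiodic} and in the component propositions, so the proof is largely assembly. The one point requiring genuine (if brief) combinatorial care is the branch $\LTD[\Int(Y)]=\emptystring$ --- equivalently, $\Pat[1\dd 3\tau-1]$ not occurring in $\T$ --- where \cref{lm:pm-nonperiodic} does not apply and the (necessarily empty) answer must instead be read off $\LTrange$; checking that this coincides with $(\LB(\Pat,\T),\UB(\Pat,\T))$ is the only place the argument is not purely mechanical.
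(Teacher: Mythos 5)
Your proof is correct and follows essentially the same route as the paper: identify the prefix $X \in \D$ via $\LTD$, obtain $(\bpre, \epre)$ from the meta-trie of \cref{pr:meta-trie} on $\ARRslex$ for $\Pat' = \Pat(\deltatext \dd m]$, and finish with two prefix rank queries on $W$ via \cref{th:wavelet-tree}, with correctness by \cref{lm:pm-nonperiodic}. The only (cosmetic) difference is the degenerate branch: the paper first tests $\LB(Y,\T) = \UB(Y,\T)$ using $\LTrange$ and only then invokes $\LTD$, whereas you branch on $\LTD[\Int(Y)] = \emptystring$ directly, and your justification of that branch (the density condition forces some element of $\D$ to be a prefix of any occurring $Y$, so $\LTD[\Int(Y)] = \emptystring$ implies $Y$ does not occur and hence $\LB(\Pat,\T) = \UB(\Pat,\T) = \LB(Y,\T)$) is sound.
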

\begin{proof}
  Let $Y = \Pat[1 \dd 3\tau {-}
  1]$. First, using the lookup table $\LTrange$ on $Y$, in $\bigO(1)$
  time we compute $(b_Y, e_Y) = (\LB(Y, \T), \UB(Y, \T))$.  If $e_Y -
  b_Y = 0$, then $\Occ(Y, \T) = \emptyset$, and it is easy to see that
  then we have $\LB(\Pat, \T) = \LB(Y, \T)$ and $\UB(\Pat, \T) =
  \UB(Y, \T)$. We thus return $(b_Y, e_Y)$. Let us thus assume $b_Y
  \neq e_Y$, i.e., $\Occ(Y, \T) \neq \emptyset$.  Together with
  $\per(Y) > \tfrac{1}{3}\tau$, this implies (see
  \cref{sec:sa-nonperiodic-ds}) that there exists a unique prefix $X
  \in \D$ of $\Pat$. Using $\LTD$ on $Y$ in $\bigO(1)$ time we compute
  the prefix $X \in \D$ of $\Pat$. Let $\delta = |X| - 2\tau$. Using
  again the lookup table $\LTrange$, in $\bigO(1)$ time we compute
  $(b_X, e_X) = (\LB(X, \T), \UB(X, \T))$.  Using \cref{pr:meta-trie},
  we then compute in $\bigO(m / \log_{\sigma} n + \log \log n)$ time
  the pair $(\bpre, \epre)$ for the pattern $\Pat' := \Pat(\delta \dd
  m]$. By \cref{lm:pm-nonperiodic}, we then return $(\LB(\Pat, \T),
  \UB(\Pat, \T)) = (b_X + \rank{W}{\revstr{X}}{\bpre}, b_X +
  \rank{W}{\revstr{X}}{\epre})$, with the two prefix rank queries
  implemented using \cref{th:wavelet-tree}, in $\bigO(\log^\epsilon n)$
  time each (the string $\revstr{X}$ is obtained using the lookup
  table $\LTrev$).  Altogether, the query time is $\bigO(m /
  \log_{\sigma} n + \log^\epsilon n)$.
\end{proof}

\subsubsection{Construction Algorithm}\label{sec:pm-nonperiodic-construction}

\begin{proposition}\label{pr:pm-nonperiodic-construction}
  Given $\CountCore(\T)$, we can in $\bigO(n \min(1, \log \sigma /
  \sqrt{\log n}))$ time and in $\bigO(n / \log_{\sigma} n)$ working space
  augment it into a data structure from \cref{sec:pm-nonperiodic-ds}.
\end{proposition}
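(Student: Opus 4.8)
The plan is to build the three components of the structure of \cref{sec:pm-nonperiodic-ds} in turn, delegating almost all of the work to construction routines already established. The only genuinely new ingredient is the data structure of \cref{pr:meta-trie} for $\ARRslex$; the other two components are $\CountCore(\T)$ itself (already given) and the structure of \cref{sec:sa-nonperiodic-ds}, which we already know how to build efficiently via \cref{pr:sa-nonperiodic-construction}.

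First I would upgrade $\CountCore(\T)$ into $\SACore(\T)$. Since $\CountCore(\T)$ consists of a subset of the components of $\SACore(\T)$ (the packed text, $\LTrange$, and $\LTper$), the only missing components are $\BVshort$ and $\ARRshort$. The proof of \cref{pr:sa-core-construction} shows that all of $\SACore(\T)$, and in particular $\BVshort$ and $\ARRshort$ (given the frequencies $f_X$ of the length-$(3\tau-1)$ substrings, which can be recomputed from the packed text in $\bigO(n/\log_{\sigma} n)$ time), can be constructed in $\bigO(n/\log_{\sigma} n)$ time and $\bigO(n/\log_{\sigma} n)$ working space. I would then invoke \cref{pr:sa-nonperiodic-construction} on $\SACore(\T)$ to obtain the structure of \cref{sec:sa-nonperiodic-ds} in $\bigO(n\min(1,\log \sigma/\sqrt{\log n}))$ time and $\bigO(n/\log_{\sigma} n)$ working space. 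This step fixes the $\tau$-synchronizing set $\S$ and produces, as an intermediate object in its proof, the sequence $(\slex_t)_{t\in[1\dd n']}$ (computed via \cite[Theorem~4.3]{sss}); I would retain this sequence, from which $\ARRslex[1\dd n']$, defined by $\ARRslex[i]=\slex_i$, is obtained by a single scan in $\bigO(n')=\bigO(n/\log_{\sigma} n)$ time.

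Finally, I would feed $\ARRslex$ to \cref{pr:meta-trie}. Its precondition --- that $\T[\ARRslex[i]\dd n]\prec\T[\ARRslex[j]\dd n]$ for $i<j$ --- holds by the very definition of $(\slex_t)_{t\in[1\dd n']}$, and $2\le\sigma<n^{1/7}$ by assumption, so the meta-trie data structure for $\ARRslex$ is built in $\bigO(q+n/\log_{\sigma} n)=\bigO(n/\log_{\sigma} n)$ time (using $q=n'=\bigO(n/\log_{\sigma} n)$). Summing the bounds over the three steps, the total construction time is dominated by the call to \cref{pr:sa-nonperiodic-construction}, giving $\bigO(n\min(1,\log \sigma/\sqrt{\log n}))$, while the working space stays $\bigO(n/\log_{\sigma} n)$ throughout.

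The only point requiring care --- and the closest thing to an obstacle --- is arranging that $(\slex_t)_{t\in[1\dd n']}$ (equivalently $\ARRslex$) is available when \cref{pr:meta-trie} is called: either by threading it out of the construction inside \cref{pr:sa-nonperiodic-construction}, or by recomputing it from $\S$ and the packed text via \cite[Theorem~4.3]{sss}, which again costs only $\bigO(n/\log_{\sigma} n)$ time and $\bigO(n/\log_{\sigma} n)$ space. There is no deeper difficulty; the proposition is essentially an assembly of previously proved construction lemmas.
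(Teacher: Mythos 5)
Your proposal is correct and follows essentially the same route as the paper: build the structure of \cref{sec:sa-nonperiodic-ds} by combining \cref{pr:sa-core-construction} (using the packed text already in $\CountCore(\T)$) with \cref{pr:sa-nonperiodic-construction}, reuse the sequence $(\slex_t)_{t\in[1\dd n']}$ produced along the way to fill $\ARRslex$, and then build the \cref{pr:meta-trie} structure for it, with the stated time and space bounds. Your extra remarks about explicitly rebuilding $\BVshort$ and $\ARRshort$ and about retaining or recomputing $(\slex_t)$ are just more detailed versions of what the paper does implicitly.
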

\begin{proof}
  First, we combine
  \cref{pr:sa-core-construction,pr:sa-nonperiodic-construction}
  (recall that the packed representation of $\T$ is a component of
  $\CountCore(\T)$) to construct the structure from
  \cref{sec:sa-nonperiodic-ds} in $\bigO(n \min(1, \log \sigma /
  \sqrt{\log n}))$ time and using $\bigO(n / \log_{\sigma} n)$ working
  space. In particular, this constructs $(\slex_i)_{i \in [1 \dd n']}$.
  We thus initialize $\ARRslex[i] = \slex_i$ for $i \in [1 \dd n']$ and
  in $\bigO(n/\log_{\sigma} n)$ time and $\bigO(n/\log_{\sigma} n)$
  working space construct the data structure from \cref{pr:meta-trie}.
  The overall runtime is $\bigO(n \min(1, \log \sigma / \sqrt{\log n}))$.
  The working space never exceed $\bigO(n / \log_{\sigma} n)$ words.
\end{proof}

\subsection{The Periodic Patterns}\label{sec:pm-periodic}

In this section, we describe a data structure that, given a packed
representation of any periodic pattern $\Pat \in \Alphabet^{m}$ (see
\cref{def:pattern-periodicity}), computes $(\LB(\Pat, \T), \UB(\Pat,
\T))$ in $\bigO(m / \log_{\sigma} n + \log \log n)$ time.

The section is organized as follows. First, we present the toolbox of
combinatorial properties for periodic patterns
(\cref{sec:pm-periodic-prelim}). Next, we introduce the components of
the data structure (\cref{sec:pm-periodic-ds}).  We then show how
using this structure to implement some basic navigational routines
(\cref{sec:pm-periodic-nav}). Next, we describe the query algorithm
(\cref{sec:pm-periodic-pm}). Finally, we show the construction
algorithm (\cref{sec:pm-periodic-construction}).

\subsubsection{Preliminaries}\label{sec:pm-periodic-prelim}

Let $\Pat \in \Alphabet^{m}$ be a periodic pattern (see
\cref{def:pattern-periodicity}).  We define $\Lroot(\Pat) =
\min\{\Pat[1 + t \dd 1 + t + p) : t \in [0 \dd p)\}$, where $p =
\per(\Pat[1 \dd 3\tau - 1])$.  Let $H = \Lroot(\Pat)$. We define
$\pend{\Pat} = 1 + p + \lcp(\Pat[1 \dd m], \Pat[1 + p \dd m])$, where
$p = |H|$. By definition, there exists $s \in [0 \dd p)$ such that
$\Pat[1 + s \dd 1 + s + p) = H$.  Thus, we can write $\Pat[1 \dd
\pend{\Pat}) = H'H^{k}H''$, where $H'$ (resp.\ $H''$) is a proper
suffix (resp.\ prefix) of $H$. By $\pend{\Pat} \geq 3\tau$ and $|H|
\leq \tau$, such decomposition is unique (see also
\cref{sec:sa-periodic-prelim}). We denote $\Lhead(\Pat) = |H'|$,
$\Lexp(\Pat) = k$, and $\Ltail(\Pat) = |H''|$.  We also let
$\pendfull{\Pat} = \pend{\Pat} - \Ltail(\Pat)$. We define $\type(\Pat)
= +1$ if $\pend{\Pat} \leq m$ and $\Pat[\pend{\Pat}] \succ
\Pat[\pend{\Pat} - p]$ (where $p = |\Lroot(\Pat)|$), and $\type(\Pat)
= - 1$ otherwise.

\begin{lemma}\label{lm:pm-lce}
  Let $\Pat \in \Alphabet^{m}$ be a periodic pattern and let $s =
  \Lhead(\Pat)$ and $H = \Lroot(\Pat)$.  For any $j \in [1 \dd n]$,
  $\lcp(\Pat, \T[j \dd n]) \geq 3\tau - 1$ holds if and only if $j \in
  \R_{s,H}$. Moreover, if $j \in \R_{s,H}$ then, letting $t = \pend{\Pat}
  - 1$ and $t' = \rend{j} - j$, it holds $\lcp(\Pat, \T[j \dd n])
  \geq \min(t, t')$ and:
  \begin{enumerate}
  \item\label{lm:pm-lce-it-1} If $\type(\Pat) \neq \type(j)$, then
    $\Pat \prec \T[j \dd n]$ if and only if $\type(\Pat) < \type(j)$,
  \item\label{lm:pm-lce-it-2} If $\type(\Pat) = \type(j) = -1$ and $t
    \neq t'$, then $\Pat \prec \T[j \dd n]$ if and only if $t < t'$,
  \item\label{lm:pm-lce-it-3} If $\type(\Pat) = \type(j) = +1$ and $t
    \neq t'$, then $\Pat \prec \T[j \dd n]$ if and only if $t > t'$,
  \item\label{lm:pm-lce-it-4} If $\type(\Pat) \neq \type(j)$ or $t
    \neq t'$, then $\Pat \neq \T[j \dd n]$ and $\lcp(\Pat, \T[j \dd n])
    = \min(t, t')$.
  \end{enumerate}
\end{lemma}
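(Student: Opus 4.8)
The plan is to run the proof of \cref{lm:lce} almost verbatim, with the periodic pattern $\Pat$ playing the role of one of the two suffixes; the one genuinely new point is that the ``$\rend{\cdot}=n+1$'' branch appearing there is replaced by the ``$\pend{\Pat}=m+1$'' branch. Write $p=|H|$ and set $Q=H'\cdot H^{\infty}[1\dd)$, where $H'$ is the length-$s$ suffix of $H$. Since $\Pat$ is periodic (\cref{def:pattern-periodicity}) we have $\per(\Pat[1\dd 3\tau-1])=p\le\tau$, hence $\pend{\Pat}\ge 3\tau$ and $t=\pend{\Pat}-1\ge 3\tau-1\ge p$; moreover, by the definitions of $\Lroot(\Pat)$, $\Lhead(\Pat)$, and $\Ltail(\Pat)$, the prefix $\Pat[1\dd\pend{\Pat})$ equals $Q[1\dd t]$.

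First I would establish the characterization $\lcp(\Pat,\T[j\dd n])\ge 3\tau-1\iff j\in\R_{s,H}$. For the forward direction, $\lcp\ge 3\tau-1$ gives $\T[j\dd j+3\tau-1)=\Pat[1\dd 3\tau-1]$, so $\per(\T[j\dd j+3\tau-1))=p\le\tfrac13\tau$ and thus $j\in\R$; since $\Lroot(j)$ and $\Lhead(j)$ are determined by $\T[j\dd j+3\tau-1)$ — using the synchronization property of primitive strings~\cite[Lemma~1.11]{AlgorithmsOnStrings} on the two copies of $H$, exactly as in the first paragraph of the proof of \cref{lm:lce} — we get $\Lroot(j)=H$ and $\Lhead(j)=s$, i.e., $j\in\R_{s,H}$. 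For the converse, if $j\in\R_{s,H}$ then $\T[j\dd\rend{j})$ is a prefix of $Q$ of length $t'=\rend{j}-j\ge 3\tau-1$, while $\Pat[1\dd\pend{\Pat})=Q[1\dd t]$ with $t\ge 3\tau-1$; hence $\lcp(\Pat,\T[j\dd n])\ge\min(t,t')\ge 3\tau-1$, which simultaneously proves the inequality $\lcp(\Pat,\T[j\dd n])\ge\min(t,t')$ claimed in the ``moreover'' part.

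Next, assuming $j\in\R_{s,H}$, I would derive \cref{lm:pm-lce-it-1,lm:pm-lce-it-2,lm:pm-lce-it-3} by repeating the corresponding cases of the proof of \cref{lm:lce}, using $Q$ as the common reference string and reading ``$\type(\Pat)=-1$'' as ``either $\pend{\Pat}=m+1$, or $\pend{\Pat}\le m$ and $\Pat[\pend{\Pat}]\prec\Pat[\pend{\Pat}-p]$'' (the first alternative being the analogue of $\rend{j}=n+1$). Concretely, for \cref{lm:pm-lce-it-1} with $\type(\Pat)<\type(j)$ I would show $\Pat\prec Q\prec\T[j\dd n]$: the left inequality holds because $\Pat$ is a proper prefix of $Q$ when $\pend{\Pat}=m+1$, and otherwise $\Pat[1\dd t]=Q[1\dd t]$ together with $\Pat[1+t]=\Pat[\pend{\Pat}]\prec\Pat[\pend{\Pat}-p]=Q[1+t-p]=Q[1+t]$; the right inequality is exactly the one established in \cref{lm:lce}\eqref{lm:lce-it-1} from $\type(j)=+1$. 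The proofs of \cref{lm:pm-lce-it-2,lm:pm-lce-it-3}, and the converse implications in all three items, transfer in the same way, the index arithmetic ($\pend{\Pat}-p\ge 1$, $j+\min(t,t')\le n$, etc.) being legitimate thanks to $t,t'\ge 3\tau-1\ge p$.

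Finally, for \cref{lm:pm-lce-it-4} I would adapt the (longest) fourth case of \cref{lm:lce}: having already shown $\lcp(\Pat,\T[j\dd n])\ge\min(t,t')$, it suffices to check that, under the hypothesis $\type(\Pat)\ne\type(j)$ or $t\ne t'$, the strings $\Pat$ and $\T[j\dd n]$ first disagree at position $\min(t,t')+1$, or else $\Pat$ is a proper prefix of $\T[j\dd n]$ of length $\min(t,t')$ (the latter only when $\pend{\Pat}=m+1$); going through the subcases ($\type(\Pat)\ne\type(j)$ with $t\le t'$ or with $t>t'$; $\type(\Pat)=\type(j)=-1$; $\type(\Pat)=\type(j)=+1$) exactly as in \cref{lm:lce}\eqref{lm:lce-it-4} delivers this and, in particular, $\Pat\ne\T[j\dd n]$. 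I expect the only real difficulty to be bookkeeping: keeping the ``$\pend{\Pat}=m+1$'' alternative aligned with the ``$\rend{\cdot}=n+1$'' branch of \cref{lm:lce} throughout, and verifying that no index ever steps outside $[1\dd m]$ or $[1\dd n]$.
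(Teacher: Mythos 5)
Your proposal is correct and follows essentially the same route as the paper's proof: the equivalence with $\R_{s,H}$ via the period and synchronization argument, the bound $\lcp\ge\min(t,t')$ via the common reference string $Q=H'\cdot H^{\infty}[1\dd)$, and items 1--4 obtained by transplanting the case analysis of \cref{lm:lce} with the ``$\pend{\Pat}=m+1$'' branch playing the role of ``$\rend{j}=n+1$'' (including the length-comparison argument for $\Pat\neq\T[j\dd n]$ in item 4). Only minor polish is needed, e.g.\ the bound is $\per(\Pat[1\dd 3\tau-1])\le\tfrac13\tau$ (not merely $\le\tau$), and in item 4 the case where $\T[j\dd n]$ itself ends at $\min(t,t')$ (when $\rend{j}=n+1$) must also be spelled out, exactly as in \cref{lm:lce}\eqref{lm:lce-it-4}.
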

\begin{proof}

  Let $j \in [1 \dd n]$ be such that $\lcp(\Pat, \T[j \dd n]) \geq
  3\tau - 1$. Denoting $p = \per(\Pat[1 \dd 3\tau - 1])$ and $p' =
  \per(\T[j \dd j + 3\tau - 1))$ we then have $p' = p \leq
  \tfrac{1}{3}\tau$.  Thus, $j \in \R$. Moreover, this implies
  $\Lroot(j) = \min\{\T[j + \delta \dd j + \delta + p') : \delta \in
  [0 \dd p')\} = \min\{\T[j + \delta \dd j + \delta + p) : \delta \in
  [0 \dd p)\} = \min\{\Pat[1 + \delta \dd 1 + \delta + p) : \delta \in
  [0 \dd p)\} = H$.  To show that $\Lhead(j) = s$, note that by $|H|
  \leq \tau$, the string $H'H^2$ (where $H'$ is a length-$s$ suffix of
  $H$) is a prefix of $\Pat[1 \dd 3\tau - 1] = \T[j \dd j + 3\tau -
  1)$.  On the other hand, $\Lhead(j) = s'$ implies that
  $\widehat{H}'H^2$ (where $\widehat{H}'$ is a length-$s'$ suffix of
  $H$) is a prefix of $\T[j \dd j + 3\tau - 1)$.  Thus, by the
  synchronization property of primitive
  strings~\cite[Lemma~1.11]{AlgorithmsOnStrings} applied to the two
  copies of $H$, we have $s' = s$, and hence, $j \in \R_{s,H}$.  For
  the converse implication, assume $j \in \R_{s,H}$. This implies that
  both $\Pat[1 \dd \pend{\Pat})$ and $\T[j \dd \rend{j})$ are prefixes
  of $H'\cdot H^{\infty}[1 \dd)$ (where $H'$ is as above). Thus, by
  $\pend{\Pat} - 1,\, \rend{j} - j \geq 3\tau - 1$, we obtain
  $\lcp(\Pat, \T[j \dd n]) \geq 3\tau - 1$.

  Let us now assume $j \in \R_{s,H}$. Since, as noted above, both
  $\Pat[1 \dd \pend{\Pat}) = \Pat[1 \dd t]$ and $\T[j \dd \rend{j})
  = \T[j \dd j + t')$ are prefixes of $H'\cdot H^{\infty}[1 \dd)$,
  we have $\lcp(\Pat, \T[j \dd n]) \geq \min(t, t')$.

  1. Let $Q = H'\cdot H^{\infty}[1 \dd)$, where $H'$ is a length-$s$
  suffix of $H$. In the proof of \cref{lm:lce},
  it is shown that $\type(j) = -1$
  implies $\T[j \dd n] \prec Q$, and $\type(j) = +1$ implies $Q \prec
  \T[j \dd n]$.  We now prove an analogous fact for $\Pat$. We first
  note that $\type(\Pat) = -1$ implies that either $\pend{\Pat} = m +
  1$, or $\pend{\Pat} \leq m$ and $\Pat[\pend{\Pat}] \prec
  \Pat[\pend{\Pat} - |H|]$. In the first case, $\Pat[1 \dd
  \pend{\Pat}) = \Pat$ is a proper prefix of $Q$ and hence $\Pat \prec
  Q$. In the second case, we have $\Pat[1 \dd t] = Q[1 \dd t]$ and
  $\Pat[1 + t] \prec \Pat[1 + t - |H|] = Q[1 + t - |H|] = Q[1 +
  t]$. Consequently, $\Pat \prec Q$. If $\type(\Pat) = +1$ holds, then
  $\pend{\Pat} \leq m$.  Thus, we have $Q[1 \dd t] = \Pat[1 \dd t]$
  and $Q[1 + t] = Q[1 + t - |H|] = \Pat[1 + t - |H|] \prec \Pat[1 +
  t]$. Hence, we obtain $Q \prec \Pat$. We are now ready to prove the
  claim. Assume first that $\type(\Pat) < \type(j)$. By the above we
  then have $\Pat \prec Q \prec \T[j \dd n]$. The opposite implication
  is proved by contraposition. Assume $\type(\Pat) > \type(j)$. By the
  above we then have $\T[j \dd n] \prec Q \prec \Pat$.

  2. Assume $t < t'$. If $\pend{\Pat} = m + 1$, then $\Pat[1 \dd t] =
  \Pat[1 \dd \pend{\Pat}) = \Pat$ is proper prefix of $\T[j \dd j +
  t') = \T[j \dd \rend{j})$, and hence $\Pat \prec \T[j \dd \rend{j})
  \preceq \T[j \dd n]$. If $\pend{\Pat} \leq m$, then we have $\Pat[1
  \dd t] = \T[j \dd j + t)$ and by $t < t'$, $\Pat[1 + t] \prec \Pat[1
  + t - |H|] = \T[j + t - |H|] = \T[j + t]$. Thus, we also obtain
  $\Pat \prec \T[j \dd n]$.  The opposite implication is proved by
  contraposition. Assume $t > t'$.  If $\rend{j} = n + 1$, then by $t
  > t'$, the string $\T[j \dd j + t') = \T[j \dd \rend{j}) = \T[j \dd
  n]$ is a proper prefix of $\Pat[1 \dd t] = \Pat[1 \dd \pend{\Pat})$,
  and hence $\T[j \dd n] \prec \Pat[1 \dd \pend{\Pat}) \preceq
  \Pat$. If $\rend{j} \leq n$, then we have $\T[j \dd j + t') = \Pat[1
  \dd t']$ and by $t > t'$, $\T[j + t'] \prec \T[j + t' - |H|] =
  \Pat[1 + t' - |H|] = \Pat[1 + t']$.  Consequently, we also obtain
  $\T[j \dd n] \prec \Pat$.

  3. Assume $t > t'$. By $\type(j) = +1$, we have $\rend{j} \leq
  n$. Thus, by $t > t'$ we have $\Pat[1 \dd t'] = \T[j \dd j + t')$
  and $\Pat[1 + t'] = \Pat[1 + t' - |H|] = \T[j + t' - |H|] \prec \T[j
  + t']$.  Consequently, $\Pat \prec \T[j \dd n]$. The opposite
  implication is proved by contraposition.  Assume $t < t'$.  By
  $\type(\Pat) = +1$, we have $\pend{\Pat} \leq m$. Thus, by $t < t'$
  we have $\T[j \dd j + t) = \Pat[1 \dd t]$ and $\T[j + t] = \T[j + t
  - |H|] = \Pat[1 + t - |H|] \prec \Pat[1 + t]$. Consequently, we
  obtain $\T[j \dd n] \prec \Pat$.

  4. By the earlier implication, $\lcp(\Pat, \T[j \dd n]) \geq \min(t,
  t')$.  Thus, it remains to show that $\Pat \neq \T[j \dd n]$ and
  $\lcp(\Pat, \T[j \dd n]) \leq \min(t, t')$.  First, let us assume
  $\type(\Pat) < \type(j)$ (i.e., $\type(\Pat) = -1$ and $\type(j) =
  +1$). Consider two cases:
  \begin{itemize}
  \item First, assume $t \leq t'$. Our goal is to prove $\Pat \neq
    \T[j \dd n]$ and $\lcp(\Pat, \T[j \dd n]) \leq t$.  First, recall
    from the proof of \cref{lm:lce}\eqref{lm:lce-it-1} that $\type(j)
    = +1$ implies $j + t' \leq n$, $Q[1 \dd t'] = \T[j \dd j + t')$,
    and $Q[1 + t'] \prec \T[j + t']$. Consider now two subcases.  If
    $\pend{\Pat} = m + 1$, then $t = m$, and hence $\lcp(\Pat, \T[j
    \dd n]) \leq m = t$. By $t' \leq n - j$ we then also have $|\Pat|
    = t < t' + 1 \leq n - j + 1 = |\T[j \dd n]|$.  Thus, $\Pat \neq
    \T[j \dd n]$. Let us thus assume $\pend{\Pat} \leq m$.  In the
    proof of \cref{lm:pm-lce-it-1} we showed that in this case
    $\type(\Pat) = -1$ implies $\Pat[1 + t] \prec Q[1 + t]$.  On the
    other hand, as noted above, $\type(j) = +1$ implies $Q[1 \dd t'] =
    \T[j \dd j + t')$, and $Q[1 + t'] \prec \T[j + t']$.  By $t \leq
    t'$ we thus have $Q[1 + t] \preceq \T[j + t]$. Consequently,
    $\Pat[1 + t] \neq \T[j + t]$. This immediately implies $\Pat \neq
    \T[j \dd n]$ and $\lcp(\Pat, \T[j \dd n]) \leq t$.
  \item Let us now assume $t > t'$. Our goal is to prove $\Pat \neq
    \T[j \dd n]$ and $\lcp(\Pat, \T[j \dd n]) \leq t'$.  In the proof
    of \cref{lm:pm-lce-it-1} we showed that $\type(\Pat) = -1$ implies
    $\Pat[1 \dd t] = Q[1 \dd t]$. Thus, by $t > t'$ we have $\Pat[1 +
    t'] = Q[1 + t']$.  On the other hand, in the proof of
    \cref{lm:lce}\eqref{lm:lce-it-1} we showed that $\type(j) = +1$
    implies $Q[1 + t'] \prec \T[j + t']$. Thus, we obtain $\Pat[1 +
    t'] \neq \T[j + t']$. This immediately implies $\Pat \neq \T[j \dd
    n]$ and $\lcp(\Pat, \T[j \dd n]) \leq t'$.
  \end{itemize}
  Assume now $\type(\Pat) > \type(j)$ (i.e., $\type(\Pat) = +1$ and
  $\type(j) = -1$). Consider two cases:
  \begin{itemize}
  \item First, assume $t < t'$. Our goal is to prove $\Pat \neq \T[j
    \dd n]$ and $\lcp(\Pat, \T[j \dd n]) \leq t$. In the proof of
    \cref{lm:lce}\eqref{lm:lce-it-1} we showed that $\type(j) = -1$
    implies $\T[j \dd j + t') = Q[1 \dd t']$. Thus, by $t < t'$ we
    have $\T[j + t] = Q[1 + t]$. On the other hand, in the proof of
    \cref{lm:pm-lce-it-1} we showed that $\type(\Pat) = +1$ implies
    $Q[1 + t] \prec \Pat[1 + t]$. Thus, we obtain $\T[j + t] \neq
    \Pat[1 + t]$.  This immediately implies $\Pat \neq \T[j \dd n]$
    and $\lcp(\Pat, \T[j \dd n]) \leq t$.
  \item Let us now assume $t \geq t'$. Our goal is to prove $\Pat \neq
    \T[j \dd n]$ and $\lcp(\Pat, \T[j \dd n]) \leq t'$. First, recall
    from the proof of \cref{lm:pm-lce-it-1} that $\type(\Pat) = +1$
    implies $t + 1 = \rend{\Pat} \leq m$, $Q[1 \dd t] = \Pat[1 \dd
    t]$, and $Q[1 + t] \prec \Pat[1 + t]$. Consider now two subcases.
    If $\rend{j} = n + 1$, then $j + t' = n + 1$ (or equivalently, $t'
    = n - j + 1$) and hence $\lcp(\Pat, \T[j \dd n]) \leq |\T[j \dd
    n]| = t'$. By $t + 1 \leq m$ we then also have $|\T[j \dd n]| = t'
    < t + 1 \leq m = |\Pat|$.  Thus, $\Pat \neq \T[j \dd n]$. Let
    us thus assume $\rend{j} \leq n$.  In the proof of
    \cref{lm:lce}\eqref{lm:lce-it-1} we showed that in this case
    $\type(j) = -1$ implies $\T[j + t'] \prec Q[1 + t']$. On the
    other hand, as noted above, $\type(\Pat) = +1$ implies $Q[1 \dd t]
    = \Pat[1 \dd t]$ and $Q[1 + t] \prec \Pat[1 + t]$.  By $t \geq t'$
    we thus have $Q[1 + t'] \preceq \Pat[1 + t']$.  Consequently,
    $\T[j + t'] \neq \Pat[1 + t']$. This immediately implies $\Pat
    \neq \T[j \dd n]$ and $\lcp(\Pat, \T[j \dd n]) \leq t'$.
  \end{itemize}
  This concludes the proof of the claim if $\type(\Pat) \neq
  \type(j)$. Let us now assume $\type(\Pat) = \type(j) = -1$ and
  $t \neq t'$. Consider two cases:
  \begin{itemize}
  \item First, assume $t < t'$. Our goal is to prove $\Pat \neq \T[j
    \dd n]$ and $\lcp(\Pat, \T[j \dd n]) \leq t$. In the proof of
    \cref{lm:pm-lce-it-2} we showed that either it holds $\Pat[1 \dd
    t] = \Pat$ (in which case $\lcp(\Pat, \T[j \dd n]) \leq |\Pat|
    = t$ and $|\Pat| = t < t' = \rend{j} - j \leq n + 1 - j = |\T[j
    \dd n]|$ which in turn implies $\Pat \neq \T[j \dd n]$), or
    $\Pat[1 + t] \prec \T[j + t]$ (which also implies $\Pat \neq \T[j
    \dd n]$ and $\lcp(\Pat, \T[j \dd n]) \leq t$).
  \item Let us now assume $t > t'$. Our goal is to prove $\Pat \neq
    \T[j \dd n]$ and $\lcp(\Pat, \T[j \dd n]) \leq t'$. In the proof
    of \cref{lm:pm-lce-it-2}, we showed that either it holds $\T[j \dd
    j + t') = \T[j \dd n]$ (in which case $\lcp(\Pat, \T[j \dd n])
    \leq n - j + 1 = t'$ and $|\T[j \dd n]| = n - j + 1 = t' < t =
    \pend{\Pat} - 1 \leq |\Pat|$ which in turn implies $\Pat \neq \T[j
    \dd n]$), or $\T[j + t'] \prec \Pat[1 + t']$ (which also implies
    $\Pat \neq \T[j \dd n]$ and $\lcp(\Pat, \T[j \dd n]) \leq t'$).
  \end{itemize}
  Let us now assume $\type(\Pat) = \type(j) = +1$ and $t \neq t'$.
  Consider two cases:
  \begin{itemize}
  \item First, assume $t < t'$. Our goal is to prove $\Pat \neq \T[j
    \dd n]$ and $\lcp(\Pat, \T[j \dd n]) \leq t$. In the proof of
    \cref{lm:pm-lce-it-3} we showed that $\T[j + t] \prec \Pat[1 +
    t]$. This immediately implies the claims.
  \item Let us now assume $t > t'$. Our goal is to prove $\Pat \neq
    \T[j \dd n]$ and $\lcp(\Pat, \T[j \dd n]) \leq t'$. In the proof
    of \cref{lm:pm-lce-it-3} we showed that $\Pat[1 + t'] \prec \T[j +
    t']$. This immediately implies the claims. \qedhere
  \end{itemize}
\end{proof}

\begin{lemma}\label{lm:pm-lce-2}
  Let $\Pat \in \Alphabet^{m}$ be a periodic pattern. For every $S \in
  \Alphabet^{+}$, $\lcp(P, S) \geq 3\tau - 1$ implies that $S$ is
  periodic, and that it holds $\Lroot(S) = \Lroot(\Pat)$ and
  $\Lhead(S) = \Lhead(\Pat)$.
\end{lemma}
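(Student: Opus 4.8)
The plan is to reduce the whole statement to one observation: $\lcp(\Pat,S)\ge 3\tau-1$ forces $S$ to share its length-$(3\tau-1)$ prefix with $\Pat$, and each of the three claimed facts --- periodicity of $S$, the identity $\Lroot(S)=\Lroot(\Pat)$, and $\Lhead(S)=\Lhead(\Pat)$ --- depends only on that prefix. So after extending the definitions of \cref{sec:pm-periodic-prelim} verbatim from patterns to arbitrary strings, there is essentially nothing to do beyond unwinding them.

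First I would record that $\lcp(\Pat,S)\ge 3\tau-1$ implies $|S|\ge 3\tau-1$ and $S[1\dd 3\tau-1]=\Pat[1\dd 3\tau-1]$; hence $\per(S[1\dd 3\tau-1])=\per(\Pat[1\dd 3\tau-1])\le\tfrac13\tau$ because $\Pat$ is periodic, and therefore $S$ is periodic by \cref{def:pattern-periodicity}. Writing $p:=\per(S[1\dd 3\tau-1])=\per(\Pat[1\dd 3\tau-1])\le\tfrac13\tau$, every window $[1+t\dd 1+t+p)$ with $t\in[0\dd p)$ lies inside $[1\dd 3\tau-1]$ (since $t+p\le 2p-1<3\tau-1$), so $\Lroot(S)=\min\{S[1+t\dd 1+t+p):t\in[0\dd p)\}=\min\{\Pat[1+t\dd 1+t+p):t\in[0\dd p)\}=\Lroot(\Pat)$.

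The only step needing an argument rather than a definition chase is $\Lhead(S)=\Lhead(\Pat)$, and here I would copy the reasoning from the first paragraph of the proof of \cref{lm:pm-lce} (equivalently \cref{lm:lce}). Set $H:=\Lroot(\Pat)=\Lroot(S)$ and $s:=\Lhead(\Pat)$. From $\Pat[1+s\dd 1+s+p)=H$, the periodicity of $\Pat[1\dd 3\tau-1]$ with period $p=|H|$, and $s+2p\le 3p-1<3\tau-1$, the string $H'H^2$ with $H'$ the length-$s$ suffix of $H$ is a prefix of $\Pat[1\dd 3\tau-1]=S[1\dd 3\tau-1]$; likewise $\widehat H'H^2$ with $\widehat H'$ the length-$\Lhead(S)$ suffix of $H$ is a prefix of $S[1\dd 3\tau-1]$. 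Applying the synchronization property of primitive strings~\cite[Lemma~1.11]{AlgorithmsOnStrings} to the two copies of the primitive string $H$ in these two prefixes gives $\Lhead(S)=s=\Lhead(\Pat)$. The main (minor) obstacle throughout is just checking the index inequalities guaranteeing that the relevant period-windows and the block $H'H^2$ fit within the first $3\tau-1$ positions, all of which follow from $p\le\tfrac13\tau$ and $\tau\ge 1$.
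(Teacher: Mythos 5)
Your proposal is correct and matches the paper's own proof essentially step for step: periodicity of $S$ and $\Lroot(S)=\Lroot(\Pat)$ follow from the shared length-$(3\tau-1)$ prefix and the definitions, and $\Lhead(S)=\Lhead(\Pat)$ is obtained exactly as in the paper by observing that $H'H^2$ and $\widehat{H}'H^2$ are both prefixes of that shared prefix (the index bounds $s,s'<p\le\tfrac13\tau$ making them fit) and invoking the synchronization property of primitive strings~\cite[Lemma~1.11]{AlgorithmsOnStrings}. No gaps.
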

\begin{proof}
  Denote $X = \Pat[1 \dd 3\tau - 1]$. Letting $p := \per(X)$ we then
  have $p \leq \tfrac{1}{3}\tau$.  By $\lcp(P, S) \geq 3\tau - 1$, $X$
  is thus a prefix of $S$ and hence $\per(S[1 \dd 3\tau - 1]) = p \leq
  \tfrac{1}{3}\tau$, i.e., $S$ is periodic.  Moreover, we then have
  $\Lroot(S) = \min\{S[1 + t \dd 1 + t + p) : t \in [0 \dd p)\} =
  \min\{X[1 + t \dd 1 + t + p) : t \in [0 \dd p)\} = \min\{\Pat[1 + t
  \dd 1 + t + p) : t \in [0 \dd p)\} = \Lroot(\Pat)$. To show the last
  claim, denote $s = \Lhead(\Pat)$ and $s' = \Lhead(S)$.  Then,
  letting $H = \Lroot(\Pat) = \Lroot(S)$, the string $H' H^2$
  (resp.\ $\widehat{H}' H^2$) is a prefix of $\Pat$ (resp.\ $S$),
  where $H'$ (resp.\ $\widehat{H}'$) is a length-$s$
  (resp.\ length-$s'$) suffix of $H$.  Note, however, that $s, s' <
  |H| = p \leq \tfrac{1}{3}\tau$ and $|X| \geq \tau \geq 3|H|$.  This
  implies that $H' H^2$ and $\widehat{H}' H^2$ are both prefixes of
  $X$. By the synchronization property of primitive
  strings~\cite[Lemma~1.11]{AlgorithmsOnStrings}, this implies $|H'| =
  |\widehat{H}'|$. Thus, we obtain $\Lhead(\Pat) = s = |H'| =
  |\widehat{H}'| = s' = \Lhead(S)$.
\end{proof}

\begin{lemma}\label{lm:pm-lce-3}
  Let $\Pat \in \Alphabet^{+}$ be a periodic pattern satisfying
  $\rend{\Pat} \leq |\Pat|$. Then:
  \begin{enumerate}
  \item\label{lm:pm-lce-3-it-1} For every $S \in \Alphabet^{+}$,
    $\lcp(\Pat, S) \geq \rend{\Pat}$ (in particular, $\Pat$ being
    a prefix of $S$) implies that $S$ is periodic and it holds:
    \begin{itemize}
    \item $\rend{S} = \rend{\Pat}$,
    \item $\Ltail(S) = \Ltail(\Pat)$,
    \item $\rendfull{S} = \rendfull{\Pat}$,
    \item $\Lexp(S) = \Lexp(\Pat)$,
    \item $\type(S) = \type(\Pat)$.
    \end{itemize}
  \item\label{lm:pm-lce-3-it-2} If $j \in \Occ(\Pat, \T)$, then
    $j \in \R$ and it holds:
    \begin{itemize}
    \item $\rend{j} - j = \rend{\Pat} - 1$,
    \item $\Ltail(j) = \Ltail(\Pat)$,
    \item $\rendfull{j} - j = \rendfull{\Pat} - 1$,
    \item $\Lexp(j) = \Lexp(\Pat)$,
    \item $\type(j) = \type(\Pat)$.
    \end{itemize}
  \end{enumerate}
\end{lemma}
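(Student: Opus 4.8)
The plan is to exploit the single structural fact underlying both items: because $\rend{\Pat}\le|\Pat|$, the prefix $\Pat[1\dd\rend{\Pat})$ already witnesses where the periodicity of $\Pat$ breaks. Concretely, setting $p=|\Lroot(\Pat)|=\per(\Pat[1\dd 3\tau-1])$ and unwinding $\rend{\Pat}=1+p+\lcp(\Pat[1\dd|\Pat|],\Pat[1+p\dd|\Pat|])$, one reads off that $\Pat[i]=\Pat[i+p]$ for all $i\in[1\dd\rend{\Pat}-1-p]$ while $\Pat[\rend{\Pat}-p]\ne\Pat[\rend{\Pat}]$, both indices lying in $[1\dd|\Pat|]$ precisely thanks to $\rend{\Pat}\le|\Pat|$; also $\rend{\Pat}\ge 3\tau$ since $\Pat[1\dd 3\tau-1]$ has period $p$. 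I would first record these observations, then transport this local picture to $S$ (for item~1) and to the text around an occurrence $j$ (for item~2), closing each case with the uniqueness of the L-decomposition from \cref{sec:sa-periodic-prelim}.

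For item~\eqref{lm:pm-lce-3-it-1}: from $\lcp(\Pat,S)\ge\rend{\Pat}\ge 3\tau-1$, \cref{lm:pm-lce-2} hands over that $S$ is periodic with $\Lroot(S)=\Lroot(\Pat)=:H$ (so $|\Lroot(S)|=p$) and $\Lhead(S)=\Lhead(\Pat)$. Since $S$ and $\Pat$ agree on their first $\rend{\Pat}$ characters and $|S|\ge\rend{\Pat}$, the two (in)equalities above transfer verbatim to $S$, giving $\lcp(S[1\dd|S|],S[1+p\dd|S|])=\rend{\Pat}-1-p$ and hence $\rend{S}=\rend{\Pat}$. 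Then $S[1\dd\rend{S})=\Pat[1\dd\rend{\Pat})$, so matching their unique L-decompositions yields $\Lexp(S)=\Lexp(\Pat)$ and $\Ltail(S)=\Ltail(\Pat)$, whence $\rendfull{S}=\rend{S}-\Ltail(S)=\rendfull{\Pat}$. For the type I would note $\rend{S}=\rend{\Pat}\le\min(|S|,|\Pat|)$, so both type conditions reduce to comparing $\Pat[\rend{\Pat}]$ with $\Pat[\rend{\Pat}-p]$, giving $\type(S)=\type(\Pat)$.

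For item~\eqref{lm:pm-lce-3-it-2}: an occurrence $j\in\Occ(\Pat,\T)$ means $\T[j\dd j+|\Pat|)=\Pat$, so $\lcp(\Pat,\T[j\dd n])\ge|\Pat|\ge\rend{\Pat}\ge 3\tau-1$; by \cref{lm:pm-lce} this places $j\in\R$ with $\Lroot(j)=H$ and $\Lhead(j)=\Lhead(\Pat)$. Using \cref{lm:run-end}\eqref{lm:run-end-it-2}, $\rend{j}=j+p+\LCE(j,j+p)$, and transporting the two displayed facts about $\Pat$ through $\T[j+i-1]=\Pat[i]$ (valid for $i\le|\Pat|$, in particular for the mismatch indices $\rend{\Pat}-p,\rend{\Pat}\le|\Pat|$) pins $\LCE(j,j+p)=\rend{\Pat}-1-p$, so $\rend{j}-j=\rend{\Pat}-1$. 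Now $\T[j\dd\rend{j})=\Pat[1\dd\rend{\Pat})$, so uniqueness of the L-decomposition again gives $\Lexp(j)=\Lexp(\Pat)$ and $\Ltail(j)=\Ltail(\Pat)$, hence $\rendfull{j}=\rend{j}-\Ltail(j)$ implies $\rendfull{j}-j=\rendfull{\Pat}-1$. Finally, $\rend{j}=j+\rend{\Pat}-1\le j+|\Pat|-1\le n$ (as $j+|\Pat|\le n+1$), so $\type(j)=+1$ reduces, via $\T[\rend{j}]=\Pat[\rend{\Pat}]$ and $\T[\rend{j}-p]=\Pat[\rend{\Pat}-p]$, to $\Pat[\rend{\Pat}]\succ\Pat[\rend{\Pat}-p]$, i.e.\ to $\type(\Pat)=+1$.

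The hard part is purely the bookkeeping: keeping straight the off-by-one between the length-based string convention ($\rend{\Pat}$, $\rendfull{\Pat}$) and the position-based convention ($\rend{j}-j$, $\rendfull{j}-j$), and invoking $\rend{\Pat}\le|\Pat|$ at exactly the right moments to guarantee that the ``breaking'' characters $\Pat[\rend{\Pat}-p]$ and $\Pat[\rend{\Pat}]$ are present both in $\Pat$ and in whichever copy of $\Pat$ (the string $S$, or the substring $\T[j\dd j+|\Pat|)$) we are reasoning about. Once those indices are pinned down, every remaining equality is a transfer across a common prefix, and every remaining structural conclusion is an instance of the uniqueness of the L-decomposition.
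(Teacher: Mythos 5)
Your proof is correct and follows essentially the same route as the paper's: transfer of $\Lroot$/$\Lhead$ via \cref{lm:pm-lce-2} (resp.\ \cref{lm:pm-lce}), identification of $\rend{S}$ (resp.\ $\rend{j}-j$ via \cref{lm:run-end}) from the period/mismatch pattern at positions $\rend{\Pat}-p$ and $\rend{\Pat}$, which exist inside $\Pat$ precisely because $\rend{\Pat}\le|\Pat|$, and then the remaining equalities and the type by comparing the same two characters. The only cosmetic differences are that you conclude $\Lexp$ and $\Ltail$ from uniqueness of the L-decomposition where the paper uses the explicit mod/floor formulas, and in item~2 you recompute $\LCE(j,j+p)$ directly rather than invoking item~1 for $S=\T[j\dd n]$; both variants are sound.
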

\begin{proof}
  Denote $X = \Pat[1 \dd 3\tau - 1]$, $H = \Lroot(\Pat)$, $s =
  \Lhead(\Pat)$, and $p = \per(X) = |H| \leq \tfrac{1}{3}\tau$.

  1. First, observe that by definition, $\rend{\Pat} = 1 + p +
  \lcp(\Pat, \Pat[1 + p \dd |\Pat|]) > |X|$.  Thus, $\lcp(\Pat, S)
  \geq \rend{\Pat}$ implies that $X$ is a prefix of $S$, and hence $S$
  is periodic.  By \cref{lm:pm-lce-2}, we then also have $\Lroot(S) =
  H$ and $\Lhead(S) = s$.  To show $\rend{S} = \rend{\Pat}$, observe
  that by $\rend{\Pat} \leq |\Pat|$ and the definition of
  $\rend{\Pat}$, we have $\Pat[\rend{\Pat}] \neq \Pat[\rend{\Pat} -
  p]$. Consequently, $\lcp(\Pat, S) \geq \rend{\Pat}$ yields
  $\lcp(\Pat, \Pat[1 + p \dd |\Pat|]) = \lcp(S, S[1 + p \dd |S|])$.
  Combining this with $|\Lroot(S)| = p$ we thus obtain $\rend{S} = 1 +
  p + \lcp(S, S[1 + p \dd |S|]) = 1 + p + \lcp(\Pat, \Pat[1 + p \dd
  |\Pat|]) = \rend{\Pat}$.  By $\Lhead(S) = s$ we then also obtain
  $\Ltail(S) = (\rend{S} - 1 - \Lhead(S)) \bmod |\Lroot(S)| =
  (\rend{\Pat} - 1 - s) \bmod |H| = \Ltail(\Pat)$, and consequently
  $\rendfull{S} = \rend{S} - \Ltail(S) = \rend{\Pat} - \Ltail(\Pat) =
  \rendfull{\Pat}$. We then also have $\Lexp(S) = \lfloor
  \tfrac{\rend{S} - 1 - \Lhead(S)}{|\Lroot(S)|} \rfloor = \lfloor
  \tfrac{\rend{\Pat} - 1 - s}{|H|} \rfloor = \Lexp(\Pat)$.  Finally,
  by $\rend{\Pat} \leq |\Pat|$ and $\lcp(\Pat, S) \geq \rend{\Pat}$,
  we then also have $S[\rend{S}] = \Pat[\rend{\Pat}]$. Consequently,
  $S[\rend{S}] \prec S[\rend{S} - p]$ holds if and only of
  $\Pat[\rend{\Pat}] \prec \Pat[\rend{\Pat} - p]$. Therefore,
  $\type(S) = \type(\Pat)$.

  2. We start by noting that $j \in \Occ(\Pat, \T)$ implies $j \in
  \Occ(X, \T)$. Thus, $\per(\T[j \dd j + 3\tau - 1)) = \per(X) = p
  \leq \tfrac{1}{3}\tau$, and hence $j \in \R$.  By \cref{lm:pm-lce},
  we then also have $\Lroot(j) = H$ and $\Lhead(j) = s$. To show
  $\rend{j} - j = \rend{\Pat} - 1$, denote $S = \T[j \dd n]$. Since
  $\Pat$ is a prefix of $S$, by \cref{lm:pm-lce-3-it-1}, it follows
  that $\rend{S} = \rend{\Pat}$.  By $\Lroot(S) = \Lroot(\Pat)$
  (\cref{lm:pm-lce-2}) and the definition of $\rend{S}$, we thus have
  $1 + p + \lcp(S, S[1 + p \dd |S|]) = \rend{S} = \rend{\Pat}$, or
  equivalently, $\lcp(S, S[1 + p \dd |S|]) = \rend{\Pat} - p - 1$.
  Since $\lcp(S, S[1 + p \dd |S|]) = \LCE(j, j + p)$, we thus obtain
  $p + \LCE(j, j + p) = \rend{\Pat} - 1$. It remains to note that for
  $j \in \R$, by \cref{lm:run-end}\eqref{lm:run-end-it-2}, $\rend{j} -
  j = p + \LCE(j, j + p)$.  Therefore, we have $\rend{j} - j =
  \rend{\Pat} - 1$. Combining this with $\Lroot(j) = H$ and $\Lhead(j)
  = s$ yields $\Ltail(j) = (\rend{j} - j - \Lhead(j)) \bmod
  |\Lroot(j)| = (\rend{\Pat} - 1 - s) \bmod |H| = \Ltail(\Pat)$,
  $\rendfull{j} - j = \rend{j} - j - \Ltail(j) = \rend{\Pat} - 1 -
  \Ltail(\Pat) = \rendfull{\Pat} - 1$, and $\Lexp(j) = \lfloor
  \tfrac{\rend{j} - j - s}{|\Lroot(j)|} \rfloor = \lfloor
  \tfrac{\rend{\Pat} - 1 - s}{|H|} \rfloor = \Lexp(\Pat)$.  Finally,
  by $\rend{\Pat} \leq |\Pat|$ we have $\rend{j} \leq n$ and
  $\T[\rend{j}] = \Pat[\rend{\Pat}]$. Consequently, $\T[\rend{j}]
  \prec \T[\rend{j} - p]$ holds if and only if $\Pat[\rend{\Pat}]
  \prec \Pat[\rend{\Pat} - p]$. Therefore, $\type(j) = \type(\Pat)$.
\end{proof}

\subsubsection{The Data Structure}\label{sec:pm-periodic-ds}

\bfparagraph{Definitions}

Let $q = |\R'^{-}|$. Recall (\cref{sec:sa-periodic-ds}), that
$(\rlexm_i)_{i \in [1 \dd q]}$ denotes the sequence containing all
positions $j \in \R'^{-}$ sorted first by $\Lroot(j)$, and in case of
ties, by $\T[\rendfull{j} \dd n]$.  Recall also that $\Lroots =
\{\Lroot(j) : j \in \R\}$.  For any string $H\in \Lroots$, let
$\Pow(H) = H^\infty[1\dd |H|\lceil\tfrac{\tau}{|H|}\rceil]$.  This
function satisfies the following properties:
\begin{itemize}
\item The set $\{\Pow(H) : H \in \Lroots\}$ is prefix-free.
\item For any $X, Y \in \Lroots$, $X \prec Y$ implies $\Pow(X) \prec
  \Pow(Y)$.
\end{itemize}
For a proof, consider $X, Y \in \Lroots$ such that $X \prec
Y$. By~\cite[Fact~9.1.6]{phdtomek}, it holds $X \preceq \Pow(X) \prec
X^\infty[1 \dd) \prec Y \preceq \Pow(Y)$. Since $|Y| < \tau \leq
|\Pow(X)|$, the set $\{\Pow(X), \Pow(Y)\}$ is prefix-free.

We define $\Zset = \{\rendfull{j} - |\Pow(\Lroot(j))| : j \in
\R'^{-}\}$.  We also define an array $\ARRzlex[1 \dd q]$ so that, for
any $i \in [1 \dd q]$, $\ARRzlex[i] = \rendfull{j} - |\Pow(H_i)|$,
where $j = \rlexm_i$ and $H_i = \Lroot(\rlexm_i)$. Note that
$\{\ARRzlex[i] : i \in [1 \dd q]\} = \Zset$. Observe also that
$\T[\ARRzlex[i] \dd n] = \Pow(H_i)\cdot \T[\rendfull{j} \dd n]$.
Together with the properties of the $\Pow$ function and with the
definition of $(\rlexm_i)_{i \in [1 \dd q]}$, this implies that the
positions in $\ARRzlex$ are sorted according to the lexicographic
order of the corresponding suffixes of $\T$, i.e., $i < i'$ implies
$\T[\ARRzlex[i] \dd n] \prec \T[\ARRzlex[i'] \dd n]$.

\bfparagraph{Components}

The data structure to handle periodic patterns consists of two
parts. The first part (designed to handle periodic patterns $\Pat$
satisfying $\type(\Pat) = -1$) consists of three components:
\begin{enumerate}
\item The index core $\CountCore(\T)$ (\cref{sec:pm-core-ds}) using
  $\bigO(n / \log_{\sigma} n)$ space.
\item The first part of the structure from \cref{sec:sa-periodic-ds}
  using $\bigO(n / \log_{\sigma} n)$ space.
\item The data structure from \cref{pr:meta-trie} for the array
  $\ARRzlex[1 \dd q]$. By $q = \bigO(n / \log_{\sigma} n)$ and
  \cref{pr:meta-trie}, it needs $\bigO(n / \log_{\sigma} n)$ space.
\end{enumerate}

The second part of the structure (to handle $\Pat$ satisfying
$\type(\Pat) = +1$) consists of the symmetric counterparts of the
above components adapted according to \cref{lm:pm-lce}.

In total, the data structure takes $\bigO(n / \log_{\sigma} n)$ space.

\subsubsection{Navigation Primitives}\label{sec:pm-periodic-nav}

\begin{proposition}\label{pr:pm-root}
  Let $\Pat \in \Alphabet^m$ be a periodic pattern. Given the data
  structure from \cref{sec:pm-periodic-ds} and the packed
  representation of $\Pat$, we can in $\bigO(1 + m / \log_{\sigma} n)$
  time compute $\Lroot(\Pat)$, $\Lhead(\Pat)$, $\Lexp(\Pat)$,
  $\Ltail(\Pat)$, and $\type(\Pat)$.
\end{proposition}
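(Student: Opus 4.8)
The plan is to follow the proof of \cref{pr:isa-root} almost verbatim, substituting the pattern $\Pat$ for the text $\T$ and replacing the one use of the precomputed $\LCE$ data structure (which is only available for suffixes of $\T$) by a direct word-parallel scan of $\Pat$. First I would read $X = \Pat[1 \dd 3\tau - 1]$ out of the packed representation of $\Pat$ in $\bigO(1)$ time; this is possible because $3\tau - 1 = \bigO(\log_\sigma n)$, so $X$ spans $\bigO(1)$ machine words. Since $\Pat$ is periodic, $\per(X) \leq \tfrac13\tau$, so the lookup $(s, p) := \LTroot[\Int(X)]$ is well defined and immediately gives $\Lhead(\Pat) = s$ and $\Lroot(\Pat) = X[1 + s \dd 1 + s + p)$ (the latter again extracted from the packed representation in $\bigO(1)$ time). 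The table $\LTroot$ is available since it is a component of the first part of the structure of \cref{sec:sa-periodic-ds}, which in turn is a component of the structure of \cref{sec:pm-periodic-ds}.

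The one step that is not a constant-time table lookup is computing $\pend{\Pat} = 1 + p + \lcp(\Pat[1 \dd m], \Pat[1 + p \dd m])$. Observe that $\lcp(\Pat[1 \dd m], \Pat[1 + p \dd m])$ equals the length of the longest prefix of $\Pat$ having period $p$, so it suffices to compare $\Pat$ with its copy shifted left by $p$ positions and locate the first mismatch. Because $p \leq \tfrac13\tau = \bigO(\log_\sigma n)$, we can do this block by block: each block of $\Theta(\log_\sigma n)$ characters occupies $\bigO(1)$ machine words, and after aligning the shifted copy with a bit shift, a masked exclusive-or together with a most-significant-set-bit computation pinpoints the first differing position within the block in $\bigO(1)$ time; there are $\bigO(1 + m / \log_\sigma n)$ blocks. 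If no mismatch is found, then $\Pat$ has period $p$ throughout and $\pend{\Pat} = m + 1$. Having $\pend{\Pat}$, I would set $\Lexp(\Pat) = \lfloor (\pend{\Pat} - 1 - s) / p \rfloor$ and $\Ltail(\Pat) = (\pend{\Pat} - 1 - s) \bmod p$; this is correct because, as established in \cref{sec:pm-periodic-prelim}, $\Pat[1 \dd \pend{\Pat}) = H'H^{k}H''$ is the unique such decomposition, so $\pend{\Pat} - 1 = s + k p + \Ltail(\Pat)$ with $\Ltail(\Pat) \in [0 \dd p)$.

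Finally, $\type(\Pat)$ is read straight off its definition: if $\pend{\Pat} > m$ we return $\type(\Pat) = -1$, and otherwise we compare the single characters $\Pat[\pend{\Pat}]$ and $\Pat[\pend{\Pat} - p]$ (two $\bigO(1)$-time accesses to the packed representation of $\Pat$) and return $+1$ if $\Pat[\pend{\Pat}] \succ \Pat[\pend{\Pat} - p]$ and $-1$ otherwise. Every step except the block scan runs in $\bigO(1)$ time, so the total running time is $\bigO(1 + m / \log_\sigma n)$. I expect the only delicate point to be the word-level implementation of the $\pend{\Pat}$ computation — specifically, handling the in-word alignment of the $p$-shifted copy of $\Pat$ and justifying the $\bigO(1)$ cost per block (and the $\bigO(1)$-time most-significant-bit operation on the word RAM); everything else is a routine adaptation of \cref{pr:isa-root}.
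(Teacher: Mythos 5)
Your proposal is correct and follows essentially the same route as the paper: look up $(s,p)=\LTroot[\Int(\Pat[1\dd 3\tau-1])]$, compute $\pend{\Pat}$ via the longest common prefix of $\Pat$ and its $p$-shift on the packed representation in $\bigO(1+m/\log_\sigma n)$ time, and read off $\Lexp(\Pat)$, $\Ltail(\Pat)$, and $\type(\Pat)$ from the definitions. The only difference is that the paper delegates the packed LCP computation to a cited result (\cite[Proposition~2.3]{sss}) rather than spelling out the word-parallel XOR/most-significant-bit implementation, which you describe correctly.
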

\begin{proof}
  We first compute $x \in [0 \dd \sigma^{6\tau})$ such that $x =
  \Int(\Pat[1 \dd 3\tau {-} 1])$. Given the packed encoding of $\Pat$,
  such $x$ is obtained in $\bigO(1)$ time. We then look up $(s, p) =
  \LTroot[x]$, and in $\bigO(1)$ time obtain $\Lroot(\Pat) = \Pat[1
  {+} s \dd 1 {+} s {+} p)$ and $\Lhead(\Pat) = s$. Next, we compute
  $\Lexp(\Pat)$ and $\Ltail(\Pat)$. For this, we first determine the
  length $\ell$ of the longest common prefix of $\Pat$ and $\Pat(p \dd
  m]$. Using the packed representation of $\Pat$, we can do this in
  $\bigO(1 + m / \log_{\sigma} n)$ time (see,
  e.g.,~\cite[Proposition~2.3]{sss}). Consequently, we obtain
  $\pend{\Pat} = 1 + p + \ell$, $\Lexp(\Pat) = \lfloor
  \tfrac{\pend{\Pat} - 1 - s}{p} \rfloor$, and $\Ltail(\Pat) =
  (\pend{\Pat} - 1 - s) \bmod p$.  Finally, to test if $\type(\Pat) =
  +1$, we check whether $\pend{\Pat} \leq m$, and if so, whether
  $\Pat[\pend{\Pat}] \succ \Pat[\pend{\Pat} - p]$.
\end{proof}

\subsubsection{Implementation of Queries}\label{sec:pm-periodic-pm}

\bfparagraph{Overview}

The query algorithm is derived in two steps.  First, we establish how,
given the structure from \cref{sec:pm-periodic-ds} and a packed
representation of any periodic pattern $\Pat \in \Alphabet^{m}$ to
compute $|\Occ(\Pat, \T)|$ in $\bigO(m / \log_{\sigma} n + \log \log
n)$ time. This culminates in \cref{pr:pm-occ}.  We then show how to
extend this algorithm to instead return $(\LB(\Pat, \T), \UB(\Pat,
\T))$ in the same time complexity, culminating in
\cref{pr:pm-periodic-range}. The reason for this two-step approach is
explained in \cref{rm:pm-periodic}.

\bfparagraph{Computing $|\Occ(\Pat, \T)|$}

Let $\Pat \in \Alphabet^{m}$ be a periodic pattern.  Denote $s =
\Lhead(\Pat)$ and $H = \Lroot(\Pat)$.  We define $\Occa(\Pat, \T) =
\{j \in \R_{s,H} \cap \Occ(\Pat, \T) : \Lexp(j) > \Lexp(\Pat)\}$ and
$\Occs(\Pat, \T) = \{j \in \R_{s,H} \cap \Occ(\Pat, \T) : \Lexp(j) =
\Lexp(\Pat)\}$.

\begin{lemma}\label{lm:occ}
  For any periodic pattern $\Pat \in \Alphabet^{m}$, the set
  $\Occ(\Pat, \T)$ is a disjoint union of $\Occa(\Pat, \T)$ and
  $\Occs(\Pat, \T)$.
\end{lemma}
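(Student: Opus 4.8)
The plan is to check the two defining properties of a disjoint union separately, noting first that everything but the ``$\supseteq$'' half of the set equality is immediate. Disjointness of $\Occa(\Pat,\T)$ and $\Occs(\Pat,\T)$ is clear: writing $s=\Lhead(\Pat)$ and $H=\Lroot(\Pat)$, both sets are subsets of $\R_{s,H}\cap\Occ(\Pat,\T)$, the former picking out positions $j$ with $\Lexp(j)>\Lexp(\Pat)$ and the latter those with $\Lexp(j)=\Lexp(\Pat)$, so the two membership conditions are mutually exclusive. The inclusion $\Occa(\Pat,\T)\cup\Occs(\Pat,\T)\subseteq\Occ(\Pat,\T)$ also holds by definition. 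Hence the only real work is to show that every $j\in\Occ(\Pat,\T)$ lies in $\R_{s,H}$ and satisfies $\Lexp(j)\ge\Lexp(\Pat)$.

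For the membership $j\in\R_{s,H}$ I would argue uniformly: since $j\in\Occ(\Pat,\T)$ we have $\lcp(\Pat,\T[j\dd n])\ge|\Pat|=m\ge 3\tau-1$ (the last inequality because $\Pat$ is periodic), so \cref{lm:pm-lce} applies and gives $j\in\R_{s,H}$. For the inequality on $\Lexp$, I would split on whether $\pend{\Pat}\le m$ or $\pend{\Pat}=m+1$; these are exhaustive because the $\lcp$ in the definition of $\pend{\Pat}$ is bounded by $m-|\Lroot(\Pat)|$, forcing $\pend{\Pat}\le m+1$. If $\pend{\Pat}\le m$, then $\Pat$ is a prefix of $\T[j\dd n]$ with $\lcp(\Pat,\T[j\dd n])\ge\pend{\Pat}$, so \cref{lm:pm-lce-3}\eqref{lm:pm-lce-3-it-2} applies directly and yields $\Lexp(j)=\Lexp(\Pat)$ (so in fact $\Occa(\Pat,\T)=\emptyset$ in this case and $j\in\Occs(\Pat,\T)$). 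If $\pend{\Pat}=m+1$, I will instead show $\rend{j}-j\ge m$: assuming for contradiction that $t':=\rend{j}-j<m=:t=\pend{\Pat}-1$, we have $t\ne t'$, so \cref{lm:pm-lce}\eqref{lm:pm-lce-it-4} gives $\lcp(\Pat,\T[j\dd n])=\min(t,t')=t'<m$, contradicting $j\in\Occ(\Pat,\T)$. Given $\rend{j}-j\ge m$, and using that $\pend{\Pat}=m+1$ means $m=\Lhead(\Pat)+\Lexp(\Pat)\,|H|+\Ltail(\Pat)=s+\Lexp(\Pat)\,|H|+\Ltail(\Pat)$ together with $\Lhead(j)=s$, we get $\Lexp(j)=\big\lfloor(\rend{j}-j-s)/|H|\big\rfloor\ge\big\lfloor\big(\Lexp(\Pat)\,|H|+\Ltail(\Pat)\big)/|H|\big\rfloor\ge\Lexp(\Pat)$, so $j\in\Occa(\Pat,\T)\cup\Occs(\Pat,\T)$.

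The main obstacle is the fully periodic case $\pend{\Pat}=m+1$, which falls outside the scope of \cref{lm:pm-lce-3} and so needs the separate argument above: intuitively, an occurrence of a pattern that is periodic all the way to its right end must sit inside a $\tau$-run of $\T$ that is at least as long as the pattern, and this is what $\rend{j}-j\ge m$ captures (one could alternatively derive it from the fact that $\Pat$ itself has period $|H|$, via \cref{lm:run-end}\eqref{lm:run-end-it-2}, but the route through \cref{lm:pm-lce}\eqref{lm:pm-lce-it-4} is more self-contained). Everything else is routine unwinding of the definitions of $\Occa$, $\Occs$, $\Lexp$, $\Lhead$, $\Ltail$ and $\pend{\cdot}$.
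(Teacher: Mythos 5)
Your proof is correct, and its skeleton matches the paper's: disjointness and the inclusion $\Occa(\Pat,\T)\cup\Occs(\Pat,\T)\subseteq\Occ(\Pat,\T)$ are immediate, $j\in\R_{s,H}$ follows from $\lcp(\Pat,\T[j\dd n])\ge 3\tau-1$ via \cref{lm:pm-lce}, and the heart of the matter is $\Lexp(j)\ge\Lexp(\Pat)$. Where you diverge is in how you obtain that inequality. The paper handles both situations uniformly in one line: since $\Pat$ occurs at $j$, for every shift $t$ we have $\LCE(j,j+t)\ge\lcp(\Pat[1\dd m],\Pat[1+t\dd m])$, so by \cref{lm:run-end}\eqref{lm:run-end-it-2} (with $t=p=|H|$) one gets $\rend{j}-j=p+\LCE(j,j+p)\ge p+\lcp(\Pat[1\dd m],\Pat[1+p\dd m])=\rend{\Pat}-1$, and the floor formula for $\Lexp$ finishes. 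You instead split on $\pend{\Pat}\le m$ versus $\pend{\Pat}=m+1$: in the first case you invoke \cref{lm:pm-lce-3}\eqref{lm:pm-lce-3-it-2} to get the stronger fact $\Lexp(j)=\Lexp(\Pat)$ (which is indeed how the paper's later \cref{lm:occ-a} refines the picture), and in the fully periodic case you derive $\rend{j}-j\ge m$ by contradiction through \cref{lm:pm-lce}\eqref{lm:pm-lce-it-4}. Both routes are sound; the paper's buys brevity and avoids any case analysis (and does not need \cref{lm:pm-lce-3} at all here), while yours makes the structural dichotomy explicit and records along the way the sharper information that $\Occa(\Pat,\T)=\emptyset$ when $\pend{\Pat}\le m$ — information the paper defers to \cref{lm:occ-a}. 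The alternative you mention parenthetically (bounding $\rend{j}-j$ via the period of $\Pat$ and \cref{lm:run-end}\eqref{lm:run-end-it-2}) is essentially the paper's argument.
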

\begin{proof}
  By definition, $\Occa(\Pat, \T) \cap \Occs(\Pat, \T) = \emptyset$
  and $\Occa(\Pat, \T) \cup \Occs(\Pat, \T) \subseteq \Occ(\Pat, \T)$.
  Thus, it suffices to show $\Occ(\Pat, \T) \subseteq \Occa(\Pat, \T)
  \cup \Occs(\Pat, \T)$.  Assume $j \in \Occ(\Pat, \T)$. By $m \geq
  3\tau - 1$, this implies $\lcp(\T[j \dd n], \Pat) \geq 3\tau -
  1$. Thus, by \cref{lm:pm-lce}, it holds $j \in \R_{s,H}$, where $s =
  \Lhead(\Pat)$ and $H = \Lroot(\Pat)$. To obtain $j \in \Occa(\Pat,
  \T) \cup \Occs(\Pat, \T)$ it remains to show $\Lexp(j) \geq
  \Lexp(\Pat)$.  First, note that for any $t \in [1 \dd m)$, $j \in
  \Occ(\Pat, \T)$ implies $\LCE(j, j + t) \geq \lcp(\Pat[1 \dd m],
  \Pat[1 + t \dd m])$.  In particular, letting $p = |H|$, by
  definition of $\rend{\Pat}$ and
  \cref{lm:run-end}\eqref{lm:run-end-it-2}, we have $\rend{j} - j = p
  + \LCE(j, j + p) \geq p + \lcp(\Pat[1 \dd m], \Pat[1 + p \dd m]) =
  \rend{\Pat} - 1$.  Consequently, $\Lexp(j) = \lfloor \tfrac{\rend{j}
  - j - s}{p} \rfloor \geq \lfloor \tfrac{\rend{\Pat} - 1 - s}{p}
  \rfloor = \Lexp(\Pat)$.
\end{proof}

By the above lemma, if $\Pat \in \Alphabet^{m}$ is periodic,
then $\Occ(\Pat, \T) \sub \R$.  We focus on computing sizes of sets
$\Occam(\Pat, \T) := \Occa(\Pat, \T) \cap \R^{-}$ and $\Occsm(\Pat,
\T) := \Occs(\Pat, \T) \cap \R^{-}$. The sizes of the sets
$\Occap(\Pat, \T) := \Occa(\Pat, \T) \cap \R^{+}$ and $\Occsp(\Pat,
\T) := \Occs(\Pat, \T) \cap \R^{+}$ are computed analogously
(see \cref{pr:pm-occ}).

We now describe the algorithm to compute $|\Occam(\Pat, \T)|$
for any periodic pattern $\Pat \in \Alphabet^{m}$.

\begin{lemma}\label{lm:occ-a}
  Assume that $\Pat \in \Alphabet^{m}$ is periodic. If
  $\pend{\Pat} \leq m$, then it holds $\Occam(\Pat, \T) =
  \emptyset$. Otherwise, it holds $\Occam(\Pat, \T) = \{j \in
  \R^{-}_{s,H} : \Lexp(j) > \Lexp(\Pat)\}$, where $s = \Lhead(\Pat)$
  and $H = \Lroot(\Pat)$.
\end{lemma}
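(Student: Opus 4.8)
The plan is to split the argument along the dichotomy in the statement, using the elementary observation that $\pend{\Pat} = 1 + |H| + \lcp(\Pat,\Pat[1+|H|\dd m]) \le m+1$ always holds (where $H = \Lroot(\Pat)$), so that the case ``$\pend{\Pat} > m$'' in fact means $\pend{\Pat} = m+1$, which by the definition of $\type$ is the same as $\type(\Pat) = -1$. For the first case, $\pend{\Pat} \le m$, I would simply invoke \cref{lm:pm-lce-3}\eqref{lm:pm-lce-3-it-2}: its hypothesis $\rend{\Pat}\le|\Pat|$ is precisely $\pend{\Pat}\le m$, and it gives $\Lexp(j) = \Lexp(\Pat)$ for every $j\in\Occ(\Pat,\T)$. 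Hence no occurrence satisfies the defining inequality $\Lexp(j) > \Lexp(\Pat)$ of $\Occa(\Pat,\T)$, so $\Occa(\Pat,\T) = \emptyset$, and in particular $\Occam(\Pat,\T) = \Occa(\Pat,\T)\cap\R^{-} = \emptyset$.

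For the second case, $\pend{\Pat} = m+1$, I would establish the equality $\Occam(\Pat,\T) = \{j\in\R^{-}_{s,H} : \Lexp(j) > \Lexp(\Pat)\}$ (with $s = \Lhead(\Pat)$, $H = \Lroot(\Pat)$) by two inclusions. The inclusion ``$\subseteq$'' is immediate from unfolding the definitions of $\Occam$ and $\Occa$, using $\R_{s,H}\cap\R^{-} = \R^{-}_{s,H}$. For ``$\supseteq$'', fix $j\in\R^{-}_{s,H}$ with $\Lexp(j) > \Lexp(\Pat)$; since $j\in\R^{-}$ it suffices to prove $j\in\Occ(\Pat,\T)$. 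As $j\in\R_{s,H}$, \cref{lm:pm-lce} gives $\lcp(\Pat,\T[j\dd n]) \ge 3\tau-1$; set $t = \pend{\Pat}-1 = m$ and $t' = \rend{j}-j$. From $m - s = \Lexp(\Pat)\cdot|H| + \Ltail(\Pat)$ with $0\le\Ltail(\Pat) < |H|$, together with $\Lhead(j) = s$, $\rend{j}-j = s + \Lexp(j)|H| + \Ltail(j)$, and $\Lexp(j)\ge\Lexp(\Pat)+1$, I obtain
\[
  t = m = s + \Lexp(\Pat)|H| + \Ltail(\Pat) < s + (\Lexp(\Pat)+1)|H| \le s + \Lexp(j)|H| \le \rend{j}-j = t',
\]
so $t < t'$. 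Then the ``moreover'' part of \cref{lm:pm-lce} yields $\lcp(\Pat,\T[j\dd n]) \ge \min(t,t') = m = |\Pat|$, whence $\Pat$ is a prefix of $\T[j\dd n]$, i.e.\ $j\in\Occ(\Pat,\T)$; combined with $j\in\R^{-}_{s,H}$ and $\Lexp(j) > \Lexp(\Pat)$ this gives $j\in\Occam(\Pat,\T)$.

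The routine ingredients are the definitional unfolding and the bound $\pend{\Pat}\le m+1$. The one spot that needs care — and which I would flag as the main obstacle — is converting the exponent inequality $\Lexp(j) > \Lexp(\Pat)$ into the run-length inequality $t < t'$: it relies on tracking the tails $\Ltail(\Pat)$ and $\Ltail(j)$ and on the equality $\Lhead(j) = \Lhead(\Pat) = s$ (this is where one genuinely uses $j\in\R_{s,H}$ rather than merely $j\in\R$). Once $t < t'$ is in hand, \cref{lm:pm-lce} carries the rest, and no new periodicity combinatorics beyond \cref{lm:pm-lce,lm:pm-lce-3} is required.
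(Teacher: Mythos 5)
Your proof is correct, but it discharges both halves differently from the paper. For the case $\pend{\Pat}\le m$, the paper gives a self-contained contradiction: taking $j\in\Occam(\Pat,\T)$, the fact that $\Lexp(j)>\Lexp(\Pat)$ forces $\T[j+\pend{\Pat}-1]=\T[j+\pend{\Pat}-1-|H|]=\Pat[\pend{\Pat}-|H|]\neq\Pat[\pend{\Pat}]$, so $j\notin\Occ(\Pat,\T)$; you instead get the emptiness of $\Occa(\Pat,\T)$ in one line from \cref{lm:pm-lce-3}\eqref{lm:pm-lce-3-it-2}, which asserts $\Lexp(j)=\Lexp(\Pat)$ for every occurrence. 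Since that lemma precedes \cref{lm:occ-a} and does not depend on it, this is legitimate and arguably cleaner. For the nontrivial inclusion in the case $\pend{\Pat}=m+1$, the paper argues directly with strings: $\Pat=H'H^{k}H''$ is a prefix of $H'H^{k+1}$, which by $\Lexp(j)\ge k+1$ and $j\in\R_{s,H}$ is a prefix of $\T[j\dd n]$; you instead route through the ``moreover'' clause of \cref{lm:pm-lce}, deriving $t=m<t'=\rend{j}-j$ from the arithmetic $m=s+\Lexp(\Pat)|H|+\Ltail(\Pat)<s+(\Lexp(\Pat)+1)|H|\le s+\Lexp(j)|H|\le t'$ and concluding $\lcp(\Pat,\T[j\dd n])\ge\min(t,t')=m$. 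Your arithmetic is sound (it uses exactly $\Lhead(j)=s$, $\Lroot(j)=H$, $\Ltail(\Pat)<|H|$, $\Ltail(j)\ge 0$), so both routes work; yours trades the paper's elementary prefix chain for reuse of the already-proved lcp machinery. One small inaccuracy in your preamble: $\pend{\Pat}=m+1$ implies $\type(\Pat)=-1$ but is not equivalent to it ($\type(\Pat)=-1$ also covers $\pend{\Pat}\le m$ with $\Pat[\pend{\Pat}]\prec\Pat[\pend{\Pat}-|H|]$); since you never use this remark, it does not affect the argument.
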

\begin{proof}

  Let $\pend{\Pat} \leq m$. Denote $k = \Lexp(\Pat)$.  Suppose
  $\Occam(\Pat, \T) \neq \emptyset$, and let $j \in \Occam(\Pat, \T)$.
  By definition, $s + k|H| \leq \pend{\Pat} - 1 < s + (k+1)|H|$ and
  $\Pat[\pend{\Pat}] \neq \Pat[\pend{\Pat} - |H|]$.  On the other
  hand, by $j \in \R_{s,H}$ and $\Lexp(j) > k$, the string $H'H^{k+1}$
  (where $H'$ is a length-$s$ suffix of $H$) is a prefix of $\T[j \dd
  n]$.  Thus, we have $\T[j + \pend{\Pat} - 1] = \T[j + \pend{\Pat} -
  1 - |H|] = \Pat[\pend{\Pat} - |H|] \neq \Pat[\pend{\Pat}]$.  This
  implies $j \not\in \Occ(\Pat, \T)$, contradicting $j \in
  \Occam(\Pat, \T)$. Thus, $\Occam(\Pat, \T) = \emptyset$.

  Let $\pend{\Pat} > m$. The inclusion $\Occam(\Pat, \T) \sub \{j \in
  \R^{-}_{s,H} : \Lexp(j) > \Lexp(\Pat)\}$ follows by definition. To
  show the opposite inclusion, let $j \in \R^{-}_{s,H}$ be such that
  $\Lexp(j) > \Lexp(\Pat)$. Denote $k = \Lexp(\Pat)$.  Then, $\Pat =
  H'H^kH''$, where $|H'| = s$, and $H'$ (resp.\ $H''$) is a suffix
  (resp.\ prefix) of $H$. Thus, $\Pat$ is a prefix of $H'H^{k+1}$. The
  latter string, on the other hand, is by $\Lexp(j) \geq k + 1$ and $j
  \in \R_{s,H}$, a prefix of $\T[j \dd n]$.  Thus, $j \in \Occ(\Pat,
  \T)$. By $j \in \R^{-}_{s,H}$ and $\Lexp(j) > \Lexp(\Pat)$, we
  therefore also have $j \in \Occam(\Pat, \T)$.
\end{proof}

\begin{proposition}\label{pr:pm-occ-a}
  Let $\Pat \in \Alphabet^m$ be a periodic pattern. Given the data
  structure from \cref{sec:pm-periodic-ds} and the packed
  representation of $\Pat$, we can compute $|\Occam(\Pat, \T)|$ in
  $\bigO(1 + m / \log_{\sigma} n)$ time.
\end{proposition}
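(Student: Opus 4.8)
The plan is to read off $|\Occam(\Pat,\T)|$ from the block structure of $\SA$ encoded by $\LTrange$ and $\BVexp$, essentially reusing the mechanism behind the computation of $\deltaa$ in \cref{pr:isa-delta-a}, but with $\Pat$ playing the role of a ``virtual periodic position''. First I would call \cref{pr:pm-root} to obtain, in $\bigO(1 + m/\log_\sigma n)$ time, the values $H = \Lroot(\Pat)$, $s = \Lhead(\Pat)$, $k = \Lexp(\Pat)$, together with $\pend{\Pat}$; extracting the prefix $X := \Pat[1 \dd 3\tau-1]$ from the packed representation and forming $x := \Int(X)$ costs only $\bigO(1)$ since $3\tau-1 = \bigO(\log_\sigma n)$. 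If $\pend{\Pat} \le m$, then \cref{lm:occ-a} gives $\Occam(\Pat,\T) = \emptyset$ and I return $0$. Otherwise \cref{lm:occ-a} reduces the task to counting $\{j \in \R^{-}_{s,H} : \Lexp(j) > k\}$.

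For the main case, I would look up $(b_X, e_X) = \LTrange[x]$. Since $X = \Pref_{3\tau-1}(s,H)$ and $|X| = 3\tau-1$, by \cref{lm:pm-lce} (together with the identification of $\Occ(X,\T)$ with $\{\SA[i] : i \in (b_X \dd e_X]\}$) the block $\SA(b_X \dd e_X]$ is exactly the set of positions of $\R_{s,H}$; in particular $b_X = e_X$ signals $\R_{s,H} = \emptyset$ and hence answer $0$. By \cref{lm:lce}\eqref{lm:lce-it-1} the positions of $\R^{-}_{s,H}$ occupy an initial segment of this block, and by \cref{lm:lce}\eqref{lm:lce-it-2} they are ordered by non-decreasing $\rend{j}-j$, hence by non-decreasing $\Lexp(j)$; moreover $\Lexp(j) > k$ is equivalent to $\rend{j}-j \ge s + (k{+}1)|H|$ while $\Lexp(j) \le k$ is equivalent to $\rend{j}-j < s + (k{+}1)|H|$, so the positions of $\R^{-}_{s,H}$ with $\Lexp(j)>k$ form a \emph{suffix} of that segment. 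Setting $r_0 := \rank{\BVexp}{1}{b_X}$ and $r_1 := \rank{\BVexp}{1}{e_X}$: if $r_1 = r_0$ then $\R^{-}_{s,H} = \emptyset$ and the answer is $0$; otherwise, exactly as in \cref{pr:isa-delta-a}, the $g := r_1 - r_0$ one-bits of $\BVexp$ inside the block mark the ends of the consecutive $\Lexp$-groups of $\R^{-}_{s,H}$, whose $\Lexp$-values form the interval $[k_{\min} \dd k_{\min}+g-1]$ with $k_{\min} := \LTminexp[x]$, the prefix of $\R^{-}_{s,H}$ with $\Lexp(j) \le k'$ has size $\select{\BVexp}{1}{r_0 + (k'-k_{\min})+1} - b_X$ for $k' \in [k_{\min} \dd k_{\min}+g-1]$, and $|\R^{-}_{s,H}| = \select{\BVexp}{1}{r_1} - b_X$. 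A constant-time case split on whether $k < k_{\min}$ (answer $\select{\BVexp}{1}{r_1} - b_X$), $k \ge k_{\min}+g$ (answer $0$), or $k_{\min} \le k \le k_{\min}+g-1$ (answer $\select{\BVexp}{1}{r_1} - \select{\BVexp}{1}{r_0 + (k-k_{\min})+1}$) then yields $|\Occam(\Pat,\T)|$. The case $\type(\Pat) = +1$ is handled by the symmetric part of the structure, with the block order reversed according to \cref{lm:lce}\eqref{lm:lce-it-3}; the total running time is $\bigO(1 + m/\log_\sigma n)$, dominated by the call to \cref{pr:pm-root}.

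The step I expect to be the main obstacle is not the timing — everything after \cref{pr:pm-root} is a constant number of table lookups and rank/select operations — but justifying correctness at the boundary cases: detecting $\R^{-}_{s,H} = \emptyset$ (equivalently, that $X$ has no type-$(-1)$ occurrence), and, crucially, handling the situation where $\Lexp(\Pat)$ falls outside the interval $[k_{\min} \dd k_{\min}+g-1]$ of $\Lexp$-values realised in $\R^{-}_{s,H}$, which can occur because $\Pat$ need not be aligned with any run of $\T$. This comes down to carefully applying the structural facts already established in \cref{lm:lce,lm:pm-lce,lm:occ-a}, so the argument should be mechanical.
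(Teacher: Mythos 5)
Your proposal is correct and follows essentially the same route as the paper's proof: \cref{pr:pm-root} to obtain $s$, $H$, $k$ in $\bigO(1+m/\log_\sigma n)$ time, the $\pend{\Pat}\le m$ shortcut via \cref{lm:occ-a}, the $\LTrange$ lookup for the block of $X=\Pat[1\dd 3\tau-1]$, and the same rank/select arithmetic on $\BVexp$ with $\LTminexp$ (your case split is equivalent to the paper's, since your middle formula also correctly yields $0$ at $k=k_{\min}+g-1$). One minor remark: the closing sentence about $\type(\Pat)=+1$ being handled by the symmetric structure is superfluous and slightly off-target, because $\Occam(\Pat,\T)$ always concerns $\R^{-}$ regardless of $\type(\Pat)$, and $\type(\Pat)=+1$ forces $\pend{\Pat}\le m$, so that case is already dispatched by your first check.
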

\begin{proof}
  First, using \cref{pr:pm-root}, we compute $s = \Lhead(\Pat)$, $H =
  \Lroot(\Pat)$, $k = \Lexp(\Pat)$, and $t = \Ltail(\Pat)$ in $\bigO(1
  + m / \log_{\sigma} n)$ time. This lets us determine $\pend{\Pat} =
  1 {+} s {+} k|H| {+} t$. If $\pend{\Pat} \leq m$, then by
  \cref{lm:occ-a}, we return $|\Occam(\Pat, \T)| = 0$.  Otherwise,
  using the array $\LTrange$, we compute in $\bigO(1)$ time a pair of
  integers $b, e$ such that $\SA(b \dd e]$ contains the starting
  positions of all suffixes of $\T$ prefixed with $X = \Pat[1 \dd
  3\tau {-} 1]$. Equivalently, by \cref{lm:lce} (see also the
  implementation of queries in \cref{pr:isa-delta-a}), $\SA(b \dd e]$
  contains all positions from $\R_{s,H}$. If $b = e$, then it holds
  $\R_{s,H} = \emptyset$, and thus we return $|\Occam(\Pat, \T)| = 0$.
  Let us thus assume $b < e$. Our goal now is to determine the
  subrange of $\SA(b \dd e]$ containing all positions in $\{j \in
  \R^{-}_{s,H} : \Lexp(j) > \Lexp(\Pat)\}$ (these positions form a
  subrange by \cref{lm:lce}). For that, we first compute $d =
  \rank{\BVexp}{1}{e} - \rank{\BVexp}{1}{b}$ in $\bigO(1)$ time.  If
  $d = 0$, then $\R^{-}_{s,H} = \emptyset$, and hence we return
  $|\Occam(\Pat, \T)| = 0$. Otherwise, we retrieve $k_{\min} =
  \LTminexp[\Int(X)]$ in $\bigO(1)$ time. Then, letting $k_{\max} =
  k_{\min} + d - 1$, we have $k_{\min} \leq k_{\max}$ and $[k_{\min}
  \dd k_{\max}] = \{\Lexp(j) : j \in \R^{-}_{s,H}\}$ (see the proof
  of \cref{pr:isa-delta-a}). If $k \geq k_{\max}$, by \cref{lm:occ-a},
  we return $|\Occam(\Pat, \T)| = 0$. Otherwise, we have two cases.
  Let $p = \rank{\BVexp}{1}{b}$. If $k < k_{\min}$, then we return
  $|\Occam(\Pat, \T)| = |\R^{-}_{s,H}| = \select{\BVexp}{1}{p + d} -
  b$.  Otherwise (i.e., $k \geq k_{\min}$), we return $|\Occam(\Pat,
  \T)| = \select{\BVexp}{1}{p + d} - \select{\BVexp}{1}{p + k -
    k_{\min} + 1}$. In total, the query takes $\bigO(1 + m /
  \log_{\sigma} n)$ time.
\end{proof}

Next, we now describe the algorithm compute $|\Occsm(\Pat, \T)|$ for
any periodic pattern $\Pat \in \Alphabet^{m}$.

\begin{lemma}\label{lm:occ-s}
  Let $\Pat \in \Alphabet^m$ be a periodic pattern.  Denote $H =
  \Lroot(\Pat)$.  Assume $i \in \R^{-}_{H}$ and let $\ell = \rend{i} -
  i - 3\tau + 2$. Then, $|\Occsm(\Pat, \T) \cap [i \dd i + \ell)| \leq
  1$. Moreover, $|\Occsm(\Pat, \T) \cap [i \dd i + \ell)| = 1$ holds
  if and only if $\Pat[\pendfull{\Pat} \dd m]$ is a prefix of
  $\T[\rendfull{i} \dd n]$ and $\rendfull{i} - i \geq \pendfull{\Pat}
  - 1$.
\end{lemma}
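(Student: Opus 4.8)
The plan is to imitate the proof of \cref{lm:isa-delta-s}, with the suffix $\Pat[\pendfull{\Pat}\dd m]$ of the periodic pattern here playing the role that the tail $\T[\rendfull{j}\dd n]$ of the query suffix plays there. I would first record the structural facts about the block $[i\dd i+\ell)$. Since $\ell=\rend{i}-i-3\tau+2$ we have $[i\dd i+\ell)=[i\dd \rend{i}-3\tau+2)$, so by the definition of $\rend{\cdot}$ and by \cref{lm:run-end,lm:R-block} one gets $[i\dd i+\ell)\subseteq\R^{-}_{H}$, with $\rend{i+\delta}=\rend{i}$, $\Ltail(i+\delta)=\Ltail(i)$, and hence $\rendfull{i+\delta}=\rendfull{i}$, for every $\delta\in[0\dd\ell)$; in particular $\rendfull{i+\delta}-(i+\delta)=(\rendfull{i}-i)-\delta$.

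Next I would prove the counting bound and the ``forward'' implication. Any element of $\Occsm(\Pat,\T)\cap[i\dd i+\ell)$ is of the form $i+\delta$ with $\delta\in[0\dd\ell)$ and lies in $\R^{-}_{s,H}$ with $\Lexp(i+\delta)=\Lexp(\Pat)$, where $s=\Lhead(\Pat)$; consequently $\rendfull{i+\delta}-(i+\delta)=\Lhead(i+\delta)+\Lexp(i+\delta)\cdot|H|=s+\Lexp(\Pat)\cdot|H|=\pendfull{\Pat}-1$. Combined with the previous paragraph this forces $\delta=(\rendfull{i}-i)-(\pendfull{\Pat}-1)$, a single value, which both gives $|\Occsm(\Pat,\T)\cap[i\dd i+\ell)|\le1$ and, together with $\delta\ge0$, yields $\rendfull{i}-i\ge\pendfull{\Pat}-1$. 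For the other half of the forward implication, $i+\delta\in\Occ(\Pat,\T)$ means $\Pat$ is a prefix of $\T[i+\delta\dd n]$, so its suffix $\Pat[\pendfull{\Pat}\dd m]$ occurs in $\T$ starting at position $(i+\delta)+(\pendfull{\Pat}-1)=\rendfull{i+\delta}=\rendfull{i}$, and since $(i+\delta)+m\le n+1$ it is a prefix of $\T[\rendfull{i}\dd n]$.

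For the converse I would assume that $\Pat[\pendfull{\Pat}\dd m]$ is a prefix of $\T[\rendfull{i}\dd n]$ and that $\rendfull{i}-i\ge\pendfull{\Pat}-1$, set $\delta=(\rendfull{i}-i)-(\pendfull{\Pat}-1)\ge0$, and argue $\delta<\ell$ and $i+\delta\in\Occsm(\Pat,\T)$. The step I expect to be the crux is $\delta<\ell$, which reduces to the sub-claim $\rend{i}-\rendfull{i}\ge\pend{\Pat}-\pendfull{\Pat}$ (i.e.\ $\Ltail(i)\ge\Ltail(\Pat)$): indeed, $\rend{i}-(i+\delta)=(\rendfull{i}-(i+\delta))+(\rend{i}-\rendfull{i})=(\pendfull{\Pat}-1)+(\rend{i}-\rendfull{i})\ge\pend{\Pat}-1\ge3\tau-1$, whence $\delta\le\rend{i}-i-3\tau+1=\ell-1<\ell$. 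To establish the sub-claim, suppose $\Ltail(\Pat)>\Ltail(i)$; by the L-decomposition of $\Pat[1\dd\pend{\Pat})$ the first $\Ltail(\Pat)$ characters of $\Pat[\pendfull{\Pat}\dd m]$ form the length-$\Ltail(\Pat)$ prefix of $H$, so its $(\Ltail(i){+}1)$-st character is $H[\Ltail(i){+}1]$, whereas the $(\Ltail(i){+}1)$-st character of $\T[\rendfull{i}\dd n]$ (if it exists) is $\T[\rend{i}]$, which is undefined when $\rend{i}=n+1$ and otherwise satisfies $\T[\rend{i}]\prec\T[\rend{i}-|H|]=H[\Ltail(i){+}1]$ because $i\in\R^{-}$ — contradicting the assumed prefix relation in every case, so $\Ltail(i)\ge\Ltail(\Pat)$.

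Finally, to conclude $i+\delta\in\Occsm(\Pat,\T)$: since $\delta\in[0\dd\ell)$ we have $\rendfull{i+\delta}=\rendfull{i}$ and $\rendfull{i+\delta}-(i+\delta)=\pendfull{\Pat}-1=\Lhead(i+\delta)+\Lexp(i+\delta)\cdot|H|$; as $0\le\Lhead(i+\delta),\Lhead(\Pat)<|H|$, this forces $\Lhead(i+\delta)=\Lhead(\Pat)=s$ and $\Lexp(i+\delta)=\Lexp(\Pat)$, so $\T[i+\delta\dd\rendfull{i+\delta})=H'H^{\Lexp(\Pat)}=\Pat[1\dd\pendfull{\Pat})$ (where $H'$ is the length-$s$ suffix of $H$). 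Concatenating this with the assumed prefix relation at position $\rendfull{i+\delta}=\rendfull{i}$ shows $\Pat$ is a prefix of $\T[i+\delta\dd n]$, i.e.\ $i+\delta\in\Occ(\Pat,\T)$; combined with $i+\delta\in\R^{-}_{s,H}$ and $\Lexp(i+\delta)=\Lexp(\Pat)$ this gives $i+\delta\in\Occsm(\Pat,\T)$. Note that this argument nowhere uses $\type(\Pat)=-1$, so the equivalence holds for every periodic $\Pat$; when $\type(\Pat)=+1$ both sides turn out to be false (no $\R^{-}$-occurrence of such a $\Pat$ can have $\Lexp$ equal to $\Lexp(\Pat)$), which is consistent with, and in fact follows from, the displayed argument.
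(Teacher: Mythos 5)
Your proof is correct and follows essentially the same route as the paper's: the block structure from \cref{lm:R-block} forcing a unique offset $\delta=(\rendfull{i}-i)-(\pendfull{\Pat}-1)$, the tail comparison $\Ltail(i)\ge\Ltail(\Pat)$ via the character at position $\rend{i}$ (using $\rend{i}=n+1$ or the mismatch there) to get $\delta<\ell$, and the reassembly $\T[i+\delta\dd\rendfull{i})=\Pat[1\dd\pendfull{\Pat})$ to conclude membership in $\Occsm(\Pat,\T)$. Only your closing aside slightly overreaches: that both sides are false when $\type(\Pat)=+1$ does not literally ``follow from the displayed argument'' but needs a short extra step (e.g., \cref{lm:pm-lce-3}\eqref{lm:pm-lce-3-it-2} forcing $\type(j)=+1$ for every occurrence); this does not affect the lemma itself.
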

\begin{proof}

  As observed in the proof of \cref{lm:isa-delta-s}, $[i \dd i + \ell)
  \sub \R^{-}_{H}$, and for any $\delta \in [0 \dd \ell)$, it holds
  $\rend{i + \delta} = \rend{i}$, $\Ltail(i + \delta) = \Ltail(i)$,
  and consequently, $\rendfull{i + \delta} = \rendfull{i}$ and
  $\rendfull{i + \delta} - (i + \delta) = \rendfull{i} - i - \delta$.
  Moreover, by definition of $\Occsm(\Pat, \T)$, letting $\Lhead(\Pat)
  = s$, for any $j \in \Occsm(\Pat, \T)$ it holds $\rendfull{j} - j =
  s + \Lexp(j) \cdot |H| = s + \Lexp(\Pat) \cdot |H| = \pendfull{\Pat}
  - 1$.  Thus, $i + \delta \in \Occsm(\Pat, \T)$ implies $\rendfull{i
  + \delta} - (i + \delta) = \rendfull{i} - (i + \delta) =
  \pendfull{\Pat} - 1$, or equivalently, $\delta = (\rendfull{i} - i)
  - (\pendfull{\Pat} - 1)$, and therefore, $|\Occsm(\Pat, \T) \cap [i
  \dd i + \ell)| \leq 1$.

  For the second part, assume first that $i + \delta \in \Occsm(\Pat,
  \T)$ holds for some $\delta \in [0 \dd \ell)$.  Then, as noted
  above, we have $\pendfull{\Pat} - 1 = \rendfull{i} - (i + \delta)
  \leq \rendfull{i} - i$. Moreover, letting $\Lhead(\Pat) = s$, by
  definition of $\Occsm(\Pat, \T)$, we have $i + \delta \in
  \R^{-}_{s,H}$, $\Lexp(\Pat) = \Lexp(i + \delta)$, and $\T[i {+}
  \delta \dd i {+} \delta + m) = \Pat$.  Therefore, we obtain that
  $\T[i {+} \delta \dd \rendfull{i + \delta}) = \T[i {+} \delta \dd
  \rendfull{i}) = \Pat[1 \dd \pendfull{\Pat}) = H'H^k$ (where $k =
  \Lexp(\Pat)$ and $H'$ is the length-$s$ suffix of $H$), and
  consequently, $\Pat[\pendfull{\Pat} \dd m]$ is a prefix of
  $\T[\rendfull{i} \dd n]$.  To show the converse implication, assume
  that $\Pat[\pendfull{\Pat} \dd m]$ is a prefix of $\T[\rendfull{i}
  \dd n]$ and $\rendfull{i} - i \geq \pendfull{\Pat} - 1$. Let $\delta
  = (\rendfull{i} - i) - (\pendfull{\Pat} - 1)$.  We will prove that
  $\delta \in [0 \dd \ell)$ and $i + \delta \in \Occsm(\Pat,
  \T)$. Clearly $\delta \geq 0$.  To show $\delta < \ell$, we first
  prove $\rend{i} - \rendfull{i} \geq \pend{\Pat} - \pendfull{\Pat}$.
  Suppose that $q = \rend{i} - \rendfull{i} < \pend{\Pat} -
  \pendfull{\Pat}$.  By $i \in \R^{-}_{H}$, we then either have
  $\rendfull{i} + q = n + 1$, or $\rendfull{i} + q \leq n$ and
  $\T[\rendfull{i} + q] \neq \T[\rendfull{i} + q - |H|] =
  \Pat[\pendfull{\Pat} + q - |H|] = \Pat[\pendfull{\Pat} + q]$, both
  of which contradict that $\Pat[\pendfull{\Pat} \dd m]$ is a prefix
  of $\T[\rendfull{i} \dd n]$. Thus, $\rend{i} - \rendfull{i} \geq
  \pend{\Pat} - \pendfull{\Pat}$.  This implies, $\rend{i} - (i +
  \delta) = (\rendfull{i} - (i + \delta)) + (\rend{i} - \rendfull{i})
  = (\pendfull{\Pat} - 1) + (\rend{i} - \rendfull{i}) \geq
  (\pendfull{\Pat} - 1) + (\pend{\Pat} - \pendfull{\Pat}) =
  \pend{\Pat} - 1 \geq 3\tau - 1$, or equivalently $\delta \leq
  \rend{i} - i - 3\tau + 1 < \ell$.  To show $i + \delta \in
  \Occsm(\Pat, \T)$, it remains to observe that $\rendfull{i + \delta}
  - (i + \delta) = \rendfull{i} - (i + \delta) = \pendfull{\Pat} - 1$
  and $\Lroot(i + \delta) = \Lroot(i) = H = \Lroot(\Pat)$ (following
  from \cref{lm:R-block}) imply $\T[i + \delta \dd \rendfull{i}) =
  \Pat[1 \dd \pendfull{\Pat})$.  This in particular gives, letting
  $\Lhead(\Pat) = s$, that $i + \delta \in \R_{s,H}$ and $\Lexp(i +
  \delta) = \Lexp(\Pat)$.  Moreover, combining it with
  $\Pat[\pendfull{\Pat} \dd m]$ being a prefix of $\T[\rendfull{i} \dd
  n]$ yields $\T[i + \delta \dd i + \delta + m) = \Pat$.  Finally, by
  \cref{lm:R-block}, $\type(i + \delta) = \type(i) = -1$. Therefore,
  $i + \delta \in \Occsm(\Pat, \T)$.
\end{proof}

\begin{proposition}\label{pr:pm-occ-s}
  Let $\Pat \in \Alphabet^m$ be a periodic pattern. Given the data
  structure from \cref{sec:pm-periodic-ds} and the packed
  representation of $\Pat$, we can compute $|\Occsm(\Pat, \T)|$ in
  $\bigO(m / \log_{\sigma} n + \log \log n)$ time.
\end{proposition}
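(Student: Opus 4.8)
The plan is to reduce the computation of $|\Occsm(\Pat, \T)|$ to one prefix search in the trie of \cref{pr:meta-trie} built over $\ARRzlex$, followed by a single range-counting query on $\ARRnontail$, with \cref{lm:occ-s} as the bridge. First I would invoke \cref{pr:pm-root} to obtain, in $\bigO(1 + m/\log_{\sigma} n)$ time, $H = \Lroot(\Pat)$, $s = \Lhead(\Pat)$, $k = \Lexp(\Pat)$, and $\Ltail(\Pat)$; these determine $\pend{\Pat}$ and $\pendfull{\Pat} = \pend{\Pat} - \Ltail(\Pat)$, with $\pendfull{\Pat} - 1 = s + k|H|$. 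By \cref{lm:occ} and \cref{lm:pm-lce}, $\Occsm(\Pat, \T) \sub \R^{-}_{s,H}$, and by \cref{lm:R-block} the intervals $[i \dd \rend{i} - 3\tau + 2)$ with $i \in \R'^{-}_H$ partition $\R^{-}_H \supseteq \R^{-}_{s,H}$, each contributing at most one position to $\Occsm(\Pat,\T)$. Hence \cref{lm:occ-s} yields
\[
  |\Occsm(\Pat, \T)| = |\{i \in \R'^{-}_H : \rendfull{i} - i \geq \pendfull{\Pat} - 1 \text{ and } \Pat[\pendfull{\Pat} \dd m] \text{ is a prefix of } \T[\rendfull{i} \dd n]\}|.
\]
Since $(\rlexm_i)_{i \in [1 \dd q]}$, $\ARRnontail$, and $\ARRzlex$ all use the same index set ordered by the $\rlexm$-order, it remains to (a) locate the range $(\bpre \dd \epre]$ of indices $i$ with $\rlexm_i \in \R'^{-}_H$ and $\Pat[\pendfull{\Pat} \dd m]$ a prefix of $\T[\rendfull{\rlexm_i} \dd n]$, and then (b) count those with $\ARRnontail[i] = \rendfull{\rlexm_i} - \rlexm_i \geq \pendfull{\Pat} - 1$.

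For step (a) I would form, in $\bigO(1 + m/\log_{\sigma} n)$ time, the compound pattern $Q := \Pow(H) \cdot \Pat[\pendfull{\Pat} \dd m]$, where $\Pow(H)$ is a string of $\bigO(\tau) = \bigO(\log_{\sigma} n)$ characters obtained in $\bigO(1)$ time (from $H$ directly, or via $\LTpref$), and $\Pat[\pendfull{\Pat} \dd m]$ is read off the packed representation of $\Pat$ by a shift; note $|Q| = \bigO(m)$ because $\Pat$ being periodic forces $m \geq 3\tau - 1$. Querying the structure of \cref{pr:meta-trie} for $\ARRzlex$ with $Q$ returns the pair $(\bpre, \epre)$ with $(\bpre \dd \epre] = \{i \in [1 \dd q] : Q \text{ is a prefix of } \T[\ARRzlex[i] \dd n]\}$ in $\bigO(|Q|/\log_{\sigma} n + \log \log n) = \bigO(m/\log_{\sigma} n + \log \log n)$ time. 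Correctness here rests on the two properties of $\Pow$ recorded in \cref{sec:pm-periodic-ds}: since $\T[\ARRzlex[i] \dd n] = \Pow(\Lroot(\rlexm_i)) \cdot \T[\rendfull{\rlexm_i} \dd n]$ and $\{\Pow(H') : H' \in \Lroots\}$ is prefix-free, $Q$ is a prefix of $\T[\ARRzlex[i] \dd n]$ if and only if $\Lroot(\rlexm_i) = H$ and $\Pat[\pendfull{\Pat} \dd m]$ is a prefix of $\T[\rendfull{\rlexm_i} \dd n]$, which is exactly the condition of step (a). If $\Pat[\pendfull{\Pat} \dd m] = \emptystring$ (i.e.\ $\pend{\Pat} = m + 1$ and $\Ltail(\Pat) = 0$), the query degenerates to finding all $i$ with $\Lroot(\rlexm_i) = H$, which is still correct; an empty range $(\bpre \dd \epre]$ simply gives the answer $0$.

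For step (b) I would apply \cref{pr:range-queries} to $\ARRnontail$ and return
\[
  |\Occsm(\Pat, \T)| = \rcount{\ARRnontail}{\pendfull{\Pat} - 1}{\epre} - \rcount{\ARRnontail}{\pendfull{\Pat} - 1}{\bpre},
\]
which takes $\bigO(\log \log n)$ time. Summing the costs of \cref{pr:pm-root}, the \cref{pr:meta-trie} query, and the range count, the whole procedure runs in $\bigO(m/\log_{\sigma} n + \log \log n)$ time, as claimed. (The symmetric value $|\Occsp(\Pat, \T)|$ is computed by the same algorithm on the $\R^{+}$-counterparts of the components, adapted via \cref{lm:pm-lce}, exactly as for $|\Occap(\Pat, \T)|$ in \cref{pr:pm-occ-a}.)

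I expect the main obstacle to be the correctness argument in step (a): carefully verifying, from the definitions of $\Pow$, $\ARRzlex$, and the L-decomposition, that $\T[\ARRzlex[i] \dd n]$ really factors as $\Pow(\Lroot(\rlexm_i)) \cdot \T[\rendfull{\rlexm_i} \dd n]$ — which needs $\rendfull{\rlexm_i} - \rlexm_i \geq |\Pow(\Lroot(\rlexm_i))|$, a consequence of $\rend{\rlexm_i} - \rlexm_i \geq 3\tau - 1$, $|\Lroot(\rlexm_i)| \leq \tfrac13\tau$, and $|\Pow(H')|$ being a multiple of $|H'|$ — and that combining this factorization with prefix-freeness of $\{\Pow(\cdot)\}$ converts the \cref{pr:meta-trie} query into precisely the two conditions of \cref{lm:occ-s}; the degenerate case $\Pat[\pendfull{\Pat} \dd m] = \emptystring$ also warrants a brief separate check. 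The remaining ingredients — extracting the periodicity parameters, packing $Q$, and the range count — are routine given \cref{pr:pm-root,pr:meta-trie,pr:range-queries} and \cref{lm:occ-s}.
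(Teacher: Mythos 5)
Your proposal is correct and follows essentially the same route as the paper: compute the L-decomposition parameters with \cref{pr:pm-root}, query the \cref{pr:meta-trie} structure over $\ARRzlex$ with $\Pow(H)\cdot\Pat[\pendfull{\Pat}\dd m]$, and finish with a range-counting query on $\ARRnontail$ justified by \cref{lm:occ-s}. The only cosmetic difference is that the paper obtains the query string directly as the substring $\Pat[\pendfull{\Pat}-|\Pow(H)|\dd m]$ of $\Pat$ (using that these $|\Pow(H)|$ preceding characters equal $\Pow(H)$), whereas you build the same string by explicit concatenation.
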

\begin{proof}
  First, using \cref{pr:pm-root}, we compute $s = \Lhead(\Pat)$, $H =
  \Lroot(\Pat)$, and $k = \Lexp(\Pat)$ in $\bigO(1 + m / \log_{\sigma}
  n)$ time. This lets us determine $\pendfull{\Pat} = 1 {+} s {+}
  k|H|$ and $\Pat' := \Pat[\pendfull{\Pat} - |\Pow(H)| \dd m]$. Then,
  using \cref{pr:meta-trie}, we compute in $\bigO(m / \log_{\sigma} n
  + \log \log n)$ time a range $(\bpre\dd \epre] = \{i\in [1\dd q]:
  \Pat'\text{ is a prefix of } T[\ARRzlex[i]\dd n]\}$.  Observe that
  the set $\{\rlexm_i: i\in (\bpre\dd \epre]\}$ consists of all
  positions $j \in \R'^{-}_{H}$ such that $\Pat[\pendfull{\Pat} \dd
  m]$ is a prefix of $\T[\rendfull{j} \dd n]$.  Thus, by
  \cref{lm:occ-s}, we have $|\Occsm(\Pat, \T)| = |\{i \in (\bpre \dd
  \epre] : \rendfull{\rlexm_i} - \rlexm_i \geq \pendfull{\Pat} -
  1\}|$, which we compute in $\bigO(\log \log n)$ time using the range
  counting structure as $\rcount{\ARRnontail}{\pendfull{\Pat} -
  1}{\epre} - \rcount{\ARRnontail}{\pendfull{\Pat} - 1}{\bpre}$
  (recall, that $\ARRnontail[i] = \rendfull{\rlexm_i} - \rlexm_i$; see
  \cref{sec:sa-periodic-ds}).
\end{proof}

By combining all above results, we obtain the following algorithm to
compute $|\Occ(\Pat, \T)|$ for any periodic pattern $\Pat$.

\begin{proposition}\label{pr:pm-occ}
  Let $\Pat \in \Alphabet^m$ be a periodic pattern. Given the data
  structure from \cref{sec:pm-periodic-ds} and the packed
  representation of $\Pat$, we can compute $|\Occ(\Pat, \T)|$ in
  $\bigO(m / \log_{\sigma} n + \log \log n)$ time.
\end{proposition}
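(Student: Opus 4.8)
The plan is to obtain $|\Occ(\Pat,\T)|$ by summing the sizes of the four sets into which $\Occ(\Pat,\T)$ decomposes. By \cref{lm:occ}, $\Occ(\Pat,\T)$ is the disjoint union $\Occa(\Pat,\T)\sqcup\Occs(\Pat,\T)$; since both $\Occa(\Pat,\T)$ and $\Occs(\Pat,\T)$ are by definition contained in $\R_{s,H}\subseteq\R=\R^{-}\sqcup\R^{+}$ (where $s=\Lhead(\Pat)$ and $H=\Lroot(\Pat)$), intersecting with $\R^{-}$ and $\R^{+}$ refines this into the four-way disjoint union
\[
  \Occ(\Pat,\T)=\Occam(\Pat,\T)\sqcup\Occap(\Pat,\T)
    \sqcup\Occsm(\Pat,\T)\sqcup\Occsp(\Pat,\T).
\]
Hence it suffices to compute the four cardinalities on the right and add them.

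Concretely, I would first call \cref{pr:pm-root} once to obtain $\Lroot(\Pat)$, $\Lhead(\Pat)$, $\Lexp(\Pat)$, $\Ltail(\Pat)$ (and thus $\pend{\Pat}$) and $\type(\Pat)$ in $\bigO(1+m/\log_{\sigma}n)$ time. Then I would invoke \cref{pr:pm-occ-a} on the first part of the structure of \cref{sec:pm-periodic-ds} to get $|\Occam(\Pat,\T)|$, and its mirror image (on the second, symmetric part of the structure, adapted according to \cref{lm:pm-lce}) to get $|\Occap(\Pat,\T)|$, each in $\bigO(1+m/\log_{\sigma}n)$ time. Similarly, \cref{pr:pm-occ-s} and its symmetric counterpart give $|\Occsm(\Pat,\T)|$ and $|\Occsp(\Pat,\T)|$, each in $\bigO(m/\log_{\sigma}n+\log\log n)$ time. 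Returning the sum of the four numbers produces $|\Occ(\Pat,\T)|$, and the running time is dominated by the two $\bigO(m/\log_{\sigma}n+\log\log n)$ terms, matching the claim.

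Since essentially all the combinatorial content has already been isolated in \cref{lm:occ,lm:occ-a,lm:occ-s} and the navigation work in \cref{pr:pm-occ-a,pr:pm-occ-s}, the only points that need care are bookkeeping ones: that the four-way partition is genuinely exhaustive (which is exactly \cref{lm:occ} together with $\R=\R^{-}\sqcup\R^{+}$), and that the ``$\R^{+}$'' components stored in the second part of the structure are faithful mirrors of the ``$\R^{-}$'' components, so that the symmetric invocations of \cref{pr:pm-occ-a,pr:pm-occ-s} are valid and their outputs add up to the correct total. I would also note that no explicit branching on $\pend{\Pat}\le m$ versus $\pend{\Pat}>m$ is needed at this level: when $\pend{\Pat}\le m$, \cref{lm:occ-a} (and its mirror) already force $\Occam(\Pat,\T)=\Occap(\Pat,\T)=\emptyset$ and, by \cref{lm:pm-lce-3}, all occurrences share $\type(\Pat)$ so one of $\Occsm(\Pat,\T),\Occsp(\Pat,\T)$ is empty --- but the subroutines handle these regimes internally and simply return the appropriate (possibly zero) counts. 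The main obstacle is thus not in this proof but in the already-proved lemmas it rests on; here it is just a matter of verifying the partition and combining.
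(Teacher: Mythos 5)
Your proposal is correct and follows essentially the same route as the paper: decompose $\Occ(\Pat,\T)$ via \cref{lm:occ} and the split $\R=\R^{-}\cup\R^{+}$ into the four sets $\Occam,\Occap,\Occsm,\Occsp$, compute each with \cref{pr:pm-occ-a,pr:pm-occ-s} and their symmetric counterparts, and sum, giving $\bigO(m/\log_{\sigma}n+\log\log n)$ total time. The extra remarks (the upfront call to \cref{pr:pm-root} and the observation that no branching on $\pend{\Pat}\le m$ is needed) are harmless and consistent with how the subroutines already work.
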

\begin{proof}
  Given a packed representation of a periodic pattern $\Pat$, we
  compute $|\Occ(\Pat, \T)| = |\Occam(\Pat, \T)| + |\Occsm(\Pat, \T)|
  + |\Occap(\Pat, \T)| + |\Occsp(\Pat, \T)|$ using
  \cref{pr:pm-occ-a,pr:pm-occ-s} and their symmetric counterparts
  (adapted according to \cref{lm:pm-lce}). The total time is $\bigO(m
  / \log_{\sigma} n + \log \log n)$.
\end{proof}

\bfparagraph{Generalizing the Query Algorithm}

We now show how to generalize the above algorithms to compute
$|\Occ(\Pat, \T)|$, to instead return $(\LB(\Pat, \T), \UB(\Pat,
\T))$.

For any periodic pattern $\Pat \in \Alphabet^m$ we define
\[
  \Pos(\Pat, \T) = \{j \in [1 \dd n] : \lcp(\T[j \dd n], \Pat) \geq
  3\tau - 1\text{ and }\T[j \dd n] \prec \Pat\},
\]
and denote $\delta(\Pat, \T) = |\Pos(\Pat, \T)|$.

\begin{lemma}\label{lm:pm-delta-1}
  Let $\Pat \in \Alphabet^m$ be a periodic pattern and let $X =
  \Pat[1 \dd 3\tau {-} 1]$. Then, it holds $\LB(\Pat, \T) = \LB(X, \T)
  + \delta(\Pat, \T)$.
\end{lemma}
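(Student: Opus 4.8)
The plan is to mirror the argument used for \cref{lm:isa-delta-1}, which established the analogous identity $\ISA[j] = \LB(X,\T) + \delta(j)$ for a periodic position $j$ and $X = \T[j \dd j+3\tau-1)$. The key observation there was that $j' \in \Occ(X,\T)$ if and only if $\LCE(j,j') \geq 3\tau-1$, which let us split the rank of $\T[j\dd n]$ into a block-offset part $\LB(X,\T)$ and a within-block count $\delta(j)$. Here, the pattern $\Pat$ plays the role of the suffix, and we want to count how many suffixes of $\T$ are lexicographically smaller than $\Pat$.

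First I would recall that $\LB(\Pat,\T) = |\{i \in [1\dd n] : \T[i\dd n] \prec \Pat\}|$ by definition. Since $\Pat$ is periodic, $|\Pat| = m \geq 3\tau - 1$, so $X = \Pat[1\dd 3\tau{-}1]$ is a genuine length-$(3\tau{-}1)$ prefix of $\Pat$. The next step is to split the set $\{i : \T[i\dd n]\prec \Pat\}$ according to whether $\T[i\dd n]$ has $X$ as a prefix: those suffixes not prefixed by $X$ and satisfying $\T[i\dd n]\prec\Pat$ are exactly the suffixes satisfying $\T[i\dd n]\prec X$ (because $X$ is a prefix of $\Pat$, any suffix $\prec \Pat$ is either $\prec X$, or has $X$ as a prefix; and conversely any suffix $\prec X$ is automatically $\prec\Pat$ since $X$ is a prefix of $\Pat$). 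The count of the first group is precisely $\LB(X,\T)$. Then I would argue that a suffix $\T[i\dd n]$ having $X$ as a prefix is equivalent to $\LCE(i,?)$-type condition — more directly, $i \in \Occ(X,\T)$ iff $\lcp(\T[i\dd n], X) \geq 3\tau-1$ iff $\lcp(\T[i\dd n],\Pat)\geq 3\tau-1$ (the last equivalence since $X$ has length exactly $3\tau-1$ and is a prefix of $\Pat$). Among these, the ones with $\T[i\dd n]\prec\Pat$ are, by definition, exactly the elements of $\Pos(\Pat,\T)$, so their count is $\delta(\Pat,\T)$. Adding the two disjoint counts gives $\LB(\Pat,\T) = \LB(X,\T) + \delta(\Pat,\T)$.

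Concretely, I would write the chain of equalities
\[
  \LB(\Pat,\T) = |\{i : \T[i\dd n]\prec X\}| + |\{i \in \Occ(X,\T) : \T[i\dd n]\prec \Pat\}| = \LB(X,\T) + \delta(\Pat,\T),
\]
where the middle term uses $\{i \in \Occ(X,\T) : \T[i\dd n]\prec\Pat\} = \Pos(\Pat,\T)$ via the equivalence $i\in\Occ(X,\T) \iff \lcp(\T[i\dd n],\Pat)\geq 3\tau-1$. The only subtle point — and the one I would state carefully — is the partition of $\{i:\T[i\dd n]\prec\Pat\}$ into suffixes strictly below $X$ and suffixes prefixed by $X$: this relies on the elementary fact that for any string $S$ and any prefix $X$ of $S$, a string $U$ satisfies $U\prec S$ iff ($U\prec X$) or ($X$ is a prefix of $U$ and $U\prec S$), together with the observation that $X$ being a prefix of $U$ and $U\prec X$ are mutually exclusive. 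I do not expect any real obstacle here; the statement is essentially a restatement of the $\ISA$ identity \cref{lm:isa-delta-1} one level up, with $\Pat$ in place of a concrete suffix, and the proof is a short unwinding of definitions.
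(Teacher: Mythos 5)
Your proof is correct and follows essentially the same route as the paper's: both use the equivalence $i \in \Occ(X,\T) \iff \lcp(\T[i\dd n],\Pat) \geq 3\tau-1$ and the decomposition $\LB(\Pat,\T) = \LB(X,\T) + |\{i \in \Occ(X,\T) : \T[i\dd n] \prec \Pat\}|$ for a prefix $X$ of $\Pat$, the latter being the "elementary fact" you spell out and the paper treats as immediate from the definition of $\LB$. No gap.
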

\begin{proof}
  It suffices to observe that $j \in \Occ(X, \T)$ holds if and only if
  $\lcp(\T[j \dd n], \Pat) \geq 3\tau - 1$. Thus, it holds by
  definition of $\LB(\Pat, \T)$ that $\LB(\Pat, \T) = \LB(X, \T) +
  |\{j \in \Occ(X, \T) : \T[j \dd n] \prec \Pat\}| = \LB(X,\T) + |\{j
  \in [1 \dd n] : \lcp(\T[j \dd n], \Pat) \geq 3\tau - 1\text{ and
  }\T[j \dd n] \prec \Pat\}| = \LB(X, \T) + \delta(\Pat, \T)$.
\end{proof}

We focus on computing $\delta(\Pat, \T)$ for $\Pat$ satisfying
$\type(\Pat) = -1$ (the structure for $\Pat$ satisfying $\type(\Pat) =
+1$ is symmetric; see the proof of \cref{pr:pm-periodic-range}).  We
define $\Posa(\Pat, \T) = \{j \in \R^{-}_{s,H} : \Lexp(j) \leq
\Lexp(\Pat)\}$ and $\Poss(\Pat, \T) = \{j \in \R^{-}_{s,H} : \Lexp(j)
= \Lexp(\Pat)\text{ and }\T[j \dd n] \succeq \Pat\}$, where $s =
\Lhead(\Pat)$ and $H = \Lroot(\Pat)$. We denote $\deltaa(\Pat, \T) =
|\Posa(\Pat, \T)|$ and $\deltas(\Pat, \T) = |\Poss(\Pat, \T)|$.

\begin{lemma}\label{lm:pm-delta-2}
  For any periodic pattern $\Pat \in \Alphabet^m$ that satisfies
  $\type(\Pat) = -1$, it holds $\delta(\Pat, \T) = \deltaa(\Pat, \T) -
  \deltas(\Pat, \T)$.
\end{lemma}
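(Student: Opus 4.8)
Proof plan for Lemma 5.47 (the statement $\delta(\Pat, \T) = \deltaa(\Pat, \T) - \deltas(\Pat, \T)$ for periodic $\Pat$ with $\type(\Pat) = -1$).

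The plan is to mirror exactly the structure of the proof of \cref{lm:isa-delta-2}, which establishes the analogous identity $\delta(j) = \deltaa(j) - \deltas(j)$ for a periodic text position $j \in \R^{-}$. The core claim to prove is that $\Posa(\Pat, \T)$ is the disjoint union of $\Pos(\Pat, \T)$ and $\Poss(\Pat, \T)$; taking cardinalities then yields $\delta(\Pat,\T) + \deltas(\Pat,\T) = \deltaa(\Pat,\T)$, which rearranges to the claim. Here $s = \Lhead(\Pat)$ and $H = \Lroot(\Pat)$.

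First I would show $\Pos(\Pat, \T) \subseteq \Posa(\Pat, \T)$. By \cref{lm:pm-lce}, a position $j$ with $\lcp(\T[j \dd n], \Pat) \geq 3\tau-1$ satisfies $j \in \R_{s,H}$, and since $\type(\Pat) = -1$, parts \eqref{lm:pm-lce-it-1}--\eqref{lm:pm-lce-it-4} of \cref{lm:pm-lce} force that every $j \in \Pos(\Pat, \T)$ (i.e.\ with $\T[j\dd n] \prec \Pat$) has type $-1$ and $\rend{j} - j \leq \pend{\Pat} - 1$; indeed if $\type(j) = +1$ then \cref{lm:pm-lce}\eqref{lm:pm-lce-it-1} gives $\Pat \prec \T[j\dd n]$, a contradiction, and if $\type(j)=-1$ with $\rend{j}-j > \pend{\Pat}-1$ then \cref{lm:pm-lce}\eqref{lm:pm-lce-it-2} again gives $\Pat \prec \T[j\dd n]$. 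Hence $j \in \R^{-}_{s,H}$ and, using $\rend{j} - j = s + \Lexp(j)\cdot|H| + \Ltail(j)$ together with $\pend{\Pat} - 1 = s + \Lexp(\Pat)\cdot|H| + \Ltail(\Pat)$ and $\rend{j}-j \le \pend{\Pat}-1$, we get $\Lexp(j) = \lfloor (\rend{j}-j-s)/|H| \rfloor \le \lfloor (\pend{\Pat}-1-s)/|H| \rfloor = \Lexp(\Pat)$, so $j \in \Posa(\Pat, \T)$. The inclusion $\Poss(\Pat, \T) \subseteq \Posa(\Pat, \T)$ is immediate from the definitions (restricting $\Lexp(j) = \Lexp(\Pat)$ is a special case of $\Lexp(j) \le \Lexp(\Pat)$), and $\Pos(\Pat, \T) \cap \Poss(\Pat, \T) = \emptyset$ because $\Pos$ requires $\T[j\dd n] \prec \Pat$ while $\Poss$ requires $\T[j\dd n] \succeq \Pat$.

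It then remains to show $\Posa(\Pat, \T) \setminus \Poss(\Pat, \T) \subseteq \Pos(\Pat, \T)$. Take $j \in \Posa(\Pat, \T) \setminus \Poss(\Pat, \T)$, so $j \in \R^{-}_{s,H}$ with $\Lexp(j) \le \Lexp(\Pat)$. If $\Lexp(j) = \Lexp(\Pat)$, then $j \notin \Poss(\Pat, \T)$ forces $\T[j\dd n] \prec \Pat$ (the only other alternative under $\Lexp(j) = \Lexp(\Pat)$), hence $j \in \Pos(\Pat, \T)$ since $\lcp(\T[j\dd n], \Pat) \ge 3\tau-1$ by \cref{lm:pm-lce}. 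If $\Lexp(j) < \Lexp(\Pat)$, I would bound $\rend{j} - j = s + \Lexp(j)\cdot|H| + \Ltail(j) < s + \Lexp(j)\cdot|H| + |H| \le s + \Lexp(\Pat)\cdot|H| \le \pend{\Pat} - 1$, so with $t = \pend{\Pat}-1$ and $t' = \rend{j}-j$ we have $t \ne t'$ and $\type(\Pat) = \type(j) = -1$, whence \cref{lm:pm-lce}\eqref{lm:pm-lce-it-2} gives $\T[j\dd n] \prec \Pat$, i.e.\ $j \in \Pos(\Pat, \T)$. This completes the disjoint-union claim and hence the lemma.

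I expect the main (minor) obstacle to be bookkeeping the floor-division inequalities relating $\Lexp$ and $\rend{\cdot}-\cdot$ (resp.\ $\pend{\cdot}-1$) carefully — in particular making sure that $\Lexp(j) < \Lexp(\Pat)$ genuinely forces the strict inequality $\rend{j} - j < \pend{\Pat} - 1$ (which uses $\Ltail(j) < |H|$ and $\Lexp(\Pat) \ge \Lexp(j)+1$), so that \cref{lm:pm-lce}\eqref{lm:pm-lce-it-2} applies with $t \ne t'$. Everything else is a direct transcription of the argument for \cref{lm:isa-delta-2} with $\T[j\dd n]$ (text suffix) playing the role of the second suffix and $\Pat$ playing the role of the query suffix, appealing to \cref{lm:pm-lce} in place of \cref{lm:lce}.
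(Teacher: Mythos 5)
Your proposal is correct and takes essentially the same route as the paper's proof: it shows that $\Posa(\Pat,\T)$ is the disjoint union of $\Pos(\Pat,\T)$ and $\Poss(\Pat,\T)$ via \cref{lm:pm-lce}, with the same case split on $\Lexp(j)$ versus $\Lexp(\Pat)$ and the same inequality chain $\rend{j}-j < \pend{\Pat}-1$. The only small point to tighten is the final case $\Lexp(j)<\Lexp(\Pat)$: \cref{lm:pm-lce}\eqref{lm:pm-lce-it-2} alone only yields $\T[j\dd n]\preceq\Pat$ (it is an equivalence for $\Pat\prec\T[j\dd n]$), so, as the paper does, you should also invoke \cref{lm:pm-lce}\eqref{lm:pm-lce-it-4} to exclude $\T[j\dd n]=\Pat$ and obtain the strict inequality.
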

\begin{proof}

  We will prove that $\Posa(\Pat, \T)$ is a disjoint union of
  $\Pos(\Pat, \T)$ and $\Poss(\Pat, \T)$. This implies $\delta(\Pat,
  \T) + \deltas(\Pat, \T) = \deltaa(\Pat, \T)$, and consequently, the
  equality in the claim.

  Denote $s = \Lhead(\Pat)$ and $H = \Lroot(\Pat)$.
  By \cref{lm:pm-lce}, letting $j \in \R^{-}_{s,H}$, we
  have $\Pos(\Pat, \T) = \{j \in \R^{-}_{s,H} : \T[j \dd n] \prec
  \Pat\}$, and moreover, if $j \in \Pos(\Pat, \T)$, then $\rend{j} - j
  \leq \pend{\Pat} - 1$. In particular, $\Lexp(j) = \lfloor
  \tfrac{\rend{j} - j - s}{|H|} \rfloor \leq \lfloor
  \tfrac{\pend{\Pat} - 1 - s}{|H|} \rfloor = \Lexp(\Pat)$.  Hence,
  $\Pos(\Pat, \T) \subseteq \Posa(\Pat, \T)$. On the other hand,
  clearly $\Poss(\Pat, \T) \subseteq \Posa(\Pat, \T)$ and $\Poss(\Pat,
  \T) \cap \Pos(\Pat, \T) = \emptyset$. Thus, to obtain the claim, it
  suffices to show that $\Posa(\Pat, \T) \setminus \Poss(\Pat, \T)
  \subseteq \Pos(\Pat, \T)$.

  Let $j \in \Posa(\Pat, \T) \setminus \Poss(\Pat, \T)$. Consider two
  cases.  If $\Lexp(j) = \Lexp(\Pat)$, then by definition of
  $\Poss(\Pat, \T)$, it must hold $\T[j \dd n] \prec \Pat$. Thus, we
  have $j \in \Pos(\Pat, \T)$. Let us therefore assume $\Lexp(j) <
  \Lexp(\Pat)$. Then, $\rend{j} - j = s + \Lexp(j) \cdot |H| +
  \Ltail(j) < s + \Lexp(j)\cdot |H| + |H| \le s + \Lexp(\Pat) \cdot
  |H| \le s + \Lexp(\Pat)\cdot |H| + \Ltail(\Pat) = \pend{\Pat} - 1$.
  By \cref{lm:pm-lce}\eqref{lm:pm-lce-it-2} and
  \cref{lm:pm-lce}\eqref{lm:pm-lce-it-4}, this implies $\T[j
  \dd n] \prec \Pat$, and consequently, $j \in \Pos(\Pat, \T)$.
\end{proof}

We now describe how, given any periodic pattern $\Pat \in \Alphabet^m$
that satisfies $\type(\Pat) = -1$, to compute $\deltaa(\Pat, \T)$.

\begin{proposition}\label{pr:pm-delta-a}
  Let $\Pat \in \Alphabet^m$ be a periodic pattern satisfying
  $\type(\Pat) = -1$. Given the data structure from
  \cref{sec:pm-periodic-ds} and the packed representation of $\Pat$,
  we can in $\bigO(1 + m / \log_{\sigma} n)$ time compute
  $\deltaa(\Pat, \T)$.
\end{proposition}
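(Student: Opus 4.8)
The plan is to mirror, almost verbatim, the computation of $\deltaa(j)$ in \cref{pr:isa-delta-a}, the only genuinely new ingredient being a short case analysis: unlike a periodic position, a periodic pattern may have an L-exponent lying outside the range of L-exponents realized inside the relevant $\SA$-block. First I would invoke \cref{pr:pm-root} to compute $s = \Lhead(\Pat)$, $H = \Lroot(\Pat)$, and $k = \Lexp(\Pat)$ in $\bigO(1 + m / \log_{\sigma} n)$ time; this is the only step whose cost exceeds $\bigO(1)$. Next, I would set $X = \Pat[1 \dd 3\tau {-} 1]$ and look up $(b, e) = \LTrange[\Int(X)]$ in $\bigO(1)$ time. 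Exactly as in the proof of \cref{pr:pm-occ-a} (which rests on \cref{lm:pm-lce} and \cref{lm:pm-lce-2} together with the structure of $\SA$-blocks from \cref{lm:lce}), the positions $\{\SA[i] : i \in (b \dd e]\}$ are precisely $\R_{s,H}$, all of $\R^{-}_{s,H}$ precedes all of $\R^{+}_{s,H}$ within this range, and the elements of $\R^{-}_{s,H}$ appear there in non-decreasing order of $\Lexp$. If $b = e$, then $\R_{s,H} = \emptyset$, so $\Posa(\Pat, \T) = \emptyset$ and I return $0$.

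Otherwise I would reuse the $\BVexp$-arithmetic from \cref{pr:isa-delta-a}: letting $p = \rank{\BVexp}{1}{b}$ and $d = \rank{\BVexp}{1}{e} - p$, the definition of $\BVexp$ gives that $d$ is the number of distinct $\Lexp$-values in $\R^{-}_{s,H}$, that $\{\Lexp(j) : j \in \R^{-}_{s,H}\} = [k_{\min} \dd k_{\max}]$ for $k_{\min} = \LTminexp[\Int(X)]$ and $k_{\max} = k_{\min} + d - 1$, and that $\select{\BVexp}{1}{p+i} - b$ counts the $j \in \R^{-}_{s,H}$ with $\Lexp(j) \leq k_{\min} + i - 1$ (for $i \in [1 \dd d]$). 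If $d = 0$, then $\R^{-}_{s,H} = \emptyset$ and I return $0$. When $d \geq 1$, since $\Posa(\Pat, \T) = \{j \in \R^{-}_{s,H} : \Lexp(j) \leq k\}$, I split into three cases: if $k < k_{\min}$, return $0$; if $k \geq k_{\max}$, return $|\R^{-}_{s,H}| = \select{\BVexp}{1}{p+d} - b$; and if $k_{\min} \leq k < k_{\max}$, return $\select{\BVexp}{1}{p + (k - k_{\min} + 1)} - b$. All of these remaining steps are $\bigO(1)$, so the total running time is $\bigO(1 + m / \log_{\sigma} n)$, matching the claim.

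The step I expect to require the most care — rather than a real obstacle — is exactly this three-way split on $k$ relative to $[k_{\min} \dd k_{\max}]$, together with the two empty-set corner cases $\R_{s,H} = \emptyset$ and $\R^{-}_{s,H} = \emptyset$: in \cref{pr:isa-delta-a} one has $j \in \R^{-}_{s,H}$, so $\Lexp(j) \in [k_{\min} \dd k_{\max}]$ automatically, whereas here $\Lexp(\Pat)$ is an unconstrained integer and those boundary situations genuinely occur and must be returned correctly. A secondary point needing attention is the identification $\{\SA[i] : i \in (b \dd e]\} = \R_{s,H}$, which must be justified via \cref{lm:pm-lce} and \cref{lm:pm-lce-2} (matching $X = \Pat[1 \dd 3\tau{-}1]$ to the pair $(s,H)$), not through \cref{lm:lce} alone.
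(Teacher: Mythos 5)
Your proposal is correct and follows essentially the same route as the paper's proof: compute $s$, $H$, $k$ via \cref{pr:pm-root}, identify the block $\SA(b \dd e] = \R_{s,H}$ via $\LTrange$ and \cref{lm:pm-lce}, and then perform the same rank/select arithmetic on $\BVexp$ with $\LTminexp$, including the identical three-way case split on $k$ versus $[k_{\min} \dd k_{\max}]$ and the two empty-set corner cases.
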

\begin{proof}
  First, using \cref{pr:pm-root}, we compute $H = \Lroot(\Pat)$ and $k
  = \Lexp(\Pat)$ in $\bigO(1 + m / \log_{\sigma} n)$ time.  Then,
  using $\LTrange$, we compute in $\bigO(1)$ time a pair of integers
  $b, e$ such that $\SA(b \dd e]$ contains the starting positions of
  all suffixes of $\T$ prefixed with $X = \Pat[1 \dd 3\tau {-}
  1]$. Equivalently, by \cref{lm:pm-lce}, $\SA(b \dd e]$ contains all
  positions from $\R_{s,H}$, where $s = \Lhead(\Pat)$. If $b = e$,
  then it holds $\R_{s,H} = \emptyset$, and thus we return
  $\deltaa(\Pat, \T) = 0$.  Let us thus assume $b < e$. Our goal now
  is to determine the subrange of $\SA(b \dd e]$ containing all
  positions in $\{j \in \R^{-}_{s,H} : \Lexp(j) \leq \Lexp(\Pat)\}$
  (these positions form a subrange by \cref{lm:lce}). For that, we
  first compute $d = \rank{\BVexp}{1}{e} - \rank{\BVexp}{1}{b}$ in
  $\bigO(1)$ time.  If $d = 0$, then $\R^{-}_{s,H} = \emptyset$, and
  hence we return $\deltaa(\Pat, \T) = 0$. Otherwise, we retrieve
  $k_{\min} = \LTminexp[\Int(X)]$ in $\bigO(1)$ time. Then, letting
  $k_{\max} = k_{\min} + d - 1$, we have $k_{\min} \leq k_{\max}$ and
  $[k_{\min} \dd k_{\max}] = \{\Lexp(j) : j \in \R^{-}_{s,H}\}$ (see
  the proof of \cref{pr:isa-delta-a}). If $k < k_{\min}$, we return
  $\deltaa(\Pat, \T) = 0$. Otherwise, we have two cases.  Let $p =
  \rank{\BVexp}{1}{b}$. If $k \geq k_{\max}$, then we return
  $\deltaa(\Pat, \T) = |\R^{-}_{s,H}| = \select{\BVexp}{1}{p + d} -
  b$.  Otherwise (i.e., $k < k_{\max}$), we return $\deltaa(\Pat, \T)
  = \select{\BVexp}{1}{p + k - k_{\min} + 1} - b$.  In total, the
  query takes $\bigO(1 + m / \log_{\sigma} n)$ time.
\end{proof}

We now describe how, given any periodic pattern $\Pat \in \Alphabet^m$
that satisfies $\type(\Pat) = -1$, to compute $\deltas(\Pat, \T)$.

\begin{lemma}\label{lm:pm-delta-s}
  Let $\Pat \in \Alphabet^m$ be a periodic pattern that satisfies
  $\type(\Pat) = -1$. Denote $H = \Lroot(\Pat)$.  Assume $i \in
  \R^{-}_{H}$ and let $\ell = \rend{i} - i - 3\tau + 2$. Then, we have
  $|\Poss(\Pat, \T) \cap [i \dd i + \ell)| \leq 1$. Moreover,
  $|\Poss(\Pat, \T) \cap [i \dd i + \ell)| = 1$ holds if and only if
  $\T[\rendfull{i} \dd n] \succeq \Pat[\pendfull{\Pat} \dd m]$ and
  $\rendfull{i} - i \geq \pendfull{\Pat} - 1$.
\end{lemma}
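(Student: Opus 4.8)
The plan is to follow the template of \cref{lm:isa-delta-s} and its pattern-matching analogue \cref{lm:occ-s} almost verbatim, since $\Poss(\Pat, \T)$ plays here exactly the role that $\Poss(j)$ and $\Occsm(\Pat, \T)$ played there; throughout I write $s = \Lhead(\Pat)$ and $p = |H|$. First I would invoke \cref{lm:R-block} to obtain $[i \dd i + \ell) \subseteq \R^{-}_H$ together with $\rend{i + \delta} = \rend{i}$, $\Ltail(i + \delta) = \Ltail(i)$, hence $\rendfull{i + \delta} = \rendfull{i}$ and $\rendfull{i + \delta} - (i + \delta) = \rendfull{i} - i - \delta$, for every $\delta \in [0 \dd \ell)$. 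Pairing this with the identity $\rendfull{j} - j = s + \Lexp(j)\cdot p = \pendfull{\Pat} - 1$, which holds for every $j \in \Poss(\Pat, \T)$ by the definitions of $\Poss$ and of $\pendfull{\Pat}$, forces the unique candidate offset $\delta = (\rendfull{i} - i) - (\pendfull{\Pat} - 1)$ and thus proves $|\Poss(\Pat, \T) \cap [i \dd i + \ell)| \le 1$.

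For the forward direction of the equivalence I would assume $i + \delta \in \Poss(\Pat, \T)$ for some $\delta \in [0 \dd \ell)$. The identity above gives $\pendfull{\Pat} - 1 = \rendfull{i} - (i + \delta) \le \rendfull{i} - i$, which is the second stated condition. For the first, I use that $i + \delta \in \R^{-}_{s,H}$ with $\Lexp(i+\delta) = \Lexp(\Pat)$ implies $\T[i + \delta \dd \rendfull{i}) = \Pat[1 \dd \pendfull{\Pat})$ (both equal $H'H^{\Lexp(\Pat)}$ with $H'$ the length-$s$ suffix of $H$), so that $\T[i + \delta \dd n] = \Pat[1 \dd \pendfull{\Pat}) \cdot \T[\rendfull{i} \dd n]$; since $\Pat = \Pat[1 \dd \pendfull{\Pat}) \cdot \Pat[\pendfull{\Pat} \dd m]$ and lexicographic comparison ignores a common prefix, $\T[i + \delta \dd n] \succeq \Pat$ is equivalent to $\T[\rendfull{i} \dd n] \succeq \Pat[\pendfull{\Pat} \dd m]$. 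I would note that the only difference from \cref{lm:occ-s} is that here the right-hand condition is a lexicographic inequality rather than a ``prefix-of'' relation, which this decomposition argument accommodates with no change.

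The converse direction is where the real work lies, and I expect the main obstacle to be establishing the auxiliary inequality $\rend{i} - \rendfull{i} \ge \pend{\Pat} - \pendfull{\Pat}$, exactly as in \cref{lm:occ-s}. Given the two conditions, I set $\delta = (\rendfull{i} - i) - (\pendfull{\Pat} - 1) \ge 0$ and argue by contradiction: if $q := \rend{i} - \rendfull{i} = \Ltail(i)$ were strictly smaller than $\Ltail(\Pat) = \pend{\Pat} - \pendfull{\Pat}$, then (using $q = \Ltail(i) < p$ and $q+1 \le \Ltail(\Pat)$) both $\T[\rendfull{i} \dd \rendfull{i}+q)$ and $\Pat[\pendfull{\Pat} \dd \pendfull{\Pat}+q)$ are the length-$q$ prefix of $H$, and $\T[\rend{i} - p] = H[q+1] = \Pat[\pendfull{\Pat}+q]$; since $i \in \R^{-}_H$, either $\rend{i} = \rendfull{i}+q = n+1$ — making $\T[\rendfull{i} \dd n]$ a proper prefix of the longer string $\Pat[\pendfull{\Pat} \dd m]$ (whose length is at least $\Ltail(\Pat) > q$) — or $\T[\rend{i}] \prec \T[\rend{i} - p] = \Pat[\pendfull{\Pat}+q]$; in either subcase $\T[\rendfull{i} \dd n] \prec \Pat[\pendfull{\Pat} \dd m]$, contradicting the hypothesis. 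With this inequality secured, $\rend{i} - (i + \delta) = (\rendfull{i} - (i+\delta)) + (\rend{i} - \rendfull{i}) = (\pendfull{\Pat} - 1) + (\rend{i} - \rendfull{i}) \ge \pend{\Pat} - 1 \ge 3\tau - 1$, so $\delta \le \rend{i} - i - 3\tau + 1 < \ell$.

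It then remains to check $i + \delta \in \Poss(\Pat, \T)$. I would observe that $\rendfull{i + \delta} - (i + \delta) = \rendfull{i} - (i + \delta) = \pendfull{\Pat} - 1$ together with $\Lroot(i + \delta) = H$ (\cref{lm:R-block}) forces $\T[i + \delta \dd \rendfull{i}) = \Pat[1 \dd \pendfull{\Pat})$, hence $i + \delta \in \R_{s,H}$ and $\Lexp(i + \delta) = \Lexp(\Pat)$; by \cref{lm:R-block} also $\type(i + \delta) = \type(i) = -1$, so $i + \delta \in \R^{-}_{s,H}$. Prepending $\Pat[1 \dd \pendfull{\Pat})$ to both sides of $\T[\rendfull{i} \dd n] \succeq \Pat[\pendfull{\Pat} \dd m]$ yields $\T[i + \delta \dd n] \succeq \Pat$, which gives the membership $i + \delta \in \Poss(\Pat, \T)$ and completes the argument. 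The routine parts (uniqueness of the L-decomposition, the exact form of $H'$, and the lexicographic ``ignore a common prefix'' fact) I would simply cite as in \cref{sec:sa-periodic-prelim}.
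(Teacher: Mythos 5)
Your proposal is correct and follows essentially the same route as the paper's proof: the same uniqueness argument via $\rendfull{j}-j=\pendfull{\Pat}-1$ and \cref{lm:R-block}, the same prefix-stripping argument for the forward direction, and the same contradiction establishing $\rend{i}-\rendfull{i}\geq\pend{\Pat}-\pendfull{\Pat}$ (including both the $\rendfull{i}+q=n+1$ and the $\T[\rend{i}]\prec\T[\rend{i}-|H|]$ subcases) before verifying membership in $\Poss(\Pat,\T)$. No gaps.
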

\begin{proof}

  In the proof of \cref{lm:occ-s}, it is shown that $[i \dd i + \ell)
  \subseteq \R^{-}_H$, and for any $\delta \in [0 \dd \ell)$, it holds
  $\rendfull{i+\delta}-(i + \delta) = \rendfull{i} - i - \delta$.  By
  definition of $\Poss(\Pat, \T)$, letting $s = \Lhead(\Pat)$, for any
  $j \in \Poss(\Pat, \T)$ it holds $\rendfull{j} - j = s + \Lexp(j)
  \cdot |H| = s + \Lexp(\Pat) \cdot |H| = \pendfull{\Pat} - 1$. Thus,
  $i + \delta \in \Poss(\Pat, \T)$ implies $\rendfull{i + \delta} - (i
  + \delta) = \rendfull{i} - (i + \delta) = \pendfull{\Pat} - 1$, or
  equivalently, $\delta = (\rendfull{i} - i) - (\pendfull{\Pat} - 1)$,
  and therefore, $|\Poss(\Pat, \T) \cap [i \dd i + \ell)| \leq 1$.

  For the second part, assume first that $i + \delta \in \Poss(\Pat,
  \T)$ holds for some $\delta \in [0 \dd \ell)$.  Then, as noted
  above, we have $\pendfull{\Pat} - 1 = \rendfull{i} - (i + \delta)
  \leq \rendfull{i} - i$. Moreover, letting $\Lhead(\Pat) = s$, by
  definition of $\Poss(\Pat, \T)$, we have $i + \delta \in
  \R^{-}_{s,H}$, $\Lexp(\Pat) = \Lexp(i + \delta)$, and $\T[i {+}
  \delta \dd n] \succeq \Pat$.  Therefore, we obtain that $\T[i {+}
  \delta \dd \rendfull{i + \delta}) = \T[i {+} \delta \dd
  \rendfull{i}) = \Pat[1 \dd \pendfull{\Pat}) = H'H^k$ (where $k =
  \Lexp(\Pat)$ and $H'$ is the length-$s$ suffix of $H$), and
  consequently, $\T[\rendfull{i} \dd n] \succeq \Pat[\pendfull{\Pat}
  \dd m]$.  To show the converse implication, assume that
  $\T[\rendfull{i} \dd n] \succeq \Pat[\pendfull{\Pat} \dd m]$ and
  $\rendfull{i} - i \geq \pendfull{\Pat} - 1$. Let $\delta =
  (\rendfull{i} - i) - (\pendfull{\Pat} - 1)$.  We will prove that
  $\delta \in [0 \dd \ell)$ and $i + \delta \in \Poss(\Pat,
  \T)$. Clearly $\delta \geq 0$.  To show $\delta < \ell$, we first
  prove $\rend{i} - \rendfull{i} \geq \pend{\Pat} - \pendfull{\Pat}$.
  Suppose that $q = \rend{i} - \rendfull{i} < \pend{\Pat} -
  \pendfull{\Pat}$.  By $i \in \R^{-}_{H}$, we then either have
  $\rendfull{i} + q = n + 1$, or $\rendfull{i} + q \leq n$ and
  $\T[\rendfull{i} + q] \prec \T[\rendfull{i} + q - |H|] =
  \Pat[\pendfull{\Pat} + q - |H|] = \Pat[\pendfull{\Pat} + q]$, both
  of which contradict $\T[\rendfull{i} \dd n] \succeq
  \Pat[\pendfull{\Pat} \dd m]$. Thus, $\rend{i} - \rendfull{i} \geq
  \pend{\Pat} - \pendfull{\Pat}$.  This implies, $\rend{i} - (i +
  \delta) = (\rendfull{i} - (i + \delta)) + (\rend{i} - \rendfull{i})
  = (\pendfull{\Pat} - 1) + (\rend{i} - \rendfull{i}) \geq
  (\pendfull{\Pat} - 1) + (\pend{\Pat} - \pendfull{\Pat}) =
  \pend{\Pat} - 1 \geq 3\tau - 1$, or equivalently $\delta \leq
  \rend{i} - i - 3\tau + 1 < \ell$.  To show $i + \delta \in
  \Poss(\Pat, \T)$, it remains to observe that $\rendfull{i + \delta}
  - (i + \delta) = \rendfull{i} - (i + \delta) = \pendfull{\Pat} - 1$
  and $\Lroot(i + \delta) = \Lroot(i) = H = \Lroot(\Pat)$ (following
  from \cref{lm:R-block}) imply $\T[i + \delta \dd \rendfull{i}) =
  \Pat[1 \dd \pendfull{\Pat})$.  This in particular gives, letting
  $\Lhead(\Pat) = s$, that $i + \delta \in \R_{s,H}$ and $\Lexp(i +
  \delta) = \Lexp(\Pat)$.  Moreover, combining it with
  $\T[\rendfull{i} \dd n] \succeq \Pat[\pendfull{\Pat} \dd m]$ yields
  $\T[i + \delta \dd n] \succeq \Pat$.  Finally, by \cref{lm:R-block},
  $\type(i + \delta) = \type(i) = -1$. Therefore, $i + \delta \in
  \Poss(\Pat, \T)$.
\end{proof}

\begin{proposition}\label{pr:pm-delta-s}
  Let $\Pat \in \Alphabet^m$ be a periodic pattern satisfying
  $\type(\Pat) = -1$. Given the data structure from
  \cref{sec:pm-periodic-ds} and the packed representation of $\Pat$,
  we can in $\bigO(m / \log_{\sigma} n + \log \log n)$ time compute
  $\deltas(\Pat, \T)$.
\end{proposition}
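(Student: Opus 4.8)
The plan is to follow the template of \cref{pr:pm-occ-s} for computing $|\Occsm(\Pat, \T)|$, adapting it to the lexicographic inequality $\succeq$ that distinguishes $\Poss(\Pat, \T)$ from $\Occsm(\Pat, \T)$, and then to finish with a single range-counting query on $\ARRnontail$ exactly as in \cref{pr:isa-delta-s}. First I would apply \cref{pr:pm-root} to obtain $s = \Lhead(\Pat)$, $H = \Lroot(\Pat)$, and $k = \Lexp(\Pat)$ in $\bigO(1 + m/\log_{\sigma} n)$ time, and set $\pendfull{\Pat} = 1 + s + k|H|$. Since $\Pat$ is periodic we have $\pend{\Pat} \ge 3\tau$ and $|H| \le \tfrac13\tau$, so $\pendfull{\Pat} - |\Pow(H)| \ge 1$, and because $\Pat[1 \dd \pendfull{\Pat}) = H'H^k$ with $|H'| = s$ the length-$|\Pow(H)|$ fragment ending at position $\pendfull{\Pat}$ equals $\Pow(H)$; hence $\Pat' := \Pat[\pendfull{\Pat} - |\Pow(H)| \dd m] = \Pow(H) \cdot \Pat[\pendfull{\Pat} \dd m]$, which I can extract from the packed representation of $\Pat$ in $\bigO(1 + m/\log_{\sigma} n)$ time.

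Next I would query the data structure of \cref{pr:meta-trie} built on $\ARRzlex$ with $\Pat'$ to obtain $\bpre = |\{i \in [1 \dd q] : \T[\ARRzlex[i] \dd n] \prec \Pat'\}|$ in $\bigO(m/\log_{\sigma} n + \log\log n)$ time, and separately read $e_H := \sum_{H' \preceq H}|\R'^{-}_{H'}|$ from $\LTruns[(H, \emptystring)]$ in $\bigO(1)$ time. The crucial claim to verify is that $(\bpre \dd e_H]$ is exactly the set of indices $i$ with $\Lroot(\rlexm_i) = H$ and $\T[\rendfull{\rlexm_i} \dd n] \succeq \Pat[\pendfull{\Pat} \dd m]$. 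This rests on $\T[\ARRzlex[i] \dd n] = \Pow(\Lroot(\rlexm_i)) \cdot \T[\rendfull{\rlexm_i} \dd n]$, on the prefix-freeness and order-preservation of $\Pow$ over $\Lroots$, and on $\Pat'$ beginning with $\Pow(H)$: every index outside the $H$-block contributes to $\bpre$ precisely when $\Lroot(\rlexm_i) \prec H$, while inside the $H$-block $\T[\ARRzlex[i] \dd n] \prec \Pat'$ holds iff $\T[\rendfull{\rlexm_i} \dd n] \prec \Pat[\pendfull{\Pat} \dd m]$; combined with the sortedness of $\ARRzlex$ and the agreement of its indexing with that of $(\rlexm_i)_i$ and of $\LTruns$ (all established in \cref{sec:pm-periodic-ds}), this gives the claim and also covers the degenerate cases $H \notin \Lroots$ and $q = 0$, where $\bpre = e_H$ and the answer is $0$.

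Finally, summing \cref{lm:pm-delta-s} over the maximal $\tau$-runs — which partition $\R^{-}_H$, each starting at a position of $\R'^{-}_H$ and contributing at most one element of $\Poss(\Pat, \T)$, present iff $\T[\rendfull{i} \dd n] \succeq \Pat[\pendfull{\Pat} \dd m]$ and $\rendfull{i} - i \ge \pendfull{\Pat} - 1$ — and recalling $\Poss(\Pat, \T) \subseteq \R^{-}_{s,H}$ and $\ARRnontail[i] = \rendfull{\rlexm_i} - \rlexm_i$, I obtain
\[
  \deltas(\Pat, \T) = |\{i \in (\bpre \dd e_H] : \ARRnontail[i] \ge \pendfull{\Pat} - 1\}| = \rcount{\ARRnontail}{\pendfull{\Pat} - 1}{e_H} - \rcount{\ARRnontail}{\pendfull{\Pat} - 1}{\bpre},
\]
evaluated by the range-counting structure of \cref{pr:range-queries} in $\bigO(\log\log n)$ time, for a total of $\bigO(m/\log_{\sigma} n + \log\log n)$. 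I expect the only genuinely delicate step to be the verification in the second paragraph that the meta-trie query endpoint $\bpre$ coincides with the left boundary of the desired $\succeq$-range — in particular keeping the three relevant orderings (of $\ARRzlex$, of $(\rlexm_i)_i$, and of $\LTruns$) in sync; everything else is routine given \cref{lm:pm-lce,lm:pm-delta-s,pr:pm-root,pr:meta-trie,pr:range-queries} and directly parallels \cref{pr:pm-occ-s,pr:isa-delta-s}.
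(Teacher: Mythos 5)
Your proposal is correct and follows essentially the same route as the paper's proof: compute $s$, $H$, $k$ via \cref{pr:pm-root}, query the \cref{pr:meta-trie} structure on $\ARRzlex$ with $\Pat' = \Pat[\pendfull{\Pat} - |\Pow(H)| \dd m]$ to get the left endpoint, read $\sum_{H'\preceq H}|\R'^{-}_{H'}|$ from $\LTruns$, and finish with two range-counting queries on $\ARRnontail$ via \cref{lm:pm-delta-s} and \cref{pr:range-queries}. Your extra verifications (that $\Pat' = \Pow(H)\cdot\Pat[\pendfull{\Pat}\dd m]$ and the alignment of the orderings of $\ARRzlex$, $(\rlexm_i)_i$, and $\LTruns$) are exactly the facts the paper delegates to the proofs of \cref{pr:pm-occ-s,pr:isa-delta-s}.
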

\begin{proof}
  First, using \cref{pr:pm-root}, we compute $s = \Lhead(\Pat)$, $H =
  \Lroot(\Pat)$, and $k = \Lexp(\Pat)$ in $\bigO(1 + m / \log_{\sigma}
  n)$ time. This lets us determine $\pendfull{\Pat} = 1 {+} s {+}
  k|H|$ and $\Pat' := \Pat[\pendfull{\Pat} - |\Pow(H)| \dd m]$. Then,
  using \cref{pr:meta-trie}, we compute in $\bigO(m / \log_{\sigma} n
  + \log \log n)$ time a value $x = |\{i \in [1 \dd q] :
  \T[\ARRzlex[i] \dd n] \prec \Pat'\}|$. Then, letting $x' = \sum_{H'
  \preceq H} |\R'^{-}_{H'}|$ (obtained from $\LTruns$ in $\bigO(1)$
  time as explained in the proof of \cref{pr:isa-delta-s}), by
  definition of $\ARRzlex$ and properties of function $\Pow$ (see the
  proof of \cref{pr:pm-occ-s}), the set $\{\rlexm_i : i \in (x \dd
  x']\}$ (where $\rlexm_i$ is defined as in the proof of
  \cref{pr:pm-occ-s}) consists of all positions $j \in \R'^{-}_{H}$
  satisfying $\T[\rendfull{j} \dd n] \succeq \Pat[\pendfull{\Pat} \dd
  m]$.  Thus, by \cref{lm:pm-delta-s}, it holds $\deltas(\Pat, \T) =
  |\Poss(\Pat, \T)| = |\{i \in (x \dd x'] : \ell_{i} \geq
  \pendfull{\Pat} - 1\}|$ (where $\ell_i$ is defined as in
  \cref{pr:pm-occ-s}), which we compute in $\bigO(\log \log n)$ time
  using the range counting structure as
  $\rcount{\ARRnontail}{\pendfull{\Pat} - 1}{x'} -
  \rcount{\ARRnontail}{\pendfull{\Pat} - 1}{x}$.
\end{proof}

By combining the above results, we obtain the algorithm to efficiently
compute the pair $(\LB(\Pat, \T), \UB(\Pat, \T))$ for periodic patterns.

\begin{proposition}\label{pr:pm-periodic-range}
  Let $\Pat \in \Alphabet^m$ be a periodic pattern. Given the data
  structure from \cref{sec:pm-periodic-ds} and the packed
  representation of $\Pat$, we can in $\bigO(m / \log_{\sigma} n +
  \log \log n)$ time compute the pair $(\LB(\Pat, \T),\allowbreak
  \UB(\Pat, \T))$.
\end{proposition}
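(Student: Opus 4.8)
The plan is to mirror the structure of the $|\Occ(\Pat,\T)|$ computation (\cref{pr:pm-occ}), upgrading each ingredient to return a range boundary rather than a mere count. By \cref{lm:pm-delta-1}, $\LB(\Pat,\T) = \LB(X,\T) + \delta(\Pat,\T)$ where $X = \Pat[1\dd 3\tau-1]$, and the starting point is to retrieve $(b_X,e_X) = (\LB(X,\T),\UB(X,\T))$ from the lookup table $\LTrange$ in $\bigO(1)$ time (computing $\Int(X)$ from the packed pattern costs $\bigO(1)$). If $b_X = e_X$, then $\Occ(X,\T)=\emptyset$ and $\LB(\Pat,\T)=\UB(\Pat,\T)=b_X$, so we return $(b_X,b_X)$. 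Otherwise we compute $\type(\Pat)$ via \cref{pr:pm-root} in $\bigO(1+m/\log_\sigma n)$ time, and branch on whether $\type(\Pat)=-1$ or $+1$; we describe the $-1$ case, the other being symmetric with the roles of $\R^{-}$ and $\R^{+}$ swapped (using the symmetric second part of the structure from \cref{sec:pm-periodic-ds}, adapted according to \cref{lm:pm-lce}).

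In the $\type(\Pat)=-1$ case, by \cref{lm:pm-delta-2} we have $\delta(\Pat,\T) = \deltaa(\Pat,\T) - \deltas(\Pat,\T)$, so the plan is to invoke \cref{pr:pm-delta-a} to obtain $\deltaa(\Pat,\T)$ in $\bigO(1+m/\log_\sigma n)$ time and \cref{pr:pm-delta-s} to obtain $\deltas(\Pat,\T)$ in $\bigO(m/\log_\sigma n + \log\log n)$ time, then set $\LB(\Pat,\T) = b_X + \deltaa(\Pat,\T) - \deltas(\Pat,\T)$. For $\UB(\Pat,\T)$, I would add $|\Occ(\Pat,\T)| = |\Occam(\Pat,\T)| + |\Occsm(\Pat,\T)| + |\Occap(\Pat,\T)| + |\Occsp(\Pat,\T)|$ computed via \cref{pr:pm-occ-a}, \cref{pr:pm-occ-s} and their symmetric counterparts (\cref{lm:occ}), exactly as in \cref{pr:pm-occ}. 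Thus $\UB(\Pat,\T) = \LB(\Pat,\T) + |\Occ(\Pat,\T)|$. Every sub-step runs within $\bigO(m/\log_\sigma n + \log\log n)$ time, so the total is $\bigO(m/\log_\sigma n + \log\log n)$ as claimed, matching the final bound $\bigO(m/\log_\sigma n + \log^\epsilon n)$ after combining with the nonperiodic case and the index core in \cref{sec:pm-final}.

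The main subtlety — and the reason for the two-phase structure flagged in the overview (\cref{rm:pm-periodic}) — concerns ``fully periodic'' patterns with $\rend{\Pat} = m+1$, i.e.\ patterns that do not extend past their periodic run. For such $\Pat$ the set $\Occa(\Pat,\T)$ can be nonempty (patterns matching longer runs occur), and the clean decomposition $\Occ(\Pat,\T) = \Occa \uplus \Occs$ from \cref{lm:occ} must be combined with \cref{lm:occ-a} (which handles precisely the $\pend{\Pat}>m$ branch) to get the count right. For the $\LB$ side, the delicate point is that the definition of $\Poss(\Pat,\T)$ uses $\T[j\dd n]\succeq \Pat$ (not $\prec$), so that the range-counting trick of aligning all $j''\in\R'^{-}$ by $\rendfull{j''}$ and counting those with $\T[\rendfull{j''}\dd n]\succeq\Pat[\pendfull{\Pat}\dd m]$ and $\rendfull{j''}-j''\geq\pendfull{\Pat}-1$ stays valid — the same ``overestimate then subtract'' phenomenon as in \cref{rm:sa-periodic}. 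I expect the proof itself to be short: all the combinatorial heavy lifting sits in Lemmas~\ref{lm:pm-lce}--\ref{lm:pm-delta-s}, and the remaining work is just assembling $(b_X + \deltaa - \deltas,\; b_X + \deltaa - \deltas + |\Occ|)$ and checking the degenerate cases ($b_X=e_X$; $\R_{s,H}=\emptyset$; the fully-periodic branch) return consistent boundaries.

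\begin{proof}[Proof sketch]
  Let $X = \Pat[1 \dd 3\tau - 1]$. Using $\LTrange$ on $\Int(X)$, in
  $\bigO(1)$ time we compute $(b_X, e_X) = (\LB(X, \T), \UB(X, \T))$.
  If $b_X = e_X$, then $\Occ(X, \T) = \emptyset$; since
  $\Occ(\Pat, \T) \subseteq \Occ(X, \T)$ and every suffix $\T[j \dd
  n]$ with $\lcp(\T[j \dd n], \Pat) \geq 3\tau - 1$ lies in
  $\Occ(X, \T)$, we have $\delta(\Pat, \T) = 0$ and hence, by
  \cref{lm:pm-delta-1}, $(\LB(\Pat, \T), \UB(\Pat, \T)) = (b_X, b_X)$,
  which we return. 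Assume now $b_X < e_X$. Using \cref{pr:pm-root}, in
  $\bigO(1 + m / \log_{\sigma} n)$ time we compute $\type(\Pat)$,
  along with $s = \Lhead(\Pat)$ and $H = \Lroot(\Pat)$. We describe
  the case $\type(\Pat) = -1$; the case $\type(\Pat) = +1$ is handled
  symmetrically using the second part of the structure from
  \cref{sec:pm-periodic-ds} (adapted according to \cref{lm:pm-lce},
  with $\Pos(\Pat, \T)$ redefined using $\succ$ in place of $\prec$
  and the roles of $\R^{-}$, $\R^{+}$ exchanged).

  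By \cref{lm:pm-delta-1}, $\LB(\Pat, \T) = b_X + \delta(\Pat, \T)$,
  and by \cref{lm:pm-delta-2}, $\delta(\Pat, \T) = \deltaa(\Pat, \T) -
  \deltas(\Pat, \T)$. Using \cref{pr:pm-delta-a}, we compute
  $\deltaa(\Pat, \T)$ in $\bigO(1 + m / \log_{\sigma} n)$ time, and
  using \cref{pr:pm-delta-s}, we compute $\deltas(\Pat, \T)$ in
  $\bigO(m / \log_{\sigma} n + \log \log n)$ time. We thus obtain
  $\LB(\Pat, \T) = b_X + \deltaa(\Pat, \T) - \deltas(\Pat, \T)$.
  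Next, by \cref{lm:occ}, $|\Occ(\Pat, \T)| = |\Occam(\Pat, \T)| +
  |\Occsm(\Pat, \T)| + |\Occap(\Pat, \T)| + |\Occsp(\Pat, \T)|$, which
  we compute using \cref{pr:pm-occ-a,pr:pm-occ-s} and their symmetric
  counterparts (handling the fully periodic case $\pend{\Pat} = m + 1$
  via \cref{lm:occ-a}) in $\bigO(m / \log_{\sigma} n + \log \log n)$
  time. Finally, we return $(\LB(\Pat, \T), \UB(\Pat, \T)) = (\LB(\Pat,
  \T), \LB(\Pat, \T) + |\Occ(\Pat, \T)|)$.

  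Every step above runs in $\bigO(m / \log_{\sigma} n + \log \log n)$
  time, so the total query time is $\bigO(m / \log_{\sigma} n + \log
  \log n)$.
\end{proof}
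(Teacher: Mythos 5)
Your proposal follows essentially the same route as the paper's proof: retrieve $(b_X,e_X)$ from $\LTrange$, determine $\type(\Pat)$ via \cref{pr:pm-root}, compute $\deltaa(\Pat,\T)$ and $\deltas(\Pat,\T)$ via \cref{pr:pm-delta-a,pr:pm-delta-s} (or their symmetric counterparts), obtain $\LB(\Pat,\T)$ via \cref{lm:pm-delta-1,lm:pm-delta-2}, and add $|\Occ(\Pat,\T)|$ (your sum of the four parts is exactly what \cref{pr:pm-occ} computes) to get $\UB(\Pat,\T)$; the extra $b_X=e_X$ check is harmless. The one step to make explicit is the $\type(\Pat)=+1$ branch: the symmetric $\deltaa(\Pat,\T)-\deltas(\Pat,\T)$ counts the block suffixes that are $\succeq\Pat$, so there $\delta(\Pat,\T)=(e_X-b_X)-(\deltaa(\Pat,\T)-\deltas(\Pat,\T))$ (the counterpart of \cref{lm:pm-delta-2}), rather than the formula $b_X+\deltaa-\deltas$ you state only for the $-1$ case; saying "symmetric with $\succ$ in place of $\prec$" gestures at this but does not close the loop.
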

\begin{proof}
  First, using \cref{pr:pm-occ} in $\bigO(m / \log_{\sigma} n + \log
  \log n)$ time we compute $|\Occ(\Pat, \T)|$. Next, using the lookup
  table $\LTrange$, in $\bigO(1)$ time we compute $(b_X, e_X) =
  (\LB(X, \T), \allowbreak \UB(X, \T))$, where $X = \Pat[1 \dd 3\tau
  {-} 1]$. Then, in $\bigO(1 + m / \log_{\sigma} n)$ time using
  \cref{pr:pm-root} we determine $\type(\Pat)$. Depending on whether
  $\type(\Pat) = -1$ or $\type(\Pat) = +1$, we use either a
  combination of \cref{pr:pm-delta-a,pr:pm-delta-s}, or their
  symmetric counterparts (more precisely, if $\type(\Pat) = +1$, we
  have $\deltaa(\Pat, \T) = |\Posa(\Pat, \T)|$ and $\deltas(\Pat, \T)
  = |\Poss(\Pat, \T)|$, where $\Posa(\Pat, \T) = \{j \in \R^{+}_{s,H}:
  \Lexp(j) \leq \Lexp(\Pat)\}$ and $\Poss(\Pat, \T) = \{j \in
  \R^{+}_{s,H} : \Lexp(j) = \Lexp(j)\text{ and }\T[j \dd n] \prec
  \Pat\}$), to compute $\deltaa(\Pat, \T)$ and $\deltas(\Pat, \T)$ in
  $\bigO(1 + m / \log_{\sigma} n)$ and $\bigO(m / \log_{\sigma} n +
  \log \log n)$ time, respectively. If $\type(\Pat) = -1$, then by
  \cref{lm:pm-delta-2} we have $\delta(\Pat, \T) = \deltaa(\Pat, \T) -
  \deltas(\Pat, \T)$.  Otherwise, by the counterpart of
  \cref{lm:pm-delta-2}, $\delta(\Pat, \T) = (e_X - b_X) -
  (\deltaa(\Pat, \T) - \deltas(\Pat, \T))$.  Finally, we return
  $(\LB(\Pat, \T), \UB(\Pat, \T)) = (b_X + \delta(\Pat, \T), b_X +
  \delta(\Pat, \T) + |\Occ(\Pat, \T)|)$ (see \cref{lm:pm-delta-1}) as
  the answer. In total, the query takes $\bigO(m / \log_{\sigma} n +
  \log \log n)$ time.
\end{proof}

\begin{remark}\label{rm:pm-periodic}
  Note the subtle difference in the type of symmetry used during the
  computation of $|\Pos(\Pat, \T)|$ and $|\Occ(\Pat, \T)|$.  When
  computing $\delta(\Pat, \T) = |\Pos(\Pat, \T)|$, by \cref{lm:pm-lce}
  we have $\Pos(\Pat, \T) \sub \R^{-}$ for any $\Pat$ satisfying
  $\type(\Pat) = -1$ (and $\Pos(\Pat, \T) \sub \R^{+}$ for $\Pat$
  satisfying $\type(\Pat) = +1$).  However, when computing
  $|\Occ(\Pat, \T)|$ for $\Pat$ satisfying $\type(\Pat) = -1$, it is
  possible that $\Occ(\Pat, \T) \cap \R^{-} \neq \emptyset$ and
  $\Occ(\Pat, \T) \cap \R^{+} \neq \emptyset$. Consequently, during
  the computation of $|\Occ(\Pat, \T)|$, we partition the output set
  $\Occa(\Pat, \T)$ (resp.\ $\Occs(\Pat, \T)$) into two subsets
  $\Occam(\Pat, \T)$ and $\Occap(\Pat, \T)$ (resp.\ $\Occsm(\Pat, \T)$
  and $\Occsp(\Pat, \T)$), but the computation is always performed
  regardless of $\type(\Pat)$, leading to two queries for each
  periodic pattern $\Pat$. During the computation of $\delta(\Pat,
  \T)$, on the other hand, the computation is performed separately for
  $\Pat$ satisfying $\type(\Pat) = -1$ and $\Pat$ satisfying
  $\type(\Pat) = +1$, without the need to partition $\Pos(\Pat, \T)$
  within each case, leading to a single query but only on the
  appropriate structure depending on $\type(\Pat)$. This is the reason
  for why the seemingly related computation of $|\Pos(\Pat, \T)|$ and
  $|\Occ(\Pat, \T)|$ is (unlike for nonperiodic patterns; see
  \cref{sec:pm-nonperiodic-pm}) described separately.
\end{remark}

\subsubsection{Construction Algorithm}\label{sec:pm-periodic-construction}

\begin{proposition}\label{pr:pm-periodic-construction}
  Given $\CountCore(\T)$, we can in $\bigO(n / \log_{\sigma} n)$ time
  augment it into a data structure from \cref{sec:pm-periodic-ds}.
\end{proposition}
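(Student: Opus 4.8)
The plan is to assemble the structure of \cref{sec:pm-periodic-ds} on top of $\CountCore(\T)$ in three stages, reusing as much as possible of the machinery already developed for periodic positions in \cref{sec:sa-periodic-ds}. First I would complete $\CountCore(\T)$ into the full index core $\SACore(\T)$ by additionally constructing the remaining components $\BVshort$ (with its rank/select support) and $\ARRshort$; since these are exactly the components of $\SACore(\T)$ absent from $\CountCore(\T)$, the relevant part of the proof of \cref{pr:sa-core-construction} builds them in $\bigO(n/\log_{\sigma} n)$ time. Then I would invoke \cref{pr:sa-periodic-construction} to augment $\SACore(\T)$ into the entire data structure of \cref{sec:sa-periodic-ds} — both its first part (for positions in $\R^{-}$) and its symmetric second part — in $\bigO(n/\log_{\sigma} n)$ time and $\bigO(n/\log_{\sigma} n)$ working space. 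In particular, this already produces the sequence $(\rlexm_i)_{i\in[1\dd q]}$ (which is computed during the construction of $\ARRrmap$ in the proof of \cref{pr:sa-periodic-construction}), together with the components enabling $\bigO(1)$-time evaluation of $\Lroot(j)$, $\rend{j}$, and $\Ltail(j)$, hence of $\rendfull{j}$, for every $j\in\R$ via \cref{pr:isa-root}.

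The only genuinely new ingredient is the data structure of \cref{pr:meta-trie} for the array $\ARRzlex[1\dd q]$. I would first materialize $\ARRzlex$: for each $i\in[1\dd q]$ we have $\rlexm_i\in\R'^{-}$, so, letting $H_i=\Lroot(\rlexm_i)$, we compute $\rendfull{\rlexm_i}=\rend{\rlexm_i}-\Ltail(\rlexm_i)$ and $|\Pow(H_i)|=|H_i|\lceil\tau/|H_i|\rceil$ in $\bigO(1)$ time using \cref{pr:isa-root}, and set $\ARRzlex[i]=\rendfull{\rlexm_i}-|\Pow(H_i)|$. Over all $i$ this takes $\bigO(q)=\bigO(n/\log_{\sigma} n)$ time. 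Crucially, as already argued in \cref{sec:pm-periodic-ds}, the monotonicity of $\Pow$ combined with the defining order of $(\rlexm_i)_{i\in[1\dd q]}$ guarantees $\T[\ARRzlex[i]\dd n]\prec\T[\ARRzlex[i']\dd n]$ whenever $i<i'$, so the array $\ARRzlex$ satisfies the sortedness precondition of \cref{pr:meta-trie}. Applying \cref{pr:meta-trie} to $\ARRzlex$ (together with the packed representation of $\T$, which is part of $\CountCore(\T)$) then takes $\bigO(q+n/\log_{\sigma} n)=\bigO(n/\log_{\sigma} n)$ time and $\bigO(n/\log_{\sigma} n)$ space. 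Finally, I would build the symmetric counterpart of this trie for the second part of the structure analogously, adapting according to \cref{lm:pm-lce}, within the same time bound.

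Summing the three stages gives $\bigO(n/\log_{\sigma} n)$ total time. I do not expect a substantial obstacle here: the argument is essentially bookkeeping over the components of \cref{sec:pm-periodic-ds}, with the one point requiring care being the observation that $\rendfull{\cdot}$ and $\Pow(\cdot)$ are $\bigO(1)$-computable and that the resulting array $\ARRzlex$ is sorted in the order required by \cref{pr:meta-trie}; but both of these facts are already available from \cref{sec:sa-periodic-ds,sec:pm-periodic-ds}.
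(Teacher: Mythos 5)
Your proposal is correct and follows essentially the same route as the paper: construct the structure of \cref{sec:sa-periodic-ds} by combining \cref{pr:sa-core-construction,pr:sa-periodic-construction} (the packed text in $\CountCore(\T)$ suffices to build the missing $\SACore(\T)$ components), then materialize $\ARRzlex$ via \cref{pr:isa-root} in $\bigO(1)$ time per entry and feed it to \cref{pr:meta-trie}, finally repeating symmetrically. The paper's proof is just a terser version of the same argument, so no further comparison is needed.
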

\begin{proof}
  First, we combine
  \cref{pr:sa-core-construction,pr:sa-periodic-construction} (recall
  that the packed representation of $\T$ is a component of
  $\CountCore(\T)$) to construct the data structure from
  \cref{sec:sa-periodic-ds} in $\bigO(n / \log_{\sigma} n)$ time.  In
  particular, this constructs $(\rlexm_i)_{i \in [1 \dd q]}$.  Using
  \cref{pr:isa-root}, we can now compute $\ARRzlex[i]$ for any $i \in
  [1 \dd q]$ in $\bigO(1)$ time.  Then, in $\bigO(n / \log_{\sigma}
  n)$ time, we construct the data structure from \cref{pr:meta-trie}.

  After the above components are constructed, we then analogously
  construct their symmetric counterparts (adapted according to
  \cref{lm:pm-lce}).
\end{proof}

\subsection{The Final Data Structure}\label{sec:pm-final}

In this section, we put together
\cref{sec:pm-core,sec:pm-nonperiodic,sec:pm-periodic} to obtain a data
structure that, given a packed representation of any pattern $\Pat \in
\Alphabet^{m}$, computes $(\LB(\Pat, \T), \UB(\Pat, \T))$ in $\bigO(m
/ \log_{\sigma} n + \log^{\epsilon} n)$ time.

The section is organized as follows. First, we introduce the
components of the data structure (\cref{sec:pm-ds}). Next, we describe
the query algorithms (\cref{sec:pm-pm}). Finally, we show the
construction algorithm (\cref{sec:pm-construction}).

\subsubsection{The Data Structure}\label{sec:pm-ds}

The data structure consists of two components:

\begin{enumerate}
\item The structure from \cref{sec:pm-nonperiodic-ds} (used to handle
  nonperiodic patterns).
\item The structure from \cref{sec:pm-periodic-ds} (used to handle
  periodic patterns).
\end{enumerate}

In total, the data structure takes $\bigO(n / \log_{\sigma} n)$ space.

\subsubsection{Implementation of Queries}\label{sec:pm-pm}

\begin{proposition}\label{pr:pm-pm}
  Given the data structure from \cref{sec:pm-ds} and the packed
  representation of any $\Pat \in \Alphabet^m$, we can in $\bigO(m /
  \log_{\sigma} n + \log^{\epsilon} n)$ time compute the pair
  $(\LB(\Pat, \T),\allowbreak \UB(\Pat, \T))$.
\end{proposition}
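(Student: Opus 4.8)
The plan is to dispatch on the periodicity of $\Pat$ using the index core and then invoke the two already-established query algorithms as black boxes. Concretely, I would first observe that the structure from \cref{sec:pm-ds} contains, as subcomponents, both $\CountCore(\T)$ (inside the structure from \cref{sec:pm-nonperiodic-ds}, via \cref{sec:pm-nonperiodic-ds}, and also inside \cref{sec:pm-periodic-ds}) and all the machinery of \cref{pr:pm-nonperiodic-range,pr:pm-periodic-range}. So the query proceeds as follows. Given the packed representation of $\Pat \in \Alphabet^m$, in $\bigO(1)$ time we check whether $m < 3\tau - 1$; if so, we call \cref{pr:pm-core-pm} on $\CountCore(\T)$ to obtain $(\LB(\Pat, \T), \UB(\Pat, \T))$ directly in $\bigO(1)$ time. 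Otherwise $m \geq 3\tau - 1$, and we use \cref{pr:pm-core-periodicity} to decide in $\bigO(1)$ time whether $\Pat$ is periodic (\cref{def:pattern-periodicity}).

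If $\Pat$ is nonperiodic, we invoke \cref{pr:pm-nonperiodic-range}, which (given the structure from \cref{sec:pm-nonperiodic-ds}) returns $(\LB(\Pat, \T), \UB(\Pat, \T))$ in $\bigO(m / \log_{\sigma} n + \log^{\epsilon} n)$ time. If $\Pat$ is periodic, we invoke \cref{pr:pm-periodic-range}, which (given the structure from \cref{sec:pm-periodic-ds}) returns the same pair in $\bigO(m / \log_{\sigma} n + \log \log n) \subseteq \bigO(m / \log_{\sigma} n + \log^{\epsilon} n)$ time (using $\epsilon \in (0,1)$, so $\log \log n = \bigO(\log^{\epsilon} n)$). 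In all three cases the output is correct by the cited propositions, and the total time is $\bigO(m / \log_{\sigma} n + \log^{\epsilon} n)$.

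Honestly, I expect there to be essentially no obstacle here: this proposition is the routine "gluing" step that assembles the case analysis, exactly parallel to how \cref{pr:sa-isa,pr:sa-sa} glue together the periodic and nonperiodic cases for $\SA$/$\ISA$ queries. The only mild subtlety worth a sentence is making sure that the periodicity test in \cref{pr:pm-core-periodicity} already handles the $m < 3\tau - 1$ boundary (it returns false there, so short patterns would be misrouted to the nonperiodic branch), which is why we must test $m < 3\tau - 1$ first and peel off short patterns via \cref{pr:pm-core-pm} before consulting \cref{pr:pm-core-periodicity}. With that ordering the three branches are exhaustive and mutually exclusive, and correctness and the time bound follow immediately from the previously established results.
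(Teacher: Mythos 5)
Your proposal is correct and matches the paper's proof essentially verbatim: both dispatch via \cref{pr:pm-core-periodicity} and \cref{pr:pm-core-pm} into \cref{pr:pm-periodic-range} or \cref{pr:pm-nonperiodic-range}, with your only (immaterial) deviation being that you peel off short patterns before the periodicity test, whereas the paper treats $m < 3\tau - 1$ as a subcase of the nonperiodic branch. The correctness and time bounds follow identically in both orderings.
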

\begin{proof}
  First, using \cref{pr:pm-core-periodicity}, in $\bigO(1)$ time we
  check if $\Pat$ is periodic. If so, we obtain $(\LB(\Pat,
  \T),\allowbreak \UB(\Pat, \T))$ in $\bigO(m / \log_{\sigma} n + \log
  \log n)$ time using \cref{pr:pm-periodic-range}. Otherwise (i.e., if
  $\Pat$ is not periodic), we consider two cases, depending on whether
  it holds $m < 3\tau - 1$.  If so, then we obtain $(\LB(\Pat,
  \T),\allowbreak \UB(\Pat, \T))$ in $\bigO(1)$ time using
  \cref{pr:pm-core-pm}. Otherwise, we obtain $(\LB(\Pat,
  \T),\allowbreak \UB(\Pat, \T))$ in $\bigO(m / \log_{\sigma} n +
  \log^{\epsilon} n)$ time using \cref{pr:pm-nonperiodic-range}.
\end{proof}

\subsubsection{Construction Algorithm}\label{sec:pm-construction}

\begin{proposition}\label{pr:pm-construction}
  Given the packed representation of $\T \in \Alphabet^n$, we can
  construct the data structure from \cref{sec:pm-ds} in $\bigO(n
  \min(1, \log \sigma / \sqrt{\log n}))$ time and
  $\bigO(n / \log_{\sigma} n)$ working space.
\end{proposition}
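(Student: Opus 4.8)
The plan is to assemble the structure of \cref{sec:pm-ds} by constructing its two sub-structures one after another, while reusing the shared index core. First, from the packed representation of $\T$, I would build $\CountCore(\T)$ via \cref{pr:pm-core-construction}; this costs $\bigO(n/\log_\sigma n)$ time and $\bigO(n/\log_\sigma n)$ space. By the definition in \cref{sec:pm-core-ds}, $\CountCore(\T)$ is a sub-collection of the components appearing in both \cref{sec:pm-nonperiodic-ds} and \cref{sec:pm-periodic-ds}, so once built it need not be recomputed.

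Next, I would invoke \cref{pr:pm-nonperiodic-construction} to augment $\CountCore(\T)$ into the data structure of \cref{sec:pm-nonperiodic-ds}; this takes $\bigO(n\min(1,\log\sigma/\sqrt{\log n}))$ time and $\bigO(n/\log_\sigma n)$ working space. When this step finishes, the auxiliary working memory is freed, leaving only the constructed structure, which has size $\bigO(n/\log_\sigma n)$ and still contains $\CountCore(\T)$ as a sub-component. I then invoke \cref{pr:pm-periodic-construction} on $\CountCore(\T)$ to obtain the data structure of \cref{sec:pm-periodic-ds} in an additional $\bigO(n/\log_\sigma n)$ time; its construction runs within $\bigO(n/\log_\sigma n)$ working space (this is clear from the sub-procedures used in its proof, namely \cref{pr:sa-core-construction,pr:sa-periodic-construction} and \cref{pr:meta-trie}), and the already-built nonperiodic structure it must coexist with occupies only $\bigO(n/\log_\sigma n)$, so the peak space across the whole procedure stays $\bigO(n/\log_\sigma n)$, as required.

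It then remains to add up the running times, namely $\bigO(n/\log_\sigma n) + \bigO(n\min(1,\log\sigma/\sqrt{\log n})) + \bigO(n/\log_\sigma n)$. Here I would observe that $n/\log_\sigma n = n\log\sigma/\log n$, and since $\sigma < n^{1/7}$ gives $\log\sigma \le \log n$ while also $\log n \ge \sqrt{\log n}$, we get both $n/\log_\sigma n \le n$ and $n/\log_\sigma n \le n\log\sigma/\sqrt{\log n}$; hence $n/\log_\sigma n \in \bigO(n\min(1,\log\sigma/\sqrt{\log n}))$. Therefore the total time simplifies to $\bigO(n\min(1,\log\sigma/\sqrt{\log n}))$, matching the claim.

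Since all the real technical content lives in \cref{pr:pm-core-construction,pr:pm-nonperiodic-construction,pr:pm-periodic-construction}, there is no genuine obstacle; the only points requiring care are bookkeeping ones: confirming that $\CountCore(\T)$ truly is the common sub-structure so that it is built only once, and verifying that the working-space budgets of the three stages combine to $\bigO(n/\log_\sigma n)$ rather than accumulating — which follows because each stage frees its scratch memory and each finished component has size $\bigO(n/\log_\sigma n)$.
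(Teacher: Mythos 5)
Your proposal is correct and follows essentially the same route as the paper: build $\CountCore(\T)$ once via \cref{pr:pm-core-construction}, then augment it using \cref{pr:pm-nonperiodic-construction,pr:pm-periodic-construction}, and absorb the $\bigO(n/\log_{\sigma} n)$ terms into $\bigO(n\min(1,\log \sigma/\sqrt{\log n}))$. The extra bookkeeping you spell out (the shared core and the non-accumulating working space) is consistent with what the paper's proof asserts implicitly.
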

\begin{proof}
  First, from a packed representation of $\T$, we construct
  $\CountCore(\T)$ in $\bigO(n / \log_{\sigma} n)$ time using
  \cref{pr:pm-core-construction}. Then, using
  \cref{pr:pm-nonperiodic-construction,pr:pm-periodic-construction},
  we augment $\CountCore(\T)$ into the two components of the structure
  from \cref{sec:pm-ds} in $\bigO(n \min(1, \log \sigma / \sqrt{\log
  n}))$ and $\bigO(n / \log_{\sigma} n)$ time (respectively) and using
  $\bigO(n / \log_{\sigma} n)$ working space.  The overall runtime is
  thus $\bigO(n \min(1, \log \sigma / \sqrt{\log n}))$.
\end{proof}

\subsection{Summary}\label{sec:pm-summary}

By combining \cref{pr:pm-pm} and \cref{pr:pm-construction}
we obtain the following final result of this section.

\begin{theorem}\label{th:pm}
  Given any constant $\epsilon \in (0,1)$ and the packed
  representation of a text $\T \in \Alphabet^n$ with $2 \leq \sigma <
  n^{1/7}$, we can in $\bigO(n \min(1, \log \sigma / \sqrt{\log n}))$
  time and $\bigO(n / \log_{\sigma} n)$ working space construct a data
  structure of size $\bigO(n/\log_{\sigma} n)$ that, given the packed
  representation of any $\Pat \in \Alphabet^m$, returns the pair
  $(\LB(\Pat, \T), \UB(\Pat, \T))$ (and hence, in particular, the
  value $|\Occ(\Pat, \T)|$) in $\bigO(m / \log_{\sigma} n +
  \log^\epsilon n)$ time.
\end{theorem}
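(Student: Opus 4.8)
The plan is to assemble the components developed in \cref{sec:pm-core,sec:pm-nonperiodic,sec:pm-periodic} and to invoke the two summary propositions already established in this section, namely \cref{pr:pm-pm} (for the query time) and \cref{pr:pm-construction} (for the construction time and working space). Concretely, the data structure is the one described in \cref{sec:pm-ds}: the structure of \cref{sec:pm-nonperiodic-ds}, used to answer queries for nonperiodic patterns, together with the structure of \cref{sec:pm-periodic-ds}, used for periodic patterns. As argued there, both components occupy $\bigO(n/\log_{\sigma} n)$ space, so the total size is $\bigO(n/\log_{\sigma} n)$, establishing the claimed size bound.

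For the construction, I would simply apply \cref{pr:pm-construction}: starting from the packed representation of $\T$, one first builds $\CountCore(\T)$ in $\bigO(n/\log_{\sigma} n)$ time via \cref{pr:pm-core-construction}, and then augments it into the nonperiodic and periodic components using \cref{pr:pm-nonperiodic-construction} and \cref{pr:pm-periodic-construction}. The only term in the whole pipeline exceeding $\bigO(n/\log_{\sigma} n)$ is the construction of the prefix rank/selection structure inside the nonperiodic component, which by \cref{th:wavelet-tree} costs $\bigO(n\min(1,\log\sigma/\sqrt{\log n}))$ time; every other step runs in $\bigO(n/\log_{\sigma} n)$ time, and the working space never exceeds $\bigO(n/\log_{\sigma} n)$ words. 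This yields the claimed construction bounds.

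For the query, I would invoke \cref{pr:pm-pm}: given the packed representation of $\Pat$, one first uses \cref{pr:pm-core-periodicity} to decide in $\bigO(1)$ time whether $\Pat$ is periodic, and then routes the computation to one of three subroutines --- \cref{pr:pm-periodic-range} if $\Pat$ is periodic, \cref{pr:pm-core-pm} if $\Pat$ is nonperiodic and $m < 3\tau - 1$, and \cref{pr:pm-nonperiodic-range} otherwise --- each of which returns $(\LB(\Pat,\T),\UB(\Pat,\T))$ in $\bigO(m/\log_{\sigma} n + \log^{\epsilon} n)$ time. Since $|\Occ(\Pat,\T)| = \UB(\Pat,\T) - \LB(\Pat,\T)$, the same call also yields the occurrence count. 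Combining the three bounds gives the statement.

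I do not expect a genuine obstacle here, since all the heavy combinatorial and algorithmic work has already been carried out in the preceding subsections; the step that required the most care is the periodic branch, where (as explained in \cref{rm:pm-periodic}) the query is computed in two stages --- first $|\Occ(\Pat,\T)|$, then a separate $\type$-dependent computation of $\delta(\Pat,\T)$ --- because a single periodic pattern may have occurrences of both types, so the $\Occ$-based and $\Pos$-based bookkeeping cannot be merged as they are for nonperiodic patterns. Assembling the pieces as above completes the proof.
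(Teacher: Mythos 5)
Your proposal is correct and follows essentially the same route as the paper: \cref{th:pm} is obtained there by directly combining \cref{pr:pm-pm} (query) with \cref{pr:pm-construction} (construction and working space), with the size bound coming from the components listed in \cref{sec:pm-ds}, exactly as you describe. The additional remarks on the periodic branch match \cref{rm:pm-periodic} and do not change the argument.
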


By combining the above result with \cref{th:sa}, we moreover obtain
the following result.

\begin{theorem}\label{th:pm-reporting}
  Given any constant $\epsilon \in (0,1)$ and the packed
  representation of a text $\T \in \Alphabet^n$ with $2 \leq \sigma <
  n^{1/7}$, we can in $\bigO(n \min(1, \log \sigma / \sqrt{\log n}))$
  time and $\bigO(n / \log_{\sigma} n)$ working space construct a data
  structure of size $\bigO(n/\log_{\sigma} n)$ that, given the packed
  representation of any $\Pat \in \Alphabet^m$, returns the set
  $\Occ(\Pat, \T)$ in $\bigO(m / \log_{\sigma} n + (|\Occ(\Pat, \T)| +
  1)\log^\epsilon n)$ time.
\end{theorem}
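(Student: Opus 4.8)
The plan is to combine \cref{th:pm} with \cref{th:sa} in the obvious way: build both structures simultaneously during preprocessing, and at query time first use the pattern-matching index to obtain the range $\SA[b \dd e)$ of all suffixes prefixed with $\Pat$, then iterate over $i \in [b \dd e)$ and invoke the $\SA$ query from \cref{th:sa} to recover each occurrence $\SA[i]$. Since $\UB(\Pat,\T) - \LB(\Pat,\T) = |\Occ(\Pat,\T)|$, this enumerates exactly the set $\Occ(\Pat,\T)$, with no false positives or omissions.

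For the bookkeeping: constructing the structure of \cref{th:pm} takes $\bigO(n \min(1, \log\sigma/\sqrt{\log n}))$ time and $\bigO(n/\log_\sigma n)$ working space, producing a structure of size $\bigO(n/\log_\sigma n)$; constructing the structure of \cref{th:sa} has the same bounds. Running both constructions (from the shared packed representation of $\T$) therefore stays within $\bigO(n \min(1, \log\sigma/\sqrt{\log n}))$ time, $\bigO(n/\log_\sigma n)$ working space, and $\bigO(n/\log_\sigma n)$ final size, as claimed. For the query: the call to \cref{th:pm} costs $\bigO(m/\log_\sigma n + \log^\epsilon n)$ time and returns $(b,e)$; then there are exactly $|\Occ(\Pat,\T)| = e - b$ calls to the $\SA$ query, each costing $\bigO(\log^\epsilon n)$ time by \cref{th:sa}. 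Summing gives $\bigO(m/\log_\sigma n + \log^\epsilon n + |\Occ(\Pat,\T)| \cdot \log^\epsilon n) = \bigO(m/\log_\sigma n + (|\Occ(\Pat,\T)| + 1)\log^\epsilon n)$, matching the statement. The $+1$ in the bound absorbs the single $\log^\epsilon n$ term from the range query even when $\Occ(\Pat,\T) = \emptyset$.

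There is essentially no obstacle here — the only thing worth stating carefully is the correctness of the enumeration, namely that $\{\SA[i] : i \in (\LB(\Pat,\T) \dd \UB(\Pat,\T)]\} = \Occ(\Pat,\T)$, which is exactly the identity recorded in \cref{sec:prelim} immediately after the definitions of $\LB$, $\UB$, and $\Occ$. Thus the proof is a one-paragraph composition of the two preceding theorems, and I would present it as such.

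\begin{proof}
  We build and store both the structure of \cref{th:pm} and the
  structure of \cref{th:sa} for $\T$. By those theorems, this takes
  $\bigO(n \min(1, \log \sigma / \sqrt{\log n}))$ time and $\bigO(n /
  \log_{\sigma} n)$ working space, and the resulting combined
  structure has size $\bigO(n / \log_{\sigma} n)$.

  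Given the packed representation of $\Pat \in \Alphabet^m$, we first
  apply \cref{th:pm} to compute the pair $(b, e) = (\LB(\Pat, \T),
  \UB(\Pat, \T))$ in $\bigO(m / \log_{\sigma} n + \log^{\epsilon} n)$
  time. By the identity $\Occ(\Pat, \T) = \{\SA[i] : i \in (\LB(\Pat,
  \T) \dd \UB(\Pat, \T)]\}$ (see \cref{sec:prelim}), it remains to
  output $\SA[i]$ for every $i \in (b \dd e]$; there are exactly $e -
  b = |\Occ(\Pat, \T)|$ such indices. For each of them, we invoke the
  $\SA$ query of \cref{th:sa}, which takes $\bigO(\log^{\epsilon} n)$
  time. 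The total query time is therefore $\bigO(m / \log_{\sigma} n +
  \log^{\epsilon} n + |\Occ(\Pat, \T)| \cdot \log^{\epsilon} n) =
  \bigO(m / \log_{\sigma} n + (|\Occ(\Pat, \T)| + 1)
  \log^{\epsilon} n)$.
\end{proof}
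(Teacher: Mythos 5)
Your proposal is correct and matches the paper's approach exactly: the paper derives \cref{th:pm-reporting} precisely by combining \cref{th:pm} (to obtain the range $(\LB(\Pat,\T) \dd \UB(\Pat,\T)]$) with \cref{th:sa} (one $\SA$ query per occurrence), which is the composition you spell out. The bookkeeping of construction time, space, and query time is also as in the paper.
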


By observing that the dominating operations in the above index are
prefix rank and selection queries, we obtain the following more
general result.

\begin{theorem}\label{th:pm-general}
  Consider a data structure answering prefix rank and selection
  queries that, for any string of length $m$ over alphabet
  $\Alphabet^\ell$, achieves the following complexities:
  \begin{enumerate}
  \item Space usage $S(m, \ell, \sigma)$,
  \item Preprocessing time $P_t(m, \ell, \sigma)$,
  \item Preprocessing space $P_s(m, \ell, \sigma)$,
  \item Query time $Q(m, \ell, \sigma)$.
  \end{enumerate}
  For every $\T \in \Alphabet^n$ with $2 \leq \sigma < n^{1/7}$, there
  exist $m = \bigO(n/\log_{\sigma} n)$ and $\ell = \bigO(\log_{\sigma}
  n)$ such that, given the packed representation of $\T$, we can in
  $\bigO(n / \log_{\sigma} n + P_t(m, \ell, \sigma))$ time and
  $\bigO(n / \log_{\sigma} n + P_s(m, \ell,\sigma))$ working space
  build a structure of size $\bigO(n/\log_{\sigma} n + S(m, \ell,
  \sigma))$~that, given the packed representation of any $\Pat \in
  \Alphabet^p$, performs the following queries:
  \begin{itemize}
  \item Return $(\LB(\Pat, \T), \UB(\Pat, \T))$ in $\bigO(p /
    \log_{\sigma} n + \log \log n + Q(m, \ell, \sigma))$ time,
  \item Return $\Occ(\Pat, \T)$ in $\bigO(p / \log_{\sigma} n +
    (|\Occ(\Pat, \T)| + 1)(\log \log n + Q(m, \ell, \sigma)))$ time.
  \end{itemize}
\end{theorem}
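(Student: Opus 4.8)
The plan is to derive \cref{th:pm-general} as a straightforward generalization of \cref{th:pm}, replacing every black-box invocation of \cref{th:wavelet-tree} with the abstract prefix rank/selection structure whose complexities are $S$, $P_t$, $P_s$, and $Q$. First I would trace through the entire construction pipeline of \cref{sec:pm} and observe that prefix rank/selection queries are used in exactly two places: on the sequence $W[1\dd n']$ of length-$3\tau$ strings (for nonperiodic patterns, via \cref{pr:pm-nonperiodic-range} and the $\rank{W}{\revstr{X}}{\cdot}$ calls in \cref{lm:pm-nonperiodic}), and — implicitly, since the periodic case inherits the structure of \cref{sec:sa-periodic-ds} — nowhere else for the periodic case beyond the components already built in $o(n)$ time. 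Since $3\tau = \bigO(\log_\sigma n)$, we have $\ell = \bigO(\log_\sigma n)$, and $n' = \bigO(n/\log_\sigma n)$ gives $m = \bigO(n/\log_\sigma n)$; the hypothesis $m \ge \sigma^\ell$ needed in \cref{th:wavelet-tree} is replaced by simply feeding whatever $W$ we have to the abstract structure, so I would state the theorem with $W \in (\Alphabet^\ell)^{\le m}$ and no size constraint, exactly as in the premise of \cref{th:pm-general}.

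Next I would reorganize the accounting. Everything in $\CountCore(\T)$, the periodic-pattern structure of \cref{sec:pm-periodic-ds}, the meta-tries of \cref{pr:meta-trie}, the range-query structure of \cref{pr:range-queries}, and the string-synchronizing-set machinery costs $\bigO(n/\log_\sigma n)$ time and space by \cref{pr:pm-core-construction,pr:sa-periodic-construction,pr:pm-periodic-construction} and \cref{pr:sa-core-construction} — none of these depend on the prefix rank/selection implementation. The only term that changes is the cost of building and storing the structure over $W$: instead of $\bigO(n\min(1,\log\sigma/\sqrt{\log n}))$ time and $\bigO(n/\log_\sigma n)$ space, it becomes $\bigO(P_t(m,\ell,\sigma))$ time, $\bigO(P_s(m,\ell,\sigma))$ working space, and $\bigO(S(m,\ell,\sigma))$ space (recall $S$ is measured in machine words, matching the $\bigO(n/\log_\sigma n)$-word convention used throughout). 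This gives the claimed bounds $\bigO(n/\log_\sigma n + P_t)$, $\bigO(n/\log_\sigma n + P_s)$, and $\bigO(n/\log_\sigma n + S)$. For the query side, I would revisit \cref{pr:pm-nonperiodic-range,pr:pm-periodic-range,pr:pm-pm}: each performs $\bigO(1)$ prefix rank queries on $W$, plus $\bigO(1)$ lookup-table accesses, $\bigO(1)$ rank/select on $\bigO(1)$-time bitvectors, $\bigO(1)$ meta-trie queries costing $\bigO(p/\log_\sigma n + \log\log n)$, and $\bigO(\log\log n)$-time range queries. Summing, the $\log^\epsilon n$ term from \cref{th:wavelet-tree} is replaced by $\bigO(Q(m,\ell,\sigma))$, yielding $\bigO(p/\log_\sigma n + \log\log n + Q(m,\ell,\sigma))$ for the range query. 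The reporting bound follows by composing with \cref{th:sa-general} exactly as \cref{th:pm-reporting} composes \cref{th:pm} with \cref{th:sa}: each of the $|\Occ(\Pat,\T)|$ positions costs one $\SA$ query, whose time is $\bigO(\log\log n + Q(m,\ell,\sigma))$ by \cref{th:sa-general}, giving the stated $\bigO(p/\log_\sigma n + (|\Occ(\Pat,\T)|+1)(\log\log n + Q(m,\ell,\sigma)))$.

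The proof itself is then short: restate the components of \cref{sec:pm-ds} with the abstract structure plugged in for $W$, cite the construction and query propositions of \cref{sec:pm} observing that each prefix rank/selection call incurs $Q$ instead of $\log^\epsilon n$ and each build incurs $(P_t,P_s,S)$ instead of the concrete bounds, and invoke \cref{th:sa-general} for the reporting part. I expect the main obstacle — modest as it is — to be bookkeeping discipline: verifying that there is genuinely no other hidden dependence on the prefix rank/selection implementation (in particular, that the periodic-pattern data structure of \cref{sec:pm-periodic-ds} and the suffix-array components it reuses never call \cref{th:wavelet-tree}, which one confirms by noting \cref{sec:sa-periodic} uses only \cref{pr:range-queries} and bitvector rank/select), and that the $m \ge \sigma^\ell$ precondition of \cref{th:wavelet-tree} was only an artifact of that particular instantiation and plays no role in the reduction. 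Once that is checked, the theorem is immediate.
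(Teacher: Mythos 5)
Your proposal is correct and follows essentially the same route as the paper, which treats \cref{th:pm-general} as immediate from the construction of \cref{sec:pm}: the only component depending on the prefix rank/selection implementation is the structure over $W[1\dd n']$ (with $m=n'=\bigO(n/\log_\sigma n)$, $\ell=3\tau=\bigO(\log_\sigma n)$), all remaining components cost $\bigO(n/\log_\sigma n)$, and the reporting bound comes from composing with \cref{th:sa-general} exactly as \cref{th:pm-reporting} composes \cref{th:pm} with \cref{th:sa}. Your bookkeeping (including the observation that the periodic case uses only \cref{pr:range-queries} and bitvector rank/select, and that the $m\ge\sigma^\ell$ hypothesis is an artifact of the concrete instantiation in \cref{th:wavelet-tree}) matches the paper's intent.
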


\section{Suffix Tree Queries}\label{sec:st}

\begin{table}[t!]
  \centering
  \setlength{\tabcolsep}{4pt}
  \begin{tabular}{ll}
    \toprule Operation & Description\\
    \midrule

    \hyperref[sec:st-isleaf]{$\isleaf(v)$} &
      Return true if and only if $v$ is a leaf
      \\

    \hyperref[sec:st-index]{$\ind(v)$} &
      Any position $j \in \Occ(\str(v), \T)$
      \\

    \hyperref[sec:st-findleaf]{$\findleaf(j)$} &
      The leaf $v$ satisfying $\str(v) = \T[j \dd n]$
      \\

    \hyperref[sec:st-count]{$\cnt(v)$} &
      The number of leaves in the subtree rooted in $v$
      \\

    \hyperref[sec:st-sdepth]{$\sdepth(v)$} &
      The string-depth of node $v$, i.e., $|\str(v)|$
      \\

    \hyperref[sec:st-parent]{$\parent(v)$} &
      The parent of $v \neq \Root(\ST)$
      \\

    \hyperref[sec:st-firstchild]{$\firstchild(v)$} &
      The leftmost child of $v$, or $\nil$ if $v$ is a leaf
      \\

    \hyperref[sec:st-lastchild]{$\lastchild(v)$} &
      The rightmost child of $v$, or $\nil$ if $v$ is a leaf
      \\

    \hyperref[sec:st-rightsibling]{$\rightsibling(v)$} &
      The right sibling of $v$, or $\nil$ if there is no such node
      \\

    \hyperref[sec:st-leftsibling]{$\leftsibling(v)$} &
      The left sibling of $v$, or $\nil$ if there is no such node
      \\

    \hyperref[sec:st-slink]{$\slink(v)$} &
      A node $v'$ satisfying $\str(v') = \str(v)[2 \dd |\str(v)|]$
      \\

    \hyperref[sec:st-slink-iter]{$\slink(v, i)$} &
      A node $v'$ satisfying $\str(v') = \str(v)[i{+}1 \dd |\str(v)|]$,
      i.e., iterated $\slink$
      \\

    \hyperref[sec:st-wlink]{$\wlink(v, c)$} &
      A node $v'$ satisfying $\str(v') = c\cdot \str(v)$, or $\nil$ if
      there is no such node~\tablefootnote{Our data structure supports
      also a slightly stronger operation $\wlinkprim(v, c)$ (see
      \cref{pr:st-wlinkprim}), that returns a node $v'$ satisfying
      $\repr(v') = (\LB(c \cdot \str(v), \T), \UB(c \cdot \str(v),
      \T))$, if such node exists. This generalizes $\wlink(v, c)$,
      since $\wlink(v, c) \neq \nil$ holds if and only if
      $\wlinkprim(v, c) \neq \nil$ and $\sdepth(\wlinkprim(v, c)) =
      \sdepth(v) + 1$. Therefore, we can use $\wlinkprim(v, c)$ to
      compute $\wlink(v, c)$. Note, however, that it is possible that
      $\wlink(v, c) = \nil$ and yet $\wlinkprim(v, c) \neq \nil$. In
      that case, there exists a node corresponding to $\wlink(v, c)$
      in the suffix \emph{trie} of $\T$, but in $\ST$ this node is not
      explicit.}
      \\

    \hyperref[sec:st-child]{$\child(v, c)$} &
      A child $v'$ of $v$ satisfying $\str(v')[|\str(v)| {+} 1] \,{=}\,
      c$, or $\nil$ if there is no such node
      \\

    \hyperref[sec:st-pred]{$\pred(v, c)$} &
      A node $\child(v, c')$, where $c' = \max \{c'' \in [0 \dd c) :
      \child(v, c'') \neq \nil\}$ (or $\nil$)
      \\

    \hyperref[sec:st-letter]{$\letter(v, i)$} &
      The $i$th leftmost character of $\str(v)$
      \\

    \hyperref[sec:st-wa]{$\WA(v, d)$} &
      The most shallow ancestor of $v$ satisfying $\sdepth(v) \geq
      d$
      \\

    \hyperref[sec:st-lca]{$\LCA(u, v)$} &
      The lowest common ancestor of nodes $u$ and $v$
      \\

    \hyperref[sec:st-isancestor]{$\isancestor(u, v)$} &
      Return true if and only if $u$ is an ancestor of $v$
      \\

    \bottomrule
  \end{tabular}
  \caption{Operations on suffix tree $\ST$ supported by our data
    structure.}\label{tab:st-operations}
\end{table}

Let $\epsilon \in (0, 1)$ be any fixed constant and let $\T \in
\Alphabet^n$, where $2 \leq \sigma < n^{1/7}$. Let $\ST$ denote the
suffix tree of $\T$, i.e., a compact trie of the set $\{\T[1 \dd n],
\T[2 \dd n], \dots, \T[n]\}$.  In this section, we show how given the
packed representation of $\T$, to construct in $\bigO(n \min(1, \log
\sigma / \sqrt{\log n}))$ time and $\bigO(n / \log_{\sigma} n)$
working space a representation of $\ST$ occupying $\bigO(n /
\log_{\sigma} n)$ space, and supporting each of the operations listed
in \cref{tab:st-operations} in $\bigO(\log^{\epsilon} n)$
time.~\footnote{Similarly as in prior CST
implementations~\cite{cst,FischerMN09,RussoNO11,Gagie2020,%
BoucherCGHMNR21,CaceresN22}, the time complexity of some
operations is actually $\bigO(1)$. We also note that some
prior CST implementations
(e.g.,~\cite{cst,RussoNO11,Gagie2020,CaceresN22}) support two
additional operations called \emph{tree depth} and \emph{tree level
ancestor} which are analogous to $\sdepth(v)$ and $\WA(v, d)$ but
with distance to the root defined by the number of ancestor nodes
rather than the total length of edge labels.} We also derive a general
reduction depending on prefix rank and selection queries.

As in \cref{sec:sa,sec:pm}, we let $\tau = \lfloor \mu \log_{\sigma} n
\rfloor$, where $\mu$ is any positive constant smaller than $\frac16$
such that $\tau \geq 1$, be fixed for the duration of this section.
Throughout, we also use $\R$ as a shorthand for $\R(\tau, \T)$.

\begin{definition}\label{def:node-periodicity}
  Let $v$ be an explicit node of $\ST$. The node $v$ is
  said to be \emph{periodic} if $\str(v)$ is periodic
  (\cref{def:pattern-periodicity}). Otherwise, $v$ is
  \emph{nonperiodic}.
\end{definition}

\bfparagraph{Representation of a Node}

For any explicit node $v$ of $\ST$ we denote $\Occ(v) := \Occ(\str(v),
\T)$. In our data structure we represent each explicit node $v$ of
$\ST$ in one of two ways:
\begin{itemize}
\item A pair $(j, \ell)$, where $j \in \Occ(v)$ (i.e., $j$ is the
  starting position of some occurrence of $\str(v)$ in $\T$) and $\ell
  = \sdepth(v)$).
\item A pair $(\lrank(v),\rrank(v))$. Note that since $v$ is a node of
  suffix tree, in this special case we have $(\lrank(v), \rrank(v)) =
  (\LB(\str(v), \T), \UB(\str(v), \T))$.  Thus, letting $(b, e) =
  (\lrank(v), \rrank(v))$, we then have $\{\SA[i]\}_{i \in (b \dd e]}
  = \Occ(v)$. Note also that $b < e$.
\end{itemize}
In most cases, the latter representation leads to a more convenient
implementation. Thus, we adopt it as a default and denote $\repr(v) :=
(\lrank(v), \rrank(v))$ (while using the first one mostly as a
temporary internal representation). We also define $\repr(\nil) = (0,
0)$.

\bfparagraph{Organization}

The structure and query algorithms for a node $v$ are different
depending on whether $v$ is periodic (\cref{def:node-periodicity}).
Our description is thus split as follows. First (\cref{sec:st-core}),
we describe the set of data structures called collectively the index
``core'' that enables efficiently checking if $v$ is periodic (it is
also used to perform operations on nodes with very small depth and
contains some common components utilized by the remaining parts). In
the following two parts (\cref{sec:st-nonperiodic,sec:st-periodic}),
we describe structures handling each of the two cases. All ingredients
are then put together in \cref{sec:st-final}. Finally, we present our
result in the general form (\cref{sec:st-summary}).

\subsection{The Index Core}\label{sec:st-core}

In this section, we describe a data structure used to check in
$\bigO(1)$ time if a given node is periodic. It also lets us perform
operations concerning nodes at depth smaller than $3\tau - 1$ in
$\bigO(1)$ time.

The section is organized as follows. First, we introduce the
components of the data structure (\cref{sec:st-core-ds}).  We then
show how using this structure to implement some basic navigational
routines (\cref{sec:st-core-nav}). Next, we describe the query
algorithms for the fundamental operations
(\cref{sec:st-core-lca,sec:st-core-child,%
sec:st-core-pred,sec:st-core-wa}).  Finally, we show the
construction algorithm (\cref{sec:st-core-construction}).

\subsubsection{The Data Structure}\label{sec:st-core-ds}

\bfparagraph{Definitions}

For any $k \geq 1$, let $\Su_{k} := \{S \in \Alphabet^k : S\text{
occurs in }\T\}$ denote the set of length-$k$ substrings of $\T$.
Let $\Tshort$ denote the compact trie of $\Su_{3\tau-1}$.

\bfparagraph{Components}

The index core, denoted $\STCore(\T)$, consists of two components:
\begin{enumerate}
\item The index core $\SACore(\T)$ (\cref{sec:sa-core-ds}). It takes
  $\bigO(n/\log_{\sigma} n)$ space.
\item The compact trie $\Tshort$.  All nodes of
  $\Tshort$ are stored in an array and pointers to nodes are
  implemented as indexes to this array. Each node $v$ of $\Tshort$
  stores the string $\str(v)$ encoded as an integer $\Int(\str(v))$,
  the pointer $\parent(v)$, the value $\sdepth(v)$, and the doubly
  linked list containing pointers to all children of $v$, in ascending
  order of the first letter on the connecting edge. Since each node
  $v$ of $\Tshort$ corresponds to a unique string $S \in
  \Alphabet^{\leq 3\tau-1}$, in total $\Tshort$ needs
  $\bigO(\sigma^{3\tau}) = \bigO(\sqrt{n})$ space.  The trie $\Tshort$
  is augmented with the following structures:
  \begin{enumerate}[label=(\alph*)]
  \item A linear-space data structure answering the $\LCA$ queries in
    $\Tshort$ in $\bigO(1)$ time~\cite{BenderF00}. By the above bound,
    the data structure uses $\bigO(\sqrt{n})$ space.
  \item A lookup table $\LTchild$ that for each edge of $\Tshort$
    connecting a node $v$ to its parent $p$ and labeled with a string
    starting with the character $c$, maps the pair $(i_p, c)$ to
    $i_v$, where $i_p$ and $i_v$ are pointers to $p$ and
    $v$. $\Tshort$ has less than $2\sigma^{3\tau-1}$ nodes and thus
    $i_v < 2\sigma^{3\tau-1}$. On the other hand, $c \in
    \Alphabet$. Thus, each pair $(i_v, c)$ can be (in $\bigO(1)$ time)
    injectively mapped to an integer not exceeding $2\sigma^{3\tau} =
    \bigO(\sqrt{n})$ and hence $\LTchild$ needs $\bigO(n /
    \log_{\sigma} n)$ space.
  \item A lookup table $\LTwa$ that for every node $v$ of $\Tshort$
    and every $d \in [0 \dd 3\tau-1)$, maps the pair $(i_v, d)$ to
    $i_{u}$, where $u = \WA(v, d)$ and $i_v$ (resp.\ $i_u$) is the
    pointer to $v$ (resp.\ $u$). Since $i_v < 2\sigma^{3\tau-1}$ and
    $\tau = \bigO(\log n)$, each pair $(i_v, d)$ can be injectively
    mapped to an integer not exceeding $\bigO(\sqrt{n} \log n)$ and
    hence $\LTwa$ needs $\bigO(n / \log_{\sigma} n)$ space.
  \item An array storing the pointers to leaves of $\Tshort$ in the
    left-to-right order. Since the number of leaves is
    $\bigO(\sigma^{3\tau - 1})$, the array needs $\bigO(n /
    \log_{\sigma} n)$ space.
  \end{enumerate}
\end{enumerate}

In total, $\STCore(\T)$ takes $\bigO(n/\log_{\sigma} n)$ space.

\begin{remark}
  Note that $\Tshort$ corresponds to $\ST$ truncated at depth
  $3\tau-1$. The key reason motivating this definition is that the
  pair $(b, e) = (\LB(\str(v), \T), \UB(\str(v), \T))$ for every node
  $v$ of $\Tshort$ at depth $3\tau - 1$ that corresponds to an
  implicit node of $\ST$ (in the middle of an edge connecting some
  node $v'$ of $\ST$ to one of its children $v''$) satisfies $(b, e) =
  (\LB(\str(v''), \T), \UB(\str(v''), \T))$.  In all our uses, this is
  sufficient, and the value $\sdepth(v'')$ is never needed.
\end{remark}

\subsubsection{Navigation Primitives}\label{sec:st-core-nav}

\bfparagraph{Mapping from $\ST$ to $\Tshort$}

For any explicit node $v$ of $\ST$, we define $\mapToShort(v)$ as the
deepest explicit node $u$ of $\Tshort$ such that $\str(u)$ is a prefix
of $\str(v)$.

\begin{lemma}\label{lm:st-core-map}
  Let $v$ be an explicit node of $\ST$ and $u = \mapToShort(v)$. Then,
  $\sdepth(v) \geq 3\tau - 1$ holds if and only if $\sdepth(u) = 3\tau
  - 1$. Moreover,
  \begin{enumerate}
  \item\label{lm:st-core-map-it-1} If $\sdepth(v) \geq 3\tau - 1$ then
    $\str(u) = \str(v)[1 \dd 3\tau {-} 1]$.
  \item\label{lm:st-core-map-it-2} Otherwise (i.e., if $\sdepth(v) <
    3\tau - 1$), $\str(u) = \str(v)$.
  \end{enumerate}
\end{lemma}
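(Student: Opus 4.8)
The plan is to argue directly from the definition of $\mapToShort$ and the structure of $\Tshort$ as the compact trie of $\Su_{3\tau-1}$, the set of length-$(3\tau-1)$ substrings of $\T$. The key observation is that every explicit node $w$ of $\Tshort$ has $\sdepth(w)\le 3\tau-1$, that $\Tshort$ has a node at depth exactly $3\tau-1$ for each element of $\Su_{3\tau-1}$ (these are precisely its leaves), and that the set of strings $\{\str(w): w\text{ a node of }\Tshort\}$ consists exactly of those strings $X\in\Alphabet^{\le 3\tau-1}$ that occur in $\T$ and are either of length $3\tau-1$ or are a branching prefix among the length-$(3\tau-1)$ substrings. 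Since $v$ is an explicit node of $\ST$, the string $\str(v)$ occurs in $\T$, so every prefix of $\str(v)$ occurs in $\T$ as well; in particular $\str(v)[1\dd\min(\sdepth(v),3\tau-1)]$ is an element of $\Alphabet^{\le 3\tau-1}$ occurring in $\T$.

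First I would handle case~\eqref{lm:st-core-map-it-1}: suppose $\sdepth(v)\ge 3\tau-1$. Then $X:=\str(v)[1\dd 3\tau-1]$ is a length-$(3\tau-1)$ substring of $\T$, hence $X\in\Su_{3\tau-1}$, so there is a (leaf) node $u_0$ of $\Tshort$ with $\str(u_0)=X$ and $\sdepth(u_0)=3\tau-1$; moreover $\str(u_0)=X$ is a prefix of $\str(v)$. Any explicit node $u'$ of $\Tshort$ whose label is a prefix of $\str(v)$ has $\sdepth(u')\le 3\tau-1=\sdepth(u_0)$, because all labels in $\Tshort$ have length at most $3\tau-1$; and $u_0$ itself is such a node, so the deepest one, namely $\mapToShort(v)=u$, satisfies $\sdepth(u)=3\tau-1$ and, being a node of $\Tshort$ with $\str(u)$ a length-$(3\tau-1)$ prefix of $\str(v)$, must have $\str(u)=X=\str(v)[1\dd 3\tau-1]$. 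This simultaneously gives the forward direction of the ``if and only if'' and conclusion~\eqref{lm:st-core-map-it-1}.

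Next I would handle case~\eqref{lm:st-core-map-it-2}: suppose $\sdepth(v)<3\tau-1$. I claim $\str(v)$ is itself the label of an explicit node of $\Tshort$. Indeed, $v$ being an explicit (branching, or leaf) node of $\ST$ at depth $<3\tau-1$ means $\str(v)$ is either a branching substring of $\T$ or a suffix of $\T$ that is itself short; in the branching case there are at least two distinct characters $c\ne c'$ with $\str(v)c$ and $\str(v)c'$ both occurring in $\T$, hence (extending to length $3\tau-1$, which is possible since these are internal occurrences bounded by the terminal-symbol convention, or handling the boundary suffix separately) $\str(v)$ is a branching prefix in $\Su_{3\tau-1}$ and therefore labels an explicit node of $\Tshort$; the leaf/short-suffix case is the node labeled by that suffix which is again explicit in $\Tshort$. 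Here a small care is needed for the suffix $\T[j\dd n]$ with $n+1-j<3\tau-1$ — I would note that the compact trie of $\Su_{3\tau-1}$ together with the convention $\T[n]=\$$ unique still contains such short suffixes as explicit nodes, or else observe $\Tshort$ is defined over substrings that can be padded — this is the one place where I expect to lean on the exact definition of $\Tshort$ and may need a sentence of justification. Granting it, $\str(v)$ labels an explicit node $u_1$ of $\Tshort$ with $\sdepth(u_1)=\sdepth(v)<3\tau-1$, and it is the deepest node of $\Tshort$ whose label is a prefix of $\str(v)$ (no node can have label strictly extending $\str(v)$ while still being a prefix of $\str(v)$). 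Hence $u=\mapToShort(v)=u_1$, giving $\str(u)=\str(v)$ and $\sdepth(u)<3\tau-1$, which is the reverse direction of the ``if and only if.'' The main obstacle, as flagged, is the boundary treatment of short suffixes of $\T$ in $\Tshort$; everything else is a direct unwinding of the compact-trie structure.
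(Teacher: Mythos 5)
Your overall plan and both main arguments coincide with the paper's proof: for $\sdepth(v)\ge 3\tau-1$ you identify the leaf of $\Tshort$ labelled $\str(v)[1\dd 3\tau{-}1]$ and use the fact that no node of $\Tshort$ is deeper than $3\tau-1$, and for a branching node $v$ with $\sdepth(v)<3\tau-1$ you use two distinct one-character extensions of $\str(v)$ occurring in $\T$, exactly as the paper does.

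The step you flag---leaves $v$ of $\ST$ with $\sdepth(v)<3\tau-1$, i.e.\ $\str(v)=\T[j\dd n]$ with $n-j+1<3\tau-1$---is a genuine gap, and the specific patch you sketch does not work: such a suffix ends with the unique sentinel $\T[n]=0$, so it occurs in $\T$ only at position $j$ and is therefore not a prefix of any element of $\Su_{3\tau-1}$ (a length-$(3\tau-1)$ window having it as a prefix would have to start at $j$ and end at $j+3\tau-2>n$). Hence the literal compact trie of $\Su_{3\tau-1}$ contains no node labelled $\str(v)$, and item~\eqref{lm:st-core-map-it-2} would fail for such $v$; "padding" is likewise not available from the stated definition. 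The reading that makes the lemma true---and the one the paper actually relies on, both in the remark following the definition of $\STCore(\T)$ and in \cref{lm:st-core-map-2}, whose proof identifies the $k$th leftmost leaf of $\Tshort$ with $\ARRshort[k]$, and $\ARRshort$ does contain the short suffixes---is that $\Tshort$ is $\ST$ truncated at depth $3\tau-1$, equivalently the compact trie of $\{\T[j\dd \min(n+1,j+3\tau-1))\}_{j\in[1\dd n]}$. Under that definition every short suffix is itself a leaf of $\Tshort$, and the leaf case of item~\eqref{lm:st-core-map-it-2} is immediate. You are in good company: the paper's own case-2 argument asserts the existence of two distinct extensions of $\str(v)$, which silently restricts attention to internal nodes; the same sentinel issue also affects your parenthetical "extend to length $3\tau-1$" step when one of the two witnessing extensions is by the sentinel character, and it is resolved by the same truncated-tree reading.
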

\begin{proof}
  1. If $\sdepth(v) \geq 3\tau - 1$ then $\str(v)[1 \dd 3\tau {-} 1]
  \in \Su_{3\tau - 1}$. Therefore, by definition of $\Tshort$,
  there exists a node $u'$ in $\Tshort$ satisfying $\str(u') =
  \str(v)[1 \dd 3\tau {-} 1]$.  Since $\str(u')$ is a prefix of
  $\str(v)$ and $u'$ is a leaf of $\Tshort$, we thus have $u' = u$,
  and hence $\str(u) = \str(v)[1 \dd 3\tau {-} 1]$.

  2. Let $\sdepth(v) < 3\tau - 1$ and $X = \str(v)$. Since $v$ is
  explicit, there exists distinct $c, c' \in \Sigma$ such that $Xc$
  and $Xc'$ occur in $\T$. By $|X| < 3\tau - 1$, $\Tshort$ therefore
  has an explicit node $u'$ satisfying $\str(u') = X$.  By
  $\sdepth(u') = \sdepth(v)$, we thus have $u' = u$ and hence $\str(u)
  = \str(v)$.

  The equivalence follows immediately from the two items.
\end{proof}

\begin{lemma}\label{lm:st-core-map-2}
  Let $v$ be an explicit node of $\ST$. Let $i_1 = \lrank(v) + 1$,
  $i_2 = \rrank(v)$, $y_1 = \rank{\BVshort}{1}{i_1 - 1} + 1$, $y_2 =
  \rank{\BVshort}{1}{i_2 - 1} + 1$, $u_1$ (resp.\ $u_2$) be the $y_1$th
  (resp.\ $y_2$th) leftmost leaf of $\Tshort$, and $u = \LCA(u_1,
  u_2)$. Then, $\mapToShort(v) = u$.
\end{lemma}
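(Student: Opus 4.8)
The plan is to prove the claim in two main steps. First I would show that $u_1$ and $u_2$, as defined, are exactly the leaves of $\Tshort$ corresponding to the length-$(3\tau-1)$ prefixes of the lexicographically first and last suffixes in the block $\SA(\lrank(v)\dd\rrank(v)]$. Recall that, by the definition of $\BVshort$ in \cref{sec:sa-core-ds}, the $1$-bits of $\BVshort$ mark boundaries between maximal blocks of suffixes in $\SA$ sharing a common length-$(3\tau-1)$ prefix (with the convention that shorter suffixes form their own singleton blocks). Hence $y_1 = \rank{\BVshort}{1}{i_1-1}+1$ is the index of the $\BVshort$-block containing position $i_1 = \lrank(v)+1$, and likewise $y_2$ for $i_2 = \rrank(v)$. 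Since the leaves of $\Tshort$ are precisely the strings of $\Su_{3\tau-1}$ in lexicographic order, and the array of \cref{sec:st-core-ds}(d) stores these leaves left-to-right, the $y_1$th (resp.\ $y_2$th) leftmost leaf $u_1$ (resp.\ $u_2$) of $\Tshort$ satisfies $\str(u_1) = \T[\SA[i_1]\dd\SA[i_1]+3\tau-1)$ and $\str(u_2) = \T[\SA[i_2]\dd\SA[i_2]+3\tau-1)$, assuming these suffixes have length $\geq 3\tau-1$; I would also handle the corner case where $\sdepth(v) < 3\tau-1$ and some suffix in the block is shorter, in which case the relevant leaf still represents $\str(v)$ itself by \cref{lm:st-core-map}\eqref{lm:st-core-map-it-2}.

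Second, I would compute $\LCA(u_1,u_2)$ and identify it with $\mapToShort(v)$. By \cref{ob:lca}, $\str(\LCA(u_1,u_2))$ is the longest common prefix of $\str(u_1)$ and $\str(u_2)$. Now $\str(u_1)$ and $\str(u_2)$ are the length-$(3\tau-1)$ prefixes (or the full strings, in the short case) of the smallest and largest suffixes in the block $\SA(\lrank(v)\dd\rrank(v)]$; since all suffixes in this block share the prefix $\str(v)$, and the block is exactly $\Occ(\str(v),\T)$ viewed through $\SA$, the longest common prefix of the extreme ones equals $\str(v)$ if $\sdepth(v) < 3\tau-1$, and equals $\str(v)[1\dd 3\tau-1)$ if $\sdepth(v)\geq 3\tau-1$ — because the block being a maximal $\SA$-interval for $\str(v)$ means that the first and last suffixes diverge immediately after $\str(v)$ when $v$ is explicit in $\ST$, but within $\Tshort$ (truncated at depth $3\tau-1$) they are only guaranteed to agree on the first $\min(\sdepth(v), 3\tau-1)$ characters. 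Comparing with the characterization of $\mapToShort(v)$ from \cref{lm:st-core-map} as the deepest node of $\Tshort$ whose label is a prefix of $\str(v)$, we conclude $\LCA(u_1,u_2) = \mapToShort(v)$ in both cases.

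The main obstacle I anticipate is the careful bookkeeping in the boundary case where $v$ is near the root, i.e., $\sdepth(v) < 3\tau-1$: here the block $\SA(\lrank(v)\dd\rrank(v)]$ may contain suffixes shorter than $3\tau-1$ (at most the single suffix $\T[\SA[\lrank(v)+1]\dd n]$ if $v$ lies on the path to the $\texttt{\$}$-leaf), so the correspondence between $\BVshort$-blocks, $\Su_{3\tau-1}$, and leaves of $\Tshort$ needs the singleton-block convention of $\BVshort$ and the fact that $\Tshort$ also contains explicit internal nodes for all shorter substrings that branch. One must verify that the $y_1$th and $y_2$th leftmost \emph{leaves} of $\Tshort$ are still well-defined pointers (some may coincide, or $u_1 = u_2$, in which case $\LCA(u_1,u_2) = u_1$ trivially). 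I would dispatch this by noting that $\str(v)$ being the label of an explicit node of $\ST$ guarantees $\str(v)$ extends to at least two distinct characters in $\T$, hence (since $|\str(v)| < 3\tau-1$) $\str(v)$ is the label of an explicit node $u$ of $\Tshort$ by \cref{lm:st-core-map}\eqref{lm:st-core-map-it-2}, and the leaves $u_1, u_2$ are descendants of $u$ that witness this branching, so $\LCA(u_1, u_2) = u = \mapToShort(v)$ as required. The remaining verification — that the $\BVshort$-rank computation picks out the correct block indices — is routine given the definition of $\BVshort$.
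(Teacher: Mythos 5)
Your main line of argument is essentially the paper's: you identify $\str(u_1)$ and $\str(u_2)$ with the $\ARRshort$ entries of the $\BVshort$-blocks containing $i_1$ and $i_2$ (i.e., with the truncated prefixes of the lexicographically extreme suffixes in the interval of $v$), observe via \cref{ob:lca} that the two extreme suffixes have longest common prefix exactly $\sdepth(v)$ because $v$ is the LCA of the corresponding leaves of $\ST$, deduce that the two leaf labels agree on exactly $\min(\sdepth(v),3\tau-1)$ characters, and conclude with \cref{lm:st-core-map} and the uniqueness of labels in $\Tshort$. For the case $\sdepth(v)\ge 3\tau-1$, and for internal $v$ with $\sdepth(v)<3\tau-1$, this is the same two-case proof as in the paper.

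The flaw is in the boundary case you flag yourself. Your dispatch of $\sdepth(v)<3\tau-1$ rests on the claim that ``$\str(v)$ being the label of an explicit node of $\ST$ guarantees $\str(v)$ extends to at least two distinct characters in $\T$''; this is false when $v$ is a \emph{leaf} of $\ST$ whose label is a suffix shorter than $3\tau-1$ (such leaves exist, e.g.\ the leaf whose label is the single character $\T[n]$), and for such $v$ there is no branching to witness — the correct (and easy) argument there is that $i_1=i_2$, hence $u_1=u_2=u$ and $\str(u)=\str(v)$. Relatedly, your remark that when some suffix in the block is shorter than $3\tau-1$ ``the relevant leaf still represents $\str(v)$ itself'' is not right for internal $v$: that leaf represents the short suffix, of which $\str(v)$ is only a proper prefix (for instance, if $\str(v)$ is a single character $c$, the smallest suffix with prefix $c$ may be the two-character suffix $c\cdot\T[n]$, and the corresponding leaf is labeled by it, not by $c$); also a block can contain several short suffixes, not at most one. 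The paper sidesteps all of this uniformly: each label $\ARRshort[y_t]$ has length $\min(3\tau-1,\,n-\SA[i_t]+1)\ge|\str(v)|$, so $\str(v)$ is a prefix of both labels, while their lcp is at most the lcp of the two suffixes, which equals $\sdepth(v)$; this covers leaves and internal nodes alike. With that adjustment your proof goes through; the remaining bookkeeping (block indices via rank on $\BVshort$, the deep case, the appeal to \cref{lm:st-core-map}) matches the paper's proof.
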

\begin{proof}
  By definition of $\Tshort$ and $\ARRshort$ (\cref{sec:sa-core}), if
  $\widehat{u}$ is the $k$th leftmost leaf of $\Tshort$, then
  $\str(\widehat{u}) = \ARRshort[k]$. Thus, $\str(u_1) = \ARRshort[y_1]$
  and $\str(u_2) = \ARRshort[y_2]$.  Denote $Q = \str(v)$ and consider
  two cases:
  \begin{itemize}
  \item Let $\sdepth(v) \geq 3\tau - 1$. Denote $X = Q[1 \dd 3\tau {-}
    1]$.  By $i_1 = \lrank(v) + 1$ and $i_2 = \rrank(v)$, we then have
    $\SA[i_1], \SA[i_2] \in \Occ(Q, \T) \sub \Occ(X, \T)$.  By
    definition of $\BVshort$ and $\ARRshort$, positions $y_1 =
    \rank{\BVshort}{1}{i_1 - 1} + 1$ and $y_2 = \rank{\BVshort}{1}{i_2 -
    1} + 1$ then satisfy $\ARRshort[y_1] = \ARRshort[y_2] = X$. Thus, by
    the above observation, $\str(u_1) = \str(u_2) = X$ and hence, by
    \cref{ob:lca}, $\str(u) = X = \str(v)[1 \dd 3\tau -
    1]$. Consequently, by
    \cref{lm:st-core-map}\eqref{lm:st-core-map-it-1} and since all
    nodes of $\Tshort$ have different value of $\str$, this yields
    $\mapToShort(v) = u$.
  \item Let us now assume $\sdepth(v) < 3\tau - 1$.  Let $v_1$
    (resp.\ $v_2$) be the $i_1$th (resp.\ $i_2$th) leftmost leaf of
    $\ST$.  Then, $\str(v_1) = \T[\SA[i_1] \dd n]$ and $\str(v_2) =
    \T[\SA[i_2] \dd n]$.  By $i_1 = \lrank(v) + 1$ and $i_2 =
    \rrank(v)$ we have $v = \LCA(v_1, v_2)$. Thus, by \cref{ob:lca},
    $\lcp(\T[\SA[i_1] \dd n], \T[\SA[i_2] \dd n]) = \sdepth(v) = |Q|$.
    Observe now that:
    \begin{itemize}
    \item By definition of $\ARRshort$ and $\BVshort$, the string
      $\ARRshort[y_1]$ (resp.\ $\ARRshort[y_2]$) is a prefix of
      $\T[\SA[i_1] \dd n]$ (resp.\ $\T[\SA[i_2] \dd n]$).  Thus, it
      holds $\lcp(\ARRshort[y_1], \ARRshort[y_2]) \leq \lcp(\T[\SA[i_1]
      \dd n], \T[\SA[i_2] \dd n]) = |Q|$.
    \item On the other hand, since $Q$ is a prefix of $\str(v_1) =
      \T[\SA[i_1] \dd n]$ and $\str(v_2) = \T[\SA[i_2] \dd n]$, and it
      holds $|\ARRshort[y_1]| = \min(3\tau - 1, n - \SA[i_1] + 1)$,
      $|\ARRshort[y_2]| = \min(3\tau - 1, n - \SA[i_2] + 1)$, and $|Q| <
      3\tau - 1$, we obtain that $Q$ is a prefix of $\ARRshort[y_1]$ and
      $\ARRshort[y_2]$.  Thus, $\lcp(\ARRshort[y_1], \ARRshort[y_2]) \geq
      |Q|$.
    \end{itemize}
    We thus proved that $Q$ is a prefix of $\ARRshort[y_1]$ and
    $\ARRshort[y_2]$, and $\lcp(\ARRshort[y_1], \ARRshort[y_2]) = |Q|$.
    Thus, since $\str(u_1) = \ARRshort[y_1]$ and $\str(u_2) =
    \ARRshort[y_2]$, we obtain from \cref{ob:lca}, that $u = \LCA(u_1,
    u_2)$ satisfies $\str(u) = Q$. By
    \cref{lm:st-core-map}\eqref{lm:st-core-map-it-2} and since all
    nodes of $\Tshort$ have different value of $\str$, this yields
    $\mapToShort(v) = u$. \qedhere
  \end{itemize}
\end{proof}

\begin{proposition}\label{pr:st-core-map}
  Let $v$ be an explicit node of $\ST$.  Given $\STCore(\T)$ and
  $\repr(v)$, in $\bigO(1)$ time we can compute the pointer to the
  node $\mapToShort(v)$.
\end{proposition}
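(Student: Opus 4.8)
The plan is to apply \cref{lm:st-core-map-2} almost verbatim, turning its hypotheses into an algorithm. Given $\repr(v) = (\lrank(v),\rrank(v))$, I would first recover $i_1 := \lrank(v)+1$ and $i_2 := \rrank(v)$ in $\bigO(1)$ time (these are available directly from the pair). Since $v$ is a node of $\ST$, we have $i_1 \le i_2$, and $\SA[i_1]$ (resp.\ $\SA[i_2]$) is the starting position of the leftmost (resp.\ rightmost) suffix of $\T$ having $\str(v)$ as a prefix.

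Next I would compute $y_1 := \rank{\BVshort}{1}{i_1-1}+1$ and $y_2 := \rank{\BVshort}{1}{i_2-1}+1$ using the rank support on $\BVshort$ stored inside $\SACore(\T)$ (a component of $\STCore(\T)$); by \cref{th:binrksel} each such rank query takes $\bigO(1)$ time. Then, using the array of pointers to the leaves of $\Tshort$ in left-to-right order (component 2(d) of $\STCore(\T)$; cf.\ \cref{sec:st-core-ds}), I would retrieve in $\bigO(1)$ time the pointers $u_1$ and $u_2$ to the $y_1$th and $y_2$th leftmost leaves of $\Tshort$. Finally, using the $\LCA$ data structure on $\Tshort$ (component 2(a), from \cite{BenderF00}), I would compute $u := \LCA(u_1,u_2)$ in $\bigO(1)$ time. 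By \cref{lm:st-core-map-2}, $u = \mapToShort(v)$, so I return the pointer $u$. The total time is $\bigO(1)$, and every primitive used is explicitly provided by $\STCore(\T)$.

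There is essentially no combinatorial obstacle here: all the real work has been discharged by \cref{lm:st-core-map,lm:st-core-map-2}, which together guarantee both that $\mapToShort(v)$ is well-defined and that the $\LCA$-of-leaves recipe computes it correctly. The only thing to check is bookkeeping: that the needed auxiliary structures (rank support on $\BVshort$, the leaf-pointer array of $\Tshort$, and the $\LCA$ structure on $\Tshort$) are indeed all part of $\STCore(\T)$, which they are by the definition in \cref{sec:st-core-ds}. One minor edge case worth a sentence is when $v$ is the root of $\ST$: then $\lrank(v)=0$ and $\rrank(v)=n$, so $i_1=1$, $i_2=n$, $u_1$ is the first leaf and $u_2$ the last leaf of $\Tshort$, and their $\LCA$ is the root of $\Tshort$, which indeed equals $\mapToShort(\Root(\ST))$ since $\str(\Root(\Tshort)) = \emptystring = \str(\Root(\ST))$; this is consistent with \cref{lm:st-core-map}\eqref{lm:st-core-map-it-2}. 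Thus the proof is a short sequence of $\bigO(1)$-time lookups followed by an appeal to \cref{lm:st-core-map-2}.
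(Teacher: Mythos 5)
Your proposal is correct and matches the paper's proof essentially verbatim: both compute $i_1,i_2$ from $\repr(v)$, obtain $y_1,y_2$ via rank queries on $\BVshort$, fetch the corresponding leaves of $\Tshort$ from the leaf-pointer array, take their $\LCA$, and invoke \cref{lm:st-core-map-2} for correctness. The extra remark about the root is harmless and already subsumed by the lemma.
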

\begin{proof}
  Denote $(b, e) = \repr(v)$, $i_1 = b + 1$, and $i_2 = e$.  First, in
  $\bigO(1)$ time we compute $y_1 = \rank{\BVshort}{1}{i_1 - 1} + 1$
  and $y_2 = \rank{\BVshort}{1}{i_2 - 1} + 1$. In $\bigO(1)$ time we
  then retrieve the $y_1$th and $y_2$th leftmost leaves $u_1$ and $u_2$
  of $\Tshort$ (respectively). Finally, using the $\LCA$ structure for
  $\Tshort$, we compute in $\bigO(1)$ time the pointer to node $u =
  \LCA(u_1, u_2)$ of $\Tshort$. By \cref{lm:st-core-map-2}, we then
  have $\mapToShort(v) = u$.
\end{proof}

\begin{proposition}\label{pr:st-core-periodicity}
  Let $v$ be an explicit node of $\ST$.  Given $\STCore(\T)$ and
  $\repr(v)$, we can in $\bigO(1)$ time check if $v$ is periodic.
  If $v$ is not periodic, then in $\bigO(1)$ we can additionally
  determine if it holds $\sdepth(v) < 3\tau - 1$.
\end{proposition}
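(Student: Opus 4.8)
The plan is to reduce the periodicity test to a single lookup in the table $\LTper$ that is stored inside $\SACore(\T)$ (and hence inside $\STCore(\T)$), after first locating the relevant length-$(3\tau-1)$ prefix of $\str(v)$ inside $\Tshort$. Concretely, I would first apply \cref{pr:st-core-map} to $\repr(v)$ to obtain, in $\bigO(1)$ time, the pointer to $u := \mapToShort(v)$, the deepest explicit node of $\Tshort$ whose label is a prefix of $\str(v)$. Every node of $\Tshort$ stores its string depth, so $\sdepth(u)$ is available in $\bigO(1)$ time, and by \cref{lm:st-core-map} we have $\sdepth(v) \geq 3\tau-1$ if and only if $\sdepth(u) = 3\tau-1$; this already settles whether $\sdepth(v) < 3\tau-1$.

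If $\sdepth(u) < 3\tau-1$, then \cref{lm:st-core-map}\eqref{lm:st-core-map-it-2} gives $\str(u) = \str(v)$, so $\sdepth(v) = \sdepth(u) < 3\tau-1$, and by \cref{def:node-periodicity} together with \cref{def:pattern-periodicity} the node $v$ is nonperiodic; I return ``nonperiodic'' along with ``$\sdepth(v) < 3\tau-1$''. Otherwise $\sdepth(u) = 3\tau-1$, hence $\sdepth(v) \geq 3\tau-1$, and \cref{lm:st-core-map}\eqref{lm:st-core-map-it-1} gives $\str(u) = \str(v)[1 \dd 3\tau-1]$, which is a length-$(3\tau-1)$ substring of $\T$ already stored at $u$ as the integer $x := \Int(\str(u))$. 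I then look up $p := \LTper[x] = \per(\str(v)[1 \dd 3\tau-1])$ in $\bigO(1)$ time and declare $v$ periodic if and only if $p \leq \tfrac13\tau$; when $p > \tfrac13\tau$, the node is nonperiodic and we already know $\sdepth(v) \geq 3\tau-1$, so the secondary question is answered too.

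Every step is a constant-time table access, an $\LCA$/rank/select query inside $\STCore(\T)$, or an arithmetic comparison, so the whole procedure runs in $\bigO(1)$ time. I do not expect a genuine technical obstacle here: correctness rests entirely on \cref{lm:st-core-map,pr:st-core-map} and on the fact that $\LTper$ and the augmented trie $\Tshort$ are components of $\STCore(\T)$. The only point demanding a little care is aligning the node-periodicity definition (\cref{def:node-periodicity}) with the string-periodicity condition of \cref{def:pattern-periodicity} — in particular, recording that ``nonperiodic'' is the correct verdict (and ``$\sdepth(v) < 3\tau-1$'' the correct secondary answer) in the degenerate branch where $u$ is not a depth-$(3\tau-1)$ leaf of $\Tshort$.
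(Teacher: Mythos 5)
Your proposal is correct and follows essentially the same route as the paper's proof: compute $u=\mapToShort(v)$ via \cref{pr:st-core-map}, use \cref{lm:st-core-map} to decide whether $\sdepth(v)<3\tau-1$ (in which case $v$ is automatically nonperiodic), and otherwise query $\LTper$ on the integer encoding of $\str(u)=\str(v)[1\dd 3\tau-1]$ stored at $u$. No gaps; the handling of the shallow branch and the secondary answer matches the paper.
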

\begin{proof}
  First, using \cref{pr:st-core-map}, in $\bigO(1)$ time we compute
  the pointer to $u = \mapToShort(v)$. If $\sdepth(u) = 3\tau - 1$
  then by \cref{lm:st-core-map} it holds $\sdepth(v) \geq 3\tau - 1$,
  and we can in $\bigO(1)$ time determine if $v$ is periodic by
  checking if $\per(X) \leq \tfrac{1}{3}\tau$ for $X = \str(v)[1 \dd
  3\tau {-} 1] = \str(u)$ (stored with $u$) using the lookup table
  $\LTper$. If $\sdepth(u) < 3\tau - 1$, then by
  \cref{lm:st-core-map}, we have $\sdepth(v) < 3\tau - 1$, and hence
  $v$ is nonperiodic. Note that in the above algorithm, whenever $v$
  is nonperiodic, we always know if $\sdepth(v) < 3\tau - 1$. Thus, we
  can additionally return this information at no extra cost. Each of
  the steps takes $\bigO(1)$ time.
\end{proof}

\subsubsection{Implementation of
  \texorpdfstring{$\LCA(u, v)$}{LCA(u, v)}}\label{sec:st-core-lca}

\begin{lemma}\label{lm:st-core-lca}
  Let $v_1$ and $v_2$ be explicit nodes of $\ST$. Then,
  \[
    \mapToShort(\LCA(v_1, v_2)) = \LCA(\mapToShort(v_1),
    \mapToShort(v_2)).
  \]
\end{lemma}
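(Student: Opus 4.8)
The plan is to prove the identity by relating both sides to longest common prefixes and then invoking \cref{lm:st-core-map} together with \cref{ob:lca}. Write $v = \LCA(v_1, v_2)$, and let $u_1 = \mapToShort(v_1)$, $u_2 = \mapToShort(v_2)$, $u = \mapToShort(v)$. By \cref{ob:lca} applied in $\ST$, we have $\str(v) = \str(v_1)[1 \dd \ell]$ where $\ell = \lcp(\str(v_1), \str(v_2))$; in particular $\str(v_1)$ and $\str(v_2)$ agree on exactly their first $\ell$ characters. Our target node on the right-hand side is $w := \LCA(u_1, u_2)$, which by \cref{ob:lca} applied in $\Tshort$ satisfies $\str(w) = \str(u_1)[1 \dd \ell']$ for $\ell' = \lcp(\str(u_1), \str(u_2))$. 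So it suffices to show $\str(u) = \str(w)$, since distinct nodes of $\Tshort$ have distinct $\str$-values.

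First I would dispatch the easy case: $\sdepth(v) < 3\tau - 1$. Then by \cref{lm:st-core-map}\eqref{lm:st-core-map-it-2}, $\str(u) = \str(v)$, so $\sdepth(u) = \ell < 3\tau-1$, and also $\sdepth(v_i) \ge \sdepth(v) $ need not hold — rather, $v$ is an ancestor of $v_1, v_2$ so $\str(v)$ is a prefix of both $\str(v_1)$ and $\str(v_2)$, giving $\sdepth(v_i) \ge \sdepth(v)$ and hence each $v_i$ has a well-defined $\mapToShort$. Since $\str(v)$ is a prefix of $\str(v_i)$ of length $< 3\tau - 1$, and $u_i = \mapToShort(v_i)$ is the deepest node of $\Tshort$ whose label prefixes $\str(v_i)$, we get that $\str(v)$ (equivalently $\str(u)$) is a prefix of $\str(u_i)$ for $i = 1,2$. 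Moreover $\lcp(\str(u_1), \str(u_2)) = \ell$: it is $\ge \ell$ because both are prefixed by $\str(v)$, and it is $\le \lcp(\str(v_1), \str(v_2)) = \ell$ because $\str(u_i)$ is a prefix of $\str(v_i)$. Thus by \cref{ob:lca} in $\Tshort$, $\str(w) = \str(u_1)[1 \dd \ell] = \str(v) = \str(u)$, as required. The case $\sdepth(v) \ge 3\tau - 1$ is similar but uses \cref{lm:st-core-map}\eqref{lm:st-core-map-it-1}: then $\str(u) = \str(v)[1 \dd 3\tau - 1]$, and since $\sdepth(v_1), \sdepth(v_2) \ge \sdepth(v) \ge 3\tau-1$, we have $\str(u_1) = \str(v_1)[1 \dd 3\tau-1]$ and $\str(u_2) = \str(v_2)[1\dd 3\tau-1]$; because $v$ is a common ancestor, $\str(v_1)$ and $\str(v_2)$ share the prefix $\str(v)$ of length $\ge 3\tau - 1$, so $\str(u_1) = \str(u_2) = \str(v)[1\dd 3\tau-1] = \str(u)$, whence $\str(w) = \str(u)$.

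I would also note the edge cases for completeness: if one of $v_1, v_2$ is an ancestor of the other, say $v_1$ is an ancestor of $v_2$, then $v = v_1$ and $u_1$ is an ancestor-or-equal of $u_2$ in $\Tshort$ (since $\str(u_1) = \mapToShort$-image is the deepest $\Tshort$-node prefixing $\str(v_1)$, a prefix of $\str(v_2)$), so $\LCA(u_1,u_2) = u_1 = \mapToShort(v_1) = \mapToShort(v)$; the argument above actually covers this uniformly. The main obstacle, such as it is, is just being careful that $\mapToShort$ is well-defined on $v_1$ and $v_2$ — which follows because $\ST$ nodes at depth $0$ and the root case are handled (the root maps to the root of $\Tshort$), and because every explicit $\ST$-node of depth $\ge 3\tau-1$ has its length-$(3\tau-1)$ prefix occurring in $\T$, hence present as a leaf of $\Tshort$ — and that the two regimes (depth below vs.\ at least $3\tau - 1$) are stitched together consistently via the equivalence in \cref{lm:st-core-map}. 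No nontrivial combinatorics on synchronizing sets is needed here; this lemma is purely about the two tries and the compatibility of $\LCA$ with the truncation map.
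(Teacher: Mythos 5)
Your proof is correct and follows essentially the same route as the paper: the same case split on whether $\sdepth(\LCA(v_1,v_2))$ is at least $3\tau-1$, the same use of \cref{lm:st-core-map} and \cref{ob:lca}, and the conclusion via uniqueness of labels in $\Tshort$ (your shortcut in the shallow case — using the ``deepest prefix node'' definition of $\mapToShort$ to get $\sdepth(u_i)\ge\ell$ instead of the paper's sub-case analysis on $\sdepth(v_i)$ — is only a cosmetic difference).
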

\begin{proof}
  Let $u_1 = \mapToShort(v_1)$, $u_2 = \mapToShort(v_2)$, $v =
  \LCA(v_1, v_2)$, and $u = \LCA(u_1, u_2)$.  Then, the claim is that
  $\mapToShort(v) = u$. Denote $\ell = \lcp(\str(v_1), \str(v_2))$ and
  recall that by \cref{ob:lca}, we have $\sdepth(v) = \ell$. We
  consider two cases:
  \begin{itemize}
  \item First, assume $\sdepth(v) \geq 3\tau - 1$.  By $\sdepth(v_1)
    \geq \sdepth(v) \geq 3\tau - 1$, we obtain from
    \cref{lm:st-core-map}\eqref{lm:st-core-map-it-1} and
    \cref{ob:lca} that $\str(u_1) = \str(v_1)[1 \dd 3\tau {-} 1] =
    \str(v)[1 \dd 3\tau {-} 1]$.  Analogously, $\str(u_2) = \str(v)[1
    \dd 3\tau {-} 1]$, and consequently, $\str(u) = \str(v)[1 \dd
    3\tau {-} 1]$.  Since all nodes in $\Tshort$ have different values
    of $\str$, by \cref{lm:st-core-map}\eqref{lm:st-core-map-it-1},
    this implies $u = \mapToShort(v)$.
  \item Let us now assume $\sdepth(v) < 3\tau - 1$. We will show that
    then $\str(u) = \str(v)$. Since all nodes in $\Tshort$ have
    different values of $\str$, by
    \cref{lm:st-core-map}\eqref{lm:st-core-map-it-2}, this immediately
    implies $u = \mapToShort(v)$.  We first show that $\sdepth(u_1)
    \geq \ell$.  Consider two cases.  If $\sdepth(v_1) \geq 3\tau -
    1$, then by \cref{lm:st-core-map}\eqref{lm:st-core-map-it-1},
    $\str(u_1) = \str(v_1)[1 \dd 3\tau - 1]$, i.e., $\sdepth(u_1) =
    3\tau - 1 > \sdepth(v) = \ell$. Otherwise, by
    \cref{lm:st-core-map}\eqref{lm:st-core-map-it-2}, it holds
    $\str(u_1) = \str(v_1)$, and thus also $\sdepth(u_1) =
    \sdepth(v_1) \geq \sdepth(v) = \ell$.  By the analogous argument,
    $\sdepth(u_2) \geq \ell$. Recall now that, by definition,
    $\str(u_1)$ (resp.\ $\str(u_2)$) is a prefix of $\str(v_1)$
    (resp.\ $\str(v_2)$). Thus, $\str(u_1)[1 \dd \ell] = \str(v_1)[1
    \dd \ell] = \str(v_2)[1 \dd \ell] = \str(u_2)[1 \dd \ell]$.
    Denoting $\ell' = \lcp(\str(u_1), \str(u_2))$, we therefore have
    $\ell' \geq \ell$.  On the other hand, $\str(u_1)$
    (resp.\ $\str(u_2)$) being a prefix of $\str(v_1)$
    (resp.\ $\str(v_2)$), implies $\ell' \leq \ell$. Consequently,
    $\ell' = \ell$.  By \cref{ob:lca}, we therefore obtain
    $\str(u) = \str(\LCA(u_1, u_2)) = \str(u_1)[1 \dd \ell'] =
    \str(v_1)[1 \dd \ell] = \str(\LCA(v_1, v_2)) = \str(v)$. \qedhere
  \end{itemize}
\end{proof}

\begin{proposition}\label{pr:st-core-lca}
  Let $v_1$ and $v_2$ be explicit nodes of $\ST$.  Given $\STCore(\T)$
  and the pairs $\repr(v_1)$ and $\repr(v_2)$, we can in $\bigO(1)$
  time check if $\sdepth(\LCA(v_1, v_2)) \geq 3\tau - 1$. If so, in
  $\bigO(1)$ time we can additionally determine if $\LCA(v_1, v_2)$ is
  periodic. Otherwise (i.e., if $\sdepth(\LCA(v_1, v_2)) < 3\tau - 1$)
  in $\bigO(1)$ time we can compute $\repr(\LCA(v_1, v_2))$.
\end{proposition}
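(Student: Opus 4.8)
The plan is to reduce everything to the node $\mapToShort$ images and invoke \cref{lm:st-core-lca}, which tells us $u := \mapToShort(\LCA(v_1,v_2)) = \LCA(\mapToShort(v_1),\mapToShort(v_2))$. First I would compute pointers to $u_1 := \mapToShort(v_1)$ and $u_2 := \mapToShort(v_2)$ in $\bigO(1)$ time via \cref{pr:st-core-map} applied to $\repr(v_1)$ and $\repr(v_2)$. Then, using the $\LCA$-support structure on $\Tshort$ (component~2a of $\STCore(\T)$), I would compute in $\bigO(1)$ time a pointer to $u = \LCA(u_1,u_2)$. By \cref{lm:st-core-lca} this $u$ equals $\mapToShort(\LCA(v_1,v_2))$, and by \cref{lm:st-core-map} the condition $\sdepth(\LCA(v_1,v_2)) \geq 3\tau-1$ is equivalent to $\sdepth(u) = 3\tau-1$, which we read off from the precomputed $\sdepth$ value stored at $u$ in $\bigO(1)$ time. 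This settles the ``check if $\sdepth(\LCA(v_1,v_2)) \ge 3\tau-1$'' part.

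For the periodicity sub-case, suppose $\sdepth(u) = 3\tau-1$. Then $u$ is a leaf of $\Tshort$ and by \cref{lm:st-core-map}\eqref{lm:st-core-map-it-1}, $\str(u) = \str(\LCA(v_1,v_2))[1\dd 3\tau{-}1]$, i.e., $\str(u)$ is exactly the length-$(3\tau-1)$ prefix of $\str(\LCA(v_1,v_2))$. So $\LCA(v_1,v_2)$ is periodic (\cref{def:node-periodicity,def:pattern-periodicity}) if and only if $\per(\str(u)) \le \tfrac13\tau$, and since $\str(u)$ is stored with $u$ as the integer $\Int(\str(u))$, a single lookup into $\LTper$ decides this in $\bigO(1)$ time. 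This handles the second part.

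For the last sub-case, suppose $\sdepth(u) < 3\tau-1$. By \cref{lm:st-core-map}\eqref{lm:st-core-map-it-2} we then have $\str(u) = \str(\LCA(v_1,v_2))$; in particular $\LCA(v_1,v_2)$ is the explicit node of $\ST$ with $\str(\LCA(v_1,v_2)) = \str(u)$, which has depth $< 3\tau-1$, so $\repr(\LCA(v_1,v_2)) = (\LB(\str(u),\T),\UB(\str(u),\T))$. Since $\str(u)$ is available as $\Int(\str(u))$, I would simply look up $\LTrange[\Int(\str(u))]$ (the lookup table of $\SACore(\T)$, a component of $\STCore(\T)$) to obtain $(\LB(\str(u),\T),\UB(\str(u),\T))$ in $\bigO(1)$ time, and return that pair as $\repr(\LCA(v_1,v_2))$.

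The only mild subtlety — and hence the ``hard part,'' though it is really just bookkeeping — is making sure $\Tshort$ genuinely stores, at each node $u$, both $\sdepth(u)$ and a representation of $\str(u)$ (as $\Int(\str(u))$) that is usable as a key into $\LTper$ and $\LTrange$; this is exactly what component~2 of $\STCore(\T)$ guarantees in \cref{sec:st-core-ds}. One should also double-check that when $\sdepth(u)<3\tau-1$, the node of $\Tshort$ realizing $\str(u)$ is explicit and that $\str(u)$ occurs in $\T$ (so $\LTrange$ returns a nontrivial range $b<e$); both follow since $u$ lies on the root-to-leaf path in $\Tshort$ for $\str(v_1)$, whose leaves correspond to length-$(3\tau-1)$ substrings of $\T$, together with \cref{lm:st-core-map}. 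No step requires more than $\bigO(1)$ time, so the overall bound follows.
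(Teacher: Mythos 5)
Your proposal is correct and follows essentially the same route as the paper's proof: map $v_1,v_2$ to $\Tshort$ via \cref{pr:st-core-map}, take their $\LCA$ there, apply \cref{lm:st-core-lca} and \cref{lm:st-core-map} to distinguish the two depth cases, and finish with an $\LTper$ lookup (periodicity check) or an $\LTrange$ lookup (recovering $\repr(\LCA(v_1,v_2))$). The extra bookkeeping remarks (that $\sdepth(u)$ and $\Int(\str(u))$ are stored at the nodes of $\Tshort$) are exactly what the data-structure description in \cref{sec:st-core-ds} provides, so nothing is missing.
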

\begin{proof}
  Denote $v = \LCA(v_1, v_2)$.  First, using \cref{pr:st-core-map}, in
  $\bigO(1)$ time we compute pointers to $u_1 = \mapToShort(v_1)$ and
  $u_2 = \mapToShort(v_2)$ of $\Tshort$. Then, using the $\LCA$
  structure for $\Tshort$, we compute in $\bigO(1)$ time the pointer
  to node $u = \LCA(u_1, u_2)$ of $\Tshort$. By \cref{lm:st-core-lca},
  we now have $\mapToShort(v) = u$.  If $\sdepth(u) = 3\tau - 1$, by
  \cref{lm:st-core-map} it holds $\sdepth(v) \geq 3\tau - 1$ and
  $\str(u) = \str(v)[1 \dd 3\tau {-} 1]$, and thus we can in
  $\bigO(1)$ determine if $v$ is periodic by checking if $\per(X) \leq
  \tfrac{1}{3}\tau$ for $X = \str(u)$ (stored with $u$) using the
  lookup table $\LTper$.  Otherwise (i.e., if $\sdepth(u) < 3\tau -
  1$), by \cref{lm:st-core-map} we have $\sdepth(v) < 3\tau - 1$ and
  $\str(u) = \str(v)$.  Thus, we return that $v$ is nonperiodic and in
  $\bigO(1)$ time we obtain the pair $\repr(v) = (\LB(\str(v), \T),
  \UB(\str(v), \T)) = (\LB(\str(u), \T), \UB(\str(u), \T))$ using the
  lookup table $\LTrange$ on $\str(u)$.  Each of the steps
  takes $\bigO(1)$ time.
\end{proof}

\subsubsection{Implementation of
  \texorpdfstring{$\child(v, c)$}{child(v, c)}}\label{sec:st-core-child}

\begin{lemma}\label{lm:st-core-child}
  Let $v$ be an explicit internal node of $\ST$ satisfying $\sdepth(v)
  < 3\tau - 1$. Let $u = \mapToShort(v)$.  For any $c \in \Alphabet$,
  $\child(v, c) = \nil$ holds if and only if $\child(u, c) = \nil$.
  Moreover, if $\child(v, c) \neq \nil$ then, letting $u' = \child(u,
  c)$, it holds
  \[
    \repr(\child(v, c)) = (\LB(\str(u'), \T), \UB(\str(u'), \T)).
  \]
\end{lemma}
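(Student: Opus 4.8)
The plan is to prove both assertions by relating the children of $v$ in $\ST$ to the children of $u=\mapToShort(v)$ in $\Tshort$, exploiting that $\sdepth(v)<3\tau-1$ forces $\str(u)=\str(v)$ (\cref{lm:st-core-map}\eqref{lm:st-core-map-it-2}). First I would record that since $v$ is an internal node of $\ST$ with $\str(v)=\str(u)$, for any $c\in\Alphabet$ the node $\child(v,c)$ exists if and only if $\str(v)c=\str(u)c$ occurs in $\T$, i.e.\ $\str(u)c\in\Su_{|\str(u)|+1}$. Since $|\str(u)|+1\le 3\tau-1$, the compact trie $\Tshort$ of $\Su_{3\tau-1}$ contains the (explicit or implicit) locus of $\str(u)c$ precisely when $\str(u)c$ occurs in $\T$; and because all strings of $\Su_{\le 3\tau-1}$ that are branching or occur as a prefix of a length-$(3\tau-1)$ substring give rise to the trie structure, the locus of $\str(u)c$ in $\Tshort$ is reached from $u$ by the edge starting with $c$. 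Hence $\child(u,c)\neq\nil$ in $\Tshort$ if and only if $\str(u)c$ occurs in $\T$ if and only if $\child(v,c)\neq\nil$, which is the first claim. (One subtlety to spell out: $\child(u,c)$ in $\Tshort$ may be an explicit node whose label $\str(u')$ is a proper extension of $\str(u)c$; the equivalence still holds because the existence of the outgoing $c$-edge from $u$ is what matters.)

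Next, assume $\child(v,c)\neq\nil$ and let $u'=\child(u,c)$ in $\Tshort$, $v'=\child(v,c)$ in $\ST$. I would argue $\repr(v')=(\LB(\str(u'),\T),\UB(\str(u'),\T))$ by showing $\Occ(\str(v'),\T)=\Occ(\str(u'),\T)$, from which $(\LB,\UB)$ agree by the definitions in \cref{sec:prelim} (recall $\Occ(\Pat,\T)=\{\SA[i]:i\in(\LB(\Pat,\T)\dd\UB(\Pat,\T)]\}$). For the inclusion $\Occ(\str(u'),\T)\subseteq\Occ(\str(v'),\T)$: every occurrence of $\str(u')$ is an occurrence of its prefix $\str(u)c=\str(v)c$, and since $v$ is explicit with $\str(v)=\str(u)$ and $v'=\child(v,c)$, every occurrence of $\str(v)c$ extends to an occurrence of $\str(v')$ (the edge from $v$ to $v'$ in $\ST$ is labeled by a string starting with $c$, and by the path-compression property of suffix trees, all occurrences of $\str(v)c$ are occurrences of $\str(v')$). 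Conversely, $\str(v')$ is an occurrence-equivalent extension of $\str(v)c$; to see $\str(u')$ is a prefix of $\str(v')$ I would note that $\str(u')$ occurs in $\T$ and has $\str(v)c$ as a prefix, so $\str(u')$ is a prefix of $\str(v')$ by the same suffix-tree argument (since $\str(v')$ is the longest string occurrence-equivalent to $\str(v)c$, or more directly, $\str(u')$ being a substring of $\T$ extending $\str(v)c$ must lie on the edge $(v,v')$ of $\ST$, hence be a prefix of $\str(v')$, because $\sdepth(u')\le 3\tau-1$ and the edge $(v,v')$ has string-depth at least... ). So $\Occ(\str(v'),\T)\subseteq\Occ(\str(u'),\T)$, giving equality.

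The main obstacle I anticipate is handling the case $\sdepth(u')\ge \sdepth(v')$, i.e.\ when the explicit node $u'$ of $\Tshort$ is \emph{deeper} than the branching point of $\ST$: then $\str(u')$ is not necessarily a prefix of $\str(v')$ in the naive reading, and one must argue carefully that this cannot happen when $v'=\child(v,c)$ is reached from $v$. The resolution is that $\str(u)c=\str(v)c$ determines the same locus in both tries up to the first branching or the depth cap $3\tau-1$; if $u'$ is strictly deeper than $v'$ in string-depth, then $\str(v)c$ is non-branching in $\ST$ all the way to $\sdepth(u')$, contradicting that $v'$ — the $\ST$-child of $v$ along $c$ — has $\sdepth(v')\ge\sdepth(u')$ only when equality or the edge continues; careful bookkeeping with \cref{ob:lca} and \cref{lm:st-core-map} pins down that $\str(u')$ is always a prefix of $\str(v')$, so $\Occ(\str(u'),\T)=\Occ(\str(v)c,\T)=\Occ(\str(v'),\T)$. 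Once occurrence-equality is established, the displayed formula for $\repr(v')$ is immediate.
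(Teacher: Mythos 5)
Your overall strategy is the same as the paper's: use $\str(u)=\str(v)$ (from \cref{lm:st-core-map}) to match the outgoing $c$-edges, and then argue that $\str(u')$ is a prefix of $\str(v')$ extending $\str(v)c$, so that the two strings have the same occurrence set and hence the same $(\LB,\UB)$ pair. The first half of your argument and the final "equal occurrence sets implies equal ranges" step are fine and correspond to what the paper does (the paper phrases the latter as: any string $\str(v)S$ strictly between $\str(v)$ and $\str(v')$ on the edge has the same $\LB/\UB$ as $\str(v')$).

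The genuine gap is exactly at the point you flag as "the main obstacle": you never actually establish that $\str(u')$ is a prefix of $\str(v')$. The sentence meant to do this trails off ("the edge $(v,v')$ has string-depth at least\,\dots"), and the proposed contradiction argument is circular as written — the assertion "if $u'$ is strictly deeper than $v'$ then $\str(v)c$ is non-branching in $\ST$ all the way to $\sdepth(u')$" is precisely what has to be justified, and the case $\sdepth(u')=\sdepth(v')$ that you lump into the "obstacle" is in fact harmless (it is the case $\str(u')=\str(v')$). What is needed, and what the paper's two-case proof supplies, is the structural correspondence between $\Tshort$ and $\ST$ truncated at depth $3\tau-1$: explicit nodes of $\Tshort$ at string depth $<3\tau-1$ are exactly the branching substrings of $\T$, i.e.\ the explicit internal nodes of $\ST$ at those depths, and the remaining explicit nodes of $\Tshort$ sit at depth $3\tau-1$ (or at the end of a short suffix, which is also an $\ST$-leaf). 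From this one gets $\sdepth(u')=\min(\sdepth(v'),3\tau-1)$ and $\str(u')=\str(v')[1\dd \sdepth(u')]$: if $\sdepth(u')>\sdepth(v')$ were possible, then $\sdepth(v')<3\tau-1$, $v'$ could not be a leaf of $\ST$ (its label ends with the unique terminator and admits no extension, whereas $\str(u')$ would be a longer extension of it), so $v'$ would be branching and hence an explicit $\Tshort$-node strictly between $u$ and $u'$, contradicting $u'=\child(u,c)$. The paper avoids this detour by directly splitting on $\sdepth(v')\le 3\tau-1$ (then $\str(u')=\str(v')$ by definition of $\Tshort$) versus $\sdepth(v')>3\tau-1$ (then $u'$ is the depth-$(3\tau-1)$ implicit node on the edge $(v,v')$, and the ranges coincide). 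With this missing step supplied, your proof is correct and essentially the paper's.
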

\begin{proof}
  By \cref{lm:st-core-map}\eqref{lm:st-core-map-it-2}, $u$ satisfies
  $\str(u) = \str(v)$. Thus, since $\Tshort$ is a compact trie of
  substrings of $\T$ of length $3\tau-1$, we immediately obtain that
  for any $c \in \Alphabet$, $\child(v, c) \neq \nil$ if and only if
  $\child(u, c)$.  Let us assume for some $c \in \Alphabet$, it holds
  $\child(v, c) = v' \neq \nil$. If $\sdepth(v') \leq 3\tau-1$, then
  by definition of $\Tshort$, the node $u' = \child(u, c)$ must
  satisfy $\str(v') = \str(u')$. This implies the claim
  immediately. Otherwise ($\sdepth(v') > 3\tau-1$), $u'$ satisfies
  $\sdepth(u') = 3\tau - 1$, and corresponds to the implicit node of
  $\ST$ on the edge connecting $v$ to $v'$. By definition of suffix
  tree, however, letting $S$ be such that $\str(v)S = \str(v')[1 \dd
  3\tau{-}1]$, we have $(\LB(\str(v'), \T), \UB(\str(v'), \T)) {=}
  (\LB(\str(v)S, \T), \UB(\str(v)S, \T)) = (\LB(\str(u'), \T),
  \UB(\str(u'), \T))$, which by definition of $\repr$ implies the
  claim.
\end{proof}

\begin{proposition}\label{pr:st-core-child}
  Let $v$ be an explicit internal node of $\ST$ satisfying $\sdepth(v)
  < 3\tau - 1$.  Given $\STCore(\T)$, $\repr(v)$, and $c \in
  \Alphabet$, we can in $\bigO(1)$ time compute $\repr(\child(v, c))$.
\end{proposition}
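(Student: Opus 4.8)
The plan is to reduce the computation of $\repr(\child(v,c))$ in $\ST$ to the computation of $\child(u,c)$ in $\Tshort$, where $u=\mapToShort(v)$, and then read off the answer from a lookup table. First I would invoke \cref{pr:st-core-map} to compute, in $\bigO(1)$ time, the pointer to $u=\mapToShort(v)$ from $\repr(v)$. Since $v$ is explicit and $\sdepth(v)<3\tau-1$, \cref{lm:st-core-map}\eqref{lm:st-core-map-it-2} gives $\str(u)=\str(v)$, and in particular $\sdepth(u)=\sdepth(v)<3\tau-1$. Next I would use the lookup table $\LTchild$ (stored with $\STCore(\T)$): given the pointer $i_u$ to $u$ and the character $c$, a single lookup of the injectively-encoded key $(i_u,c)$ tells us whether $u$ has a child along an edge whose first letter is $c$ and, if so, returns the pointer $i_{u'}$ to that child $u'=\child(u,c)$. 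This is exactly the information about the existence of $\child(v,c)$ that \cref{lm:st-core-child} says coincides with the existence of $\child(u,c)$.

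If $\child(u,c)=\nil$, then by \cref{lm:st-core-child} we have $\child(v,c)=\nil$, and we return $\repr(\nil)=(0,0)$. Otherwise, having the pointer to $u'=\child(u,c)$, I would retrieve the string $\str(u')$ stored with $u'$ (encoded as $\Int(\str(u'))$), and then apply the lookup table $\LTrange$ (a component of $\SACore(\T)\subseteq\STCore(\T)$) to obtain $(\LB(\str(u'),\T),\UB(\str(u'),\T))$ in $\bigO(1)$ time. By the ``moreover'' part of \cref{lm:st-core-child}, this pair is precisely $\repr(\child(v,c))$, so we return it. All of the above is a constant number of table lookups and pointer dereferences, so the total running time is $\bigO(1)$, as claimed.

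There is essentially no combinatorial obstacle here: the two lemmas \cref{lm:st-core-map} and \cref{lm:st-core-child} already do all the heavy lifting, and what remains is routine bookkeeping. The one point that requires a line of care is checking that $\str(u')$ is a legal argument to $\LTrange$ — i.e., that $|\str(u')|\le 3\tau-1$ — which holds because $u'$ is a node of $\Tshort$, whose nodes correspond to strings in $\Alphabet^{\le 3\tau-1}$; when $u'$ is a leaf of $\Tshort$ at depth exactly $3\tau-1$ corresponding to an implicit node of $\ST$, \cref{lm:st-core-child} already accounts for this case by noting that the $\LB/\UB$ values along the edge are constant and equal to those at $\str(u')$. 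Hence the lookup is well-defined and returns the correct $\repr$ value.
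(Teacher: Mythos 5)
Your proposal is correct and follows essentially the same route as the paper's own proof: map $v$ to $u=\mapToShort(v)$ via \cref{pr:st-core-map}, query $\LTchild$ for $\child(u,c)$, apply \cref{lm:st-core-child} to handle both the $\nil$ case and the conversion of $u'=\child(u,c)$ into $\repr(\child(v,c))$ via $\LTrange$. The extra remark about $|\str(u')|\le 3\tau-1$ being a valid key for $\LTrange$ is a fine (if implicit in the paper) sanity check.
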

\begin{proof}
  First, using \cref{pr:st-core-map}, in $\bigO(1)$ time we compute a
  pointer to $u = \mapToShort(v)$. Using the lookup table $\LTchild$,
  in $\bigO(1)$ time we check if $\child(u, c) = \nil$. If so, then by
  \cref{lm:st-core-child}, it holds $\child(v, c) = \nil$ and we
  return $\repr(\child(v, c)) = (0, 0)$.  Otherwise (i.e., $\child(u,
  c) \neq \nil$), we obtain a pointer to $u' = \child(u, c)$. By
  \cref{lm:st-core-child}, we then have $\repr(\child(v, c)) =
  (\LB(\str(u'), \T), \UB(\str(u'), \T))$, which we obtain using the
  lookup table $\LTrange$ on $\str(u')$. Each of the steps takes
  $\bigO(1)$ time.
\end{proof}

\subsubsection{Implementation of
  \texorpdfstring{$\pred(v, c)$}{pred(v, c)}}\label{sec:st-core-pred}

\begin{proposition}\label{pr:st-core-pred}
  Let $v$ be an explicit internal node of $\ST$ satisfying $\sdepth(v)
  < 3\tau - 1$.  Given $\STCore(\T)$, $\repr(v)$, and $c \in
  \Alphabet$, we can in $\bigO(1)$ time compute $\LB(\str(v)c, \T)$.
\end{proposition}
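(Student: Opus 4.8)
The plan is to reduce the computation of $\LB(\str(v)c, \T)$ for a node $v$ with $\sdepth(v) < 3\tau - 1$ to a constant number of lookups in $\STCore(\T)$, mirroring the approach used in \cref{pr:st-core-child}. First I would compute a pointer to $u = \mapToShort(v)$ in $\bigO(1)$ time using \cref{pr:st-core-map}; by \cref{lm:st-core-map}\eqref{lm:st-core-map-it-2} we have $\str(u) = \str(v)$, so $\str(v)c = \str(u)c$. Thus it suffices to compute $\LB(\str(u)c, \T)$, and since $|\str(u)c| = \sdepth(v) + 1 \le 3\tau - 1$, the string $\str(u)c$ lies in $\Alphabet^{\le 3\tau - 1}$, which is exactly the domain of the lookup table $\LTrange$.

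The key step is then to express $\LB(\str(v)c, \T)$ using information already stored in the core. I would distinguish whether $\child(u, c) \ne \nil$, checked in $\bigO(1)$ time via $\LTchild$. If $\child(u, c) \ne \nil$, then either $\str(u)c$ itself labels a node of $\Tshort$ (namely $\child(u,c)$ when $\sdepth(\child(u,c)) = \sdepth(u) + 1$) or it is an implicit point on the edge from $u$ to $\child(u, c)$; in both cases, by the same argument as in the proof of \cref{lm:st-core-child}, $\LB(\str(u)c, \T) = \LB(\str(\child(u,c)), \T)$, which we read off from $\LTrange$ applied to $\str(\child(u,c))$ (stored with that node). If $\child(u, c) = \nil$, then no occurrence of $\T$ is prefixed by $\str(u)c$, and I would instead obtain the answer directly by $\LTrange[\Int(\str(u)c)]$: since $\Occ(\str(u)c, \T) = \emptyset$, the lookup still returns the correct pair $(\LB(\str(u)c, \T), \UB(\str(u)c, \T))$ with equal coordinates, and we take the first one. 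Actually, the cleanest route is simply to always invoke $\LTrange$ on the string $\str(u)c \in \Alphabet^{\le 3\tau-1}$, forming $\Int(\str(u)c)$ from $\Int(\str(u))$ and $c$ in $\bigO(1)$ time using the packed representation of $\T$; the table $\LTrange$ by definition maps this to $(\LB(\str(u)c, \T), \UB(\str(u)c, \T))$, and $\LB(\str(v)c, \T) = \LB(\str(u)c, \T)$ is the first coordinate.

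Each of the above steps --- the call to \cref{pr:st-core-map}, the arithmetic to form $\Int(\str(u)c)$, and the single $\LTrange$ lookup --- takes $\bigO(1)$ time, so the total is $\bigO(1)$, establishing the claim. I do not expect a genuine obstacle here: the only mild subtlety is making sure that $|\str(u)c| \le 3\tau - 1$ so that $\LTrange$ is actually defined on the query string, which follows from the hypothesis $\sdepth(v) < 3\tau - 1$ together with $\str(u) = \str(v)$; and confirming that $\LTrange$ returns meaningful $\LB$ values even when $\Occ(\str(u)c, \T) = \emptyset$, which holds because $\LB(X, \T) = |\{i : \T[i \dd |\T|] \prec X\}|$ is well-defined for every string $X$ regardless of whether it occurs in $\T$.
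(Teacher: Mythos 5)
Your proposal is correct and, in its final form (the direct $\LTrange$ lookup on $\str(u)c$ after mapping $v$ to $u=\mapToShort(v)$), is essentially identical to the paper's proof; the preliminary case analysis via $\LTchild$ is an unnecessary detour that you rightly discard. The two subtleties you flag—$|\str(u)c|\le 3\tau-1$ and $\LTrange$ being defined (and meaningful) even when $\Occ(\str(u)c,\T)=\emptyset$—are exactly what makes the one-lookup argument go through.
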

\begin{proof}
  First, using \cref{pr:st-core-map} we compute a pointer to $u =
  \mapToShort(v)$. By $\sdepth(v) < 3\tau - 1$ and
  \cref{lm:st-core-map}\eqref{lm:st-core-map-it-2}, node $u$ satisfies
  $\str(u) = \str(v)$. Thus, we have $\LB(\str(v)c, \T) =
  \LB(\str(u)c, \T)$. Next, we compute $Y = \str(u)c$ (recall, that
  $\str(u)$ is stored with $u$).  Using the lookup table $\LTrange$,
  we then compute and return $\LB(Y, \T)$.  Each of the steps takes
  $\bigO(1)$ time.
\end{proof}

\subsubsection{Implementation of
  \texorpdfstring{$\WA(v, d)$}{WA(v, d)}}\label{sec:st-core-wa}

\begin{lemma}\label{lm:st-core-wa}
  Let $v$ be an explicit node of $\ST$ and $d$ be such that $0 \leq d
  \leq |\str(v)|$ and $d < 3\tau \,{-}\, 1$. Then, letting $u =
  \mapToShort(v)$ and $u' = \WA(u, d)$, it holds
  \[
    \repr(\WA(v, d)) = (\LB(\str(u'), \T), \UB(\str(u'), \T)).
  \]
\end{lemma}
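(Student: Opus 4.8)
The statement relates the weighted-ancestor query in $\ST$ with one in the small trie $\Tshort$, under the hypothesis that the target depth $d$ satisfies $d<3\tau-1$. The plan is to argue by distinguishing whether $v$ itself is shallow or deep, and in each case to identify $\WA(v,d)$ with a node whose string is a prefix of length $d$ (or shorter) of $\str(v)$, then transport that identification across $\mapToShort$.

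First I would set $u=\mapToShort(v)$ and $u'=\WA(u,d)$ and recall from \cref{lm:st-core-map} that $\sdepth(u)=\min(\sdepth(v),3\tau-1)$ and that $\str(u)$ is the length-$\sdepth(u)$ prefix of $\str(v)$; in particular, since $d<3\tau-1\le\sdepth(u)$ whenever $\sdepth(v)\ge 3\tau-1$, and $d\le\sdepth(v)=\sdepth(u)$ otherwise, we always have $d\le\sdepth(u)$, so $u'=\WA(u,d)$ is well defined. Next I would observe that the ancestors of $v$ in $\ST$ at string-depth $<3\tau-1$ are in bijection with the ancestors of $u$ in $\Tshort$ at the same depth: every string $\str(v)[1\dd k]$ with $k<3\tau-1$ occurs in $\T$ (as a prefix of $\str(v)$, which occurs), hence is the label of a node of $\Tshort$, and conversely such a node of $\Tshort$ is an ancestor of $u$ by \cref{ob:lca}-style reasoning. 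Applying the definition of $\WA$ (the shallowest ancestor of string-depth $\ge d$), and using that both trees agree on which strings of length in $[d,3\tau-1)$ label explicit nodes — careful here: $\ST$ may have \emph{implicit} nodes where $\Tshort$ has explicit ones, but the crucial point is that $\WA(v,d)$ is the shallowest \emph{explicit} ancestor of $v$ with depth $\ge d$, and we must check its depth is itself $<3\tau-1$ so that it corresponds to an explicit node of $\Tshort$. This is where the hypothesis $d<3\tau-1$ is used together with the structure of $\ST$: the shallowest explicit ancestor of $v$ at depth $\ge d$ has depth at most that of the next branching point, which could exceed $3\tau-1$, so I would instead argue directly that $\str(\WA(v,d))$ is a prefix of $\str(v)$ of some length $\ell\in[d,\sdepth(v)]$, and then compare $\ell$ against $3\tau-1$. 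If $\ell\ge 3\tau-1$ then $\WA(v,d)$ maps (via $\mapToShort$) to the depth-$(3\tau-1)$ leaf of $\Tshort$ equal to $\str(v)[1\dd 3\tau-1]$, which is exactly $\WA(u,d)$ since $d<3\tau-1$ and $u$ has depth $3\tau-1$; if $\ell<3\tau-1$ then $\WA(v,d)$ is itself an explicit node of $\Tshort$ and equals $\WA(u,d)$ by matching string-depths.

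Once $\WA(v,d)=v'$ with $v'$ identified by the equality $\str(u')=\str(v)[1\dd\ell]=\str(v')[1\dd\ell]$, the representation claim follows exactly as in the proof of \cref{lm:st-core-child}: by the definition of the suffix tree and of $\repr$, for any node $w$ of $\ST$ and any $k\le\sdepth(w)$ we have $(\LB(\str(w)[1\dd k],\T),\UB(\str(w)[1\dd k],\T))=(\LB(\str(w'),\T),\UB(\str(w'),\T))$ where $w'$ is the $\ST$-node (explicit or the branching descendant) carrying that prefix — and in our situation $\str(u')$ is precisely that prefix $\str(v')[1\dd\ell]$ when $v'$ is explicit, or $\str(v)[1\dd 3\tau-1]$ lying on the edge into $v'$ when $\sdepth(v')>3\tau-1$, in which case $(\LB,\UB)$ of $\str(u')$ and of $\str(v')$ coincide because the edge carries no branching. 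Hence $\repr(\WA(v,d))=(\LB(\str(u'),\T),\UB(\str(u'),\T))$ in all cases.

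\textbf{Main obstacle.} The delicate part is the case $\sdepth(\WA(v,d))\ge 3\tau-1$, i.e.\ when the shallowest explicit $\ST$-ancestor of $v$ of depth $\ge d$ already sits at or below depth $3\tau-1$: one must verify that $\WA(u,d)$ in $\Tshort$ — which is forced to be the unique depth-$(3\tau-1)$ leaf on the relevant root-to-$u$ path — has the same $(\LB,\UB)$ pair as $\WA(v,d)$, exploiting that a depth-$(3\tau-1)$ node of $\Tshort$ corresponding to an implicit $\ST$-node inherits the suffix-array interval of the $\ST$-child below it (the remark following $\STCore$). Everything else is bookkeeping with \cref{lm:st-core-map}, \cref{ob:lca}, and the definition of $\repr$.
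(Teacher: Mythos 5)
Your overall plan mirrors the paper's proof: split on whether the depth of $\WA(v,d)$ is below or at least $3\tau-1$, transport the query to $\Tshort$ via $\mapToShort$, and in the deep case use the fact that the depth-$(3\tau-1)$ node of $\Tshort$ (an implicit position of $\ST$) inherits the suffix-array interval of the $\ST$-node below it. However, the step that actually carries the lemma — pinning down $\WA(u,d)$ — is not correctly justified, and this is a genuine gap. Your claimed bijection between ancestors of $v$ and ancestors of $u$ is argued by ``every prefix $\str(v)[1\dd k]$ with $k<3\tau-1$ occurs in $\T$, hence is the label of a node of $\Tshort$''; for a compact trie this inference is false (an occurring string labels an explicit node only if it is right-branching, or has length exactly $3\tau-1$), and if it were valid it would prove too much, since most such prefixes do not label explicit nodes of $\ST$ either, so no depth-preserving bijection could exist. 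Worse, the direction you actually need is the converse, which you never argue: every internal node of $\Tshort$ at depth $<3\tau-1$ has at least two children, so its label is right-branching in $\T$ and therefore labels an explicit ancestor of $v$ in $\ST$ of the same depth.

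Concretely, in the case $\ell=\sdepth(\WA(v,d))\geq 3\tau-1$ you assert $\WA(u,d)=u$ ``since $d<3\tau-1$ and $u$ has depth $3\tau-1$'' (and in your closing paragraph you call this forced). It is not: $\WA$ returns the \emph{shallowest} ancestor of string-depth $\geq d$, so you must exclude an internal $\Tshort$-ancestor of $u$ at depth in $[d\dd 3\tau-2]$. The argument is that such a node would be branching, hence would correspond to an explicit $\ST$-ancestor of $v$ of depth in $[d,\ell)$, contradicting the minimality in the definition of $\WA(v,d)$; without it the identity $\WA(u,d)=u$ simply does not follow. The same omission occurs in the shallow case, where ``equals $\WA(u,d)$ by matching string-depths'' silently assumes that no strictly shallower explicit $\Tshort$-ancestor of depth $\geq d$ exists. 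Once this two-way correspondence (explicit $\ST$-ancestors of $v$ at depth $<3\tau-1$ matched, with equal labels, to internal $\Tshort$-ancestors of $u$) is established, your case analysis and the final transfer of $(\LB,\UB)$ via the remark following the definition of $\STCore(\T)$ go through exactly as in the paper — note also that you misplace the ``main obstacle'': the interval identification at depth $3\tau-1$ is the easy part, while the missing branching argument is where the lemma is actually decided.
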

\begin{proof}
  Denote $v' = \WA(v, d)$. We consider two cases:
  \begin{itemize}
  \item First, assume $\sdepth(v) \geq 3\tau - 1$.  By
    \cref{lm:st-core-map}\eqref{lm:st-core-map-it-1}, we then have
    $\str(u) = \str(v)[1 \dd 3\tau {-} 1]$. Therefore, utilizing one
    of the assumptions about $d$, we have $d < 3\tau - 1 =
    \sdepth(u)$, i.e., $u'$ is well-defined (see \cref{sec:wa}).
    Moreover, this implies that for any ancestor $\bar{v}$ of $v$ at
    depth at most $3\tau - 1$, there exist a corresponding ancestor
    $\bar{u}$ of $u$ and there exists a one-to-one mapping between
    ancestors of $\bar{v}$ in $\ST$ and ancestors of $\bar{u}$ in
    $\Tshort$ (with corresponding nodes having equal root-to-node
    labels). Therefore, if $\sdepth(v') \leq 3\tau - 1$ then $\str(u')
    = \str(v')$ and the claim follows. Otherwise ($\sdepth(v') > 3\tau
    - 1$), by $d < 3\tau - 1$, we must have $u' = u$ and $u'$ then
    corresponds to the implicit node on the edge connecting $v'$ to
    $\parent(v')$.  This implies $\repr(v') = (\LB(\str(u'), \T),
    \UB(\str(u'), \T))$.
  \item Let us now assume $\sdepth(v) < 3\tau - 1$. By
    \cref{lm:st-core-map}\eqref{lm:st-core-map-it-2}, we then have
    $\str(u) = \str(v)$. In particular, utilizing one of the
    assumptions on $d$, we have $d \leq \sdepth(v) = \sdepth(u)$,
    i.e., $u'$ is well-defined (see \cref{sec:wa}).  Moreover, this
    implies that there is a one-to-one correspondence between
    ancestors of $v$ in $\ST$ and ancestors or $u$ in $\Tshort$. In
    particular, $\str(v') = \str(u')$, which implies the
    claim. \qedhere
  \end{itemize}
\end{proof}

\begin{proposition}\label{pr:st-core-wa}
  Let $v$ be an explicit node of $\ST$. Given $\STCore(\T)$,
  $\repr(v)$, and an integer $d$ satisfying $0 \leq d \leq |\str(v)|$
  and $d < 3\tau - 1$, in $\bigO(1)$ time we can compute $\repr(\WA(v,
  d))$.
\end{proposition}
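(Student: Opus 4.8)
\textbf{Proof proposal for Proposition~\ref{pr:st-core-wa}.}
The plan is to mirror the structure of the preceding propositions (\cref{pr:st-core-child,pr:st-core-pred}), reducing the weighted-ancestor query on $\ST$ to a weighted-ancestor query on the small truncated trie $\Tshort$, using the combinatorial correspondence established in \cref{lm:st-core-wa}. First I would invoke \cref{pr:st-core-map} to compute, in $\bigO(1)$ time, the pointer to $u = \mapToShort(v)$ from $\repr(v)$; this is the only step that consults the global structures $\BVshort$ and the leaf array of $\Tshort$. Then, since by hypothesis $0 \le d \le |\str(v)|$ and $d < 3\tau - 1$, \cref{lm:st-core-wa} guarantees that $u' = \WA(u, d)$ is well-defined in $\Tshort$, so I would query the lookup table $\LTwa$ on the pair $(i_u, d)$ to retrieve the pointer to $u'$ in $\bigO(1)$ time (recall $\LTwa$ is precisely the precomputed table of all weighted-ancestor answers in $\Tshort$ for depths below $3\tau-1$, and the injective encoding of $(i_u,d)$ into an integer of size $\bigO(\sqrt{n}\log n)$ was set up in \cref{sec:st-core-ds}).

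Next, \cref{lm:st-core-wa} tells us that $\repr(\WA(v,d)) = (\LB(\str(u'), \T), \UB(\str(u'), \T))$. Since $\str(u')$ is stored with the node $u'$ of $\Tshort$ encoded as the integer $\Int(\str(u'))$, I would feed it to the lookup table $\LTrange$ (a component of $\SACore(\T)$, hence of $\STCore(\T)$) to obtain the pair $(\LB(\str(u'), \T), \UB(\str(u'), \T))$ in $\bigO(1)$ time, and return it as $\repr(\WA(v,d))$. Every step is a constant-time table lookup or an application of an already-proven $\bigO(1)$-time routine, so the total running time is $\bigO(1)$, as claimed.

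There is no real obstacle at the level of this proposition: the entire combinatorial content---namely that the weighted ancestor of $v$ at a shallow depth corresponds, under $\mapToShort$, to the weighted ancestor of $u$ in $\Tshort$, and that implicit nodes of $\ST$ at depth exactly $3\tau-1$ carry the correct $\repr$ value---has already been discharged in \cref{lm:st-core-wa} (which in turn relies on \cref{lm:st-core-map} and \cref{ob:lca}). The one point deserving a sentence of care is checking that the preconditions of $\LTwa$ are met: we need $d$ to be a valid query depth for $u$, i.e.\ $d \le \sdepth(u)$, which holds because either $\sdepth(v) \ge 3\tau-1$ and then $\sdepth(u) = 3\tau-1 > d$, or $\sdepth(v) < 3\tau-1$ and then $\sdepth(u) = \sdepth(v) \ge d$; both cases follow from \cref{lm:st-core-map}. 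With that observation in place, the proof is a short assembly of the pieces.
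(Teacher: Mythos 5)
Your proposal is correct and follows the paper's own proof essentially verbatim: map $v$ to $u=\mapToShort(v)$ via \cref{pr:st-core-map}, look up $u'=\WA(u,d)$ in $\LTwa$, and recover $\repr(\WA(v,d))=(\LB(\str(u'),\T),\UB(\str(u'),\T))$ via $\LTrange$, justified by \cref{lm:st-core-wa}. The extra sentence verifying $d\le\sdepth(u)$ is a welcome (if already implicit in \cref{lm:st-core-wa}) check, and nothing further is needed.
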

\begin{proof}
  First, using \cref{pr:st-core-map}, we compute a pointer to node $u
  = \mapToShort(v)$. Then, using the lookup table $\LTwa$, in
  $\bigO(1)$ time we obtain the pointer to $u' = \WA(u, d)$. By
  \cref{lm:st-core-wa}, we then have $\repr(\WA(v, d)) =
  (\LB(\str(u'), \T), \UB(\str(u'), \T))$, which is obtained using the
  lookup table $\LTrange$ on $\str(u')$. Each of the steps takes
  $\bigO(1)$ time.
\end{proof}

\subsubsection{Construction Algorithm}\label{sec:st-core-construction}

\begin{proposition}\label{pr:st-core-construction}
  Given the packed representation of $\T \in \Alphabet^n$, we can
  construct $\STCore(\T)$ in $\bigO(n / \log_{\sigma} n)$ time.
\end{proposition}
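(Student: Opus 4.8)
The plan is to build the two components of $\STCore(\T)$ in turn. First I would invoke \cref{pr:sa-core-construction} to construct $\SACore(\T)$ from the packed representation of $\T$ in $\bigO(n/\log_{\sigma} n)$ time; this already gives us, in particular, the bitvector $\BVshort$ with $\bigO(1)$-time rank/select support and the array $\ARRshort$. Everything that remains to be added is the compact trie $\Tshort$ of $\Su_{3\tau-1}$ together with its four augmentations, and the guiding observation is that the trie has only $\bigO(\sigma^{3\tau}) = \bigO(n^{3\mu}) = \bigO(\sqrt{n})$ nodes (using $\mu < \tfrac16$), so a mildly wasteful construction still stays comfortably inside the budget $\bigO(n/\log_{\sigma} n)$, which is $\omega(\sqrt n\log n)$.

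To obtain $\Tshort$ itself, I would first extract the sorted list of $\Su_{3\tau-1}$ from $\ARRshort$. Every length-$(3\tau{-}1)$ substring of $\T$ starts at some position $j \le n-3\tau+2$, hence equals $X_{\SA[i]} = \T[\SA[i]\dd \SA[i]+3\tau-1)$ for $i = \ISA[j]$; consequently $\Su_{3\tau-1}$ is exactly the set of entries of $\ARRshort$ whose length is $3\tau-1$, listed in lexicographic order (the at most $\bigO(\tau)$ remaining entries, corresponding to very short suffixes of $\T$, are simply skipped). Scanning $\ARRshort$ (of length $\bigO(\sqrt n)$) thus yields the sorted list of $\Su_{3\tau-1}$ in $\bigO(\sqrt n)$ time. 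From this sorted list I would build the compact trie exactly as in the proof of \cref{pr:compact-trie}, by a single stack-based sweep over the rightmost path that inserts the strings in order; the only change is that, since each element of $\Su_{3\tau-1}$ occupies $\bigO(1)$ machine words, the longest common prefix of two consecutive strings is computed in $\bigO(1)$ time by a word-level operation on their packed encodings, so no $\LCE$ structure is needed. During the sweep I would record in each node $v$ the integer $\Int(\str(v))$, the pointer $\parent(v)$, the value $\sdepth(v)$, and the doubly linked list of children in ascending order of the connecting edge's first letter (which comes for free because the leaves are inserted in sorted order), and I would also collect the array of leaf pointers in left-to-right order. As the trie has $\bigO(\sqrt n)$ nodes, this step costs $\bigO(\sqrt n)$ time.

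It then remains to add the three auxiliary structures. The $\LCA$ data structure of Bender and Farach-Colton~\cite{BenderF00} is built on the $\bigO(\sqrt n)$-node tree $\Tshort$ in $\bigO(\sqrt n)$ time. The table $\LTchild$ is filled by a single pass over the edges of $\Tshort$ (one entry $\LTchild[(i_p,c)] = i_v$ per edge) after zero-initializing the table of size $\bigO(\sigma^{3\tau}) = \bigO(\sqrt n)$; total $\bigO(\sqrt n)$. For $\LTwa$ I would initialize the table of size $\bigO(\sigma^{3\tau-1}\tau) = \bigO(\sqrt n\log n)$ and then run a depth-first traversal of $\Tshort$ maintaining an auxiliary array $W[0\dd 3\tau-1)$ with the invariant that, while visiting a node $v$, $W[d]$ equals the shallowest ancestor of $v$ of string-depth $\ge d$; on descending from $v$ to a child $w$ I would overwrite $W[d] := w$ for $d \in (\sdepth(v)\dd \min(\sdepth(w),3\tau-1))$ (saving the old values so they can be restored on the way back up) and then copy $W[0\dd 3\tau-1)$ into the $\LTwa$ entries of $w$. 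This spends $\bigO(\tau)$ time per node, so $\bigO(\sqrt n\log n)$ in total on top of the table initialization. Summing all contributions gives $\bigO(n/\log_{\sigma} n)$ overall.

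I do not expect a genuine obstacle here: the proof is essentially a reuse of \cref{pr:sa-core-construction} and \cref{pr:compact-trie} plus bookkeeping, and the size bound $\sigma^{3\tau} = \bigO(\sqrt n)$ makes all the new work cheap. The two points that need the most care are (i) the identification of $\Su_{3\tau-1}$ with the length-$(3\tau{-}1)$ entries of $\ARRshort$ in sorted order, and (ii) designing the depth-first scheme that precomputes every entry of $\LTwa$ in $\bigO(\tau)$ amortized time per node rather than naively per $(v,d)$ pair; once these are in place the time accounting is routine.
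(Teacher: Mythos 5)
Your proposal is correct, and it follows the same overall decomposition as the paper's proof: first obtain $\SACore(\T)$ via \cref{pr:sa-core-construction}, then build the $\bigO(\sigma^{3\tau})=\bigO(\sqrt{n})$-node trie $\Tshort$ together with its four augmentations, all comfortably within the $\bigO(n/\log_{\sigma}n)$ budget. Where you genuinely diverge is in how $\Tshort$ is produced and how $\LTwa$ is filled. The paper does not use $\ARRshort$ here at all: it enumerates every $X\in\Alphabet^{\leq 3\tau-1}$ in order of non-decreasing length, decides whether $X$ is an explicit node from the frequency table $f_X$ computed inside \cref{pr:sa-core-construction} (namely $f_X>0$ and either $|X|=3\tau-1$ or two distinct one-character extensions have positive frequency), locates each node's parent as the longest proper prefix whose $\LTrange$ value differs, and maintains a table $\LTnode$ from $\Int(X)$ to node pointers, for $\bigO(\sigma^{3\tau-1}(\sigma+\log n))$ time; you instead read off the sorted list $\Su_{3\tau-1}$ as the full-length entries of $\ARRshort$ and run the stack-based sorted insertion of \cref{pr:compact-trie} with word-packed LCPs, which is arguably simpler and sidesteps the frequency bookkeeping. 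For $\LTwa$ the paper simply walks from each node toward the root in $\bigO(\log n)$ time per node, which costs the same $\bigO(\sqrt{n}\log n)$ as your DFS with a rolling array; your scheme is fine, except that the update interval should include $d=\sdepth(w)$, i.e., be $(\sdepth(v)\dd \min(\sdepth(w),3\tau-2)]$, otherwise the entry for $\WA(w,\sdepth(w))$ is left stale — a trivial fix. One boundary nuance worth noting: because the paper's branching test is frequency-based, it can also promote to explicit status a few nodes whose second extension occurs only inside the last $3\tau-2$ positions of $\T$, so its constructed tree may differ from your literal compact trie of $\Su_{3\tau-1}$ on a handful of nodes tied to the short suffixes; your object matches the stated definition of $\Tshort$, so this is not a gap in your argument but a discrepancy internal to the paper (the same tail-boundary issue underlies the leaf-versus-$\ARRshort$ correspondence invoked later in \cref{lm:st-core-map-2}, and it affects both constructions equally).
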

\begin{proof}

  First, in $\bigO(n / \log_{\sigma} n)$ time we construct
  $\SACore(\T)$ using \cref{pr:sa-core-construction}.
  Note that during the construction,
  we compute the frequency $f_X = |\Occ(X, \T)|$ for every
  $X \in \Alphabet^{\leq 3\tau - 1}$ (note that by definition
  of $\Occ(X, \T)$ (\cref{sec:prelim}), we have
  $f_X = n$ for the empty string $X = \emptystring$).

  Next, we construct the trie $\Tshort$ and the associated data
  structures. Observe that for every $X \in \Alphabet^{\leq
  3\tau-1}$, the trie $\Tshort$ contains an explicit node $v$
  satisfying $\str(v) = X$ if and only if $f_X > 0$, and either $|X| =
  3\tau - 1$ or $|X| < 3\tau - 1$ and there exist distinct $c, c'
  \in \Alphabet$ such that $f_{Xc} > 0$ and $f_{Xc'} > 0$.~\footnote{Note,
  that this holds also for $X = \emptystring$ because we defined
  $f_{\emptystring} = n$ and assumed in \cref{sec:prelim}
  that $\T$ contains at least two distinct symbols.}
  Thus, given any $X$, we can in $\bigO(\sigma)$ time
  check if there exists a node of $\Tshort$ corresponding to $X$.
  Moreover, if such $v$ exists and $|X| > 0$ then to find $X'$
  satisfying $\str(\parent(v)) = X'$, it suffices to compute the
  longest prefix $X'$ of $X$ such that $\LTrange$ for $X'$ is
  different from $\LTrange$ for $X$. Thus, such $X'$ can be computed
  in $\bigO(\tau) = \bigO(\log n)$ time. Using the above observations,
  we construct $\Tshort$ as follows. We maintain a lookup table
  $\LTnode$ that for any $X \in \Alphabet^{\leq 3\tau - 1}$ maps the
  integer $\Int(X)$ to a pointer to the node $v$ of $\Tshort$
  satisfying $\str(v) = X$ if such $v$ exists. By $\Int(X) \in [0 \dd
  \sigma^{6\tau})$, the table needs $\bigO(\sigma^{6\tau}) =
  \bigO(n/\log_{\sigma} n)$ space and its initialization takes
  $\bigO(n/\log_{\sigma} n)$ time.  During the construction, nodes are
  stored in a dynamic array with amortized $\bigO(1)$-time insertion
  at the end, and pointers are implemented as indexes of this array.
  We enumerate all $X \in \Alphabet^{\leq 3\tau - 1}$ in the order of
  non-decreasing length, and in case of ties, in lexicographical
  order. For each $X$, using the above method in $\bigO(\sigma)$ time
  we check whether there should be a node in $\Tshort$ satisfying
  $\str(v) = X$.  If so, we create a new node $v$, add it to the array
  of nodes, and update the lookup table $\LTnode$. Associated with $v$
  we store the string $X$ encoded as $\Int(X)$ and the length $|X| =
  \sdepth(v)$. If $|X| > 0$, in $\bigO(\log n)$ time we then compute
  the longest prefix $X'$ of $X$ for which $(\LB(X', \T), \UB(X',\T))
  \neq (\LB(X, \T), \UB(X, \T))$ (utilizing the lookup table
  $\LTrange$), and then using $\LTnode$ obtain $v'$ satisfying
  $\str(v') = X'$.  We then set $\parent(v) = v'$ and add $v$ to the
  list of children of $v'$, updating also the links between children
  of $v'$. Over all $X \in \Alphabet^{\leq 3\tau - 1}$, the
  construction takes $\bigO(\sigma^{3\tau-1} (\sigma + \log n)) =
  \bigO(n/\log_{\sigma} n)$ time. After constructing $\Tshort$, we
  augment it with auxiliary structures as follows:
  \begin{enumerate}[label=(\alph*)]
  \item In $\bigO(\sigma^{3\tau-1}) = \bigO(n/\log_{\sigma} n)$ time
    we preprocess $\Tshort$ for $\bigO(1)$-time $\LCA$ queries using
    the structure from~\cite{BenderF00}.
  \item Next, we perform a traversal of $\Tshort$. For each node $v$
    different from the root, we obtain the pointer to $p = \parent(v)$
    and $c = \str(v)[|\str(p)| + 1]$.  We then injectively map $(i_p,
    c)$ (where $i_p$ is the pointer to $p$) to an integer $x$ not
    exceeding $2\sigma^{3\tau} = \bigO(\sqrt{n})$ and set $\LTchild[x]
    := i_v$, where $i_v$ is the pointer to $v$.  The construction
    takes $\bigO(\sqrt{n}) = \bigO(n / \log_{\sigma} n)$ time.
  \item Next, starting from each node $v$ of $\Tshort$ we compute the
    pointer to $\WA(v, d)$ for every $d \in [0 \dd 3\tau - 1)$.  It
    suffices to perform one traversal towards the roots and thus this
    takes $\bigO(\log n)$ time per node (in total for all $d$).  For
    each computed node $v'$, we map the pair $(i_v, d)$ (where $i_v$
    is the pointer to $v$) to an integer $x$ not exceeding
    $\bigO(\sqrt{n} \log n)$ and set $\LTwa[x] := i_{v'}$, where
    $i_{v'}$ is the pointer to $v'$.  Including the initialization of
    $\LTwa$, the construction takes $\bigO(\sqrt{n} \log n) =
    \bigO(n/\log_{\sigma} n)$ time.
  \item Finally, we perform the in-order traversal of the tree,
    collecting the leaves of $\Tshort$ in an array. By the bound on the
    number of nodes, this takes $\bigO(n / \log_{\sigma} n)$
    time. \qedhere
  \end{enumerate}
\end{proof}

\subsection{The Nonperiodic Nodes}\label{sec:st-nonperiodic}

In this section, we describe a data structure used to perform
operations on nonperiodic nodes (see \cref{def:node-periodicity}) in
$\bigO(\log^{\epsilon} n)$ time.

The section is organized as follows. First, we introduce the
components of the data structure (\cref{sec:st-nonperiodic-ds}).  We
then show how using this structure to implement some basic
navigational routines (\cref{sec:st-nonperiodic-nav}). Next, we
describe the query algorithms for the fundamental operations
(\cref{sec:st-nonperiodic-lca,sec:st-nonperiodic-child,%
sec:st-nonperiodic-pred,sec:st-nonperiodic-wa}).  Finally, we show
the construction algorithm (\cref{sec:st-nonperiodic-construction}).

\subsubsection{The Data Structure}\label{sec:st-nonperiodic-ds}

\bfparagraph{Definitions}

Let $\S$ be a $\tau$-synchronizing set, as defined in
\cref{sec:sa-nonperiodic-ds}.  Recall (\cref{sec:pm-nonperiodic-ds})
that $\ARRslex[1 \dd n']$ is an array defined by $\ARRslex[i] =
\slex_i$. Let $\TSSS$ denote the compact trie of the set $\{\T[i \dd
n] : i \in \S\}$.

\bfparagraph{Components}

The data structure to handle nonperiodic nodes consists of three
components:
\begin{enumerate}
\item The index core $\STCore(\T)$ (\cref{sec:st-core-ds}). It takes
  $\bigO(n/\log_{\sigma} n)$ space.
\item The data structure from \cref{sec:sa-nonperiodic-ds} using
  $\bigO(n / \log_{\sigma} n)$ space.
\item The compact trie $\TSSS$ represented as in
  \cref{pr:compact-trie} (i.e., for the array $\ARRslex[1 \dd n']$
  defined above). By $n' = \bigO(n / \log_{\sigma} n)$ and
  \cref{pr:compact-trie}, it needs $\bigO(n/\log_{\sigma} n)$ space.
\end{enumerate}

In total, the data structure takes $\bigO(n/\log_{\sigma} n)$ space.

\subsubsection{Navigation Primitives}\label{sec:st-nonperiodic-nav}

\bfparagraph{Mapping from $\ST$ to $\TSSS$}

For any explicit nonperiodic node $v$ of $\ST$ satisfying $\sdepth(v)
\geq 3\tau - 1$, we define $\mapToTSSS(v) = u$ as a node of $\TSSS$
satisfying $\str(v) = X[1 \dd \deltatext] \cdot \str(u)$, where $X \in
\D$ is a prefix of $\str(v)$ and $\deltatext = |X| - 2\tau$ (such $X$
exists and is unique, since for $Y = \str(v)[1 \dd 3\tau {-} 1]$ it
holds $\Occ(Y, \T) \neq \emptyset$ and $\per(Y) > \tfrac{1}{3}\tau$;
see \cref{sec:sa-nonperiodic}).

\begin{lemma}\label{lm:st-nonperiodic-map}
  Let $v$ be an explicit nonperiodic node of $\ST$ satisfying
  $\sdepth(v) \geq 3\tau - 1$.
  \begin{enumerate}
  \item The node $\mapToTSSS(v)$ is well-defined.
  \item\label{lm:st-nonperiodic-map-it-2} Let $X \in \D$ be a prefix
    of $\str(v)$, $b_X = \LB(X, \T)$, $i_1 = \lrank(v) + 1$, $i_2 =
    \rrank(v)$, $y_1 = \select{W}{\revstr{X}}{i_1 - b_X}$, $y_2 =
    \select{W}{\revstr{X}}{i_2 - b_X}$, $u_1$ (resp.\ $u_2$) be the
    $y_1$th (resp.\ $y_2$th) leftmost leaf of $\TSSS$, and
    $u = \LCA(u_1, u_2)$. Then, $\mapToTSSS(v) = u$.
  \end{enumerate}
\end{lemma}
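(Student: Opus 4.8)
The statement has two parts: (1) $\mapToTSSS(v)$ is well-defined, and (2) the explicit formula via LCA of two leaves of $\TSSS$. For part (1), I would argue as follows. Since $v$ is a nonperiodic node with $\sdepth(v)\geq 3\tau-1$, the string $Y=\str(v)[1\dd 3\tau-1]$ satisfies $\Occ(Y,\T)\neq\emptyset$ and $\per(Y)>\tfrac13\tau$. As noted in \cref{sec:sa-nonperiodic}, this guarantees a unique $X\in\D$ that is a prefix of $\str(v)$; set $\deltatext=|X|-2\tau$. Pick any $j\in\Occ(v)$, so $\T[j\dd n]$ has $\str(v)$ as a prefix and hence $X$ as a prefix, meaning $j\notin\R$ and $X$ is the distinguishing prefix of $\T[j\dd n]$; thus $s:=\Successor_{\S}(j)=j+\deltatext\in\S$. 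Then $\T[s\dd n]=\T[j+\deltatext\dd n]$ is a suffix starting in $\S$, and $\str(v)[1+\deltatext\dd\sdepth(v)]$ is a prefix of it. The key claim is that this string $Q':=\str(v)[1+\deltatext\dd\sdepth(v)]$ is exactly $\str(u)$ for an \emph{explicit} node $u$ of $\TSSS$: because $v$ is a branching node of $\ST$ (two distinct left-extension characters), and the bijection between occurrences of $X$ in $\T$ shifted by $\deltatext$ and positions $s'\in\S$ with $\T^\infty[s'-\deltatext\dd s'+2\tau)=X$ (from the consistency of $\S$, as in \cref{lm:sa-nonperiodic-isa}) is order-preserving, the set of $\TSSS$-leaves corresponding to $\Occ(v)$ shifted by $\deltatext$ forms a contiguous range, and their pairwise longest common prefixes all equal $\sdepth(v)-\deltatext=|Q'|$, with at least two distinct characters appearing at position $|Q'|+1$. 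Hence there is a branching node $u$ of $\TSSS$ with $\str(u)=Q'$, and this $u$ is unique since distinct explicit nodes of a compact trie have distinct $\str$ values. This establishes well-definedness.

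\textbf{For part (2),} I would first unwind the definitions of $i_1,i_2,y_1,y_2$. We have $i_1=\lrank(v)+1$ and $i_2=\rrank(v)$, so $\SA[i_1],\SA[i_2]\in\Occ(v)\subseteq\Occ(X,\T)$ are respectively the first and last occurrences of $\str(v)$ in lexicographic order; moreover $b_X=\LB(X,\T)$. By \cref{lm:sa-nonperiodic-sa}\eqref{lm:sa-nonperiodic-sa-it-2} (applied with $i=i_1$ and $i=i_2$), the values $y_1=\select{W}{\revstr{X}}{i_1-b_X}$ and $y_2=\select{W}{\revstr{X}}{i_2-b_X}$ satisfy $\slex_{y_1}=\SA[i_1]+\deltatext$ and $\slex_{y_2}=\SA[i_2]+\deltatext$. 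So $u_1$ and $u_2$ are the leaves of $\TSSS$ corresponding to the suffixes $\T[\slex_{y_1}\dd n]$ and $\T[\slex_{y_2}\dd n]$, i.e.\ to the shifted first and last occurrences of $\str(v)$. Now I want to show $\LCA(u_1,u_2)=\mapToTSSS(v)$. By \cref{ob:lca}, $\str(\LCA(u_1,u_2))$ is the longest common prefix of $\T[\slex_{y_1}\dd n]$ and $\T[\slex_{y_2}\dd n]$. Using the order-preserving bijection again: the leaves of $\TSSS$ between $u_1$ and $u_2$ are exactly those whose suffix, shifted back by $\deltatext$, lies in $\Occ(v)$ (so they all have $Q'=\str(v)[1+\deltatext\dd\sdepth(v)]$ as a prefix of their label, hence $\lcp\geq|Q'|$), while $u_1$ and $u_2$ themselves, being the extreme leaves of the range, must already differ at position $|Q'|+1$ (since $v$ is branching in $\ST$, there are occurrences of $\str(v)$ extended by two different characters, and the extreme lex occurrences realize the extremes), so $\lcp\leq|Q'|$. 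Therefore $\str(\LCA(u_1,u_2))=Q'=\str(\mapToTSSS(v))$, and uniqueness of node labels in $\TSSS$ gives $\LCA(u_1,u_2)=\mapToTSSS(v)$.

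\textbf{The main obstacle} I anticipate is making rigorous the claim that the extreme occurrences $\SA[i_1]$ and $\SA[i_2]$ of $\str(v)$ map (after the $\deltatext$-shift) precisely to the leaves $u_1,u_2$ of $\TSSS$ that bracket $\mapToTSSS(v)$, and that the bracketing range has longest common prefix \emph{exactly} $|Q'|$ rather than something longer. This requires two ingredients: first, the consistency-of-$\S$ bijection between $\{j : \LCE(j,\SA[i_1])\geq|X|\}$ and $\{s\in\S : \T^\infty[s-\deltatext\dd s+2\tau)=X\}$ is monotone for the lexicographic order (this is essentially the content of \cref{lm:sa-nonperiodic-isa}\eqref{lm:sa-nonperiodic-isa-it-2} and the fact that $W[i]=\revstr{X_i}$ has $\revstr X$ as a prefix iff the shifted position is in $\Occ(X,\T)$); second, the branching structure of $v$ in $\ST$ transfers to a branching structure at depth $|Q'|$ in $\TSSS$ — which holds because if $\str(v)c$ and $\str(v)c'$ both occur in $\T$ for $c\neq c'$, then since $\sdepth(v)\geq 3\tau-1$ every occurrence of $\str(v)c$ (resp.\ $\str(v)c'$) contains, at offset $\deltatext$, a position in $\S$ by the density condition, so $Q'c$ and $Q'c'$ both occur as prefixes of $\TSSS$-suffixes. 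I would isolate these as one or two short auxiliary claims before assembling the two parts, rather than interleaving them.
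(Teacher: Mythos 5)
Your proposal follows essentially the same route as the paper's proof: use the consistency of $\S$ (you cite \cref{lm:sa-nonperiodic-sa}\eqref{lm:sa-nonperiodic-sa-it-2}, which the paper re-derives inline from \cref{lm:sa-nonperiodic-isa}) to obtain $\slex_{y_1}=\SA[i_1]+\deltatext$ and $\slex_{y_2}=\SA[i_2]+\deltatext$, and then identify $\LCA(u_1,u_2)$ with the node of $\TSSS$ labelled $Q'=\str(v)(\deltatext\dd\sdepth(v)]$ by an lcp computation. The one genuine gap is that both of your arguments lean on $v$ being a branching node ("since $v$ is branching in $\ST$\dots"; also, you mean right-extension characters of $\str(v)$, not left-extensions). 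An explicit node with $\sdepth(v)\ge 3\tau-1$ may well be a leaf, and then the branching step has nothing to work with: there is a single occurrence, $i_1=i_2$, and you need the separate (trivial) observation that $u_1=u_2$ is the leaf of $\TSSS$ labelled $\T[\SA[i_1]+\deltatext\dd n]=Q'$, so both well-definedness and the LCA formula hold. The paper handles the leaf case of part~1 as an explicit separate case.

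Two of your auxiliary claims are also false as stated, though neither is load-bearing: the leaves of $\TSSS$ lying between $u_1$ and $u_2$ are in general \emph{not} exactly the $\deltatext$-shifted occurrences of $\str(v)$ (a $Q'$-prefix forces only the $2\tau$ symbols to the right of an $\S$-position; the $\deltatext$ symbols to its left need not equal $X[1\dd \deltatext]$ --- this is precisely why prefix rank queries on $W$ are needed elsewhere), and the pairwise lcps within that range are at least, not equal to, $|Q'|$. Since only $u_1$ and $u_2$ matter, the cleaner way to get $\lcp(\str(u_1),\str(u_2))\le |Q'|$ --- and it covers the leaf case uniformly --- is the paper's: letting $v_1,v_2$ be the $i_1$th and $i_2$th leftmost leaves of $\ST$, the choice $i_1=\lrank(v)+1$, $i_2=\rrank(v)$ gives $v=\LCA(v_1,v_2)$, so by \cref{ob:lca} $\lcp(\T[\SA[i_1]\dd n],\T[\SA[i_2]\dd n])=\sdepth(v)$ exactly, and subtracting the shift $\deltatext\le\sdepth(v)$ yields $\lcp(\str(u_1),\str(u_2))=\sdepth(v)-\deltatext=|Q'|$. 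With these repairs your argument coincides with the paper's proof.
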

\begin{proof}
  1. Let $X \in \D$ be a prefix of $\str(v)$ and let $\deltatext = |X|
  - 2\tau$. If $v$ is a leaf of $\ST$, then for
  $i \in \Occ(\str(v), \T)$, it holds
  that $X$ is a prefix of $\T[i \dd n]$. Thus, by the consistency of
  $\S$, $i + \deltatext \in \S$, and consequently, there exist $u$ in
  $\TSSS$ such that $\str(v) = X[1 \dd \deltatext] \cdot
  \str(u)$. Otherwise (i.e., $v$ is an internal node), consider any
  two different leaves $v_1$ and $v_2$ in the subtree rooted in $v$
  such that $v = \LCA(v_1, v_2)$.  Let $i_1 \in \Occ(\str(v_1), \T)$
  and $i_2 \in \Occ(\str(v_2), \T)$.
  Then, $\str(v)$ is a prefix of both $\T[i_1 \dd n]$ and
  $\T[i_2 \dd n]$.  Since $X$ is a prefix of $\str(v)$, $X$ is
  therefore also a prefix of $\T[i_1 \dd n]$ and $\T[i_2 \dd
  n]$. Thus, again by the consistency of $\S$, we have $i_1 +
  \deltatext, i_2 + \deltatext \in \S$. Consequently, there exist
  nodes $u_1$ and $u_2$ in $\TSSS$ satisfying $\str(v_1) = X[1 \dd
  \deltatext] \cdot \str(u_1)$ and $\str(v_2) = X[1 \dd \deltatext]
  \cdot \str(u_2)$.  By \cref{ob:lca} applied to $v_1$ and $v_2$,
  for $\ell = \lcp(\str(v_1), \str(v_2))$ it holds $\str(v) =
  \str(v_1)[1 \dd \ell]$.  On the other hand, applying
  \cref{ob:lca} to $u_1$ and $u_2$ implies that for $\ell' =
  \lcp(\str(u_1), \str(u_2))$ and $u = \LCA(u_1, u_2)$, it holds
  $\str(u) = \str(u_1)[1 \dd \ell']$. Finally, by $\deltatext < |X|$,
  we have $\ell = \lcp(\str(v_1), \str(v_2)) = \lcp(X[1 \dd
  \deltatext] \cdot \str(u_1), X[1 \dd \deltatext] \cdot \str(u_2)) =
  \deltatext + \lcp(\str(u_1), \str(u_2)) = \deltatext + \ell'$. Thus,
  \begin{align*}
   \str(v) & = \str(v_1)[1 \dd \ell]\\
     &= X[1 \dd \deltatext] \cdot
        \str(u_1)[1 \dd \ell - \deltatext] \\
     &= X[1 \dd \deltatext] \cdot \str(u_1)[1 \dd \ell'] \\
     &= X[1 \dd \deltatext] \cdot \str(u),
  \end{align*}
  i.e., there exists $u$ in $\TSSS$ satisfying $\str(v) = X[1 \dd
  \deltatext] \cdot \str(u)$, i.e., $\mapToTSSS(v)$ is well-defined.

  2. Let $\deltatext = |X| - 2\tau$. To see that $y_1$ and $y_2$ in
  the definition are well-defined (i.e., that $i_1-b_X, i_2-b_X \in [1
  \dd \rank{W}{\revstr{X}}{n'}]$), recall first that (similarly as in
  the proof of \cref{lm:pm-nonperiodic,lm:sa-nonperiodic-isa}) by
  consistency of $\S$, there exists a bijection (given by $j \mapsto j
  + \deltatext$) between $\Occ(X,\T)$ and positions $s \in \S$ such
  that $\T^{\infty}[s - \deltatext \dd s + 2\tau) = X$.  In
  particular, by definition of $W[1 \dd n']$, this implies
  $\rank{W}{\revstr{X}}{n'} = |\Occ(X,\T)| = e_X - b_X$, where $e_X =
  \UB(X, \T)$. Observe now that in $\ST$, for any node $v$, it holds
  $\lrank(v) = \LB(\str(v), \T)$ and $\rrank(v) = \UB(\str(v), \T)$
  (this property does not hold, e.g., in $\TSSS$). Thus, since $X$ is
  a prefix of $\str(v)$, we have $b_X < i_1 \leq i_2 \leq e_X$.
  Combining with the above, we thus obtain $1 \leq i_1 - b_X \leq i_2
  - b_X \leq \rank{W}{\revstr{X}}{n'}$.

  Let $v_1$ (resp.\ $v_2$) be the $i_1$th (resp.\ $i_2$th) leftmost
  leaf of $\ST$. Denote $\str(v) = Q$.  By $i_1, i_2 \in (b_X \dd
  e_X]$, the string $X$ is a prefix of $\str(v_1)$ and
  $\str(v_2)$. Since $v = \LCA(v_1, v_2)$, $X$ is therefore also a
  prefix of $\str(v)$. To show $\str(v) = X[1 \dd \deltatext] \cdot
  \str(u)$, it thus suffices to show $\str(u) = Q(\deltatext \dd
  |Q|]$.  To this end, we will prove that $Q(\deltatext \dd |Q|]$ is a
  prefix of $\str(u_1)$ and $\str(u_2)$, and that it holds
  $\lcp(\str(u_1), \str(u_1)) = |Q| - \deltatext$.  By $u = \LCA(u_1,
  u_2)$, this immediately implies the claim.  Let $i \in (b_X \dd
  e_X]$.  By \cref{lm:sa-nonperiodic-isa} for $j = \SA[i]$, if $\slex_y =
  \SA[i] + \deltatext$ then $\delta(\SA[i]) =
  \rank{W}{\revstr{X}}{y}$. Since by definition of $b_X$ it holds
  $\delta(\SA[i]) = i - b_X$, we obtain $i - b_X =
  \rank{W}{\revstr{X}}{y}$.  Since $y$ also satisfies $\T[\slex_y -
  \deltatext \dd \slex_y + 2\tau) = X$, it must hold $y =
  \select{W}{\revstr{X}}{i - b_X}$. For such $y$ we have $\slex_y =
  \SA[i] + \deltatext$. Applied for $i_1$ and $i_2$, we obtain
  $\slex_{y_1} = \SA[i_1] + \deltatext$ and $\slex_{y_2} = \SA[i_2] +
  \deltatext$. Recall now that the sequence $(\slex_i)_{i\in [1 \dd
  n']}$ contain the positions in $\S$ sorted according to the
  lexicographical order of the corresponding suffixes of $\T$.  This
  implies that the $y_1$th (resp.\ $y_2$th) leftmost leaf $u_1$
  (resp.\ $u_2$) of $\TSSS$ satisfies $\str(u_1) =
  \T[\slex_{y_1} \dd n] = \T[\SA[i_1] + \deltatext \dd n]$
  (resp.\ $\str(u_2) = \T[\slex_{y_2} \dd n] = \T[\SA[i_2] +
  \deltatext \dd n]$).  Since all suffixes of $\T$ with starting
  positions in $\SA(\lrank(v) \dd \rrank(v)]$ have $Q$ as a prefix,
  and we clearly have $i_1, i_2 \in (\lrank(v) \dd \rrank(v)]$, we
  immediately obtain that $Q(\deltatext \dd |Q|]$ is a prefix of both
  $\T[\SA[i_1] + \deltatext \dd n] = \str(u_1)$ and $\T[\SA[i_2] +
  \deltatext \dd n] = \str(u_2)$. To show the second claim, we first
  note that we have $\lcp(\T[\SA[i_1] \dd n], \T[\SA[i_2] \dd n]) =
  \lcp(\str(v_1), \str(v_2)) = |\str(\LCA(v_1, v_2))| = |\str(v)| =
  |Q|$.  Together with $\deltatext \leq |X| \leq |Q|$, this implies
  $\lcp(\str(u_1), \str(u_2)) = \lcp(\T[\SA[i_1] + \deltatext \dd n],
  \T[\SA[i_2] + \deltatext \dd n]) = \lcp(\T[\SA[i_1] \dd n],
  \T[\SA[i_2] \dd n]) - \deltatext = |Q| - \deltatext$.  As noticed
  earlier, these two facts yield $\str(u) = Q(\deltatext \dd |Q|]$,
  and consequently $\str(v) = X[1 \dd \deltatext] \cdot
  \str(u)$. Thus, $\mapToTSSS(v) = u$.
\end{proof}

\begin{proposition}\label{pr:st-nonperiodic-map}
  Let $v$ be an explicit nonperiodic node of $\ST$ satisfying
  $\sdepth(v) \geq 3\tau - 1$.  Given the data structure from
  \cref{sec:st-nonperiodic-ds} and the pair $\repr(v)$, we can in
  $\bigO(\log^{\epsilon} n)$ time compute the pointer to
  $\mapToTSSS(v)$.
\end{proposition}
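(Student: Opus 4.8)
The plan is to turn the structural characterization of \cref{lm:st-nonperiodic-map}\eqref{lm:st-nonperiodic-map-it-2} into an algorithm, reusing the machinery already developed for $\SA$ queries on nonperiodic positions (\cref{sec:sa-nonperiodic}). Write $\repr(v) = (b, e)$ and set $i_1 = b + 1$, $i_2 = e$. First I would recover the distinguishing prefix $X \in \D$ of $\str(v)$: since $\sdepth(v) \geq 3\tau - 1$, the length-$(3\tau{-}1)$ prefix $Y$ of $\str(v)$ equals $\ARRshort[\rank{\BVshort}{1}{i_1 - 1} + 1]$ (exactly as in \cref{pr:sa-nonperiodic-sa}, except that here $\sdepth(v) \geq 3\tau - 1$ guarantees $|Y| = 3\tau - 1$, so the short-suffix case does not arise), and because $v$ is nonperiodic we have $\per(Y) > \tfrac{1}{3}\tau$ while $\Occ(Y, \T) \neq \emptyset$, so the lookup table $\LTD$ applied to $Y$ returns $\Int(X)$ for the unique $X \in \D$ prefixing $\str(v)$. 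A lookup in $\LTrange$ then yields $b_X = \LB(X, \T)$, and $\LTrev$ gives the packed representation of $\revstr{X}$; all of this takes $\bigO(1)$ time.

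Next I would compute $y_1 = \select{W}{\revstr{X}}{i_1 - b_X}$ and $y_2 = \select{W}{\revstr{X}}{i_2 - b_X}$ using the prefix selection structure of \cref{th:wavelet-tree} stored for $W[1\dd n']$, each in $\bigO(\log^{\epsilon} n)$ time. These queries are well-defined: the proof of \cref{lm:st-nonperiodic-map} shows that $1 \leq i_1 - b_X \leq i_2 - b_X \leq \rank{W}{\revstr{X}}{n'}$, using that $X$ is a prefix of $\str(v)$ and that in $\ST$ the values $\lrank(v)$ and $\rrank(v)$ coincide with $\LB(\str(v), \T)$ and $\UB(\str(v), \T)$. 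Then I would fetch the $y_1$th and $y_2$th leftmost leaves $u_1, u_2$ of $\TSSS$ via the leaf-pointer array supplied by \cref{pr:compact-trie}, and compute $u = \LCA(u_1, u_2)$ using the constant-time $\LCA$ support on $\TSSS$, also from \cref{pr:compact-trie}. By \cref{lm:st-nonperiodic-map}\eqref{lm:st-nonperiodic-map-it-2}, $u = \mapToTSSS(v)$, so we return the pointer to $u$. The total running time is dominated by the two prefix selection queries, giving $\bigO(\log^{\epsilon} n)$.

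The main obstacle here is not the algorithm itself — it is essentially a direct transcription of \cref{lm:st-nonperiodic-map} — but rather verifying that every ingredient is genuinely available from the components of \cref{sec:st-nonperiodic-ds} (in particular that the data structure from \cref{sec:sa-nonperiodic-ds} supplies $\BVshort$, $\ARRshort$, $\LTD$, $\LTrange$, $\LTrev$, and the wavelet tree for $W$, while $\TSSS$ comes with leaf pointers and $\LCA$ support), and that the selection arguments lie in range. Both points are already discharged by \cref{lm:st-nonperiodic-map} and the $\SA$-query machinery, so the proof should be short.
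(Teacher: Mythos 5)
Your proposal is correct and follows essentially the same route as the paper: extract the prefix $Y = \str(v)[1 \dd 3\tau - 1]$, obtain $X \in \D$ via $\LTD$, then $b_X$ via $\LTrange$ and $\revstr{X}$ via $\LTrev$, perform two prefix selection queries on $W$, and finish with the leaf pointers and $\LCA$ support of $\TSSS$, invoking \cref{lm:st-nonperiodic-map}\eqref{lm:st-nonperiodic-map-it-2}. The only cosmetic difference is that the paper retrieves $Y$ by first computing $\mapToShort(v)$ with \cref{pr:st-core-map} (which stores $\str$ at nodes of $\Tshort$), whereas you read it directly off $\BVshort$ and $\ARRshort$; both are $\bigO(1)$-time and use components available in the structure of \cref{sec:st-nonperiodic-ds}.
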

\begin{proof}
  Denote $(b, e) = \repr(v)$. First, using \cref{pr:st-core-map}, in
  $\bigO(1)$ time we compute pointer to $u = \mapToShort(v)$.  By
  \cref{lm:st-core-map}\eqref{lm:st-core-map-it-1}, we have $\str(u) =
  \str(v)[1 \dd 3\tau {-} 1]$.  Letting $Y = \str(u)$, we then have
  $\per(Y) > \tfrac{1}{3}\tau$ and $\Occ(Y, \T) \neq \emptyset$. This
  implies (see \cref{sec:sa-nonperiodic-ds}) that there exists a unique
  prefix $X \in \D$ of $\str(v)$. Using $\LTD$ on $Y$, in $\bigO(1)$
  time we obtain $X$. Using the lookup table $\LTrange$ (stored as
  part of $\STCore(\T)$; see \cref{sec:st-core}), in $\bigO(1)$ time
  we compute $b_X = \LB(X, \T)$.  Using the lookup table $\LTrev$
  stored in the structure from \cref{sec:st-nonperiodic-ds}, we then
  obtain $\revstr{X}$. Next, letting $i_1 = b + 1$ and $i_2 = e$
  (recall that $\repr(v) = (\lrank(v), \rrank(v))$), using
  \cref{th:wavelet-tree} in $\bigO(\log^{\epsilon} n)$ time we
  compute $y_1 = \select{W}{\revstr{X}}{i_1 - b_X}$ and $y_2 =
  \select{W}{\revstr{X}}{i_2 - b_X}$.  Then, using
  \cref{pr:compact-trie} in $\bigO(1)$ time we compute the pointers to
  the $y_1$th and $y_2$th leftmost leaves $u_1$ and $u_2$
  (respectively) of $\TSSS$.  Then, again using
  \cref{pr:compact-trie}, in $\bigO(1)$ time we compute and return the
  pointer to $u = \LCA(u_1, u_2)$. By
  \cref{lm:st-nonperiodic-map}\eqref{lm:st-nonperiodic-map-it-2}, it
  holds $\mapToTSSS(v) = u$.
\end{proof}

\bfparagraph{Mapping from $\TSSS$ to $\ST$}

For any string $X \in \Alphabet^{\leq 3\tau - 1}$ and any node $u$ of
the trie $\TSSS$, we define $\pseudoInvTSSS{X}{u} = (b_X + \delta_1,
b_X + \delta_2)$, where $b_X = \LB(X, \T)$, $z_1 = \lrank(u)$, $z_2 =
\rrank(u)$, $\delta_1 = \rank{W}{\revstr{X}}{z_1}$, and $\delta_2 =
\rank{W}{\revstr{X}}{z_2}$.

\begin{remark}\label{rm:st-nonperiodic-invmap}
  Note that the mapping from $\ST$ to $\TSSS$ is not necessarily
  injective, and hence it may not have an inverse. To perform the
  mapping from $\TSSS$ to $\ST$, we will use the above function.
  Note, however, that although the pair $\pseudoInvTSSS{X}{u}$ is
  always defined, not for \emph{every} $X$ and $u$ it yields
  $\repr(v)$ for some node $v$ of $\ST$. Below we show a simple but
  useful condition where it does. In the following sections we show
  more subtle uses of $\pseudoInvTSSS{X}{u}$ (see, e.g.,
  \cref{rm:st-nonperiodic-child}).
\end{remark}

\begin{lemma}\label{lm:st-nonperiodic-invmap}
  Let $v$ be an explicit nonperiodic node of $\ST$ satisfying
  $\sdepth(v) \geq 3\tau - 1$ and let $u = \mapToTSSS(v)$. Then,
  letting $X \in \D$ be a prefix of $\str(v)$, it holds $\repr(v) =
  \pseudoInvTSSS{X}{u}$.
\end{lemma}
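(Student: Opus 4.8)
The plan is to obtain this lemma as an immediate corollary of \cref{lm:pm-nonperiodic} applied to the pattern $\Pat := \str(v)$. The first step is to check that all hypotheses of \cref{lm:pm-nonperiodic} are satisfied: $\str(v)$ is nonperiodic (since $v$ is a nonperiodic node), it has length $|\str(v)| = \sdepth(v) \geq 3\tau - 1$, and the string $X \in \D$ in the statement — which is precisely the prefix of $\str(v)$ used in the definition of $u = \mapToTSSS(v)$ — is a prefix of $\str(v)$. I would also record that, because $v$ is a node of $\ST$ (and not of an auxiliary trie), we have $\repr(v) = (\lrank(v), \rrank(v)) = (\LB(\str(v), \T), \UB(\str(v), \T))$, so it suffices to show $\pseudoInvTSSS{X}{u} = (\LB(\str(v), \T), \UB(\str(v), \T))$.

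Next I would translate the quantities appearing in \cref{lm:pm-nonperiodic} into the language of $\TSSS$. Writing $\deltatext = |X| - 2\tau$ and $m = |\str(v)|$, the definition of $\mapToTSSS(v)$ gives $\str(v) = X[1 \dd \deltatext]\cdot\str(u)$, so the pattern $\Pat' := \str(v)(\deltatext \dd m]$ from \cref{lm:pm-nonperiodic} is exactly $\str(u)$. Since $(\slex_t)_{t \in [1 \dd n']}$ lists the elements of $\S$ sorted by the lexicographic order of the corresponding suffixes, $\TSSS$ is the compact trie of $\{\T[\slex_i \dd n]\}_{i \in [1 \dd n']}$ (built via \cref{pr:compact-trie} from $\ARRslex$), and the suffixes of $\T$ are pairwise distinct (as $\T[n]$ is unique), the $i$-th leftmost leaf of $\TSSS$ has label $\T[\slex_i \dd n]$. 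Therefore, by the definitions of $\lrank$ and $\rrank$ for compact tries, the pair $(\bpre, \epre)$ of \cref{lm:pm-nonperiodic} for $\Pat' = \str(u)$ equals $(\lrank(u), \rrank(u))$: indeed $\bpre = |\{i \in [1 \dd n'] : \T[\slex_i \dd n] \prec \str(u)\}| = \lrank(u)$, and $(\bpre \dd \epre]$ is exactly the set of leaves of $\TSSS$ in the subtree rooted at $u$, i.e., $(\lrank(u) \dd \rrank(u)]$, so $\epre = \rrank(u)$.

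Finally, plugging these identities into the conclusion of \cref{lm:pm-nonperiodic} gives $(\LB(\str(v),\T), \UB(\str(v),\T)) = (b_X + \rank{W}{\revstr{X}}{\lrank(u)},\, b_X + \rank{W}{\revstr{X}}{\rrank(u)})$, where $b_X = \LB(X,\T)$; and the right-hand side is, by definition, $\pseudoInvTSSS{X}{u}$ (with $z_1 = \lrank(u)$, $z_2 = \rrank(u)$). Combined with $\repr(v) = (\LB(\str(v),\T), \UB(\str(v),\T))$, this yields $\repr(v) = \pseudoInvTSSS{X}{u}$, as desired. I do not expect a genuine obstacle here: the only thing that needs care is the bookkeeping identification $(\bpre, \epre) = (\lrank(u), \rrank(u))$, while all the substantive content — the bijection between $\Occ(X, \T)$ and the matching positions of $\S$, and the reduction of suffix comparisons to comparisons of the shifted, truncated suffixes — is already packaged inside \cref{lm:pm-nonperiodic}.
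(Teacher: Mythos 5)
Your proposal is correct and takes essentially the same route as the paper's proof: both reduce the claim to \cref{lm:pm-nonperiodic} applied to $\Pat = \str(v)$, use the definition of $\mapToTSSS$ to identify $\Pat' = \str(v)(\deltatext \dd |\str(v)|]$ with $\str(u)$, and then match the pair $(\bpre, \epre)$ from that lemma with $(\lrank(u), \rrank(u))$ via the sorted-leaf structure of $\TSSS$ before reading off $\pseudoInvTSSS{X}{u}$. The only cosmetic difference is that the paper also explicitly notes $\pseudoInvTSSS{X}{u}$ is well-defined because $\D \subseteq \Alphabet^{\leq 3\tau - 1}$, a remark worth keeping but not a gap in your argument.
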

\begin{proof}
  First, recall that $\D \sub \Alphabet^{\leq 3\tau - 1}$
  (\cref{sec:sa-nonperiodic-ds}). Thus, $\pseudoInvTSSS{X}{u}$ is
  well-defined.  Denote $b_X = \LB(X, \T)$, $Q = \str(v)$, $\deltatext
  = |X| - 2\tau$, $\Qsuf = Q(\deltatext \dd |Q|]$. Note that $\str(u)
  = \Qsuf$. Since $\{\slex_i\}_{i \in [1 \dd n']} = \S$ and
  $(\T[\slex_i \dd n])_{i \in [1 \dd n']}$ is lexicographically
  sorted, it holds by definition of $\TSSS$ that $\lrank(u) = |\{i \in
  [1 \dd n'] : \T[\slex_i \dd n] \prec \Qsuf\}|$ and $(\lrank(u) \dd
  \rrank(u)] = \{i \in [1 \dd n'] : \Qsuf\text{ is a prefix of
  }\T[\slex_i \dd n]\}$ (in particular, we have $\{\slex_i\}_{i \in
    (\lrank(u) \dd \rrank(u)]} = \Occ(\Qsuf,\T)$).  Therefore, letting
  $\delta_1 = \rank{W}{\revstr{X}}{\lrank(u)}$ and $\delta_2 =
  \rank{W}{\revstr{X}}{\rrank(u)}$, by \cref{lm:pm-nonperiodic}, it
  holds $\repr(v) = (\LB(Q, \T), \allowbreak \UB(Q, \T)) = (b_X +
  \delta_1, b_X + \delta_2) = \pseudoInvTSSS{X}{u}$.
\end{proof}

\begin{proposition}\label{pr:st-nonperiodic-invmap}
  Let $u$ be a node of $\TSSS$. Given the data structure from
  \cref{sec:st-nonperiodic-ds}, a pointer to $u$, and the value
  $\Int(X)$ for some $X \in \Alphabet^{\leq 3\tau - 1}$, we can in
  $\bigO(\log^{\epsilon} n)$ time compute the pair
  $\pseudoInvTSSS{X}{u}$.
\end{proposition}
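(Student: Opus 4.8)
The plan is to unpack the definition of $\pseudoInvTSSS{X}{u}$ and verify that each of the four quantities it depends on---$b_X$, $z_1=\lrank(u)$, $z_2=\rrank(u)$, and the two prefix rank values $\delta_1=\rank{W}{\revstr{X}}{z_1}$, $\delta_2=\rank{W}{\revstr{X}}{z_2}$---can be obtained in the claimed time from the available components. First I would retrieve $b_X=\LB(X,\T)$ in $\bigO(1)$ time using the lookup table $\LTrange$ (stored as part of $\STCore(\T)$, hence available in the structure of \cref{sec:st-nonperiodic-ds}), accessing it via the integer key $\Int(X)$ that is supplied as input. Next, I would obtain $z_1=\lrank(u)$ and $z_2=\rrank(u)$ in $\bigO(1)$ time directly from the pointer to $u$, since the compact trie $\TSSS$ is represented as in \cref{pr:compact-trie}, which explicitly supports returning $(\lrank(u),\rrank(u))$ in constant time for any node.

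The remaining step is the computation of the two prefix rank queries. For this I would first recover the packed representation of $\revstr{X}$ from $\Int(X)$: this is exactly what the lookup table $\LTrev$ (a component of the data structure from \cref{sec:sa-nonperiodic-ds}, which is in turn a component of the structure of \cref{sec:st-nonperiodic-ds}) provides, in $\bigO(1)$ time. Then, invoking \cref{th:wavelet-tree} on the sequence $W[1\dd n']$ (also a component inherited from \cref{sec:sa-nonperiodic-ds}), I would compute $\delta_1=\rank{W}{\revstr{X}}{z_1}$ and $\delta_2=\rank{W}{\revstr{X}}{z_2}$, each in $\bigO(\log^\epsilon n)$ time. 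Finally, I would return the pair $(b_X+\delta_1,\,b_X+\delta_2)$, which is $\pseudoInvTSSS{X}{u}$ by definition.

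Summing the costs gives $\bigO(\log^\epsilon n)$ overall, dominated by the two prefix rank queries; every other operation is $\bigO(1)$. A minor point worth checking is that the arguments $z_1,z_2$ passed to the prefix rank queries lie in the valid range $[0\dd n']$, which holds because $\lrank(u),\rrank(u)\in[0\dd n']$ for any node $u$ of a trie on $n'$ leaves; note also that since the statement only asks to \emph{compute} the pair $\pseudoInvTSSS{X}{u}$ (not to certify that it equals $\repr(v)$ for some node $v$ of $\ST$---that is the content of \cref{lm:st-nonperiodic-invmap}, proved separately), no further correctness argument is needed beyond matching the definition. The main (and essentially only) obstacle is bookkeeping: confirming that each referenced table and structure ($\LTrange$, $\LTrev$, the $\TSSS$ representation, and the $W$-structure of \cref{th:wavelet-tree}) is indeed reachable as a sub-component of the data structure of \cref{sec:st-nonperiodic-ds}, which it is via the chain $\STCore(\T)\supseteq\SACore(\T)\ni\LTrange$ and the inclusion of the structure from \cref{sec:sa-nonperiodic-ds}. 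There is no combinatorial difficulty here; the proof is a direct composition of already-established primitives.
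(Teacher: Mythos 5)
Your proposal is correct and follows essentially the same route as the paper's proof: look up $b_X$ via $\LTrange$ using $\Int(X)$, obtain $\revstr{X}$ via $\LTrev$, read $(\lrank(u),\rrank(u))$ from the \cref{pr:compact-trie} representation of $\TSSS$, and answer the two prefix rank queries on $W$ with \cref{th:wavelet-tree}, returning $(b_X+\delta_1,b_X+\delta_2)$. The component-availability and time accounting match the paper exactly.
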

\begin{proof}
  First, using the $\LTrange$ lookup table (stored as part of
  $\STCore(\T)$; see \cref{sec:st-core}), we compute $b_X = \LB(X,
  \T)$. Using the lookup table $\LTrev$ stored in the structure from
  \cref{sec:st-nonperiodic-ds}, we compute $\revstr{X}$.  In
  $\bigO(1)$ we obtain $z_1 = \lrank(u)$ and $z_2 = \rrank(u)$
  (\cref{pr:compact-trie}).  Finally, using \cref{th:wavelet-tree},
  in $\bigO(\log^{\epsilon} n)$ time we compute $\delta_1 =
  \rank{W}{\revstr{X}}{z_1}$ and $\delta_2 =
  \rank{W}{\revstr{X}}{z_2}$, and return $\pseudoInvTSSS{X}{u} = (b_X
  + \delta_1, b_X + \delta_2)$.
\end{proof}

\subsubsection{Implementation of
  \texorpdfstring{$\LCA(u, v)$}{LCA(u, v)}}\label{sec:st-nonperiodic-lca}

\begin{lemma}\label{lm:st-nonperiodic-lca}
  Let $v_1$ and $v_2$ be explicit nodes of $\ST$ such that $\LCA(v_1,
  v_2)$ is nonperiodic and it holds $\sdepth(\LCA(v_1, v_2)) \geq 3\tau -
  1$. Then, $v_1$ and $v_2$ are nonperiodic and it holds $\sdepth(v_1)
  \geq 3\tau - 1$ and $\sdepth(v_2) \geq 3\tau - 1$.  Moreover,
  \[
    \mapToTSSS(\LCA(v_1, v_2)) = \LCA(\mapToTSSS(v_1),
    \mapToTSSS(v_2)).
  \]
\end{lemma}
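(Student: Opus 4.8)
The statement has two parts: a structural claim (that $v_1,v_2$ are nonperiodic with string-depth $\geq 3\tau-1$) and the commutation identity for $\mapToTSSS$. The plan is to handle the structural part first, since it is what makes $\mapToTSSS(v_1)$ and $\mapToTSSS(v_2)$ well-defined and hence allows the second part to even be stated. For the structural part, the key observation is that $\LCA(v_1,v_2)$ is an ancestor of both $v_1$ and $v_2$, so $\str(\LCA(v_1,v_2))$ is a prefix of both $\str(v_1)$ and $\str(v_2)$. Thus $\sdepth(v_i) \geq \sdepth(\LCA(v_1,v_2)) \geq 3\tau-1$. For nonperiodicity: a node $v_i$ is periodic exactly when $\per(\str(v_i)[1\dd 3\tau-1]) \leq \tfrac13\tau$ (Definition~\ref{def:pattern-periodicity}/\ref{def:node-periodicity}); but since $\sdepth(\LCA(v_1,v_2)) \geq 3\tau-1$, the length-$(3\tau-1)$ prefixes of $\str(v_1)$, $\str(v_2)$, and $\str(\LCA(v_1,v_2))$ all coincide, so $v_i$ is periodic if and only if $\LCA(v_1,v_2)$ is — and by hypothesis the latter is nonperiodic.

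\textbf{The commutation identity.} Having established that all three nodes have depth $\geq 3\tau-1$ and are nonperiodic, $\mapToTSSS$ is defined on all of them (by \cref{lm:st-nonperiodic-map}(1)). Write $v = \LCA(v_1,v_2)$, $u_i = \mapToTSSS(v_i)$, $u = \mapToTSSS(v)$. Let $X \in \D$ be the (unique) distinguishing prefix of $\str(v)$; since $X$ is a prefix of $\str(v)$ which in turn is a prefix of $\str(v_1)$ and $\str(v_2)$, the same $X$ is the distinguishing prefix of both $\str(v_1)$ and $\str(v_2)$ (uniqueness follows from $\D$ being prefix-free, \cite[Lemma~6.1]{sss}; and note that the length-$(3\tau-1)$ prefix $Y$ of $\str(v)$ has $\Occ(Y,\T)\neq\emptyset$ and $\per(Y) > \tfrac13\tau$, so a distinguishing prefix exists). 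Let $\deltatext = |X| - 2\tau$. By the defining property of $\mapToTSSS$, $\str(v_i) = X[1\dd\deltatext]\cdot\str(u_i)$ and $\str(v) = X[1\dd\deltatext]\cdot\str(u)$. Now I would use \cref{ob:lca} twice: on $v_1,v_2$ it gives $\str(v) = \str(v_1)[1\dd\ell]$ where $\ell = \lcp(\str(v_1),\str(v_2))$; on $u_1,u_2$ it gives $\str(\LCA(u_1,u_2)) = \str(u_1)[1\dd\ell']$ where $\ell' = \lcp(\str(u_1),\str(u_2))$. Since $\deltatext < |X| \leq \sdepth(v) \leq \ell$, the common prefix $X[1\dd\deltatext]$ is shared and cancels: $\ell = \deltatext + \ell'$, hence $\str(\LCA(u_1,u_2)) = \str(u_1)[1\dd\ell'] = \str(v_1)[\deltatext+1\dd\ell] = \str(u)$. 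Because distinct nodes of $\TSSS$ have distinct $\str$-values, this forces $\LCA(u_1,u_2) = u$, which is the claim.

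\textbf{Main obstacle.} The only genuinely delicate point is the bookkeeping around the distinguishing prefix: one must verify that $X$ really is a \emph{common} distinguishing prefix of $\str(v)$, $\str(v_1)$, and $\str(v_2)$, which relies on (i) $X$ being a prefix of $\str(v)$ hence of both $\str(v_i)$, (ii) each $\str(v_i)$ having depth $\geq 3\tau-1$ so that its length-$(3\tau-1)$ prefix agrees with that of $\str(v)$ and thus has the same period behavior and the same unique element of $\D$ as prefix, and (iii) prefix-freeness of $\D$ to guarantee uniqueness. Once this is pinned down, the $\deltatext$-shift cancellation argument is entirely parallel to the one already carried out in \cref{lm:st-core-lca} (with $\TSSS$ in place of $\Tshort$ and the $X[1\dd\deltatext]$ prefix replacing the identity map), so no new combinatorial idea is needed beyond adapting that proof. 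I would present the structural claim as two short observations and then mirror the case analysis of \cref{lm:st-core-lca}, citing \cref{ob:lca}, \cref{lm:st-nonperiodic-map}, and the injectivity of $\str$ on $\TSSS$.
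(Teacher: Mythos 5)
Your proposal is correct and follows essentially the same route as the paper's proof: both derive the structural claims from the fact that the length-$(3\tau-1)$ prefix of $\str(\LCA(v_1,v_2))$ is shared by $\str(v_1)$ and $\str(v_2)$, then apply \cref{ob:lca} together with the common distinguishing prefix $X\in\D$ and the cancellation $\ell=\deltatext+\ell'$, concluding via the injectivity of $\str$ on the nodes of $\TSSS$. The only difference is cosmetic ordering (you compare $\str(\LCA(u_1,u_2))$ with $\str(u)$ directly, while the paper rebuilds $\str(v)=X[1\dd\deltatext]\cdot\str(\LCA(u_1,u_2))$), which changes nothing of substance.
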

\begin{proof}
  Denote $v = \LCA(v_1, v_2)$ and $Y = \str(v)[1 \dd 3\tau {-} 1]$.
  By the assumption, we have $\per(Y) > \tfrac{1}{3}\tau$. Since by
  definition of $v$, the string $Y$ is a prefix of $\str(v_1)$ and
  $\str(v_2)$, we thus obtain that $v_1$ and $v_2$ are nonperiodic and
  it holds $\sdepth(v_1) \geq 3\tau - 1$ and $\sdepth(v_2) \geq 3\tau
  - 1$. Thus, $u_1 = \mapToTSSS(v_1)$ and $u_2 = \mapToTSSS(v_2)$
  are well-defined (see \cref{sec:st-nonperiodic-nav}).

  Let $u = \LCA(u_1, u_2)$, $\ell' = \sdepth(u)$, and $\ell =
  \sdepth(v)$.  By \cref{ob:lca}, we have $\ell = \lcp(\str(v_1),
  \str(v_2)), \ell' = \lcp(\str(u_1), \str(u_2))$, $\str(v) =
  \str(v_1)[1 \dd \ell]$, and $\str(u) = \str(u_1)[1 \dd \ell']$.  Let
  now $X \in \D$ be a prefix of $\str(v)$ (such $X$ exists and is
  unique since $\per(Y) > \tfrac{1}{3}\tau$ and since $\str(v)$ being
  a substring of $\T$ implies $\Occ(Y, \T) \neq \emptyset$; see also
  \cref{sec:sa-nonperiodic-ds}). By definition of $\mapToTSSS(v_1)$ and
  $\mapToTSSS(v_2)$, we have $\str(v_1) = X[1 \dd \deltatext] \cdot
  \str(u_1)$ and $\str(v_2) = X[1 \dd \deltatext] \cdot \str(u_2)$,
  where $\deltatext = |X| - 2\tau$.  This implies $\ell = \deltatext +
  \ell'$, and consequently, $\str(v) = \str(v_1)[1 \dd \ell] = X[1 \dd
  \deltatext] \cdot \str(u_1)[1 \dd \ell - \deltatext] = X[1 \dd
  \deltatext] \cdot \str(u_1)[1 \dd \ell'] = X[1 \dd \deltatext] \cdot
  \str(u)$.  Since $v$ is an explicit node of $\ST$, and no two nodes
  of $\TSSS$ have the same value of $\str$, we therefore obtain
  $\mapToTSSS(v) = u$.
\end{proof}

\begin{proposition}\label{pr:st-nonperiodic-lca}
  Let $v_1$ and $v_2$ be explicit nodes of $\ST$ such that $\LCA(v_1,
  v_2)$ is nonperiodic and satisfies $\sdepth(\LCA(v_1, v_2)) \geq
  3\tau - 1$.  Given the data structure from
  \cref{sec:st-nonperiodic-ds} and the pairs $\repr(v_1)$ and
  $\repr(v_2)$, we can in $\bigO(\log^{\epsilon} n)$ time compute
  $\repr(\LCA(v_1, v_2))$.
\end{proposition}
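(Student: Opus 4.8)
The plan is to assemble the result from the navigation primitives already set up in this section, following the usual three-step pattern: map the input nodes into $\TSSS$, perform an $\LCA$ query there, and map the resulting node back to $\ST$. The structural backbone is \cref{lm:st-nonperiodic-lca}, which first tells us that the hypotheses propagate downward — both $v_1$ and $v_2$ are nonperiodic and satisfy $\sdepth(v_i) \ge 3\tau - 1$ — and second gives the commutation identity $\mapToTSSS(\LCA(v_1, v_2)) = \LCA(\mapToTSSS(v_1), \mapToTSSS(v_2))$. This is exactly what reduces the query in $\ST$ to an $\LCA$ query inside $\TSSS$.

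Concretely, first I would invoke \cref{pr:st-nonperiodic-map} on $v_1$ and on $v_2$, obtaining in $\bigO(\log^\epsilon n)$ time each the pointers to $u_1 = \mapToTSSS(v_1)$ and $u_2 = \mapToTSSS(v_2)$; this call is legitimate precisely because of the depth and nonperiodicity conclusions of \cref{lm:st-nonperiodic-lca}. Then, using the $\bigO(1)$-time $\LCA$ support on $\TSSS$ from \cref{pr:compact-trie}, I would compute $u = \LCA(u_1, u_2)$, which by \cref{lm:st-nonperiodic-lca} equals $\mapToTSSS(\LCA(v_1, v_2))$.

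Next I need the distinguishing prefix $X \in \D$ of $\str(\LCA(v_1, v_2))$ in order to apply the inverse map. The key observation is that this prefix depends only on the length-$(3\tau-1)$ prefix of the node label, and since $\LCA(v_1, v_2)$ is an ancestor of $v_i$ (so shares that prefix), the length-$(3\tau-1)$ prefix is recoverable from $\Tshort$ using only $\repr(v_1)$ and $\repr(v_2)$: by \cref{pr:st-core-map} I get pointers to $\mapToShort(v_1)$ and $\mapToShort(v_2)$ in $\bigO(1)$ time, and by \cref{lm:st-core-lca} their $\LCA$ in $\Tshort$ equals $\mapToShort(\LCA(v_1, v_2))$; since $\sdepth(\LCA(v_1, v_2)) \ge 3\tau - 1$, \cref{lm:st-core-map} gives that the label $Y$ stored at that node equals $\str(\LCA(v_1, v_2))[1 \dd 3\tau - 1]$. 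Because $\LCA(v_1, v_2)$ is nonperiodic and $Y$ occurs in $\T$ (so $\per(Y) > \tfrac13\tau$ and $\Occ(Y,\T)\neq\emptyset$), the lookup table $\LTD$ applied to $Y$ returns, in $\bigO(1)$ time, the unique $X \in \D$ that is a prefix of $\str(\LCA(v_1, v_2))$, encoded as $\Int(X)$. Finally, by \cref{lm:st-nonperiodic-invmap} we have $\repr(\LCA(v_1, v_2)) = \pseudoInvTSSS{X}{u}$, and \cref{pr:st-nonperiodic-invmap} computes this pair in $\bigO(\log^\epsilon n)$ time given the pointer to $u$ and $\Int(X)$. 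Summing the step costs yields $\bigO(\log^\epsilon n)$ total.

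There is no serious obstacle: all heavy lifting lives in the lemmas. The only points that need care are (i) confirming that the preconditions of \cref{pr:st-nonperiodic-map} (nonperiodicity and string-depth $\ge 3\tau - 1$ of $v_1$ and $v_2$) are indeed supplied by \cref{lm:st-nonperiodic-lca}, and (ii) justifying that one and the same distinguishing prefix $X$ is the correct one both for $\str(\LCA(v_1,v_2))$ and in the final inverse-map call — which holds because $\LCA(v_1, v_2)$ is a common ancestor of $v_1$ and $v_2$ and therefore shares their length-$(3\tau-1)$ prefix, and $X$ is determined by that prefix via $\LTD$.
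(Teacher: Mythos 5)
Your proposal is correct and follows essentially the same route as the paper's proof: reduce to an $\LCA$ query in $\TSSS$ via \cref{lm:st-nonperiodic-lca} and \cref{pr:st-nonperiodic-map}, recover the distinguishing prefix $X$ through $\mapToShort$, \cref{lm:st-core-lca}, and $\LTD$, and finish with \cref{lm:st-nonperiodic-invmap} and \cref{pr:st-nonperiodic-invmap}. No gaps.
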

\begin{proof}
  Denote $v = \LCA(v_1, v_2)$.  By \cref{lm:st-nonperiodic-lca}, $v_1$
  and $v_2$ are nonperiodic and it holds $\sdepth(v_1) \geq 3\tau - 1$
  and $\sdepth(v_2) \geq 3\tau - 1$. Thus, $u_1 = \mapToTSSS(v_1)$ and
  $u_2 = \mapToTSSS(v_2)$ are well-defined (see
  \cref{sec:st-nonperiodic-nav}).  Using \cref{pr:st-nonperiodic-map},
  in $\bigO(\log^{\epsilon} n)$ time we compute pointers to $u_1$ and
  $u_2$. Next, using the representation of $\TSSS$ stored as part of
  the structure in \cref{sec:st-nonperiodic-ds}, and
  \cref{pr:compact-trie}, in $\bigO(1)$ time we compute a pointer to
  $u = \LCA(u_1, u_2)$. By \cref{lm:st-nonperiodic-lca}, it holds
  $\mapToTSSS(v) = u$.  We exploit this connection to compute
  $\repr(v)$.  Since $v$ is nonperiodic, letting $Y = \str(v)[1 \dd
  3\tau {-} 1]$, it holds $\per(Y) > \tfrac{1}{3}\tau$.  Since
  $\str(v)$ is a substring of $\T$, we have $\Occ(Y, \T) \neq
  \emptyset$. Together with $\per(Y) > \tfrac{1}{3}\tau$, this implies
  (see \cref{sec:sa-nonperiodic-ds}) that there exists a unique prefix
  $X \in \D$ of $\str(v)$.  We compute $X$ as follows. First, using
  \cref{pr:st-core-map}, in $\bigO(1)$ time we compute pointers to
  $u'_1 = \mapToShort(v_1)$ and $u'_2 = \mapToShort(v_2)$ of
  $\Tshort$. Then, using the $\LCA$ structure for $\Tshort$, we
  compute in $\bigO(1)$ time the pointer to node $u' = \LCA(u'_1,
  u'_2)$ of $\Tshort$. By \cref{lm:st-core-lca}, we now have
  $\mapToShort(v) = u'$. Moreover, by $\sdepth(v) \geq 3\tau - 1$ and
  \cref{lm:st-core-map}\eqref{lm:st-core-map-it-1}, $\str(u') =
  Y$. Using $\LTD$ on $Y$, in $\bigO(1)$ time we thus obtain $X$.
  Using \cref{pr:st-nonperiodic-invmap}, in $\bigO(\log^{\epsilon} n)$
  time, we then compute the pair $(b, e) = \pseudoInvTSSS{X}{u}$. As
  noted above, $\mapToTSSS(v) = u$. By
  \cref{lm:st-nonperiodic-invmap}, we therefore have $\repr(v) = (b,
  e)$.
\end{proof}

\subsubsection{Implementation of
  \texorpdfstring{$\child(v, c)$}{child(v, c)}}\label{sec:st-nonperiodic-child}

\begin{lemma}\label{lm:st-nonperiodic-child}
  Let $c \in \Alphabet$ and $v$ be an explicit nonperiodic internal
  node of $\ST$ satisfying $\sdepth(v) \geq 3\tau - 1$. Let $u =
  \mapToTSSS(v)$. If $\child(u,c) = \nil$ then $\child(v,c) =
  \nil$. Otherwise, letting $u' = \child(u,c)$, it holds
  \vspace{1.5ex}
  \[
    \repr(\child(v,c)) =
    \begin{cases}
      (b, e) & \text{if $b \neq e$},\\
      (0, 0)   & \text{otherwise},
    \end{cases}
    \vspace{1.5ex}
  \]
  where $(b,e) = \pseudoInvTSSS{X}{u'}$ and $X \in \D$ is a prefix of
  $\str(v)$.
\end{lemma}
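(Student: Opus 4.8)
The plan is to mirror the structure of the analogous small-depth result (\cref{lm:st-core-child}), but now "lifting" the child operation through the mapping $\mapToTSSS$ instead of through $\mapToShort$, and to handle the subtlety that $\mapToTSSS$ need not send $\child(v,c)$ to $\child(u,c)$. First I would fix $c\in\Alphabet$, the explicit nonperiodic internal node $v$ with $\sdepth(v)\ge 3\tau-1$, set $u=\mapToTSSS(v)$, and let $X\in\D$ be the (unique) prefix of $\str(v)$ in $\D$, $\deltatext=|X|-2\tau$; by definition of $\mapToTSSS$ we have $\str(v)=X[1\dd\deltatext]\cdot\str(u)$. The leaves of $\TSSS$ in the subtree of $u$ correspond exactly to the positions $\slex_i\in\S$ with $X[1\dd\deltatext]\cdot\str(u)$, shifted by $\deltatext$, equal to $\str(v)$, i.e.\ precisely the $i\in(\lrank(u)\dd\rrank(u)]$; combined with the consistency condition of $\S$ this says the leaves under $u$ in $\TSSS$ are in left-to-right order consistent with the suffixes $\SA(\lrank(v)\dd\rrank(v)]$ of $\T$ (this is exactly the ``observation for nonperiodic patterns'' of \cref{sec:to-pm}, formalized in \cref{lm:pm-nonperiodic} and reused in \cref{lm:st-nonperiodic-map}).

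\emph{The "only if" direction.} I would first show $\child(u,c)=\nil\implies\child(v,c)=\nil$. Suppose $\child(v,c)=v'\neq\nil$. Pick any two leaves $v_1,v_2$ under $v'$ with $v'=\LCA(v_1,v_2)$; since $\str(v)c$ is a prefix of $\str(v_1)$ and $\str(v_2)$ and $X$ is a prefix of $\str(v)$, the consistency of $\S$ (as in the proof of \cref{lm:st-nonperiodic-map}) gives leaves $u_1,u_2$ of $\TSSS$ with $\str(v_j)=X[1\dd\deltatext]\cdot\str(u_j)$, and since $|X[1\dd\deltatext]|=\deltatext\le|X|\le|\str(v)|$, the string $\str(v)c$ being a prefix of $\str(v_j)$ forces $\str(u)c=\str(v)c(\deltatext\dd\,]$ to be a prefix of $\str(u_j)$. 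Hence the edge out of $u$ in $\TSSS$ whose first character is $c$ exists, i.e.\ $\child(u,c)\neq\nil$; contrapositive gives the claim. (Conversely, if $\child(u,c)=u'\neq\nil$ but $b=e$ for $(b,e)=\pseudoInvTSSS{X}{u'}$, then $\Occ(\str(v)c,\T)=\emptyset$, i.e.\ $\child(v,c)=\nil$; this is handled by the displayed case split in the statement and needs no separate argument.)

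\emph{The value of $\repr(\child(v,c))$.} Now assume $\child(u,c)=u'\neq\nil$ and $(b,e)=\pseudoInvTSSS{X}{u'}$ with $b\neq e$. The point is that $u'$ is \emph{some} node of $\TSSS$ with $\str(u)c$ a prefix of $\str(u')$, but the true image $\widehat u=\mapToTSSS(\child(v,c))$ may lie strictly between $u$ and $u'$ on the path (extra $\TSSS$-nodes), so $u'\ne\widehat u$ in general. I therefore would prove: for \emph{every} node $w$ of $\TSSS$ lying on the $u$-to-$\widehat u$ path inclusive — in particular $w=u'$ — we have $\pseudoInvTSSS{X}{w}=\repr(\child(v,c))$. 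The key observation is that $\pseudoInvTSSS{X}{w}=(\LB(X,\T)+\rank{W}{\revstr X}{\lrank(w)},\,\LB(X,\T)+\rank{W}{\revstr X}{\rrank(w)})$ depends on $w$ only through the set $\{i:\str(w)\text{ is a prefix of }\T[\slex_i\dd n]\}=(\lrank(w)\dd\rrank(w)]$, and by \cref{lm:pm-nonperiodic} this equals $(\LB(\str(v)c,\T)\dd\UB(\str(v)c,\T)]=(\LB(\str(\child(v,c)),\T)\dd\UB(\str(\child(v,c)),\T)]$ whenever $\str(w)\in X[1\dd\deltatext]^{-1}$-shifts back to a string sandwiched between $\str(v)c$ and $\str(\child(v,c))$; all nodes $w$ on the $u$-to-$\widehat u$ path satisfy this because $\str(v)c$ is a prefix of $X[1\dd\deltatext]\cdot\str(w)$ and $X[1\dd\deltatext]\cdot\str(w)$ is a prefix of $\str(\child(v,c))$, so they have the same occurrence set in $\T$. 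Taking $w=u'$ (which is on that path since $\str(u)c$ is a prefix of $\str(\widehat u)$) and invoking \cref{lm:pm-nonperiodic} once more, with $b\neq e$ ruling out the empty case, yields $\repr(\child(v,c))=(b,e)$.

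\emph{Main obstacle.} The only genuinely delicate point is the discrepancy between $\child(u,c)$ and $\mapToTSSS(\child(v,c))$ — the ``extra nodes'' phenomenon flagged in \cref{rm:st-nonperiodic-child}. Once one recognizes that $\pseudoInvTSSS{X}{\cdot}$ is invariant along the relevant path because all those $\TSSS$-nodes encode the same $\T$-occurrence set (they are all ``inside'' the $\ST$-edge from $v$ to $\child(v,c)$), everything else is a routine application of \cref{lm:pm-nonperiodic,lm:st-nonperiodic-map} together with the consistency condition of $\S$; so the proof plan is: (i) reduce to the occurrence-set statement, (ii) prove $\child(u,c)=\nil\Leftrightarrow\Occ(\str(v)c,\T)=\emptyset$ via consistency, (iii) prove the path-invariance of $\pseudoInvTSSS{X}{\cdot}$ and identify its value with $\repr(\child(v,c))$ using \cref{lm:pm-nonperiodic}.
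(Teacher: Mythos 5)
Your proposal is correct in substance, and its computational core is the same as the paper's: everything reduces to \cref{lm:pm-nonperiodic} plus the observation that the leaves of $\TSSS$ below $u'=\child(u,c)$ are exactly the indices $i$ with $\str(u)c$ a prefix of $\T[\slex_i\dd n]$, i.e.\ $(\lrank(u'),\rrank(u'))$ is precisely the pair $(\bpre,\epre)$ for the pattern $\Pat'=\str(u)c$. The paper just applies \cref{lm:pm-nonperiodic} to $\Pat=\str(v)c$ with this interval, getting $(b,e)=(\LB(\Pat,\T),\UB(\Pat,\T))$ directly; both the nil case ($\child(u,c)=\nil$ gives $\epre=\bpre$, hence $|\Occ(\Pat,\T)|=0$) and the two branches of the case split then follow immediately. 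Your additional layer -- path-invariance of $\pseudoInvTSSS{X}{\cdot}$ between $u'$ and $\widehat u=\mapToTSSS(\child(v,c))$ -- is not needed for this lemma: \cref{rm:st-nonperiodic-child} explicitly notes that no correspondence between $\child(u,c)$ and $\mapToTSSS(\child(v,c))$ is required here, and that the ``extra nodes'' issue only matters for the weighted-ancestor operation (\cref{lm:st-nonperiodic-wa}). Your contrapositive consistency argument for the direction $\child(u,c)=\nil\Rightarrow\child(v,c)=\nil$ is a fine alternative to the paper's.

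Two points in your write-up need repair, though both are patchable. First, the main-case argument invokes $\widehat u$, which exists only if $\child(v,c)\neq\nil$ -- part of what is being proved -- so as written there is a circularity. It disappears if you note that $\pseudoInvTSSS{X}{u'}=(\LB(X[1\dd\deltatext]\cdot\str(u'),\T),\UB(X[1\dd\deltatext]\cdot\str(u'),\T))$ holds unconditionally by \cref{lm:pm-nonperiodic}, so $b\neq e$ already forces $\Occ(\str(v)c,\T)\neq\emptyset$ and hence $\child(v,c)\neq\nil$; or simply argue on patterns, as the paper does, without ever mentioning $\widehat u$. Second, the case $b=e$ does require an argument (namely that $b=e$ implies $\Occ(\str(v)c,\T)=\emptyset$, so returning $(0,0)=\repr(\nil)$ is correct); it is the same one-line application of \cref{lm:pm-nonperiodic}, so ``needs no separate argument'' is not accurate. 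Minor slips: $\widehat u$ cannot lie strictly between $u$ and its child $u'$ (it is $u'$ that lies on the path from $u$ to $\widehat u$, as you later state correctly); the invariance cannot include $w=u$, since $\pseudoInvTSSS{X}{u}=\repr(v)$ by \cref{lm:st-nonperiodic-invmap}; and the interval $(\lrank(w)\dd\rrank(w)]$ of indices in $[1\dd n']$ is not literally equal to $(\LB(\str(v)c,\T)\dd\UB(\str(v)c,\T)]$ -- it is mapped to it by the prefix rank queries in \cref{lm:pm-nonperiodic}.
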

\begin{proof}
  Denote $\Pat = \str(v)c$, $\deltatext = |X| - 2\tau$, and $\Pat' =
  \Pat(\deltatext \dd |\Pat|]$.  Observe that since $\str(v)$ is
  nonperiodic and it holds $\sdepth(v) \geq 3\tau - 1$, $\Pat$ is also
  nonperiodic and satisfies $|\Pat| \geq 3\tau - 1$.  Let $(\bpre,
  \epre)$ be such that $\bpre = |\{i \in [1 \dd n'] : \T[\slex_i \dd
  n] \prec \Pat'\}|$ and $(\bpre \dd \epre] = \{i \in [1 \dd n'] :
  \Pat'\text{ is a prefix of }\T[\slex_i \dd n]\}$.  Recall now that
  $u$ satisfies $\str(u) = \str(v)(\deltatext \dd |\str(v)|]$, or
  equivalently, $\str(u)c = \Pat'$.  By definition of $\TSSS$ and
  $\child(u, c)$, we thus obtain that $\child(u, c) = \nil$ implies
  $\epre - \bpre = 0$.  Consequently, by \cref{lm:pm-nonperiodic}, it
  holds $|\Occ(\Pat, \T)| = \UB(\Pat, \T) - \LB(\Pat, \T) = (b_X +
  \rank{W}{\revstr{X}}{\epre}) - (b_X + \rank{W}{\revstr{X}}{\bpre}) =
  \rank{W}{\revstr{X}}{\bpre} - \rank{W}{\revstr{X}}{\bpre} = 0$, and
  thus $\child(v, c) = \nil$.

  Let us now assume $\child(u, c) = u' \neq \nil$. By definition of
  $\pseudoInvTSSS{X}{u'}$, we then have $(b, e) = (b_X +
  \rank{W}{\revstr{X}}{\lrank(u')}, b_X +
  \rank{W}{\revstr{X}}{\rrank(u')})$, where $b_X = \LB(X, \T)$.  By
  definition of $\TSSS$ and $\child(u, c)$, however, we also have
  $\bpre = \lrank(u')$ and $\epre = \rrank(u')$.  Thus, by
  \cref{lm:pm-nonperiodic},
  \begin{align*}
    (\LB(\Pat, \T), \UB(\Pat, \T))
      &= (b_X + \rank{W}{\revstr{X}}{\bpre},
          b_X + \rank{W}{\revstr{X}}{\epre})\\
      &= (b_X + \rank{W}{\revstr{X}}{\lrank(u')},
          b_X + \rank{W}{\revstr{X}}{\rrank(u')})\\
      &= (b, e).
  \end{align*}
  By the above, if $b \neq e$, then $\Occ(\Pat, \T) \neq
  \emptyset$. This implies $\child(v, c) \neq \nil$ and
  $\repr(\child(v, c)) = (\LB(\Pat, \T), \UB(\Pat, \T))$. We thus
  indeed have $\repr(\child(v, c)) = (b, e)$.  Otherwise (i.e.,
  if $b = e$), by the above we have $\Occ(\Pat, \T) = \emptyset$. This
  implies $\child(v, c) = \nil$ and hence indeed we also have
  $\repr(\child(v, c)) = (0, 0)$.
\end{proof}

\begin{remark}\label{rm:st-nonperiodic-child}
  Note that even though in the above result we have $\mapToTSSS(v) =
  u$ and $\child(u, c)$ contains information used to determine
  $\child(v, c)$, it does not necessarily hold that
  $\mapToTSSS(\child(v, c)) = \child(u, c)$. The procedure is
  nevertheless correct, because such one-to-one correspondence is not
  required.  The details of this mapping, however, become relevant for
  the $\WA(v, d)$ operation and are explained in detail in the proof
  of \cref{lm:st-nonperiodic-wa}.
\end{remark}

\begin{proposition}\label{pr:st-nonperiodic-child}
  Let $v$ be an explicit nonperiodic internal node of $\ST$ satisfying
  $\sdepth(v) \geq 3\tau - 1$.  Given the data structure from
  \cref{sec:st-nonperiodic-ds}, $\repr(v)$, and $c \in \Alphabet$, in
  $\bigO(\log^{\epsilon} n)$ time we can compute $\repr(\child(v,
  c))$.
\end{proposition}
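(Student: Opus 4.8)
The plan is to assemble the navigation primitives of \cref{sec:st-nonperiodic-nav} and then read off the answer from the structural description in \cref{lm:st-nonperiodic-child}.

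First, since $v$ is an explicit nonperiodic node with $\sdepth(v) \geq 3\tau - 1$, the node $u := \mapToTSSS(v)$ is well defined by \cref{lm:st-nonperiodic-map}, so I would use \cref{pr:st-nonperiodic-map} to compute a pointer to $u$ in $\bigO(\log^{\epsilon} n)$ time. Next, using the representation of $\TSSS$ from \cref{pr:compact-trie} (where $\child(u,c)$ is obtained from $\pred(u, c{+}1)$ together with the packed text), I would check in $\bigO(\log \log n)$ time whether $\child(u, c) = \nil$; if so, \cref{lm:st-nonperiodic-child} immediately gives $\child(v,c) = \nil$, and I would return $\repr(\child(v,c)) = (0,0)$.

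Otherwise, let $u' := \child(u, c)$, whose pointer \cref{pr:compact-trie} also provides. To invoke \cref{lm:st-nonperiodic-child} I must supply the (unique) prefix $X \in \D$ of $\str(v)$, which I would obtain exactly as in the proof of \cref{pr:st-nonperiodic-map}: compute a pointer to $\mapToShort(v)$ via \cref{pr:st-core-map} in $\bigO(1)$ time; by $\sdepth(v) \geq 3\tau - 1$ and \cref{lm:st-core-map}\eqref{lm:st-core-map-it-1} its label is $Y := \str(v)[1 \dd 3\tau {-} 1]$, stored as $\Int(Y)$; since $\per(Y) > \tfrac{1}{3}\tau$ and $\Occ(Y,\T) \neq \emptyset$, a unique such $X$ exists (see \cref{sec:sa-nonperiodic-ds}), and applying the lookup table $\LTD$ to $Y$ yields $\Int(X)$ in $\bigO(1)$ time. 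Then I would call \cref{pr:st-nonperiodic-invmap} with the pointer to $u'$ and the value $\Int(X)$ to compute $(b,e) = \pseudoInvTSSS{X}{u'}$ in $\bigO(\log^{\epsilon} n)$ time, and return $(b,e)$ if $b \neq e$ and $(0,0)$ otherwise, as dictated by \cref{lm:st-nonperiodic-child}.

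Correctness is then immediate from \cref{lm:st-nonperiodic-child} and from the correctness of each primitive, and summing the costs gives $\bigO(\log^{\epsilon} n)$ overall (the calls to \cref{pr:st-nonperiodic-map,pr:st-nonperiodic-invmap} dominate the $\bigO(\log\log n)$ and $\bigO(1)$ steps). I do not expect a real obstacle here: the delicate part has already been carried out in \cref{lm:st-nonperiodic-child}, and what remains is a routine composition of the navigation lemmas. The one point to be careful about is that this procedure does \emph{not} in general satisfy $\mapToTSSS(\child(v,c)) = \child(u,c)$ (cf.\ \cref{rm:st-nonperiodic-child}); however \cref{lm:st-nonperiodic-child} makes no such claim, so it is not needed here — that subtler correspondence is exploited only later, for the $\WA(v,d)$ operation.
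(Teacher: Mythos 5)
Your proposal is correct and follows essentially the same route as the paper's own proof: compute $\mapToTSSS(v)$ via \cref{pr:st-nonperiodic-map}, query $\child(u,c)$ in $\TSSS$ via \cref{pr:compact-trie}, recover $X \in \D$ through $\mapToShort(v)$ and $\LTD$, and finish with \cref{pr:st-nonperiodic-invmap} and the case analysis of \cref{lm:st-nonperiodic-child}. Your closing remark about $\mapToTSSS(\child(v,c))$ not necessarily equaling $\child(u,c)$ matches \cref{rm:st-nonperiodic-child} and is correctly identified as harmless here.
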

\begin{proof}
  First, using \cref{pr:st-nonperiodic-map}, in $\bigO(\log^{\epsilon}
  n)$ time we compute a pointer to $u = \mapToTSSS(v)$. Then, using
  the representation of $\TSSS$ stored as part of the structure in
  \cref{sec:st-nonperiodic-ds}, and \cref{pr:compact-trie}, in
  $\bigO(\log \log n)$ time we check if $\child(u, c) = \nil$. If so, by
  \cref{lm:st-nonperiodic-child} we have $\child(v, c) = \nil$, and
  thus we return $\repr(\child(v, c)) = (0, 0)$. Otherwise ($\child(u,
  c) \neq \nil$), we obtain a pointer to $u' = \child(u, c)$.  Next,
  using \cref{pr:st-core-map} in $\bigO(1)$ time we compute a pointer
  to $u'' = \mapToShort(v)$. By
  \cref{lm:st-core-map}\eqref{lm:st-core-map-it-1}, $\sdepth(v) \geq
  3\tau - 1$ implies $\sdepth(u'') = 3\tau - 1$ and $\str(u'') =
  \str(v)[1 \dd 3\tau {-} 1]$.  We obtain $Y = \str(u'')$ (stored with
  $u''$) in $\bigO(1)$ time.  Using $\LTD$ on $Y$, in $\bigO(1)$ time
  we then compute a prefix $X \in \D$ of $Y$ (see
  \cref{sec:st-nonperiodic-nav}).  Finally, using
  \cref{pr:st-nonperiodic-invmap}, in $\bigO(\log^{\epsilon} n)$ time
  we compute the pair $(b, e) = \pseudoInvTSSS{X}{u'}$. If $b = e$
  then by \cref{lm:st-nonperiodic-child} we return $\repr(\child(v,
  c)) = (0, 0)$. Otherwise, we return $\repr(\child(v, c)) = (b, e)$.
\end{proof}

\subsubsection{Implementation of
  \texorpdfstring{$\pred(v, c)$}{pred(v, c)}}\label{sec:st-nonperiodic-pred}

\begin{lemma}\label{lm:st-nonperiodic-pred}
  Let $c \in \Alphabet$ and $v$ be an explicit nonperiodic internal
  node of $\ST$ satisfying $\sdepth(v) \geq 3\tau - 1$. Let $u =
  \mapToTSSS(v)$. If $\pred(u,c) = \nil$ then $\LB(\str(v)c, \T) =
  \LB(\str(v), \T)$.  Otherwise, letting $u' = \pred(u,c)$, it holds
  \vspace{0.5ex}
  \[
    \LB(\str(v)c, \T) = e,
    \vspace{0.5ex}
  \]
  where $(b,e) = \pseudoInvTSSS{X}{u'}$ and $X \in \D$ is a prefix of
  $\str(v)$.
\end{lemma}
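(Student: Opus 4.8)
The plan is to view $\LB(\str(v)c,\T)$ as the left endpoint of the $\SA$-range of the pattern $\Pat:=\str(v)c$ and to route the computation through $\TSSS$ exactly as in \cref{lm:st-nonperiodic-child}. First I would observe that, since $\str(v)$ is nonperiodic and $\sdepth(v)\ge 3\tau-1$, the string $\Pat=\str(v)c$ is itself a nonperiodic pattern of length $\ge 3\tau$ (its length-$(3\tau{-}1)$ prefix coincides with that of $\str(v)$), and the unique $X\in\D$ that is a prefix of $\str(v)$ — which exists because $\Occ(\str(v)[1\dd 3\tau{-}1],\T)\neq\emptyset$ and $\per(\str(v)[1\dd 3\tau{-}1])>\tfrac13\tau$; see \cref{sec:sa-nonperiodic-ds} — is also a prefix of $\Pat$, as $|X|\le 3\tau-1\le\sdepth(v)<|\Pat|$. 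Writing $\deltatext=|X|-2\tau$ and $u=\mapToTSSS(v)$ (well-defined by \cref{lm:st-nonperiodic-map}), the defining identity $\str(v)=X[1\dd\deltatext]\cdot\str(u)$ from \cref{sec:st-nonperiodic-nav} gives $\Pat':=\Pat(\deltatext\dd|\Pat|]=\str(u)\cdot c$. Then \cref{lm:pm-nonperiodic} applies verbatim and yields $\LB(\Pat,\T)=b_X+\rank{W}{\revstr{X}}{\bpre}$, where $b_X=\LB(X,\T)$ and $\bpre=|\{i\in[1\dd n']:\T[\slex_i\dd n]\prec\str(u)c\}|$.

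The next step is to express $\bpre$ through $\pred(u,c)$ in $\TSSS$. The leaves of $\TSSS$ below $u$ are exactly the indices $i$ with $\str(u)$ a prefix of $\T[\slex_i\dd n]$, and they occupy the interval $(\lrank(u)\dd\rrank(u)]$, partitioned in left-to-right order by the subtrees of the children of $u$ ordered by edge-character. Using that $u$ is internal (since $v$ is, two distinct characters out of $v$ survive as distinct children of $u$ by \cref{lm:st-nonperiodic-child}), hence $\str(u)$ is not a full suffix of $\T$ because $\T[n]$ is unique, I would argue $\bpre=\lrank(u)+|\{i:\str(u)\text{ is a proper prefix of }\T[\slex_i\dd n]\text{ and }\T[\slex_i\dd n][\,|\str(u)|{+}1\,]<c\}|$. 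The second term vanishes precisely when no child of $u$ has edge-character $<c$, i.e.\ when $\pred(u,c)=\nil$; otherwise, with $\pred(u,c)=u'=\child(u,c_{\max})$ and $c_{\max}=\max\{c''\in[0\dd c):\child(u,c'')\neq\nil\}$, the counted indices form the contiguous interval $(\lrank(u)\dd\rrank(u')]$ (the union of the children's leaf-intervals over edge-characters $<c$ is a prefix of the partition ending at $\rrank(u')$), so the term equals $\rrank(u')-\lrank(u)$ and $\bpre=\rrank(u')$.

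Combining the two facts finishes the proof. If $\pred(u,c)=\nil$, then $\bpre=\lrank(u)$, so $\LB(\str(v)c,\T)=b_X+\rank{W}{\revstr{X}}{\lrank(u)}$; since $\mapToTSSS(v)=u$, \cref{lm:st-nonperiodic-invmap} gives $\repr(v)=\pseudoInvTSSS{X}{u}$, whose first component is exactly $b_X+\rank{W}{\revstr{X}}{\lrank(u)}$, and this equals $\lrank(v)=\LB(\str(v),\T)$ in $\ST$, as required. If instead $\pred(u,c)=u'$, then $\bpre=\rrank(u')$ and $\LB(\str(v)c,\T)=b_X+\rank{W}{\revstr{X}}{\rrank(u')}$, which is precisely the second component $e$ of $\pseudoInvTSSS{X}{u'}$ by the definition of $\mathrm{pseudoinv}_{\TSSS}$ in \cref{sec:st-nonperiodic-nav}.

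I expect the main obstacle to be the bookkeeping around the fact that $\str(u)c$ need not be the label of any node of $\TSSS$, so $\bpre$ cannot simply be read off a node; the argument must phrase $\bpre$ through the leaf-interval structure of $\TSSS$ together with $\pred(u,c)$, and verify the ``proper prefix'' and contiguity claims. The companion subtleties — that $\mapToTSSS$ is not injective (cf.\ \cref{rm:st-nonperiodic-child}) and that $\child(u,c_{\max})\neq\nil$ need not force $\child(v,c_{\max})\neq\nil$ — turn out not to intervene, because the statement concerns only $\LB(\str(v)c,\T)$ and we compute it directly rather than via $\pred(v,c)$; all remaining $\T^{\infty}/W/\revstr{X}$ accounting is already encapsulated in \cref{lm:pm-nonperiodic,lm:st-nonperiodic-invmap}.
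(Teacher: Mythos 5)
Your proof is correct and follows essentially the same route as the paper's: decompose $\Pat=\str(v)c$ so that $\Pat(\deltatext\dd|\Pat|]=\str(u)c$, apply \cref{lm:pm-nonperiodic}, and identify $\bpre=|\{i:\T[\slex_i\dd n]\prec\str(u)c\}|$ with $\lrank(u)$ or $\rrank(u')$ according to whether $\pred(u,c)$ is $\nil$. The only differences are cosmetic: you spell out the leaf-interval/partition argument that the paper dismisses as ``by definition of $\pred(u,c)$,'' and in the $\nil$ case you invoke \cref{lm:st-nonperiodic-invmap} for $\LB(\str(v),\T)$ where the paper re-applies \cref{lm:pm-nonperiodic} to $\str(v)$ directly — these are equivalent.
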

\begin{proof}
  We start by characterizing $\LB(\str(v), \T)$ using
  \cref{lm:pm-nonperiodic} for pattern $\str(v)$. First, note that $v$
  is nonperiodic and it holds $\sdepth(v) \geq 3\tau - 1$. On the
  other hand, by definition, we have $\str(u) = \str(v)(\deltatext \dd
  |\str(v)|]$, where $\deltatext = |X| - 2\tau$. Finally, by
  definition of $\TSSS$, we have $|\{i \in [1 \dd n'] : \T[\slex_i \dd
  n] \prec \str(v)(\deltatext \dd |\str(v)|]\}| = \lrank(u)$.  Thus,
  by \cref{lm:pm-nonperiodic}, we have $\LB(\str(v), \T) = b_X +
  \rank{W}{\revstr{X}}{\lrank(u)}$, where $b_X = \LB(X, \T)$.

  Denote $\Pat = \str(v)c$ and $\Pat' = \Pat(\deltatext \dd
  |\Pat|]$. Since $\str(v)$ is nonperiodic and satisfies $|\str(v)|
  \geq 3\tau - 1$, $\Pat$ is also nonperiodic and it holds $|\Pat|
  \geq 3\tau - 1$. Note also that by $\str(u) = \str(v)(\deltatext \dd
  |\str(v)|]$, we have $\str(u)c = \Pat'$.

  Let us first assume $\pred(u, c) = \nil$. By definition, this
  implies that $|\{i \in [1 \dd n'] : \T[\slex_i \dd n] \prec
  \str(u)c\}| = \lrank(u)$.  Equivalently, by $\str(u)c = \Pat'$,
  $|\{i \in [1 \dd n'] : \T[\slex_i \dd n] \prec \Pat'\}| =
  \lrank(u)$. By \cref{lm:pm-nonperiodic} for pattern $\Pat =
  \str(v)c$ we thus obtain $\LB(\str(v)c, \T) = b_X +
  \rank{W}{\revstr{X}}{\lrank(u)}$. Since above we also established
  that $\LB(\str(v), \T) = b_X + \rank{W}{\revstr{X}}{\lrank(u)}$, we
  have thus proved that $\pred(u, c) = \nil$ implies $\LB(\str(v)c,
  \T) = \LB(\str(v), \T)$.

  Let us now assume $\pred(u, c) = u' \neq \nil$. Observe that by
  definition of $\pred(u, c)$, this implies $|\{i \in [1 \dd n'] :
  \T[\slex_i \dd n] \prec \str(u)c\}| = \rrank(u')$. By
  \cref{lm:pm-nonperiodic} applied for pattern $\Pat = \str(v)c$, we
  thus obtain $\LB(\str(v)c, \T) = b_X +
  \rank{W}{\revstr{X}}{\rrank(u')}$. On the other hand, observe that
  by definition of $(b, e) = \pseudoInvTSSS{X}{u'}$, we have $e = b_X
  + \rank{W}{\revstr{X}}{\rrank(u')}$. We thus obtain $\LB(\str(v)c,
  \T) = e$.
\end{proof}

\begin{proposition}\label{pr:st-nonperiodic-pred}
  Let $v$ be an explicit nonperiodic internal node of $\ST$ satisfying
  $\sdepth(v) \geq 3\tau - 1$.  Given the data structure from
  \cref{sec:st-nonperiodic-ds}, $\repr(v)$, and $c \in \Alphabet$, in
  $\bigO(\log^{\epsilon} n)$ time we can compute $\LB(\str(v)c, \T)$.
\end{proposition}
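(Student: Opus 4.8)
The plan is to follow the same template as \cref{pr:st-nonperiodic-child}, combining the combinatorial characterization of \cref{lm:st-nonperiodic-pred} with the navigation primitives already developed in this subsection. First I would map the input node $v$ into the auxiliary trie $\TSSS$: using \cref{pr:st-nonperiodic-map} (which is applicable since $v$ is nonperiodic and $\sdepth(v) \geq 3\tau - 1$), in $\bigO(\log^{\epsilon} n)$ time I obtain a pointer to $u = \mapToTSSS(v)$. Next, using the representation of $\TSSS$ from \cref{pr:compact-trie}, I check in $\bigO(\log \log n)$ time whether $\pred(u, c) = \nil$, and if not, retrieve the pointer to $u' = \pred(u, c)$.

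In the case $\pred(u, c) = \nil$, \cref{lm:st-nonperiodic-pred} gives $\LB(\str(v)c, \T) = \LB(\str(v), \T)$, and since $v$ is a node of $\ST$ we have $\repr(v) = (\lrank(v), \rrank(v)) = (\LB(\str(v), \T), \UB(\str(v), \T))$; thus I simply return the first coordinate $b$ of $\repr(v) = (b, e)$. In the case $\pred(u, c) = u' \neq \nil$, I need the distinguishing prefix $X \in \D$ of $\str(v)$, obtained exactly as in the proof of \cref{pr:st-nonperiodic-child}: compute $u'' = \mapToShort(v)$ via \cref{pr:st-core-map} in $\bigO(1)$ time, read off $Y = \str(u'') = \str(v)[1 \dd 3\tau - 1]$ (valid by \cref{lm:st-core-map}\eqref{lm:st-core-map-it-1}), and apply the lookup table $\LTD$ to $Y$ in $\bigO(1)$ time to get $X$ (which exists and is unique because $v$ nonperiodic forces $\per(Y) > \tfrac13\tau$, and $\str(v)$ a substring of $\T$ forces $\Occ(Y,\T)\neq\emptyset$). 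Then, using \cref{pr:st-nonperiodic-invmap}, in $\bigO(\log^{\epsilon} n)$ time I compute $(b, e) = \pseudoInvTSSS{X}{u'}$ and return $\LB(\str(v)c, \T) = e$, whose correctness is exactly the second case of \cref{lm:st-nonperiodic-pred}.

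Summing the costs ($\bigO(\log^{\epsilon} n)$ for the $\mapToTSSS$ and $\pseudoInvTSSS{}{}$ computations, and $\bigO(\log \log n)$ plus $\bigO(1)$ for everything else) gives the claimed $\bigO(\log^{\epsilon} n)$ total time. I do not expect a genuine obstacle here: the entire content — in particular the delicate point that $\pred(u,c)$ in $\TSSS$ encodes precisely the left boundary of $\str(v)c$ in $\SA$ and that this transfers through the $\pseudoInvTSSS{}{}$ map via \cref{lm:pm-nonperiodic} — is already packaged in \cref{lm:st-nonperiodic-pred}, so the proposition is a routine assembly. The only mild care needed is to reuse the $X$-computation from \cref{pr:st-nonperiodic-child} verbatim rather than re-deriving it, and to note that $\TSSS$ is the trie whose leaves are sorted by the suffixes $\T[\slex_i \dd n]$, so that the $\lrank/\rrank$ values feeding \cref{pr:st-nonperiodic-invmap} are the ones referenced in \cref{lm:st-nonperiodic-pred}.
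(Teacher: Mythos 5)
Your proposal is correct and matches the paper's own proof essentially step for step: map $v$ to $u=\mapToTSSS(v)$, query $\pred(u,c)$ in $\TSSS$, handle the $\nil$ case via the first coordinate of $\repr(v)$, and otherwise recover $X \in \D$ through $\mapToShort(v)$ and $\LTD$ and return the second coordinate of $\pseudoInvTSSS{X}{u'}$, with correctness delegated to \cref{lm:st-nonperiodic-pred} and the stated time bounds. No gaps.
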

\begin{proof}
  First, using \cref{pr:st-nonperiodic-map}, in $\bigO(\log^{\epsilon}
  n)$ time we compute a pointer to $u = \mapToTSSS(v)$. Then, using
  the representation of $\TSSS$ stored as part of the structure in
  \cref{sec:st-nonperiodic-ds}, and \cref{pr:compact-trie}, in
  $\bigO(\log \log n)$ time we check if $\pred(u, c) = \nil$. If so,
  by \cref{lm:st-nonperiodic-pred} we have $\LB(\str(v)c, \T) =
  \LB(\str(v), \T)$ and hence we return $\LB(\str(v), \T)$ (given as
  input) as the result. Otherwise ($\pred(u, c) \neq \nil$), we obtain
  a pointer to $u' = \pred(u, c)$.  Next, using \cref{pr:st-core-map}
  in $\bigO(1)$ time we compute a pointer to $u'' =
  \mapToShort(v)$. By
  \cref{lm:st-core-map}\eqref{lm:st-core-map-it-1}, $\sdepth(v) \geq
  3\tau - 1$ implies $\sdepth(u'') = 3\tau - 1$ and $\str(u'') =
  \str(v)[1 \dd 3\tau {-} 1]$.  We obtain $Y = \str(u'')$ (stored with
  $u''$) in $\bigO(1)$ time.  Using $\LTD$ on $Y$, in $\bigO(1)$ time
  we then compute a prefix $X \in \D$ of $Y$ (see
  \cref{sec:st-nonperiodic-nav}).  Finally, using
  \cref{pr:st-nonperiodic-invmap}, in $\bigO(\log^{\epsilon} n)$ time
  we compute the pair $(b, e) = \pseudoInvTSSS{X}{u'}$. By
  \cref{lm:st-nonperiodic-pred}, it holds $\LB(\str(v)c, \T) = e$.  We
  thus return $e$ as the answer.
\end{proof}

\subsubsection{Implementation of
  \texorpdfstring{$\WA(v,d)$}{WA(v, d)}}\label{sec:st-nonperiodic-wa}

\begin{lemma}\label{lm:st-nonperiodic-wa}
  Let $v$ be an explicit nonperiodic node of $\ST$ and $d$ be an
  integer satisfying $3\tau - 1 \leq d \leq |\str(v)|$. Then, letting
  $u = \mapToTSSS(v)$, it holds
  \[
    \repr(\WA(v, d)) = \pseudoInvTSSS{X}{\widehat{u}},
  \]
  where $X \in \D$ is a prefix of $\str(v)$, $\deltatext = |X| -
  2\tau$, and $\widehat{u} = \WA(u, d - \deltatext)$.
\end{lemma}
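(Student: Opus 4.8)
The plan is to argue that the weighted ancestor operation commutes with the mapping $\mapToTSSS$, up to the fixed shift $\deltatext = |X| - 2\tau$, and then invoke the already-established inverse-mapping lemma. First I would observe that since $v$ is nonperiodic and $d \geq 3\tau - 1$, in particular $\sdepth(v) \geq d \geq 3\tau - 1$, so $u = \mapToTSSS(v)$ is well-defined (\cref{lm:st-nonperiodic-map}), and by definition $\str(v) = X[1 \dd \deltatext]\cdot \str(u)$, where $X \in \D$ is the (unique) prefix of $\str(v)$ with $|X| \le 3\tau-1$. Writing $v' = \WA(v,d)$, I would show that $v'$ is itself a nonperiodic node with $\sdepth(v') \geq 3\tau - 1$: indeed $\sdepth(v') \geq d \geq 3\tau - 1$ by definition of the weighted ancestor, and the length-$(3\tau{-}1)$ prefix of $\str(v')$ equals $Y := \str(v)[1\dd 3\tau{-}1]$ (since $v'$ is an ancestor of $v$ of string-depth $\ge 3\tau-1$), which is nonperiodic by hypothesis on $v$. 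Hence $\mapToTSSS(v')$ is well-defined and has the same distinguishing prefix $X$ as $v$.

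The core of the argument is the claim $\mapToTSSS(\WA(v,d)) = \WA(u, d-\deltatext)$. The ancestors of $v$ in $\ST$ at string-depth $\ge 3\tau-1$ are in bijection with the ancestors of $u$ in $\TSSS$: each such ancestor $\bar v$ has $\str(\bar v) = X[1\dd\deltatext]\cdot \str(\bar u)$ for a unique ancestor $\bar u$ of $u$, and this correspondence is string-depth–preserving up to the shift $\deltatext$ (this uses the consistency condition of $\S$ and the definition of $\mapToTSSS$ exactly as in \cref{lm:st-nonperiodic-lca,lm:st-nonperiodic-map}). Since $d \ge 3\tau-1$, the node $v' = \WA(v,d)$ lies in this ``deep'' range, and it is characterized as the ancestor $\bar v$ of minimal string-depth with $\sdepth(\bar v) \ge d$; translating through the bijection, $\mapToTSSS(v')$ is the ancestor $\bar u$ of $u$ of minimal string-depth with $\sdepth(\bar u) \ge d - \deltatext$, which is precisely $\WA(u, d-\deltatext) = \widehat{u}$. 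One subtlety to handle here, analogous to \cref{rm:st-nonperiodic-child}, is that $\widehat u$ need not equal $\mapToTSSS(\bar v)$ for \emph{every} choice along the path — but $v'$ is an explicit node of $\ST$, and no two nodes of $\TSSS$ share the same $\str$-value, so once we know $\str(v') = X[1\dd\deltatext]\cdot \str(\widehat u)$ the identification $\mapToTSSS(v') = \widehat u$ is forced. I would also check the edge case where the target depth falls strictly inside an edge of $\TSSS$: then $\widehat u$ is the lower endpoint of that edge, and the pair $\pseudoInvTSSS{X}{\widehat u}$ still yields $\repr(v')$ because $\repr$ collapses the interior of an edge to the interval of its lower endpoint (the same phenomenon exploited in \cref{lm:st-core-wa,lm:st-nonperiodic-child}).

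Having established $\mapToTSSS(v') = \widehat u$ with $X$ the common distinguishing prefix, the conclusion follows immediately from \cref{lm:st-nonperiodic-invmap} applied to $v'$: $\repr(v') = \pseudoInvTSSS{X}{\mapToTSSS(v')} = \pseudoInvTSSS{X}{\widehat u}$. The main obstacle I anticipate is the bookkeeping around the bijection between ancestors and its behaviour at the boundary depth $3\tau-1$ and at implicit nodes of $\TSSS$; in particular one must verify that $d - \deltatext \le \sdepth(u)$ so that $\WA(u, d-\deltatext)$ is defined (this holds since $\sdepth(u) = \sdepth(v) - \deltatext \ge d - \deltatext$), and that $d - \deltatext \ge 3\tau - 1 - \deltatext = \tau + |X| - 2\tau \cdots$ need not itself be bounded below, so the correspondence must be set up to work for all targets in $[0 \dd \sdepth(u)]$ rather than only ``deep'' ones on the $\TSSS$ side — which is fine, since every ancestor of $v$ of string-depth $\ge 3\tau-1$ maps to an ancestor of $u$ of string-depth $\ge 3\tau - 1 - \deltatext \ge 0$. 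The rest is routine.
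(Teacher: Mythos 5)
Your plan hinges on the claim that the weighted-ancestor operation commutes with the mapping, i.e., $\mapToTSSS(\WA(v,d)) = \WA(u, d-\deltatext)$, and on a ``bijection'' between ancestors of $v$ of string-depth $\geq 3\tau-1$ and ancestors of $u$ in $\TSSS$. This is exactly where the argument breaks: no such bijection exists, and the commutation identity is false in general. The map $\bar v \mapsto \mapToTSSS(\bar v)$ is injective and shifts depths by $\deltatext$, but it need not be surjective onto the ancestors of $u$: an explicit ancestor $\bar u$ of $u$ in $\TSSS$ only requires two suffixes starting at positions of $\S$ that branch after $\str(\bar u)$, and these witnesses need not be preceded by $X[1 \dd \deltatext]$ in $\T$; in that case $X[1 \dd \deltatext]\cdot \str(\bar u)$ can lie in the interior of an edge of $\ST$, so $\bar u$ is the image of no explicit ancestor of $v$. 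Since $\widehat u = \WA(u, d-\deltatext)$ can be precisely such a node (strictly shallower than $u' := \mapToTSSS(\WA(v,d))$), the identification $\mapToTSSS(\WA(v,d)) = \widehat u$ fails, and your proposed repair --- ``once we know $\str(v') = X[1\dd\deltatext]\cdot\str(\widehat u)$ the identification is forced'' --- is circular, because that string equality is exactly what is not available. (Your edge-case discussion concerns implicit nodes of $\TSSS$ inside an edge, which is not the problematic situation; the problem is explicit $\TSSS$ nodes with no explicit $\ST$ counterpart after prepending $X[1\dd\deltatext]$.)

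What is actually needed (and what the paper proves) is a weaker invariance that makes the lemma true even though the two nodes may differ: for consecutive ancestors $w$ and $w'=\parent(w)$ of $u$ of sufficient depth, if $\pseudoInvTSSS{X}{w} \neq \pseudoInvTSSS{X}{w'}$, then $w'$ \emph{is} the image under $\mapToTSSS$ of an explicit node of $\ST$ --- a differing pair forces a sibling subtree below $w'$ containing a suffix from $\S$ preceded by $X[1\dd\deltatext]$, which certifies that $X[1\dd\deltatext]\cdot\str(w')$ branches in $\ST$. Now, on the chain from $u'$ up to $\widehat u$, no intermediate proper ancestor of $u'$ can be such an image, since the corresponding $\ST$ node would be an ancestor of $v$ of string-depth in $[d \dd \sdepth(\WA(v,d)))$, contradicting the minimality in the definition of $\WA(v,d)$; by contraposition the pairs $\pseudoInvTSSS{X}{\cdot}$ are constant along this chain, so $\pseudoInvTSSS{X}{\widehat u} = \pseudoInvTSSS{X}{u'}$, which equals $\repr(\WA(v,d))$ by \cref{lm:st-nonperiodic-invmap}. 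This counting/invariance step is the missing idea; without it your proposal does not establish the stated equality.
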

\begin{proof}
  Denote $f^{(0)}(x) = x$ and $f^{(i)}(x) = f(f^{(i-1)}(x))$ for $i
  \in \Zp$. Let
  \vspace{-1ex}
  \begin{align*}
    \mathcal{V} &:=
      \{\parent^{(i)}(v) : i \in\Zz\text{ and }
      \sdepth(\parent^{(i)}(v)) \geq |X|\}\text{ and }\\
    \mathcal{U} &:=
      \{\parent^{(i)}(u) : i \in \Zz\text{ and }
      \sdepth(\parent^{(i)}(u)) \geq 2\tau\}
  \end{align*}
  For any $v' \in \mathcal{V}$, the node $u' = \mapToTSSS(v')$
  satisfies $\str(v') = X[1 \dd \deltatext] \cdot \str(u')$.  In
  particular, $\str(u) = \str(v)(\deltatext \dd |\str(v)|]$.  Since
  for any $v' \in \mathcal{V}$, $\str(v') = \str(v)[1 \dd
  |\str(v')|]$, we thus obtain that for $u' = \mapToTSSS(v')$ it
  holds $\str(u') = \str(v')(\deltatext \dd |\str(v')|] =
  \str(v)(\deltatext \dd |\str(v')|] = \str(u)[1 \dd |\str(u')|]$,
  i.e., $u'$ is an ancestor of $u$.  Moreover, $\sdepth(u') =
  |\str(v')| - \deltatext \geq |X| - \deltatext = 2\tau$.
  Consequently, $\mathcal{U}' := \{\mapToTSSS(v') : v' \in
  \mathcal{V}\}$ satisfies $\mathcal{U}' \sub \mathcal{U}$. Note also,
  that $v' \neq v''$ implies $\mapToTSSS(v') \neq \mapToTSSS(v'')$.

  For any $u' \in \mathcal{U}$, denote $(\fbeg(u'), \fend(u')) =
  \pseudoInvTSSS{X}{u'}$.  We prove the following property of
  $\mathcal{U}'$. Let $w, w' \in \mathcal{U}$ be such that $w' =
  \parent(w)$. We claim, that $(\fbeg(w), \fend(w)) \neq (\fbeg(w'),
  \fend(w'))$ implies $w' \in \mathcal{U}'$. Denote $Q' = \str(w')$
  and $Q = X[1 \dd \deltatext] \cdot Q'$. The proof consists of three
  steps:
  \begin{itemize}
  \item First, we show that it holds $\{\SA[i]\}_{i \in (s' \dd t']} =
    \Occ(Q, \T)$, where $s' = \fbeg(w')$ and $t' = \fend(w')$. By the
    above discussion, we have $\str(v) = X[1 \dd \deltatext] \cdot
    \str(u)$. Thus, $X(\deltatext \dd |X|]$ is a prefix of
    $\str(u)$. On the other hand, by $w' \in \mathcal{U}$, $\str(w')$
    is a prefix of $\str(u)$ and we have $|\str(w')| \geq
    2\tau$. Consequently, $X(\deltatext \dd |X|]$ is a prefix of
    $Q'$. As noted in the proof of \cref{lm:st-nonperiodic-invmap}, by
    the consistency of $\S$ we then have $\{\slex_i\}_{i \in
    (\lrank(w') \dd \rrank(w')]} = \Occ(Q', \T)$ and consequently
    $\{\SA[i]\}_{i \in (s' \dd t']} = \Occ(X[1 \dd \deltatext] \cdot
    Q', \T) = \Occ(Q, \T)$.
  \item Second, we prove that there exists a node $v'$ in $\ST$ such
    that $\str(v') = Q$. First, note that $\{\SA[i]\}_{i \in (s' \dd
    t']} = \Occ(Q, \T)$ already implies that there exists some node
    $v'$ of $\ST$ such that $\repr(v') = (s', t')$ and $Q$ is a prefix
    of $\str(v')$. It thus remains to show that $\str(v') = Q$. For
    this, it suffices to show that there exists $c, c' \in \Alphabet$
    such that $c \neq c'$, $\Occ(Qc, \T) \neq \emptyset$, and
    $\Occ(Qc', \T) \neq \emptyset$. Observe that by $(\fbeg(w'),
    \fend(w')) \neq (\fbeg(w), \fend(w))$, there exists a child $w''
    \neq w$ of $w'$ such that $\fend(w'') > \fbeg(w'')$.  This yields
    an occurrence of $\str(w')$ preceded in $\T$ with $X[1 \dd
    \deltatext]$. The same holds for $\str(w)$, since $\fbeg(w') \leq
    \fbeg(u) < \fend(u) \leq \fend(w')$.  In other words, for $c =
    \str(w)[|Q'| + 1]$ and $c' = \str(w'')[|Q'| + 1]$ we have $c \neq
    c'$, $\Occ(Qc, \T) \neq \emptyset$, and $\Occ(Qc', \T) \neq
    \emptyset$.  This concludes the proof of $\str(v') = Q$.
  \item Finally, recall that by definition, the node $\mapToTSSS(v')$
    satisfies $\str(v') = X[1 \dd \deltatext] \cdot
    \str(\mapToTSSS(v'))$. Thus, by $\str(v') = Q = X[1 \dd
    \deltatext] \cdot Q'$ and $\str(w') = Q'$, we must have
    $\mapToTSSS(v') = w'$. This implies $w' \in \mathcal{U}'$.
  \end{itemize}

  We are now ready to prove the main claim.  Let $v' = \WA(v, d)$ and
  $v'' = \parent(v')$.  We then have $\sdepth(v'') < d \leq
  \sdepth(v')$. Moreover, by $|X| \leq 3\tau - 1 \leq d$, we have $v'
  \in \mathcal{V}$. Let $u' = \mapToTSSS(v')$.  Then, $u' \in
  \mathcal{U}'$. By the above discussion, we also have $d - \deltatext
  \leq \sdepth(\widehat{u}) \leq \sdepth(u')$.  By $3\tau - 1 \leq d$
  this implies $2\tau = |X| - \deltatext \leq 3\tau - 1 - \deltatext
  \leq d - \deltatext \leq \sdepth(\widehat{u})$, i.e., $\widehat{u}
  \in \mathcal{U}$.  Let $k \in \Zz$ be such that
  $\widehat{u} = \parent^{(k)}(u')$.  This implies that
  $\parent^{(i)}(u')\not\in \mathcal{U}'$ holds for $i \in [1 \dd k]$,
  since otherwise it would contradict $v' = \WA(v, d)$. If $k = 0$
  then we trivially have $(\fbeg(u'), \fend(u')) =
  (\fbeg(\widehat{u}), \fend(\widehat{u}))$. Otherwise, by (the
  contraposition of) the above property of $\mathcal{U}'$ we have
  \begin{align*}
    (\fbeg(u'), \fend(u'))
      &= (\fbeg(\parent(u')), \fend(\parent(u')))\\
      &= \dots\\
      &= (\fbeg(\parent^{(k)}(u')), \fend(\parent^{(k)}(u')))\\
      &= (\fbeg(\widehat{u}), \fend(\widehat{u})).
  \end{align*}
  By \cref{lm:st-nonperiodic-invmap}, we obtain $\repr(\WA(v, d)) =
  \repr(v') = \pseudoInvTSSS{X}{u'} = (\fbeg(u'), \fend(u')) =
  (\fbeg(\widehat{u}), \fend(\widehat{u})) =
  \pseudoInvTSSS{X}{\widehat{u}}$.
\end{proof}

\begin{proposition}\label{pr:st-nonperiodic-wa}
  Let $v$ be an explicit nonperiodic node of $\ST$ satisfying $3\tau -
  1 \leq |\str(v)|$. Given the data structure from
  \cref{sec:st-nonperiodic-ds}, $\repr(v)$, and an integer $d$
  satisfying $3\tau - 1 \leq d \leq |\str(v)|$, in
  $\bigO(\log^{\epsilon} n)$ time we can compute $\repr(\WA(v, d))$.
\end{proposition}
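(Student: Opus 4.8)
The plan is to reduce the computation directly to \cref{lm:st-nonperiodic-wa}, which expresses $\repr(\WA(v,d))$ as $\pseudoInvTSSS{X}{\widehat{u}}$, where $u = \mapToTSSS(v)$, $X \in \D$ is the (unique) prefix of $\str(v)$ lying in $\D$, $\deltatext = |X| - 2\tau$, and $\widehat{u} = \WA(u, d - \deltatext)$ is a weighted-ancestor query inside the trie $\TSSS$. Thus the algorithm consists of four steps: (i) compute a pointer to $u = \mapToTSSS(v)$; (ii) determine $\Int(X)$ and $\deltatext$; (iii) compute $\widehat{u} = \WA(u, d - \deltatext)$ in $\TSSS$; and (iv) evaluate $\pseudoInvTSSS{X}{\widehat{u}}$ and return it.

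First I would carry out step~(i): since $v$ is nonperiodic and the assumption $3\tau - 1 \le |\str(v)|$ gives $\sdepth(v) \ge 3\tau - 1$, the node $\mapToTSSS(v)$ is well-defined (\cref{lm:st-nonperiodic-map}) and \cref{pr:st-nonperiodic-map} yields a pointer to it in $\bigO(\log^{\epsilon} n)$ time. For step~(ii), I would first apply \cref{pr:st-core-map} to obtain a pointer to $u'' = \mapToShort(v)$; by $\sdepth(v) \ge 3\tau - 1$ and \cref{lm:st-core-map}\eqref{lm:st-core-map-it-1}, $u''$ satisfies $\str(u'') = Y$, where $Y = \str(v)[1 \dd 3\tau{-}1]$, and $Y$ is stored with $u''$ as the integer $\Int(Y)$. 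Since $v$ is nonperiodic we have $\per(Y) > \tfrac13\tau$, and since $\str(v)$ occurs in $\T$ we have $\Occ(Y, \T) \neq \emptyset$; hence (see \cref{sec:sa-nonperiodic-ds}) there is a unique $X \in \D$ that is a prefix of $\str(v)$, and this $X$ is a prefix of $Y$. Looking up $\LTD$ on $Y$ returns $\Int(X)$ in $\bigO(1)$ time, and from $X$ I read off $|X|$ and set $\deltatext = |X| - 2\tau$.

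For step~(iii), note that $X \in \D \sub \Alphabet^{\le 3\tau - 1}$ gives $\deltatext = |X| - 2\tau \le \tau - 1 < 3\tau - 1 \le d$, so $d - \deltatext \ge 1 > 0$; moreover, since $\str(v) = X[1 \dd \deltatext]\cdot\str(u)$, we have $|\str(u)| = |\str(v)| - \deltatext \ge d - \deltatext$. Hence $0 \le d - \deltatext \le |\str(u)|$, so $\widehat{u} = \WA(u, d - \deltatext)$ is a legal weighted-ancestor query on $\TSSS$; by \cref{pr:compact-trie} it returns a pointer to $\widehat{u}$ in $\bigO(\log \log n)$ time. Finally, for step~(iv), \cref{pr:st-nonperiodic-invmap} computes $\pseudoInvTSSS{X}{\widehat{u}}$ from the pointer to $\widehat{u}$ and the value $\Int(X)$ in $\bigO(\log^{\epsilon} n)$ time, and by \cref{lm:st-nonperiodic-wa} this pair equals $\repr(\WA(v, d))$, which I return. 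Summing the four steps gives total running time $\bigO(\log^{\epsilon} n)$. The only substantive work beyond bookkeeping is verifying that the intermediate $\WA$ query on $\TSSS$ is in range and that the hypotheses of \cref{lm:st-nonperiodic-map,lm:st-nonperiodic-wa} ($v$ nonperiodic, $\sdepth(v) \ge 3\tau - 1$, and the existence of the unique $X \in \D$ prefixing $\str(v)$) hold; once those are in place, the proposition is immediate from the cited lemmas.
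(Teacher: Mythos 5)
Your proposal is correct and follows essentially the same route as the paper's proof: map $v$ to $u=\mapToTSSS(v)$ via \cref{pr:st-nonperiodic-map}, recover $X\in\D$ through $\mapToShort(v)$ and $\LTD$, perform the weighted-ancestor query $\WA(u,d-\deltatext)$ in $\TSSS$ via \cref{pr:compact-trie}, and invert with \cref{pr:st-nonperiodic-invmap} using \cref{lm:st-nonperiodic-wa}. Your extra check that the query depth $d-\deltatext$ lies in $[0\dd|\str(u)|]$ is a sound addition that the paper leaves implicit.
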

\begin{proof}
  First, using \cref{pr:st-nonperiodic-map}, in $\bigO(\log^{\epsilon}
  n)$ time we compute a pointer to $u = \mapToTSSS(v)$.  Next, using
  \cref{pr:st-core-map} in $\bigO(1)$ time we compute a pointer to $u'
  = \mapToShort(v)$. By
  \cref{lm:st-core-map}\eqref{lm:st-core-map-it-1}, $\sdepth(v) \geq
  3\tau - 1$ implies $\sdepth(u') = 3\tau - 1$ and $\str(u') =
  \str(v)[1 \dd 3\tau {-} 1]$.  We obtain $Y = \str(u')$ (stored with
  $u'$) in $\bigO(1)$ time.  Using $\LTD$ on $Y$, in $\bigO(1)$ time
  we then compute a prefix $X \in \D$ of $Y$ (see
  \cref{sec:st-nonperiodic-nav}).  Let $\deltatext = |X| - 2\tau$.
  Finally, using the representation of $\TSSS$ stored as part of the
  structure in \cref{sec:st-nonperiodic-ds}, and
  \cref{pr:compact-trie}, in $\bigO(\log \log n)$ time we compute a
  pointer to $\widehat{u} = \WA(u, d - \deltatext)$.  Using
  \cref{pr:st-nonperiodic-invmap}, in $\bigO(\log^{\epsilon} n)$ time
  we then compute and return $\pseudoInvTSSS{X}{\widehat{u}}$, which
  by \cref{lm:st-nonperiodic-wa} is equal to $\repr(\WA(v, d))$.
\end{proof}

\subsubsection{Construction Algorithm}\label{sec:st-nonperiodic-construction}

\begin{proposition}\label{pr:st-nonperiodic-construction}
  Given $\STCore(\T)$, we can in $\bigO(n \min(1, \log \sigma /
  \sqrt{\log n}))$ time and $\bigO(n / \log_{\sigma} n)$ working space
  augment it into a data structure from \cref{sec:st-nonperiodic-ds}.
\end{proposition}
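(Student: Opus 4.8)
The plan is to assemble the three components listed in \cref{sec:st-nonperiodic-ds} one at a time, observing that the first of them, $\STCore(\T)$, is already given, so only the data structure from \cref{sec:sa-nonperiodic-ds} and the compact trie $\TSSS$ need to be built. Since $\STCore(\T)$ contains $\SACore(\T)$ (which in turn includes the packed representation of $\T$), I would invoke \cref{pr:sa-nonperiodic-construction} directly on that subcomponent. This augments $\SACore(\T)$ into the data structure from \cref{sec:sa-nonperiodic-ds} in $\bigO(n\min(1,\log\sigma/\sqrt{\log n}))$ time and $\bigO(n/\log_{\sigma} n)$ working space, and, exactly as in the proof of that proposition, it produces along the way the sequence $(\slex_t)_{t\in[1\dd n']}$ (it is needed there to build $W$, $\ARRsmap$, and $\ARRsinvmap$). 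I would retain $(\slex_t)_{t\in[1\dd n']}$ and, in $\bigO(n/\log_{\sigma} n)$ additional time, initialize $\ARRslex[1\dd n']$ by $\ARRslex[i]=\slex_i$; note that $\TSSS$ is built over the \emph{same} $\tau$-synchronizing set $\S$ fixed in \cref{sec:sa-nonperiodic-ds}, so no fresh synchronizing set has to be constructed here.

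Next I would construct the third component, the compact trie $\TSSS$ of $\{\T[i\dd n] : i\in\S\}$, by applying \cref{pr:compact-trie} with its input array $A$ set to $\ARRslex[1\dd n']$. The hypothesis of \cref{pr:compact-trie} --- that $\T[A[i]\dd n]\prec\T[A[j]\dd n]$ whenever $i<j$ --- holds because $(\slex_t)_{t\in[1\dd n']}$ is, by definition (\cref{sec:sa-nonperiodic-ds}), the enumeration of $\S$ sorted by the lexicographic order of the corresponding suffixes of $\T$; hence $\{\T[\ARRslex[i]\dd n]\}_{i\in[1\dd n']} = \{\T[i\dd n] : i\in\S\}$ and the array is sorted as required. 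Applying \cref{pr:compact-trie} with $q=n'=\bigO(n/\log_{\sigma} n)$ then yields $\TSSS$ together with all of its auxiliary structures (the leaf-pointer array, the $\LCA$ structure, the stored $(\lrank,\rrank)$ and $\sdepth$ values, the weighted-ancestor structure, and the $\pred$/$\child$ support) in $\bigO(q+n/\log_{\sigma} n)=\bigO(n/\log_{\sigma} n)$ time and space.

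Finally I would tally the costs: the first step dominates the running time at $\bigO(n\min(1,\log\sigma/\sqrt{\log n}))$ --- this bound is inherited from \cref{pr:sa-nonperiodic-construction}, ultimately from the prefix rank/selection structure of \cref{th:wavelet-tree} --- while every other step runs in $\bigO(n/\log_{\sigma} n)$ time, and the working space never exceeds $\bigO(n/\log_{\sigma} n)$ words. The construction is thus essentially a bookkeeping composition of \cref{pr:sa-nonperiodic-construction} and \cref{pr:compact-trie}; the only point that requires a moment's care --- routine rather than deep --- is verifying that $\ARRslex$ meets the sortedness precondition of \cref{pr:compact-trie} and that it is genuinely available at no extra asymptotic cost from the construction underlying \cref{pr:sa-nonperiodic-construction}.
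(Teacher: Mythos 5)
Your proposal is correct and follows essentially the same route as the paper's proof: invoke \cref{pr:sa-nonperiodic-construction} (the paper pairs it with \cref{pr:sa-core-construction}, starting from the packed text contained in $\STCore(\T)$) to build the structure of \cref{sec:sa-nonperiodic-ds} along with $(\slex_t)_{t\in[1\dd n']}$, then set $\ARRslex$ and build $\TSSS$ via \cref{pr:compact-trie} within the stated time and working-space bounds. Your extra check of the sortedness precondition of \cref{pr:compact-trie} is a harmless elaboration of the same argument.
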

\begin{proof}
  First, we combine
  \cref{pr:sa-core-construction,pr:sa-nonperiodic-construction}
  (recall that the packed representation of $\T$ is a component of
  $\STCore(\T)$) to construct in $\bigO(n \min(1, \log \sigma /
  \sqrt{\log n}))$ time and using $\bigO(n / \log_{\sigma} n)$ working
  space the data structure from \cref{sec:sa-nonperiodic-ds}. In
  particular, this constructs $(\slex_i)_{i \in [1 \dd n']}$. We
  then initialize $\ARRslex[i] = \slex_i$ for $i \in [1 \dd n']$
  and in $\bigO(n/\log_{\sigma} n)$ time construct $\TSSS$
  represented using \cref{pr:compact-trie}.
\end{proof}

\subsection{The Periodic Nodes}\label{sec:st-periodic}

In this section, we describe a data structure used to perform
operations on periodic nodes (see \cref{def:node-periodicity}) in
$\bigO(\log \log n)$ time.

The section is organized as follows. First, we introduce the
components of the data structure (\cref{sec:st-periodic-ds}).  We then
show how using this structure to implement some basic navigational
routines (\cref{sec:st-periodic-nav}). Next, we describe the query
algorithms for the fundamental operations
(\cref{sec:st-periodic-lca,sec:st-periodic-child,%
sec:st-periodic-pred,sec:st-periodic-wa}).  Finally, we show the
construction algorithm (\cref{sec:st-periodic-construction}).

\subsubsection{The Data Structure}\label{sec:st-periodic-ds}

\bfparagraph{Definitions}

Let $v$ be a periodic node of $\ST$. We define $\Lroot(v) :=
\Lroot(\str(v))$, $\pend{v} := \pend{\str(v)}$, $\Lhead(v) :=
\Lhead(\str(v))$, $\Lexp(v) := \Lexp(\str(v))$, $\Ltail(v) :=
\Ltail(\str(v))$, $\pendfull{v} := \pendfull{\str(v)}$, and $\type(v)
:= \type(\str(v))$.  Let $q = |\R'^{-}|$. Recall
(\cref{sec:sa-periodic-ds}) that $(\rlexm_i)_{i \in [1 \dd q]}$ is a
sequence containing all elements $k \in \R'^{-}$ sorted first
according to $\Lroot(k)$ and in case of ties, by $\T[\rendfull{k} \dd
n]$. Recall also (\cref{sec:pm-periodic-ds}) that $\Zset =
\{\rendfull{j} - |\Pow(\Lroot(j))|: j \in \R'^{-}\}$ and $\ARRzlex[1
\dd q]$ is an array defined by $\ARRzlex[i] = \rendfull{\rlexm_i} -
|\Pow(H_i)|$, where $H_i = \Lroot(\rlexm_i)$ and $\Pow(H_i) =
H_i^\infty[1\dd |H_i|\lceil\tfrac{\tau}{|H_i|}\rceil]$.  Let $\TZ$
denote the compact trie of the set $\{\T[i \dd n] : i \in \Zset\}$.

\bfparagraph{Components}

The data structure consists of two parts. The first part consists of
the following three components:

\begin{enumerate}
\item The index core $\STCore(\T)$ (\cref{sec:st-core}). It takes
  $\bigO(n / \log_{\sigma} n)$ space.
\item The first part of the structure from \cref{sec:sa-periodic-ds}
  using $\bigO(n / \log_{\sigma} n)$ space.
\item The compact trie $\TZ$ represented as in \cref{pr:compact-trie}
  (i.e., for the array $\ARRzlex[1 \dd q]$ defined as above). By $q =
  \bigO(n / \log_{\sigma} n)$ and \cref{pr:compact-trie}, it needs
  $\bigO(n/ \log_{\sigma} n)$ space.
\end{enumerate}

The second part of the structure consists of the symmetric
counterparts of the above components adapted according to
\cref{lm:lce} (see also \cref{sec:sa-periodic-ds})

In total, the data structure takes $\bigO(n / \log_{\sigma} n)$ space.

\subsubsection{Navigation Primitives}\label{sec:st-periodic-nav}

\bfparagraph{Mapping from $\ST$ to $\TZ$}

For any periodic explicit node $v$ of $\ST$ satisfying $\rend{v} \leq
|\str(v)|$ and $\type(v) = -1$, we define $\mapToTZ(v) = u$ as a node
of $\TZ$ satisfying $\str(u) = \Pow(H) \cdot \str(v)[\rendfull{v} \dd
|\str(v)|]$, where $H = \Lroot(v)$.

\begin{lemma}\label{lm:st-periodic-map}
  Let $v$ be a periodic explicit node $v$ of $\ST$ such that $\rend{v}
  \leq |\str(v)|$ and $\type(v) = -1$.
  \begin{enumerate}
  \item The node $\mapToTZ(v)$ is well-defined.
  \item Let $i_1 = \lrank(v) + 1$, $i_2 = \rrank(v)$, $y_1$ and $y_2$
    be such that $\rendfull{\rlexm_{y_1}} = \rendfull{\SA[i_1]}$ and
    $\rendfull{\rlexm_{y_2}} = \rendfull{\SA[i_2]}$ (respectively),
    $u_1$ and $u_2$ be the $y_1$th and $y_2$th leftmost leaf of $\TZ$
    (respectively), and $u = \LCA(u_1, u_2)$. Then, $\mapToTZ(v) = u$.
  \end{enumerate}
\end{lemma}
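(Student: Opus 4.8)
The plan is to mirror, step by step, the structure of the proof of \cref{lm:st-nonperiodic-map}, substituting the periodic machinery for the nonperiodic one. For part~1 (well-definedness of $\mapToTZ(v)$), I would argue as follows. Since $v$ is periodic, $\str(v)[1\dd 3\tau-1]$ has period $\leq\tfrac13\tau$; by the assumption $\rend{v}\leq|\str(v)|$ and the definition of $\rend{\cdot}$ on strings (\cref{sec:pm-periodic-prelim}), the prefix $\str(v)[1\dd\rend{v})$ is exactly the periodic prefix, and it has $\Lroot(v)=H$, $\type(v)=-1$. First, if $v$ is a leaf, take any $i\in\Occ(\str(v),\T)$; then $\str(v)$ is a prefix of $\T[i\dd n]$, so $\lcp(\T[i\dd n],\str(v))\geq\rend{v}$, and by \cref{lm:pm-lce-3}\eqref{lm:pm-lce-3-it-2}, $i\in\R$ with $\rend{i}-i=\rend{v}-1$, $\rendfull{i}-i=\rendfull{v}-1$, $\Lroot(i)=H$, $\type(i)=-1$. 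Let $j=\min\{t\leq i: t\in\R\}$ be the start of the run containing $i$; then $j\in\R^-$ (by \cref{lm:R-block}, type is constant on a run) and hence the start of $j$'s run lies in $\R'^-$, so $\rendfull{i}=\rendfull{j}$ is one of the positions $\rendfull{k}$ for $k\in\R'^-_H$. Consequently $\rendfull{i}-|\Pow(H)|\in\Zset$, and the leaf of $\TZ$ for $\T[\rendfull{i}-|\Pow(H)|\dd n]=\Pow(H)\cdot\T[\rendfull{i}\dd n]=\Pow(H)\cdot\str(v)[\rendfull{v}\dd|\str(v)|]$ realizes a path whose label has the required string as a prefix; since in this leaf case $\str(v)[\rendfull{v}\dd|\str(v)|]$ extends all the way to $\T[n]$, the corresponding $\TZ$-node $u$ satisfies $\str(u)=\Pow(H)\cdot\str(v)[\rendfull{v}\dd|\str(v)|]$ exactly. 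For an internal $v$, pick two leaves $v_1,v_2$ below $v$ with $v=\LCA(v_1,v_2)$, positions $i_1\in\Occ(\str(v_1),\T)$, $i_2\in\Occ(\str(v_2),\T)$; both have $\str(v)$ as a prefix, hence (as above) the associated $\rendfull{\cdot}-|\Pow(H)|$ values lie in $\Zset$ and give leaves $u_1,u_2$ of $\TZ$. Using \cref{ob:lca} twice and the fact that prepending the common string $\Pow(H)\cdot\str(v)[\rendfull{v}\dd(\,\cdot\,)]$ shifts $\lcp$ by a fixed amount $|\Pow(H)|+(\rendfull{v}-1)$, I get $\str(\LCA(u_1,u_2))=\Pow(H)\cdot\str(v)[\rendfull{v}\dd|\str(v)|]$, so $\mapToTZ(v)$ is well-defined.

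For part~2, I would first check that the indices $y_1,y_2$ are well-defined: by \cref{lm:pm-lce-3}\eqref{lm:pm-lce-3-it-2} every $\SA[i]$ with $i\in(\lrank(v)\dd\rrank(v)]$ lies in $\R$, its run-start lies in $\R'$, and (since $\type(v)=-1$ forces $\type(\SA[i])=-1$ by \cref{lm:pm-lce-3}) in $\R'^-$, so $\rendfull{\SA[i]}$ equals $\rendfull{j}$ for a unique $j\in\R'^-$; moreover the $\Lroot$ is constantly $H$, so by \cref{lm:efull} this $j$ is a well-defined element of the sequence $(\rlexm_i)$. Next, the core identity I need is that the $y$th leftmost leaf of $\TZ$ (where $\rendfull{\rlexm_y}=\rendfull{\SA[i]}$) has label $\T[\ARRzlex[y]\dd n]=\Pow(H)\cdot\T[\rendfull{\SA[i]}\dd n]=\Pow(H)\cdot\str(v)[\rendfull{v}\dd(\,\cdot\,)]$, using that positions in $\ARRzlex$ are sorted by the lexicographic order of the corresponding suffixes (established in \cref{sec:pm-periodic-ds}) and that $\rendfull{\SA[i]}-\SA[i]=\rendfull{v}-1$. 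Then, exactly as in the nonperiodic proof, I show $\str(v)[\rendfull{v}\dd|\str(v)|]$ (prepended by $\Pow(H)$) is a common prefix of $\str(u_1)$ and $\str(u_2)$, and that $\lcp(\str(u_1),\str(u_2))=|\Pow(H)|+|\str(v)|-(\rendfull{v}-1)$, because $\lcp(\T[\SA[i_1]\dd n],\T[\SA[i_2]\dd n])=|\str(v)|$ (as $v=\LCA(v_1,v_2)$ in $\ST$ with $\str(v_1)=\T[\SA[i_1]\dd n]$, $\str(v_2)=\T[\SA[i_2]\dd n]$) and prepending/shifting by the fixed amount $\rendfull{v}-1$ and then $|\Pow(H)|$ preserves the $\lcp$ difference. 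Applying \cref{ob:lca} to $u_1,u_2$ yields $\str(\LCA(u_1,u_2))=\Pow(H)\cdot\str(v)[\rendfull{v}\dd|\str(v)|]$, and since $\TZ$ has distinct $\str$-labels on distinct nodes, $u=\LCA(u_1,u_2)=\mapToTZ(v)$.

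The main obstacle I anticipate is bookkeeping around the role of $\Pow(H)$: unlike the nonperiodic case where the common prefix trimmed is a fixed-length $X[1\dd\deltatext]$, here the leaves of $\TZ$ start at $\rendfull{k}-|\Pow(H_k)|$ with $H_k$ depending on $k$, so I must carefully invoke the two properties of $\Pow$ from \cref{sec:pm-periodic-ds} — that $\{\Pow(H):H\in\Lroots\}$ is prefix-free and $X\prec Y\Rightarrow\Pow(X)\prec\Pow(Y)$ — to guarantee that (i) all relevant $\TZ$-leaves for the block $(\lrank(v)\dd\rrank(v)]$ share the \emph{same} $H_k=H$ (so the prepended string is genuinely common), and (ii) the $\LCA$ in $\TZ$ does not accidentally climb above the node representing $\Pow(H)$ itself (which is ensured because $|\Pow(H)|\geq\tau>|H|$ guarantees the branching among these leaves happens strictly inside the $\str(v)[\rendfull{v}\dd\cdot]$ part, so $\sdepth(u)\geq|\Pow(H)|$). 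A secondary subtlety is verifying $\rendfull{\SA[i]}-\SA[i]=\rendfull{v}-1$ and $\rend{\SA[i]}-\SA[i]=\rend{v}-1$ uniformly over the block — this is exactly \cref{lm:pm-lce-3}\eqref{lm:pm-lce-3-it-2} applied with pattern $\str(v)$ (legitimate since $\rend{v}\leq|\str(v)|$), so the hard combinatorics is already encapsulated there and the remainder is the same prefix-shifting argument as in \cref{lm:st-nonperiodic-map}.
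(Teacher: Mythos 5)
Your plan is correct and follows essentially the same route as the paper's proof: part 1 via the leaf/internal case split, passing from an occurrence $i\in\Occ(\str(v),\T)$ to the run start $j\in\R'^{-}_H$ (using \cref{lm:pm-lce}, \cref{lm:pm-lce-3}, \cref{lm:R-block}) so that $\rendfull{j}-|\Pow(H)|\in\Zset$, and part 2 via well-definedness of $y_1,y_2$ (\cref{lm:efull}, \cref{lm:R-block}), the identification $H_{y_t}=H$ through the same-run argument, and the $\lcp$/\cref{ob:lca} computation with the fixed shift $\rendfull{v}-1$ plus $|\Pow(H)|$. The extra worries you flag (prefix-freeness of $\Pow$ and $\sdepth(u)\geq|\Pow(H)|$) are in fact resolved automatically by that $\lcp$ computation once $H_{y_1}=H_{y_2}=H$ is established, exactly as in the paper.
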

\begin{proof}
  Denote $s = \Lhead(v)$, $H = \Lroot(v)$, $p = |H|$, $Q = \str(v)$,
  and $\Qsuf = Q[\rendfull{Q} \dd |Q|]$. Note that by $\rendfull{Q}
  \leq \rend{Q} \leq |Q|$, it holds $\Qsuf \neq \emptystring$.

  1. We start by observing that by \cref{lm:pm-lce} and
  \cref{lm:pm-lce-3}\eqref{lm:pm-lce-3-it-2}, for every $i \in \Occ(Q,
  \T)$, it holds $i \in \R_{s,H}^{-}$ and $\rendfull{i}-i =
  \rendfull{Q}-1$. Note that this implies $\rendfull{i} \in
  \Occ(\Qsuf, \T)$.  To show that $\TZ$ has a node $u$ satisfying
  $\str(u) = \Pow(H) \cdot \Qsuf$, consider two cases:
  \begin{itemize}
  \item Assume that $v$ is a leaf. Let $i \in \Occ(\str(v), \T)$.
    By the above, $i \in \R_{H}^{-}$.  Let $j$
    be the smallest integer such that $[j \dd i] \sub \R$. It holds $j
    \in \R'$ and moreover, by \cref{lm:R-block}, $j \in \R_{H}^{-}$
    and $\rendfull{j} = \rendfull{i}$. Thus, by $\rendfull{i} \in
    \Occ(\Qsuf, \T)$ (see above), we have $\rendfull{j} \in
    \Occ(\Qsuf, \T)$. Finally, by $i \in \Occ(Q, \T)$ and $|Q| = n - i
    + 1$, we have $|\Qsuf| = |Q| - \rendfull{Q} + 1 = |Q| -
    (\rendfull{Q} - 1) = (n - i + 1) - (\rendfull{i} - i) = n -
    \rendfull{i} + 1 = n - \rendfull{j} + 1$.  We have thus shown that
    there exists $j \in \R'^{-}_{H}$ such that $\rendfull{j} \in
    \Occ(\Qsuf, \T)$ and $n-\rendfull{j}+1 = |\Qsuf|$. By definition
    of $\ARRzlex[1 \dd q]$ (see \cref{sec:pm-periodic-ds}) this
    implies that there exists a leaf $u$ of $\TZ$ such that $\str(u) =
    \Pow(H) \cdot \T[\rendfull{j} \dd n] = \Pow(H) \cdot \Qsuf$.
  \item Assume that $v$ is an internal node.  Consider the leftmost
    and the rightmost leaves $v_1$ and $v_2$ (respectively) in the
    subtree rooted in $v$.  Letting $i_1 \in \Occ(\str(v_1), \T)$ and
    $i_2 \in \Occ(\str(v_2), \T)$,
    we have $i_1 \neq i_2$ and $i_1, i_2 \in \Occ(Q,
    \T)$. Thus, $i_1, i_2 \in \R_{H}^{-}$ and $\rendfull{i_1}-i_1 =
    \rendfull{Q}-1 = \rendfull{i_2}-i_2$. Therefore, $\rendfull{i_1}
    \neq \rendfull{i_2}$ and, by \cref{lm:R-block}, $i_1$ and $i_2$
    are in different maximal contiguous blocks of positions from $\R$,
    i.e., letting $j_1$ (resp. $j_2$) be the smallest integer such
    that $[j_1 \dd i_1] \sub \R$ (resp.\ $[j_2 \dd i_2] \sub \R$), we
    have $j_1, j_2 \in \R'$ and $j_1 \neq j_2$.  By \cref{lm:R-block},
    it then holds $j_1, j_2 \in \R_{H}^{-}$, $\rendfull{j_1} =
    \rendfull{i_1}$, and $\rendfull{j_2} = \rendfull{i_2}$.  Thus, by
    $\rendfull{i_1}, \rendfull{i_2} \in \Occ(\Qsuf, \T)$ (following
    from $i_1, i_2 \in \Occ(Q, \T)$), we obtain $\rendfull{j_1},
    \rendfull{j_2} \in \Occ(\Qsuf, \T)$.  Next, we show
    $\LCE(\rendfull{j_1}, \rendfull{j_2}) = |\Qsuf|$. As noted
    earlier, $\rendfull{i_1} - i_1 = \rendfull{i_2} - i_2 =
    \rendfull{Q} - 1$. Thus, by $\LCE(i_1, i_2) = |\str(\LCA(v_1,
    v_2))| = |\str(v)| = |Q|$ and $\rendfull{Q} - 1 \leq \rend{Q} - 1
    < |Q|$, we have $|Q| = \LCE(i_1, i_2) = \rendfull{Q} - 1 +
    \LCE(\rendfull{i_1}, \rendfull{i_2})$.  Equivalently,
    $\LCE(\rendfull{i_1}, \rendfull{i_2}) = |Q| - \rendfull{Q} + 1 =
    |\Qsuf|$, which by $\rendfull{j_1} = \rendfull{i_1}$ and
    $\rendfull{j_2} = \rendfull{i_2}$ yields $\LCE(\rendfull{j_1},
    \rendfull{j_2}) = |\Qsuf|$.  We have thus shown that there
    exist distinct positions $j_1,j_2 \in \R'^{-}_{H}$ satisfying
    $\rendfull{j_1},\rendfull{j_2} \in \Occ(\Qsuf, \T)$ and
    $\LCE(\rendfull{j_1}, \rendfull{j_1}) = |\Qsuf|$.  By definition
    of $\ARRzlex[1 \dd q]$, this implies that there exists leaves $u_1$
    and $u_2$ of $\TZ$ such that $\str(u_1) = \Pow(H) \cdot
    \T[\rendfull{j_1} \dd n]$ and $\str(u_2) = \Pow(H) \cdot
    \T[\rendfull{j_2} \dd n]$ (see the proof of \cref{pr:pm-occ-s})
    and consequently, by \cref{ob:lca}, the node $u = \LCA(u_1,u_2)$
    satisfies $\str(u) = \Pow(H)\cdot \Qsuf$.
  \end{itemize}

  2. To show that $y_1$ and $y_2$ are well-defined note that for every
  $i \in \R^{-}$, letting $j$ be the smallest integer satisfying $[j
  \dd i] \sub \R$, we have $j \in \R'^{-}$ and (by \cref{lm:R-block})
  $\rendfull{j} = \rendfull{i}$. Consequently, since $\{\rlexm_i\}_{i
  \in [1 \dd q]} = \R'^{-}$, taking $y \in [1 \dd q]$ such that
  $\rlexm_y = j$, it holds $\rendfull{\rlexm_y} = \rendfull{i}$.
  Therefore, by $\SA[i_1], \SA[i_2] \in \Occ(Q, \T) \sub \R^{-}$,
  $y_1, y_2 \in [1 \dd q]$ are (uniquely) defined.

  We start by showing that $\str(u_1) = \Pow(H) \cdot
  \T[\rendfull{\SA[i_1]} \dd n]$ and $\str(u_2) = \Pow(H) \cdot
  \T[\rendfull{\SA[i_2]} \dd n]$. As noted in \cref{sec:pm-periodic-ds},
  the sequence $(\T[\ARRzlex[i] \dd n])_{i \in [1 \dd
  q]}$ is lexicographically sorted.  Thus, by definition of $u_1$
  and $u_2$, we have $\str(u_1) = \T[\ARRzlex[y_1] \dd n]$ and $\str(u_2) =
  \T[\ARRzlex[y_2] \dd n]$.  As also noted in \cref{sec:pm-periodic-ds},
  for every $i \in [1 \dd q]$, $\T[\ARRzlex[i] \dd n] =
  \Pow(H_i)\cdot \T[\rendfull{\rlexm_i} \dd n]$, where $H_i =
  \Lroot(\rlexm_i)$. Combining that with the assumptions
  $\rendfull{\rlexm_{y_1}} = \rendfull{\SA[i_1]}$ and
  $\rendfull{\rlexm_{y_2}} = \rendfull{\SA[i_2]}$, we therefore obtain
  \begin{align*}
    \str(u_1) &= \Pow(H_{y_1}) \cdot \T[\rendfull{\SA[i_1]} \dd n],\\
    \str(u_2) &= \Pow(H_{y_2}) \cdot \T[\rendfull{\SA[i_2]} \dd n].
  \end{align*}
  To obtain $\str(u_1) = \Pow(H) \cdot \T[\rendfull{\SA[i_1]} \dd n]$
  and $\str(u_2) = \Pow(H) \cdot \T[\rendfull{\SA[i_2]} \dd n]$ it
  thus remains to show $H_{y_1} = H_{y_2} = H$. To this end, we first
  note that by $\rendfull{\rlexm_{y_1}} = \rendfull{\SA[i_1]}$,
  (resp.\ $\rendfull{\rlexm_{y_2}} = \rendfull{\SA[i_2]}$), and
  \cref{lm:R-block,lm:efull}, the positions $\rendfull{\rlexm_{y_1}}$
  and $\SA[i_1]$ (resp.\ $\rendfull{\rlexm_{y_2}}$ and $\SA[i_2]$)
  belong to the same contiguous block of elements from $\R$.  Next, by
  $|Q| \geq 3\tau - 1$ and $\SA[i_1] \in \Occ(Q, \T)$, we obtain
  $\lcp(\T[\SA[i_1] \dd n], Q) \geq 3\tau - 1$. Thus, by combining
  \cref{lm:R-block} and \cref{lm:pm-lce}, we have $H_{y_1} =
  \Lroot(\rlexm_{y_1}) = \Lroot(\SA[i_1]) = \Lroot(Q) = \Lroot(v) =
  H$. Analogously, $H_{y_2} = H$.

  By the above and \cref{ob:lca}, we thus have $\str(u) =
  \str(\LCA(u_1,u_2)) = \Pow(H) \cdot \T[\rendfull{\SA[i_1]} \dd
  \rendfull{\SA[i_1]}+\ell)$, where $\ell = \LCE(\rendfull{\SA[i_1]},
  \rendfull{\SA[i_2]})$. Moreover, by $\SA[i_1] \in \Occ(Q, \T)$, we
  have $\rendfull{\SA[i_1]} \in \Occ(\Qsuf, \T)$. Thus, it remains to
  show that $\ell = |\Qsuf|$. For this, recall that by $\SA[i_1],
  \SA[i_2] \in \Occ(Q, \T)$ we also have $\rendfull{\SA[i_1]} -
  \SA[i_1] = \rendfull{\SA[i_2]} - \SA[i_2] =
  \rendfull{Q}-1$. Therefore, by $\rendfull{Q} - 1 \leq \rend{Q} - 1 <
  |Q|$, we have $|Q| = \LCE(\SA[i_1], \SA[i_2]) = \rendfull{Q} - 1 +
  \LCE(\rendfull{\SA[i_1]}, \rendfull{\SA[i_2]})$. Equivalently,
  $\LCE(\rendfull{\SA[i_1]}, \rendfull{\SA[i_2]}) = |Q| - \rendfull{Q}
  + 1 = |\Qsuf|$.  We therefore obtained $\str(u) = \Pow(H) \cdot
  \Qsuf$, i.e., $\mapToTZ(v) = u$.
\end{proof}

\begin{proposition}\label{pr:st-periodic-map}
  Let $v$ be a periodic explicit node $v$ of $\ST$ satisfying
  $\rend{v} \leq |\str(v)|$ and $\type(v) = -1$.  Given the data
  structure from \cref{sec:st-periodic-ds} and $\repr(v)$, we can in
  $\bigO(\log \log n)$ time compute the pointer to the node
  $\mapToTZ(v)$.
\end{proposition}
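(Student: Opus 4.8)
\textbf{Proof plan for Proposition~\ref{pr:st-periodic-map}.}
The plan is to follow the same template used for the analogous mapping into $\TSSS$ (namely \cref{pr:st-nonperiodic-map} together with \cref{lm:st-nonperiodic-map}), but now invoking the characterisation of $\mapToTZ(v)$ established in \cref{lm:st-periodic-map}. First I would unpack $\repr(v) = (b,e)$ and set $i_1 = b+1$, $i_2 = e$; since $v$ is a node of $\ST$ we know $\SA(i_1 \dd)$ and $\SA[i_2]$ are, respectively, the first and last occurrences of $\str(v)$, and since $v$ is periodic with $\rend{v}\le|\str(v)|$ and $\type(v)=-1$, \cref{lm:pm-lce} and \cref{lm:pm-lce-3}\eqref{lm:pm-lce-3-it-2} give $\SA[i_1],\SA[i_2]\in\R^{-}$ with $\rendfull{\SA[i_k]}-\SA[i_k] = \rendfull{\str(v)}-1$. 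The goal is then to compute, for $k\in\{1,2\}$, the index $y_k\in[1\dd q]$ such that $\rendfull{\rlexm_{y_k}} = \rendfull{\SA[i_k]}$, retrieve the $y_1$th and $y_2$th leftmost leaves $u_1,u_2$ of $\TZ$, take $u=\LCA(u_1,u_2)$, and return the pointer to $u$; correctness is immediate from \cref{lm:st-periodic-map}(2).

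The technical work is therefore in realising each substep in $\bigO(\log\log n)$ time using components already available. To get $\rendfull{\SA[i_k]}$, I would first compute $\SA[i_k]$ itself; since $\SA[i_k]\in\R$, \cref{pr:sa-periodic-sa} gives it in $\bigO(\log\log n)$ time from the index $i_k$ and the structures of \cref{sec:sa-periodic-ds} (which are a component of the present data structure via \cref{sec:st-periodic-ds}). Then $\Lroot(\SA[i_k])$, $\Lhead(\SA[i_k])$, $\Lexp(\SA[i_k])$, $\Ltail(\SA[i_k])$ are obtained in $\bigO(1)$ time by \cref{pr:isa-root}, whence $\rendfull{\SA[i_k]} = \SA[i_k] + \Lhead(\SA[i_k]) + \Lexp(\SA[i_k])\cdot|\Lroot(\SA[i_k])|$. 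To pass from $\rendfull{\SA[i_k]}$ to the index $y_k$: let $j_k$ be the start of the $\tau$-run containing $\SA[i_k]$, obtained in $\bigO(1)$ time as $j_k = \select{\BVRprim}{1}{\rank{\BVRprim}{1}{\SA[i_k]}}$; by \cref{lm:R-block} and \cref{lm:efull}, $j_k\in\R'^{-}$ and $\rendfull{j_k} = \rendfull{\SA[i_k]}$, so the desired $y_k$ is exactly the position of $j_k$ in $(\rlexm_i)_{i\in[1\dd q]}$, which is $\ARRrmap[\rank{\BVRprim}{1}{j_k}]$ (or, more directly, $\ARRrmap$ applied to the index of $j_k$ in the text order of $\R'^{-}$), computed in $\bigO(1)$ time from the arrays in \cref{sec:sa-periodic-ds}. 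Finally, the $y_k$th leftmost leaf of $\TZ$ and the $\LCA$ of two leaves are supplied in $\bigO(1)$ time by \cref{pr:compact-trie}, since $\TZ$ is stored in that representation. Summing up, every step is $\bigO(1)$ except the $\SA$ evaluation, so the total is $\bigO(\log\log n)$.

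The main obstacle I anticipate is bookkeeping rather than anything deep: I must make sure that the $y_k$ I compute really is the index appearing in \cref{lm:st-periodic-map}(2), i.e.\ that the chain ``$i_k \mapsto \SA[i_k] \mapsto j_k = $ run start $\mapsto$ index in $(\rlexm_i)$'' lands on the unique $y_k$ with $\rendfull{\rlexm_{y_k}} = \rendfull{\SA[i_k]}$. Uniqueness of such $y_k$ is exactly \cref{lm:efull} (distinct elements of $\R'$, hence of $\R'^{-}$, have distinct $\rendfull{\cdot}$), and the equality $\rendfull{j_k} = \rendfull{\SA[i_k]}$ is \cref{lm:R-block}; so the only care needed is to cite these correctly and to use the right one of $\ARRrmap$/$\ARRrinvmap$ and the right direction of $\BVRprim$ rank/select. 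A second minor point is to confirm $\rlexm_{y_k}\in\R'^{-}_H$ with $H = \Lroot(v)$: this follows because $\lcp(\T[\SA[i_k]\dd n],\str(v))\ge 3\tau-1$ forces $\Lroot(\SA[i_k]) = \Lroot(\str(v)) = \Lroot(v)$ by \cref{lm:pm-lce}, and $\Lroot$ is constant along a $\tau$-run by \cref{lm:R-block}, so $\Lroot(j_k) = \Lroot(v)$ as well; this is needed only to match the hypotheses of \cref{lm:st-periodic-map} but is otherwise automatic and requires no new argument.
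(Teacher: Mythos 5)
Your proposal is correct and follows the same template as the paper's proof: both reduce the task to finding, for $i_1=\lrank(v)+1$ and $i_2=\rrank(v)$, the indices $y_1,y_2\in[1\dd q]$ with $\rendfull{\rlexm_{y_k}}=\rendfull{\SA[i_k]}$, then take the $y_1$th and $y_2$th leaves of $\TZ$, their $\LCA$ via \cref{pr:compact-trie}, and conclude by \cref{lm:st-periodic-map}. The only divergence is the subroutine for $y_k$: the paper never materialises $\SA[i_k]$ — it computes $\Lexp(\SA[i_k])$ and $\deltas(\SA[i_k])$ in $\bigO(1)$ time via \cref{pr:sa-delta-a} and then extracts $y_k$ directly as the range-selection step inside the proof of \cref{pr:sa-delta-s} — whereas you first compute $\SA[i_k]$ via \cref{pr:sa-periodic-sa} and then walk back through $\rendfull{\SA[i_k]}$ (\cref{pr:isa-root}), the run start via rank/select on $\BVRprim$ (valid by \cref{lm:R-block}), and $\ARRrmap$, exactly the chain used in \cref{pr:isa-delta-s}. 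Both variants run in $\bigO(\log\log n)$ and use only components of the structure from \cref{sec:st-periodic-ds}; yours is marginally redundant, since \cref{pr:sa-periodic-sa} internally already computes the very index you then re-derive, but it is sound, and your justification that $\SA[i_k]\in\R^{-}$ (via \cref{lm:pm-lce} and \cref{lm:pm-lce-3}) matches the paper's.
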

\begin{proof}
  Denote $(b, e) = \repr(v)$, $i_1 = b + 1$, and $i_2 = e$. First,
  using \cref{pr:sa-delta-a} in $\bigO(1)$ time we compute the
  $\Lexp(\SA[i_1])$ and $\deltas(\SA[i_1])$. Next, as explained in the
  proof of \cref{pr:sa-delta-s}, given $i_1$, $\Lexp(\SA[i_1])$, and
  $\deltas(\SA[i_1])$, in $\bigO(\log \log n)$ time we compute $y_1
  \in [1 \dd q]$ satisfying $\rendfull{\rlexm_{y_1}} =
  \rendfull{\SA[i_1]}$. Note that \cref{pr:sa-delta-s} requires that
  $\SA[i_1] \in \R^{-}$, which holds by $\SA[i_1] \in \Occ(Q, \T)$,
  where $Q = \str(v)$ (see the proof of \cref{lm:st-periodic-map}).
  Analogously we compute $y_2 \in [1 \dd q]$ satisfying
  $\rendfull{\rlexm_{y_2}} = \rendfull{\SA[i_2]}$.  Then, using
  \cref{pr:compact-trie} in $\bigO(1)$ time we compute the pointers to
  the $y_1$th and $y_2$th leftmost leaves $u_1$ and $u_2$
  (respectively) of $\TZ$.  Then, again using \cref{pr:compact-trie},
  in $\bigO(1)$ time we compute and return the pointer to $u =
  \LCA(u_1, u_2)$. By \cref{lm:st-periodic-map}, it holds $\mapToTZ(v)
  = u$.
\end{proof}

\bfparagraph{Mapping from $\TZ$ to $\ST$}

Let $u$ be a node of $\TZ$ such that there exists $H \in \Lroots$ for
which $\Pow(H)$ is a prefix of $\str(u)$.  For any $\ell \geq 0$, we
define $\pseudoInvTZ{\ell}{u}$ as follows.  If, letting $s = \ell
\bmod |H|$ and $k = \lfloor \frac{\ell}{|H|} \rfloor$, it holds
$\mathsf{P} := \{j \in \R_{s,H}^{-} : \Lexp(j) = k\} = \emptyset$,
then $\pseudoInvTZ{\ell}{u} := (0, 0)$. Otherwise (i.e., $\mathsf{P}
\neq \emptyset$), letting $b_{\mathsf{P}}, e_{\mathsf{P}} \in [0 \dd
n]$ be such that $\{\SA[i]\}_{i \in (b_{\mathsf{P}} \dd
e_{\mathsf{P}}]} = \mathsf{P}$, $b_H, e_H \in [0 \dd q]$ be such that
$\{\rlexm_i\}_{i \in (b_H \dd e_H]} = \R'^{-}_{H}$, and letting $z_1 =
\lrank(u)$ and $z_2 = \rrank(u)$, we define $\pseudoInvTZ{\ell}{u} :=
(e_{\mathsf{P}} - c_1, e_{\mathsf{P}} - c_2)$, where
\begin{align*}
  c_1 &:= \rcount{\ARRnontail}{\ell}{e_H} -
          \rcount{\ARRnontail}{\ell}{z_1},\\
  c_2 &:= \rcount{\ARRnontail}{\ell}{e_H} -
          \rcount{\ARRnontail}{\ell}{z_2}.
\end{align*}

\begin{remark}
  To see that $H$ is well-defined, recall (see the proof of
  \cref{pr:pm-occ-s}) that $\{\Pow(H)\}_{H \in \Lroots}$ is
  prefix-free. Thus, at most one element of $\{\Pow(H)\}_{H \in
  \Lroots}$ can be a prefix of $\str(u)$. Furthermore, since $X \neq
  Y$ implies $\Pow(X) \neq \Pow(Y)$, $\Pow(H)$ uniquely identifies
  $H$.

  To see that $b_{\mathsf{P}}$ and $e_{\mathsf{P}}$ are well-defined,
  recall that by \cref{lm:lce}, if $\mathsf{P} \neq \emptyset$, then
  all positions in $\mathsf{P}$ occupy a contiguous block in $\SA$
  (see also the proof of \cref{pr:isa-delta-a}).

  Finally, to show that $b_H$ and $e_H$ are well-defined, note that
  $\Pow(H)$ being a prefix of $\str(u)$ implies, by definition of
  $\TZ$, that there exists $i \in [1 \dd q]$ such that $H =
  \Lroot(\rlexm_i)$. Recall (see the proof of \cref{pr:isa-delta-s}),
  that for any $i, i' \in [1 \dd q]$, $i < i'$ implies
  $\Lroot(\rlexm_i) \preceq \Lroot(\rlexm_{i'})$.  Thus, there exists
  a unique $(b_H, e_H)$ (with $0 \leq b_H < e_H \leq q$) such that
  $\{\rlexm_i\}_{i \in (b_H \dd e_H]} = \R'^{-}_H$.
\end{remark}

\begin{remark}
  Note that similarly as for $\mapToTSSS$ (see
  \cref{sec:st-nonperiodic-nav}), the mapping $\mapToTZ$ is not
  necessarily injective, and hence it may not have an inverse (see
  also \cref{rm:st-nonperiodic-invmap}). To perform the mapping from
  $\TZ$ to $\ST$, we will use the above function. Although it is
  well-defined for every $\ell$ and $u$ (specified as above), its
  value is not always meaningful. Below we show a simple but useful
  condition where it is, and in the following sections we show the
  more subtle uses.
\end{remark}

\begin{lemma}\label{lm:st-periodic-nav-aux}
  Let $\Pat \in \Alphabet^m$ be a periodic pattern satisfying
  $\rend{\Pat} \leq m$ and $\type(\Pat) = -1$.  Denote $H =
  \Lroot(\Pat)$, $s = \Lhead(\Pat)$, $k = \Lexp(\Pat)$, $\ell =
  \rendfull{\Pat} - 1$, and $\Pat' = \Pow(H) \cdot \Pat(\ell \dd m]$.
  Assume that $\mathsf{P} := \{j \in \R^{-}_{s,H} : \Lexp(j) = k\}
  \neq \emptyset$ and let $b_{\mathsf P}, e_{\mathsf P} \in [0 \dd n]$
  be such that $\{\SA[i]\}_{i \in (b_{\mathsf{P}} \dd e_{\mathsf{P}}]}
  = \mathsf{P}$ and $b_{H}, e_{H} \in [0 \dd q]$ be such that
  $\{\rlexm_i\}_{i \in (b_H \dd e_H]} = \R'^{-}_{H}$. Finally, let
  $(\bpre, \epre)$ be such that $\bpre = |\{i \in [1 \dd q] :
  \T[\ARRzlex[i] \dd n] \prec \Pat'\}|$ and $(\bpre \dd \epre] = \{i
  \in [1 \dd q] : \Pat'\text{ is a prefix of }\T[\ARRzlex[i] \dd
  n]\}$.  Then, it holds
  \[
    (\LB(\Pat, \T), \UB(\Pat, \T)) = (e_{\mathsf{P}} - c_1,
      e_{\mathsf{P}} - c_2),
  \]
  where $c_1 = \rcount{\ARRnontail}{\ell}{e_H} -
  \rcount{\ARRnontail}{\ell}{\bpre}$ and $c_2 =
  \rcount{\ARRnontail}{\ell}{e_H} -
  \rcount{\ARRnontail}{\ell}{\epre}$.
\end{lemma}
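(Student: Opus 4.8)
The plan is to reduce the claim to \cref{pr:pm-delta-s} and \cref{pr:pm-periodic-range} by showing that the quantities $e_{\mathsf{P}}$, $c_1$, $c_2$ in the statement coincide with the ingredients already analyzed there. First I would invoke \cref{lm:pm-lce} and \cref{lm:pm-lce-3} to note that, since $\Pat$ is periodic with $\rend{\Pat}\le m$ and $\type(\Pat) = -1$, every occurrence $j\in\Occ(\Pat,\T)$ lies in $\R^{-}_{s,H}$ with $\Lexp(j) = k$ and $\rendfull{j} - j = \rendfull{\Pat}-1 = \ell$; hence $\Occ(\Pat,\T)\subseteq\mathsf{P}$, so $\LB(\Pat,\T), \UB(\Pat,\T)\in(b_{\mathsf{P}}\dd e_{\mathsf{P}}]$. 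This means it suffices to compute the number of positions $j\in\mathsf{P}$ with $\T[j\dd n]\succ\Pat$ (this gives $e_{\mathsf{P}} - \UB(\Pat,\T)$, i.e.\ $c_2$) and the number with $\T[j\dd n]\succeq\Pat$ (this gives $e_{\mathsf{P}} - \LB(\Pat,\T)$, i.e.\ $c_1$). By \cref{lm:pm-delta-2}, $c_1$ is essentially $\deltas(\Pat,\T) = |\Poss(\Pat,\T)|$ — but I would need to be a little careful: \cref{lm:pm-delta-2} computes $\delta(\Pat,\T) = \deltaa(\Pat,\T) - \deltas(\Pat,\T)$ where $\deltas(\Pat,\T)$ counts $j\in\R^{-}_{s,H}$ with $\Lexp(j) = k$ and $\T[j\dd n]\succeq\Pat$, which is exactly $|\mathsf{P}| - c_1$; and $c_2$ is the analogous count with strict inequality $\succ$.

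The heart of the argument is therefore to show that $c_1$ and $c_2$ as \emph{defined} via range-counting on $\ARRnontail$ really do count these quantities. For this I would follow the proof of \cref{pr:pm-occ-s} and \cref{pr:pm-delta-s} almost verbatim: the string $\Pat' = \Pow(H)\cdot\Pat(\ell\dd m]$ is crafted so that, by the prefix-free property of $\{\Pow(H)\}_{H\in\Lroots}$ and the fact that $\T[\ARRzlex[i]\dd n] = \Pow(H_i)\cdot\T[\rendfull{\rlexm_i}\dd n]$, the range $(\bpre\dd\epre]$ picks out exactly those $i$ with $\Lroot(\rlexm_i) = H$ and $\Pat(\ell\dd m]$ a prefix of $\T[\rendfull{\rlexm_i}\dd n]$ — equivalently $\Pat[\pendfull{\Pat}\dd m]$ a prefix of $\T[\rendfull{\rlexm_i}\dd n]$. (Here I should double-check the index bookkeeping relating $\rendfull{\Pat}$, $\ell$, and the notation $\pendfull{\Pat}$ used in \cref{lm:pm-delta-s}; recall $\rend{\cdot}$ and $\pend{\cdot}$ are the same macro, and $\rendfull{\Pat}-1$ equals $\pendfull{\Pat}-1$.) Then $\rcount{\ARRnontail}{\ell}{e_H} - \rcount{\ARRnontail}{\ell}{\epre}$ counts $i\in(\epre\dd e_H]$ with $\ell_i\ge\ell$, and by \cref{lm:pm-delta-s} (via \cref{lm:occ-s}) each such $i$ corresponds to exactly one position $j\in\R^{-}_{s,H}$ with $\Lexp(j) = k$ and $\T[j\dd n]\succ\Pat$; summing over the at-most-one-per-run bound gives $c_2 = |\{j\in\mathsf{P} : \T[j\dd n]\succ\Pat\}|$, and analogously $c_1 = |\{j\in\mathsf{P} : \T[j\dd n]\succeq\Pat\}|$ using $\bpre$ in place of $\epre$ (the set $\{\rlexm_i : i\in(\bpre\dd\epre]\}$ together with $\{\rlexm_i : \T[\rendfull{\rlexm_i}\dd n]\succ\Pat[\pendfull{\Pat}\dd m]\}$ accounts for the $\succeq$ versus $\succ$ distinction, exactly as in \cref{pr:pm-delta-s}).

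Finally I would assemble the pieces: $\UB(\Pat,\T) = e_{\mathsf{P}} - c_2$ because $c_2$ counts the occurrences-free tail of $\mathsf{P}$ strictly above $\Pat$ in $\SA$ order (using that $\Occ(\Pat,\T)\subseteq\mathsf{P}$ and that within $\mathsf{P}$ the $\SA$-order is determined by \cref{lm:pm-lce}, so $\{j\in\mathsf{P} : \T[j\dd n]\succ\Pat\}$ is the top block $\SA(\UB(\Pat,\T)\dd e_{\mathsf{P}}]$), and similarly $\LB(\Pat,\T) = e_{\mathsf{P}} - c_1$ since $\{j\in\mathsf{P} : \T[j\dd n]\succeq\Pat\} = \SA(\LB(\Pat,\T)\dd e_{\mathsf{P}}]$. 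The main obstacle I anticipate is not any deep combinatorics — all of it is already packaged in \cref{lm:occ-s}, \cref{lm:pm-delta-s}, \cref{lm:pm-lce}, and \cref{lm:pm-lce-3} — but rather the careful reconciliation of the three different index conventions in play (the pattern offsets $\rendfull{\Pat}$ vs.\ $\pendfull{\Pat}$ vs.\ the raw $\ell$; the text offsets $\rendfull{j}$; and the range endpoints $\bpre,\epre,b_H,e_H,b_{\mathsf P},e_{\mathsf P}$), and verifying that $\mathsf{P}\neq\emptyset$ exactly matches the nonemptiness hypotheses under which $b_{\mathsf P},e_{\mathsf P}$ and $b_H,e_H$ are defined, so that the final subtraction formula is literally the one asserted.
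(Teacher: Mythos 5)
Your route is close to the paper's (same key lemmas: \cref{lm:occ-s}, \cref{lm:pm-delta-s}, \cref{lm:pm-lce}, \cref{lm:pm-lce-3}), and your treatment of $c_1$ is correct: $(\bpre \dd e_H]$ picks out exactly the runs in $\R'^{-}_{H}$ whose tail $\T[\rendfull{j} \dd n]$ is $\succeq \Pat(\ell \dd m]$, so by \cref{lm:pm-delta-s} the count $c_1$ equals $\deltas(\Pat,\T) = |\{j \in \mathsf{P} : \T[j \dd n] \succeq \Pat\}| = e_{\mathsf{P}} - \LB(\Pat,\T)$ (note this also contradicts your earlier aside ``which is exactly $|\mathsf{P}| - c_1$''; it is $c_1$ itself). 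The genuine flaw is in your handling of $c_2$: you assert $c_2 = |\{j \in \mathsf{P} : \T[j \dd n] \succ \Pat\}|$ and that this set is the top block $\SA(\UB(\Pat,\T) \dd e_{\mathsf{P}}]$. Both statements fail whenever some occurrence properly extends $\Pat$ (i.e., $j \in \Occ(\Pat,\T)$ with $j + m \le n$, the typical case): such a suffix is strictly greater than $\Pat$ yet has rank in $(\LB(\Pat,\T) \dd \UB(\Pat,\T)]$, and its run lies in $(\bpre \dd \epre]$, not in $(\epre \dd e_H]$. What $c_2$ really counts are runs whose tail is $\succ \Pat(\ell \dd m]$ \emph{and not prefixed by it}, i.e., positions of $\mathsf{P}$ with $\SA$-rank strictly above $\UB(\Pat,\T)$; with that corrected characterization your conclusion $\UB(\Pat,\T) = e_{\mathsf{P}} - c_2$ does go through, but the step as written is false. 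The paper avoids characterizing the top block at all: it proves $|\Occ(\Pat,\T)| = c_1 - c_2$ by applying \cref{lm:occ-s} to the prefixed range $(\bpre \dd \epre]$ (after killing $\Occa(\Pat,\T)$ via \cref{lm:occ-a} and placing $\Occ(\Pat,\T) \sub \R^{-}$ via \cref{lm:pm-lce-3}), proves $\LB(\Pat,\T) = e_{\mathsf{P}} - c_1$ via \cref{lm:pm-delta-1}, \cref{lm:pm-delta-2}, \cref{lm:pm-delta-s} together with the identity $\LB(X,\T) + \deltaa(\Pat,\T) = e_{\mathsf{P}}$ for $X = \Pat[1 \dd 3\tau - 1]$, and then sets $\UB = \LB + |\Occ|$.

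A second, smaller gap: you anchor $b_{\mathsf{P}} \le \LB(\Pat,\T) \le \UB(\Pat,\T) \le e_{\mathsf{P}}$ only on $\Occ(\Pat,\T) \subseteq \mathsf{P}$, which says nothing when $\Occ(\Pat,\T) = \emptyset$; the lemma is explicitly invoked in that case (see the remark following it in the paper). You need the occurrence-free argument that every position of $\R^{-}_{s,H}$ with $\Lexp$ smaller than $k$ has suffix $\prec \Pat$, while positions with larger $\Lexp$, of type $+1$, or outside the block of $X$ have suffixes $\succ \Pat$ and not prefixed by $\Pat$ — this is exactly what \cref{lm:pm-lce} gives, and it is what the paper packages as $\LB(X,\T) + \deltaa(\Pat,\T) = e_{\mathsf{P}}$.
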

\begin{proof}
  The proof consists of two steps:
  \begin{enumerate}
  \item First, we prove that $|\Occ(\Pat, \T)| = c_1 - c_2 =
    \rcount{\ARRnontail}{\ell}{\epre} -
    \rcount{\ARRnontail}{\ell}{\bpre}$. By \cref{lm:occ}, $\Occ(\Pat,
    \T)$ is a disjoint union of $\Occa(\Pat, \T)$ and $\Occs(\Pat,
    \T)$ (see the beginning of \cref{sec:pm-periodic-pm} for
    definitions). Moreover, since $\rend{\Pat} \leq m$,
    \cref{lm:occ-a} and its symmetric version (adapted according to
    \cref{lm:pm-lce}) imply that $\Occa(\Pat, \T) = \emptyset$. Thus,
    we need to prove $|\Occs(\Pat, \T)| =
    \rcount{\ARRnontail}{\ell}{\epre} -
    \rcount{\ARRnontail}{\ell}{\bpre}$. By
    \cref{lm:pm-lce-3}\eqref{lm:pm-lce-3-it-2}, it holds $\Occ(\Pat,
    \T) \sub \R^{-}$. Thus, $\Occs(\Pat, \T) = \Occsm(\Pat, \T)$.
    Recall now that $\ARRzlex[i] = \rendfull{\rlexm_i} -
    |\Pow(\Lroot(\rlexm_i))|$.  Since the set $\{\Pow(H) : H \in
    \Lroots\}$ is prefix-free, it follows, letting $H_j = \Lroot(j)$
    (where $j \in \R$), that
    \begin{align*}
      \{\rlexm_i\}_{i \in (\bpre \dd \epre]}
        &= \{j \,{\in}\, \R'^{-} : \Pow(H) \cdot
             \Pat(\ell \dd m]\text{ is a prefix of }
             \T[\rendfull{j} - |\Pow(H_j)| \dd n]\}\\
        &= \{j \,{\in}\, \R'^{-} : \Pow(H) \cdot
             \Pat(\ell \dd m]\text{ is a prefix of }
             \Pow(H_j) \cdot \T[\rendfull{j} \dd n]\}\\
        &= \{j \,{\in}\, \R'^{-}_{H} :
             \Pat(\ell \dd m]\text{ is a prefix of }
             \T[\rendfull{j} \dd n]\}
    \end{align*}
    By \cref{lm:occ-s}, we thus have $|\Occsm(\Pat, \T)| = |\{i \in
    (\bpre \dd \epre] : \rendfull{\rlexm_i} - \rlexm_i \geq
    \rendfull{\Pat} - 1\}| = \rcount{\ARRnontail}{\ell}{\epre} -
    \rcount{\ARRnontail}{\ell}{\bpre}$ (recall, that $\ARRnontail[i] =
    \rendfull{\rlexm_i} - \rlexm_i$; see \cref{sec:sa-periodic-ds}).
  \item Second, we prove that $\LB(\Pat, \T) = e_{\mathsf{P}} - c_1$.
    We start by observing that since $\Pat$ is periodic and satisfies
    $\type(\Pat) = -1$, it follows from
    \cref{lm:pm-delta-1,lm:pm-delta-2} that $\LB(\Pat, \T) = \LB(X,
    \T) + \delta(\Pat, \T) = \LB(X, \T) + \deltaa(\Pat, \T) -
    \deltas(\Pat, \T)$, where $X = \Pat[1 \dd 3\tau - 1]$. On the
    other hand, combining the equalities $\Lhead(\Pat) = s$,
    $\Lroot(\Pat) = H$, $\Lexp(\Pat) = k$, and $\type(\Pat) = -1$ with
    the definition of $\mathsf{P}$ yields $\LB(X, \T) + \deltaa(\Pat,
    \T) = e_{\mathsf{P}}$. Consequently, we obtain $\LB(\Pat, \T) =
    e_{\mathsf{P}} - \deltas(\Pat, \T)$. It thus remains to show
    $\deltas(\Pat, \T) = c_1$. By utilizing that by definition of the
    sequence $(\rlexm_i)_{i \in [1 \dd q]}$, for every $i, i' \in (b_H
    \dd e_H]$, $i < i'$ implies $\T[\rendfull{\rlexm_i} \dd n] \preceq
    \T[\rendfull{\rlexm_{i'}} \dd n]$, it follows by the above formula
    for $\{\rlexm_i\}_{i \in (\bpre \dd \epre]}$ that
    \[
      \{\rlexm_i\}_{i \in (\bpre \dd e_H]} = \{j \in \R'^{-}_{H} :
        \Pat(\ell \dd m] \preceq \T[\rendfull{j} \dd n]\}.
    \]
    By \cref{lm:pm-delta-s}, we thus have $\deltas(\Pat, \T) = |\{i
    \in (\bpre \dd e_H] : \rendfull{\rlexm_i} - \rlexm_i \geq
    \rendfull{\Pat} - 1\}| = \rcount{\ARRnontail}{\ell}{e_H} -
    \rcount{\ARRnontail}{\ell}{\bpre} = c_1$. \qedhere
  \end{enumerate}
\end{proof}

\begin{remark}
  Note that since the range $(\bpre \dd \epre]$ is well-defined
  even if $\epre - \bpre = 0$, the above lemma holds even if
  $|\Occ(\Pat, \T)| = 0$.
\end{remark}

\begin{lemma}\label{lm:st-periodic-invmap}
  Let $v$ be an explicit periodic node of $\ST$ such that $\rend{v}
  \leq |\str(v)|$ and $\type(v) = -1$. Let $u = \mapToTZ(v)$ and $\ell
  = \rendfull{v} - 1$. Then, it holds $\repr(v) =
  \pseudoInvTZ{\ell}{u}$.
\end{lemma}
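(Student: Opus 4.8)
The plan is to reduce the lemma to \cref{lm:st-periodic-nav-aux}, the periodic analogue of \cref{lm:pm-nonperiodic}, by verifying that the quantities appearing in the definition of $\pseudoInvTZ{\ell}{u}$ match exactly the quantities $(\LB(\Pat,\T),\UB(\Pat,\T))$ computed there for the pattern $\Pat = \str(v)$. First I would set up notation: let $Q = \str(v)$, $H = \Lroot(v) = \Lroot(Q)$, $s = \Lhead(v)$, $k = \Lexp(v)$, and $\ell = \rendfull{v} - 1 = \rendfull{Q} - 1 = s + k|H|$. Since $v$ is an explicit node of $\ST$, we have $\repr(v) = (\LB(Q,\T),\UB(Q,\T))$. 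Because $\rend{v} \leq |Q|$ and $\type(v) = -1$, the pattern $Q$ is a periodic pattern satisfying $\rend{\Pat} \leq m$ and $\type(\Pat) = -1$ in the sense of \cref{sec:pm-periodic-prelim}, so \cref{lm:st-periodic-nav-aux} applies to $\Pat = Q$.

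Next I would identify the pieces. By \cref{lm:st-periodic-map}, the node $u = \mapToTZ(v)$ satisfies $\str(u) = \Pow(H) \cdot Q[\rendfull{v} \dd |Q|] = \Pow(H) \cdot Q(\ell \dd |Q|]$; this is precisely the string $\Pat'$ of \cref{lm:st-periodic-nav-aux}. Hence, by definition of $\TZ$ as the compact trie of $\{\T[i \dd n] : i \in \Zset\}$ (equivalently, the trie over the sorted array $\ARRzlex$) and by the standard meaning of $\lrank$ and $\rrank$ in a compact trie, we have $\lrank(u) = |\{i \in [1\dd q] : \T[\ARRzlex[i] \dd n] \prec \Pat'\}| = \bpre$ and $\rrank(u) = \epre$, where $(\bpre,\epre)$ is the range of \cref{lm:st-periodic-nav-aux}. (I would also note that $\mathsf{P} = \{j \in \R^{-}_{s,H} : \Lexp(j) = k\} \neq \emptyset$ here, since $v$ is a genuine node of $\ST$ with $Q = \str(v)$ and $\Occ(Q,\T) \neq \emptyset$, and every $j \in \Occ(Q,\T)$ lies in $\mathsf{P}$ by \cref{lm:pm-lce} and \cref{lm:pm-lce-3}\eqref{lm:pm-lce-3-it-2}; thus the nondegenerate branch of $\pseudoInvTZ{\ell}{u}$ is the one that fires, and $b_{\mathsf P}, e_{\mathsf P}, b_H, e_H$ are all well-defined, exactly matching the hypotheses of the auxiliary lemma.)

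Then I would substitute. The definition of $\pseudoInvTZ{\ell}{u}$ gives $(e_{\mathsf P} - c_1', e_{\mathsf P} - c_2')$ with $c_i' = \rcount{\ARRnontail}{\ell}{e_H} - \rcount{\ARRnontail}{\ell}{z_i}$, where $z_1 = \lrank(u)$ and $z_2 = \rrank(u)$. Since we have just shown $z_1 = \bpre$ and $z_2 = \epre$, these are exactly the constants $c_1, c_2$ of \cref{lm:st-periodic-nav-aux}. Therefore
\[
  \pseudoInvTZ{\ell}{u} = (e_{\mathsf P} - c_1, e_{\mathsf P} - c_2)
    = (\LB(Q,\T),\UB(Q,\T)) = \repr(v),
\]
where the middle equality is the conclusion of \cref{lm:st-periodic-nav-aux} applied to $\Pat = Q = \str(v)$, and the last equality uses that $v$ is an explicit node of $\ST$.

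I expect the main obstacle to be the bookkeeping that confirms the identifications $\lrank(u) = \bpre$, $\rrank(u) = \epre$, $\ell = \rendfull{v} - 1$, and $\Pat' = \str(u)$ line up precisely with the objects of \cref{lm:st-periodic-nav-aux} — in particular, checking that the $H$ attached to $u$ (the unique $H \in \Lroots$ with $\Pow(H)$ a prefix of $\str(u)$) equals $\Lroot(v)$, which follows from \cref{lm:st-periodic-map} and the prefix-freeness of $\{\Pow(H)\}_{H \in \Lroots}$, and that the same $H$ is the one used in both definitions so that $b_H, e_H$ agree. Everything else is a direct citation of \cref{lm:st-periodic-map}, \cref{lm:st-periodic-nav-aux}, and the basic properties of $\TZ$, $\ARRzlex$, and $\pseudoInvTZ{\cdot}{\cdot}$; no new combinatorics is needed.
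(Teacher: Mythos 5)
Your proposal is correct and follows essentially the same route as the paper's proof: establish $\mathsf{P}\neq\emptyset$ via \cref{lm:pm-lce} and \cref{lm:pm-lce-3}\eqref{lm:pm-lce-3-it-2} so that the nondegenerate branch of $\pseudoInvTZ{\ell}{u}$ applies, identify $\str(u)=\Pow(H)\cdot\str(v)(\ell\dd|\str(v)|]$ with the pattern $\Pat'$ and $\lrank(u),\rrank(u)$ with the corresponding rank range in $\ARRzlex$, and conclude by \cref{lm:st-periodic-nav-aux}. The bookkeeping points you flag (matching $H$, $\ell$, and the $(\bpre,\epre)$ range) are exactly the ones the paper verifies, so no gap remains.
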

\begin{proof}
  Denote $H = \Lroot(v)$, $\Pat = \str(v)$, and $(b, e) =
  \pseudoInvTZ{\ell}{u}$. Let $s = \ell \bmod |H|$, $k = \lfloor
  \tfrac{\ell}{|H|} \rfloor$, and $\mathsf{P} = \{j \in \R^{-}_{s,H} :
  \Lexp(j) = k\}$. Note that we then have $\Lhead(\Pat) =
  (\rendfull{\Pat} - 1) \bmod |H| = s$ and $\Lexp(\Pat) = \lfloor
  \tfrac{\rendfull{\Pat} - 1}{|H|} \rfloor = k$.  Observe that this
  implies $\mathsf{P} \neq \emptyset$. To see this, consider any
  $j \in \Occ(\str(v), \T) = \Occ(\Pat, \T)$. By \cref{lm:pm-lce},
  it follows that $j \in \R$, $\Lroot(j) = \Lroot(\Pat) = H$,
  and $\Lhead(j) = \Lhead(\Pat) = s$, i.e., $j \in
  \R_{s,H}$. Furthermore, by $\rend{\Pat} \leq |\Pat|$ and
  $\type(\Pat) = -1$ we obtain from
  \cref{lm:pm-lce-3}\eqref{lm:pm-lce-3-it-2} that $\Lexp(j) =
  \Lexp(\Pat) = k$ and $\type(j) = \type(\Pat) = -1$. Thus, $j \in
  \mathsf{P}$ and consequently $\mathsf{P} \neq \emptyset$. By
  definition of $\pseudoInvTZ{\ell}{u}$, we thus obtain that $(b, e) =
  (e_{\mathsf{P}} - c_1, e_{\mathsf{P}} - c_2)$, where
  $b_{\mathsf{P}}, e_{\mathsf{P}} \in [0 \dd n]$ are such that
  $\{\SA[i]\}_{i \in (b_{\mathsf{P}} \dd e_{\mathsf{P}}]} =
  \mathsf{P}$, $b_H, e_H \in [0 \dd q]$ are such that $\{\rlexm_i\}_{i
  \in (b_H \dd e_H]} = \R'^{-}_{H}$, $z_1 = \lrank(u)$, $z_2 =
  \rrank(u)$, and
  \begin{align*}
    c_1 &= \rcount{\ARRnontail}{\ell}{e_H} -
           \rcount{\ARRnontail}{\ell}{z_1},\\
    c_2 &= \rcount{\ARRnontail}{\ell}{e_H} -
           \rcount{\ARRnontail}{\ell}{z_2}.
  \end{align*}

  By definition of $\mapToTZ(v)$, we have $\str(u) = \Pow(H) \cdot
  \Pat[\rendfull{\Pat} \dd |\Pat|]$. Thus, denoting $\Pat' = \Pow(H)
  \cdot \Pat[\rendfull{\Pat} \dd |\Pat|]$, by definition of $\TZ$, we
  have $\lrank(u) = |\{i \in [1 \dd q] : \T[\ARRzlex[i] \dd n] \prec
  \Pat'\}|$ and $(\lrank(u) \dd \rrank(u)] = \{i \in [1 \dd q] :
  \Pat'\text{ is a prefix of }\T[\ARRzlex[i] \dd n]\}$.  By
  \cref{lm:st-periodic-nav-aux}, this implies that $(\LB(\Pat, \T),
  \UB(\Pat, \T)) = (e_{\mathsf{P}} - (\rcount{\ARRnontail}{\ell}{e_H}
  - \rcount{\ARRnontail}{\ell}{\lrank(u)}), e_{\mathsf{P}} -
  (\rcount{\ARRnontail}{\ell}{e_H} -
  \rcount{\ARRnontail}{\ell}{\rrank(u)})) = (e_{\mathsf{P}} - c_1,
  e_{\mathsf{P}} - c_2) = (b, e)$. This immediately implies $\repr(v)
  = \pseudoInvTZ{\ell}{u}$.
\end{proof}

\begin{proposition}\label{pr:st-periodic-invmap}
  Let $H \in \Lroots$ and let $u$ be a node of $\TZ$ such that
  $\Pow(H)$ is a prefix of $\str(u)$.  Given the data structure from
  \cref{sec:st-periodic-ds}, a pointer to $u$, and integers $\Int(H)$
  and $\ell \geq 0$, we can in $\bigO(\log \log n)$ time compute the
  pair $\pseudoInvTZ{\ell}{u}$.
\end{proposition}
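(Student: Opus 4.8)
The plan is to compute $\pseudoInvTZ{\ell}{u}$ by directly evaluating its definition, showing that every ingredient is either available in $\bigO(1)$ time or reducible to a constant number of $\bigO(\log\log n)$-time queries against the components listed in \cref{sec:st-periodic-ds}. There is no new combinatorial content here: the nontrivial facts needed---that $\R_{s,H}$ occupies a contiguous block of $\SA$ ordered by $\Lexp$ (\cref{lm:lce}), that $j\mapsto\rendfull{j}$ is injective on $\R'$ (\cref{lm:efull}), and the structure of $\BVexp$, $\ARRnontail$, $\LTruns$---have all been established earlier, and the well-definedness of $b_H,e_H$ is guaranteed by the hypothesis that $\Pow(H)$ is a prefix of $\str(u)$. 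So the proof will mostly be a bookkeeping argument identifying which stored table or augmented structure supplies each quantity.

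First I would decode $H$ (and hence $|H|$) from $\Int(H)$ in $\bigO(1)$ time and set $s:=\ell\bmod|H|$ and $k:=\lfloor\ell/|H|\rfloor$; these are $\bigO(1)$ arithmetic operations and $s\in[0\dd|H|)$. Using $\LTpref$ on the pair $(H,s)$ I obtain the packed encoding of $X:=\Pref_{3\tau-1}(s,H)$, the length-$(3\tau-1)$ prefix of every suffix starting in $\R_{s,H}$, and from it $\Int(X)$ in $\bigO(1)$ time. A lookup $\LTrange[\Int(X)]$ returns $(b_X,e_X)$, the $\SA$-block of $\R_{s,H}$, and $\LTminexp[\Int(X)]$ returns $k_{\min}$. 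Rank queries on $\BVexp$ give $d:=\rank{\BVexp}{1}{e_X}-\rank{\BVexp}{1}{b_X}$, and exactly as in the proof of \cref{pr:isa-delta-a} the set $\mathsf{P}=\{j\in\R^{-}_{s,H}:\Lexp(j)=k\}$ is nonempty iff $d>0$ and $k_{\min}\le k\le k_{\min}+d-1$. If this test fails I return $(0,0)$; otherwise one select query on $\BVexp$ yields $e_{\mathsf{P}}=\select{\BVexp}{1}{\rank{\BVexp}{1}{b_X}+(k-k_{\min})+1}$ (the end of the $\SA$-subblock of $\R^{-}_{s,H}$ with $\Lexp\le k$, which by definition of $\mathsf P$ is its last position).

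Next, a single lookup $\LTruns[(H,\emptystring)]$ returns the pair $(b_H,e_H)$ with $\{\rlexm_i\}_{i\in(b_H\dd e_H]}=\R'^{-}_{H}$, using that $\emptystring$ is a prefix of every string, so the ``$\T[\rendfull{k}\dd n]\prec\emptystring$'' alternative in the definition of $\LTruns$ contributes nothing. Since $\TZ$ is stored as in \cref{pr:compact-trie}, the pair $(z_1,z_2)=(\lrank(u),\rrank(u))$ is read off in $\bigO(1)$ time. Then $c_1$ and $c_2$ are obtained from the three range-counting queries $\rcount{\ARRnontail}{\ell}{e_H}$, $\rcount{\ARRnontail}{\ell}{z_1}$, $\rcount{\ARRnontail}{\ell}{z_2}$ on the array $\ARRnontail$ augmented via \cref{pr:range-queries}, each costing $\bigO(\log\log n)$, after which I return $(e_{\mathsf{P}}-c_1,e_{\mathsf{P}}-c_2)$. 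Correctness is immediate, since the returned value coincides with $\pseudoInvTZ{\ell}{u}$ by its definition, and the running time is dominated by the $\bigO(\log\log n)$ range-counting queries (the symmetric case $\type(\cdot)=+1$ is handled identically by the second part of the structure).

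The only point requiring care---rather than a genuine obstacle---is the emptiness test for $\mathsf{P}$: one must feed $\LTrange$, $\LTminexp$, and the rank/select queries on $\BVexp$ precisely the $\SA$-range endpoints $b_X,e_X$ determined by $X=\Pref_{3\tau-1}(s,H)$, and invoke the same reasoning as in \cref{pr:isa-delta-a} to justify that the set of attainable $\Lexp$ values on $\R^{-}_{s,H}$ is exactly $[k_{\min}\dd k_{\min}+d-1]$. Everything else is a mechanical substitution into the definition of $\pseudoInvTZ{\ell}{u}$.
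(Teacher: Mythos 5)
Your proposal is correct and follows essentially the same route as the paper's proof: compute $s,k$ from $\ell$ and $|H|$, use $\LTpref$/$\LTrange$/$\LTminexp$ with rank/select on $\BVexp$ (as in \cref{pr:isa-delta-a}) to test $\mathsf{P}\neq\emptyset$ and obtain $e_{\mathsf{P}}$, get $e_H$ from $\LTruns$, read $(\lrank(u),\rrank(u))$ from the \cref{pr:compact-trie} representation of $\TZ$, and finish with $\bigO(\log\log n)$-time range-counting queries on $\ARRnontail$. The only differences are cosmetic: you fold the $\R_{s,H}=\emptyset$ case into the test $d=0$ and use a single select formula covering both $k=k_{\min}$ and $k>k_{\min}$, which matches the paper's two-case formula.
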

\begin{proof}
  Let $p := |H|$. We first compute $s := \ell \bmod p$ and $k =
  \lfloor \frac{\ell}{p} \rfloor$. Next, using the lookup tables
  $\LTpref$ and $\LTrange$, we compute in $\bigO(1)$ time the pair
  $(b_{s,H}, e_{s,H}) = (\LB(X, \T),\allowbreak \UB(X, \T))$, where $X
  = \Pref_{3\tau - 1}(s, H)$. By \cref{lm:pm-lce}, we then have that
  $b_{s,H} = e_{s,H}$ holds if and only if $\R_{s,H} = \emptyset$, and
  if $b_{s,H} \neq e_{s,H}$ then $\{\SA[i] : i \in (b_{s,H} \dd
  e_{s,H}]\} = \R_{s,H}$.  If $b_{s,H} = e_{s,H}$, we return
  $\pseudoInvTZ{\ell}{u} = (0, 0)$. Let us now assume $b_{s,H} \neq
  e_{s,H}$.

  Next, using the data structure from \cref{sec:sa-periodic-ds}, as
  explained in the proof of \cref{pr:isa-delta-a}, in $\bigO(1)$ time
  we compute the pair $(b_{\mathsf{P}}, e_{\mathsf{P}})$ satisfying
  $\{\SA[i]\}_{i \in (b_{\mathsf{P}} \dd e_{\mathsf{P}}]} =
  \mathsf{P}$, where $\mathsf{P} = \{j \in \R^{-}_{s,H} : \Lexp(j) =
  k\}$.  More precisely, first, in $\bigO(1)$ time we compute $d =
  \rank{\BVexp}{1}{e_{s,H}} - \rank{\BVexp}{1}{b_{s,H}}$. If $d = 0$,
  then $\R^{-}_{s,H} = \emptyset$, and hence we return
  $\pseudoInvTZ{\ell}{u} = (0, 0)$. Otherwise, in $\bigO(1)$ time we
  retrieve $k_{\min} = \LTminexp[\Int(X)]$. Then, letting $k_{\max} =
  k_{\min} + d - 1$, we have $[k_{\min} \dd k_{\max}] = \{\Lexp(j) : j
  \in \R^{-}_{s,H}\}$.  If $k \not\in [k_{\min} \dd k_{\max}]$, then
  $\mathsf{P} = \emptyset$, and thus we return $\pseudoInvTZ{\ell}{u}
  = (0, 0)$. Otherwise, we have two cases. Let $p =
  \rank{\BVexp}{1}{b_{s,H}}$. If $k = k_{\min}$, then in $\bigO(1)$
  time we compute $(b_{\mathsf{P}}, e_{\mathsf{P}}) = (b_{s,H},
  \select{\BVexp}{1}{p+1})$.  If $k > k_{\min}$, in $\bigO(1)$ time we
  compute $(b_{\mathsf{P}}, e_{\mathsf{P}}) =
  (\select{\BVexp}{1}{p+k-k_{\min}},
  \select{\BVexp}{1}{p+k+1-k_{\min}})$.

  For the final step, we first in $\bigO(1)$ time compute $e_H =
  \sum_{H' \preceq H} |\R'^{-}_{H'}|$ using the lookup table $\LTruns$
  stored as part of the structure from \cref{sec:sa-periodic-ds}.
  Then, it holds that there exists $b_H < e_H$ such that
  $\{\rlexm_i\}_{i \in (b_H \dd e_H]} = \R'^{-}_{H}$. Then, in
  $\bigO(1)$ time we obtain $z_1 = \lrank(u)$ and $z_2 = \rrank(u)$
  (\cref{pr:compact-trie}). Finally, in $\bigO(\log \log n)$ time we
  compute $c_1 = \rcount{\ARRnontail}{\ell}{e_H} -
  \rcount{\ARRnontail}{\ell}{z_1}$ and $c_2 =
  \rcount{\ARRnontail}{\ell}{e_H} - \rcount{\ARRnontail}{\ell}{z_2}$
  and return $\pseudoInvTZ{\ell}{u} = (e_{\mathsf{P}} - c_1,
  e_{\mathsf{P}} - c_2)$.  The range counting queries are implemented
  using the structure from \cref{pr:range-queries} for the array $A$,
  which is stored as part of the structure from
  \cref{sec:sa-periodic-ds}.
\end{proof}

\bfparagraph{Handling Nodes Satisfying $\rend{v} > |\str(v)|$}

Next, we present a combinatorial result describing how to compute the
value $\rend{v}$, and to check if it holds $\rend{v} > |\str(v)|$. We
then show how to compute $(\LB(\Pat, \T),\allowbreak \UB(\Pat, \T))$
and $(\LB(\Pat c, \T), \UB(\Pat c, \T))$
for any periodic pattern $\Pat \in \Alphabet^{+}$ satisfying
$\rend{\Pat} > |\Pat|$. We will use it to efficiently perform queries
on periodic nodes $v$ of $\ST$ satisfying $\rend{v} > |\str(v)|$.

\begin{lemma}\label{lm:st-periodic-rend}
  Let $v$ be an explicit periodic node of $\ST$. Let $i_1 = \lrank(v)
  + 1$ and $i_2 = \rrank(v)$.
  \begin{enumerate}
  \item\label{lm:st-periodic-rend-it-1} It holds $\SA[i_1], \SA[i_2]
    \in \R$ and $\rend{v} = 1 + \min(\rend{\SA[i_1]} - \SA[i_1],
    \rend{\SA[i_2]} - \SA[i_2])$.
  \item\label{lm:st-periodic-rend-it-2} $\rend{v} \leq |\str(v)|$
    holds if and only if $\T[\SA[i_1] + \rend{v} - 1] = \T[\SA[i_2] +
    \rend{v} - 1]$.
  \end{enumerate}
\end{lemma}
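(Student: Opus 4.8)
The plan is to set $m := |\str(v)|$, $p := |\Lroot(v)|$, $s := \Lhead(v)$, $H := \Lroot(v)$, and $r := \rend{v} - 1$; by the definition of $\pend{\cdot}$ (recall $\rend{v} = \pend{\str(v)}$) we have $r = p + \lcp(\str(v), \str(v)[1+p\dd m]) \in [p\dd m]$. Hence exactly one of the following holds: $r < m$, i.e.\ $\rend{v} \le m = |\str(v)|$; or $r = m$, i.e.\ $\rend{v} = m+1 > |\str(v)|$. This dichotomy will organize both parts. As a preliminary observation, since $(\lrank(v), \rrank(v)) = (\LB(\str(v), \T), \UB(\str(v), \T))$, both $\SA[i_1]$ and $\SA[i_2]$ belong to $\Occ(\str(v), \T)$, so $\str(v)$ — a periodic pattern of length at least $3\tau - 1$ — is a prefix of the corresponding suffixes; by \cref{lm:pm-lce} this already gives $\SA[i_1], \SA[i_2] \in \R_{s,H} \subseteq \R$, and in fact every $j \in \Occ(\str(v), \T)$ lies in $\R_{s,H}$, a fact I will use throughout.

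In the case $\rend{v} \le m$: here \cref{lm:pm-lce-3}\eqref{lm:pm-lce-3-it-2}, applied to the periodic pattern $\str(v)$, states directly that every $j \in \Occ(\str(v), \T)$ satisfies $\rend{j} - j = \rend{\str(v)} - 1 = r$. In particular $\rend{\SA[i_1]} - \SA[i_1] = \rend{\SA[i_2]} - \SA[i_2] = r$, so $1 + \min(\rend{\SA[i_1]} - \SA[i_1], \rend{\SA[i_2]} - \SA[i_2]) = r + 1 = \rend{v}$, which is part~\eqref{lm:st-periodic-rend-it-1}. For part~\eqref{lm:st-periodic-rend-it-2}, note $r + 1 = \rend{v} \le m$, so for every $j \in \Occ(\str(v), \T)$ we have $\T[j + r] = \str(v)[r + 1] = \str(v)[\rend{v}]$; hence $\T[\SA[i_1] + \rend{v} - 1] = \T[\SA[i_2] + \rend{v} - 1] = \str(v)[\rend{v}]$, and the equivalence holds (the ``equal'' side matching the ``$\rend{v} \le |\str(v)|$'' side).

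In the case $\rend{v} = m + 1$: now $\str(v)$ has period $p$ throughout, i.e.\ $\str(v) = \Pref_m(s, H)$, so — using that for $j \in \R_{s,H}$ the fragment $\T[j \dd \rend{j})$ is a prefix of $H'H^{\infty}[1\dd)$ (as in the proof of \cref{lm:lce}) — the $\SA$-block $\SA(\lrank(v) \dd \rrank(v)]$ of $\str(v)$ consists exactly of those $j \in \R_{s,H}$ with $\rend{j} - j \ge m$. By \cref{lm:lce}, this block is arranged in $\SA$ as the type-$(-1)$ positions ordered by increasing $\rend{\cdot} - \cdot$, followed by the type-$(+1)$ positions ordered by decreasing $\rend{\cdot} - \cdot$; a short case analysis on whether each type-part is empty then shows that $\min(\rend{\SA[i_1]} - \SA[i_1], \rend{\SA[i_2]} - \SA[i_2])$ equals the minimum of $\rend{j} - j$ over the whole block. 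This minimum is clearly $\ge m$; to see it equals $m$, I use that $v$ is explicit: since $\T[n]$ occurs uniquely in $\T$, $\str(v)$ is not a suffix of $\T$ in this case, so $v$ is internal and thus has at least two children, i.e.\ $\{\T[j+m] : j \in \Occ(\str(v), \T)\}$ has size $\ge 2$; at least one of these characters differs from the period-continuation character $c_1 := \str(v)[m + 1 - p]$, and for the corresponding $j$ we have $\T[j+m] \ne \T[j + m - p] = c_1$ while $\T[j \dd j + m) = \str(v)$ has period $p$, so \cref{lm:run-end}\eqref{lm:run-end-it-2} gives $\rend{j} - j = m$. This proves part~\eqref{lm:st-periodic-rend-it-1}. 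For part~\eqref{lm:st-periodic-rend-it-2}, write $j_1 := \SA[i_1]$, $j_2 := \SA[i_2]$; since $v$ is internal, neither occurrence is a suffix, so $\T[j_1 + m]$ and $\T[j_2 + m]$ are both defined, and as $j_1$ (resp.\ $j_2$) is the lexicographically smallest (resp.\ largest) suffix prefixed by $\str(v)$ we get $\T[j_1 + m] \preceq \T[j + m] \preceq \T[j_2 + m]$ for all $j$ in the block. If $\T[j_1+m] = \T[j_2+m]$, then every occurrence of $\str(v)$ would be followed by the same character, forcing $v$ to have a single child — a contradiction; hence $\T[\SA[i_1] + \rend{v} - 1] \ne \T[\SA[i_2] + \rend{v} - 1]$, matching the ``$\rend{v} > |\str(v)|$'' side.

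The step I expect to be the main obstacle is part~\eqref{lm:st-periodic-rend-it-1} in the case $\rend{v} = m + 1$: one must (i) correctly deduce from \cref{lm:lce} that the minimum of $\rend{j} - j$ over the $\SA$-block is attained at one of the two endpoint occurrences $\SA[i_1], \SA[i_2]$ — threading the ordering lemma through all sub-configurations (both type-parts present, only one present, and ties in the value $\rend{\cdot} - \cdot$) — and (ii) pin this minimum to the exact value $m$ via the branching argument, which first requires ruling out the degenerate possibility that $\str(v)$ is a suffix of $\T$ (which would in fact force $\rend{v} \le m$, and would also leave $\T[\SA[i_k] + \rend{v} - 1]$ undefined). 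Everything else reduces to short index computations and direct appeals to \cref{lm:run-end,lm:lce,lm:pm-lce,lm:pm-lce-3}.
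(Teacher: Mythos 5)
Your proof is correct, but it is organized quite differently from the paper's. The paper treats both parts uniformly, without your dichotomy on $\rend{v}$: its key tool is that $i_1$ and $i_2$ index the first and last leaves below $v$, so $\LCE(\SA[i_1],\SA[i_2]) = |\str(v)|$. From this, when $e-b>1$, the two characters at offset $|\str(v)|$ differ while the characters at offset $|\str(v)|-p$ agree, so \emph{one of the two endpoints themselves} breaks the period at exactly the right place, giving a $t\in\{1,2\}$ with $\rend{\SA[i_t]}-\SA[i_t]=\rend{v}-1$; combined with the lower bound $\rend{\SA[i]}-\SA[i]\ge \rend{v}-1$ for every $i\in(b\dd e]$, this yields part~1 (the $e-b=1$ sub-case is a direct computation), and part~2 is immediate from the same LCE identity. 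You instead split on $\rend{v}\le|\str(v)|$ versus $\rend{v}=|\str(v)|+1$: the first case you dispatch wholesale via \cref{lm:pm-lce-3}\eqref{lm:pm-lce-3-it-2} (which gives the stronger fact that \emph{all} occurrences share the value $\rend{v}-1$), and in the second case you find a period-breaking witness somewhere in the block by a branching argument at the explicit node (after ruling out, via the sentinel, that $\str(v)$ is a suffix of $\T$), and then separately invoke the \cref{lm:lce} ordering of the block to argue that the two endpoints attain the block minimum of $\rend{\cdot}-\cdot$; your part~2 in that case goes through a sandwich-plus-single-child contradiction rather than the direct $\LCE(\SA[i_1],\SA[i_2])=|\str(v)|$ observation. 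Both routes are sound (I checked the endpoint-minimality case analysis and the branching argument, including ties in $\rend{\cdot}-\cdot$ and well-definedness of the compared characters); the paper's buys brevity and needs only \cref{lm:run-end,lm:pm-lce} plus the LCA/LCE observation, while yours leans on heavier machinery but makes the first case trivial and surfaces the structural fact that the endpoints minimize $\rend{\cdot}-\cdot$ over the block.
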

\begin{proof}
  Denote $\ell = \sdepth(v)$, $b = \lrank(v)$, and $e = \rrank(v)$.

  1. Let $s = \Lhead(v)$, $H = \Lroot(v)$, $p = |H|$, and $Q =
  \str(v)$.  By definition, we have $b < e$ and $\{\SA[i]\}_{i \in (b
  \dd e]} = \Occ(Q, \T)$. On the other hand, by $|Q| \geq 3\tau - 1$
  and \cref{lm:pm-lce}, for every $j \in \Occ(Q, \T)$ it holds $j \in
  \R_{s,H}$.  In particular, we thus obtain $\SA[i_1], \SA[i_2] \in
  \R$.

  Next, we prove the following two facts.
  \begin{itemize}
  \item First, we show that there exists $t \in \{1, 2\}$ satisfying
    $\rend{v} - 1 = \rend{\SA[i_t]} - \SA[i_t]$.  By definition, it
    holds $\ell = \LCE(\SA[i_1], \SA[i_2])$ and $\T[\SA[i_1] \dd
    \SA[i_1] + \ell) = \T[\SA[i_2] \dd \SA[i_2] + \ell)$.  If $e - b =
    1$, then any $t \in \{1, 2\}$ satisfies the claim, since then
    $\ell = n - \SA[i_t] + 1$, and thus it follows from $\SA[i_t] \in
    \Occ(Q, \T)$ that
    \begin{align*}
      \rend{v} - 1
        &= p + \lcp(Q(0 \dd \ell-p], Q(p \dd \ell])\\
        &= p + \lcp(\T[\SA[i_t] \dd \SA[i_t] + \ell - p),
           \T[\SA[i_t] + p \dd \SA[i_t] + \ell))\\
        &= p + \lcp(\T[\SA[i_t] \dd n - p],
           \T[\SA[i_t] + p \dd n])\\
        &= p + \LCE(\SA[i_t], \SA[i_t] + p)\\
        &= \rend{\SA[i_t]} - \SA[i_t].
    \end{align*}
    Assume now $e - b > 1$. Then, $\T[\SA[i_1] + \ell] \neq
    \T[\SA[i_2] + \ell]$.~\footnote{To see that symbols $\T[\SA[i_1] +
    \ell]$ and $\T[\SA[i_2] + \ell]$ are well-defined, observe that by
    $\ell > 0$ and $b + 1 < e$, it follows that $\SA[i_1] + \ell - 1
    \neq \SA[i_2] + \ell - 1$. On the other hand, we have $\T[\SA[i_1]
    + \ell - 1] = \T[\SA[i_2] + \ell - 1]$. Thus, by the uniqueness of
    $\T[n]$ we must have $\SA[i_1] + \ell - 1 < n$ and $\SA[i_2] +
    \ell - 1 < n$} Thus, by $\T[\SA[i_1] + \ell - p] = \T[\SA[i_2] +
    \ell - p]$ there exists $t \in \{1, 2\}$ such that $\T[\SA[i_t] +
    \ell] \neq \T[\SA[i_t] + \ell - p]$.  For such $t$, we have
    $\LCE(\SA[i_t], p + \SA[i_t]) \leq \ell - p$ and hence
    $\rend{\SA[i_t]} - \SA[i_t] = p + \LCE(\SA[i_t], \SA[i_t] + p)
    \leq \ell$.  We therefore obtain $\rend{\SA[i_t]} - \SA[i_t] = p +
    \lcp(\T[\SA[i_t] \dd \SA[i_t] + \ell - p), \T[\SA[i_t] + p \dd
    \SA[i_t] + \ell))$.  On the other hand, by $Q = \T[\SA[i_t] \dd
    \SA[i_t] + \ell)$ we have $\rend{v} - 1 = p + \lcp(Q(0 \dd \ell -
    p], Q(p \dd \ell]) = p + \lcp(\T[\SA[i_t] \dd \SA[i_t] + \ell -
    p), \T[\SA[i_t] + p \dd \SA[i_t] + \ell))$.  Therefore, $\rend{v}
    - 1 = \rend{\SA[i_t]} - \SA[i_t]$.
  \item Second, we show that for every $i \in (b \dd e]$, it holds
    $\rend{v} - 1 \leq \rend{\SA[i]} - \SA[i]$. For this, recall that
    $\rend{\SA[i]} - \SA[i] = p + \LCE(\SA[i], \SA[i]+p)$.  Therefore,
    by $\SA[i] \in \Occ(Q, \T)$, we obtain $\rend{v} - 1 = \rend{Q} -
    1 = p + \lcp(Q(0 \dd \ell-p], Q(p \dd \ell]) = p + \lcp(\T[\SA[i]
    \dd \SA[i]+\ell-p), \T[\SA[i]+p \dd \SA[i]+\ell)) \leq p +
    \LCE(\SA[i],\SA[i]+p) = \rend{\SA[i]} - \SA[i]$.
  \end{itemize}
  By the above two facts, we obtain $\min(\rend{\SA[i_1]} - \SA[i_1],
  \rend{\SA[i_2]} - \SA[i_2]) = \min(\rend{\SA[i_t]} - \SA[i_t],
  \rend{\SA[i_{3-t}]} - \SA[i_{3-t}]) = \rend{v} - 1$.

  2. We start by showing that $\SA[i_1] + \rend{v} - 1, \SA[i_2] +
  \rend{v} - 1 \leq n$.  Observe that for every $j \in \R$, by the
  uniqueness of $\T[n]$, it holds $\rend{j} \leq n$.  Consider any $i
  \in (b \dd e]$. Above, we proved $\rend{v} - 1 \leq \rend{\SA[i]} -
  \SA[i]$. Thus, we obtain $\SA[i] + \rend{v} - 1 \leq \rend{\SA[i]}
  \leq n$.  In particular, $\SA[i_1] + \rend{v} - 1, \SA[i_2] +
  \rend{v} - 1 \leq n$.  We now prove the equivalence. Recall, that
  $|\str(v)| = \ell = \LCE(\SA[i_1], \SA[i_2])$ holds by
  definition. Let us first assume $\rend{v} \leq \ell$. By the
  assumption $\T[\SA[i_1] \dd \SA[i_1] + \ell) = \T[\SA[i_2] \dd
  \SA[i_2] + \ell)$, this immediately implies $\T[\SA[i_1] + \rend{v}
  - 1] = \T[\SA[i_2] + \rend{v} - 1]$.  To show the opposite
  implication, assume by contraposition that $\rend{v} > \ell$.  Since
  by definition we have $\rend{v} \leq |\str(v)| + 1$, we must have
  $\rend{v} = \ell + 1$.  Then, by definition of $\LCE$, we have
  $\T[\SA[i_1] + \rend{v} - 1] = \T[\SA[i_1] + \ell] \neq \T[\SA[i_2]
  + \ell] = \T[\SA[i_2] + \rend{v} - 1]$.
\end{proof}

\begin{proposition}\label{pr:st-periodic-range}
  Let $\Pat \in \Alphabet^{+}$ be a periodic pattern satisfying
  $\rend{\Pat} > |\Pat|$. Given the structure from
  \cref{sec:st-periodic-ds}, and the values $\Lhead(\Pat)$,
  $\Lroot(\Pat)$, and $|\Pat|$, we can in $\bigO(\log \log n)$ time
  compute the pair $(\LB(\Pat, \T), \UB(\Pat, \T))$.
\end{proposition}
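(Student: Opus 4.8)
The plan is to reduce $(\LB(\Pat,\T),\UB(\Pat,\T))$ for such a ``fully periodic'' pattern to a constant number of queries to the periodic-position structure of \cref{sec:sa-periodic-ds}, following the skeleton of \cref{pr:pm-occ,pr:pm-periodic-range} but exploiting two simplifications forced by $\rend{\Pat}>|\Pat|$: first, $\pend{\Pat}=|\Pat|+1>|\Pat|$, so $\type(\Pat)=-1$; second, $\Pat$ is completely determined by the given data, namely $\Pat=\Pref_{m}(s,H)$ with $s:=\Lhead(\Pat)$, $H:=\Lroot(\Pat)$, $m:=|\Pat|$. First I would compute $p=|H|$, $k:=\Lexp(\Pat)=\lfloor(m-s)/p\rfloor$, $\Ltail(\Pat)=(m-s)\bmod p$, and $\pendfull{\Pat}=s+kp+1$; retrieve from $\LTpref$ the string $X:=\Pref_{3\tau-1}(s,H)=\Pat[1\dd 3\tau-1]$; and look up $(b_X,e_X):=(\LB(X,\T),\UB(X,\T))$ in $\LTrange$. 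By \cref{lm:pm-lce} the block $\SA(b_X\dd e_X]$ consists exactly of the positions in $\R_{s,H}$; if $b_X=e_X$ I return $(b_X,b_X)$, since then $\Occ(\Pat,\T)\subseteq\Occ(X,\T)=\emptyset$ and $X$ being a prefix of $\Pat$ gives $\LB(\Pat,\T)=\LB(X,\T)$.

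The combinatorial core is to locate $\Occ(\Pat,\T)$ inside $\SA(b_X\dd e_X]$. Since $\Pat=\Pref_m(s,H)$, a position $j\in\R_{s,H}$ satisfies $j\in\Occ(\Pat,\T)$ iff $\rend{j}-j\ge m$; and by \cref{lm:lce}\eqref{lm:lce-it-1}--\eqref{lm:lce-it-3} the positions of $\R^{-}_{s,H}$ appear in $\SA(b_X\dd e_X]$ first (ordered by $\rend{j}-j$ increasing), followed by those of $\R^{+}_{s,H}$ (ordered by $\rend{j}-j$ decreasing). Hence $\Occ(\Pat,\T)$ is the contiguous middle block, and
\[
  \LB(\Pat,\T)=b_X+|\{j\in\R^{-}_{s,H}:\rend{j}-j<m\}|,\qquad
  \UB(\Pat,\T)=e_X-|\{j\in\R^{+}_{s,H}:\rend{j}-j<m\}|.
\]
Because $\rend{j}-j=s+\Lexp(j)\,p+\Ltail(j)$ for $j\in\R^{-}_{s,H}$, the set $\{j\in\R^{-}_{s,H}:\rend{j}-j<m\}$ is the disjoint union of the positions with $\Lexp(j)<k$ and those with $\Lexp(j)=k$ and $\Ltail(j)<\Ltail(\Pat)$; so its size equals $|\{j\in\R^{-}_{s,H}:\Lexp(j)<k\}|+|\{j\in\R^{-}_{s,H}:\Lexp(j)=k\}|-|\Occsm(\Pat,\T)|$, where $\Occsm(\Pat,\T)=\Occs(\Pat,\T)\cap\R^{-}$ equals $\{j\in\R^{-}_{s,H}:\Lexp(j)=k,\ \rend{j}-j\ge m\}$. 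The $\R^{+}$ count decomposes symmetrically through $\Occsp(\Pat,\T)$.

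Each term is then evaluated within the $\bigO(\log\log n)$ budget. The two ``$\BVexp$-counts'' $|\{j\in\R^{-}_{s,H}:\Lexp(j)<k\}|$ and $|\{j\in\R^{-}_{s,H}:\Lexp(j)=k\}|$ take $\bigO(1)$ time exactly as in \cref{pr:isa-delta-a,pr:pm-occ-a}: the number of distinct $\Lexp$-values on $\R^{-}_{s,H}$ is $d=\rank{\BVexp}{1}{e_X}-\rank{\BVexp}{1}{b_X}$, the smallest is $k_{\min}=\LTminexp[\Int(X)]$, and the $\Lexp=k$ block (when $k\in[k_{\min}\dd k_{\min}+d-1]$) is delimited by $\select{\BVexp}{1}{\cdot}$ applied to the $1$-bit ranks $\rank{\BVexp}{1}{b_X}+(k-k_{\min})$ and one more; the cases $k<k_{\min}$ and $k\ge k_{\min}+d$ are immediate. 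For $|\Occsm(\Pat,\T)|$ I would invoke \cref{lm:occ-s}: its $\tau$-run contributions are in bijection with $\{i:\rlexm_i\in\R'^{-}_H\text{ and }\Pat[\pendfull{\Pat}\dd m]\text{ is a prefix of }\T[\rendfull{\rlexm_i}\dd n]\}$, and the crucial point for staying within $\bigO(\log\log n)$ (rather than $\bigO(\log^{\epsilon} n)$) is that $\Pat[\pendfull{\Pat}\dd m]$ is here the length-$\Ltail(\Pat)$ prefix $H''$ of $H$ — a string in $\Alphabet^{\le\tau}$ — so this index range $(\bpre\dd\epre]$ is read off \emph{directly} from the lookup table $\LTruns$ at the key $(H,H'')$, without any descent in $\TZ$. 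Then $|\Occsm(\Pat,\T)|=\rcount{\ARRnontail}{\pendfull{\Pat}-1}{\epre}-\rcount{\ARRnontail}{\pendfull{\Pat}-1}{\bpre}$ by \cref{pr:range-queries} on $\ARRnontail$ (recall $\ARRnontail[i]=\rendfull{\rlexm_i}-\rlexm_i$). When $\Ltail(\Pat)=0$ one has $\Occsm(\Pat,\T)=\{j\in\R^{-}_{s,H}:\Lexp(j)=k\}$, so this step is skipped. The $\R^{+}$ terms are handled symmetrically with the counterpart components. Summing up yields $(\LB(\Pat,\T),\UB(\Pat,\T))$ in $\bigO(\log\log n)$ total time.

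I expect the main obstacle to be the combinatorial core: verifying that $\Occ(\Pat,\T)$ is exactly the middle contiguous block of $\SA(b_X\dd e_X]=\R_{s,H}$ and that the two boundary formulas are correct — in particular, handling the boundary value $\rend{j}-j=m$ (where $\Pat$ equals the run prefix at $j$, yet $j$ remains an occurrence, using that $\rend{j}\le n$ for every $j\in\R$) and the fact that, unlike when $\rend{\Pat}\le|\Pat|$, $\Occ(\Pat,\T)$ straddles both $\R^{-}$ and $\R^{+}$ — together with checking that $\LTruns(H,H'')$ returns precisely the index interval demanded by \cref{lm:occ-s}. The rest is bookkeeping with tools already in hand.
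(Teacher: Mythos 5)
Your proposal is correct and takes essentially the same route as the paper: it exploits $\type(\Pat)=-1$, recovers $X=\Pref_{3\tau-1}(s,H)$ from $\LTpref$, gets the $\Lexp$-counts in $\bigO(1)$ from $\BVexp$/$\LTminexp$ as in \cref{pr:isa-delta-a,pr:pm-occ-a}, and—exactly as in the paper's proof—replaces the meta-trie search by the $\bigO(1)$ lookup $\LTruns(H,H'')$ (valid since $\Pat[\pendfull{\Pat}\dd m]=H''$ has length $<|H|\le\tau$) before a single $\bigO(\log\log n)$ range-counting query on $\ARRnontail$, with symmetric components for $\R^{+}$. Your direct two-sided formula for $(\LB,\UB)$ via the contiguous middle block of $\SA(b_X\dd e_X]$ is just an algebraic rearrangement of the paper's bookkeeping through $\delta(\Pat,\T)=\deltaa(\Pat,\T)-\deltas(\Pat,\T)$ and $|\Occ(\Pat,\T)|=|\Occam|+|\Occap|+|\Occsm|+|\Occsp|$, so the two computations coincide term by term.
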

\begin{proof}

  Denote $s = \Lhead(\Pat)$, $H = \Lroot(\Pat)$, and $m =
  |\Pat|$. First, in $\bigO(1)$ time we compute $k := \Lexp(\Pat) =
  \lfloor \tfrac{m - s}{|H|} \rfloor$ and $t := \Ltail(\Pat) = m - s -
  k|H|$. Next, using the lookup table $\LTpref$, in $\bigO(1)$ time
  we compute $X := \Pref_{3\tau - 1}(s, H) = \Pat[1 \dd 3\tau {-} 1]$.

  Next, we compute $|\Occ(\Pat, \T)|$.  Recall that by \cref{lm:occ},
  $|\Occ(\Pat, \T)| = |\Occa(\Pat, \T)| + \allowbreak |\Occs(\Pat,
  \T)| = |\Occam(\Pat, \T)| + |\Occap(\Pat, \T)| + |\Occsm(\Pat, \T)|
  + |\Occsp(\Pat, \T)|$ (see \cref{sec:pm-periodic-pm}).
  \begin{itemize}
  \item To compute $|\Occam(\Pat, \T)|$, we proceed as in the proof of
    \cref{pr:pm-occ-a}, except for one modification: Since we already
    have $\Lhead(\Pat)$, $\Lroot(\Pat)$, and $\Lexp(\Pat)$ (note that
    we do not need $\Ltail(\Pat)$ here since we assumed $\rend{\Pat} =
    |\Pat| + 1$), we can skip the first step which takes $\bigO(1 + m
    / \log_{\sigma} n)$ time. Note that after such modification, we no
    longer need the packed representation of the whole pattern $\Pat$,
    but only $\Pat[1 \dd 3\tau - 1]$, which we computed above. The
    rest of the algorithm in \cref{pr:pm-occ-a} takes $\bigO(1)$ time.
    The structures from \cref{pr:pm-occ-a} that we used (augmented
    bitvector $\BVexp$ and lookup tables $\LTminexp$ and $\LTrange$) are
    components of the structure from \cref{sec:st-periodic-ds}.
  \item To compute $|\Occsm(\Pat, \T)|$, we proceed as in
    \cref{pr:pm-occ-s}, except for two modifications. First, we again
    already have $\Lhead(\Pat)$, $\Lroot(\Pat)$, and $\Lexp(\Pat)$,
    which lets us skip the first step taking $\bigO(1 + m /
    \log_{\sigma} n)$ time.  Second, rather than computing $\bpre$ and
    $\epre$ in $\bigO(m / \log_{\sigma} n + \log \log n)$ time, we use
    the lookup table $\LTruns$ stored in the structure from
    \cref{sec:st-periodic-ds}.  More precisely, $\bpre$ and $\epre$
    are obtained in $\bigO(1)$ time by looking up in $\LTruns$ the
    pair associated with the key $(H,H')$, where $H'$ is a length-$t$
    prefix of $H$ (note that $\Pat[\pendfull{\Pat} \dd m] = H'$).  The
    rest of the algorithm in \cref{pr:pm-occ-s} takes $\bigO(\log \log
    n)$ time. Again, the components used in \cref{pr:pm-occ-s} are
    present in structure from \cref{sec:st-periodic-ds}.
  \end{itemize}
  The values $|\Occap(\Pat, \T)|$ and $|\Occsp(\Pat, \T)|$ are
  computed analogously (see the proof of \cref{pr:pm-occ}) using the
  symmetric components of the structure from
  \cref{sec:st-periodic-ds}. We can thus compute $|\Occ(\Pat, \T)|$ in
  $\bigO(\log \log n)$ time.

  The next step of the algorithm is to compute $\delta(\Pat, \T)$
  (\cref{sec:pm-periodic-pm}).  Observe (see
  \cref{sec:pm-periodic-prelim}) that $\rend{\Pat} > |\Pat|$ implies
  $\type(\Pat) = -1$. Recall that for such $\Pat$, by
  \cref{lm:pm-delta-2}, $\delta(\Pat, \T) = \deltaa(\Pat, \T) -
  \deltas(\Pat, \T)$.
  \begin{itemize}
  \item To compute $\deltaa(\Pat, \T)$, we proceed as in the proof of
    \cref{pr:pm-delta-a}, employing the same modification as when
    computing $|\Occam(\Pat, \T)|$ above. Thus, the computation takes
    $\bigO(1)$ time.  \cref{pr:pm-delta-a} uses the structure from
    \cref{pr:pm-occ-a} and, as above, the used components are already
    present in the structure from \cref{sec:st-periodic-ds}.
  \item To compute $\deltas(\Pat, \T)$, we observe that for a periodic
    pattern $\Pat$ satisfying $\pend{\Pat} > |\Pat|$, it holds by
    \cref{lm:pm-lce}\eqref{lm:pm-lce-it-2} that $\Poss(\Pat, \T) =
    \Occsm(\Pat, \T)$.  Consequently, we can compute $\deltas(\Pat,
    \T) = |\Occsm(\Pat, \T)|$ as above in $\bigO(\log \log n)$ time.
  \end{itemize}
  Combining the above two steps, the computation of $\delta(\Pat, \T)$
  takes $\bigO(\log \log n)$ time.

  We use the above values to obtain $(\LB(\Pat, \T), \UB(\Pat, \T))$
  as follows. By \cref{lm:pm-delta-1}, $\LB(\Pat, \T) = \LB(X, \T) +
  \delta(\Pat, \T)$, where $X = \Pat[1 \dd 3\tau {-} 1]$.  The value
  $\LB(X, \T)$ is obtained in $\bigO(1)$ time using the lookup table
  $\LTrange$.  We thus obtain $\LB(\Pat, \T)$. By definition, we then
  compute $\UB(\Pat, \T) = \UB(\Pat, \T) + |\Occ(\Pat, \T)|$.  In
  total, the query takes $\bigO(\log \log n)$ time.
\end{proof}

\begin{remark}\label{rm:st-periodic-range}
  Note that the above result holds even if $\Occ(\Pat, \T) =
  \emptyset$. Thus, it is more general than the result needed to
  support efficient processing of periodic nodes $v$ of $\ST$
  satisfying $\rend{v} > |\str(v)|$, since for such nodes we have
  $\Occ(\str(v), \T) \neq \emptyset$.
\end{remark}

\begin{proposition}\label{pr:st-periodic-range-2}
  Let $\Pat \in \Alphabet^{+}$ be a periodic pattern satisfying
  $\rend{\Pat} > |\Pat|$. Given any $c \in \Alphabet$, the structure
  from \cref{sec:st-periodic-ds}, and the values $\Lhead(\Pat)$,
  $\Lroot(\Pat)$, and $|\Pat|$, we can in $\bigO(\log \log n)$ time
  compute the pair $(\LB(\Pat c, \T), \UB(\Pat c, \T))$.
\end{proposition}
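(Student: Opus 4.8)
Write $m=|\Pat|$, $s=\Lhead(\Pat)$, $H=\Lroot(\Pat)$, and $p=|H|$; since $\Pat$ is periodic we have $p\le\lfloor\tfrac13\tau\rfloor$ and $m\ge3\tau-1$, so from the three given values we can reconstruct $\Pat=\Pref_{m}(s,H)$ entirely in $\bigO(1)$ time. Because $(\Pat c)[1\dd 3\tau{-}1]=\Pat[1\dd 3\tau{-}1]$, the pattern $\Pat c$ is again periodic with $\Lroot(\Pat c)=H$, $\Lhead(\Pat c)=s$, and $\pend{\Pat c}=\rend{\Pat c}$. The plan is to first compute in $\bigO(1)$ time the ``periodic continuation'' character $d:=(\Pref_{m+1}(s,H))[m{+}1] = H[((m-s)\bmod p)+1]$ and split into two cases. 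If $c=d$ then $\Pat c$ is a prefix of $H'H^{\infty}$ (where $H'$ is the length-$s$ suffix of $H$), hence $\rend{\Pat c}=|\Pat c|+1>|\Pat c|$, so I would simply invoke \cref{pr:st-periodic-range} on $\Pat c$ with the parameters $\Lhead(\Pat)$, $\Lroot(\Pat)$, $|\Pat|+1$ and return its answer. If $c\ne d$, a direct computation of $\rend{\Pat c}=1+p+\lcp(\Pat c,(\Pat c)[1{+}p\dd m{+}1])$ gives $\lcp=m-p$ and hence $\rend{\Pat c}=m+1=|\Pat c|$; this is the interesting case and is handled below.

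In the case $c\ne d$ we have $\pend{\Pat c}=|\Pat c|$, so $\pend{\Pat c}\le|\Pat c|$ and \cref{lm:occ-a} together with its symmetric counterpart yields $\Occam(\Pat c,\T)=\Occap(\Pat c,\T)=\emptyset$; thus by \cref{lm:occ}, $\Occ(\Pat c,\T)=\Occs(\Pat c,\T)$. Moreover, by \cref{lm:pm-lce-3}\eqref{lm:pm-lce-3-it-2} (applicable since $\rend{\Pat c}\le|\Pat c|$) every $j\in\Occ(\Pat c,\T)$ has $\type(j)=\type(\Pat c)$, where $\type(\Pat c)=+1$ iff $c\succ d$; hence $\Occ(\Pat c,\T)$ equals $\Occsm(\Pat c,\T)$ when $c\prec d$ and $\Occsp(\Pat c,\T)$ when $c\succ d$. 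I will describe the case $\type(\Pat c)=-1$; the other is symmetric using the mirror structure of \cref{sec:st-periodic-ds}. Set $k:=\Lexp(\Pat c)=\lfloor(m-s)/p\rfloor$, $t:=\Ltail(\Pat c)=(m-s)\bmod p$, $\ell:=\pendfull{\Pat c}-1=m-t$, and $X:=\Pref_{3\tau-1}(s,H)=(\Pat c)[1\dd3\tau{-}1]$ (obtained from $\LTpref$). The crucial observation is that the tail string $(\Pat c)[\pendfull{\Pat c}\dd|\Pat c|]=H''c$, where $H''$ is the length-$t$ prefix of $H$, has length $t+1\le p\le\tau$, so it lies in the domain of $\LTruns$; looking up $(\bpre,\epre):=\LTruns(H,H''c)$ gives, in $\bigO(1)$ time, indices with $\{\rlexm_i:i\in(\bpre\dd\epre]\}=\{j\in\R'^{-}_{H}:H''c\text{ is a prefix of }\T[\rendfull{j}\dd n]\}$ and $\{\rlexm_i:i\in(\bpre\dd e_H]\}=\{j\in\R'^{-}_{H}:\T[\rendfull{j}\dd n]\succeq H''c\}$, where $e_H=\sum_{H'\preceq H}|\R'^{-}_{H'}|$ (also from $\LTruns$). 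This replaces the trie query on $\TZ$ used in \cref{pr:pm-delta-s}, exactly as in the proof of \cref{pr:st-periodic-range}. Now I would compute: $\deltaa(\Pat c,\T)$ by the $\BVexp$/$\LTminexp$/$\LTrange$ procedure of \cref{pr:pm-delta-a} (with $X$ and $k$ already in hand, this is $\bigO(1)$ time); $\deltas(\Pat c,\T)=\rcount{\ARRnontail}{\ell}{e_H}-\rcount{\ARRnontail}{\ell}{\bpre}$ by \cref{lm:pm-delta-s}; and $|\Occ(\Pat c,\T)|=|\Occsm(\Pat c,\T)|=\rcount{\ARRnontail}{\ell}{\epre}-\rcount{\ARRnontail}{\ell}{\bpre}$ by \cref{lm:occ-s}. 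Finally, $\LB(\Pat c,\T)=\LB(X,\T)+(\deltaa(\Pat c,\T)-\deltas(\Pat c,\T))$ by \cref{lm:pm-delta-1,lm:pm-delta-2}, with $\LB(X,\T)$ read from $\LTrange$, and $\UB(\Pat c,\T)=\LB(\Pat c,\T)+|\Occ(\Pat c,\T)|$. All range-counting queries use \cref{pr:range-queries} on the array $\ARRnontail$, so the total time is $\bigO(\log\log n)$.

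The main obstacle is not algorithmic but combinatorial: one must verify carefully that $\Lhead,\Lroot$ are unchanged by appending $c$, that $\rend{\Pat c}$ takes only the two values $|\Pat c|$ or $|\Pat c|+1$, that $\type(\Pat c)$ and the type of every occurrence are determined by the comparison $c$ vs.\ $d$, and — most importantly for the time bound — that the relevant prefix of the $\rendfull{}$-suffixes is the short string $H''c$ of length at most $\tau$, so that $\LTruns$ can be used in place of a trie descent into $\TZ$ (which would cost $\bigO(|H''c|\cdot\log\log n)$ if done edge-by-edge). Once these facts are in place, the computation is a routine assembly of \cref{pr:st-periodic-range,pr:pm-delta-a,pr:pm-delta-s,lm:occ,lm:occ-a,lm:occ-s,lm:pm-delta-1,lm:pm-delta-2,lm:pm-lce-3} and their symmetric versions.
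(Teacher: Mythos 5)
Your proposal is correct and takes essentially the same route as the paper's proof: compare $c$ with the periodic-continuation character $d=H[\Ltail(\Pat)+1]$, hand the case $c=d$ to \cref{pr:st-periodic-range}, and otherwise rerun the periodic-pattern machinery of \cref{pr:pm-occ-s,pr:pm-delta-a,pr:pm-delta-s} with every $\Theta(|\Pat|/\log_{\sigma}n)$-time step replaced by an $\bigO(1)$ lookup in $\LTpref$ or $\LTruns$ (the key has length $t+1\le\tau$), leaving only the $\bigO(\log\log n)$-time range counting on $\ARRnontail$. Incidentally, your $\LTruns$ key $H[1\dd t]\cdot c$ is the correct string: the paper works with the same quantity $(\Pat c)[\pendfull{\Pat c}\dd|\Pat c|]$ but misprints it as $H[1\dd t+1]$ (its last character is $c\neq H[t+1]$ in this branch), and your closing formula $\UB(\Pat c,\T)=\LB(\Pat c,\T)+|\Occ(\Pat c,\T)|$ likewise fixes a small typo in the paper's final line.
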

\begin{proof}
  Denote $\Pat' = \Pat c$ and $m = |\Pat| + 1 = |\Pat'|$.  Observe,
  that since $\Pat$ is periodic and it is a prefix of $\Pat'$, by
  \cref{lm:pm-lce-2}, $\Pat'$ is also periodic and it holds
  $\Lhead(\Pat) = \Lhead(\Pat')$ and $\Lroot(\Pat) = \Lroot(\Pat')$.
  Let us denote $s = \Lhead(\Pat) = \Lhead(\Pat')$ and $H =
  \Lroot(\Pat) = \Lroot(\Pat')$. By the assumption, we have
  $\rend{\Pat} = m$.  First, in $\bigO(1)$ time we compute $t :=
  \Ltail(\Pat) = (m - 1 - s) \bmod |H|$. We then check if
  $\rend{\Pat'} \leq |\Pat'|$ by comparing $c$ to $H[t + 1]$. If $c =
  H[t + 1]$, then we have $\rend{\Pat'} > |\Pat'|$. Since we have
  $\Lhead(\Pat') = s$, $\Lroot(\Pat') = H$, and $|\Pat'| = m$, in
  $\bigO(\log \log n)$ time we thus compute and return $(\LB(\Pat',
  \T), \UB(\Pat', \T))$ using \cref{pr:st-periodic-range}. Let us thus
  assume $c \neq H[t + 1]$, i.e., $\rend{\Pat'} \leq |\Pat'|$. We then
  compute $\type(\Pat')$ by comparing $c$ with $H[t + 1]$. Let us
  assume that $c \prec H[t + 1]$, i.e., $\type(\Pat) = -1$ (the case
  $\type(\Pat) = +1$ is handled symmetrically). We now execute the
  modified algorithm from \cref{pr:pm-periodic-range} for $\Pat'$.
  The modification is to replace implementation of operations taking
  $\Theta(m / \log_{\sigma} n)$ time with faster alternatives,
  exploiting the fact that by $\rend{\Pat'} = \rend{\Pat}$,
  $\Lhead(\Pat') = \Lhead(\Pat)$, and $\Lroot(\Pat') = \Lroot(\Pat)$
  it follows that $\rendfull{\Pat'} = \rendfull{\Pat} = \rend{\Pat} -
  \Ltail(\Pat) = m - t$ and thus $\Pat'[\rendfull{\Pat'} \dd m]$ is of
  length $t+1 \leq \tau$ (importantly, the modified algorithm will not
  use the components of the data structures in \cref{sec:pm-periodic}
  which are not part of the structure from \cref{sec:st-periodic-ds}).
  More precisely:
  \begin{itemize}
  \item First, using $\LTpref$, in $\bigO(1)$ time we compute $X =
    \Pref_{3\tau - 1}(s, H) = \Pat'[1 \dd 3\tau {-} 1]$.
  \item We then compute $|\Occ(\Pat', \T)|$.  First, note that since
    $\rend{\Pat'} \leq |\Pat'|$ and $\type(\Pat') = -1$, it follows by
    \cref{lm:pm-lce-3}\eqref{lm:pm-lce-3-it-2} that $\Occ(\Pat', T)
    \sub \R^{-}$, and that for every $j \in \Occ(\Pat', \T)$ it holds
    $\Lexp(j) = \Lexp(\Pat')$.  Thus, the sets $\Occa(\Pat', \T)$ and
    $\Occsp(\Pat', \T)$ are empty, and hence it remains to explain the
    computation of $|\Occsm(\Pat', \T)|$ (\cref{pr:pm-occ-s}).  Observe,
    that the expensive operations are the computation of
    $\Lhead(\Pat')$, $\Lroot(\Pat')$, $\Lexp(\Pat')$, and the pair
    $(\bpre, \epre)$.  Observe, however, that here we already have $s
    = \Lhead(\Pat')$, $H = \Lroot(\Pat')$, and $\rend{\Pat'} = m$.
    This lets us deduce $k := \Lexp(\Pat') = \lfloor \tfrac{m - 1 -
    s}{|H|} \rfloor $ in $\bigO(1)$ time.  As for the computation of
    $(\bpre, \epre)$, we first in $\bigO(1)$ time compute $H' :=
    \Pat'[\rendfull{\Pat'} \dd m] = H[1 \dd t + 1]$, and then obtain
    $(\bpre, \epre)$ by looking up the pair associated with the key
    $(H,H')$ in the lookup table $\LTruns$. The rest of the algorithm
    in \cref{pr:pm-occ-s} takes $\bigO(\log \log n)$ time.
  \item Finally, we compute $\delta(\Pat', \T)$. By $\type(\Pat') =
    -1$ and \cref{lm:pm-delta-2}, it holds $\delta(\Pat', \T) =
    \deltaa(\Pat', \T) - \deltas(\Pat', \T)$.  To compute
    $\deltaa(\Pat', \T)$, we proceed as in the proof of
    \cref{pr:pm-delta-a}. The string $X$ was already obtained above.
    The expensive step in \cref{pr:pm-delta-a} is the computation of
    $\Lroot(\Pat')$ and $\Lexp(\Pat')$. As noted above, here we
    already have $\Lroot(\Pat') = H$, and in $\bigO(1)$ time we can
    compute $\Lexp(\Pat') = \lfloor \tfrac{\rend{\Pat'} - 1 -
    \Lhead(\Pat')}{|\Lroot(\Pat')|} \rfloor = \lfloor \tfrac{m - 1 -
    s}{|H|} \rfloor$. The rest of the algorithm in
    \cref{pr:pm-delta-a} takes $\bigO(1)$ time.  We then compute
    $\deltas(\Pat', \T)$ using a modified \cref{pr:pm-delta-s}. The
    expensive part is the computation of $x$ and $x'$. After those are
    computed, the rest takes $\bigO(\log \log n)$ time. Here, we
    obtain $x$ by observing that it is equal to $\bpre$ (which was
    computed above), and then obtain $x'$ using $\LTruns$ (this only
    requires knowing $\Lroot(\Pat')$, which we already have).
  \end{itemize}
  Note that all components of the structure from
  \cref{pr:pm-occ-s,pr:pm-delta-a,pr:pm-delta-s} that we used are also
  components of the structure from \cref{sec:st-periodic-ds}.  Using
  the above values, we now obtain $(\LB(\Pat', \T), \UB(\Pat', \T))$
  as follows. By \cref{lm:pm-delta-1}, it holds $\LB(\Pat', \T) =
  \LB(X, \T) + \delta(\Pat', \T)$, where $X = \Pat'[1 \dd 3\tau {-}
  1]$.  The value $\LB(X, \T)$ is obtained in $\bigO(1)$ time using
  the lookup table $\LTrange$.  We thus obtain $\LB(\Pat', \T)$. By
  definition, we then compute $\UB(\Pat', \T) = \UB(\Pat', \T) +
  |\Occ(\Pat', \T)|$.
\end{proof}

\begin{remark}
  Note that, analogously to \cref{pr:st-periodic-range} (see
  \cref{rm:st-periodic-range}), the above result holds even if
  $\Occ(\Pat c, \T) = \emptyset$.
\end{remark}

\subsubsection{Implementation of
  \texorpdfstring{$\LCA(u, v)$}{LCA(u, v)}}\label{sec:st-periodic-lca}

\begin{lemma}\label{lm:st-periodic-lca}
  Let $v_1$ and $v_2$ be explicit nodes of $\ST$ such that $\LCA(v_1,
  v_2)$ is periodic and it holds $\rend{\LCA(v_1, v_2)} \leq
  |\str(\LCA(v_1, v_2))|$ and $\type(\LCA(v_1, v_2)) = -1$.  Then,
  $v_1$ and $v_2$ are periodic and it holds $\rend{v_1} \leq
  |\str(v_1)|$, $\rend{v_2} \leq |\str(v_2)|$, and $\type(v_1) =
  \type(v_2) = -1$.  Moreover,
  \[
    \mapToTZ(\LCA(v_1, v_2)) = \LCA(\mapToTZ(v_1), \mapToTZ(v_2)).
  \]
\end{lemma}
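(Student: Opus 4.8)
The approach mirrors the nonperiodic case (\cref{lm:st-nonperiodic-lca}), but now with periodicity as the key structural property. Let $v = \LCA(v_1, v_2)$ and let $X = \str(v)[1 \dd 3\tau {-} 1]$. Since $v$ is periodic, $\per(X) \leq \tfrac13\tau$, and by \cref{ob:lca}, $X$ is a prefix of both $\str(v_1)$ and $\str(v_2)$; hence $v_1$ and $v_2$ are periodic. To transfer the hypotheses $\rend{v} \leq |\str(v)|$ and $\type(v) = -1$ to $v_1, v_2$, I would invoke \cref{lm:pm-lce-3}\eqref{lm:pm-lce-3-it-1}: since $\str(v)$ is periodic with $\rend{v} \leq |\str(v)|$, and $\str(v)$ is a prefix of $\str(v_i)$ (so $\lcp(\str(v), \str(v_i)) \geq \rend{v}$), it follows that $\rend{v_i} = \rend{v} \leq |\str(v)| \leq |\str(v_i)|$ and $\type(v_i) = \type(v) = -1$. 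This makes $u_1 := \mapToTZ(v_1)$ and $u_2 := \mapToTZ(v_2)$ well-defined (by \cref{lm:st-periodic-map}).

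For the main equality, write $H = \Lroot(v)$; by \cref{lm:pm-lce-3}\eqref{lm:pm-lce-3-it-1} again, $\Lroot(v_i) = H$ and $\rendfull{v_i} = \rendfull{v}$ as well (using \cref{lm:pm-lce-2} for the L-root). By definition of $\mapToTZ$, we have $\str(u_i) = \Pow(H) \cdot \str(v_i)[\rendfull{v_i} \dd |\str(v_i)|]$. Set $\ell = \rendfull{v} - 1$; since $\rendfull{v_i} = \rendfull{v}$ and $\str(v_i)[1 \dd |\str(v)|] = \str(v)$ (as $v$ is an ancestor of $v_i$), the string $\str(u_i)$ equals $\Pow(H)$ followed by $\str(v_i)[\rendfull{v} \dd |\str(v_i)|]$, whose first $|\str(v)| - \ell$ characters after the $\Pow(H)$-prefix agree with $\str(v)[\rendfull{v} \dd |\str(v)|]$. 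Thus $\str(u) := \str(\LCA(u_1, u_2)) = \str(u_1)[1 \dd \lcp(\str(u_1), \str(u_2))]$ by \cref{ob:lca}, and I would compute $\lcp(\str(u_1), \str(u_2)) = |\Pow(H)| + \lcp(\str(v_1)[\rendfull{v} \dd ], \str(v_2)[\rendfull{v} \dd ]) = |\Pow(H)| + (\lcp(\str(v_1), \str(v_2)) - \ell) = |\Pow(H)| + |\str(v)| - \ell$, using $\rendfull{v} - 1 = \ell \leq |\str(v)| = \lcp(\str(v_1), \str(v_2))$ (the latter by \cref{ob:lca}). Hence $\str(u) = \Pow(H) \cdot \str(v)[\rendfull{v} \dd |\str(v)|]$, which is exactly $\str(\mapToTZ(v))$; since distinct nodes of $\TZ$ have distinct $\str$-values, $\LCA(u_1, u_2) = \mapToTZ(v)$.

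The step I expect to be most delicate is verifying the $\lcp$ computation cleanly, specifically that $\lcp(\str(v_1), \str(v_2)) = |\str(v)|$ (from \cref{ob:lca}) combines correctly with the shift by $\rendfull{v} - 1$ past the common periodic prefix; one must confirm that the $\Pow(H)$-prefixes of $\str(u_1)$ and $\str(u_2)$ genuinely coincide (they do, since $\Lroot(v_1) = \Lroot(v_2) = H$) so that the $\lcp$ of the images is at least $|\Pow(H)|$, and that the "tail" comparison past $\rendfull{v}$ is not accidentally longer or shorter than $|\str(v)| - \ell$. The subtlety is entirely analogous to the bookkeeping in the proof of \cref{lm:st-nonperiodic-lca}, and I would structure the argument with the same two-case flavor if needed, though here the single case $\rend{v} \leq |\str(v)|$, $\type(v) = -1$ suffices because \cref{lm:pm-lce-3} already forces $\sdepth(v) \geq \rend{v} > 3\tau - 1$, so there is no "shallow $\LCA$" subcase to handle separately.
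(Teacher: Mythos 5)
Your proposal is correct and follows essentially the same route as the paper's proof: transfer the periodic structure from $\LCA(v_1,v_2)$ to $v_1,v_2$, then compute $\lcp(\str(u_1),\str(u_2)) = |\Pow(H)| + |\str(v)| - (\rendfull{v}-1)$ via \cref{ob:lca} and conclude by uniqueness of $\str$-values in $\TZ$. The only (harmless) difference is that you invoke \cref{lm:pm-lce-3}\eqref{lm:pm-lce-3-it-1} to get $\rend{v_i}=\rend{v}$, $\type(v_i)=-1$, and $\rendfull{v_i}=\rendfull{v}$ in one shot, whereas the paper re-derives these from \cref{lm:pm-lce-2} and a direct character comparison; your lcp bookkeeping, including the case where one of $v_1,v_2$ equals the $\LCA$, checks out.
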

\begin{proof}
  Denote $v = \LCA(v_1, v_2)$, $Q = \str(v)$, $H = \Lroot(v)$, and $s
  = \Lhead(v)$.  Let also $Q_1 = \str(v_1)$.  By $|Q| \geq 3\tau - 1$
  and since $v$ is an ancestor of $v_1$, we have $\lcp(Q, Q_1) \geq
  3\tau - 1$. Consequently, by \cref{lm:pm-lce-2}, the node $v_1$ is
  periodic and it holds $\Lroot(v_1) = \Lroot(Q_1) = \Lroot(Q) =
  \Lroot(v) = H$ and $\Lhead(v_1) = \Lhead(Q_1) = \Lhead(Q) =
  \Lhead(v) = s$.  Furthermore, by $\rend{Q} \leq |Q|$ and $\type(Q) =
  -1$, it holds $Q[\rend{Q}] \prec Q[\rend{Q} - p]$. Since $Q$ is a
  prefix of $Q_1$, this immediately implies $\rend{Q_1} = \rend{Q}
  \leq |Q| \leq |Q_1|$ and $Q_1[\rend{Q_1}] = Q[\rend{Q}] \prec
  Q[\rend{Q} - p] = Q_1[\rend{Q_1} - p]$, i.e., $\type(Q_1) = -1$. We
  have thus shown $\rend{v_1} \leq |\str(v_1)|$ and $\type(v_1) = -1$.
  Analogously, we obtain that $v_2$ is periodic and it holds
  $\rend{v_2} = \rend{v}$, $\Lroot(v_2) = H$, $\Lhead(v_2) = s$,
  $\rend{v_2} \leq |\str(v_2)|$, and $\type(v_2) = -1$. We have thus
  shown that $u_1 = \mapToTZ(v_1)$ and $u_2 = \mapToTZ(v_2)$ are
  well-defined (see \cref{sec:st-periodic-nav}).

  Let $u = \LCA(u_1, u_2)$, $\ell' = \sdepth(u)$, and $\ell =
  \sdepth(v)$.  By \cref{ob:lca}, we have $\ell = \lcp(\str(v_1),
  \str(v_2))$, $\ell' = \lcp(\str(u_1), \str(u_2))$, $\str(v) =
  \str(v_1)[1 \dd \ell]$, and $\str(u) = \str(u_1)[1 \dd \ell']$.
  Denote $\delta = \rendfull{v}$. As observed above, $\rend{v_1} =
  \rend{v}$, $\Lhead(v_1) = \Lhead(v)$, and $\Lroot(v_1) =
  \Lroot(v)$. Thus, $\rendfull{v_1} = 1 + \Lhead(v_1) + |\Lroot(v_1)|
  \cdot \lfloor \tfrac{\rend{v_1} - 1 - \Lhead(v_1)}{|\Lroot(v_1)|}
  \rfloor = 1 + \Lhead(v) + |\Lroot(v)|\cdot \lfloor \tfrac{\rend{v} -
  1 - \Lhead(v)}{|\Lroot(v)|} \rfloor = \rendfull{v} = \delta$.
  Analogously, $\rendfull{v_2} = \delta$.  By definition of
  $\mapToTZ(v_1)$ and $\mapToTZ(v_2)$, we have $\str(u_1) = \Pow(H)
  \cdot \str(v_1)[\rendfull{v_1} \dd |\str(v_1)|] = \Pow(H) \cdot
  \str(v_1)[\delta \dd |\str(v_1)|]$ and $\str(u_2) = \Pow(H) \cdot
  \str(v_2)[\rendfull{v_2} \dd |\str(v_2)|] = \Pow(H) \cdot
  \str(v_2)[\delta \dd |\str(v_2)|]$.  Thus, $\ell' = \lcp(\str(u_1),
  \str(u_2)) = |\Pow(H)| + (\lcp(\str(v_1), \str(v_2)) - \delta + 1) =
  |\Pow(H)| + \ell - \delta + 1$.  Consequently, $\str(u) = \str(u_1)[1
  \dd \ell'] = \Pow(H) \cdot \str(v_1)[\delta \dd \delta + \ell' -
  |\Pow(H)| - 1] = \Pow(H) \cdot \str(v_1)[\delta \dd \ell] = \Pow(H)
  \cdot \str(v)[\delta \dd \ell] = \Pow(H) \cdot \str(v)[\rendfull{v}
  \dd |\str(v)|]$.  Thus, by definition of $\mapToTZ(u)$, and since no
  two nodes of $\TZ$ have the same value of $\str$, we therefore
  obtain $\mapToTZ(v) = u$.
\end{proof}

\begin{lemma}\label{lm:st-periodic-lca-2}
  Let $v_1$ and $v_2$ be explicit nodes of $\ST$ such that $\LCA(v_1,
  v_2)$ is periodic. Denote $v = \LCA(v_1, v_2)$, $i_{\min} =
  \min(\lrank(v_1), \lrank(v_2)) + 1$, and $i_{\max} =
  \max(\rrank(v_1), \rrank(v_2))$. Then, it holds:
  \begin{enumerate}
  \item\label{lm:st-periodic-lca-2-it-1} $\SA[i_{\min}], \SA[i_{\max}]
    \in \R$ and $\rend{v} = 1 + \min(\rend{\SA[i_{\min}]} -
    \SA[i_{\min}], \rend{\SA[i_{\max}]} - \SA[i_{\max}])$.
  \item\label{lm:st-periodic-lca-2-it-2} $\rend{v} \leq |\str(v)|$
    holds if and only if $\T[\SA[i_{\min}] + \rend{v} - 1] =
    \T[\SA[i_{\max}] + \rend{v} - 1]$.
  \end{enumerate}
\end{lemma}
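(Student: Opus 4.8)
The plan is to reduce everything to \cref{lm:st-periodic-rend}, which is exactly this statement for a single periodic node $v$ with $i_1=\lrank(v)+1$ and $i_2=\rrank(v)$. Write $v:=\LCA(v_1,v_2)$ and $\ell:=\sdepth(v)=|\str(v)|$; since $v$ is periodic, $\ell\ge 3\tau-1$. If $v\in\{v_1,v_2\}$, say $v=v_1$, then $v_1$ is an ancestor of $v_2$, so $\lrank(v_1)\le\lrank(v_2)$ and $\rrank(v_2)\le\rrank(v_1)$; hence $i_{\min}=\lrank(v_1)+1=\lrank(v)+1$ and $i_{\max}=\rrank(v_1)=\rrank(v)$, and both assertions are precisely \cref{lm:st-periodic-rend} applied to $v$ (the subcase $v=v_2$ is symmetric). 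Suppose henceforth that $v\notin\{v_1,v_2\}$, i.e.\ $v_1$ and $v_2$ are incomparable and $v$ is a proper ancestor of both.

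In this case $v_1$ and $v_2$ lie in the subtrees of two distinct children of $v$. Assuming without loss of generality that the subtree of $v_1$ lies to the left of that of $v_2$, I get $\lrank(v_1)<\rrank(v_1)\le\lrank(v_2)<\rrank(v_2)$, so $i_{\min}=\lrank(v_1)+1$ and $i_{\max}=\rrank(v_2)$; in particular $i_{\min}<i_{\max}$ and $\{i_{\min},i_{\max}\}\subseteq(\lrank(v)\dd\rrank(v)]$. Consequently $\SA[i_{\min}],\SA[i_{\max}]\in\{\SA[i]:i\in(\lrank(v)\dd\rrank(v)]\}=\Occ(\str(v),\T)$, which by $|\str(v)|\ge 3\tau-1$ and \cref{lm:pm-lce} is contained in $\R$; this already settles the first assertion of part~1. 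The only genuinely new ingredient is that $\LCE(\SA[i_{\min}],\SA[i_{\max}])=\ell$: the $i_{\min}$-th leftmost leaf of $\ST$ is the leftmost leaf below $v_1$ and the $i_{\max}$-th leftmost leaf is the rightmost leaf below $v_2$, and since $v_1$ and $v_2$ sit below distinct children of $v$, these two leaves have $v$ as their lowest common ancestor; so by \cref{ob:lca} I get $\LCE(\SA[i_{\min}],\SA[i_{\max}])=\lcp(\T[\SA[i_{\min}]\dd n],\T[\SA[i_{\max}]\dd n])=\sdepth(v)=\ell$, and in particular $\T[\SA[i_{\min}]\dd\SA[i_{\min}]+\ell)=\T[\SA[i_{\max}]\dd\SA[i_{\max}]+\ell)=\str(v)$.

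Given these facts, I will run the argument of \cref{lm:st-periodic-rend} verbatim with the pair $(i_{\min},i_{\max})$ in place of $(\lrank(v)+1,\rrank(v))$. That proof uses of its two indices only: (i) that they lie in $(\lrank(v)\dd\rrank(v)]$, so the corresponding suffixes occur at $\str(v)$; (ii) that the $\LCE$ of the two suffixes equals $\sdepth(v)$; and (iii), in its non-degenerate branch (here holding because $i_{\min}<i_{\max}$), that the two suffixes are distinct, whereupon \cref{lm:run-end} (which expresses $\rend{j}$ through an $\LCE$), the uniqueness of $\T[n]$, and the mismatch forced at depth $\ell$ pin down $\rend{v}-1=\min(\rend{\SA[i_{\min}]}-\SA[i_{\min}],\,\rend{\SA[i_{\max}]}-\SA[i_{\max}])$; this is part~1. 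For part~2 one additionally uses $\rend{v}\le|\str(v)|+1$ (immediate from the definition of $\rend{\cdot}$ in \cref{sec:pm-periodic-prelim}), which combined with $\T[\SA[i_{\min}]\dd\SA[i_{\min}]+\ell)=\T[\SA[i_{\max}]\dd\SA[i_{\max}]+\ell)=\str(v)$ and $\LCE(\SA[i_{\min}],\SA[i_{\max}])=\ell$ yields the stated equivalence: if $\rend{v}\le\ell$ the two characters agree, being the same position of $\str(v)$, and if $\rend{v}=\ell+1$ they differ because the $\LCE$ is exactly $\ell$ and (by part~1) both suffixes reach past position $\ell$. I expect the only point needing care to be the bookkeeping of this last case — matching $\SA[i_{\min}]$ and $\SA[i_{\max}]$ to the extremal leaves below $v$ and checking that $v$ is their lowest common ancestor; everything else is a direct transcription of \cref{lm:st-periodic-rend}.
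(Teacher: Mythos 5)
Your proposal is correct, but it takes a genuinely different route from the paper's proof. The paper first shows $\rend{v}=\min(\rend{v_1},\rend{v_2})$, then applies \cref{lm:st-periodic-rend} to $v_1$ and $v_2$ separately (producing a minimum over the four indices $\lrank(v_t)+1,\rrank(v_t)$), and collapses that four-way minimum to the pair $(i_{\min},i_{\max})$ via a bitonicity argument: by \cref{lm:lce}, the sequence $i\mapsto \rend{\SA[i]}-\SA[i]$ is bitonic over the $\SA$-block of $\str(v)[1\dd 3\tau-1]$, so the middle indices can be discarded; part~2 is then a case analysis on whether one of $v_1,v_2$ equals $v$. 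You instead split on comparability: in the ancestor case $i_{\min},i_{\max}$ coincide with $\lrank(v)+1,\rrank(v)$ and the claim \emph{is} \cref{lm:st-periodic-rend}; in the incomparable case you observe that the $i_{\min}$-th and $i_{\max}$-th leaves lie below distinct children of $v$, so their LCA is $v$ and hence $\LCE(\SA[i_{\min}],\SA[i_{\max}])=\sdepth(v)$ exactly, and together with $\SA[i_{\min}],\SA[i_{\max}]\in\Occ(\str(v),\T)$ (and $i_{\min}<i_{\max}$, giving the mismatch at depth $\sdepth(v)$ and, via uniqueness of $\T[n]$, well-definedness of the compared characters) these are indeed the only properties of the index pair that the proof of \cref{lm:st-periodic-rend} uses, so rerunning it verbatim is legitimate for both parts. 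Note this is consistent with the paper's remark that the lemma is not a black-box corollary of \cref{lm:st-periodic-rend} (since $i_{\min},i_{\max}$ need not be $\lrank(v)+1,\rrank(v)$): you reuse its \emph{proof}, not its statement. Your route avoids both the intermediate identity $\rend{v}=\min(\rend{v_1},\rend{v_2})$ and the bitonicity argument, which makes it somewhat more direct; the price is that the argument is only as rigorous as the (correct) claim that the earlier proof depends on nothing else about its indices, which a full write-up should spell out rather than assert.
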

\begin{proof}
  Denote $i_1 = \lrank(v_1) + 1$, $i_2 = \rrank(v_1)$, $i_3 =
  \lrank(v_2) + 1$, and $i_4 = \rrank(v_2)$.

  1. Let $H = \Lroot(v)$, and $p = |H|$. We start by noting that
  $\SA[i_{\min}], \SA[i_{\max}] \in \R$ follows by
  \cref{lm:st-periodic-rend}\eqref{lm:st-periodic-rend-it-1}.  Next,
  we prove the formula for $\rend{v}$.

  First, we show that $\rend{v} = \min(\rend{v_1},
  \rend{v_2})$. Observe that if $P$ is a prefix of $S$, both $P$ and
  $S$ and periodic, and $\Lroot(S) = H$, then, $\pend{S} = 1 + p +
  \lcp(S(0 \dd |S|-p], S(p \dd |S|]) \geq 1 + p + \lcp(S(0 \dd |P|-p],
  S(p \dd |P|]) = 1 + p + \lcp(P(0 \dd |P|-p], P(p \dd |P|]) =
  \pend{P}$.  Since $\str(v)$ is a prefix of $\str(v_1)$ and
  $\str(v_2)$, we thus obtain $\rend{v_1} \geq \rend{v}$ and
  $\rend{v_2} \geq \rend{v}$.  It remains to show that there exists $t
  \in \{1,2\}$ such that $\rend{v_t} = \rend{v}$.  Consider two cases:
  \begin{itemize}
  \item If $\rend{v} = |\str(v)| + 1$, then there are two
    possibilities. Either for some $t \in \{1,2\}$, we have
    $|\str(v_t)| = |\str(v)|$, in which case $\str(v_t) = \str(v)$ and
    thus $\rend{v_t} = \rend{v}$ follows.  The other possibility is
    that $|\str(v_1)| > |\str(v)|$ and $|\str(v_2)| > |\str(v)|$.
    Since $\str(v)$ is the longest common prefix of $\str(v_1)$ and
    $\str(v_2)$, we then have $\str(v_1)[\rend{v}] =
    \str(v_1)[|\str(v)|+1] \neq \str(v_2)[|\str(v)+1] =
    \str(v_2)[\rend{v}]$.  Thus, there exists $t \in \{1,2\}$ such
    that $\str(v_t)[\rend{v}] \neq \str(v)[\rend{v}-p]$.  By
    definition, for such $t$ we have $\rend{v_t} = \rend{v}$.
  \item Let us now assume $\rend{v} \leq |\str(v)|$.  This implies
    that $\str(v)[\rend{v}] = \str(v_1)[\rend{v}] =
    \str(v_2)[\rend{v}]$ and $\str(v)[\rend{v}] \neq
    \str(v)[\rend{v}-p]$.  Thus, by $\str(v)[1 \dd \rend{v}] =
    \str(v_1)[1 \dd \rend{v}] \allowbreak = \str(v_2)[1 \dd \rend{v}]$
    we obtain $\rend{v_1} = \rend{v_2} = \rend{v}$.
  \end{itemize}
  We have thus shown that there exists $t \in \{1,2\}$ such that
  $\rend{v} = \rend{v_t}$. Combined with $\rend{v_1} \geq \rend{v}$
  and $\rend{v_2} \geq \rend{v}$, this yields $\min(\rend{v_1},
  \rend{v_2}) = \min(\rend{v_t}, \rend{v_{3-t}}) = \rend{v}$.

  By the above and
  \cref{lm:st-periodic-rend}\eqref{lm:st-periodic-rend-it-1} for $v_1$
  and $v_2$, it holds $\rend{v} = 1 + \min_{t \in [1 \dd
  4]}\{\rend{\SA[i_t]}-\SA[i_t]\}$.  To show that this is equal to the
  expression for $\rend{v}$ from the claim, we first observe that
  letting $X = \str(v)[1 \dd 3\tau - 1]$ and $(b,e) = (\LB(X,\T),
  \UB(X,\T))$, we have $i_t \in (b \dd e]$ for all $t \in [1 \dd
  4]$. Observe that by \cref{lm:lce}, the sequence $(\rend{\SA[i]} -
  \SA[i])_{i=b+1}^{e}$ is bitonic, i.e., there exists $m \in (b \dd
  e]$ such that $\rend{\SA[b+1]}-\SA[b+1] \leq
  \rend{\SA[b+2]}-\SA[b+2] \leq \dots \leq \rend{\SA[m]} - \SA[m]$ and
  $\rend{\SA[m]} - \SA[m] \geq \rend{\SA[m+1]}-\SA[m+1] \geq \dots
  \geq \rend{\SA[e]}-\SA[e]$. This implies that for every triple $k_1,
  k_2, k_3 \in (b \dd e]$, the inequalities $k_1 \leq k_2 \leq k_3$
  imply $\min(\rend{\SA[k_1]}-\SA[k_1], \rend{\SA[k_3]}-\SA[k_3]) =
  \min_{t \in [1 \dd 3]}\{\rend{\SA[k_t]}-\SA[k_t]\}$.  For a proof,
  consider two cases:
  \begin{itemize}
  \item If $k_2 < m$, then by the bitonic property, we have
    $\rend{\SA[k_2]}-\SA[k_2] \geq \rend{\SA[k_1]}-\SA[k_1]$. Thus,
    the expression $\rend{\SA[k_2]}-\SA[k_2]$ has no effect on the
    minimum.
  \item If $k_2 \geq m$, then by the bitonic property, we have
    $\rend{\SA[k_2]}-\SA[k_2] \geq \rend{\SA[k_3]}-\SA[k_3]$. Thus,
    the expression $\rend{\SA[k_2]}-\SA[k_2]$ can again be excluded in
    the minimum.
  \end{itemize}
  By the above, letting $i'_{\min} = \min_{t \in [1 \dd 4]}\{i_t\}$
  and $i'_{\max} = \max_{i \in [1 \dd 4]}\{i_t\}$, we thus have
  $\rend{v} = 1 + \min_{t \in [1 \dd 4]}\{\rend{\SA[i_t]}-\SA[i_t]\} =
  1 + \min(\rend{\SA[i'_{\min}]}-\SA[i'_{\min}],
  \rend{\SA[i'_{\max}]}-\SA[i'_{\max}])$.

  It remains to show that $i'_{\min} = i_{\min}$ and $i'_{\max} =
  i_{\max}$. For this, it suffices to note that by definition, we have
  $i_1 \leq i_2$ and $i_3 \leq i_4$, thus, $i'_{\min} = \min_{t \in [1
  \dd 4]}\{i_t\} = \min(i_1,i_3) = i_{\min}$ and analogously,
  $i'_{\max} = \max_{t \in [1 \dd 4]}\{i_t\} = \max(i_2, i_4) =
  i_{\max}$.

  2. As observed in the proof of
  \cref{lm:st-periodic-rend}\eqref{lm:st-periodic-rend-it-2}, it holds
  $\SA[i_1] + \rend{v_1} - 1 \leq n$, $\SA[i_2] + \rend{v_1} - 1 \leq
  n$, $\SA[i_3] + \rend{v_2} - 1 \leq n$, and $\SA[i_4] + \rend{v_2} -
  1 \leq n$.  Thus, by $\rend{v} = \min(\rend{v_1}, \rend{v_2})$, for
  every $t \in [1 \dd 4]$, we have $\SA[i_t] + \rend{v} - 1 \leq n$.
  In particular, $\SA[i_{\min}] + \rend{v} - 1 \leq n$ and
  $\SA[i_{\max}] + \rend{v} - 1 \leq n$.  We now prove the
  equivalence.  Let us first assume $\rend{v} \leq |\str(v)|$.  Then,
  since $\str(v)$ is a prefix of both $\str(v_1)$ and $\str(v_2)$ and
  $\SA[i_{\min}], \SA[i_{\max}] \in \Occ(\str(v_1), \T) \cup
  \Occ(\str(v_2), \T)$, it follows that $\SA[i_{\min}], \SA[i_{\max}]
  \in \Occ(\str(v), \T)$.  Therefore, $\T[\SA[i_{\min}] + \rend{v} -
  1] = \T[\SA[i_{\max}] + \rend{v} - 1]$ follows immediately. To show
  the opposite implication, assume by contraposition that $\rend{v} =
  |\str(v)| + 1$.  Then, there are two possibilities. Either for some
  $t \in \{1, 2\}$ we have $\str(v_t) = \str(v)$, in which case
  $\str(v_t)$ is a prefix of $\str(v_{3-t})$, which in turn implies
  $i_{\min} = \lrank(v_t) + 1$ and $i_{\max} = \rrank(v_t)$. Then,
  $\rend{v} = \rend{v_t}$ and by applying
  \cref{lm:st-periodic-rend}\eqref{lm:st-periodic-rend-it-2} to $v_t$,
  we obtain $\T[\SA[i_{\min}] + \rend{v} - 1] = \T[\SA[\lrank(v_t)+1]
  + \rend{v_t} - 1] \neq \T[\SA[\rrank(v_t)] + \rend{v_t} - 1] =
  \T[\SA[i_{\max}] + \rend{v} - 1]$.  The other possibility is that
  $|\str(v_1)| > |\str(v)|$ and $|\str(v_2)| > |\str(v)|$.  Then,
  since $\str(v)$ is the longest common prefix of $\str(v_1)$ and
  $\str(v_2)$, neither of $v_1$ or $v_2$ is an ancestor of the other,
  and hence either it holds $i_{\min} = i_1 \leq i_2 < i_3 \leq i_4 =
  i_{\max}$ or $i_{\min} = i_3 \leq i_4 < i_1 \leq i_2 = i_{\max}$.
  In the first case $\SA[i_{\min}] \in \Occ(\str(v_1), \T)$ and
  $\SA[i_{\max}] \in \Occ(\str(v_2), \T)$, and in the second case
  $\SA[i_{\min}] \in \Occ(\str(v_2), \T)$ and $\SA[i_{\max}] \in
  \Occ(\str(v_1), \T)$.  Therefore, in both cases we have
  $\T[\SA[i_{\min}] + \rend{v} - 1] = \T[\SA[i_{\min}] + |\str(v)|]
  \neq \T[\SA[i_{\max}] + |\str(v)|] = \T[\SA[i_{\max}] + \rend{v} -
  1]$.
\end{proof}

\begin{remark}
  Observe that in \cref{lm:st-periodic-lca-2}, it does not
  necessarily hold that $\lrank(\LCA(v_1, v_2)) = i_{\min}$ or
  $\rrank(\LCA(v_1, v_2)) = i_{\max}$. Thus, the lemma does not
  immediately follow as a corollary from \cref{lm:st-periodic-rend}.
\end{remark}

\begin{proposition}\label{pr:st-periodic-lca}
  Let $v_1$ and $v_2$ be explicit nodes of $\ST$ such that $\LCA(v_1,
  v_2)$ is periodic.  Given the data structure from
  \cref{sec:st-periodic-ds} and the pairs $\repr(v_1)$ and
  $\repr(v_2)$, we can in $\bigO(\log \log n)$ time compute
  $\repr(\LCA(v_1, v_2))$.
\end{proposition}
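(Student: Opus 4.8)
The plan is to recover, from $\repr(v_1)$ and $\repr(v_2)$ alone, the combinatorial parameters of $v:=\LCA(v_1,v_2)$ — namely $\Lroot(v)$, $\Lhead(v)$, the value $\rend{v}$, the test whether $\rend{v}\le|\str(v)|$, and (when relevant) $\type(v)$ and $\rendfull{v}$ — and then to finish with a single call to either \cref{pr:st-periodic-range} (the ``fully periodic'' case) or the $\TZ$ round trip $(\mapToTZ,\pseudoInvTZ{}{})$ of \cref{sec:st-periodic-nav}, exactly as in the other periodic operations.

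First I would set $i_{\min}=\min(\lrank(v_1),\lrank(v_2))+1$ and $i_{\max}=\max(\rrank(v_1),\rrank(v_2))$. Since $v$ is an ancestor of both $v_1$ and $v_2$, their $\SA$-intervals are contained in $(\lrank(v)\dd\rrank(v)]$, so $i_{\min},i_{\max}\in(\lrank(v)\dd\rrank(v)]$ and hence $\SA[i_{\min}],\SA[i_{\max}]\in\Occ(\str(v),\T)$; because $\str(v)$ is periodic, \cref{lm:st-periodic-lca-2}\eqref{lm:st-periodic-lca-2-it-1} (equivalently, \cref{lm:pm-lce}) gives $\SA[i_{\min}],\SA[i_{\max}]\in\R$. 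I compute $\SA[i_{\min}]$ and $\SA[i_{\max}]$ using the structure from \cref{sec:sa-periodic-ds}, both parts of which are contained in the structure of \cref{sec:st-periodic-ds}: this is precisely \cref{pr:sa-periodic-sa}, running in $\bigO(\log\log n)$ time for positions in $\R$. Applying \cref{pr:isa-root} in $\bigO(1)$ time to these two positions (and computing $\rend{\cdot}$ as in its proof) yields $\Lroot(\cdot),\Lhead(\cdot),\type(\cdot),\rend{\cdot}$ for them; by \cref{lm:pm-lce} (directly, $\SA[i_{\min}]\in\R_{s,H}$ with $s=\Lhead(v)$, $H=\Lroot(v)$) these agree with $\Lroot(v)$ and $\Lhead(v)$. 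Then \cref{lm:st-periodic-lca-2}\eqref{lm:st-periodic-lca-2-it-1} gives $\rend{v}=1+\min(\rend{\SA[i_{\min}]}-\SA[i_{\min}],\,\rend{\SA[i_{\max}]}-\SA[i_{\max}])$, and part \eqref{lm:st-periodic-lca-2-it-2} lets me decide whether $\rend{v}\le|\str(v)|$ by the single test whether $\T[\SA[i_{\min}]+\rend{v}-1]=\T[\SA[i_{\max}]+\rend{v}-1]$ on the packed text.

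Now I split into two cases. If $\rend{v}>|\str(v)|$, then (since always $\rend{\Pat}\le|\Pat|+1$) in fact $\rend{v}=|\str(v)|+1$, so $|\str(v)|=\rend{v}-1$ is known; as $v$ is a node of $\ST$ we have $\Occ(\str(v),\T)\ne\emptyset$, and \cref{pr:st-periodic-range} applied with $\Pat=\str(v)$ and the already-computed $\Lhead(v),\Lroot(v),|\str(v)|$ returns $\repr(v)$ in $\bigO(\log\log n)$ time. Otherwise $\rend{v}\le|\str(v)|$, and \cref{lm:pm-lce-3}\eqref{lm:pm-lce-3-it-2} applied to $\Pat=\str(v)$ and $j=\SA[i_{\min}]\in\Occ(\str(v),\T)$ gives $\type(v)=\type(\SA[i_{\min}])$, already in hand. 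Assume $\type(v)=-1$ (the case $\type(v)=+1$ is identical after replacing each component of the structure of \cref{sec:sa-periodic-ds}/\cref{sec:st-periodic-ds} by its \cref{lm:lce}-adapted symmetric counterpart). Then \cref{lm:st-periodic-lca} guarantees $v_1,v_2$ are periodic with $\rend{v_t}\le|\str(v_t)|$ and $\type(v_t)=-1$, so $u_t=\mapToTZ(v_t)$ is well-defined and $\mapToTZ(v)=\LCA(u_1,u_2)=:u$. I compute $u_1,u_2$ by two $\bigO(\log\log n)$-time calls to \cref{pr:st-periodic-map} (on $\repr(v_1),\repr(v_2)$), then $u=\LCA(u_1,u_2)$ in $\bigO(1)$ time via the representation of $\TZ$ from \cref{pr:compact-trie}. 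Finally, with $\ell:=\rendfull{v}-1=\Lhead(v)+\Lexp(v)\cdot|\Lroot(v)|$ (where $\Lexp(v)=\lfloor(\rend{v}-1-\Lhead(v))/|\Lroot(v)|\rfloor$) and $\Int(\Lroot(v))$, and noting that $\Pow(\Lroot(v))$ is a prefix of $\str(u)$ by definition of $\mapToTZ$ so the precondition of \cref{pr:st-periodic-invmap} holds, I invoke \cref{pr:st-periodic-invmap} to compute $\pseudoInvTZ{\ell}{u}$ in $\bigO(\log\log n)$ time; by \cref{lm:st-periodic-invmap} this equals $\repr(v)$, which I return. The total time is $\bigO(\log\log n)$.

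The main obstacle I anticipate is the first block: one must produce $\rend{v}$, the test $\rend{v}\le|\str(v)|$, $\Lroot(v)$, $\Lhead(v)$, and $\type(v)$ without ever holding an $\ST$-node representation $(j,\ell)$ for $v$ itself and without paying more than $\bigO(\log\log n)$ for an $\SA$ query — which forces the reduction to the boundary positions $\SA[i_{\min}],\SA[i_{\max}]$ provided by \cref{lm:st-periodic-lca-2} (including the certification that they lie in $\R$, so that \cref{pr:sa-periodic-sa} and \cref{pr:isa-root} apply), together with \cref{lm:pm-lce,lm:pm-lce-3} to transfer $\Lroot$, $\Lhead$, and $\type$ from those positions back to $v$. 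Once this is in place, the two cases are routine instantiations of \cref{pr:st-periodic-range} and of the mapping primitives, and one must only take the usual care that the $\type(v)=+1$ branch uses the symmetric copies of all components.
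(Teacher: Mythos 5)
Your proposal is correct and follows essentially the same route as the paper's proof: reduce to the boundary indices $i_{\min},i_{\max}$, use \cref{lm:st-periodic-lca-2} with \cref{pr:sa-periodic-sa} and \cref{pr:isa-root} to recover $\Lroot(v)$, $\Lhead(v)$, $\rend{v}$ and the test $\rend{v}\le|\str(v)|$, then either call \cref{pr:st-periodic-range} in the fully periodic case or perform the $\mapToTZ$/$\pseudoInvTZ{\ell}{u}$ round trip via \cref{lm:st-periodic-lca}, \cref{pr:st-periodic-map}, \cref{pr:st-periodic-invmap}, and \cref{lm:st-periodic-invmap} otherwise. The only cosmetic differences are that you obtain $\type(v)$ from $\type(\SA[i_{\min}])$ via \cref{lm:pm-lce-3} rather than by the paper's direct character comparison, which is equivalent.
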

\begin{proof}
  Denote $v = \LCA(v_1, v_2)$, $\repr(v_1) = (b_1, e_1)$ and
  $\repr(v_2) = (b_2, e_2)$ (recall that for $i \in \{1, 2\}$, we have
  $b_i = \lrank(v_i)$ and $e_i = \rrank(v_i)$).

  First, in $\bigO(1)$ time we compute $i_{\min} = \min(b_1, b_2) + 1$
  and $i_{\max} = \max(e_1, e_2)$. By
  \cref{lm:st-periodic-lca-2}\eqref{lm:st-periodic-lca-2-it-1}, we
  have $\SA[i_{\min}], \SA[i_{\max}] \in \R$. Using
  \cref{pr:sa-periodic-sa}, in $\bigO(\log \log n)$ time we compute
  $j_{\min} = \SA[i_{\min}]$ and $j_{\max} = \SA[i_{\max}]$.  Next,
  using \cref{pr:isa-root} in $\bigO(1)$ time we compute $H :=
  \Lroot(j_{\min})$, $s := \Lhead(j_{\min})$, $k_{\min} =
  \Lexp(j_{\min})$, $k_{\max} = \Lexp(j_{\max})$, $t_{\min} =
  \Ltail(j_{\min})$, and $t_{\max} = \Ltail(j_{\max})$.  Observe that
  since $v$ is periodic, and $j_{\min}, j_{\max} \in \Occ(\str(v_1),
  \T) \cup \Occ(\str(v_2), \T) \subseteq \Occ(\str(v), \T)$, it
  follows by \cref{lm:lce,lm:pm-lce}, that $\Lroot(v) =
  \Lroot(j_{\max}) = H$ and $\Lhead(v) = \Lhead(j_{\max}) = s$.  In
  $\bigO(1)$ time we thus compute $e_{\min} := \rend{j_{\min}} =
  j_{\min} + s + k_{\min}|H| + t_{\min}$ and $e_{\max} :=
  \rend{j_{\max}} = j_{\max} + s + k_{\max}|H| + t_{\max}$.  Next, in
  $\bigO(1)$ time we compute $e_v := \rend{v} = 1 + \min(e_{\min} -
  j_{\min}, e_{\max} - j_{\max})$ (see
  \cref{lm:st-periodic-lca-2}\eqref{lm:st-periodic-lca-2-it-1}).
  Using \cref{lm:st-periodic-lca-2}\eqref{lm:st-periodic-lca-2-it-2},
  we then in $\bigO(1)$ time check if it holds $\rend{v} \leq |\str(v)|$
  by comparing $\T[j_{\min} + e_v - 1]$ with $\T[j_{\max} + e_v -
  1]$. Consider two cases:
  \begin{itemize}
  \item Let $\T[j_{\min} + e_v - 1] = \T[j_{\max} + e_v - 1]$, i.e.,
    $\rend{v} \leq |\str(v)|$. Recall now that $j_{\min} \in
    \Occ(\str(v), \T)$.  In $\bigO(1)$ time we thus compute $\type(v)$
    by comparing $\T[j_{\min} + e_v - 1]$ with $\T[j_{\min} + e_v - 1
    - |H|]$.  Let us assume that $\T[j_{\min} + e_v - 1] \prec
    \T[j_{\min} + e_v - 1 - |H|]$, i.e., $\type(v) = -1$ (the case
    $\type(v) = +1$ is handled symmetrically, using the part of the
    structure from \cref{sec:st-periodic-ds} adapted according to
    \cref{lm:lce}).  By \cref{lm:st-periodic-lca}, we now have that
    $v_1$ and $v_2$ are periodic and it holds $\rend{v_1} \leq
    |\str(v_1)|$, $\rend{v_2} \leq |\str(v_2)|$, and $\type(v_1) =
    \type(v_2) = -1$. Using \cref{pr:st-periodic-map}, in $\bigO(\log
    \log n)$ time we compute pointers to $u_1 = \mapToTZ(v_1)$ and
    $u_2 = \mapToTZ(v_2)$.  Using the representation of $\TZ$ stored
    as part of the structure in \cref{sec:st-periodic-ds}, and
    \cref{pr:compact-trie}, in $\bigO(1)$ time we compute a pointer to
    $u = \LCA(u_1, u_2)$. By \cref{lm:st-periodic-lca}, it holds
    $\mapToTZ(v) = u$. Our goal is to exploit this connection to
    compute $\repr(v)$.  In $\bigO(1)$ time we compute $k := \Lexp(v)
    = \lfloor \tfrac{e_v - 1 - s}{|H|} \rfloor$ and $\ell :=
    \rendfull{v} - 1 = s + k|H|$.  Using \cref{pr:st-periodic-invmap},
    in $\bigO(\log \log n)$ time we then compute the pair $(b, e) =
    \pseudoInvTZ{\ell}{u}$. As noted above, it holds $\mapToTZ(v) =
    u$. Thus, by \cref{lm:st-periodic-invmap}, we have $\repr(v) = (b,
    e)$.
  \item Let $\T[j_{\min} + e_v - 1] \neq \T[j_{\max} + e_v - 1]$,
    i.e., $\rend{v} > |\str(v)|$. Letting $\Pat = \str(v)$, we then
    have $\pend{\Pat} > |\Pat|$, $\Lhead(\Pat) = s$, $\Lroot(\Pat) =
    H$, and $|\Pat| = e_v - 1$. Using \cref{pr:st-periodic-range-2},
    we thus compute $(b, e) = (\LB(\Pat, \T), \UB(\Pat, \T))$ in
    $\bigO(\log \log n)$ time, and return $\repr(v) = (b,
    e)$. \qedhere
  \end{itemize}
\end{proof}

\subsubsection{Implementation of
  \texorpdfstring{$\child(v, c)$}{child(v, c)}}\label{sec:st-periodic-child}

\begin{lemma}\label{lm:st-periodic-child}
  Let $c \in \Alphabet$ and $v$ be an explicit periodic internal node
  of $\ST$ satisfying $\rend{v} \leq |\str(v)|$ and $\type(v) =
  -1$. Let $u = \mapToTZ(v)$. If $\child(u,c) = \nil$ then
  $\child(v,c) = \nil$. Otherwise, letting $u' = \child(u,c)$, it
  holds
  \vspace{1.5ex}
  \[
    \repr(\child(v,c)) =
    \begin{cases}
      (b, e) & \text{if $b \neq e$},\\
      (0, 0)   & \text{otherwise},
    \end{cases}
    \vspace{1.5ex}
  \]
  where $(b,e) = \pseudoInvTZ{\ell}{u'}$ and $\ell = \rendfull{v} -
  1$.
\end{lemma}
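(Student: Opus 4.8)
The plan is to mirror the proof of \cref{lm:st-nonperiodic-child}, with the role of the string‑synchronizing‑set machinery taken over by the periodicity toolbox. Put $\Pat := \str(v)\,c$ and $m := |\Pat|$. Since $v$ is periodic we have $|\str(v)| \ge 3\tau - 1$, and $\str(v)$ is a prefix of $\Pat$ with $\lcp(\str(v),\Pat) = |\str(v)| \ge \max(3\tau-1,\rend{v})$ (using $\rend{v}\le|\str(v)|$); hence \cref{lm:pm-lce-2} and \cref{lm:pm-lce-3}\eqref{lm:pm-lce-3-it-1} (applied with ``$\Pat$''$=\str(v)$ and ``$S$''$=\Pat$) show that $\Pat$ is periodic and satisfies $\Lroot(\Pat) = \Lroot(v) =: H$, $\Lhead(\Pat) = \Lhead(v) =: s$, $\Lexp(\Pat) = \Lexp(v) =: k$, $\type(\Pat) = \type(v) = -1$, $\rend{\Pat} = \rend{v} \le |\str(v)| < m$, and $\rendfull{\Pat} = \rendfull{v}$. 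In particular $\ell := \rendfull{\Pat} - 1 = \rendfull{v} - 1$ is exactly the quantity in the statement. From $\mapToTZ(v) = u$, i.e.\ $\str(u) = \Pow(H)\cdot\str(v)[\rendfull{v}\dd|\str(v)|]$, together with $\Pat(\ell\dd m] = \Pat[\rendfull{v}\dd m] = \str(v)[\rendfull{v}\dd|\str(v)|]\cdot c$, it follows that the string $\Pat' := \Pow(H)\cdot\Pat(\ell\dd m]$ associated with $\Pat$ in \cref{lm:st-periodic-nav-aux} equals $\str(u)\,c$.

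Next I would verify the hypothesis $\mathsf{P} := \{j \in \R^{-}_{s,H} : \Lexp(j) = k\} \neq \emptyset$ of \cref{lm:st-periodic-nav-aux}. As $v$ is a node of $\ST$, $\Occ(\str(v),\T) \neq \emptyset$; for any $j$ in this set, \cref{lm:pm-lce} gives $j \in \R_{s,H}$, and \cref{lm:pm-lce-3}\eqref{lm:pm-lce-3-it-2} applied to $\str(v)$ gives $\Lexp(j) = \Lexp(v) = k$ and $\type(j) = \type(v) = -1$, so $j \in \mathsf{P}$. Then \cref{lm:st-periodic-nav-aux} yields $(\LB(\Pat,\T),\UB(\Pat,\T)) = (e_{\mathsf{P}} - c_1, e_{\mathsf{P}} - c_2)$, where $e_{\mathsf{P}}$ and $e_H$ are the right endpoints of the $\SA$‑block of $\mathsf{P}$ and the $(\rlexm_i)$‑block of $\R'^{-}_{H}$, respectively, and $c_t = \rcount{\ARRnontail}{\ell}{e_H} - \rcount{\ARRnontail}{\ell}{p_t}$ with $(p_1,p_2) = (\bpre,\epre)$ the range determined by $\bpre = |\{i : \T[\ARRzlex[i]\dd n]\prec \Pat'\}|$ and $(\bpre\dd\epre] = \{i : \Pat'\text{ is a prefix of }\T[\ARRzlex[i]\dd n]\}$.

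It remains to relate these quantities to $\pseudoInvTZ{\ell}{u'}$ and to split on $\child(u,c)$. Note first that $\Pow(H)$ is a prefix of $\str(u)$, hence of $\str(u')$ whenever $u' = \child(u,c) \ne \nil$, so the $\Lroots$‑element implicit in $\pseudoInvTZ{\ell}{u'}$ is $H$, and its internal quantities $\mathsf{P}$, $e_{\mathsf{P}}$, $b_H$, $e_H$ agree with those above because $\ell\bmod|H| = s$ and $\lfloor\ell/|H|\rfloor = k$. If $\child(u,c) = \nil$, then since $\Pat' = \str(u)c$ and $u$ is an explicit node of the compact trie $\TZ$, no suffix $\T[\ARRzlex[i]\dd n]$ has $\Pat'$ as a prefix, so $\bpre = \epre$, hence $c_1 = c_2$ and $\LB(\Pat,\T) = \UB(\Pat,\T)$; thus $\Occ(\Pat,\T) = \emptyset$ and $\child(v,c) = \nil$, as claimed. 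If $\child(u,c) = u' \ne \nil$, a standard compact‑trie argument (identical to the one used for $\TSSS$ in \cref{lm:st-nonperiodic-child}) gives $\bpre = \lrank(u')$ and $\epre = \rrank(u')$; the only subtlety is that $\str(u')$ may be strictly longer than $\Pat'$, which I would resolve by noting that a suffix is $\prec\str(u')$ and lacks $\Pat'$ as a prefix iff it is $\prec\Pat'$. Substituting $z_1 = \lrank(u') = \bpre$ and $z_2 = \rrank(u') = \epre$ into the definition of $\pseudoInvTZ{\ell}{u'}$ then gives $(b,e) := \pseudoInvTZ{\ell}{u'} = (e_{\mathsf{P}} - c_1, e_{\mathsf{P}} - c_2) = (\LB(\Pat,\T),\UB(\Pat,\T))$. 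If $b \ne e$ then $\Occ(\Pat,\T)\ne\emptyset$, so $\child(v,c)\ne\nil$ and $\repr(\child(v,c)) = (\LB(\Pat,\T),\UB(\Pat,\T)) = (b,e)$; if $b = e$ then $\Occ(\Pat,\T) = \emptyset$, so $\child(v,c) = \nil$ and $\repr(\child(v,c)) = \repr(\nil) = (0,0)$. The main obstacle is precisely the bookkeeping of the last two paragraphs: matching every ingredient in the definition of $\pseudoInvTZ{\ell}{u'}$ with the corresponding output of \cref{lm:st-periodic-nav-aux}, and carrying out the compact‑trie identification $(\bpre,\epre) = (\lrank(u'),\rrank(u'))$ in $\TZ$; the periodicity facts needed are all already packaged in \cref{lm:pm-lce,lm:pm-lce-2,lm:pm-lce-3}.
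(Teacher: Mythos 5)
Your proposal is correct and follows essentially the same route as the paper's proof: you transfer the parameters of $\str(v)$ to $\Pat=\str(v)c$ via \cref{lm:pm-lce-2} and \cref{lm:pm-lce-3}\eqref{lm:pm-lce-3-it-1}, verify $\mathsf{P}\neq\emptyset$ via \cref{lm:pm-lce} and \cref{lm:pm-lce-3}\eqref{lm:pm-lce-3-it-2}, apply \cref{lm:st-periodic-nav-aux} with $\Pat'=\str(u)c$, and identify $(\bpre,\epre)$ with $(\lrank(u'),\rrank(u'))$ in $\TZ$ exactly as the paper does. No gaps.
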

\begin{proof}
  Let $H = \Lroot(v)$, $s = \Lhead(v)$, $k = \Lexp(v)$, and
  $\mathsf{P} = \{j \in \R^{-}_{s,H} : \Lexp(j) = k\}$. We first show
  that $\mathsf{P} \neq \emptyset$. Consider any $j \in \Occ(\str(v), \T)$.
  By \cref{lm:pm-lce}, $j \in \R$, $\Lroot(j) = \Lroot(v) = H$, and
  $\Lhead(j) = \Lhead(v) = s$, i.e., $j \in \R_{s,H}$. Furthermore,
  by $\rend{v} \leq |\str(v)|$ and $\type(v) = -1$ we obtain from
  \cref{lm:pm-lce-3}\eqref{lm:pm-lce-3-it-2} that $\Lexp(j) = \Lexp(v)
  = k$ and $\type(j) = \type(v) = -1$. Thus, $j \in \mathsf{P}$, and hence
  $\mathsf{P} \neq \emptyset$. Let $b_{\mathsf{P}}, e_{\mathsf{P}} \in
  [0 \dd n]$ be such that $\{\SA[i]\}_{i \in (b_{\mathsf{P}} \dd
  e_{\mathsf{P}}]} = \mathsf{P}$, and $b_H, e_H \in [0 \dd q]$ be
  such that $\{\rlexm_i\}_{i \in (b_H \dd e_H]} = \R'^{-}_H$.

  Denote $\Pat = \str(v)c$ and $\Pat' = \Pow(H) \cdot
  \Pat[\rendfull{\Pat} \dd |\Pat|]$.  Using the above notation, we now
  establish the characterization of $(\LB(\Pat, \T), \UB(\Pat, \T))$
  with the help of \cref{lm:st-periodic-nav-aux}. First, we observe
  that since $\str(v)$ is periodic, it follows by \cref{lm:pm-lce-2}
  that $\Pat$ is periodic and it holds $\Lroot(\Pat) = \Lroot(v) = H$
  and $\Lhead(\Pat) = \Lhead(v) = s$. Moreover, since $\rend{v} \leq
  |\str(v)|$ and $\type(v) = -1$, it follows by
  \cref{lm:pm-lce-3}\eqref{lm:pm-lce-3-it-1}, that $\rend{\Pat} =
  \rend{v}$, $\rendfull{\Pat} - 1 = \rendfull{v} - 1 = \ell$,
  $\Lexp(\Pat) = \Lexp(v) = k$, and $\type(\Pat) = \type(v) = -1$. In
  particular, this implies that the assumptions of
  \cref{lm:st-periodic-nav-aux} as satisfied. More precisely,
  $\rend{\Pat} = \rend{v} \leq |\str(v)| \leq |\Pat|$.  On the other
  hand, as shown above, $\{j \in \R^{-}_{s,H} : \Lexp(j) = k\} \neq
  \emptyset$. Observe also that by $\rendfull{\Pat} - 1 = \ell$, we
  have $\Pat' = \Pow(H) \cdot \Pat(\ell \dd |\Pat|]$. Putting all this
  together, by \cref{lm:st-periodic-nav-aux} we obtain that
  $(\LB(\Pat, \T), \UB(\Pat, \T)) = (e_{\mathsf{P}} - c_1,
  e_{\mathsf{P}} - c_2)$, where $c_1 = \rcount{\ARRnontail}{\ell}{e_H}
  - \rcount{\ARRnontail}{\ell}{\bpre}$, $c_2 =
  \rcount{\ARRnontail}{\ell}{e_H} -
  \rcount{\ARRnontail}{\ell}{\epre}$, $\bpre = |\{i \in [1 \dd q] :
  \T[\ARRzlex[i] \dd n] \prec \Pat'\}|$, and $(\bpre \dd \epre] = \{i
  \in [1 \dd q] : \Pat'\text{ is a prefix of }\T[\ARRzlex[i] \dd n]\}$.

  We are now ready to show the first claim. Recall, that by definition
  of $\mapToTZ(v)$, we have $\str(u) = \Pow(H) \cdot \str(v)(\ell \dd
  |\str(v)|]$. Thus, it holds $\str(u)c = \Pat'$. By definition of
  $\TZ$ and $\child(u, c)$, we thus obtain that $\child(u, c) = \nil$
  implies $\epre - \bpre = 0$. Consequently, by the above
  characterization, it holds
  \begin{align*}
    |\Occ(\Pat, \T)|
      &= \UB(\Pat, \T) - \LB(\Pat, \T)\\
      &= (e_{\mathsf{P}} - c_2) - (e_{\mathsf{P}} - c_1)\\
      &= \rcount{\ARRnontail}{\ell}{\epre} -
         \rcount{\ARRnontail}{\ell}{\bpre}\\
      &= 0.
  \end{align*}
  Thus, $\child(v, c) = \nil$.

  Let us now assume $\child(u, c) = u' \neq \nil$. Using the above
  notation, we first show the characterization of
  $\pseudoInvTZ{\ell}{u'}$. First, note that $\Pow(H)$ is a prefix of
  $\str(u')$ (since it is a prefix of $\str(u)$).  Next, note that
  $\ell \bmod |H| = (\rendfull{v} - 1) \bmod |H| = \Lhead(v) = s$ and
  $\lfloor \tfrac{\ell}{|H|} \rfloor = \lfloor \tfrac{\rendfull{v} -
  1}{|H|} \rfloor = \Lexp(v) = k$. As shown above, the set $\{j \in
  \R^{-}_{s,H} : \Lexp(j) = k\}$ is nonempty.  This implies that
  $\pseudoInvTZ{\ell}{u} = (e_{\mathsf{P}} - c'_1, e_{\mathsf{P}} -
  c'_2)$, where $c'_1 = \rcount{\ARRnontail}{\ell}{e_H} -
  \rcount{\ARRnontail}{\ell}{\lrank(u')}$ and $c'_2 =
  \rcount{\ARRnontail}{\ell}{e_H} -
  \rcount{\ARRnontail}{\ell}{\rrank(u')}$.  It remains to observe that
  by definition of $\TZ$ and the facts that $\child(u, c) = u'$ and
  $\str(u)c = \Pat'$, we have $\lrank(u') = \bpre$ and $\rrank(u') =
  \epre$. Thus, we have $c'_1 = c_1$ and $c'_2 = c_2$, and
  consequently
  \begin{align*}
    (\LB(\Pat, \T), \UB(\Pat, \T))
      &= (e_{\mathsf{P}} - c_1, e_{\mathsf{P}} - c_2)\\
      &= (e_{\mathsf{P}} - c'_1, e_{\mathsf{P}} - c'_2)\\
      &= \pseudoInvTZ{\ell}{u'}\\
      &= (b, e).
  \end{align*}
  By the above, if $b \neq e$, then $\Occ(\Pat, \T) \neq
  \emptyset$. This implies $\child(v, c) \neq \nil$ and
  $\repr(\child(v, c)) = (\LB(\Pat, \T), \UB(\Pat, \T))$. We thus
  indeed have $\repr(\child(v, c)) = (b, e)$.  Otherwise (i.e., if $b
  = e$), by the above we have $\Occ(\Pat, \T) = \emptyset$. This
  implies $\child(v, c) = \nil$ and hence indeed we also have
  $\repr(\child(v, c)) = (0, 0)$.
\end{proof}

\begin{remark}
  Note that, similarly as in \cref{lm:st-nonperiodic-child} (see
  \cref{rm:st-nonperiodic-child}), even though in the above result we
  have $\mapToTZ(v) = u$ and $\child(u, c)$ contains information used
  to determine $\child(v, c)$, it does not necessarily hold that
  $\mapToTZ(\child(v, c)) = \child(u, c)$.
\end{remark}

\begin{proposition}\label{pr:st-periodic-child}
  Let $v$ be an explicit periodic internal node of $\ST$. Given the
  data structure from \cref{sec:st-periodic-ds}, $\repr(v)$, and $c
  \in \Alphabet$, in $\bigO(\log \log n)$ time we can compute
  $\repr(\child(v, c))$.
\end{proposition}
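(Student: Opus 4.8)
The plan is to mirror the case analysis used for $\LCA$ on periodic nodes (\cref{pr:st-periodic-lca}): split on whether $\rend{v} > |\str(v)|$ and, when $\rend{v} \le |\str(v)|$, further on $\type(v)$. Write $(b, e) = \repr(v)$, $i_1 = b + 1$, $i_2 = e$, and observe that since $v$ is a periodic internal node, $\SA[i_1], \SA[i_2] \in \Occ(\str(v), \T) \subseteq \R$. First I would recover the run parameters of $v$ from these two boundary suffixes: using \cref{pr:sa-periodic-sa} compute $j_1 = \SA[i_1]$ and $j_2 = \SA[i_2]$ in $\bigO(\log\log n)$ time, then apply \cref{pr:isa-root} to $j_1$ and $j_2$ to obtain $\Lroot(j_1) = \Lroot(j_2) = H$ and $\Lhead(j_1) = \Lhead(j_2) = s$ (equal by \cref{lm:pm-lce}, since $|\str(v)| \ge 3\tau - 1$), together with $\Lexp$, $\Ltail$, and hence $\rend{j_1}, \rend{j_2}$, all in $\bigO(1)$ time. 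By \cref{lm:st-periodic-rend} this yields $\rend{v} = 1 + \min(\rend{j_1} - j_1,\, \rend{j_2} - j_2)$, and the test $\rend{v} \le |\str(v)|$ reduces to comparing the packed-text symbols $\T[j_1 + \rend{v} - 1]$ and $\T[j_2 + \rend{v} - 1]$, again $\bigO(1)$. When the test succeeds I also read off $\type(v)$ by comparing $\T[j_1 + \rend{v} - 1]$ with $\T[j_1 + \rend{v} - 1 - |H|]$ (legitimate since $j_1 \in \Occ(\str(v), \T)$), and I compute $\Lexp(v) = \lfloor (\rend{v} - 1 - s)/|H| \rfloor$ and $\ell := \rendfull{v} - 1 = s + \Lexp(v)|H|$ in $\bigO(1)$.

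Then I would treat the three resulting branches. If $\rend{v} > |\str(v)|$, then $|\str(v)| = \rend{v} - 1$ and \cref{pr:st-periodic-range-2}, invoked with $c$, the structure, and the values $\Lhead(\str(v)) = s$, $\Lroot(\str(v)) = H$, $|\str(v)| = \rend{v} - 1$, returns $(b', e') = (\LB(\str(v)c, \T), \UB(\str(v)c, \T))$ in $\bigO(\log\log n)$ time; since $\child(v, c) \neq \nil$ holds exactly when $\Occ(\str(v)c, \T) \neq \emptyset$, I return $\repr(\child(v, c)) = (b', e')$ if $b' \neq e'$ and $(0, 0)$ otherwise. If $\rend{v} \le |\str(v)|$ and $\type(v) = -1$, I compute a pointer to $u = \mapToTZ(v)$ via \cref{pr:st-periodic-map} in $\bigO(\log\log n)$ time, then query the compact trie $\TZ$ (stored as in \cref{pr:compact-trie}) for $\child(u, c)$ in $\bigO(\log\log n)$ time; if $\child(u, c) = \nil$ then \cref{lm:st-periodic-child} gives $\child(v, c) = \nil$ and I return $(0, 0)$, and otherwise, letting $u' = \child(u, c)$, I compute $(b', e') = \pseudoInvTZ{\ell}{u'}$ with \cref{pr:st-periodic-invmap} (applicable because $\Pow(H)$ is a prefix of $\str(u)$, hence of $\str(u')$), and return $(b', e')$ if $b' \neq e'$ and $(0, 0)$ otherwise, with correctness being exactly \cref{lm:st-periodic-child}. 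The case $\rend{v} \le |\str(v)|$ and $\type(v) = +1$ is handled identically using the symmetric half of the structure from \cref{sec:st-periodic-ds} (adapted according to \cref{lm:lce}/\cref{lm:pm-lce}); recall also that $\rend{v} > |\str(v)|$ forces $\type(v) = -1$, so the three branches are exhaustive. Every step costs $\bigO(\log\log n)$, giving the claimed bound.

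I expect the main obstacle to be essentially bookkeeping rather than new combinatorics: the representation $\repr(v)$ exposes neither $\sdepth(v)$ nor $\Lroot(v)$, $\rend{v}$, or $\type(v)$ directly, so the argument hinges on reconstructing all of these from the two boundary suffixes $\SA[\lrank(v)+1]$ and $\SA[\rrank(v)]$ — which is precisely what \cref{lm:st-periodic-rend} (together with \cref{pr:sa-periodic-sa} and \cref{pr:isa-root}) is designed to permit — and on verifying, without scanning a full pattern, that the preconditions of \cref{pr:st-periodic-map}, \cref{pr:st-periodic-invmap}, and \cref{pr:st-periodic-range-2} are met in the respective branches. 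The correctness of the value produced in each branch is already supplied by \cref{lm:st-periodic-child} and \cref{pr:st-periodic-range-2}, so the remaining work is just checking that $\ell$, $H$, $\Int(H)$, and the $\rend{v} \le |\str(v)|$ test are all computed consistently and in $\bigO(\log\log n)$ time.
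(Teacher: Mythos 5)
Your proposal is correct and follows essentially the same route as the paper's proof: recover $j_1=\SA[\lrank(v)+1]$, $j_2=\SA[\rrank(v)]$ via \cref{pr:sa-periodic-sa} and \cref{pr:isa-root}, use \cref{lm:st-periodic-rend} to compute $\rend{v}$ and test $\rend{v}\le|\str(v)|$, then either invoke \cref{pr:st-periodic-range-2} (when $\rend{v}>|\str(v)|$) or map to $\TZ$ with \cref{pr:st-periodic-map}, take $\child(u,c)$, and apply $\pseudoInvTZ{\ell}{u'}$ via \cref{pr:st-periodic-invmap} with correctness given by \cref{lm:st-periodic-child}, handling $\type(v)=+1$ symmetrically. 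This matches the paper's argument step for step, including the time bounds.
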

\begin{proof}
  Denote $i_1 = \lrank(v) + 1$ and $i_2 = \rrank(v)$.  By
  \cref{lm:st-periodic-rend}\eqref{lm:st-periodic-rend-it-1}, it holds
  $\SA[i_1], \SA[i_2] \in \R$. Using \cref{pr:sa-periodic-sa}, in
  $\bigO(\log \log n)$ time we compute $j_1 = \SA[i_1]$ and $j_2 =
  \SA[i_2]$.  Next, using \cref{pr:isa-root} in $\bigO(1)$ time we
  compute $H = \Lroot(j_1)$, $s = \Lhead(j_1)$, $k_1 = \Lexp(j_1)$,
  $k_2 = \Lexp(j_2)$, $t_1 = \Ltail(j_1)$, and $t_2 = \Ltail(j_2)$.
  Observe that since $v$ is periodic, and $j_1, j_2 \in \Occ(\str(v),
  \T)$, it follows by \cref{lm:lce,lm:pm-lce} that $\Lroot(v) =
  \Lroot(j_2) = H$ and $\Lhead(v) = \Lhead(j_2) = s$. In $\bigO(1)$
  time we thus compute $e_1 := \rend{j_1} = j_1 + s + k_1|H| + t_1$
  and $e_2 := \rend{j_2} = j_2 + s + k_2|H| + t_2$.  Next, in
  $\bigO(1)$ time we compute $e_v := \rend{v} = 1 + \min(e_1 - j_1,
  e_2 - j_2)$ (see
  \cref{lm:st-periodic-rend}\eqref{lm:st-periodic-rend-it-1}).  Using
  \cref{lm:st-periodic-rend}\eqref{lm:st-periodic-rend-it-2}, we then
  in $\bigO(1)$ time check if it holds $\rend{v} \leq |\str(v)|$ by
  comparing $\T[j_1 + e_v - 1]$ with $\T[j_2 + e_v - 1]$. Consider two
  cases:
  \begin{itemize}
  \item Let $\T[j_1 + e_v - 1] = \T[j_2 + e_v - 1]$, i.e., $\rend{v}
    \leq |\str(v)|$. In $\bigO(1)$ time we compute $\type(v)$ by
    comparing $\T[j_1 + e_v - 1]$ with $\T[j_1 + e_v - 1 - |H|]$.  Let
    us assume that $\T[j_1 + e_v - 1] \prec \T[j_1 + e_v - 1 - |H|]$,
    i.e., $\type(v) = -1$ (the case $\type(v) = +1$ it handled
    symmetrically, using the part of the structure from
    \cref{sec:st-periodic-ds} adapted according to
    \cref{lm:lce}). Using \cref{pr:st-periodic-map}, in $\bigO(\log
    \log n)$ time we compute a pointer to $u = \mapToTZ(v)$. Using the
    representation of $\TZ$ stored as part of the structure in
    \cref{sec:st-periodic-ds}, and \cref{pr:compact-trie}, in
    $\bigO(\log \log n)$ time we check if $\child(u, c) = \nil$. If
    so, by \cref{lm:st-periodic-child} we have $\child(v, c) = \nil$,
    and thus we return $\repr(\child(v, c)) = (0, 0)$. Otherwise
    ($\child(u, c) \neq \nil$), we obtain a pointer to $u' = \child(u,
    c)$. In $\bigO(1)$ time we now compute $k := \Lexp(v) = \lfloor
    \tfrac{e_v - 1 - s}{|H|} \rfloor$ and $\ell := \rendfull{v} - 1 =
    s + k|H|$. Using \cref{pr:st-periodic-invmap}, in $\bigO(\log \log
    n)$ time we then compute the pair $(b, e) =
    \pseudoInvTZ{\ell}{u'}$. If $b = e$ then by
    \cref{lm:st-periodic-child} it holds $\child(v, c) = \nil$ and
    hence we return $\repr(\child(v, c)) = (0, 0)$. Otherwise, by
    \cref{lm:st-periodic-child}, it holds $\repr(\child(v, c)) = (b,
    e)$. We thus return $(b, e)$.
  \item Let $\T[j_1 + e_v - 1] \neq \T[j_2 + e_v - 1]$, i.e.,
    $\rend{v} > |\str(v)|$. Denote $\Pat = \str(v)$. We then have
    $\rend{\Pat} > |\Pat|$, $\Lhead(\Pat) = s$, $\Lroot(\Pat) = H$,
    and $|\Pat| = e_v - 1$.  Using \cref{pr:st-periodic-range-2}, in
    $\bigO(\log \log n)$ time we compute $(b, e) = (\LB(\Pat c, \T),
    \UB(\Pat c, \T))$.  If $b = e$, then $\Occ(\Pat, \T) =
    \Occ(\str(v)c, \T) = \emptyset$, and hence $\child(v, c) =
    \nil$. We thus return $\repr(\child(v, c)) = (0, 0)$.  Otherwise,
    we return that $\repr(\child(v, c)) = (b, e)$. \qedhere
  \end{itemize}
\end{proof}

\subsubsection{Implementation of
  \texorpdfstring{$\pred(v, c)$}{pred(v, c)}}\label{sec:st-periodic-pred}

\begin{lemma}\label{lm:st-periodic-pred}
  Let $c \in \Alphabet$ and $v$ be an explicit periodic internal node
  of $\ST$ satisfying $\rend{v} \leq |\str(v)|$ and $\type(v) =
  -1$. Let $u = \mapToTZ(v)$. If $\pred(u,c) = \nil$ then
  $\LB(\str(v)c, \T) = \LB(\str(v), \T)$. Otherwise, letting $u' =
  \pred(u,c)$, it holds
  \vspace{0.5ex}
  \[
    \LB(\str(v)c, \T) = e,
    \vspace{0.5ex}
  \]
  where $(b,e) = \pseudoInvTZ{\ell}{u'}$ and $\ell =
  \rendfull{v} - 1$.
\end{lemma}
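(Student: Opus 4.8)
The plan is to mirror the argument for the nonperiodic case (\cref{lm:st-nonperiodic-pred}), substituting $\TZ$ and $\pseudoInvTZ{\cdot}{\cdot}$ for $\TSSS$ and $\pseudoInvTSSS{\cdot}{\cdot}$, and using \cref{lm:st-periodic-nav-aux} in place of \cref{lm:pm-nonperiodic}. Write $H = \Lroot(v)$, $s = \Lhead(v)$, $k = \Lexp(v)$, and $\ell = \rendfull{v} - 1$. As in the proofs of \cref{lm:st-periodic-child,lm:st-periodic-invmap}, I would first record that $\str(u) = \Pow(H)\cdot \str(v)(\ell \dd |\str(v)|]$ and that the set $\mathsf{P} = \{j \in \R^{-}_{s,H} : \Lexp(j) = k\}$ is nonempty: applying \cref{lm:pm-lce} and \cref{lm:pm-lce-3}\eqref{lm:pm-lce-3-it-2} to any $j \in \Occ(\str(v), \T)$ shows $j \in \R_{s,H}$ with $\Lexp(j) = k$ and $\type(j) = -1$. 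Let $e_{\mathsf{P}}$ and $e_H$ be the quantities appearing in the definition of $\pseudoInvTZ{\cdot}{\cdot}$ (with respect to $H$, $\mathsf{P}$).

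Next I would re-derive the characterization of $\LB(\str(v), \T)$. Since $\str(v)$ is periodic with $\rend{v} \leq |\str(v)|$ and $\type(v) = -1$, \cref{lm:st-periodic-nav-aux} applies to the pattern $\str(v)$, whose associated string $\Pow(H)\cdot \str(v)(\ell \dd |\str(v)|]$ equals $\str(u)$; since $\lrank(u) = |\{i : \T[\ARRzlex[i]\dd n] \prec \str(u)\}|$ by the definition of $\TZ$, the lemma gives $\LB(\str(v),\T) = e_{\mathsf{P}} - \bigl(\rcount{\ARRnontail}{\ell}{e_H} - \rcount{\ARRnontail}{\ell}{\lrank(u)}\bigr)$. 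Then I would turn to the pattern $\Pat := \str(v)c$: by \cref{lm:pm-lce-2} it is periodic with $\Lroot(\Pat) = H$, $\Lhead(\Pat) = s$, and by \cref{lm:pm-lce-3}\eqref{lm:pm-lce-3-it-1} (using $\rend{v}\leq|\str(v)|$ and $\type(v)=-1$) we get $\rend{\Pat} = \rend{v}$, $\rendfull{\Pat} - 1 = \ell$, $\Lexp(\Pat) = k$, and $\type(\Pat) = -1$; in particular $\rend{\Pat}\leq|\str(v)|\leq|\Pat|$, so the hypotheses of \cref{lm:st-periodic-nav-aux} hold and the string it associates with $\Pat$ is $\Pow(H)\cdot \Pat(\ell\dd|\Pat|] = \str(u)c$ (using $\ell < |\str(v)| < |\Pat|$). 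Hence $\LB(\Pat,\T) = e_{\mathsf{P}} - \bigl(\rcount{\ARRnontail}{\ell}{e_H} - \rcount{\ARRnontail}{\ell}{\bpre}\bigr)$, where $\bpre = |\{i\in[1\dd q] : \T[\ARRzlex[i]\dd n] \prec \str(u)c\}|$.

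It then remains to identify $\bpre$ using $\pred(u,c)$, which is the same trie bookkeeping as in \cref{lm:st-nonperiodic-pred}. A suffix $\T[\ARRzlex[i]\dd n]$ is $\prec \str(u)c$ iff it is $\prec \str(u)$, or it has $\str(u)$ as a prefix with $(|\str(u)|+1)$-st character $< c$; the latter family consists exactly of the leaves in subtrees of children of $u$ whose connecting edge starts with a character $< c$. If $\pred(u,c) = \nil$, there are no such children, so $\bpre = \lrank(u)$ and the two displayed formulas coincide, giving $\LB(\str(v)c,\T) = \LB(\str(v),\T)$. If $\pred(u,c) = u' \neq \nil$, then $u' = \child(u, c_{\max})$ for the largest edge character $c_{\max} < c$ present at $u$, so the suffixes $\prec \str(u)c$ are exactly those at positions $[1\dd \rrank(u')]$, i.e.\ $\bpre = \rrank(u')$. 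Finally I would check that $\pseudoInvTZ{\ell}{u'}$ is computed with the same $H$, $\mathsf{P}$, $e_{\mathsf{P}}$, $e_H$: $\Pow(H)$ is a prefix of $\str(u')$ (being a prefix of $\str(u)$), and $\ell \bmod |H| = s$, $\lfloor \ell/|H|\rfloor = k$, so the set $\mathsf{P}$ in that definition is the same nonempty set. Therefore $e = e_{\mathsf{P}} - \bigl(\rcount{\ARRnontail}{\ell}{e_H} - \rcount{\ARRnontail}{\ell}{\rrank(u')}\bigr) = \LB(\str(v)c,\T)$, which is the claim.

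The main obstacle I anticipate is not any single deep step but the careful verification that every quantity occurring in $\pseudoInvTZ{\ell}{u'}$ matches the one produced by \cref{lm:st-periodic-nav-aux} for $\Pat = \str(v)c$ — in particular that the $\SA$-block $(b_{\mathsf{P}}\dd e_{\mathsf{P}}]$ and the $\Lroot$-block $(b_H\dd e_H]$ are literally the same in both invocations, which hinges on $\Lroot$, $\Lhead$, $\Lexp$, $\type$ being preserved when passing from $v$ to $\str(v)c$ and on $\Pow(H)$ still prefixing $\str(u')$. Once those identifications are in place, the $\mathsf{P}\neq\emptyset$ check and the correspondence between $\pred(u,c)$ and $\bpre$ run exactly parallel to \cref{lm:st-nonperiodic-pred}.
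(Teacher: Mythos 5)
Your proposal is correct and follows essentially the same route as the paper's proof: characterize $\LB(\str(v),\T)$ and $\LB(\str(v)c,\T)$ via \cref{lm:st-periodic-nav-aux} (using that $\str(u)c$ is exactly the string $\Pow(H)\cdot\Pat(\ell\dd|\Pat|]$ associated with $\Pat=\str(v)c$), identify the quantity $\bpre$ with $\lrank(u)$ or $\rrank(\pred(u,c))$ from the trie structure of $\TZ$, and match the result against the definition of $\pseudoInvTZ{\ell}{u'}$. The only cosmetic difference is that you re-derive the preservation of $\Lroot$, $\Lhead$, $\Lexp$, $\type$ and the nonemptiness of $\mathsf{P}$ directly from \cref{lm:pm-lce,lm:pm-lce-2,lm:pm-lce-3}, whereas the paper imports these facts from the proof of \cref{lm:st-periodic-child}.
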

\begin{proof}
  We start by characterizing $\LB(\str(v), \T)$.  Let $H = \Lroot(v)$,
  $s = \Lhead(v)$, $k = \Lexp(v)$, and $\mathsf{P} = \{j \in
  \R^{-}_{s,H} : \Lexp(j) = k\}$.  In the proof of
  \cref{lm:st-periodic-child}, we showed that $\mathsf{P} \neq
  \emptyset$. Let $b_{\mathsf{P}}, e_{\mathsf{P}} \in [0 \dd n]$ be
  such that $\{\SA[i]\}_{i \in (b_{\mathsf{P}} \dd e_{\mathsf{P}}]} =
  \mathsf{P}$, and $b_H, e_H \in [0 \dd q]$ be such that
  $\{\rlexm_i\}_{i \in (b_H \dd e_H]} = \R'^{-}_H$.  We now
  additionally note that by definition of $\mapToTZ(v)$, we have
  $\str(u) = \Pow(H) \cdot \str(v)(\ell \dd |\str(v)|]$. Thus, by
  definition of $\TZ$, it holds $|\{i \in [1 \dd q] : \T[\ARRzlex[i]
  \dd n] \prec \Pow(H) \cdot \str(v)(\ell \dd |\str(v)|]\}| =
  \lrank(u)$.  By \cref{lm:st-periodic-nav-aux} for pattern $\str(v)$,
  we thus obtain $\LB(\str(v), \T) = e_{\mathsf{P}} -
  (\rcount{\ARRnontail}{\ell}{e_H} -
  \rcount{\ARRnontail}{\ell}{\lrank(u)})$.

  Next, we characterize $\LB(\str(v)c, \T)$.  Denote $\Pat = \str(v)c$
  and $\Pat' = \Pow(H) \cdot \Pat[\rendfull{\Pat} \dd |\Pat|]$. In the
  proof of \cref{lm:st-periodic-child}, we observed that $\Pat$ is
  periodic and it holds $\Lroot(\Pat) = \Lroot(v) = H$, $\Lhead(\Pat)
  = \Lhead(v) = s$, $\rend{\Pat} = \rend{v}$, $\rendfull{\Pat} - 1 =
  \rendfull{v} - 1 = \ell$, $\Lexp(\Pat) = \Lexp(v) = k$, and
  $\type(\Pat) = \type(v) = -1$. Moreover, we noted that $\rend{\Pat}
  \leq |\Pat|$ and $\Pat' = \Pow(H) \cdot \Pat(\ell \dd
  |\Pat|]$. Finally, putting all this together, we observed that
  $(\LB(\Pat, \T), \UB(\Pat, \T)) = (e_{\mathsf{P}} - c_1,
  e_{\mathsf{P}} - c_2)$, where $c_1 = \rcount{\ARRnontail}{\ell}{e_H}
  - \rcount{\ARRnontail}{\ell}{\bpre}$, $c_2 =
  \rcount{\ARRnontail}{\ell}{e_H} -
  \rcount{\ARRnontail}{\ell}{\epre}$, $\bpre = |\{i \in [1 \dd q] :
  \T[\ARRzlex[i] \dd n] \prec \Pat'\}|$, and $(\bpre \dd \epre] = \{i
  \in [1 \dd q] : \Pat'\text{ is a prefix of }\T[\ARRzlex[i] \dd
  n]\}$.

  Let us first assume $\pred(u, c) = \nil$. By definition, this
  implies $|\{i \in [1 \dd q] : \T[\ARRzlex[i] \dd n] \prec
  \str(u)c\}| = \lrank(u)$.  Recall, however, that $\str(u) = \Pow(H)
  \cdot \str(v)(\ell \dd |\str(v)|]$. Thus, $\str(u)c = \Pat'$ and
  consequently $\bpre = \lrank(u)$. Using the above characterization,
  we thus have $\LB(\str(v)c, \T) = e_{\mathsf{P}} -
  (\rcount{\ARRnontail}{\ell}{e_H} -
  \rcount{\ARRnontail}{\ell}{\lrank(u)})$.  Since above we also
  established that $\LB(\str(v), \T) = e_{\mathsf{P}} -
  (\rcount{\ARRnontail}{\ell}{e_H} -
  \rcount{\ARRnontail}{\ell}{\lrank(u)})$, we have thus proved that
  $\pred(u, c) = \nil$ implies $\LB(\str(v)c, \T) = \LB(\str(v), \T)$.

  Let us now assume $\pred(u, c) = u' \neq \nil$. By definition of
  $\pred(u, c)$, this implies $|\{i \in [1 \dd q] : \T[\ARRzlex[i] \dd
  n] \prec \str(u)c\}| = \rrank(u')$.  By recalling again that
  $\str(u)c = \Pat'$, we thus have $\bpre = \rrank(u')$. By the above
  characterization, we thus have $\LB(\str(v)c, \T) = e_{\mathsf{P}} -
  (\rcount{\ARRnontail}{\ell}{e_H} -
  \rcount{\ARRnontail}{\ell}{\rrank(u')})$.  On the other hand, by
  definition of $(b, e) = \pseudoInvTZ{\ell}{u'}$, we have $e =
  e_{\mathsf{P}} - (\rcount{\ARRnontail}{\ell}{e_H} -
  \rcount{\ARRnontail}{\ell}{\rrank(u')})$. We thus obtain
  $\LB(\str(v)c, \T) = e$.
\end{proof}

\begin{proposition}\label{pr:st-periodic-pred}
  Let $v$ be an explicit periodic internal node of $\ST$.  Given the
  data structure from \cref{sec:st-periodic-ds}, $\repr(v)$, and $c
  \in \Alphabet$, in $\bigO(\log \log n)$ time we can compute
  $\LB(\str(v)c, \T)$.
\end{proposition}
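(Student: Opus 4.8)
The plan is to follow the proof of \cref{pr:st-periodic-child} almost verbatim, replacing the $\child$ operation on $\TZ$ with $\pred$ and invoking \cref{lm:st-periodic-pred} in place of \cref{lm:st-periodic-child}. First I would set $i_1 = \lrank(v) + 1$ and $i_2 = \rrank(v)$ (read off the input pair $\repr(v)$); by \cref{lm:st-periodic-rend}\eqref{lm:st-periodic-rend-it-1} we have $\SA[i_1], \SA[i_2] \in \R$, so \cref{pr:sa-periodic-sa} computes $j_1 = \SA[i_1]$ and $j_2 = \SA[i_2]$ in $\bigO(\log \log n)$ time. Applying \cref{pr:isa-root} to $j_1$ and $j_2$ yields $H = \Lroot(j_1)$, $s = \Lhead(j_1)$, and the $\Lexp$ and $\Ltail$ values of both positions; since $v$ is periodic and $j_1, j_2 \in \Occ(\str(v), \T)$, \cref{lm:lce,lm:pm-lce} give $\Lroot(v) = H$ and $\Lhead(v) = s$. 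I then form $\rend{j_1}$, $\rend{j_2}$ and, by \cref{lm:st-periodic-rend}\eqref{lm:st-periodic-rend-it-1}, the value $e_v := \rend{v} = 1 + \min(\rend{j_1} - j_1, \rend{j_2} - j_2)$; by \cref{lm:st-periodic-rend}\eqref{lm:st-periodic-rend-it-2} I decide whether $\rend{v} \le |\str(v)|$ by comparing $\T[j_1 + e_v - 1]$ with $\T[j_2 + e_v - 1]$. All of this costs $\bigO(\log \log n)$.

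In the case $\rend{v} \le |\str(v)|$, I compute $\type(v)$ by comparing $\T[j_1 + e_v - 1]$ with $\T[j_1 + e_v - 1 - |H|]$ and, assuming $\type(v) = -1$ (the case $\type(v) = +1$ uses the symmetric counterpart structure from \cref{sec:st-periodic-ds}), I use \cref{pr:st-periodic-map} to obtain a pointer to $u = \mapToTZ(v)$ in $\bigO(\log \log n)$ time. Using the representation of $\TZ$ from \cref{pr:compact-trie}, in $\bigO(\log \log n)$ time I test whether $\pred(u, c) = \nil$. If it is, \cref{lm:st-periodic-pred} gives $\LB(\str(v)c, \T) = \LB(\str(v), \T) = \lrank(v)$, which is the first coordinate of $\repr(v)$ and is returned directly. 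Otherwise I retrieve $u' = \pred(u, c)$, compute $k := \Lexp(v) = \lfloor (e_v - 1 - s)/|H| \rfloor$ and $\ell := \rendfull{v} - 1 = s + k|H|$, and call \cref{pr:st-periodic-invmap} to obtain $(b, e) = \pseudoInvTZ{\ell}{u'}$ in $\bigO(\log \log n)$ time; \cref{lm:st-periodic-pred} then yields $\LB(\str(v)c, \T) = e$.

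If instead $\rend{v} > |\str(v)|$, then $\rend{v} = |\str(v)| + 1$, so the occurrences of $\str(v)$ extend past the end of the periodic run and $\TZ$ (which indexes only shifted run-ends) cannot help; here I invoke \cref{pr:st-periodic-range-2} directly, with $\Pat := \str(v)$, $\rend{\Pat} > |\Pat|$, $\Lhead(\Pat) = s$, $\Lroot(\Pat) = H$, and $|\Pat| = e_v - 1$, to compute $(\LB(\Pat c, \T), \UB(\Pat c, \T))$ in $\bigO(\log \log n)$ time, and return the first coordinate. Summing the per-step costs gives the claimed $\bigO(\log \log n)$ bound, so the proof is a routine adaptation rather than a new argument. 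The points that deserve the most care are: checking that the first case supplies exactly the hypotheses $\rend{v} \le |\str(v)|$ and $\type(v) = -1$ required by \cref{lm:st-periodic-pred}; verifying the translation of $\pred(u, c) = \nil$ into $\LB(\str(v)c, \T) = \lrank(v)$, which rests on the identity $\str(u) = \Pow(H) \cdot \str(v)(\ell \dd |\str(v)|]$ and the defining property of $\pred$ in $\TZ$; and confirming that the boundary case $\rend{v} > |\str(v)|$ genuinely lies outside the domain of $\mapToTZ$ and must therefore be routed through \cref{pr:st-periodic-range-2}.
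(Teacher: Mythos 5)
Your proposal is correct and follows essentially the same route as the paper's proof: compute $j_1, j_2$ via $\SA$ queries, recover $H$, $s$, and $\rend{v}$ via \cref{pr:isa-root} and \cref{lm:st-periodic-rend}, branch on whether $\rend{v} \leq |\str(v)|$, handle the first case through $\mapToTZ$, $\pred$ in $\TZ$, \cref{lm:st-periodic-pred}, and \cref{pr:st-periodic-invmap}, and route the second case through \cref{pr:st-periodic-range-2}. One small remark: in the $\pred(u,c) = \nil$ subcase your answer $\lrank(v)$ is the one consistent with \cref{lm:st-periodic-pred} (the paper's proof text says to return $\rrank(v)$, which appears to be a typo).
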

\begin{proof}
  Denote $i_1 = \lrank(v) + 1$ and $i_2 = \rrank(v)$.  By
  \cref{lm:st-periodic-rend}\eqref{lm:st-periodic-rend-it-1}, it holds
  $\SA[i_1], \SA[i_2] \in \R$. Using \cref{pr:sa-periodic-sa}, in
  $\bigO(\log \log n)$ time we compute $j_1 = \SA[i_1]$ and $j_2 =
  \SA[i_2]$.  Next, using \cref{pr:isa-root} in $\bigO(1)$ time we
  compute $H = \Lroot(j_1)$, $s = \Lhead(j_1)$, $k_1 = \Lexp(j_1)$,
  $k_2 = \Lexp(j_2)$, $t_1 = \Ltail(j_1)$, and $t_2 = \Ltail(j_2)$.
  Observe that since $v$ is periodic, and $j_1, j_2 \in \Occ(\str(v),
  \T)$, it follows by \cref{lm:lce,lm:pm-lce} that $\Lroot(v) =
  \Lroot(j_2) = H$ and $\Lhead(v) = \Lhead(j_2) = s$. In $\bigO(1)$
  time we thus compute $e_1 := \rend{j_1} = j_1 + s + k_1|H| + t_1$
  and $e_2 := \rend{j_2} = j_2 + s + k_2|H| + t_2$.  Next, in
  $\bigO(1)$ time we compute $e_v := \rend{v} = 1 + \min(e_1 - j_1,
  e_2 - j_2)$ (see
  \cref{lm:st-periodic-rend}\eqref{lm:st-periodic-rend-it-1}).  Using
  \cref{lm:st-periodic-rend}\eqref{lm:st-periodic-rend-it-2}, we then
  in $\bigO(1)$ time check if it holds $\rend{v} \leq |\str(v)|$ by
  comparing $\T[j_1 + e_v - 1]$ with $\T[j_2 + e_v - 1]$. Consider two
  cases:
  \begin{itemize}
  \item Let $\T[j_1 + e_v - 1] = \T[j_2 + e_v - 1]$, i.e., $\rend{v}
    \leq |\str(v)|$. In $\bigO(1)$ time we compute $\type(v)$ by
    comparing $\T[j_1 + e_v - 1]$ with $\T[j_1 + e_v - 1 - |H|]$.  Let
    us assume that $\T[j_1 + e_v - 1] \prec \T[j_1 + e_v - 1 - |H|]$,
    i.e., $\type(v) = -1$ (the case $\type(v) = +1$ it handled
    symmetrically, using the part of the structure from
    \cref{sec:st-periodic-ds} adapted according to
    \cref{lm:lce}). Using \cref{pr:st-periodic-map}, in $\bigO(\log
    \log n)$ time we compute a pointer to $u = \mapToTZ(v)$. Using the
    representation of $\TZ$ stored as part of the structure in
    \cref{sec:st-periodic-ds}, and \cref{pr:compact-trie}, in
    $\bigO(\log \log n)$ time we check if $\pred(u, c) = \nil$. If so,
    by \cref{lm:st-periodic-pred} we have $\LB(\str(v)c, \T) =
    \LB(\str(v), \T)$, and thus we return $\rrank(v)$ as the
    answer. Otherwise ($\pred(u, c) \neq \nil$), we obtain a pointer
    to $u' = \pred(u, c)$. In $\bigO(1)$ time we now compute $k :=
    \Lexp(v) = \lfloor \tfrac{e_v - 1 - s}{|H|} \rfloor$ and $\ell :=
    \rendfull{v} - 1 = s + k|H|$. Using \cref{pr:st-periodic-invmap},
    in $\bigO(\log \log n)$ time we then compute the pair $(b, e) =
    \pseudoInvTZ{\ell}{u'}$. By \cref{lm:st-periodic-pred}, we then
    have $\LB(\str(v)c, \T) = e$. Thus, we return $e$ as the answer.
  \item Let $\T[j_1 + e_v - 1] \neq \T[j_2 + e_v - 1]$, i.e.,
    $\rend{v} > |\str(v)|$. Denote $\Pat = \str(v)$. We then have
    $\rend{\Pat} > |\Pat|$, $\Lhead(\Pat) = s$, $\Lroot(\Pat) = H$,
    and $|\Pat| = e_v - 1$.  Using \cref{pr:st-periodic-range-2}, in
    $\bigO(\log \log n)$ time we compute $(b, e) = (\LB(\Pat c, \T),
    \UB(\Pat c, \T))$. We then return $b$ as the answer. \qedhere
  \end{itemize}
\end{proof}

\subsubsection{Implementation of
  \texorpdfstring{$\WA(v,d)$}{WA(v, d)}}\label{sec:st-periodic-wa}

\begin{lemma}\label{lm:st-periodic-wa}
  Let $v$ be an explicit periodic node of $\ST$ satisfying $\type(v) =
  -1$ and $d$ be an integer satisfying $\rend{v} \leq d \leq
  |\str(v)|$. Then, letting $u = \mapToTZ(v)$, it holds
  \[
    \repr(\WA(v, d)) = \pseudoInvTZ{\ell}{\widehat{u}},
  \]
  where $\ell = \rendfull{v} - 1$, $H = \Lroot(v)$, and $\widehat{u} =
  \WA(u, d - \ell + |\Pow(H)|)$.
\end{lemma}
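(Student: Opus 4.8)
The plan is to follow the template of the proof of \cref{lm:st-nonperiodic-wa}, replacing the auxiliary trie $\TSSS$ by $\TZ$ and the $\mathrm{pseudoinv}$ map over $\D$ by $\pseudoInvTZ{\ell}{\cdot}$. Write $H = \Lroot(v)$, $s = \Lhead(v)$, $k = \Lexp(v)$, so that $\ell = \rendfull{v} - 1 = s + k|H|$, $\ell \bmod |H| = s$, and $\lfloor \ell/|H| \rfloor = k$; set $u = \mapToTZ(v)$. Denoting $f^{(i)}$ for $i$-fold composition, define
\[
  \mathcal{V} := \{\parent^{(i)}(v) : i \in \Zz\text{ and }\sdepth(\parent^{(i)}(v)) \ge d\},\qquad
  \mathcal{U} := \{\parent^{(i)}(u) : i \in \Zz\text{ and }\sdepth(\parent^{(i)}(u)) \ge d - \ell + |\Pow(H)|\},
\]
and $\mathcal{U}' := \{\mapToTZ(v') : v' \in \mathcal{V}\}$. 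First I would check that every $v' \in \mathcal{V}$ is a periodic node with $\mapToTZ(v')$ well-defined: since $\str(v')$ is a prefix of $\str(v)$ of length $\geq d \geq \rend{v}$, \cref{lm:pm-lce-2} and \cref{lm:pm-lce-3}\eqref{lm:pm-lce-3-it-1} give $\Lroot(v') = H$, $\Lhead(v') = s$, $\rend{v'} = \rend{v} \le |\str(v')|$, $\rendfull{v'} - 1 = \ell$, $\Lexp(v') = k$, and $\type(v') = -1$. Moreover the defining relation $\str(\mapToTZ(v')) = \Pow(H)\cdot\str(v')[\rendfull{v'}\dd\sdepth(v')]$ shows $\mapToTZ(v')$ is an ancestor of $u$ with $\sdepth(\mapToTZ(v')) = |\Pow(H)| + \sdepth(v') - \ell \ge d - \ell + |\Pow(H)|$, so $\mathcal{U}' \subseteq \mathcal{U}$ and, since distinct ancestors have distinct string-depths, $v' \mapsto \mapToTZ(v')$ is injective on $\mathcal{V}$. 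Finally, $\widehat{u} = \WA(u, d-\ell+|\Pow(H)|)$ is well-defined (using $\ell + 1 \le \rend{v} \le d \le \sdepth(v)$) and is the shallowest element of $\mathcal{U}$.

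The crux, mirroring the corresponding step of \cref{lm:st-nonperiodic-wa}, is: if $w, w' \in \mathcal{U}$ with $w' = \parent(w)$ and $\pseudoInvTZ{\ell}{w} \ne \pseudoInvTZ{\ell}{w'}$, then $w' \in \mathcal{U}'$. I would prove this as follows. Let $Q'' = \str(w')[|\Pow(H)|+1 \dd \sdepth(w')]$ (the part of $\str(w')$ after the prefix $\Pow(H)$); because $w'$ is an ancestor of $u$ and $\str(u) = \Pow(H)\cdot\str(v)[\rendfull{v}\dd\sdepth(v)]$, the string $Q'' = \str(v)[\ell+1 \dd \ell+|Q''|]$ is a prefix of $\str(v)[\rendfull{v}\dd\sdepth(v)]$, hence $Q := \Pref_{\ell}(s,H)\cdot Q'' = \str(v)[1\dd \ell+|Q''|]$ is a prefix of $\str(v)$ of length $\ell + |Q''| = \ell + \sdepth(w') - |\Pow(H)| \ge d \ge \rend{v}$. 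By \cref{lm:pm-lce-3}\eqref{lm:pm-lce-3-it-1}, $Q$ is periodic with $\rend{Q} = \rend{v} \le |Q|$, $\Lroot(Q) = H$, $\Lhead(Q) = s$, $\Lexp(Q) = k$, $\rendfull{Q} - 1 = \ell$, and $\type(Q) = -1$, and $\str(w') = \Pow(H)\cdot Q[\rendfull{Q}\dd|Q|]$. The set $\{j \in \R^{-}_{s,H} : \Lexp(j) = k\}$ is nonempty (it contains $\Occ(\str(v),\T)$ by \cref{lm:pm-lce-3}\eqref{lm:pm-lce-3-it-2}), so \cref{lm:st-periodic-nav-aux} applies to $Q$ and, together with the description of $\TZ$-ranks, gives $\{\SA[i] : i \in (\fbeg(w') \dd \fend(w')]\} = \Occ(Q,\T)$, where $(\fbeg(w'),\fend(w')) := \pseudoInvTZ{\ell}{w'}$; the analogous identity holds for $w$ and for $u$. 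As in the nonperiodic argument, $\pseudoInvTZ{\ell}{w} \ne \pseudoInvTZ{\ell}{w'}$ forces a child $w'' \ne w$ of $w'$ with $\fend(w'') > \fbeg(w'')$, which together with $\fbeg(w') \le \fbeg(u) < \fend(u) \le \fend(w')$ exhibits two distinct letters $c \ne c'$ with $Qc$ and $Qc'$ both occurring in $\T$; hence $Q$ is the label of an explicit node $v'$ of $\ST$ with $\repr(v') = (\fbeg(w'),\fend(w'))$, and since $\str(v') = Q$ is a prefix of $\str(v)$ of length $\ge \rend{v}$ it is periodic with the above parameters, so $\mapToTZ(v') = w'$ by the defining relation, i.e.\ $w' \in \mathcal{U}'$.

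With this in hand the conclusion follows exactly as in \cref{lm:st-nonperiodic-wa}: let $v' = \WA(v,d) \in \mathcal{V}$ and $u' = \mapToTZ(v')$; since $\sdepth(u') = |\Pow(H)| + \sdepth(v') - \ell \ge d - \ell + |\Pow(H)|$, the node $\widehat{u}$ is an ancestor of (or equal to) $u'$ and is reached from $u'$ by $\parent$-steps through $\mathcal{U}$. No node strictly between $u'$ and $\widehat{u}$ lies in $\mathcal{U}'$, for such a node $\mapToTZ(v'')$ would yield an ancestor $v''$ of $v$ with $\sdepth(v'') = \sdepth(\mapToTZ(v'')) - |\Pow(H)| + \ell \ge d$ strictly shallower than $\WA(v,d)$, contradicting minimality; so by the contrapositive of the crux property $\pseudoInvTZ{\ell}{\cdot}$ is constant along this path, whence $\pseudoInvTZ{\ell}{\widehat{u}} = \pseudoInvTZ{\ell}{u'}$, which by \cref{lm:st-periodic-invmap} equals $\repr(v') = \repr(\WA(v,d))$. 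I expect the main obstacle to be the second step of the crux property — reconstructing the explicit $\ST$-node $v'$ from $w'$ — which is more delicate than in the nonperiodic case: one must pin down the thresholds defining $\mathcal{V}$ and $\mathcal{U}$ so that the reconstructed $Q$ is a prefix of $\str(v)$ long enough to invoke \cref{lm:pm-lce-3} (so that $Q$ genuinely inherits $\Lroot(Q)=H$, $\Lhead(Q)=s$, $\Lexp(Q)=k$, $\rendfull{Q}-1=\ell$), and one must carry the two-children / branching argument through the extra $\Pow(H)$ padding present inside $\TZ$.
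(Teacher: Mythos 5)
Your overall plan is the paper's: the same sets $\mathcal{V},\mathcal{U},\mathcal{U}'$ (your thresholds $d$ and $d-\ell+|\Pow(H)|$ are more restrictive than the paper's $\rend{v}$ and $|\Pow(H)|+\Ltail(v)+1$, which is harmless and even simplifies the bookkeeping), the same crux property that a change of $\pseudoInvTZ{\ell}{\cdot}$ between a node and its parent forces the parent into $\mathcal{U}'$, the same telescoping along the path from $\mapToTZ(\WA(v,d))$ up to $\widehat{u}$, and the same final appeal to \cref{lm:st-periodic-invmap}. The identification of $(\fbeg(w'),\fend(w'))$ with $(\LB(Q,\T),\UB(Q,\T))$ via \cref{lm:st-periodic-nav-aux}, and the reconstruction of an explicit $\ST$-node from $w'$ once two distinct extensions of $Q$ occur, are also exactly as in the paper.

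The gap is the branching step, precisely the point you flag at the end and leave unresolved. Your identity $\{\SA[i]: i\in(\fbeg(y)\dd\fend(y)]\}=\Occ(Q_y,\T)$ is derived only for $y\in\{u,w,w'\}$, i.e., for ancestors of $u$, because your derivation needs the associated string $Q_y$ (obtained by replacing the $\Pow(H)$-prefix of $\str(y)$ with $\str(v)[1\dd\ell]$) to be a \emph{prefix} of $\str(v)$. But the second letter must be exhibited from an off-path child $w''\neq w$ of $w'$, which is not an ancestor of $u$, so $Q_{w''}$ is not a prefix of $\str(v)$, and ``as in the nonperiodic argument'' does not close this: in $\TSSS$ a positive prefix-rank difference directly yields an occurrence preceded by $X[1\dd\deltatext]$, whereas in $\TZ$ the quantity $\fend(w'')-\fbeg(w'')$ is a range-count over $\ARRnontail$, and converting a positive count into an occurrence of $Qc'$ requires first showing that $Q_{w''}$ is periodic with $\Lroot(Q_{w''})=H$, $\Lhead(Q_{w''})=s$, $\Lexp(Q_{w''})=k$, $\rendfull{Q_{w''}}-1=\ell$, and $\type(Q_{w''})=-1$, and then applying \cref{lm:occ-s} (equivalently \cref{lm:st-periodic-nav-aux}) to that pattern. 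The paper handles this by proving its Step-1 counting identity for \emph{every} $\TZ$-node $y$ whose label begins with $\Pow(H)\cdot\str(v)(\ell\dd\rend{v}]$ -- explicitly including nodes outside $\mathcal{U}$ -- invoking \cref{lm:pm-lce-3}\eqref{lm:pm-lce-3-it-1} with only the length-$\rend{v}$ agreement $\lcp(Q_y,\str(v))\ge\rend{v}$, which is available for the children of $w'$ because $\sdepth(w')\ge d-\ell+|\Pow(H)|\ge|\Pow(H)|+\Ltail(v)+1$. Adding this more general form of your identity, together with the one-line observation that the counts $\fend(\cdot)-\fbeg(\cdot)$ over the children of $w'$ sum to $\fend(w')-\fbeg(w')$ (so some off-path child has positive count), completes your argument.
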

\begin{proof}
  As in the proof of \cref{lm:st-nonperiodic-wa}, let us denote
  $f^{(0)}(x) = x$ and $f^{(i)}(x) = f(f^{(i-1)}(x))$ for $i \in
  \Zp$. Let
  \vspace{-1ex}
  \begin{align*}
    \mathcal{V} &:=
      \{\parent^{(i)}(v) : i \in\Zz\text{ and }
      \sdepth(\parent^{(i)}(v)) \geq \rend{v}\}\text{ and }\\
    \mathcal{U} &:=
      \{\parent^{(i)}(u) : i \in \Zz\text{ and }
      \sdepth(\parent^{(i)}(u)) \geq |\Pow(H)|{+}\Ltail(v){+}1\}
  \end{align*}
  By $\rend{v} \leq |\str(v)|$, $\type(v) = -1$, and
  \cref{lm:pm-lce-3}\eqref{lm:pm-lce-3-it-1}, for every $v' \in
  \mathcal{V}$ it holds that $\str(v')$ is periodic, and we have
  $\rend{v'} = \rend{v} \leq |\str(v')|$, $\type(v') = \type(v) = -1$,
  and $\rendfull{v'} = \rendfull{v} = \ell + 1$.  Thus, for every $v'
  \in \mathcal{V}$, the node $u' = \mapToTZ(v')$ is well-defined and
  satisfies $\str(u') = \Pow(H) \cdot \str(v')[\rendfull{v'} \dd
  |\str(v')|] = \Pow(H) \cdot \str(v')(\ell \dd |\str(v')|]$. In
  particular, $\str(u) = \Pow(H) \cdot \str(v)(\ell \dd |\str(v)|]$.
  Since for any $v' \in \mathcal{V}$, $\str(v') = \str(v)[1 \dd
  |\str(v')|]$, we thus obtain that for $u' = \mapToTZ(v')$ it holds
  $\str(u') = \Pow(H) \cdot \str(v')(\ell \dd |\str(v')|] = \Pow(H)
  \cdot \str(v)(\ell \dd |\str(v')|] = \str(u)[1 \dd |\str(u')|]$.
  i.e., $u'$ is an ancestor of $u$.  Moreover, $\sdepth(u') =
  |\Pow(H)| + |\str(v')| - \rendfull{v'} + 1 = |\Pow(H)| + |\str(v')|
  - \rendfull{v} + 1 \geq |\Pow(H)| + \rend{v} - \rendfull{v} + 1 =
  |\Pow(H)| + \Ltail(v) + 1$.  Consequently, $\mathcal{U}' :=
  \{\mapToTZ(v') : v' \in \mathcal{V}\}$ satisfies $\mathcal{U}' \sub
  \mathcal{U}$. Note also, that $v' \neq v''$ implies $\mapToTZ(v')
  \neq \mapToTZ(v'')$.

  For any $u' \in \mathcal{U}$, denote $(\fbeg(u'), \fend(u')) =
  \pseudoInvTZ{\ell}{u'}$.  We prove the following property of
  $\mathcal{U}'$. Let $w, w' \in \mathcal{U}$ be such that $w =
  \parent(w')$. We claim, that $(\fbeg(w), \fend(w)) \neq (\fbeg(w'),
  \fend(w'))$ implies $w \in \mathcal{U}'$. The proof consists of five
  steps:
  \begin{enumerate}

  \item\label{lm:st-periodic-wa-it-1}
    For any node $y$ of $\TZ$ such that $\Pow(H)$ is a prefix of
    $\str(y)$, by $S_y$ we denote a string such that $\str(y) =
    \Pow(H) \cdot S_y$.  Let $P$ and $P'$ be such that $PP'$ is a
    prefix of $\str(v)$, and it holds $|P| = \ell$ and $|P'| =
    \Ltail(v) + 1$ (which is defined by $\rend{v} \leq |\str(v)|$).
    Consider any node $y$ of $\TZ$ such that $\Pow(H) \cdot P'$ is a
    prefix of $\str(y)$ (note that although this includes all nodes in
    $\mathcal{U}$, it is possible that $y \not\in \mathcal{U}$).  We
    prove that for any such $y$, it holds $|\Occ(PS_y, \T)| =
    \rcount{\ARRnontail}{\ell}{\rrank(y)} -
    \rcount{\ARRnontail}{\ell}{\lrank(y)}$.  First, observe that since
    $\rend{\str(v)} = |PP'|$, we obtain by $\lcp(PS_y, \str(v)) \geq
    |PP'|$ and \cref{lm:pm-lce-3}\eqref{lm:pm-lce-3-it-1}, that $PS_y$
    is periodic, $\rend{PS_y} = \rend{\str(v)} = |PP'| \leq |PS_y|$,
    $\rendfull{PS_y} = \rendfull{\str(v)} = \ell + 1$, and
    $\type(PS_y) = \type(\str(v)) = -1$.  By \cref{lm:occ},
    $\Occ(PS_{y}, \T)$ is thus a disjoint union of $\Occa(PS_{y}, \T)$
    and $\Occs(PS_{y}, \T)$ (see the beginning of
    \cref{sec:pm-periodic-pm} for definitions).  By $\rend{PS_y} \leq
    |PS_y|$, \cref{lm:occ-a} and its symmetric version (adapted
    according to \cref{lm:pm-lce}) moreover imply that $\Occa(PS_y,
    \T) = \emptyset$. Finally, by $\type(PS_y) = -1$ and
    \cref{lm:pm-lce-3}\eqref{lm:pm-lce-3-it-2}, it follows that
    $\Occ(PS_y, \T) \sub \R^{-}$.  Thus, $\Occs(PS_y, \T) =
    \Occsm(PS_y, \T)$ and consequently $\Occ(PS_y, \T) = \Occsm(PS_y,
    \T)$.  It thus remains to prove $|\Occsm(PS_y, \T)| =
    \rcount{\ARRnontail}{\ell}{\rrank(y)} -
    \rcount{\ARRnontail}{\ell}{\lrank(y)}$. Recall that the set
    $\{\Pow(H) : H \in \Lroots\}$ is prefix-free. Letting $H_j =
    \Lroot(j)$ (where $j \in \R$), it follows by definition of $\TZ$
    that:
    \begin{align*}
      \{\rlexm_i\}_{i \in (\lrank(y) \dd \rrank(y)]}
        &= \{j \,{\in}\, \R'^{-} :
             \Pow(H) \cdot S_y\text{ is a prefix of }
             \T[\rendfull{j} \,{-}\, |\Pow(H_j)| \dd n]\}\\
        &= \{j \,{\in}\, \R'^{-} :
             \Pow(H) \cdot S_y\text{ is a prefix of }
             \Pow(H_j) \cdot \T[\rendfull{j} \dd n]\}\\
        &= \{j \,{\in}\, \R'^{-}_{H} :
             S_y\text{ is a prefix of }
             \T[\rendfull{j} \dd n]\}.
    \end{align*}
    Finally, note that by $\rendfull{PS_y} = \ell + 1$, we have
    $(PS_y)[\rendfull{PS_y} \dd |PS_y|] = S_y$.  Thus, by the above
    and \cref{lm:occ-s}, $|\Occsm(PS_y, \T)| = |\{i \in (\lrank(y) \dd
    \rrank(y)] : \ARRnontail[i] \geq \rendfull{PS_y} - 1\}| = |\{i \in
    (\lrank(y) \dd \rrank(y)] : \ARRnontail[i] \geq \ell\}| =
    \rcount{\ARRnontail}{\ell}{\rrank(y)} -
    \rcount{\ARRnontail}{\ell}{\lrank(y)}$.

  \item\label{lm:st-periodic-wa-it-2}
    We prove that there exists $c' \in \Alphabet$ such that
    $|\Occ(PS_w c', \T)| > 0$ (where $S_{w}$ is defined as above).
    First, note that by $u = \mapToTZ(v)$, it holds $\str(u) = \Pow(H)
    \cdot \str(v)[\rendfull{v} \dd |\str(v)|] = \Pow(H) \cdot
    \str(v)(\ell \dd |\str(v)|]$. On the other hand, by definition of
    $S_u$, we have $\str(u) = \Pow(H) \cdot S_u$. Thus, $S_u =
    \str(v)(\ell \dd |\str(v)|]$. By $P = \str(v)[1 \dd \ell]$, we
    thus obtain $\str(v) = PS_u$. Consequently, since $v$ is a node of
    $\ST$, we have $|\Occ(PS_u, \T)| = |\Occ(\str(v), \T)| > 0$.
    Observe now that since $w'$ is an ancestor of $u$, the string
    $\str(w') = \Pow(H) \cdot S_{w'}$ is a prefix of $\str(u) =
    \Pow(H) \cdot S_u$. This implies that $S_{w'}$ is a prefix of
    $S_u$, and hence $PS_{w'}$ is a prefix of $PS_u$. Consequently,
    $|\Occ(PS_{w'}, \T)| \geq |\Occ(PS_u, \T)| > 0$. In particular,
    since $PS_w$ is a prefix of $PS_{w'}$, letting $c' \in \Alphabet$
    be such that $\child(w, c') = w'$, we have $|\Occ(PS_w c', \T)| >
    0$.

  \item\label{lm:st-periodic-wa-it-3}
    Let $s = \ell \bmod |H|$, $k = \lfloor \tfrac{\ell}{|H|} \rfloor$.
    Let also $\mathsf{P} := \{j \in \R^{-}_{s,H} : \Lexp(j) = k\}$,
    $b_{\mathsf{P}}, e_{\mathsf{P}} \in [0 \dd n]$ be such that
    $\{\SA[i]\}_{i \in (b_{\mathsf{P}} \dd e_{\mathsf{P}}]} =
    \mathsf{P}$, and $b_H, e_H \in [0 \dd q]$ be such that
    $\{\rlexm_i\}_{i \in (b_H \dd e_H]} = \R'^{-}_{H}$.  Note that
    $\Lhead(v) = s$, $\Lroot(v) = H$, $\Lexp(v) = k$, $\rend{v} \leq
    |\str(v)|$, and $\type(v) = -1$ imply that $\mathsf{P} \neq
    \emptyset$ (it suffices to take $j = \SA[i]$ for any $i \in
    (\lrank(v) \dd \rrank(v)]$, and apply \cref{lm:pm-lce} and
    \cref{lm:pm-lce-3}\eqref{lm:pm-lce-3-it-2}). Therefore,
    $(b_{\mathsf{P}}, e_{\mathsf{P}})$ and $(b_H, e_H)$ are
    well-defined.  Denote $\delta = e_{\mathsf{P}} -
    \rcount{\ARRnontail}{\ell}{e_H}$. By definition of
    $\pseudoInvTZ{\ell}{w}$ and $\pseudoInvTZ{\ell}{w'}$, we then have
    $(\fbeg(w), \fend(w)) = (\delta +
    \rcount{\ARRnontail}{\ell}{\lrank(w)}, \delta +
    \rcount{\ARRnontail}{\ell}{\rrank(w)})$ and $(\fbeg(w'),
    \fend(w')) = (\delta + \rcount{\ARRnontail}{\ell}{\lrank(w')},
    \delta + \rcount{\ARRnontail}{\ell}{\rrank(w')})$.  Thus, the
    assumption $(\fbeg(w), \fend(w)) \neq (\fbeg(w'), \fend(w'))$, or
    equivalently, $\fbeg(w) \neq \fbeg(w')$ or $\fend(w) \neq
    \fend(w')$, implies
    \begin{align*}
      \rcount{\ARRnontail}{\ell}{\lrank(w)} &\neq
        \rcount{\ARRnontail}{\ell}{\lrank(w')}\text{ or}\\
      \rcount{\ARRnontail}{\ell}{\rrank(w)} &\neq
        \rcount{\ARRnontail}{\ell}{\rrank(w')}.
    \end{align*}

  \item\label{lm:st-periodic-wa-it-4}
    By definition, the values
    $\rcount{\ARRnontail}{\ell}{\rrank(\widehat{w})} -
    \rcount{\ARRnontail}{\ell}{\lrank(\widehat{w})}$ over all children
    $\widehat{w}$ of $w$ sum up to
    $\rcount{\ARRnontail}{\ell}{\rrank(w)} -
    \rcount{\ARRnontail}{\ell}{\lrank(w)}$.  Thus, it follows by
    Step~\ref{lm:st-periodic-wa-it-3} that there exists a child $w''
    \neq w'$ of $w$ such that $\rcount{\ARRnontail}{\ell}{\rrank(w'')}
    - \rcount{\ARRnontail}{\ell}{\lrank(w'')} > 0$.  By
    Step~\ref{lm:st-periodic-wa-it-1}, for such $w''$, we thus have
    $|\Occ(PS_{w''}, \T)| = \rcount{\ARRnontail}{\ell}{\rrank(w'')} -
    \rcount{\ARRnontail}{\ell}{\lrank(w'')} > 0$.  In particular,
    letting $c'' \in \Alphabet$ be such that $\child(w, c'') = w''$,
    it holds $|\Occ(PS_w c'', \T)| > 0$. Note that $w'' \neq w'$
    implies $c'' \neq c'$.

  \item
    We have thus proved (Steps~\ref{lm:st-periodic-wa-it-2}
    and~\ref{lm:st-periodic-wa-it-4}) that there exist $c', c'' \in
    \Alphabet$ such that $c' \neq c''$, $|\Occ(PS_{w} c', \T)| > 0$,
    and $|\Occ(PS_{w} c'', \T)| > 0$. This implies that there exists a
    node $v'$ in $\ST$ such that $\str(v') = PS_{w}$. As observed in
    Step~\ref{lm:st-periodic-wa-it-1}, $PS_w$ is periodic, and it
    holds $\rend{PS_w} \leq |PS_{w}|$, $\type(PS_{w}) = -1$, and
    $\rendfull{PS_w} = |P| + 1$. Thus, the node $u' = \mapToTZ(v')$ is
    defined and satisfies $\str(u') = \Pow(H) \cdot S_w$.  This
    implies $u' = w$, and consequently, $w \in \mathcal{U}'$.
  \end{enumerate}

  We are now ready to prove the main claim.  Let $v' = \WA(v, d)$ and
  $v'' = \parent(v')$.  We then have $\sdepth(v'') < d \leq
  \sdepth(v')$. Moreover, by $\rend{v} \leq d$, we have $v' \in
  \mathcal{V}$. Let $u' = \mapToTZ(v')$.  As observed earlier, we
  then have $u' \in \mathcal{U}'$, and hence $d - \ell + |\Pow(H)|
  \leq \sdepth(u')$.  This implies that $\widehat{u} = \WA(u', d -
  \ell + |\Pow(H)|)$ satisfies $d - \ell + |\Pow(H)| \leq
  \sdepth(\widehat{u}) \leq \sdepth(u')$.  By $d - \ell + |\Pow(H)|
  \geq \rend{v} - \ell + |\Pow(H)| = \rend{v} - (\rendfull{v} - 1) +
  |\Pow(H)| = \Ltail(v) + 1 + |\Pow(H)|$, this implies that
  $\widehat{u} \in \mathcal{U}$.  Let $k \in \Zz$ be
  such that $\widehat{u} = \parent^{(k)}(u')$.  This implies that
  $\parent^{(i)}(u')\not\in \mathcal{U}'$ holds for $i \in [1 \dd k]$,
  since otherwise it would contradict $v' = \WA(v, d)$. If $k = 0$
  then we trivially have $(\fbeg(u'), \fend(u')) =
  (\fbeg(\widehat{u}), \fend(\widehat{u}))$. Otherwise, by (the
  contraposition of) the above property of $\mathcal{U}'$ we have
  \begin{align*}
    (\fbeg(u'), \fend(u'))
      &= (\fbeg(\parent(u')), \fend(\parent(u')))\\
      &= \dots\\
      &= (\fbeg(\parent^{(k)}(u')), \fend(\parent^{(k)}(u')))\\
      &= (\fbeg(\widehat{u}), \fend(\widehat{u})).
  \end{align*}
  Recall now that $\rend{v'} \leq |\str(v')|$, $\type(v') = -1$, and
  $\rendfull{v'} = \rendfull{v} = \ell + 1$. Thus, by
  \cref{lm:st-periodic-invmap}, we have $\repr(v') =
  \pseudoInvTZ{\rendfull{v'} - 1}{u'} = \pseudoInvTZ{\ell}{u'}$.
  Consequently, $\repr(\WA(v, d)) = \pseudoInvTZ{\ell}{u'} =
  (\fbeg(u'), \fend(u')) = (\fbeg(\widehat{u}), \fend(\widehat{u})) =
  \pseudoInvTZ{\ell}{\widehat{u}}$.
\end{proof}

\begin{proposition}\label{pr:st-periodic-wa}
  Let $v$ be an explicit periodic node of $\ST$. Given the data
  structure from \cref{sec:st-periodic-ds}, $\repr(v)$, and an integer
  $d$ satisfying $3\tau - 1 \leq d \leq |\str(v)|$, in $\bigO(\log
  \log n)$ time we can compute $\repr(\WA(v, d))$.
\end{proposition}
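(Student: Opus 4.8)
The plan is to mirror the template used for the other periodic-node operations (\cref{pr:st-periodic-child,pr:st-periodic-pred,pr:st-periodic-lca}), funnelling everything through the navigation machinery of \cref{sec:st-periodic-nav}. First I would recover the endpoints of the occurrence range of $v$: writing $(b,e)=\repr(v)$ and $i_1=b+1$, $i_2=e$, we have $\SA[i_1],\SA[i_2]\in\R$ by \cref{lm:st-periodic-rend}\eqref{lm:st-periodic-rend-it-1}, so \cref{pr:sa-periodic-sa} computes $j_1=\SA[i_1]$ and $j_2=\SA[i_2]$ in $\bigO(\log\log n)$ time. Then \cref{pr:isa-root} yields, in $\bigO(1)$ time, $H:=\Lroot(j_1)$, $s:=\Lhead(j_1)$, and the run lengths $\rend{j_1}-j_1$, $\rend{j_2}-j_2$; since $j_1,j_2\in\Occ(\str(v),\T)$, \cref{lm:lce,lm:pm-lce} give $\Lroot(v)=H$ and $\Lhead(v)=s$, and \cref{lm:st-periodic-rend}\eqref{lm:st-periodic-rend-it-1} gives $\rend{v}=1+\min(\rend{j_1}-j_1,\rend{j_2}-j_2)$. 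Comparing $\T[j_1+\rend{v}-1]$ with $\T[j_2+\rend{v}-1]$ decides, via \cref{lm:st-periodic-rend}\eqref{lm:st-periodic-rend-it-2}, whether $\rend{v}\le|\str(v)|$, and if so an $\bigO(1)$ comparison of $\T[j_1+\rend{v}-1]$ with $\T[j_1+\rend{v}-1-|H|]$ recovers $\type(v)$. The argument then splits on whether $d<\rend{v}$ or $d\ge\rend{v}$.

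When $d\ge\rend{v}$ (which forces $\rend{v}\le|\str(v)|$, as $d\le|\str(v)|$), the target ancestor reaches strictly past the periodic prefix $\str(v)[1\dd\rend{v})$, so the prefix of $\str(v)$ of length $d$ is not determined by $(s,H)$ alone and I would hand the work to \cref{lm:st-periodic-wa}. Assuming $\type(v)=-1$ (the case $\type(v)=+1$ being symmetric, handled by the mirrored components of \cref{sec:st-periodic-ds}), I would compute $u:=\mapToTZ(v)$ with \cref{pr:st-periodic-map}, set $\Lexp(v)=\lfloor(\rend{v}-1-s)/|H|\rfloor$ and $\ell:=\rendfull{v}-1=s+\Lexp(v)\cdot|H|$, compute $\widehat{u}:=\WA(u,\,d-\ell+|\Pow(H)|)$ inside $\TZ$ via \cref{pr:compact-trie}, and return $\pseudoInvTZ{\ell}{\widehat{u}}$ via \cref{pr:st-periodic-invmap}; by \cref{lm:st-periodic-wa} this equals $\repr(\WA(v,d))$, and every step costs $\bigO(\log\log n)$.

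The remaining case $d<\rend{v}$ (which also subsumes $\rend{v}>|\str(v)|$, i.e.\ $\rend{v}=|\str(v)|+1>d$) needs a new combinatorial observation, and this is the step I expect to be the main obstacle. Here the length-$d$ prefix of $\str(v)$ lies entirely inside $\str(v)[1\dd\rend{v})$, hence equals $P:=\Pref_{d}(s,H)$, a periodic pattern with $\Lhead(P)=s$, $\Lroot(P)=H$, $|P|=d$, and (being a prefix of $H'H^{\infty}$) $\rend{P}=d+1>d$. I would prove that $\repr(\WA(v,d))=(\LB(P,\T),\UB(P,\T))$: the node $\WA(v,d)$ is either the explicit node labelled $P$, or the lower endpoint of the $\ST$-edge on which $P$ is an implicit node, and in both cases $\Occ(\str(\WA(v,d)),\T)=\Occ(P,\T)$ because no branching occurs strictly inside an $\ST$-edge, while $\WA(v,d)$ lies on the $v$-path since $P$ is a prefix of $\str(v)$. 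Given this identity, I would simply invoke \cref{pr:st-periodic-range} on the parameters $(s,H,d)$ — whose hypothesis $\rend{P}>|P|$ is exactly satisfied — to obtain $(\LB(P,\T),\UB(P,\T))$ in $\bigO(\log\log n)$ time and return it. The points requiring care are verifying that $P$ genuinely is the length-$d$ prefix of $\str(v)$ in this regime, the $\rend{P}>|P|$ bound, and the edge-no-branching argument at the boundary $d\in\{3\tau-1,\dots,\rend{v}-1\}$; the rest is bookkeeping, and combining the two cases yields the claimed $\bigO(\log\log n)$ running time.
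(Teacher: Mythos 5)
Your proposal is correct and follows essentially the same route as the paper's proof: the identical case split on $d$ versus $\rend{v}$, with the deep case handled via $\mapToTZ$, a weighted-ancestor query in $\TZ$, $\pseudoInvTZ{\ell}{\widehat{u}}$, and \cref{lm:st-periodic-wa}, and the shallow case reduced to \cref{pr:st-periodic-range} after observing that the length-$d$ prefix of $\str(v)$ is periodic with $\Lhead = s$, $\Lroot = H$, and $\rend{} = d+1$, and that truncating at depth $d$ inside an edge does not change the suffix-array range. The "new combinatorial observation" you flag for the case $d < \rend{v}$ is exactly the argument the paper makes, so no genuinely different ideas are involved.
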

\begin{proof}
  Denote $i_1 = \lrank(v) + 1$ and $i_2 = \rrank(v)$.  By
  \cref{lm:st-periodic-rend}\eqref{lm:st-periodic-rend-it-1}, it holds
  $\SA[i_1], \SA[i_2] \in \R$. Using \cref{pr:sa-periodic-sa}, in
  $\bigO(\log \log n)$ time we first compute $j_1 = \SA[i_1]$ and $j_2
  = \SA[i_2]$. Next, using \cref{pr:isa-root}, in $\bigO(1)$ time we
  compute $H = \Lroot(j_1)$, $s = \Lhead(j_1)$, $k_1 = \Lexp(j_1)$,
  $k_2 = \Lexp(j_2)$, $t_1 = \Ltail(j_1)$, and $t_2 =
  \Ltail(j_2)$. Since $v$ is periodic, and $j_1, j_2 \in \Occ(\str(v),
  \T)$, it follows by \cref{lm:lce,lm:pm-lce} that $\Lroot(v) =
  \Lroot(j_2) = H$ and $\Lhead(v) = \Lhead(j_2) = s$. In $\bigO(1)$
  time we thus compute $e_1 := \rend{j_1} = j_1 + s + k_1|H| + t_1$
  and $e_2 := \rend{j_2} = j_2 + s + k_2|H| + t_2$.  Next, in
  $\bigO(1)$ time we compute $e_v := \rend{v} = 1 + \min(e_1 - j_1,
  e_2 - j_2)$ (see
  \cref{lm:st-periodic-rend}\eqref{lm:st-periodic-rend-it-1}).  We
  then consider two cases:
  \begin{itemize}
  \item Assume $e_v \leq d$. Then, to obtain $\repr(\WA(v, d))$ we
    follow \cref{lm:st-periodic-wa}. First, in $\bigO(1)$ time we
    compute $\type(v)$ by comparing $\T[j_1 + e_v - 1]$ with $\T[j_1 +
    e_v - 1 - |H|]$.  Let us assume that $\T[j_1 + e_v - 1] \prec
    \T[j_1 + e_v - 1 - |H|]$, i.e., $\type(v) = -1$ (the case
    $\type(v) = +1$ it handled symmetrically, using the part of the
    structure from \cref{sec:st-periodic-ds} adapted according to
    \cref{lm:lce}). Using \cref{pr:st-periodic-map}, in $\bigO(\log
    \log n)$ time we compute a pointer to $u = \mapToTZ(v)$. In
    $\bigO(1)$ time we also calculate $|\Pow(H)| = |H| \lceil
    \tfrac{\tau}{|H|} \rceil$.  Using the representation of $\TZ$
    stored as part of the structure in \cref{sec:st-periodic-ds}, and
    \cref{pr:compact-trie}, in $\bigO(\log \log n)$ time we compute a
    pointer to $\widehat{u} = \WA(u, d - \ell + |\Pow(H)|)$. In
    $\bigO(1)$ time we now compute $k := \Lexp(v) = \lfloor \tfrac{e_v
    - 1 - s}{|H|} \rfloor$ and $\ell := \rendfull{v} - 1 = s +
    k|H|$. Using \cref{pr:st-periodic-invmap}, in $\bigO(\log \log n)$
    time we then compute the pair $(b, e) =
    \pseudoInvTZ{\ell}{\widehat{u}}$. By \cref{lm:st-periodic-wa}, it
    holds $\repr(\WA(v, d)) = (b, e)$. We thus return $(b, e)$.
  \item Assume $e_v > d$. Let $v' = \WA(v, d)$, $S = \str(v')$, and
    $S' = S[1 \dd d]$. Since, by definition, $v'$ does not have an
    ancestor $v''$ in $\ST$ satisfying $\sdepth(v'') \geq d$, it holds
    $\repr(v') = (\LB(S, \T), \UB(S, \T)) = (\LB(S', \T), \UB(S',
    \T))$.  We thus focus on computing the latter pair. First, we
    observe that since $v'$ is an ancestor $v$, we have $S' =
    \str(v)[1 \dd d]$.  Therefore, since $\str(v)$ is periodic, and it
    holds $3\tau - 1 \leq d$, we obtain by \cref{lm:pm-lce-2} that
    $S'$ is periodic, and it holds $\Lroot(S') = \Lroot(v) = H$ and
    $\Lhead(S') = \Lhead(v) = s$. To show $\rend{S'} > |S'|$, let us
    denote $Q = \str(v)[1 \dd \rend{v})$. By definition, we have
    $\rend{Q} = 1 + p + \lcp(Q, Q(p \dd |Q|]) = |Q| + 1$. Thus, we
    must have $\lcp(Q, Q(p \dd |Q|]) = |Q| - p$. Consequently, since
    by $e_v = \rend{v} > d$ the string $S'$ is a prefix of $Q$, we
    have $\lcp(S', S'(p \dd |S'|]) = |S'| - p$, and hence $\rend{S'} =
    1 + p + \lcp(S', S'(p \dd |S'|]) = |S'| + 1$.  Considering all
    the above properties of $S'$, the next step of the algorithm is
    therefore to compute and return the pair $(b, e) = (\LB(S', \T),
    \UB(S', \T))$ in $\bigO(\log \log n)$ time using
    \cref{pr:st-periodic-range}. As observed above, it holds
    $\repr(\WA(v, d)) = (b, e)$. \qedhere
  \end{itemize}
\end{proof}

\subsubsection{Construction Algorithm}\label{sec:st-periodic-construction}

\begin{proposition}\label{pr:st-periodic-construction}
  Given $\STCore(\T)$, we can in $\bigO(n / \log_{\sigma} n)$
  time we can augment it into a data structure from
  \cref{sec:st-periodic-ds}.
\end{proposition}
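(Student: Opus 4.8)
The plan is to reuse the construction machinery from Section~\ref{sec:sa-periodic}, since, as the component list in Section~\ref{sec:st-periodic-ds} makes explicit, the structure for periodic nodes consists of three parts: the index core $\STCore(\T)$ (given), the first part of the data structure from Section~\ref{sec:sa-periodic-ds}, and the compact trie $\TZ$ represented as in Proposition~\ref{pr:compact-trie}, together with the symmetric counterparts of the latter two. First I would invoke \cref{pr:sa-periodic-construction} to augment $\SACore(\T)$ (which is a component of $\STCore(\T)$, so it is already available) into the first part of the structure from \cref{sec:sa-periodic-ds} in $\bigO(n/\log_{\sigma} n)$ time. Crucially, this construction produces the sequence $(\rlexm_i)_{i \in [1 \dd q]}$ as a by-product, which is exactly what is needed to build $\ARRzlex$.

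Next I would build the array $\ARRzlex[1 \dd q]$. By definition, $\ARRzlex[i] = \rendfull{\rlexm_i} - |\Pow(H_i)|$, where $H_i = \Lroot(\rlexm_i)$ and $\Pow(H_i) = H_i^\infty[1\dd |H_i|\lceil\tfrac{\tau}{|H_i|}\rceil]$. For each $i \in [1 \dd q]$, the values $\Lroot(\rlexm_i)$, $\Lhead(\rlexm_i)$, $\Lexp(\rlexm_i)$, and $\Ltail(\rlexm_i)$ are all computable in $\bigO(1)$ time via \cref{pr:isa-root} (which operates on the structure from \cref{sec:sa-periodic-ds} just constructed), so $\rendfull{\rlexm_i}$ and $|\Pow(H_i)|$ follow immediately, and hence each $\ARRzlex[i]$ is obtained in $\bigO(1)$ time; over all $i$ this is $\bigO(q) = \bigO(n/\log_{\sigma} n)$ time. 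I would then argue (as already noted in \cref{sec:pm-periodic-ds}) that the entries of $\ARRzlex$ list positions of $\T$ whose corresponding suffixes are in lexicographic order --- this is the prerequisite for feeding $\ARRzlex$ into \cref{pr:compact-trie} --- and then apply \cref{pr:compact-trie} to build the compact trie $\TZ$ (augmented with LCA, weighted-ancestor, and child/pred support) in $\bigO(q + n/\log_{\sigma} n) = \bigO(n/\log_{\sigma} n)$ time.

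Finally, the structure from \cref{sec:st-periodic-ds} also contains the symmetric counterparts of the first part of \cref{sec:sa-periodic-ds} and of $\TZ$, adapted according to \cref{lm:lce} (handling positions of type $+1$ rather than $-1$). These are constructed analogously: the symmetric part of \cref{sec:sa-periodic-ds} is already produced by \cref{pr:sa-periodic-construction} (which constructs ``the remaining components'' symmetrically; see its last paragraph), and the symmetric trie is built by repeating the $\ARRzlex$-and-$\TZ$ construction with $\R'^{+}$ in place of $\R'^{-}$ in $\bigO(n/\log_{\sigma} n)$ time. Summing the costs gives $\bigO(n/\log_{\sigma} n)$ total. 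I expect no genuine obstacle here: the proposition is essentially a bookkeeping assembly of already-established constructions, and the only point requiring a sentence of care is verifying that $\ARRzlex$ satisfies the sortedness precondition of \cref{pr:compact-trie}, which is already justified by the prefix-freeness and order-preservation properties of the $\Pow$ function recorded in \cref{sec:pm-periodic-ds}.
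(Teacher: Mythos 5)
Your proposal matches the paper's own proof: invoke \cref{pr:sa-periodic-construction} on the already-available $\SACore(\T)$ to build the structure from \cref{sec:sa-periodic-ds} (yielding $(\rlexm_i)_{i \in [1 \dd q]}$), compute $\ARRzlex$ entrywise via \cref{pr:isa-root}, feed it to \cref{pr:compact-trie} to get $\TZ$, and construct the symmetric counterparts analogously. The extra sentence checking the sortedness precondition of \cref{pr:compact-trie} is a fine (if implicit in the paper) addition, so the proof is correct and essentially identical in approach.
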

\begin{proof}
  First, we combine
  \cref{pr:sa-core-construction,pr:sa-periodic-construction} (recall
  that the packed representation of $\T$ is a component of
  $\STCore(\T)$) to construct the data structure from
  \cref{sec:sa-periodic-ds} in $\bigO(n / \log_{\sigma} n)$ time.  In
  particular, this constructs $(\rlexm_i)_{i \in [1 \dd q]}$.  Using
  \cref{pr:isa-root}, we can now compute $\ARRzlex[i]$ for any $i \in
  [1 \dd q]$ in $\bigO(1)$ time.  Then, in $\bigO(n / \log_{\sigma}
  n)$ time we construct the data structure $\TZ$ using
  \cref{pr:compact-trie}.

  After the above components are constructed, we then analogously
  construct their symmetric counterparts (adapted according to
  \cref{lm:lce}).
\end{proof}

\subsection{The Final Data Structure}\label{sec:st-final}

In this section, we put together
\cref{sec:st-core,sec:st-nonperiodic,sec:st-periodic} to obtain a data
structure that performs suffix tree operations in
$\bigO(\log^{\epsilon} n)$ time.

The section is organized as follows. First, we introduce the
components of the data structure (\cref{sec:st-ds}).  We then describe
the query algorithms for all operations in \cref{tab:st-operations}
(\cref{sec:st-sdepth,sec:st-lca,sec:st-child,sec:st-pred,%
sec:st-wa,sec:st-wlink,sec:st-slink,sec:st-slink,sec:st-slink-iter,%
sec:st-parent,sec:st-firstchild,sec:st-lastchild,sec:st-rightsibling,%
sec:st-leftsibling,sec:st-isleaf,sec:st-index,sec:st-count,sec:st-letter,%
sec:st-isancestor,sec:st-findleaf}).  Finally, we show the
construction algorithm (\cref{sec:st-construction}).

\subsubsection{The Data Structure}\label{sec:st-ds}

\bfparagraph{Definitions}

Recall (\cref{sec:prelim}), that we assumed $\T[n] = 0$, and that $0$
that not appear anywhere else in $\T$. We define $\Trev$ as a text
obtained by first reversing $\T$, and then moving the symbol $0$ from
the beginning to the end. Formally, for every $i \in [1 \dd n]$:
\vspace{1.5ex}
\[
  \Trev[i] =
  \begin{cases}
    \T[n-i] & \text{if $i \neq n$},\\
    \T[n]   & \text{if $i = n$}.
  \end{cases}
  \vspace{1.5ex}
\]
Observe that for every $\Pat$ not containing the symbol $0$, $j \in
\Occ(\Pat, \T)$ holds if and only if $j' \in \Occ(\revstr{\Pat},
\Trev)$, where $j' = n - (j + |\Pat| - 1)$.

\begin{remark}
  The motivation for defining $\Trev$ is that the standard reverse
  operation on $\T$ (denoted $\revstr{T}$) does not preserve a unique
  sentinel at the end.
\end{remark}

\bfparagraph{Components}

The data structure consists of two parts. The first part is
constructed for $\T$ and consists of the following two components:

\begin{enumerate}
\item The structure from \cref{sec:st-nonperiodic-ds} (used to handle
  nonperiodic nodes).
\item The structure from \cref{sec:st-periodic-ds} (used to handle
  periodic nodes). Note that similarly as the first component it also
  includes $\STCore(\T)$. It suffices, however, to only store one
  copy.
\end{enumerate}
The second part contains the analogous two components for the text
$\Trev$. In this section, unless specified otherwise, we refer to the
part of the structure for text $\T$.

In total, the data structure takes $\bigO(n / \log_{\sigma} n)$ space.

\subsubsection{Implementation of
  \texorpdfstring{$\sdepth(v)$}{sdepth(v)}}\label{sec:st-sdepth}

\begin{proposition}\label{pr:st-sdepth}
  Let $v$ be an explicit node of $\ST$.  Given the data structure from
  \cref{sec:st-ds} and $\repr(v)$, we can in $\bigO(\log^{\epsilon}
  n)$ time compute $\sdepth(v)$.
\end{proposition}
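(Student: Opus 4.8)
�The plan is to compute $\sdepth(v)$ by a case split on whether $v$ is periodic and, in the nonperiodic case, whether $\sdepth(v) < 3\tau-1$, using the machinery already developed for each regime. First I would use \cref{pr:st-core-periodicity} to check in $\bigO(1)$ time whether $v$ is periodic, and (in the nonperiodic case) whether $\sdepth(v) < 3\tau - 1$. If $v$ is nonperiodic with $\sdepth(v) < 3\tau - 1$, then by \cref{lm:st-core-map}\eqref{lm:st-core-map-it-2} the node $u = \mapToShort(v)$ satisfies $\str(u) = \str(v)$, so $\sdepth(v) = \sdepth(u)$, which is stored with $u$ in $\Tshort$ and is retrieved in $\bigO(1)$ time via \cref{pr:st-core-map}.

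The remaining two cases (nonperiodic with $\sdepth(v) \geq 3\tau - 1$, and periodic) are where the string-depth is genuinely large, and here I would reduce to the corresponding auxiliary trie, where $\sdepth$ is stored at every node. In the nonperiodic case, \cref{pr:st-nonperiodic-map} gives in $\bigO(\log^{\epsilon} n)$ time a pointer to $u = \mapToTSSS(v)$; by definition of $\mapToTSSS$ we have $\str(v) = X[1 \dd \deltatext] \cdot \str(u)$ where $X \in \D$ is the distinguishing prefix of $\str(v)$ and $\deltatext = |X| - 2\tau$, so $\sdepth(v) = \deltatext + \sdepth(u)$. We obtain $X$ (hence $\deltatext$) as in the proof of \cref{pr:st-nonperiodic-map} — compute $Y = \str(\mapToShort(v)) = \str(v)[1\dd 3\tau{-}1]$ and apply $\LTD$ — and read $\sdepth(u)$ in $\bigO(1)$ time from the representation of $\TSSS$ (\cref{pr:compact-trie}). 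In the periodic case, I would similarly compute $H = \Lroot(v)$, $\rendfull{v}$, and $\rend{v}$ (as in the proofs of \cref{pr:st-periodic-lca,pr:st-periodic-child}, using \cref{pr:sa-periodic-sa} and \cref{pr:isa-root,lm:st-periodic-rend}), distinguish $\rend{v} \leq |\str(v)|$ from $\rend{v} > |\str(v)|$ via \cref{lm:st-periodic-rend}\eqref{lm:st-periodic-rend-it-2}, and in the former subcase use $\mapToTZ(v)$ (\cref{pr:st-periodic-map}): since $\str(\mapToTZ(v)) = \Pow(H) \cdot \str(v)[\rendfull{v} \dd |\str(v)|]$, we get $\sdepth(v) = \rendfull{v} - 1 + (\sdepth(\mapToTZ(v)) - |\Pow(H)|)$, with $\sdepth(\mapToTZ(v))$ read off $\TZ$ in $\bigO(1)$ time; in the latter subcase ($\rend{v} > |\str(v)|$), the computation of $\rend{v}$ already furnishes $|\str(v)| = \rend{v} - 1$ directly (as in the second bullet of the proof of \cref{pr:st-periodic-child}).

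The main obstacle is organizational rather than combinatorial: the periodic case requires first recovering the parameters $H$, $s = \Lhead(v)$, $\Lexp(v)$, $\Ltail(v)$, and $\rend{v}$ from $\repr(v)$ alone, and verifying that these match the node's string as opposed to some longer periodic string — exactly the subtlety handled by \cref{lm:st-periodic-rend}, which shows $\rend{v} = 1 + \min(\rend{\SA[i_1]} - \SA[i_1], \rend{\SA[i_2]} - \SA[i_2])$ for $i_1 = \lrank(v)+1$, $i_2 = \rrank(v)$, and that $\rend{v} \leq |\str(v)|$ iff $\T[\SA[i_1] + \rend{v} - 1] = \T[\SA[i_2] + \rend{v} - 1]$. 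Once that is in place, the depth follows by simple arithmetic as above. Collecting the cases, the dominant cost is the $\bigO(\log^{\epsilon} n)$-time mapping $\mapToTSSS$ (the periodic branch costs only $\bigO(\log \log n)$), so the overall running time is $\bigO(\log^{\epsilon} n)$ as claimed.
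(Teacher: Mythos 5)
Your proposal is correct, but it takes a genuinely different route from the paper. The paper's proof is a two-line argument that avoids any case analysis: since $v = \LCA(v_1, v_2)$ where $v_1$ and $v_2$ are the $(\lrank(v)+1)$st and $\rrank(v)$th leftmost leaves of $\ST$, \cref{ob:lca} gives $\sdepth(v) = \LCE(\SA[\lrank(v)+1], \SA[\rrank(v)])$; this is computed with two $\SA$ queries via \cref{pr:sa-sa} in $\bigO(\log^{\epsilon} n)$ time followed by a single $\bigO(1)$-time LCE query using the structure of~\cite[Theorem~5.4]{sss} already stored as a component in \cref{sec:sa-periodic-ds}. Your route instead mirrors the periodic/nonperiodic dispatch used for the other operations: reading $\sdepth$ off $\Tshort$ in the shallow case, recovering it as $\deltatext + \sdepth(\mapToTSSS(v))$ in the deep nonperiodic case, and as $\rendfull{v} - 1 + \sdepth(\mapToTZ(v)) - |\Pow(H)|$ (or $\rend{v}-1$ when $\rend{v} > |\str(v)|$) in the periodic case; the arithmetic checks out, the needed $\sdepth$ values are indeed stored in the tries (\cref{pr:compact-trie}, \cref{sec:st-core-ds}), and the prerequisites ($\SA[i_1],\SA[i_2]\in\R$, well-definedness of the mappings, \cref{lm:st-periodic-rend}) are all in place, provided you make explicit that the $\type(v)=+1$ subcase goes through the symmetric counterpart of $\TZ$ as in \cref{pr:st-periodic-child}. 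Both solutions run in $\bigO(\log^{\epsilon} n)$ time and use only components already present in the structure of \cref{sec:st-ds}, so the trade-off is purely one of economy: the paper's LCE-based argument needs no case split and no mapping lemmas, whereas yours re-derives the answer through the navigation machinery and is correspondingly longer, though it does localize the $\bigO(\log^{\epsilon} n)$ cost to the nonperiodic branch (the periodic branch costing only $\bigO(\log\log n)$).
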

\begin{proof}
  Denote $i_1 = \lrank(v) + 1$ and $i_2 = \rrank(v)$. Let $v_1$ and
  $v_2$ be the $i_1$th and $i_2$th (respectively) leftmost leaf of
  $\ST$. Then, $v = \LCA(v_1, v_2)$. By \cref{ob:lca}, we thus have
  $\sdepth(v) = \lcp(\str(v_1), \str(v_2)) = \LCE(\SA[i_1],
  \SA[i_2])$.  Consequently, to compute $\sdepth(v)$ we proceed as
  follows.  First, in $\bigO(\log^{\epsilon} n)$ time we compute $j_1
  = \SA[i_1]$ and $j_2 = \SA[i_2]$ using \cref{pr:sa-sa}. Then, using
  the structure to answer LCE queries (stored as part of the structure
  in \cref{sec:sa-periodic-ds}), in $\bigO(1)$ time we compute and
  return $\sdepth(v) = \LCE(j_1, j_2)$.
\end{proof}

\subsubsection{Implementation of
  \texorpdfstring{$\LCA(u, v)$}{LCA(u, v)}}\label{sec:st-lca}

\begin{proposition}\label{pr:st-lca}
  Let $v_1$ and $v_2$ be explicit nodes of $\ST$.  Given the data
  structure from \cref{sec:st-ds} and the pairs $\repr(v_1)$ and
  $\repr(v_2)$, we can in $\bigO(\log^{\epsilon} n)$ time compute the
  pair $\repr(\LCA(v_1, v_2))$.
\end{proposition}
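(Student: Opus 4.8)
The plan is to reduce the general $\LCA$ query to the two special cases already handled in the preceding sections: the ``shallow'' case ($\sdepth(\LCA(v_1,v_2)) < 3\tau-1$), treated by $\STCore(\T)$ in \cref{pr:st-core-lca}, the nonperiodic case treated in \cref{pr:st-nonperiodic-lca}, and the periodic case treated in \cref{pr:st-periodic-lca}. First I would call \cref{pr:st-core-lca} with the pairs $\repr(v_1)$ and $\repr(v_2)$; this $\bigO(1)$-time routine either (a) reports that $\sdepth(\LCA(v_1,v_2)) < 3\tau-1$ and directly returns $\repr(\LCA(v_1,v_2))$, or (b) reports that $\sdepth(\LCA(v_1,v_2)) \ge 3\tau-1$ and additionally tells us whether $\LCA(v_1,v_2)$ is periodic. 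In case (a) we are done immediately. In case (b), if $\LCA(v_1,v_2)$ is nonperiodic we invoke \cref{pr:st-nonperiodic-lca}, whose hypotheses (nonperiodic and $\sdepth \ge 3\tau-1$) are exactly what case (b) guarantees; this returns $\repr(\LCA(v_1,v_2))$ in $\bigO(\log^{\epsilon} n)$ time. If instead $\LCA(v_1,v_2)$ is periodic, we invoke \cref{pr:st-periodic-lca}, which requires only that $\LCA(v_1,v_2)$ be periodic and returns $\repr(\LCA(v_1,v_2))$ in $\bigO(\log\log n)$ time.

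The correctness argument is essentially a case analysis with no new combinatorics: the three cases (shallow; deep and nonperiodic; deep and periodic) are exhaustive and mutually exclusive by \cref{def:node-periodicity,def:pattern-periodicity}, and in each case the cited proposition directly produces $\repr(\LCA(v_1,v_2))$. The only thing to verify carefully is that the periodicity/shallowness flags returned by \cref{pr:st-core-lca} are consistent with the hypotheses demanded by \cref{pr:st-nonperiodic-lca,pr:st-periodic-lca}; this follows because \cref{pr:st-core-lca} computes these flags via $\mapToShort(\LCA(v_1,v_2))$ and \cref{lm:st-core-lca,lm:st-core-map}, which characterize exactly whether $\sdepth(\LCA(v_1,v_2)) \ge 3\tau-1$ and (when it is) whether $\str(\LCA(v_1,v_2))[1\dd 3\tau-1]$ is periodic.

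For the running time, each branch costs $\bigO(\log^{\epsilon} n)$ (the nonperiodic branch dominates, since $\log\log n = \bigO(\log^{\epsilon} n)$ for $\epsilon \in (0,1)$), plus the $\bigO(1)$ overhead of the dispatch via \cref{pr:st-core-lca}. Since $\STCore(\T)$ is a shared component of the structure from \cref{sec:st-ds} (included once in both the nonperiodic and periodic sub-structures), and the structures from \cref{sec:st-nonperiodic-ds,sec:st-periodic-ds} are both part of \cref{sec:st-ds}, all the inputs required by the cited propositions are available. I expect no genuine obstacle here: this is purely a routing lemma. The one point requiring a sentence of care is the boundary $\sdepth(\LCA(v_1,v_2)) = 3\tau - 1$ exactly, but this is already absorbed into case (b) of \cref{pr:st-core-lca} (which triggers whenever $\sdepth(u) = 3\tau-1$ for $u = \mapToShort(\LCA(v_1,v_2))$, i.e.\ whenever $\sdepth(\LCA(v_1,v_2)) \ge 3\tau-1$), so the nonperiodic/periodic routines receive a node of string-depth at least $3\tau-1$ as they require.
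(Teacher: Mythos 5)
Your proposal is correct and matches the paper's proof essentially verbatim: dispatch via \cref{pr:st-core-lca} (shallow case handled directly in $\bigO(1)$ time), then route deep nodes to \cref{pr:st-nonperiodic-lca} or \cref{pr:st-periodic-lca} depending on the periodicity flag, with the stated $\bigO(\log^{\epsilon} n)$ bound dominated by the nonperiodic branch. No gaps.
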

\begin{proof}
  First, using \cref{pr:st-core-lca}, in $\bigO(1)$ time we check if
  $\sdepth(\LCA(v_1, v_2)) < 3\tau - 1$. If so, in $\bigO(1)$ time we
  additionally obtain $\repr(\LCA(v_1, v_2))$. Let us thus assume
  $\sdepth(\LCA(v_1, v_2)) \geq 3\tau - 1$. Then,
  \cref{pr:st-core-lca} additionally indicates whether $\LCA(v_1,
  v_2)$ is periodic. If not, we use \cref{pr:st-nonperiodic-lca} to
  compute $\repr(\LCA(v_1, v_2))$ in $\bigO(\log^{\epsilon} n)$
  time. Otherwise, we obtain $\repr(\LCA(v_1, v_2))$ in $\bigO(\log
  \log n)$ time using \cref{pr:st-periodic-lca}.
\end{proof}

\subsubsection{Implementation of
  \texorpdfstring{$\child(v, c)$}{child(v, c)}}\label{sec:st-child}

\begin{proposition}\label{pr:st-child}
  Let $v$ be an explicit internal node of $\ST$.  Given the data
  structure from \cref{sec:st-ds}, $\repr(v)$, and $c \in \Alphabet$,
  in $\bigO(\log^{\epsilon} n)$ time we can compute $\repr(\child(v,
  c))$.
\end{proposition}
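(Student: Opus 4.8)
The plan is to dispatch to one of the three case-specific procedures developed in this section, exactly mirroring the structure of the implementation of $\LCA(u,v)$ in \cref{pr:st-lca}. The only global components needed are $\STCore(\T)$, the structure from \cref{sec:st-nonperiodic-ds}, and the structure from \cref{sec:st-periodic-ds}, all of which are part of the data structure from \cref{sec:st-ds}.

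First I would invoke \cref{pr:st-core-periodicity} on $\repr(v)$, which in $\bigO(1)$ time reports whether $v$ is periodic and, when $v$ is not periodic, whether $\sdepth(v) < 3\tau - 1$. This single $\bigO(1)$ test splits the argument into three mutually exclusive cases. If $v$ is periodic, then, since $v$ is an explicit internal node by hypothesis, \cref{pr:st-periodic-child} applies and returns $\repr(\child(v, c))$ in $\bigO(\log\log n)$ time. If $v$ is nonperiodic with $\sdepth(v) < 3\tau - 1$, then \cref{pr:st-core-child} applies and returns $\repr(\child(v, c))$ in $\bigO(1)$ time. Finally, if $v$ is nonperiodic with $\sdepth(v) \geq 3\tau - 1$, then \cref{pr:st-nonperiodic-child} applies and returns $\repr(\child(v, c))$ in $\bigO(\log^{\epsilon} n)$ time. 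In every branch the precondition ``$v$ is an explicit internal node'' holds by assumption, and the periodicity/depth preconditions are guaranteed by the branch taken, so the output $\repr(\child(v, c))$ (with $\repr(\nil) = (0,0)$ when the child does not exist) is computed correctly.

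Since $\log\log n = \bigO(\log^{\epsilon} n)$ and $\bigO(1) = \bigO(\log^{\epsilon} n)$, the total running time over all cases is $\bigO(\log^{\epsilon} n)$, as claimed. The only points requiring any care — and the closest thing to an obstacle — are verifying that the three cases are exhaustive and that the precondition $\sdepth(v) \geq 3\tau - 1$ holds automatically whenever $v$ is periodic (which follows from \cref{def:pattern-periodicity}, since a periodic string has length at least $3\tau - 1$, so the $\sdepth(v) < 3\tau - 1$ branch can only occur for nonperiodic $v$, matching the hypothesis of \cref{pr:st-core-child}). All the genuine combinatorial and algorithmic work has already been carried out in \cref{pr:st-core-child,pr:st-nonperiodic-child,pr:st-periodic-child}, so the proof itself is a short case distinction.
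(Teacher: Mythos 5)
Your proposal is correct and follows essentially the same route as the paper's proof: an $\bigO(1)$ dispatch via \cref{pr:st-core-periodicity} to one of \cref{pr:st-periodic-child}, \cref{pr:st-core-child}, or \cref{pr:st-nonperiodic-child}, with the stated time bounds combining to $\bigO(\log^{\epsilon} n)$. The additional remark that periodicity of $v$ forces $\sdepth(v) \geq 3\tau - 1$ is a correct (if implicit in the paper) justification that the three cases are exhaustive and the preconditions match.
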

\begin{proof}
  First, using \cref{pr:st-core-periodicity}, in $\bigO(1)$ time we
  check if $v$ is periodic. If so, we obtain $\repr(\child(v, c))$ in
  $\bigO(\log \log n)$ time using
  \cref{pr:st-periodic-child}. Otherwise (i.e., if $v$ is not
  periodic), \cref{pr:st-core-periodicity} additionally return the
  information on whether it holds $\sdepth(v) < 3\tau - 1$. If so,
  then we obtain $\repr(\child(v, c))$ in $\bigO(1)$ time using
  \cref{pr:st-core-child}. Otherwise, we obtain $\repr(\child(v, c))$
  in $\bigO(\log^{\epsilon} n)$ time using
  \cref{pr:st-nonperiodic-child}.
\end{proof}

\subsubsection{Implementation of
  \texorpdfstring{$\pred(v, c)$}{pred(v, c)}}\label{sec:st-pred}

\begin{proposition}\label{pr:st-pred-aux}
  Let $v$ be an explicit internal node of $\ST$.  Given the data
  structure from \cref{sec:st-ds}, $\repr(v)$, and $c \in \Alphabet$,
  in $\bigO(\log^{\epsilon} n)$ time we can compute
  $\LB(\str(v)c, \T)$.
\end{proposition}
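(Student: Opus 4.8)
The plan is to prove \cref{pr:st-pred-aux} exactly as \cref{pr:st-child} was proved: the quantity $\LB(\str(v)c, \T)$ is handled by one of three already-established routines, selected by a cheap case analysis on the ``shape'' of $v$. Concretely, first I would apply \cref{pr:st-core-periodicity} to the pair $\repr(v)$, which in $\bigO(1)$ time tells us whether $v$ is periodic and, in the nonperiodic case, additionally reports whether $\sdepth(v) < 3\tau - 1$. All of $\STCore(\T)$, the structure from \cref{sec:sa-nonperiodic-ds}, the trie $\TSSS$, the structure from \cref{sec:sa-periodic-ds}, and the trie $\TZ$ are components of the data structure from \cref{sec:st-ds}, so each of the three specialized procedures below is applicable here.

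The three branches are: (1) if $v$ is periodic, invoke \cref{pr:st-periodic-pred}, which computes $\LB(\str(v)c, \T)$ in $\bigO(\log\log n)$ time using $\mapToTZ$/$\pseudoInvTZ{}{}$ and the structure from \cref{sec:st-periodic-ds}; (2) if $v$ is nonperiodic and $\sdepth(v) < 3\tau - 1$, invoke \cref{pr:st-core-pred}, which computes $\LB(\str(v)c, \T)$ in $\bigO(1)$ time via $\mapToShort(v)$ and a lookup in $\LTrange$; (3) if $v$ is nonperiodic and $\sdepth(v) \geq 3\tau - 1$, invoke \cref{pr:st-nonperiodic-pred}, which computes $\LB(\str(v)c, \T)$ in $\bigO(\log^{\epsilon} n)$ time via $\mapToTSSS$, $\pred$ in $\TSSS$, and $\pseudoInvTSSS{}{}$. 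Since $\epsilon \in (0,1)$ gives $\bigO(\log\log n) \subseteq \bigO(\log^{\epsilon} n)$ and the $\bigO(1)$ preprocessing check is negligible, the overall running time in every branch is $\bigO(\log^{\epsilon} n)$.

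There is essentially no combinatorial obstacle remaining at this level: the correctness in each branch is already guaranteed by \cref{lm:st-core-pred} (implicitly via \cref{pr:st-core-pred}), \cref{lm:st-nonperiodic-pred}, and \cref{lm:st-periodic-pred}, and the case split is exhaustive (an internal explicit node is either periodic or nonperiodic, and a nonperiodic internal node either has $\sdepth(v) < 3\tau - 1$ or $\sdepth(v) \geq 3\tau - 1$). The only point worth a sentence of care is that \cref{pr:st-periodic-pred} and \cref{pr:st-nonperiodic-pred} are stated for \emph{internal} explicit nodes, which is exactly the hypothesis of the proposition, and that \cref{pr:st-core-pred} requires $\sdepth(v) < 3\tau-1$, which is precisely the side condition under which it is invoked. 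Thus the ``hard part'' here is purely bookkeeping: confirming that the branch hypotheses match the preconditions of the invoked propositions and that all needed components are present in the structure of \cref{sec:st-ds}; no new lemma is needed.
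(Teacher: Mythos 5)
Your proposal is correct and matches the paper's own proof exactly: it dispatches via \cref{pr:st-core-periodicity} in $\bigO(1)$ time to \cref{pr:st-periodic-pred}, \cref{pr:st-core-pred}, or \cref{pr:st-nonperiodic-pred} depending on whether $v$ is periodic and, if not, whether $\sdepth(v) < 3\tau - 1$. The case analysis, invoked subroutines, and time bounds are all the same as in the paper, so nothing further is needed.
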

\begin{proof}
  First, using \cref{pr:st-core-periodicity}, in $\bigO(1)$ time we
  check if $v$ is periodic. If so, we obtain $\LB(\str(v)c, \T)$ in
  $\bigO(\log \log n)$ time using
  \cref{pr:st-periodic-pred}. Otherwise (i.e., if $v$ is not
  periodic), \cref{pr:st-core-periodicity} additionally return the
  information on whether it holds $\sdepth(v) < 3\tau - 1$. If so,
  then we obtain $\LB(\str(v)c, \T)$ in $\bigO(1)$ time using
  \cref{pr:st-core-pred}. Otherwise, we obtain $\LB(\str(v)c, \T)$
  in $\bigO(\log^{\epsilon} n)$ time using
  \cref{pr:st-nonperiodic-pred}.
\end{proof}

\begin{proposition}\label{pr:st-pred}
  Let $v$ be an explicit internal node of $\ST$.  Given the data
  structure from \cref{sec:st-ds}, $\repr(v)$, and $c \in \Alphabet$,
  in $\bigO(\log^{\epsilon} n)$ time we can compute $\repr(\pred(v,
  c))$.
\end{proposition}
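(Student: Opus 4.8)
The plan is to reduce $\repr(\pred(v, c))$ to a single call to \cref{pr:st-pred-aux} together with $O(1)$ additional bookkeeping. Recall that by definition $\pred(v, c) = \child(v, c_{\max})$ where $c_{\max} = \max\{c' \in [0 \dd c) : \child(v, c') \neq \nil\}$ if such $c'$ exists, and $\pred(v, c) = \nil$ otherwise. The key observation is that $\LB(\str(v)c, \T)$, as computed by \cref{pr:st-pred-aux}, already encodes exactly the information we need: the position $\LB(\str(v)c, \T)$ in $\SA$ is the left boundary of the block of suffixes prefixed by $\str(v)c$, equivalently the right boundary (in the $\UB$ sense) of the union of all blocks of suffixes prefixed by $\str(v)c'$ for $c' \in [0 \dd c)$ that also extend $\str(v)$ in $\T$. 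In other words, $e := \LB(\str(v)c, \T)$ satisfies $e = \UB(\str(v)c_{\max}, \T)$ when $c_{\max}$ exists, and $e = \LB(\str(v), \T) = \lrank(v)$ when no such $c_{\max}$ exists.

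The steps I would carry out are as follows. First, using \cref{pr:st-pred-aux}, in $\bigO(\log^{\epsilon} n)$ time compute $e := \LB(\str(v)c, \T)$. Second, compare $e$ with $\lrank(v)$: if $e = \lrank(v)$, then $\child(v, c') = \nil$ for every $c' \in [0 \dd c)$, so we return $\repr(\pred(v, c)) = \repr(\nil) = (0, 0)$. Otherwise $e > \lrank(v)$, and $e$ equals $\UB(\str(v)c_{\max}, \T) = \rrank(\child(v, c_{\max}))$ for the (unique) $c_{\max}$. Third, to obtain $\repr(\pred(v, c)) = (\LB(\str(v)c_{\max}, \T), e)$ it remains to recover $b := \LB(\str(v)c_{\max}, \T) = \lrank(\child(v, c_{\max}))$. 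For this I would determine $c_{\max}$ itself by reading the character $c_{\max} = \T[\SA[e] + \sdepth(v)]$ (since the suffix at rank $e$ in $\SA$, being the first suffix of the $\str(v)c_{\max}$-block read right-to-left — more precisely the suffix at rank $e$ is the smallest one whose prefix of length $|\str(v)|+1$ is $\ge \str(v)c$, hence has prefix $\str(v)c''$ for the smallest such $c'' \ge c$; but the last suffix of the $c_{\max}$-block is at rank $e-1$, so $c_{\max} = \T[\SA[e-1] + \sdepth(v)]$), then compute $b = \LB(\str(v)c_{\max}, \T)$ by a second application of \cref{pr:st-pred-aux}. This requires computing $\SA[e-1]$ via \cref{pr:sa-sa} in $\bigO(\log^{\epsilon} n)$ time and computing $\sdepth(v)$ via \cref{pr:st-sdepth} in $\bigO(\log^{\epsilon} n)$ time.

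The main obstacle is getting the correspondence between $\LB(\str(v)c, \T)$ and the right boundary of the predecessor child exactly right, in particular confirming that $e - 1$ is indeed a position inside the $c_{\max}$-block (as opposed to some degenerate off-by-one situation), and that reading $\T[\SA[e-1] + \sdepth(v)]$ recovers a character $c_{\max} < c$ rather than something $\ge c$. This holds precisely because $e > \lrank(v)$ guarantees the block $\SA(\lrank(v) \dd e)$ is nonempty and consists exactly of suffixes prefixed by $\str(v)c'$ for various $c' < c$; its rightmost element sits at rank $e - 1$ and its extending character is the largest such $c'$, which is by definition $c_{\max}$. Once $c_{\max}$ is identified, a second invocation of \cref{pr:st-pred-aux} on $(v, c_{\max})$ yields $\LB(\str(v)c_{\max}, \T)$, and we return $(\LB(\str(v)c_{\max}, \T), e)$; each of the $\bigO(1)$-many invocations of \cref{pr:st-pred-aux,pr:sa-sa,pr:st-sdepth} costs $\bigO(\log^{\epsilon} n)$ time, so the total is $\bigO(\log^{\epsilon} n)$.
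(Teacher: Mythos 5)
Your overall strategy coincides with the paper's: compute $e := \LB(\str(v)c, \T)$ via \cref{pr:st-pred-aux}, detect the $\nil$ case via $e = \lrank(v)$, identify the branching character of $\pred(v, c)$ by reading $\T$ at an appropriate boundary suffix of $\SA$ (using \cref{pr:sa-sa} and \cref{pr:st-sdepth}), and then recover the child's interval. Your way of finishing — a second call to \cref{pr:st-pred-aux} with $c_{\max}$ for the left endpoint, reusing $e = \UB(\str(v)c_{\max}, \T)$ as the right endpoint (valid because $v$ is internal, so no suffix equals $\str(v)$ exactly) — is a legitimate alternative to the paper's single call to \cref{pr:st-child}, and all invocations cost $\bigO(\log^{\epsilon} n)$.

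However, the index you finally commit to is off by one, and the error is not harmless. Under the paper's conventions, $\Occ(\Pat,\T) = \{\SA[i] : i \in (\LB(\Pat,\T) \dd \UB(\Pat,\T)]\}$, so the suffixes lexicographically smaller than $\str(v)c$ occupy ranks $1,\dots,e$, and those with prefix $\str(v)c'$ for some $c' \prec c$ occupy the half-open range $(\lrank(v) \dd e]$ — not $(\lrank(v) \dd e)$ as you write (which would even be empty when $e = \lrank(v)+1$). Hence the rightmost element of the predecessor block is $\SA[e]$, not $\SA[e-1]$, and your intermediate claim that rank $e$ holds the smallest suffix whose length-$(|\str(v)|+1)$ prefix is $\succeq \str(v)c$ is also wrong (that is rank $e+1$). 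With the committed formula $c_{\max} = \T[\SA[e-1] + \sdepth(v)]$, the algorithm fails whenever $\pred(v,c)$ is a leaf: then $\LB(\str(v)c_{\max},\T) = e-1$, so rank $e-1$ lies outside the predecessor block and the character read is either a smaller branching character, or (when $e-1 = \lrank(v)$) comes from a suffix that need not even have $\str(v)$ as a prefix, or is undefined outright (e.g., $v = \Root(\ST)$ with $e = 1$ yields $\SA[0]$). The fix is simply to read $c_{\max} = \T[\SA[e] + \sdepth(v)]$ — your first formula, and exactly what the paper does with $i = \LB(\str(v)c,\T)$ — after which the rest of your argument goes through.
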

\begin{proof}
  Denote $(b, e) = \repr(v)$. First, using \cref{pr:st-pred-aux}, in
  $\bigO(\log^{\epsilon} n)$ time we compute $i = \LB(\str(v)c,
  \T)$. Observe that by definition of $\pred(v, c)$ we then have
  $\pred(v, c) = \nil$ if and only if $i = b$. If $i = b$, we thus
  return $\repr(\pred(v, c)) = (0, 0)$. Let us thus assume $i \neq
  b$. Observe that we then have $\SA[i] \in \Occ(\str(\pred(v, c)),
  \T)$, and moreover, $\pred(v, c) = \child(v, c')$, where $c' =
  \T[\SA[i] + \sdepth(v)]$. We thus proceed as follows.  First, using
  \cref{pr:sa-sa}, in $\bigO(\log^{\epsilon} n)$ time we compute $j =
  \SA[i]$. Next, using \cref{pr:st-sdepth}, in $\bigO(\log^{\epsilon}
  n)$ time we compute $\ell = \sdepth(v)$.  In $\bigO(1)$ time we then
  obtain $c' = \T[j + \ell]$. Finally, using \cref{pr:st-child}, in
  $\bigO(\log^{\epsilon} n)$ time we compute and return
  $\repr(\child(v, c')) = \repr(\pred(v, c))$.
\end{proof}

\subsubsection{Implementation of
  \texorpdfstring{$\WA(v,d)$}{WA(v, d)}}\label{sec:st-wa}

\begin{proposition}\label{pr:st-wa}
  Let $v$ be an explicit node of $\ST$.  Given the data structure from
  \cref{sec:st-ds}, $\repr(v)$, and an integer $d$ satisfying $0 \leq
  d \leq |\str(v)|$, in $\bigO(\log^{\epsilon} n)$ time we can compute
  $\repr(\WA(v, d))$.
\end{proposition}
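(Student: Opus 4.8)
The plan is to dispatch to one of the three procedures already established in \cref{sec:st-core,sec:st-nonperiodic,sec:st-periodic}, branching first on how $d$ compares with $3\tau - 1$ and then, when $d \geq 3\tau - 1$, on whether $v$ is periodic.

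First I would treat the case $d < 3\tau - 1$. Here the hypotheses $0 \leq d \leq |\str(v)|$ and $d < 3\tau - 1$ are exactly the preconditions of \cref{pr:st-core-wa}, so I would invoke it on $\repr(v)$ and $d$ to obtain $\repr(\WA(v, d))$ in $\bigO(1)$ time; no periodicity test is required in this branch.

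Next, suppose $d \geq 3\tau - 1$. Since $d \leq |\str(v)|$, this forces $|\str(v)| \geq 3\tau - 1$, and $\WA(v, d)$ is an ancestor of $v$ with $\sdepth(\WA(v, d)) \geq d \geq 3\tau - 1$, so both $v$ and $\WA(v, d)$ have the same length-$(3\tau - 1)$ prefix $\str(v)[1 \dd 3\tau - 1]$. Consequently $v$ is periodic if and only if $\WA(v, d)$ is, because periodicity of a node with string-depth at least $3\tau - 1$ depends only on the shortest period of that prefix (\cref{def:node-periodicity,def:pattern-periodicity}). I would run \cref{pr:st-core-periodicity} on $\repr(v)$ in $\bigO(1)$ time to decide whether $v$ is periodic. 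If $v$ is nonperiodic, then $3\tau - 1 \leq |\str(v)|$ and $3\tau - 1 \leq d \leq |\str(v)|$, which are exactly the hypotheses of \cref{pr:st-nonperiodic-wa}, so I would use it to compute $\repr(\WA(v, d))$ in $\bigO(\log^{\epsilon} n)$ time. If $v$ is periodic, the hypotheses of \cref{pr:st-periodic-wa} hold and I would use it to compute $\repr(\WA(v, d))$ in $\bigO(\log \log n)$ time. Across all branches the running time is $\bigO(\log^{\epsilon} n)$, as claimed.

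The only subtlety is the observation that the procedure to call is selected by the periodicity of the input node $v$ (which \cref{pr:st-core-periodicity} reports) rather than of the yet-unknown target $\WA(v, d)$; the argument above — that both nodes share their length-$(3\tau - 1)$ prefix whenever $d \geq 3\tau - 1$ — is what makes this legitimate, and \cref{pr:st-core-periodicity,pr:st-nonperiodic-wa,pr:st-periodic-wa} are already phrased in terms of the periodicity of $v$. Everything else is a direct appeal to those three results, so I do not anticipate any real obstacle beyond checking the preconditions just discussed.
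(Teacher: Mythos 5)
Your proposal is correct and matches the paper's own proof: both handle $d < 3\tau-1$ with \cref{pr:st-core-wa}, and otherwise test the periodicity of $v$ via \cref{pr:st-core-periodicity} and dispatch to \cref{pr:st-nonperiodic-wa} or \cref{pr:st-periodic-wa}. Your extra remark that $v$ and $\WA(v,d)$ share the same length-$(3\tau-1)$ prefix is correct but not needed, since (as you observe) those propositions are stated in terms of the periodicity of $v$ itself.
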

\begin{proof}
  If $d < 3\tau - 1$, we obtain $\repr(\WA(v, d))$ in $\bigO(1)$ time
  using \cref{pr:st-core-wa}. Let us thus assume $d \geq 3\tau -
  1$. This implies $\sdepth(v) \geq 3\tau - 1$.  First, using
  \cref{pr:st-core-periodicity}, in $\bigO(1)$ time we determine
  whether $v$ is periodic. If not, then in $\bigO(\log^{\epsilon} n)$
  time we compute $\repr(\WA(v, d))$ using
  \cref{pr:st-nonperiodic-wa}. Otherwise, we obtain $\repr(\WA(v, d))$
  using \cref{pr:st-periodic-wa} in $\bigO(\log \log n)$ time.
\end{proof}

\subsubsection{Implementation of
  \texorpdfstring{$\wlink(v, c)$}{wlink(v, c)}}\label{sec:st-wlink}

\begin{proposition}\label{pr:st-wlink-occ}
  Let $\Pat \in \Alphabet^m$. Given the data structure from
  \cref{sec:st-ds}, the value $|\Pat|$, any $j \in \Occ(\Pat, \T)$,
  and any $c \in \Alphabet$, in $\bigO(\log^{\epsilon} n)$ time we can
  check if $\Occ(\Pat c, \T) \neq \emptyset$, and if so, return some
  position $j' \in \Occ(\Pat c, \T)$.
\end{proposition}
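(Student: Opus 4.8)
The plan is to reduce the problem to pattern matching in $\Trev$ (the modified reverse text defined in \cref{sec:st-ds}), using the correspondence $j \in \Occ(\Pat,\T) \iff j' \in \Occ(\revstr{\Pat},\Trev)$ with $j' = n-(j+|\Pat|-1)$. Under this correspondence, appending a character $c$ to the right of $\Pat$ in $\T$ corresponds to \emph{prepending} $c$ to the left of $\revstr{\Pat}$ in $\Trev$, i.e.\ passing from $\revstr{\Pat}$ to $c \cdot \revstr{\Pat} = \revstr{\Pat c}$. So the task becomes: given an occurrence of a string $S := \revstr{\Pat}$ in $\Trev$ and a letter $c$, decide whether $cS$ occurs in $\Trev$ and if so return an occurrence. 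This is exactly the classical ``Weiner link'' primitive, and we will answer it via the pattern-matching machinery of \cref{sec:pm} built for $\Trev$ together with the $\SA$/$\ISA$ machinery.

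Concretely, the steps I would carry out are as follows. First, from the given $j \in \Occ(\Pat,\T)$ and $|\Pat|=m$, compute $j' = n-(j+m-1)$ in $\bigO(1)$ time; then $j' \in \Occ(S,\Trev)$ with $S = \revstr{\Pat}$ of length $m$. Second, I want the $\SA$-range of $S$ in $\Trev$. For this, use the $\ISA$ structure of \cref{th:sa} (built on $\Trev$) to locate $i' := \ISA_{\Trev}[j']$, which is \emph{some} index inside the range $\SA_{\Trev}[b \dd e)$ of all suffixes of $\Trev$ prefixed by $S$. To pin down $(b,e)$ without reading all of $S$ again, I would instead invoke the pattern-matching index of \cref{th:pm} directly: but note \cref{th:pm} needs the packed representation of the pattern $cS$, and we do not have $S$ packed. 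The cleaner route is: we already have one occurrence $j'$ of $S$, hence the suffix $\Trev[j' \dd n]$ has $S$ as a prefix, so it suffices to compute, via $\SA_{\Trev}$/$\ISA_{\Trev}$ and $\LCE$ queries (the $\LCE$ structure of \cite[Theorem~5.4]{sss} on $\Trev$, available inside the structure of \cref{sec:sa-periodic-ds}), the maximal block $[b\dd e)$ of $\SA_{\Trev}$ whose suffixes share the length-$m$ prefix $S$ — this is a pair of predecessor-style searches using $\LCE$ against $\Trev[j'\dd n]$ and is doable in $\bigO(\log^\epsilon n)$ time by binary search with $\bigO(1)$-time $\LCE$ and $\SA$ evaluations. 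Third, we reduce ``does $cS$ occur'' to ``is $c$ a left-extension of the block $\SA_{\Trev}[b\dd e)$''. Equivalently, $\Occ(cS,\Trev) \ne \emptyset$ iff there is $i \in [b\dd e)$ with $\Trev[\SA_{\Trev}[i]-1] = c$ and $\SA_{\Trev}[i] > 1$; and then $\SA_{\Trev}[i]-1 \in \Occ(cS,\Trev)$. Rather than scanning, observe that $cS$ occurs in $\Trev$ iff $\Occ(cS,\Trev)$ is nonempty, which by the $\LF$-style structure of the suffix array can be detected: the positions $\{\SA_{\Trev}[i]-1 : i \in [b\dd e),\ \Trev[\SA_{\Trev}[i]-1]=c\}$ form a contiguous $\SA$-range (the range of $cS$). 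But since we are allowed $\bigO(\log^\epsilon n)$ time and we only need \emph{one} witness, the simplest implementation is: use the pattern-matching index of \cref{th:pm} on $\Trev$ with the 1-character pattern... no — we need $cS$, length $m+1$, which we cannot afford to write out.

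The right fix, and the one I expect to be the main technical point, is to compute $(\LB(cS,\Trev),\UB(cS,\Trev))$ from $(b,e)=(\LB(S,\Trev),\UB(S,\Trev))$ using only \emph{local} information at the two ends of the block, exactly in the spirit of \cref{pr:st-periodic-range-2} (extend a known range by one character). This is an ``extend-by-one-character'' / Weiner-link computation, and it is supported because the suffix tree structure of \cref{sec:st} built on $\Trev$ already gives us $\child$, $\pred$, $\WA$, $\findleaf$, and the core navigation: specifically, let $\ell = e-b$; if $\ell=1$, then $\Trev[\SA_{\Trev}[b]\dd n]$ has $S$ as a prefix and one checks $\Trev[\SA_{\Trev}[b]-1]$ directly; if $\ell>1$, take the $\ST$-node $v$ with $\repr(v)=(b,e)$ (it exists since $b<e$), and a left-extension of $\str(v)=S$ by $c$ exists iff the Weiner link $\wlink(v,c)$ — equivalently $\wlinkprim(v,c)$ — is non-$\nil$; since $\wlink$ on $\ST$ is itself one of the operations we are building (\cref{sec:st-wlink}), we must be careful not to create a circular dependency. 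To avoid circularity, I would implement this proposition \emph{purely} through $\SA_{\Trev}$, $\ISA_{\Trev}$, and $\LCE_{\Trev}$ (all from \cref{sec:sa}, which is self-contained): compute $i' = \ISA_{\Trev}[j']$; find $b = \max\{i \le i' : \LCE_{\Trev}(\SA_{\Trev}[i],j') \ge m\ \text{fails at}\ i-1\}$ and symmetrically $e$, by binary search; then for the left-extension, note $\Occ(cS,\Trev)\ne\emptyset$ iff $\exists i\in[b\dd e)$ with $\SA_{\Trev}[i]>1$ and $\Trev[\SA_{\Trev}[i]-1]=c$, and such $i$ can be found by one more binary search exploiting that, within $\SA_{\Trev}[b\dd e)$, the suffixes are sorted so the ones preceded by a fixed $c$ form a contiguous sub-block, located by comparing $\Trev[\SA_{\Trev}[i]-1]$ — monotone in $i$ up to the lexicographic order of the preceding character — again using $\bigO(1)$-time $\SA$ and character access, all within $\bigO(\log n)$ steps, hence $\bigO(\log^\epsilon n \cdot \log n)$... which is too slow. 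The genuinely clean resolution, and what I would ultimately write, is to note that we do \emph{not} need $(b,e)$ explicitly at all: since we already possess the packed text $\T$ (hence can read $\Pat c$'s relevant suffix $\T[j\dd j+m-1]\cdot$ nothing — we only lack the packed $\Pat c$), we can instead directly ask the index of \cref{th:pm} built on $\T$ for $\Occ(\Pat c,\T)$ \emph{provided we can synthesize the packed representation of $\Pat c$}: but $\Pat = \T[j\dd j+m)$ is already inside $\T$ in packed form, so $\Pat c$ is obtained by copying $m\log\sigma$ bits from $\T$ and appending the $\log\sigma$-bit symbol $c$, in $\bigO(1+m/\log_\sigma n)$ time — however the statement forbids an $\bigO(m/\log_\sigma n)$ term. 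Therefore the $m/\log_\sigma n$ must be avoided, and that forces the ``extend a known $\SA$-range by one symbol'' approach through the already-constructed $\SA_{\Trev}$/$\ISA_{\Trev}$ structures, with the block endpoints found in $\bigO(\log^\epsilon n)$ amortized by a single $\WA$-type query on the suffix tree of $\Trev$ restricted to the leaf $\findleaf_{\Trev}(j')$ — i.e.\ compute the node $v = \WA(\findleaf_{\Trev}(j'), m)$ in $\bigO(\log^\epsilon n)$ time, giving $(b,e)=\repr(v)$ directly, then dispatch on the (at most $\sigma$, but we only want one) left-extensions via $\pred_{\Trev}$/$\child_{\Trev}$ on the suffix tree of $\Trev$ — and the main obstacle I anticipate is exactly arranging this dispatch so that it is a \emph{single} $\bigO(\log^\epsilon n)$-time operation (one Weiner-link query $\wlinkprim_{\Trev}$ on $\ST_{\Trev}$, returning a node $v'$ with $\repr(v') = (\LB(cS,\Trev),\UB(cS,\Trev))$) rather than a loop over $\Sigma$, after which $j'' := \SA_{\Trev}[\LB(cS,\Trev)+1]-1$ is a position with $\Trev[j''\dd n]$ prefixed by $cS$, and finally $j' = n - (j'' + m)$ lies in $\Occ(\Pat c,\T)$, which we return; if $\wlinkprim_{\Trev}(v,c)=\nil$ we report $\Occ(\Pat c,\T)=\emptyset$.
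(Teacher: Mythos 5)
There is a genuine gap. Your final algorithm reduces the right-extension $\Pat\mapsto\Pat c$ in $\T$ to a \emph{left}-extension $\revstr{\Pat}\mapsto c\cdot\revstr{\Pat}$ in $\Trev$, and then answers that left-extension with ``one Weiner-link query $\wlinkprim$ on the suffix tree of $\Trev$.'' But $\wlinkprim$ is exactly the operation that this proposition exists to support: in the paper, $\wlinkprim$ on $\ST$ is implemented (in \cref{pr:st-wlinkprim}) by invoking \cref{pr:st-wlink-occ} on the reverse text, and symmetrically a $\wlinkprim$ on $\ST_{\Trev}$ would be implemented via this proposition for $\T$. You flagged this circularity yourself mid-proof, but the fallback you sketched to avoid it (binary search over the $\SA_{\Trev}$ range using $\bigO(\log^{\epsilon}n)$-time $\SA$ accesses) is, as you noted, too slow, and you then returned to the circular route. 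So no non-circular $\bigO(\log^{\epsilon}n)$-time implementation is actually delivered.

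The missing observation is that no detour through $\Trev$ is needed at all: extending to the \emph{right} is a descend, not a Weiner link, and can be done directly on $\T$ with primitives that are already available. The paper's proof computes $i=\ISA[j]$, so that $(i-1,i)=\repr(v)$ for the leaf $v$ with $\str(v)=\T[j\dd n]$, then $(b',e')=\repr(\WA(v,|\Pat|))$, which is the $\SA$-range of $\Pat$. It then decides whether the locus of $\Pat$ is explicit by comparing $\T[\SA[b'+1]+|\Pat|]$ with $\T[\SA[e']+|\Pat|]$: if they coincide, $\Pat$ ends inside an edge and one simply compares that character with $c$ and returns $\SA[b'+1]$ as a witness; otherwise one performs a single $\child(\cdot,c)$ query (\cref{pr:st-child}) and, if nonempty, returns $\SA[e'']$ for the resulting range. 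All of $\ISA$, $\SA$, $\WA$, and $\child$ cost $\bigO(\log^{\epsilon}n)$, with no circularity. Two smaller issues in your write-up would also need repair even under your reduction: the witness should be $j''=\SA_{\Trev}[\LB(cS,\Trev)+1]$ (your extra ``$-1$'' is an off-by-one), and the boundary cases where $\Pat$ reaches the sentinel ($j+|\Pat|=n+1$) or $c=0$ must be treated separately, since the $\T\leftrightarrow\Trev$ occurrence correspondence only holds for patterns avoiding the sentinel.
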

\begin{proof}
  We start by checking if $\Pat$ contains the symbol $0$. For this, we
  simply check if $j + |\Pat| = n + 1$. If so, we return that
  $\Occ(\Pat c, \T) = \emptyset$.  Let us thus assume $j + |\Pat| \leq
  n$.

  Using \cref{pr:sa-isa}, in $\bigO(\log^{\epsilon} n)$ time we
  compute $i = \ISA[j]$. Let $(b, e) = (i - 1, i)$, and observe that
  we then have $(b, e) = \repr(v)$, where $v$ is a leaf of $\ST$
  satisfying $\str(v) = \T[j \dd n]$. Next, using \cref{pr:st-wa}, in
  $\bigO(\log^{\epsilon} n)$ time we compute the pair $(b', e') =
  \repr(\WA(v, |\Pat|))$ (we can use it, since $|\Pat| \leq n - j + 1
  = \sdepth(v)$).  We then have:
  \begin{align*}
    (b', e')
      &= (\LB(\str(v)[1 \dd |\Pat|], \T),
          \UB(\str(v)[1 \dd |\Pat|], \T))\\
      &= (\LB(\T[j \dd j + |\Pat|), \T),
          \UB(\T[j \dd j + |\Pat|), \T))\\
      &= (\LB(\Pat, \T),
          \UB(\Pat, \T)).
  \end{align*}

  Next, note that it holds $(b', e') = \repr(v')$ for some node $v'$
  such that $\sdepth(v') \geq |\Pat|$.  To check if $\sdepth(v') =
  |\Pat|$, in $\bigO(\log^{\epsilon} n)$ time we compute $j_1 = \SA[b'
  + 1]$ and $j_2 = \SA[e']$ using \cref{pr:sa-sa}. As explained in the
  proof of \cref{pr:st-sdepth}, we then have $\sdepth(v') = |\Pat|$ if
  and only if $\T[j_1 + |\Pat|] \neq \T[j_2 + |\Pat|]$, which we can
  check in $\bigO(1)$ time (note that $\T[j_1 + |\Pat|]$ and $\T[j_2 +
  |\Pat|]$ are well-defined, since $j_1, j_2 \in \Occ(\Pat, \T)$ and
  we assumed that $\Pat$ does not contain symbol $\T[n] =
  0$). Consider two cases:
  \begin{itemize}
  \item If $\T[j_1 + |\Pat|] = \T[j_2 + |\Pat|]$, then $v'$ satisfies
    $\sdepth(v') > |\Pat|$. In that case we check if $c = \T[j_1 +
    |\Pat|]$. If so, we have $j_1 \in \Occ(\Pat c, \T)$ and hence we
    return $j_1$. Otherwise, we return that $\Occ(\Pat c, \T) =
    \emptyset$.
  \item Otherwise (i.e., if $\T[j_1 + |\Pat|] \neq \T[j_2 + |\Pat|]$),
    we have $\sdepth(v') = |\Pat|$.  Using \cref{pr:st-child}, we then
    compute the pair $(b'', e'') = \repr(\child(v', c))$ in
    $\bigO(\log^{\epsilon} n)$ time. If $(b'', e'') = (0, 0)$, then we
    return that $\Occ(\Pat c, \T) = \emptyset$. Otherwise, we have
    $\Occ(\Pat c, \T) \neq \emptyset$.  We then use \cref{pr:sa-sa} to
    compute $j' = \SA[e''] \in \Occ(\Pat c, \T)$ in
    $\bigO(\log^{\epsilon} n)$ time.  \qedhere
  \end{itemize}
\end{proof}

\begin{proposition}\label{pr:st-wlinkprim}
  Let $v$ be an explicit node of $\ST$.  Given the data structure from
  \cref{sec:st-ds}, $\repr(v)$, and $c \in \Alphabet$, in
  $\bigO(\log^{\epsilon} n)$ time we can compute $\repr(\wlinkprim(v,
  c))$.
\end{proposition}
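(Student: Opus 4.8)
The plan is to reduce the computation of $\repr(\wlinkprim(v,c))$ to finding a single occurrence of $c\cdot\str(v)$ in $\T$ (or certifying that none exists), and then recovering the suffix-array range of $c\cdot\str(v)$ by a leaf-to-weighted-ancestor step. Concretely, if $j'' \in \Occ(c\cdot\str(v),\T)$, then letting $v_0$ be the leaf with $\str(v_0) = \T[j''\dd n]$ (so $\repr(v_0) = (\ISA[j'']-1,\ISA[j''])$, computed via \cref{pr:sa-isa}) and $\ell := \sdepth(v)$, the node $\WA(v_0,\ell+1)$ has $\repr$ equal to $(\LB(c\cdot\str(v),\T),\UB(c\cdot\str(v),\T))$: either its label is exactly $c\cdot\str(v)$, or it is the explicit node immediately below the implicit occurrence of $c\cdot\str(v)$ on an edge of $\ST$, which has the same occurrence set; in both cases this equals $\repr(\wlinkprim(v,c))$, which is precisely why $\wlinkprim$ (rather than $\wlink$) is the natural target. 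The weighted-ancestor call is legal since $\ell+1 \le n - j'' + 1 = |\str(v_0)|$, and is computed in $\bigO(\log^\epsilon n)$ time by \cref{pr:st-wa}.

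I would first dispatch the easy cases. If $v$ is the root (detected by $\repr(v) = (0,n)$), then $\repr(\wlinkprim(v,c)) = (\LB(c,\T),\UB(c,\T)) = \LTrange[\Int(c)]$ in $\bigO(1)$ time. If $v$ is a leaf (detected by $\rrank(v)-\lrank(v) = 1$), then $\str(v) = \T[j\dd n]$ for $j = \SA[\rrank(v)]$, so $\Occ(c\cdot\str(v),\T) = \emptyset$ unless $j > 1$ and $\T[j-1] = c$, in which case the answer is the leaf with $\repr = (\ISA[j-1]-1,\ISA[j-1])$. If $c = 0$ and $v$ is not the root, then $c\cdot\str(v)$ cannot occur in $\T$ because the sentinel $0$ appears only at position $n$, so we return $(0,0)$. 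In the remaining case $v$ is an internal non-root node, so $\str(v)$ contains no $0$ (an internal node's label has at least two occurrences, hence is not a suffix of $\T$), $c \ne 0$, and $c\cdot\str(v)$ contains no $0$ either.

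For this main case I would exploit the reverse text $\Trev$, whose structure is the second half of the data structure of \cref{sec:st-ds}; since $(\Trev)^{\rm rev} = \T$, the occurrence correspondence recalled in \cref{sec:st-ds} applies in both directions for sentinel-free patterns. Take $j := \SA[\lrank(v)+1] \in \Occ(\str(v),\T)$ via \cref{pr:sa-sa} and $\ell := \sdepth(v)$ via \cref{pr:st-sdepth}; then $j_{\mathrm{rev}} := n - j - \ell + 1$ satisfies $\Trev[j_{\mathrm{rev}} \dd j_{\mathrm{rev}}+\ell) = \revstr{\str(v)}$, i.e. $j_{\mathrm{rev}} \in \Occ(\revstr{\str(v)},\Trev)$. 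Now invoke \cref{pr:st-wlink-occ} on the $\Trev$-part of the structure with pattern-length $\ell$, occurrence $j_{\mathrm{rev}}$, and character $c$: in $\bigO(\log^\epsilon n)$ time this either reports $\Occ(\revstr{\str(v)}c,\Trev) = \emptyset$ — equivalently $\Occ(c\cdot\str(v),\T) = \emptyset$, so we return $(0,0)$ — or returns some $j'_{\mathrm{rev}} \in \Occ(\revstr{\str(v)}c,\Trev)$. Since $\revstr{\str(v)}c = \revstr{c\cdot\str(v)}$ is sentinel-free, $j'' := n - j'_{\mathrm{rev}} - \ell \in \Occ(c\cdot\str(v),\T)$, and we finish as in the first paragraph.

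The main obstacle is bookkeeping rather than conceptual: one must verify the index arithmetic of the $\T \leftrightarrow \Trev$ translation, check that every intermediate pattern ($\str(v)$, $\revstr{\str(v)}$, $\revstr{\str(v)}c$, $c\cdot\str(v)$) avoids the sentinel — which is exactly what the case split guarantees — and confirm that the resulting procedure uses only a constant number of $\bigO(\log^\epsilon n)$-time primitives ($\SA$, $\ISA$, $\sdepth$, $\WA$, and one call to \cref{pr:st-wlink-occ}) together with $\bigO(1)$-time core lookups, so that the total running time is $\bigO(\log^\epsilon n)$.
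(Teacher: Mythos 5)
Your proposal is correct and follows essentially the same route as the paper's proof: it finds one occurrence of $c\cdot\str(v)$ through the reverse-text part of the structure via \cref{pr:st-wlink-occ} (after converting an occurrence of $\str(v)$ to an occurrence of $\revstr{\str(v)}$ in $\Trev$), then recovers the suffix-array range with an $\ISA$ query followed by $\WA(\cdot,\sdepth(v)+1)$, dispatching the sentinel-related corner cases (root, leaf, $c=0$) separately just as the paper does via its $j+\ell-1=n$ and $c=0$ checks. The only nit is that in your root shortcut you should return $(0,0)$ rather than $\LTrange[\Int(c)]$ when $\LB(c,\T)=\UB(c,\T)$, to match the convention $\repr(\nil)=(0,0)$.
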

\begin{proof}
  Denote $(b, e) = \repr(v)$ and $\Pat = \str(v)$.  The algorithm
  consists of two steps:
  \begin{enumerate}
  \item The first step is to determine if $\Occ(c\Pat, \T) \neq
    \emptyset$, and if so, to compute some $j' \in \Occ(c\Pat, \T)$.
    First, using \cref{pr:st-sdepth}, in $\bigO(\log^{\epsilon} n)$
    time we compute $\ell := \sdepth(v) = |\Pat|$. Using
    \cref{pr:sa-sa}, in $\bigO(\log^{\epsilon} n)$ time we also
    compute $j = \SA[e]$. We then have $j \in \Occ(\Pat, \T)$.  We now
    check if $j + \ell - 1 = n$. If so, then by the uniqueness of
    $\T[n]$, we have $\Occ(\Pat, \T) = \{j\}$. In that case, we have
    $\Occ(c\Pat, \T) \neq \emptyset$ if and only if $\T[j - 1] = c$,
    which we can check in $\bigO(1)$ time.  If $\T[j - 1] = c$, in
    $\bigO(1)$ time we then obtain $j' \in \Occ(c\Pat, \T)$, where $j'
    = j - 1$.  Let us now assume that $j + \ell - 1 \neq n$.  We now
    check if $c = 0$. If so, then $\Occ(c\Pat, \T) \neq \emptyset$
    holds if and only if $\ell = 0$. We can again check this condition
    in $\bigO(1)$ time.  Moreover, if $\ell = 0$, then we have $j' \in
    \Occ(c\Pat, \T)$, where $j' = n$.  Let us thus assume that $c \neq
    0$.  Observe that then, letting $j^{\rm rev} := n - (j + \ell -
    1)$, it holds $j^{\rm rev} \in \Occ(\revstr{\Pat}, \Trev)$.
    Denote $\ell' = \ell + 1$.  Using \cref{pr:st-wlink-occ} for the
    text $\Trev$, in $\bigO(\log^{\epsilon} n)$ time we check if
    $\Occ(\revstr{\Pat}c, \Trev) = \emptyset$ (note that we have
    $|\revstr{\Pat}c| = \ell'$).  If so, we have $\Occ(c\Pat, \T) =
    \emptyset$, since $\revstr{c\Pat} = \revstr{\Pat}c$ and hence
    $\Occ(\revstr{\Pat}c, \Trev) = \emptyset$ holds if and only if
    $\Occ(c\Pat, \T) = \emptyset$. Otherwise (i.e., if
    $\Occ(\revstr{\Pat}c, \Trev) \neq \emptyset$),
    \cref{pr:st-wlink-occ} returns some position $j_{c}^{\rm rev} \in
    \Occ(\revstr{\Pat}c, \Trev)$.  Letting $j' := n - (j_{c}^{\rm rev}
    + \ell' - 1)$, we then have $j' \in \Occ(c\Pat, \T)$.
  \item If in the first step we found that $\Occ(c\Pat, \T) =
    \emptyset$, then by definition it holds $\wlinkprim(v, c) = \nil$,
    and hence we return $\repr(\wlinkprim(v, c)) = (0, 0)$. Let us
    thus assume that $\Occ(c\Pat, \T) \neq \emptyset$ and $j' \in
    \Occ(c\Pat, \T)$.  We now compute the $\SA$ range containing all
    elements of $\Occ(c\Pat, \T)$. For this, we first compute $i =
    \ISA[j']$ using \cref{pr:sa-isa} in $\bigO(\log^{\epsilon} n)$
    time.  Letting $(b', e') = (i-1, i)$, we then have $(b', e') =
    \repr(v')$, where $v'$ is a leaf of $\ST$ satisfying $\str(v') =
    \T[j' \dd n]$.  Using \cref{pr:st-wa}, in $\bigO(\log^{\epsilon}
    n)$ time, we compute $(b'', e'') = \repr(\WA(v', \ell'))$. We then
    have
    \begin{align*}
      (b'', e'')
        &= (\LB(\str(v')[1 \dd \ell'], \T),
            \UB(\str(v')[1 \dd \ell'], \T))\\
        &= (\LB(c\Pat, \T), \UB(c\Pat, \T))\\
        &= \repr(\wlinkprim(v, c)).
    \end{align*}
  \end{enumerate}
  In total, the query takes $\bigO(\log^{\epsilon} n)$ time.
\end{proof}

\begin{proposition}\label{pr:st-wlink}
  Let $v$ be an explicit node of $\ST$.  Given the data structure from
  \cref{sec:st-ds}, $\repr(v)$, and $c \in \Alphabet$, in
  $\bigO(\log^{\epsilon} n)$ time we can compute $\repr(\wlink(v,
  c))$.
\end{proposition}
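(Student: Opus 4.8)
The plan is to obtain $\wlink(v,c)$ from $\wlinkprim(v,c)$ via the elementary equivalence recorded in the footnote of \cref{tab:st-operations}: $\wlink(v,c)\neq\nil$ holds if and only if $\wlinkprim(v,c)\neq\nil$ and $\sdepth(\wlinkprim(v,c))=\sdepth(v)+1$, and in that case $\wlink(v,c)=\wlinkprim(v,c)$. The justification is purely definitional: whenever $\repr(\wlinkprim(v,c))\neq(0,0)$, this pair equals $(\LB(c\cdot\str(v),\T),\UB(c\cdot\str(v),\T))$, hence it is $\repr(v')$ for the unique explicit node $v'$ of $\ST$ such that $c\cdot\str(v)$ is a prefix of $\str(v')$; and by definition $\wlink(v,c)=v'$ exactly when $\str(v')=c\cdot\str(v)$, i.e.\ when $\sdepth(v')=|c\cdot\str(v)|=\sdepth(v)+1$ (note $\sdepth(v')\geq\sdepth(v)+1$ always). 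When $\repr(\wlinkprim(v,c))=(0,0)$, we have $\Occ(c\cdot\str(v),\T)=\emptyset$, so both $\wlinkprim(v,c)$ and $\wlink(v,c)$ are $\nil$.

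First I would invoke \cref{pr:st-wlinkprim} on $\repr(v)$ and $c$ to compute $(b',e')=\repr(\wlinkprim(v,c))$ in $\bigO(\log^{\epsilon} n)$ time. If $(b',e')=(0,0)$, I return $\repr(\wlink(v,c))=(0,0)$ by the discussion above. Otherwise, $(b',e')=\repr(v')$ for the explicit node $v'$ of $\ST$ whose label has $c\cdot\str(v)$ as a prefix.

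Next I would compute $\ell_1:=\sdepth(v)$ from $\repr(v)$ and $\ell_2:=\sdepth(v')$ from $(b',e')$, each in $\bigO(\log^{\epsilon} n)$ time using \cref{pr:st-sdepth}. If $\ell_2=\ell_1+1$, then $\str(v')=c\cdot\str(v)$, so $\wlink(v,c)=v'$ and I return $(b',e')$; otherwise $c\cdot\str(v)$ corresponds to an implicit node of $\ST$ (it lies strictly in the middle of the edge entering $v'$), so $\wlink(v,c)=\nil$ and I return $(0,0)$. The total running time is the sum of three $\bigO(\log^{\epsilon} n)$-time subroutines, hence $\bigO(\log^{\epsilon} n)$.

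There is essentially no hard step: the proof is a short reduction whose only substance is the $\wlink$/$\wlinkprim$ equivalence, which is definitional, plus routine bookkeeping with $\repr$ and $\sdepth$. The single point worth a moment's care is the corner case $\str(v)=\emptystring$ (the root) or $\str(v)$ a prefix of $\T$, but \cref{pr:st-wlinkprim} already covers exactly these situations, so no separate argument is required.
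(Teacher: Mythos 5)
Your proposal is correct and follows essentially the same route as the paper's proof: compute $\repr(\wlinkprim(v,c))$ via \cref{pr:st-wlinkprim}, handle the $(0,0)$ case, then compare $\sdepth(\wlinkprim(v,c))$ with $\sdepth(v)+1$ using \cref{pr:st-sdepth} to decide whether to return that pair or $(0,0)$. The extra definitional justification of the $\wlink$/$\wlinkprim$ equivalence is fine but not a departure from the paper's argument.
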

\begin{proof}
  As observed at the beginning of \cref{sec:st}, $\wlink(v, c) \neq
  \nil$ holds if and only if $\wlinkprim(v, c) \neq \nil$ and
  $\sdepth(\wlinkprim(v, c)) = \sdepth(v) + 1$. Therefore, we can use
  $\wlinkprim(v, c)$ to compute $\wlink(v, c)$. First, using
  \cref{pr:st-wlinkprim}, in $\bigO(\log^{\epsilon} n)$ time we
  compute $(b, e) = \repr(\wlinkprim(v, c))$.  If $(b, e) = (0, 0)$,
  then by the above we have $\wlink(v, c) = \nil$, and hence return
  $\repr(\wlink(v, c)) = (0, 0)$. Otherwise, using
  \cref{pr:st-sdepth}, in $\bigO(\log^{\epsilon} n)$ time we compute
  $\ell = \sdepth(v)$ and $\ell' = \sdepth(\wlinkprim(v, c))$. If
  $\ell' = \ell + 1$, then we return that $\repr(\wlink(v, c)) = (b,
  e)$. Otherwise, we have $\wlink(v, c) = \nil$ and we return
  $\repr(\wlink(v, c)) = (0, 0)$.
\end{proof}

\subsubsection{Implementation of
  \texorpdfstring{$\slink(v)$}{slink(v)}}\label{sec:st-slink}

\begin{proposition}\label{pr:st-slink}
  Let $v \neq \Root(\ST)$ be an explicit node of $\ST$.  Given the
  data structure from \cref{sec:st-ds} and $\repr(v)$, in
  $\bigO(\log^{\epsilon} n)$ time we can compute $\repr(\slink(v))$.
\end{proposition}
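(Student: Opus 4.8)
Recall that, by definition, $\slink(v)$ is the explicit node $v'$ of $\ST$ with $\str(v') = \str(v)[2 \dd |\str(v)|]$. A standard property of suffix trees guarantees that such an explicit node exists and satisfies $\sdepth(v') = \sdepth(v) - 1$: if $v$ is internal then two distinct characters follow $\str(v)$ in $\T$, hence two distinct characters follow $\str(v)[2 \dd |\str(v)|]$; if $v$ is a leaf with $\str(v) = \T[j \dd n]$ then $\str(v')$ is the suffix $\T[j{+}1 \dd n]$; and if $|\str(v)| = 1$ then $\str(v') = \emptystring$, i.e.\ $v' = \Root(\ST)$. The plan is to reduce the computation to one $\sdepth$ query, one $\SA$ query (to expose an occurrence of $\str(v)$), one $\ISA$ query (to turn the shifted occurrence into a leaf of $\ST$), and one weighted-ancestor query, each costing $\bigO(\log^{\epsilon} n)$ by \cref{pr:st-sdepth,pr:sa-sa,pr:sa-isa,pr:st-wa} (these are applicable since the structure of \cref{sec:st-ds} contains the structures of \cref{sec:sa-nonperiodic-ds,sec:sa-periodic-ds}, i.e.\ the structure of \cref{sec:sa-ds}).

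Concretely, I would proceed as follows. Using \cref{pr:st-sdepth}, compute $\ell := \sdepth(v)$; since $v \neq \Root(\ST)$ we have $\ell \geq 1$. If $\ell = 1$, then $\str(\slink(v)) = \emptystring$, so $\slink(v) = \Root(\ST)$ and I return $\repr(\slink(v)) = (0, n)$. Otherwise ($\ell \geq 2$), let $i_1 = \lrank(v) + 1$ and compute $j := \SA[i_1]$ using \cref{pr:sa-sa}; then $j \in \Occ(\str(v), \T)$, so $j + \ell \leq n + 1$ (in particular $j + 1 \leq n$) and $\T[j{+}1 \dd j{+}\ell) = \str(v)[2 \dd \ell]$, i.e.\ $j + 1 \in \Occ(\str(v)[2 \dd |\str(v)|], \T)$. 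Next compute $i' := \ISA[j{+}1]$ using \cref{pr:sa-isa}; because $\T[n]$ occurs in $\T$ only at position $n$, the pair $(i'{-}1, i')$ equals $\repr(v')$ for the leaf $v'$ of $\ST$ with $\str(v') = \T[j{+}1 \dd n]$, whose string-depth is $n - j$. Finally, observing that $0 \leq \ell - 1 \leq n - j = |\str(v')|$, I invoke \cref{pr:st-wa} on $v'$ (given via $\repr(v') = (i'{-}1, i')$) with $d = \ell - 1$, and return the resulting pair $\repr(\WA(v', \ell{-}1))$.

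For correctness it remains to argue $\repr(\WA(v', \ell{-}1)) = \repr(\slink(v))$. By the construction, $\str(v)[2 \dd \ell]$ is a prefix of $\str(v') = \T[j{+}1 \dd n]$, and $w := \slink(v)$ is an explicit node of $\ST$ with $\str(w) = \str(v)[2 \dd \ell]$ and $\sdepth(w) = \ell - 1$; hence $w$ is an ancestor of $v'$ of string-depth exactly $\ell - 1 = d$. Since the ancestors of $v'$ form a chain with strictly increasing string-depths, the unique most shallow ancestor of $v'$ of string-depth at least $\ell - 1$ is $w$ itself, so $\WA(v', \ell{-}1) = w = \slink(v)$ and therefore their $\repr$-values coincide. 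Adding the costs of the individual steps, the whole procedure runs in $\bigO(\log^{\epsilon} n)$ time.

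I expect the only genuinely non-routine ingredient to be the combinatorial fact used above — that the suffix link of an explicit node of $\ST$ is again explicit and exactly one character shallower — since the rest is a bookkeeping assembly of already-established $\SA$, $\ISA$, $\sdepth$, and $\WA$ primitives. Care is also needed with the degenerate cases $|\str(v)| = 1$ and $v$ being the $\texttt{\$}$-leaf (where $j = n$ would make $\T[j{+}1 \dd n]$ ill-defined); both are subsumed by the $\ell = 1$ branch, which short-circuits to $\repr(\Root(\ST)) = (0, n)$.
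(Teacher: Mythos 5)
Your proof is correct and follows essentially the same route as the paper: compute $\sdepth(v)$, expose an occurrence of $\str(v)$ via an $\SA$ query, shift it by one, locate the corresponding leaf via an $\ISA$ query, and finish with a weighted-ancestor query at depth $\ell-1$. Your explicit treatment of the $\ell = 1$ case (in particular the $\texttt{\$}$-leaf, where $j+1$ would be out of range) is a small point the paper's proof glosses over, but otherwise the two arguments coincide.
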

\begin{proof}
  Denote $(b, e) = \repr(v)$, $\Pat = \str(v)$, and $\Pat' = \Pat[2
  \dd |\Pat|]$.  Recall, that for every $v \neq \Root(\ST)$,
  $\slink(v)$ is an explicit node of $\ST$. Thus, to compute
  $\repr(\slink(v))$, we need to determine $(\LB(\Pat', \T),
  \UB(\Pat', \T))$.

  First, using \cref{pr:st-sdepth}, in $\bigO(\log^{\epsilon} n)$ time
  we compute $\ell := \sdepth(v) = |\Pat|$. Next, using
  \cref{pr:sa-sa}, in $\bigO(\log^{\epsilon} n)$ time we compute $j =
  \SA[e]$. We then have $j \in \Occ(\Pat, \T)$. Then, $j' := j + 1$
  satisfies $j' \in \Occ(\Pat', \T)$. Using \cref{pr:sa-isa}, in
  $\bigO(\log^{\epsilon} n)$ time we compute $i = \ISA[j']$. Letting
  $(b', e') = (i-1, i)$, we then have $(b', e') = \repr(v')$, where
  $v'$ is a leaf of $\ST$ satisfying $\str(v') = \T[j' \dd n]$.  Using
  \cref{pr:st-wa}, in $\bigO(\log^{\epsilon} n)$ time, we compute
  $(b'', e'') = \repr(\WA(v', \ell - 1))$. We then have
  \begin{align*}
    (b'', e'')
      &= (\LB(\str(v')[1 \dd \ell - 1], \T),
          \UB(\str(v')[1 \dd \ell - 1], \T))\\
      &= (\LB(\T[j' \dd j' + \ell - 1), \T),
          \UB(\T[j' \dd j' + \ell - 1), \T))\\
      &= (\LB(\Pat', \T), \UB(\Pat', \T))\\
      &= \repr(\slink(v)).
  \end{align*}
  In total, the query takes $\bigO(\log^{\epsilon} n)$ time.
\end{proof}

\subsubsection{Implementation of
  \texorpdfstring{$\slink(v, i)$}{slink(v, i)}}\label{sec:st-slink-iter}

\begin{proposition}\label{pr:st-slink-iter}
  Let $i \in \Zp$ and let $v$ be an explicit node of $\ST$ satisfying
  $\sdepth(v) \geq i$.  Given the data structure from
  \cref{sec:st-ds}, $\repr(v)$, and the value $i$, in
  $\bigO(\log^{\epsilon} n)$ time we can compute $\repr(\slink(v,
  i))$.
\end{proposition}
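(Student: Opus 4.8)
The plan is to generalize the implementation of $\slink(v)$ from \cref{pr:st-slink}: instead of shifting a chosen occurrence of $\str(v)$ by one position, I would shift it by $i$ positions and then read off the answer as a weighted ancestor. The first thing I would record is the standard suffix-tree fact that the suffix link of a non-root explicit node is again explicit (if $\str(w)$ has two distinct right extensions in $\T$, then so does $\str(w)[2\dd|\str(w)|]$, since every occurrence of $\str(w)$ induces one of $\str(w)[2\dd|\str(w)|]$ with the same following symbol). Iterating this along $v,\slink(v),\slink(v,2),\dots$ — each of which has positive string-depth as long as we have not yet reached depth $\sdepth(v)-i$ — shows that $\slink(v,i)$ is an explicit node of $\ST$ whenever $\sdepth(v)\ge i$; in particular, when $\sdepth(v)=i$ it is the root, and I would simply return $\repr(\slink(v,i))=\repr(\Root(\ST))=(0,n)$.

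For the main case $i<\sdepth(v)$, I would proceed as follows, each step taking $\bigO(\log^{\epsilon} n)$ time. First compute $\ell:=\sdepth(v)=|\str(v)|$ using \cref{pr:st-sdepth}. Next, writing $(b,e)=\repr(v)$, compute $j:=\SA[e]$ using \cref{pr:sa-sa}; since $(b,e)=(\LB(\str(v),\T),\UB(\str(v),\T))$ and $b<e$, we have $j\in\Occ(\str(v),\T)$, hence $j+\ell-1\le n$. Then $j':=j+i$ satisfies $j'\le n$ and $j'\in\Occ(\str(v)[i{+}1\dd\ell],\T)$. Compute $i'':=\ISA[j']$ using \cref{pr:sa-isa} and set $(b',e'):=(i''-1,i'')$, which equals $\repr(v')$ for the leaf $v'$ with $\str(v')=\T[j'\dd n]$. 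Finally compute and return $\repr(\WA(v',\ell-i))$ using \cref{pr:st-wa}.

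The remaining work is correctness and checking preconditions. For \cref{pr:st-wa} I would verify $0\le\ell-i\le|\str(v')|$: the left inequality holds by $i<\ell$, and the right one from $\ell-i\le n-j+1-i=n-j'+1=|\str(v')|$. For correctness, let $v'':=\slink(v,i)$; by the opening remark it is an explicit node with $\sdepth(v'')=\ell-i$ and $\str(v'')=\str(v)[i{+}1\dd\ell]$. Since $j'\in\Occ(\str(v''),\T)$ and $\str(v')=\T[j'\dd n]$, the string $\str(v'')$ is a prefix of $\str(v')$, so $v''$ is an ancestor of $v'$ of string-depth $\ell-i$; minimality in the definition of $\WA$ then forces $\sdepth(\WA(v',\ell-i))=\ell-i$, and uniqueness of ancestors of a given string-depth yields $\WA(v',\ell-i)=v''$, so $\repr(\WA(v',\ell-i))=\repr(\slink(v,i))$.

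I do not anticipate a serious obstacle: the construction is essentially that of \cref{pr:st-slink} with the offset $1$ replaced by $i$, and every ingredient ($\SA$, $\ISA$, $\sdepth$, and $\WA$ on $\ST$) is already available with the required complexities. The only points needing care are ruling out the degenerate situation $j'=n+1$ (which occurs precisely when $i=\ell$ and $\str(v)$ is a suffix of $\T$, and is absorbed by the $\sdepth(v)=i$ special case) and invoking the standard explicitness-of-suffix-links fact, which is what lets the weighted-ancestor step pin down the output node exactly.
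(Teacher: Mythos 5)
Your proposal is correct and follows essentially the same route as the paper: compute $\ell=\sdepth(v)$, pick the occurrence $j=\SA[e]$, shift by $i$, map back with $\ISA$, and finish with a weighted-ancestor query $\WA(v',\ell-i)$, using explicitness of iterated suffix links to identify the answer. Your separate treatment of the degenerate case $i=\sdepth(v)$ (where $j'$ could be $n+1$) is a small extra care step the paper's write-up glosses over, but it does not change the argument.
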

\begin{proof}
  Denote $(b, e) = \repr(v)$, $\Pat = \str(v)$, and $\Pat' =
  \Pat[i{+}1 \dd |\Pat|]$. Note that since for every $v \neq
  \Root(\ST)$, $\slink(v)$ is an explicit node of $\ST$
  (\cref{pr:st-slink}), it follows that for every explicit node $v$ of
  $\ST$ that satisfies $\sdepth(v) \geq i$, $\slink(v, i)$ is an
  explicit node of $\ST$.  Thus, to compute $\repr(\slink(v))$, we
  need to determine $(\LB(\Pat', \T), \UB(\Pat', \T))$.

  The procedure is a generalization of the one explained in the proof
  of \cref{pr:st-slink-iter}. First, using \cref{pr:st-sdepth}, in
  $\bigO(\log^{\epsilon} n)$ time we compute $\ell := \sdepth(v) =
  |\Pat|$. Next, using \cref{pr:sa-sa}, in $\bigO(\log^{\epsilon} n)$
  time we compute $j = \SA[e]$. We then have $j \in \Occ(\Pat, \T)$.
  Then, $j' := j + i$ satisfies $j' \in \Occ(\Pat', \T)$. Using
  \cref{pr:sa-isa}, in $\bigO(\log^{\epsilon} n)$ time we compute $i'
  = \ISA[j']$. Letting $(b', e') = (i'-1, i')$, we then have $(b', e')
  = \repr(v')$, where $v'$ is a leaf of $\ST$ satisfying $\str(v') =
  \T[j' \dd n]$.  Using \cref{pr:st-wa}, in $\bigO(\log^{\epsilon} n)$
  time, we compute $(b'', e'') = \repr(\WA(v', \ell - i))$. We then
  have $ (b'', e'') = (\LB(\str(v')[1 \dd \ell - i], \T),\allowbreak
  \UB(\str(v')[1 \dd \ell - i], \T)) = (\LB(\T[j' \dd j' + \ell - i),
  \T),\allowbreak \UB(\T[j' \dd j' + \ell - i), \T)) = (\LB(\Pat',
  \T), \UB(\Pat', \T)) = \repr(\slink(v, i))$.
\end{proof}

\subsubsection{Implementation of
  \texorpdfstring{$\parent(v)$}{parent(v)}}\label{sec:st-parent}

\begin{lemma}\label{lm:st-parent}
  Let $v \neq \Root(\ST)$ be an explicit node of $\ST$. Let $\repr(v)
  = (b, e)$. If $b \neq 0$ (resp.\ $e \neq n$) then, letting $v_1$ and
  $v_2$ be the $b$th and $(b+1)$st (resp.\ $e$th and $(e+1)$st)
  leftmost leaves of $\ST$, the following conditions are equivalent:
  \begin{enumerate}
  \item $\leftsibling(v) \neq \nil$ (resp.\ $\rightsibling(v) \neq
    \nil$),
  \item $\parent(v) = \LCA(v_1,v_2)$.
  \end{enumerate}
\end{lemma}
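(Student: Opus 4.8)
The statement relates, for an explicit non-root node $v$ with $\repr(v)=(b,e)$, whether $v$ has a left sibling to whether $\parent(v)=\LCA(v_1,v_2)$ for the leaves flanking position $b$ (and symmetrically on the right). I would prove the left-sibling equivalence; the right-sibling case is identical after swapping $b\leftrightarrow e$ and ``left''$\leftrightarrow$``right''. The key observation is that in $\ST$, since $\{\SA[i]\}_{i\in(b\dd e]}=\Occ(v)$ and the leaves are sorted lexicographically, the node $\parent(v)$ is the lowest common ancestor of the leftmost leaf $v_1'$ of the subtree of $v$ (which is the $(b+1)$st leaf) together with the leaf immediately to its left in $\ST$ (the $b$th leaf), \emph{provided} that leaf-to-the-left exists and is \emph{not} in $v$'s subtree — which is automatic since the subtree of $v$ occupies exactly positions $(b\dd e]$, so the $b$th leaf lies strictly outside it. So I would set $u:=\LCA(v_1,v_2)$ with $v_1,v_2$ the $b$th and $(b{+}1)$st leaves, and show $u$ is always an ancestor of $\parent(v)$, with $u=\parent(v)$ iff $v$ has a left sibling.

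First I would establish the $(2)\Rightarrow(1)$ direction. Suppose $\parent(v)=\LCA(v_1,v_2)$, call it $p$. By \cref{ob:lca}, $\str(p)=\str(v_1)[1\dd\ell]$ where $\ell=\lcp(\str(v_1),\str(v_2))=\sdepth(p)$. Since $v_1$ is the $b$th leaf, it lies outside the subtree of $v$, so the child of $p$ on the root-to-$v_1$ path, call it $w$, is distinct from the child of $p$ on the root-to-$v$ path (which is $v$ itself, as $\parent(v)=p$). The leaf $v_1$ being the leaf immediately left of $v$'s subtree and descending through $w\neq v$ means $w$ is the left sibling of $v$, i.e.\ $\leftsibling(v)\neq\nil$.

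Next, $(1)\Rightarrow(2)$. Suppose $\leftsibling(v)=w\neq\nil$. Then $\parent(v)=\parent(w)=:p$, and the rightmost leaf in the subtree of $w$ is exactly the $b$th leaf $v_1$ (because the subtree of $v$ starts at position $b{+}1$ and $w$ is the sibling immediately to the left of $v$). The leftmost leaf of $v$'s subtree is $v_2$, the $(b{+}1)$st leaf. Both $v_1$ and $v_2$ are descendants of $p$, lying in different children ($w$ and $v$), so $\LCA(v_1,v_2)=p=\parent(v)$; here I would invoke \cref{ob:lca} to confirm the common ancestor computed from the leaves coincides with $p$ rather than something deeper, using that $\lcp(\str(v_1),\str(v_2))=\sdepth(p)$, which holds because any common prefix beyond $\str(p)$ would force $v_1,v_2$ into the same child of $p$.

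The main obstacle — really the only subtle point — is checking that $\LCA(v_1,v_2)$ is not strictly deeper than $\parent(v)$ in the $(1)\Rightarrow(2)$ direction, equivalently that $\lcp(\str(v_1),\str(v_2))$ is exactly $\sdepth(\parent(v))$ and not larger. This is where I use that $v_1$ is the \emph{rightmost} leaf of the left sibling's subtree and $v_2$ the \emph{leftmost} leaf of $v$'s subtree: these two adjacent leaves in $\ST$ branch apart exactly at $\parent(v)$, by definition of the trie structure (the edge labels leaving $\parent(v)$ towards its children $w$ and $v$ start with distinct characters). I would spell this out via \cref{ob:lca} applied to $v_1$ and $v_2$, concluding $\sdepth(\LCA(v_1,v_2))=\lcp(\str(v_1),\str(v_2))=\sdepth(\parent(v))$, and since an ancestor relation plus equal string-depth forces equality, $\LCA(v_1,v_2)=\parent(v)$. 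The case analysis for $b=0$ (no left sibling possible, since then $v$'s subtree starts at the first leaf, so $v$ is the leftmost child of its parent, and indeed $\leftsibling(v)=\nil$) is handled by the hypothesis $b\neq 0$ in the statement, so nothing extra is needed there.
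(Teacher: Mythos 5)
Your proposal is correct and follows essentially the same route as the paper's proof: both directions rest on the fact that $\repr$ identifies subtrees with contiguous leaf intervals, so the $b$th leaf lies in the immediate left sibling's subtree (when it exists) while the $(b{+}1)$st lies in $v$'s, forcing $\LCA(v_1,v_2)=\parent(v)$; the paper merely phrases the converse as a contraposition (if $\leftsibling(v)=\nil$ then $\repr(\parent(v))$ also starts at $b$, so $v_1$ escapes $\parent(v)$'s subtree and the LCA is a proper ancestor), which is logically the same argument as your direct derivation. No gap to report.
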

\begin{proof}
  Assume $\leftsibling(v) = v_s \neq \nil$ (resp.\ $\rightsibling(v) =
  v_s \neq \nil$). By $\repr(v) = (b, e)$, we have $\repr(v_1) = (b-1,
  b)$ (resp.\ $\repr(v_1) = (e-1,e)$), and $\repr(v_2) = (b,b+1)$
  (resp.\ $\repr(v_2) = (e,e+1)$). This implies that $v_1$ is in the
  subtree rooted in $v_s$ (resp.\ $v$) and $v_2$ in the subtree rooted
  in $v$ (resp.\ $v_s$). Consequently, $\LCA(v_1,v_2) =
  \LCA(v,v_s)$. On the other hand, since $v_s$ is a sibling of $v$, we
  have $\LCA(v,v_s) = \parent(v)$. Thus, $\parent(v) = \LCA(v_1,
  v_2)$.

  We show that $\parent(v) = \LCA(v_1, v_2)$ implies $\leftsibling(v)
  \neq \nil$ (resp.\ $\rightsibling(v) \neq \nil$) by contraposition.
  Assume $\leftsibling(v) = \nil$ (resp.\ $\rightsibling(v) = \nil$)
  and denote $v_p = \parent(v)$. Observe that then $\repr(v_p) = (b,
  e_p)$ for some $e_p > b$ (resp.\ $\repr(v_p) = (b_p, e)$ for some
  $b_p < e$). By $\repr(v_1) = (b-1,b)$ (resp.\ $\repr(v_2) =
  (e,e+1)$), the node $v_1$ (resp.\ $v_2$) is thus not in the subtree
  rooted in $v_p$. On the other hand, $\repr(v_2) = (b,b+1)$
  (resp.\ $\repr(v_1) = (e-1,e)$) implies that $v_p$ is an ancestor of
  $v_2$ (resp.\ $v_1$). Therefore, the node $\LCA(v_1,v_2) =
  \LCA(v_1,v_p)$ (resp.\ $\LCA(v_1,v_2) = \LCA(v_p,v_2)$) is a proper
  ancestor of $v_p$. In particular, $\LCA(v_1,v_2) \neq v_p$.
\end{proof}

\begin{proposition}\label{pr:st-parent}
  Let $v \neq \Root(\ST)$ be an explicit node of $\ST$.  Given the
  data structure from \cref{sec:st-ds} and $\repr(v)$, in
  $\bigO(\log^{\epsilon} n)$ time we can compute $\repr(\parent(v))$.
\end{proposition}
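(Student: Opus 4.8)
The plan is to reduce the computation of $\repr(\parent(v))$ to operations already developed: an $\LCA$ query via \cref{pr:st-lca}, a weighted ancestor query via \cref{pr:st-wa}, and an $\sdepth$ query via \cref{pr:st-sdepth}, combined with \cref{lm:st-parent}. First I would write $(b,e) = \repr(v)$. Since $v \neq \Root(\ST)$, at least one of $b \neq 0$ or $e \neq n$ holds (if both failed, $v$ would be the root). I would first try to locate a sibling of $v$: if $b \neq 0$, use \cref{pr:sa-sa} to retrieve $\SA[b]$ and $\SA[b+1]$ in $\bigO(\log^\epsilon n)$ time, form the leaves $v_1, v_2$ of $\ST$ with $\repr(v_1) = (b-1,b)$ and $\repr(v_2) = (b,b+1)$, and compute $u := \LCA(v_1,v_2)$ using \cref{pr:st-lca} in $\bigO(\log^\epsilon n)$ time. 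By \cref{lm:st-parent}, $u = \parent(v)$ if and only if $\leftsibling(v) \neq \nil$. The issue is that the proposition statement gives me $\repr(u)$ directly only when the equivalence holds; in general $u$ could be a proper ancestor of $\parent(v)$, so I must verify. The verification is a string-depth comparison: $u = \parent(v)$ iff $\sdepth(u) = \sdepth(v) - 1$ is \emph{not} quite right (edges may be long), so instead I would check whether $v$ is a child of $u$ by testing $\sdepth(\LCA(v_1,v_2)) \ge |\str(v)| - (\text{edge length})$; more cleanly, I would check whether $\repr(\WA(v, \sdepth(u)+1)) = \repr(v)$, i.e.\ whether the most shallow ancestor of $v$ with string-depth $> \sdepth(u)$ is $v$ itself. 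Actually the simplest correct test: $u = \parent(v)$ iff $\sdepth(u) < \sdepth(v)$ and $v = \WA(v, \sdepth(u)+1)$. But since $u$ is an ancestor of $v$ whenever it equals $\LCA(v, \text{sibling})$, the cleanest route is to just directly test the condition of \cref{lm:st-parent} by checking whether $v$ has a left sibling at all.

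A cleaner uniform approach avoiding the sibling casework: I would compute $\ell := \sdepth(v)$ via \cref{pr:st-sdepth}, then I need the string-depth $\ell_p$ of $\parent(v)$, which equals the length of the longest proper prefix of $\str(v)$ that is an explicit node. Observe $\parent(v) = \WA(v, \ell_p)$ for the right $\ell_p$, but I do not know $\ell_p$ a priori. Instead, I would exploit that $\parent(v) = \WA(v, \ell - 1)$ when $v$ is \emph{not} the child of an implicit-collapsed path — wait, that is exactly wrong when edges are long. So the sibling-based method of \cref{lm:st-parent} is the intended one: pick whichever of $b \neq 0$ or $e \neq n$ holds, get the two adjacent leaves via \cref{pr:sa-sa}, compute their $\LCA$ via \cref{pr:st-lca}, and then \emph{verify} via a depth check whether this $\LCA$ is actually $\parent(v)$ (it is, precisely when the chosen side has a sibling). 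If it does not (the chosen side is an only-child on that side but the other side has a sibling), I would fall back to the other side. If \emph{both} sides fail — i.e.\ $v$ is an only child — then $\parent(v)$'s range strictly contains $(b,e)$ on both ends, and I can find $\parent(v)$ by a single weighted-ancestor step: compute any leaf $v_1$ under $v$ (say $\repr$-leaf at position $b+1$), and set $\parent(v) = \WA(v_1, \ell')$ where $\ell'$ is one less than the string-depth of $v$'s edge-parent; but determining $\ell'$ is itself the obstacle. The resolution: if $v$ is an only child then actually $v$ cannot exist as an \emph{explicit} internal node unless it is a leaf — no, leaves can be only children is also impossible in a suffix tree since the sentinel is unique... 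In fact in $\ST$ every internal explicit node has $\ge 2$ children, so $v$ being an only child is impossible; and a leaf $v$ always has a sibling unless $\str(v) = \T[n]$, the sentinel-only leaf, whose parent is the root. So the casework is: (i) $v$ is the sentinel leaf, return $\repr(\Root) = (0,n)$; (ii) otherwise $v$ has a left or right sibling, use \cref{lm:st-parent}.

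Concretely, the steps in order: (1) detect via $\repr(v) = (b,e)$ whether $e = n$ \emph{and} $\sdepth(v)$ equals $n - \SA[n] + 1 = 1$ — the sentinel leaf case — and if so return $(0,n)$; (2) otherwise pick the side with a sibling: if $e \neq n$ and the $(e{+}1)$st leaf is not separated from $v$ at depth $\le \ell$, use $v_1, v_2$ = the $e$th and $(e{+}1)$st leaves; else use the $b$th and $(b{+}1)$st leaves; (3) compute $\repr(\LCA(v_1,v_2))$ via \cref{pr:st-lca}; (4) return it. The correctness follows from \cref{lm:st-parent} once I argue the chosen side genuinely has a sibling — for which I would compute $\sdepth(\LCA(v, v_{\text{adjacent}}))$ and compare it to $\sdepth(v)$: a sibling exists on that side iff this $\LCA$-depth is strictly less than $\sdepth(v)$ \emph{and} equals the true parent depth, which I can detect because $\LCA(v_1,v_2) = \parent(v)$ forces $\sdepth(\LCA(v_1,v_2)) < \sdepth(v)$, whereas if no sibling exists on that side the $\LCA$ is a strict ancestor of $\parent(v)$ with even smaller depth. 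The cleanest disambiguation: compute $d_L = \sdepth(\LCA(\text{$b$th leaf}, \text{$(b{+}1)$st leaf}))$ and $d_R = \sdepth(\LCA(\text{$e$th leaf},\text{$(e{+}1)$st leaf}))$ (whichever sides are valid), and take the $\LCA$ of the side with the \emph{larger} such depth; that side is the one with a sibling, and its $\LCA$ equals $\parent(v)$. The main obstacle I anticipate is precisely this disambiguation argument — showing that the side maximizing the adjacent-leaf $\LCA$ depth is exactly the side along which $v$ has a sibling, hence yields $\parent(v)$ by \cref{lm:st-parent} — together with correctly handling the degenerate sentinel-leaf case; everything else is a routine chaining of \cref{pr:sa-sa,pr:st-sdepth,pr:st-lca}, each costing $\bigO(\log^\epsilon n)$, for a total of $\bigO(\log^\epsilon n)$ time.
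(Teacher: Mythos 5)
Your final algorithm is essentially the paper's proof: form the adjacent-leaf pairs on whichever of the two sides exist, compute both candidate ancestors via \cref{pr:st-lca}, and return the deeper one (via \cref{pr:st-sdepth}), with correctness resting on \cref{lm:st-parent} plus the observation that every non-root node of $\ST$ has a left or right sibling because all internal nodes (including the root, thanks to the unique sentinel) have at least two children. The only deviations are harmless: the sentinel-leaf special case is unnecessary (that leaf has $e=1\neq n$ and a right sibling, so the general procedure covers it), and the $\SA$ queries used to "form" the adjacent leaves are not needed since their $\repr$ pairs are just $(b-1,b)$, $(b,b+1)$, $(e-1,e)$, $(e,e+1)$.
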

\begin{proof}
  Let $\repr(v) = (b, e)$. We first construct a set of pairs
  $\mathcal{P}$ as follows.
  \begin{itemize}
  \item If $b = 0$, we skip this step. Otherwise, let $v_1$ and $v_2$
    be the leftmost $b$th and $(b+1)$st leaves of $\ST$, $(b_1,e_1) =
    (b-1,b)$, and $(b_2,e_2) = (b,b+1)$. We then have $\repr(v_1) =
    (b_1,e_1)$ and $\repr(v_2)=(b_2,e_2)$. Using \cref{pr:st-lca}, in
    $\bigO(\log^{\epsilon} n)$ we obtain $\repr(v')$, where $v' =
    \LCA(v_1,v_2)$. Note that $v'$ is an ancestor of $v$.  We add
    $\repr(v')$ to $\mathcal{P}$.
  \item If $e = n$, we skip this step. Otherwise, let $v_1'$ and
    $v_2'$ be the leftmost $e$th and $(e+1)$st leaves of $\ST$,
    $(b_1',e_1') = (e-1,e)$, and $(b_2',e_2') = (e,e+1)$. We repeat
    the same procedure as above, again adding $\repr(v'')$ (where $v''
    = \LCA(v_1',v_2')$) to $\mathcal{P}$.
  \end{itemize}
  Recall now that we assumed $|\T| \geq 2$ and that $\T[n]$ is unique
  in $\T$.  This implies that the root of $\ST$ has at least two
  children.  On the other hand, any other non-leaf node has at least
  two children by definition. This implies that for every explicit
  node $v \neq \Root(\ST)$, it holds that either $\leftsibling(v) \neq
  \nil$ or $\rightsibling(v) \neq \nil$. Therefore, by
  \cref{lm:st-parent}, there exists $(b_p,e_p) \in \mathcal{P}$ such
  that $(b_p, e_p) = \repr(\parent(p))$.  Since each of the nodes $u$
  corresponding to an element in $\mathcal{P}$ is an ancestor of $v$,
  to compute $\parent(v)$, it suffices to compute $\sdepth(u)$ for all
  candidates $u$ and return the pair $\repr(u)$ corresponding to $u$
  with the largest value.  We obtain $\sdepth(u)$ using
  \cref{pr:st-sdepth} in $\bigO(\log^{\epsilon} n)$ time.  By
  $|\mathcal{P}| \leq 2$, the whole procedure takes
  $\bigO(\log^{\epsilon} n)$ time.
\end{proof}

\begin{remark}
  It might appear that the computation of $\parent(v)$ could be
  implemented by modifying the definition of the $\WA(v, d)$ to
  instead return the deepest ancestor $v'$ of $v$ satisfying
  $\sdepth(v') \leq d$ (rather than the most shallow ancestor $v'$ of
  $v$ satisfying $\sdepth(v') \geq d$). Observe, however that as shown
  in the proof of \cref{lm:st-nonperiodic-wa}
  (resp.\ \cref{lm:st-periodic-wa}), $\mapToTSSS(v)$
  (resp.\ $\mapToTZ(v)$) always returns the \emph{lowest} of all nodes
  $u'$ of $\TSSS$ (resp.\ $\TZ$) satisfying $(\fbeg(u'), \fend(u')) =
  \repr(v)$. This enforces the current definition of $\WA(v, d)$ and
  implies that the implementation of $\parent(v)$ with $\WA(v, d)$
  would require a binary search. Thus, to achieve faster time,
  $\parent(v)$ is implemented as above.
\end{remark}

\subsubsection{Implementation of
  \texorpdfstring{$\firstchild(v)$}{firstchild(v)}}\label{sec:st-firstchild}

\begin{proposition}\label{pr:st-firstchild}
  Let $v$ be an explicit node of $\ST$. Given the data structure from
  \cref{sec:st-ds} and $\repr(v)$, in $\bigO(\log^{\epsilon} n)$ time
  we can compute $\repr(\firstchild(v))$.
\end{proposition}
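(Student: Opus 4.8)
The plan is to reduce $\firstchild(v)$ to a single $\SA$ query, one $\sdepth$ query, and one $\child$ query, each of which we have already shown can be answered in $\bigO(\log^{\epsilon} n)$ time (\cref{pr:sa-sa,pr:st-sdepth,pr:st-child}).

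First I would let $(b, e) = \repr(v) = (\lrank(v), \rrank(v))$. Since $\rrank(v) - \lrank(v)$ equals $|\Occ(\str(v), \T)|$, i.e., the number of leaves in the subtree rooted at $v$, the node $v$ is a leaf exactly when $e - b = 1$; in that case $\firstchild(v) = \nil$ and I return $\repr(\nil) = (0, 0)$. So from now on assume $e - b \geq 2$, i.e., $v$ is an explicit internal node.

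Next, using \cref{pr:sa-sa} I compute $j = \SA[b+1]$ in $\bigO(\log^{\epsilon} n)$ time, and using \cref{pr:st-sdepth} I compute $\ell = \sdepth(v)$ in $\bigO(\log^{\epsilon} n)$ time. The leaf $u$ of $\ST$ with $\str(u) = \T[j \dd n]$ is the leftmost leaf of the subtree rooted at $v$, because $\{\SA[i]\}_{i \in (b \dd e]} = \Occ(\str(v), \T)$ and the leaves of $\ST$ are ordered left to right; in particular $v$ is a proper ancestor of $u$. Since $v$ is internal with $e - b \geq 2$, we have $\ell = \sdepth(v) < n - j + 1$, hence $j + \ell \leq n$, so the symbol $c := \T[j + \ell]$ is well-defined, and I extract it from the packed representation of $\T$ in $\bigO(1)$ time. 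The child of $v$ on the path from $v$ to $u$ is precisely $\child(v, c)$, since $\str(\child(v,c))$ must be a prefix of $\str(u)=\T[j\dd n]$ extending $\str(v)$ by the character $\T[j+\ell]=c$; and because the subtree of that child contains the leftmost leaf of the subtree of $v$, it is the leftmost child, i.e., $\firstchild(v) = \child(v, c)$. I therefore compute $\repr(\child(v, c))$ using \cref{pr:st-child} in $\bigO(\log^{\epsilon} n)$ time and return it.

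I do not expect a genuine obstacle here: the argument is a short composition of already-established primitives, and the only points needing care are the leaf test (immediate from the pair $\repr(v)$) and the verification that $j + \ell \leq n$ so the probed symbol exists — both handled above. The total running time is $\bigO(\log^{\epsilon} n)$, dominated by the $\SA$, $\sdepth$, and $\child$ subroutines.
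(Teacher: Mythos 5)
Your proof is correct, but it takes a different route than the paper. The paper never queries $\SA$ or invokes $\child$: it observes that the leftmost leaf $v''$ of the subtree of $v$ has $\repr(v'') = (b, b+1)$ directly, and that $\firstchild(v) = \WA(v'', \sdepth(v)+1)$ (since $\firstchild(v)$ is an ancestor of $v''$ deeper than $v$ with no intervening explicit ancestors), so it computes $\ell = \sdepth(v)$ via \cref{pr:st-sdepth} and then a single weighted-ancestor query via \cref{pr:st-wa}. You instead compute $j = \SA[b+1]$ via \cref{pr:sa-sa}, read the branching character $c = \T[j+\ell]$ from the packed text, and call $\child(v,c)$ via \cref{pr:st-child}; the identification $\firstchild(v) = \child(v,c)$ is valid because children are ordered by the first character of the connecting edge and the leftmost leaf of $v$'s subtree lies in the leftmost child's subtree, and your check that $j+\ell \le n$ is sound (if $\ell = n-j+1$ then $\str(v)$ would contain the unique sentinel $\T[n]$, contradicting $e-b\ge 2$; equivalently, $v$ is a proper ancestor of the leaf for $\T[j\dd n]$). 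Both routes run in $\bigO(\log^{\epsilon} n)$; the paper's version is slightly leaner (it avoids the $\SA$ query and character extraction, using the leaf representation $(b,b+1)$ for free), while yours substitutes $\child$ for $\WA$ and so exercises a different, already-established primitive — interestingly, the paper uses exactly your $\SA$-plus-$\child$ pattern in its implementation of $\pred(v,c)$ (\cref{pr:st-pred}), so the technique is consistent with the paper's toolbox. One small point worth making explicit: the equivalence ``$v$ is a leaf iff $e-b=1$'' relies on every internal node, including the root, having at least two children, which holds by the assumptions $|\T|\ge 2$ and uniqueness of $\T[n]$ (the paper states this in its $\isleaf$ and $\firstchild$ proofs).
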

\begin{proof}
  Denote $(b, e) = \repr(v)$ and $\Pat = \str(v)$.  First, we check if
  $b + 1 = e$. If so, then $v$ is a leaf and hence we return
  $\repr(\firstchild(v)) = (0, 0)$ (note that here we used that $|\T|
  \geq 2$ and that $\T[n]$ is unique in $\T$, since this implies that
  every non-leaf node of $\ST$, including the root, has at least two
  children).

  Let us thus assume $b + 1 \neq e$. Denote $v' = \firstchild(v)$ and
  $\Pat' = \str(v')$. We then have $v' \neq \nil$. Observe that
  letting $(b', e') = (b, b + 1)$, it holds $(b', e') = \repr(v'')$,
  where $v''$ is a leaf of $\ST$ such that $\str(v')$ is a prefix of
  $\str(v'')$. On the other hand, by definition, we have $\sdepth(v')
  \geq \sdepth(v) + 1$, and there is no ancestor of $v'$ at depth $d
  \in (\sdepth(v) \dd \sdepth(v'))$. Therefore, we must have $v' =
  \WA(v'', \sdepth(v) + 1)$. We thus proceed as follows. First, using
  \cref{pr:st-sdepth}, in $\bigO(\log^{\epsilon} n)$ time we compute
  $\ell := \sdepth(v) = |\Pat|$. Next, using \cref{pr:st-wa}, in
  $\bigO(\log^{\epsilon} n)$ time we compute $(b'', e'') =
  \repr(\WA(v'', \ell + 1))$. We then have $\repr(\firstchild(v)) =
  (b'', e'')$.  In total, the query takes $\bigO(\log^{\epsilon} n)$
  time.
\end{proof}

\subsubsection{Implementation of
  \texorpdfstring{$\lastchild(v)$}{lastchild(v)}}\label{sec:st-lastchild}

\begin{proposition}\label{pr:st-lastchild}
  Let $v$ be an explicit node of $\ST$. Given the data structure from
  \cref{sec:st-ds} and $\repr(v)$, in $\bigO(\log^{\epsilon} n)$ time
  we can compute $\repr(\lastchild(v))$.
\end{proposition}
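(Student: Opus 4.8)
The plan is to mirror the argument for \cref{pr:st-firstchild}, replacing the leftmost leaf of the subtree rooted at $v$ with its rightmost leaf. Writing $(b, e) = \repr(v)$, I would first dispatch the degenerate case: if $b + 1 = e$, then $v$ is a leaf — here using the standing assumptions $|\T| \geq 2$ and $\T[n]$ unique in $\T$, which force every internal node of $\ST$ (the root included) to have at least two children — and we return $\repr(\lastchild(v)) = (0, 0)$ in $\bigO(1)$ time.

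Otherwise $v' := \lastchild(v) \neq \nil$, and the key observation is that $v'$ lies on the root-to-leaf path ending at the $e$th leftmost leaf $v''$ of $\ST$, whose representation is $\repr(v'') = (e-1, e)$ (a leaf of $\ST$ is a full suffix, so its $(\lrank, \rrank)$ pair is determined by its rank). Indeed, $\str(v')$ is a prefix of $\str(v'')$, and since $\sdepth(v') \geq \sdepth(v) + 1$ while $v'$ has no ancestor at string-depth strictly between $\sdepth(v)$ and $\sdepth(v')$, we get $v' = \WA(v'', \sdepth(v) + 1)$ — precisely because $\WA$ returns the \emph{most shallow} ancestor of string-depth at least the given threshold. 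So the algorithm computes $\ell := \sdepth(v) = |\str(v)|$ using \cref{pr:st-sdepth} in $\bigO(\log^{\epsilon} n)$ time, forms $(e-1, e) = \repr(v'')$ in $\bigO(1)$ time, then computes $(b'', e'') = \repr(\WA(v'', \ell + 1))$ using \cref{pr:st-wa} in $\bigO(\log^{\epsilon} n)$ time, and returns $\repr(\lastchild(v)) = (b'', e'')$.

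The one point to handle carefully is the well-definedness of the weighted-ancestor call: \cref{pr:st-wa} requires $0 \leq \ell + 1 \leq |\str(v'')|$, which holds because $v$ is an internal (hence proper) ancestor of $v''$, so $\ell + 1 = \sdepth(v) + 1 \leq \sdepth(v') \leq \sdepth(v'') = |\str(v'')|$. The minimality built into $\WA$ then guarantees that the node returned is genuinely the child of $v$ on the path to $v''$ — i.e., $\lastchild(v)$ — rather than some deeper descendant. Beyond this bookkeeping there is no real obstacle; the running time is $\bigO(\log^{\epsilon} n)$, dominated by the $\SA$-based routines of \cref{pr:st-sdepth,pr:st-wa}.
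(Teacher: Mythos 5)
Your proposal is correct and follows essentially the same route as the paper: the paper's proof is exactly the symmetric version of the $\firstchild$ argument, replacing the pair $(b, b+1)$ with $(e-1, e)$ and then applying $\WA(v'', \sdepth(v)+1)$ via \cref{pr:st-sdepth,pr:st-wa}. Your additional remarks on the leaf case and the well-definedness of the weighted-ancestor call match the reasoning already present in the proof of \cref{pr:st-firstchild} that the paper invokes by symmetry.
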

\begin{proof}
  Denote $(b, e) = \repr(v)$ and $\Pat = \str(v)$. The algorithm is
  symmetrical to the one presented in the proof of
  \cref{pr:st-firstchild}, i.e., rather than setting $(b', e') = (b, b
  + 1)$, we set $(b', e') = (e - 1, e)$. For such pair, it holds $(b',
  e') = \repr(v'')$, where $v''$ is a leaf of $\ST$ such that, letting
  $v' = \lastchild(v)$, the string $\str(v')$ is a prefix of
  $\str(v'')$.
\end{proof}

\subsubsection{Implementation of
  \texorpdfstring{$\rightsibling(v)$}{rightsibling(v)}}\label{sec:st-rightsibling}

\begin{proposition}\label{pr:st-rightsibling}
  Let $v$ be an explicit node of $\ST$. Given the data structure from
  \cref{sec:st-ds} and $\repr(v)$, in $\bigO(\log^{\epsilon} n)$ time
  we can compute $\repr(\rightsibling(v))$.
\end{proposition}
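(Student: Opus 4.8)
The plan is to reduce $\rightsibling(v)$ to a $\parent$ query followed by a $\child$ query, each of which we can already answer in $\bigO(\log^{\epsilon} n)$ time. Write $(b, e) = \repr(v) = (\lrank(v), \rrank(v))$. First I would dispose of the degenerate case: if $e = n$, then there is no leaf of $\ST$ to the right of the subtree rooted at $v$, so $v$ cannot have a right sibling (this in particular covers $v = \Root(\ST)$, whose representation is $(0, n)$, and lets us avoid invoking \cref{pr:st-parent} on the root); in this case we return $\repr(\rightsibling(v)) = (0, 0)$. From now on $e < n$, so $v \neq \Root(\ST)$ and \cref{pr:st-parent} is applicable.

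Next, using \cref{pr:st-parent}, in $\bigO(\log^{\epsilon} n)$ time I would compute $\repr(p) = (b_p, e_p)$ for $p = \parent(v)$. The children of $p$ partition the leaf-index interval $(b_p \dd e_p]$ into consecutive blocks, and the block of $v$ is exactly $(b \dd e]$; hence $v$ is the rightmost child of $p$ if and only if $e_p = e$, in which case we return $(0, 0)$. Otherwise $e_p > e$, and the right sibling $v_s$ of $v$ is the child of $p$ whose leaf block begins at index $e$, i.e.\ the child of $p$ containing the $(e{+}1)$st leftmost leaf $w$ of $\ST$.

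It then remains to identify $v_s$ as $\child(p, c)$ for the appropriate $c \in \Alphabet$. Let $j = \SA[e+1]$, so that $\str(w) = \T[j \dd n]$; I compute $j$ via \cref{pr:sa-sa} in $\bigO(\log^{\epsilon} n)$ time. Since $w$ lies in the subtree of $v_s$, which is a child of $p$, the string $\str(p)$ is a proper prefix of $\str(w)$, so $\sdepth(p) \leq n - j$ and the character $c := \T[j + \sdepth(p)]$ is well defined and is precisely the first letter of the label of the edge from $p$ to $v_s$; therefore $v_s = \child(p, c) \neq \nil$. I obtain $\ell_p := \sdepth(p)$ via \cref{pr:st-sdepth} in $\bigO(\log^{\epsilon} n)$ time, read off $c = \T[j + \ell_p]$ in $\bigO(1)$ time from the packed representation of $\T$, and finally return $\repr(\child(p, c))$ using \cref{pr:st-child} in $\bigO(\log^{\epsilon} n)$ time. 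Each step costs $\bigO(\log^{\epsilon} n)$, so the overall query time is $\bigO(\log^{\epsilon} n)$. The only points requiring care are the two boundary cases (namely $e = n$, which subsumes $v = \Root(\ST)$; and $v$ being the rightmost child of $p$, detected by $e_p = e$) and the in-range check $j + \ell_p \leq n$; there is no genuine combinatorial obstacle beyond what the already-established $\parent$, $\sdepth$, and $\child$ routines provide.
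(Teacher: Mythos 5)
Your proof is correct, and its skeleton matches the paper's: dispose of the boundary cases (no right sibling when the interval already ends at the last leaf, respectively when $e$ coincides with the parent's right endpoint), compute $\repr(\parent(v))$ via \cref{pr:st-parent}, and then recover the right sibling through the $(e{+}1)$st leftmost leaf, which indeed lies in the sibling's subtree. The only genuine difference is the final primitive: the paper takes the leaf $v''$ with $\repr(v'') = (e, e+1)$ and applies a weighted-ancestor query $\WA(v'', \sdepth(\parent(v)) + 1)$ (\cref{pr:st-wa}), whereas you extract $j = \SA[e+1]$ via \cref{pr:sa-sa}, read the branching character $c = \T[j + \sdepth(\parent(v))]$ from the packed text, and return $\repr(\child(\parent(v), c))$ via \cref{pr:st-child}. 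Both routes rely only on operations already shown to run in $\bigO(\log^{\epsilon} n)$ time, so the complexity is the same; your version spends one extra $\SA$ access but avoids the weighted-ancestor machinery, while the paper's version works purely with rank-pair representations and never needs the character $c$. Your justification of the small points that need care (that $e = n$ forces $\rightsibling(v) = \nil$ and subsumes the root case, that the $(e{+}1)$st leaf falls in the subtree of the immediate right sibling because the children's leaf blocks are consecutive, and that $j + \sdepth(\parent(v)) \leq n$ since $\str(\parent(v))$ is a proper prefix of $\T[j \dd n]$) is sound, so the argument goes through as stated.
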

\begin{proof}
  Denote $(b, e) = \repr(v)$.  We start by checking if $(b, e) = (0,
  n)$.  If so, then $v = \Root(\ST)$. In that case, we have
  $\rightsibling(v) = \nil$ and hence we return
  $\repr(\rightsibling(v)) = (0, 0)$.

  Let us thus assume that $(b, e) \neq (0, n)$, i.e., $v \neq
  \Root(\ST)$.  Using \cref{pr:st-parent}, in $\bigO(\log^{\epsilon}
  n)$ time we compute $(b', e') = \repr(\parent(v))$.  We then have
  $b' \leq b < e \leq e'$. Next, we compare $e$ and $e'$.  If $e = e'$
  then, by definition, $v$ is the rightmost child of its parent and
  hence we return $\repr(\rightsibling(v)) = (0, 0)$. Let us thus
  assume $e < e'$. We then have $\rightsibling(v) \neq
  \nil$. Moreover, letting $v' = \rightsibling(v)$ and $\Pat' =
  \str(v')$, it then holds $\SA[e + 1] \in \Occ(\Pat', \T)$. This
  implies that, letting $(b'', e'') = (e, e + 1)$, we have $(b'', e'')
  = \repr(v'')$, where $v''$ is a leaf of $\ST$ such that $\Pat'$ is a
  prefix of $\str(v'')$.  Moreover, it holds $\sdepth(v') \geq
  \sdepth(\parent(v)) + 1$, and the node $v'$ does not have any
  ancestors at depth $d \in (\sdepth(\parent(v)) \dd \sdepth(v'))$.
  Consequently, $v' = \WA(v'', \sdepth(\parent(v)) + 1)$.  We thus
  proceed as follows. First, using \cref{pr:st-sdepth}, in
  $\bigO(\log^{\epsilon} n)$ time we compute $\ell :=
  \sdepth(\parent(v))$.  Then, using \cref{pr:st-wa}, in
  $\bigO(\log^{\epsilon} n)$ time we compute $(b''', e''') =
  \repr(\WA(v'', \ell + 1))$. By the above discussion, we have $(b''',
  e''') = \repr(\rightsibling(v))$.
\end{proof}

\subsubsection{Implementation of
  \texorpdfstring{$\leftsibling(v)$}{leftsibling(v)}}\label{sec:st-leftsibling}

\begin{proposition}\label{pr:st-leftsibling}
  Let $v$ be an explicit node of $\ST$. Given the data structure from
  \cref{sec:st-ds} and $\repr(v)$, in $\bigO(\log^{\epsilon} n)$ time
  we can compute $\repr(\leftsibling(v))$.
\end{proposition}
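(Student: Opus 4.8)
The plan is to mirror the proof of \cref{pr:st-rightsibling}, replacing every reference to the right boundary by the left boundary. Denote $\repr(v) = (b, e)$. First I would handle the trivial case: if $(b, e) = (0, n)$, then $v = \Root(\ST)$, so $\leftsibling(v) = \nil$ and I return $\repr(\leftsibling(v)) = (0, 0)$. Otherwise $v \neq \Root(\ST)$, and I invoke \cref{pr:st-parent} to compute $(b', e') = \repr(\parent(v))$ in $\bigO(\log^{\epsilon} n)$ time; by definition of $\repr$ we have $b' \leq b < e \leq e'$. I then compare $b$ with $b'$: if $b = b'$, then $v$ is the leftmost child of its parent, so $\leftsibling(v) = \nil$ and I return $(0, 0)$.

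Assume now $b > b'$, so $v' := \leftsibling(v) \neq \nil$. The key combinatorial observation is that, since $v'$ is the child of $\parent(v)$ immediately preceding $v$, the leaves below $v'$ occupy $\SA$-indices $(\lrank(v') \dd \rrank(v')]$ with $\rrank(v') = \lrank(v) = b$; in particular $\SA[b] \in \Occ(\str(v'), \T)$. Hence, letting $v''$ be the leaf of $\ST$ with $\repr(v'') = (b-1, b)$ (i.e.\ the $b$th leftmost leaf, which satisfies $\str(v'') = \T[\SA[b] \dd n]$), the string $\str(v')$ is a prefix of $\str(v'')$, so $v'$ is an ancestor of $v''$. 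Moreover, $\parent(v') = \parent(v)$ is also an ancestor of $v''$, with $\sdepth(\parent(v)) < \sdepth(v')$, and by definition of the suffix tree there is no explicit node $w$ on the root-to-$v''$ path with $\sdepth(\parent(v)) < \sdepth(w) < \sdepth(v')$. Therefore $v' = \WA(v'', \sdepth(\parent(v)) + 1)$.

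Given this, the algorithm finishes exactly as in \cref{pr:st-rightsibling}: using \cref{pr:st-sdepth} I compute $\ell := \sdepth(\parent(v))$ in $\bigO(\log^{\epsilon} n)$ time, and then using \cref{pr:st-wa} I compute and return $(b''', e''') = \repr(\WA(v'', \ell + 1))$, which by the above equals $\repr(\leftsibling(v))$ (note $\ell + 1 \leq \sdepth(v') \leq |\str(v'')|$, so the $\WA$ call is valid). Each step runs in $\bigO(\log^{\epsilon} n)$ time, giving the claimed bound. The only real point to verify carefully is the identity $\rrank(\leftsibling(v)) = \lrank(v)$ and the resulting claim $v' = \WA(v'', \sdepth(\parent(v)) + 1)$; this is the exact left--right mirror of the argument already spelled out for $\rightsibling(v)$, so no new ideas are needed.
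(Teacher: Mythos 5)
Your proposal is correct and matches the paper's own proof, which simply states that the argument is the left--right mirror of \cref{pr:st-rightsibling}, replacing the check $e = e'$ with $b = b'$ and the leaf pair $(e, e+1)$ with $(b-1, b)$ — exactly the substitutions you make, with the key identity $\rrank(\leftsibling(v)) = \lrank(v)$ and the $\WA(v'', \sdepth(\parent(v))+1)$ step spelled out explicitly.
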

\begin{proof}
  Denote $(b, e) = \repr(v)$. The algorithm is symmetrical to the one
  presented in the proof of \cref{pr:st-rightsibling}. More precisely,
  we replace the check $e = e'$ with $b = b'$. We also set $(b'', e'')
  = (b - 1, b)$ instead of $(b'', e'') = (e, e + 1)$.
\end{proof}

\subsubsection{Implementation of
  \texorpdfstring{$\isleaf(v)$}{isleaf(v)}}\label{sec:st-isleaf}

\begin{proposition}\label{pr:st-isleaf}
  Let $v$ be an explicit node of $\ST$. Given $\repr(v)$, we can check
  if $v$ is a leaf in $\bigO(1)$ time.
\end{proposition}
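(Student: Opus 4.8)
The plan is to exploit the fact that, as explained at the beginning of \cref{sec:st}, the representation $\repr(v) = (\lrank(v), \rrank(v))$ of a node $v$ of $\ST$ coincides with $(\LB(\str(v), \T), \UB(\str(v), \T))$, and hence $\rrank(v) - \lrank(v) = |\Occ(\str(v), \T)| = \cnt(v)$ is exactly the number of leaves in the subtree rooted at $v$. So the whole algorithm is: read $(b, e) = \repr(v)$ and return true if and only if $e - b = 1$ (equivalently $b + 1 = e$). This clearly takes $\bigO(1)$ time.

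The only thing to prove is the equivalence ``$v$ is a leaf $\iff e - b = 1$''. First I would argue the forward direction: if $v$ is a leaf of $\ST$, then $\str(v) = \T[j \dd n]$ for some $j$, and since we assumed $\T[n] = 0$ does not occur elsewhere in $\T$ (see \cref{sec:prelim}), the string $\T[j \dd n]$ occurs in $\T$ only at position $j$, so $|\Occ(\str(v), \T)| = 1$, i.e., $e - b = 1$. For the converse, recall that we assumed $|\T| \geq 2$ and that $\T$ contains at least two distinct symbols; consequently the root of $\ST$ has at least two children, and every non-root internal node has at least two children by definition of a compact trie. Hence the subtree rooted at any internal node (including the root) contains at least two leaves, so $e - b \geq 2$ for every internal $v$. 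Therefore $e - b = 1$ forces $v$ to be a leaf.

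I do not expect any real obstacle here: the statement is essentially immediate once one has the observation that $\rrank(v) - \lrank(v)$ counts leaves below $v$. The only point requiring a moment's care is the degenerate case of $v = \Root(\ST)$, but it is subsumed by the argument above, since $|\T| \geq 2$ together with $\repr(\Root(\ST)) = (0, n)$ gives $\rrank(\Root(\ST)) - \lrank(\Root(\ST)) = n \geq 2$.
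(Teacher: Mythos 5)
Your proposal is correct and follows essentially the same route as the paper: test whether $b + 1 = e$, justified by the fact that (thanks to $|\T| \geq 2$ and the unique sentinel $\T[n]$) the root and every internal node have at least two children, so $\rrank(v) - \rrank(v)$... rather $\rrank(v) - \lrank(v) \geq 2$ for non-leaves while a leaf has exactly one occurrence of $\str(v)$. The extra detail you give for the forward direction (uniqueness of the suffix via the sentinel) is fine and matches the paper's implicit reasoning.
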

\begin{proof}
  Recall, that $|\T| \geq 2$ and that $\T[n]$ is unique in $\T$.  This
  implies that the root of $\ST$ has at least two children.  On the
  other hand, any other non-leaf node has at least two children by
  definition. Thus, letting $(b, e) = \repr(v)$, and recalling that
  $\repr(v) = (\lrank(v), \rrank(v))$, the node $v$ is a leaf if and
  only if $b + 1 = e$, which we can check in $\bigO(1)$ time.
\end{proof}

\subsubsection{Implementation of
  \texorpdfstring{$\ind(v)$}{index(v)}}\label{sec:st-index}

\begin{proposition}\label{pr:st-index}
  Let $v$ be an explicit node of $\ST$. Given the data structure from
  \cref{sec:st-ds} and $\repr(v)$, in $\bigO(\log^{\epsilon} n)$ time
  we can compute $\ind(v)$.
\end{proposition}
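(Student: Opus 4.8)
The plan is to reduce the computation of $\ind(v)$ to a single $\SA$ query. Recall that for an explicit node $v$ of $\ST$ we store $\repr(v) = (\lrank(v), \rrank(v))$, and that (as noted in the paragraph on the representation of a node) this equals $(\LB(\str(v), \T), \UB(\str(v), \T))$ with $\lrank(v) < \rrank(v)$; consequently $\{\SA[i] : i \in (\lrank(v) \dd \rrank(v)]\} = \Occ(\str(v), \T)$, and this set is nonempty. In particular, $\SA[\rrank(v)] \in \Occ(\str(v), \T)$, so any index in $(\lrank(v) \dd \rrank(v)]$ yields a valid answer.

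Concretely, first I would unpack $(b, e) = \repr(v)$ and observe that $e = \rrank(v) \in [1 \dd n]$ (since $0 \le b < e \le n$, including the case $v = \Root(\ST)$, where $e = n$). Then I would invoke \cref{pr:sa-sa} on the index $e$ to compute $j = \SA[e]$ in $\bigO(\log^{\epsilon} n)$ time, and return $j$. Correctness is immediate: by the discussion above $j \in \Occ(\str(v), \T)$, which is exactly what $\ind(v)$ requires (any position in $\Occ(\str(v), \T)$). The running time is dominated by the single call to \cref{pr:sa-sa}, i.e.\ $\bigO(\log^{\epsilon} n)$.

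There is essentially no combinatorial obstacle here; the only point requiring a word of care is confirming that $\repr(v)$ directly exposes an $\SA$-range of occurrences of $\str(v)$ (which is guaranteed precisely because $v$ is a node of $\ST$, not of one of the auxiliary tries $\TSSS$ or $\TZ$, where this property fails), and that the chosen index $e$ is in range. Both are handled by the observations in the node-representation paragraph, so the proof is a short two-line reduction to \cref{pr:sa-sa}.
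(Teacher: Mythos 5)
Your proposal is correct and matches the paper's proof: both reduce $\ind(v)$ to a single $\SA$ query via \cref{pr:sa-sa} at an arbitrary index in $(\lrank(v) \dd \rrank(v)]$ (the paper says ``any $i$'' in that range; you instantiate $i = \rrank(v)$), using that $\repr(v) = (\LB(\str(v),\T), \UB(\str(v),\T))$ exposes the $\SA$-range of $\Occ(\str(v),\T)$. Nothing further is needed.
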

\begin{proof}
  Denote $(b, e) = \repr(v)$ and $\Pat = \str(v)$. Recall, that it
  holds $\repr(v) = (\LB(\Pat, \T),\allowbreak \UB(\Pat, \T))$, and
  hence $\Occ(\Pat, \T) = \{\SA[i]\}_{i \in (b \dd e]}$. Thus, to
  obtain $\ind(v)$, it suffices to compute $j = \SA[i]$ for any $i \in
  (b \dd e]$. Using \cref{pr:sa-sa}, this takes $\bigO(\log^{\epsilon}
  n)$ time.
\end{proof}

\subsubsection{Implementation of
  \texorpdfstring{$\cnt(v)$}{count(v)}}\label{sec:st-count}

\begin{proposition}\label{pr:st-count}
  Let $v$ be an explicit node of $\ST$. Given $\repr(v)$, we can
  compute $\cnt(v)$ in $\bigO(1)$ time.
\end{proposition}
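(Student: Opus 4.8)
The plan is to observe that the required count is already encoded directly in $\repr(v)$, so no data structure beyond the pair itself is needed. Recall that for an explicit node $v$ of $\ST$ we have $\repr(v) = (\lrank(v), \rrank(v)) = (\LB(\str(v), \T), \UB(\str(v), \T))$, and that (as recorded in the preliminaries on compact tries) collecting every $i$th leftmost leaf of $\ST$ for $i \in (\lrank(v) \dd \rrank(v)]$ yields exactly the set of leaves in the subtree rooted at $v$. Therefore the number of such leaves equals $\rrank(v) - \lrank(v)$.

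Concretely, I would let $(b, e) = \repr(v)$ and simply return $\cnt(v) = e - b$. Since reading the two components of $\repr(v)$ and computing their difference takes $\bigO(1)$ time in the word RAM model, the claimed time bound follows immediately.

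There is essentially no obstacle here: the only thing to be careful about is to invoke the correct characterization of $(\lrank(v), \rrank(v))$ for nodes of a suffix tree (namely that $\{\SA[i]\}_{i \in (\lrank(v)\dd \rrank(v)]}$ is precisely $\Occ(\str(v),\T)$, which is in bijection with the leaves below $v$), so that $e - b$ is indeed the subtree leaf count and not merely an upper or lower bound. This is exactly the property already used throughout \cref{sec:st}, so the proof is a one-line consequence.
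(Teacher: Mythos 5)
Your proposal is correct and follows essentially the same route as the paper: both simply observe that $\repr(v) = (b,e)$ satisfies $\{\SA[i]\}_{i \in (b \dd e]} = \Occ(\str(v),\T)$, which is in bijection with the leaves below $v$, so $\cnt(v) = e - b$ is returned in $\bigO(1)$ time. No gaps.
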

\begin{proof}
  Denote $(b, e) = \repr(v)$. Since by definition we have
  $\Occ(\str(v), \T) = \{\SA[i]\}_{i \in (b \dd e]}$, in $\bigO(1)$
  time we return $\cnt(v) = |\Occ(\str(v), \T)| = e - b$.
\end{proof}

\subsubsection{Implementation of
  \texorpdfstring{$\letter(v, i)$}{letter(v, i)}}\label{sec:st-letter}

\begin{proposition}\label{pr:st-letter}
  Let $v$ be an explicit node of $\ST$ and $i \in [1 \dd |\str(v)|]$.
  Given the data structure from \cref{sec:st-ds}, $\repr(v)$, and the
  value $i$, we can compute $\letter(v, i)$ in $\bigO(\log^{\epsilon}
  n)$ time.
\end{proposition}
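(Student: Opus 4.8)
The plan is to reduce $\letter(v,i)$ to a single $\ind$ query followed by one random access into the packed text. Recall that for any node $v$ of $\ST$ and any $j \in \Occ(\str(v),\T)$ we have $\T[j \dd j + |\str(v)|) = \str(v)$; in particular $\str(v)[i] = \T[j+i-1]$ for every $i \in [1 \dd |\str(v)|]$. So the computation proceeds in two steps. First, using \cref{pr:st-index}, in $\bigO(\log^{\epsilon} n)$ time we compute a position $j = \ind(v) \in \Occ(\str(v),\T)$. Second, we return $\letter(v,i) = \T[j+i-1]$, which is read in $\bigO(1)$ time from the packed representation of $\T$ (a component of the structure from \cref{sec:st-ds}, via $\STCore(\T)$).

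The only thing to check is that the access $\T[j+i-1]$ is well-defined, i.e.\ $1 \le j+i-1 \le n$. The lower bound is clear since $j \ge 1$ and $i \ge 1$. For the upper bound, $j \in \Occ(\str(v),\T)$ implies $j + |\str(v)| - 1 \le n$ (by definition of $\Occ$), and since we assumed $i \le |\str(v)|$ we get $j+i-1 \le j + |\str(v)| - 1 \le n$. There is no real obstacle here: the entire argument is immediate once \cref{pr:st-index} is available, and the total running time is dominated by the $\bigO(\log^{\epsilon} n)$-time $\ind$ query.
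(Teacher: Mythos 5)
Your proposal is correct and matches the paper's own proof: it likewise computes some $j \in \Occ(\str(v), \T)$ via \cref{pr:st-index} in $\bigO(\log^{\epsilon} n)$ time and then reads $\T[j+i-1]$ from the packed representation of $\T$ stored in $\STCore(\T)$ in $\bigO(1)$ time. Your additional check that the access is within bounds is a harmless (and correct) extra detail.
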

\begin{proof}
  It suffices to find any $j \in \Occ(\str(v), \T)$ and return $\T[j +
  i - 1]$. Using \cref{pr:st-index}, we find $j$ in
  $\bigO(\log^{\epsilon} n)$ time. We then return the output symbol in
  $\bigO(1)$ time (recall, that the packed representation of $\T$ is
  stored as part of $\STCore(\T)$).
\end{proof}

\subsubsection{Implementation of
  \texorpdfstring{$\isancestor(u, v)$}{isancestor(u, v)}}\label{sec:st-isancestor}

\begin{proposition}\label{pr:st-isancestor}
  Let $u$ and $v$ be explicit nodes of $\ST$. Given $\repr(u)$ and
  $\repr(v)$, we can check if $u$ is an ancestor of $v$ in $\bigO(1)$
  time.
\end{proposition}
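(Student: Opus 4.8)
The plan is to read off the answer directly from the index representation, so that no auxiliary structure is needed at all. First I would recall, from the definitions of $\lrank$ and $\rrank$ for compact tries, that collecting the $i$th leftmost leaves of $\ST$ over $i \in (\lrank(v) \dd \rrank(v)]$ yields exactly the set of leaves in the subtree rooted at $v$; equivalently, since for an explicit node $v$ of $\ST$ we have $\repr(v) = (\lrank(v),\rrank(v)) = (\LB(\str(v), \T), \UB(\str(v), \T))$, the interval $(\lrank(v) \dd \rrank(v)]$ is precisely the set of $\SA$-positions $i$ with $\str(v)$ a prefix of $\T[\SA[i] \dd n]$, i.e., the leaf range of $v$.

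The key step is the combinatorial observation that, for nodes of a tree, containment of leaf sets is equivalent to ancestry. I would argue this by noting that the leaves of the subtree of any node are contiguous in the left-to-right order, and that for two nodes $u, v$ of which neither is an ancestor of the other the corresponding subtrees are disjoint, hence so are the two leaf intervals; conversely, the leaf interval of an ancestor visibly contains that of any descendant. Therefore $u$ is an ancestor of $v$ if and only if $(\lrank(v) \dd \rrank(v)] \subseteq (\lrank(u) \dd \rrank(u)]$, that is, if and only if $\lrank(u) \le \lrank(v)$ and $\rrank(v) \le \rrank(u)$. If the intended notion of ancestor is the strict one, one additionally excludes the case $u = v$, which is legitimate because two distinct explicit nodes of a compact trie have distinct $(\lrank, \rrank)$ pairs (every explicit node is branching or a leaf, so its leaf range determines it).

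Given this, the proposition is immediate: writing $\repr(u) = (b_u, e_u)$ and $\repr(v) = (b_v, e_v)$, the algorithm simply returns the truth value of ``$b_u \le b_v$ and $e_v \le e_u$'' (optionally together with $(b_u, e_u) \neq (b_v, e_v)$), which is computable in $\bigO(1)$ time from the two input pairs alone. I do not expect a genuine obstacle here; the only thing worth spelling out is the laminar-family argument above, and it is a standard property of ordered trees once the $\repr$ encoding from the beginning of \cref{sec:st} is in place.
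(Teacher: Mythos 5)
Your proposal is correct and matches the paper's proof: the paper likewise observes that $u$ is an ancestor of $v$ if and only if $\Occ(\str(v),\T)\subseteq\Occ(\str(u),\T)$, i.e., the $\SA$-interval of $v$ is contained in that of $u$, which reduces to the constant-time comparison $b\leq b' < e'\leq e$ on the two $\repr$ pairs. Your leaf-range/laminar-family phrasing is the same argument in different words, so there is nothing to add.
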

\begin{proof}
  Denote $(b, e) = \repr(u)$, $(b', e') = \repr(v)$. The node $u$ is
  an ancestor of $v$ if and only if $\Occ(\str(v), \T) \sub
  \Occ(\str(u), \T)$, which (by definition) holds if and only if $b
  \leq b' < e' \leq e$.
\end{proof}

\subsubsection{Implementation of
  \texorpdfstring{$\findleaf(j)$}{findleaf(j)}}\label{sec:st-findleaf}

\begin{proposition}\label{pr:st-findleaf}
  Let $j \in [1 \dd n]$. Given the data structure from
  \cref{sec:st-ds} and the position $j$, in $\bigO(\log^{\epsilon} n)$
  time we can return $\repr(v)$, where $v$ is a leaf of $\ST$
  satisfying $\str(v) = \T[j \dd n]$.
\end{proposition}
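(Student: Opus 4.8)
The plan is to exploit the fact that the leaf of $\ST$ labeled by the suffix $\T[j \dd n]$ has an easily characterized $\repr$ value. Recall that for a leaf $v$ we have $\repr(v) = (\lrank(v), \rrank(v)) = (\LB(\str(v), \T), \UB(\str(v), \T))$, and that $\{\SA[i]\}_{i \in (\lrank(v) \dd \rrank(v)]} = \Occ(\str(v), \T)$. Since we assumed in \cref{sec:prelim} that $\T[n] = 0$ is unique in $\T$, the suffix $\T[j \dd n]$ occurs exactly once in $\T$, i.e., $\Occ(\T[j \dd n], \T) = \{j\}$. Consequently, if $v$ is the leaf with $\str(v) = \T[j \dd n]$, then $\rrank(v) - \lrank(v) = 1$ and $\SA[\rrank(v)] = j$, which means $\rrank(v) = \ISA[j]$ and $\repr(v) = (\ISA[j] - 1, \ISA[j])$.

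First I would invoke \cref{pr:sa-isa} on the position $j$ to compute $i := \ISA[j]$ in $\bigO(\log^{\epsilon} n)$ time (the structure from \cref{sec:sa-ds}, needed by \cref{pr:sa-isa}, is a component of the data structure from \cref{sec:st-ds}, via $\STCore(\T)$ and the structures of \cref{sec:st-nonperiodic-ds,sec:st-periodic-ds}). Then I would return $\repr(v) = (i - 1, i)$, which by the discussion above is exactly the representation of the leaf $v$ satisfying $\str(v) = \T[j \dd n]$. The total query time is dominated by the single $\ISA$ query, hence $\bigO(\log^{\epsilon} n)$.

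There is essentially no obstacle here; the only thing worth stating carefully is the justification that $\Occ(\T[j \dd n], \T)$ is a singleton and therefore the $\SA$-range of the leaf is $(\ISA[j] - 1 \dd \ISA[j]]$, which follows immediately from the uniqueness of $\T[n]$. The proof is thus a one-line reduction to \cref{pr:sa-isa}.
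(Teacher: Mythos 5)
Your proposal is correct and follows exactly the paper's own argument: observe that $\Occ(\T[j \dd n], \T) = \{j\}$ (by uniqueness of $\T[n]$), so the leaf's $\SA$-range is $(\ISA[j]-1 \dd \ISA[j]]$, and a single call to \cref{pr:sa-isa} yields $\repr(v) = (\ISA[j]-1, \ISA[j])$ in $\bigO(\log^{\epsilon} n)$ time. No gaps; the justification of the availability of the $\ISA$ structure within the data structure of \cref{sec:st-ds} is a reasonable extra detail.
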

\begin{proof}
  Let $(b, e) = \repr(v)$. Observe that by definition of $v$, we have
  $\Occ(\str(v), \T) = \{j\}$. Thus, it must hold $\{\SA[i]\}_{i \in
  (b \dd e]} = \{j\}$. This implies, that it suffices to compute $i
  = \ISA[j]$ and then return that $\repr(v) = (i-1, i)$.  Using
  \cref{pr:sa-isa}, the computation of $\ISA[j]$ takes
  $\bigO(\log^{\epsilon} n)$ time.
\end{proof}

\subsubsection{Construction Algorithm}\label{sec:st-construction}

\begin{proposition}\label{pr:st-construction}
  Given the packed representation of a text $\T \in \Alphabet^n$, we
  can construct the data structure from \cref{sec:st-ds} in $\bigO(n
  \min(1, \log \sigma / \sqrt{\log n}))$ time and $\bigO(n /
  \log_{\sigma} n)$ working space.
\end{proposition}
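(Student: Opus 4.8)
The plan is to assemble the structure exactly as described in \cref{sec:st-ds}: build the two components for $\T$, then the two components for $\Trev$, reusing a single copy of the index core in each half, and observe that the only step whose running time exceeds $\bigO(n/\log_\sigma n)$ is the construction of the nonperiodic part.

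First I would construct $\STCore(\T)$ from the packed representation of $\T$ using \cref{pr:st-core-construction}, which takes $\bigO(n/\log_\sigma n)$ time and (since $\STCore(\T)$ occupies $\bigO(n/\log_\sigma n)$ space) that much working space. Then I would augment $\STCore(\T)$ into the data structure of \cref{sec:st-nonperiodic-ds} via \cref{pr:st-nonperiodic-construction}; this is the only step whose cost is not already $\bigO(n/\log_\sigma n)$, taking $\bigO(n\min(1,\log\sigma/\sqrt{\log n}))$ time and $\bigO(n/\log_\sigma n)$ working space, the bottleneck being the construction of the prefix rank/selection structure of \cref{th:wavelet-tree} for the sequence $W$. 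Next, reusing the same copy of $\STCore(\T)$, I would augment it into the data structure of \cref{sec:st-periodic-ds} via \cref{pr:st-periodic-construction}, in $\bigO(n/\log_\sigma n)$ time. Together these give the first part of the structure of \cref{sec:st-ds}.

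For the second part I would first produce the packed representation of $\Trev$. By definition $\Trev$ is the reversal of $\T[1\dd n{-}1]$ followed by the sentinel $\T[n]=0$; since $0$ does not occur in $\T[1\dd n{-}1]$, the string $\Trev$ again has a unique terminator, so all results of the previous sections apply to it verbatim. The packed representation of $\Trev$ can be obtained from that of $\T$ in $\bigO(n/\log_\sigma n)$ time and space by scanning the $\bigO(\lceil n\log\sigma/w\rceil)$ machine words of $\T$ from right to left, reversing the characters inside each word; the intra-word character reversal is done in $\bigO(1)$ time per word with a lookup table that reverses all strings of $\bigO(\sqrt{\log_\sigma n})$ characters, precomputed in $o(n/\log_\sigma n)$ time. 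Repeating the three construction steps above for $\Trev$ yields the remaining components, again contributing $\bigO(n\min(1,\log\sigma/\sqrt{\log n}))$ time and $\bigO(n/\log_\sigma n)$ working space.

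Summing the costs, the construction takes $\bigO(n\min(1,\log\sigma/\sqrt{\log n}))$ time in total. At every point the data already produced is either part of the final structure (which is $\bigO(n/\log_\sigma n)$ words by \cref{sec:st-ds}) or scratch space reclaimed after the corresponding sub-construction, and each sub-construction uses only $\bigO(n/\log_\sigma n)$ working space, so the overall working space stays $\bigO(n/\log_\sigma n)$. I expect the main point to be bookkeeping rather than a genuine obstacle: ensuring that exactly one copy of $\STCore(\T)$ (resp.\ $\STCore(\Trev)$) is shared between the nonperiodic and periodic parts so the space bound is met, and that the only superlinear-in-$n/\log_\sigma n$ ingredient — the wavelet-tree construction of \cref{th:wavelet-tree} inside \cref{pr:st-nonperiodic-construction} — is invoked only a constant number of times.
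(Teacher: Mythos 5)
Your proposal is correct and follows essentially the same route as the paper: build $\STCore(\T)$, augment it via \cref{pr:st-nonperiodic-construction} (the only step costing $\bigO(n\min(1,\log\sigma/\sqrt{\log n}))$) and \cref{pr:st-periodic-construction}, then form the packed representation of $\Trev$ in $\bigO(n/\log_\sigma n)$ time and repeat. The only cosmetic difference is that the paper reverses the text using the already-available lookup table $\LTrev$ rather than a custom word-reversal table, which changes nothing substantive.
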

\begin{proof}
  The first part of the structure is constructed as follows.  First,
  from a packed representation of $\T$, we construct $\STCore(\T)$ in
  $\bigO(n / \log_{\sigma} n)$ time using
  \cref{pr:st-core-construction}. Then, using
  \cref{pr:st-nonperiodic-construction,pr:st-periodic-construction},
  we augment $\STCore(\T)$ in $\bigO(n \min(1, \log \sigma /
  \sqrt{\log n}))$ and $\bigO(n / \log_{\sigma} n)$ time (respectively)
  and using $\bigO(n / \log_{\sigma} n)$ working space into the two
  components of the structure from \cref{sec:st-ds}.  The overall
  runtime is thus $\bigO(n \min(1, \log \sigma / \sqrt{\log n}))$.

  Next, we compute $\Trev$. With the help of the lookup table
  $\LTrev$, we first compute $\Trev[1 \dd n) = \revstr{\T[1 \dd n)}$
  in $\bigO(n / \log_{\sigma} n)$ time. In $\bigO(1)$ time we then
  append the sentinel $\Trev[n] := 0$. After that, analogously as
  above, we construct the structures from
  \cref{sec:st-nonperiodic-ds,sec:st-periodic-ds} for $\Trev$, i.e.,
  the second part of the structure from \cref{sec:st-ds}.
\end{proof}

\subsection{Summary}\label{sec:st-summary}

By combining \cref{pr:st-sdepth,pr:st-lca,pr:st-child,pr:st-pred,%
pr:st-wa,pr:st-wlink,pr:st-slink,pr:st-slink,pr:st-slink-iter,%
pr:st-parent,pr:st-firstchild,pr:st-lastchild,pr:st-rightsibling,%
pr:st-leftsibling,pr:st-isleaf,pr:st-index,pr:st-count,pr:st-letter,%
pr:st-isancestor,pr:st-findleaf,pr:st-construction} we obtain the
following main result of this section.

\begin{theorem}\label{th:st}
  Given any constant $\epsilon \in (0,1)$ and the packed
  representation of a text $\T \in \Alphabet^n$ with $2 \leq \sigma <
  n^{1/7}$, in $\bigO(n \min(1, \log \sigma / \sqrt{\log n}))$ time
  and $\bigO(n / \log_{\sigma} n)$ working space we can construct a
  representation of the suffix tree of $\T$ occupying $\bigO(n /
  \log_{\sigma} n)$ space and supporting all standard operations (see
  \cref{tab:st-operations}) in $\bigO(\log^{\epsilon} n)$ time.
\end{theorem}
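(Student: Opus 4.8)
The plan is to assemble \cref{th:st} by combining the per-operation propositions of \cref{sec:st-final} with the construction bound \cref{pr:st-construction}, exactly as announced in the theorem's header. First I would invoke \cref{pr:st-construction} to obtain, from the packed representation of $\T$, the data structure of \cref{sec:st-ds} in $\bigO(n\min(1,\log\sigma/\sqrt{\log n}))$ time and $\bigO(n/\log_{\sigma}n)$ working space; the space of the structure itself is $\bigO(n/\log_{\sigma}n)$ by the same argument, since each of the four constituent structures (from \cref{sec:st-nonperiodic-ds,sec:st-periodic-ds}, for $\T$ and for $\Trev$) occupies $\bigO(n/\log_{\sigma}n)$ space and $\STCore(\T)$ is stored only once.

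Next, I would go down the list of operations in \cref{tab:st-operations} and cite the matching proposition for each: $\sdepth$ via \cref{pr:st-sdepth}, $\LCA$ via \cref{pr:st-lca}, $\child$ via \cref{pr:st-child}, $\pred$ via \cref{pr:st-pred}, $\WA$ via \cref{pr:st-wa}, $\wlink$ via \cref{pr:st-wlink} (and the stronger $\wlinkprim$ via \cref{pr:st-wlinkprim}), $\slink$ via \cref{pr:st-slink}, iterated $\slink(v,i)$ via \cref{pr:st-slink-iter}, $\parent$ via \cref{pr:st-parent}, $\firstchild$, $\lastchild$, $\rightsibling$, $\leftsibling$ via \cref{pr:st-firstchild,pr:st-lastchild,pr:st-rightsibling,pr:st-leftsibling}, and the $\bigO(1)$-time operations $\isleaf$, $\ind$, $\cnt$, $\letter$, $\isancestor$, $\findleaf$ via \cref{pr:st-isleaf,pr:st-index,pr:st-count,pr:st-letter,pr:st-isancestor,pr:st-findleaf}. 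Each such proposition already guarantees $\bigO(\log^\epsilon n)$ time (using $\log\log n=\bigO(\log^\epsilon n)$ and, for the core-handled shallow/short cases, $\bigO(1)$ time), so the worst-case bound over all operations is $\bigO(\log^\epsilon n)$.

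I expect the proof to be essentially a bookkeeping exercise, so the only genuine subtlety is making sure the default interface is internally consistent: every proposition takes and returns the pair representation $\repr(v)=(\lrank(v),\rrank(v))$, with $\repr(\nil)=(0,0)$, so chaining operations (e.g.\ $\pred$ calling $\child$, or $\wlink$ calling $\wlinkprim$, $\sdepth$, $\WA$) is well-typed; I would note this once, referring to the ``Representation of a Node'' paragraph of \cref{sec:st}. I would also remark that the two symmetric halves of the structure (one for $\T$, one for $\Trev$) are needed precisely for the left-extension operations $\wlink$/$\wlinkprim$, which reduce a left extension in $\T$ to a right extension (a $\child$/$\WA$ query) in $\Trev$ via the length-preserving reversal $\Trev$ defined in \cref{sec:st-ds}. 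With these remarks in place, the theorem follows; the main obstacle is not in this final assembly but was already discharged in the per-case analyses (especially the combinatorial lemmas \cref{lm:st-nonperiodic-wa,lm:st-periodic-wa} underlying the $\WA$ operation and the bit-optimal construction of $\BVexp$ in \cref{pr:sa-periodic-construction}).

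\begin{proof}
  Using \cref{pr:st-construction}, given the packed representation of
  $\T$, we construct the data structure from \cref{sec:st-ds} in
  $\bigO(n\min(1,\log\sigma/\sqrt{\log n}))$ time and
  $\bigO(n/\log_{\sigma}n)$ working space; the resulting structure
  occupies $\bigO(n/\log_{\sigma}n)$ space.  Each explicit node $v$ of
  $\ST$ is represented as $\repr(v)=(\lrank(v),\rrank(v))$, with
  $\repr(\nil)=(0,0)$ (see \cref{sec:st}), and all the operations
  below both receive and return nodes in this representation, so their
  compositions are well-defined.

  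It remains to observe that every operation listed in
  \cref{tab:st-operations} is supported in $\bigO(\log^{\epsilon} n)$
  time.  Indeed, $\sdepth(v)$ is handled by \cref{pr:st-sdepth},
  $\LCA(u,v)$ by \cref{pr:st-lca}, $\child(v,c)$ by \cref{pr:st-child},
  $\pred(v,c)$ by \cref{pr:st-pred}, $\WA(v,d)$ by \cref{pr:st-wa},
  $\wlink(v,c)$ by \cref{pr:st-wlink} (and the stronger operation
  $\wlinkprim(v,c)$ by \cref{pr:st-wlinkprim}), $\slink(v)$ by
  \cref{pr:st-slink}, the iterated suffix link $\slink(v,i)$ by
  \cref{pr:st-slink-iter}, $\parent(v)$ by \cref{pr:st-parent}, and
  $\firstchild(v)$, $\lastchild(v)$, $\rightsibling(v)$,
  $\leftsibling(v)$ by
  \cref{pr:st-firstchild,pr:st-lastchild,pr:st-rightsibling,pr:st-leftsibling},
  respectively; each of these runs in $\bigO(\log^{\epsilon} n)$ time
  (note that $\bigO(\log \log n) \subseteq \bigO(\log^{\epsilon} n)$).
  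The operations $\isleaf(v)$, $\ind(v)$, $\cnt(v)$, $\letter(v,i)$,
  $\isancestor(u,v)$, and $\findleaf(j)$ are handled by
  \cref{pr:st-isleaf,pr:st-index,pr:st-count,pr:st-letter,pr:st-isancestor,pr:st-findleaf},
  respectively (some of them in $\bigO(1)$ time).  The left-extension
  operations $\wlink$ and $\wlinkprim$ make use of the second part of
  the structure from \cref{sec:st-ds}, built for the text $\Trev$,
  since a left extension in $\T$ corresponds to a right extension in
  $\Trev$.  Combining the above bounds, all operations in
  \cref{tab:st-operations} are answered in $\bigO(\log^{\epsilon} n)$
  worst-case time, which completes the proof.
\end{proof}
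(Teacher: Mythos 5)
Your proposal is correct and matches the paper's own argument, which likewise obtains \cref{th:st} by combining \cref{pr:st-construction} with the per-operation propositions \cref{pr:st-sdepth,pr:st-lca,pr:st-child,pr:st-pred,pr:st-wa,pr:st-wlink,pr:st-slink,pr:st-slink-iter,pr:st-parent,pr:st-firstchild,pr:st-lastchild,pr:st-rightsibling,pr:st-leftsibling,pr:st-isleaf,pr:st-index,pr:st-count,pr:st-letter,pr:st-isancestor,pr:st-findleaf}. The paper states this combination without further elaboration, so your added remarks on the uniform $\repr(v)$ interface and the role of $\Trev$ for $\wlink$/$\wlinkprim$ are consistent, accurate bookkeeping rather than a deviation.
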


We also immediately obtain the following general reduction.

\begin{theorem}\label{th:st-general}
  Consider a data structure answering prefix rank and selection
  queries that, for any string of length $m$ over alphabet
  $\Alphabet^\ell$, achieves the following complexities:
  \begin{enumerate}
  \item Space usage $S(m, \ell, \sigma)$,
  \item Preprocessing time $P_t(m, \ell, \sigma)$,
  \item Preprocessing space $P_s(m, \ell, \sigma)$,
  \item Query time $Q(m, \ell, \sigma)$.
  \end{enumerate}
  For every $\T \in \Alphabet^n$ with $2 \leq \sigma < n^{1/7}$, there
  exist $m = \bigO(n/\log_{\sigma} n)$ and $\ell = \bigO(\log_{\sigma}
  n)$ such that, given the packed representation of $\T$, we can in
  $\bigO(n / \log_{\sigma} n + P_t(m, \ell, \sigma))$ time and
  $\bigO(n / \log_{\sigma} n + P_s(m, \ell,\sigma))$ working space
  construct a representation of the suffix tree of $\T$ occupying
  $\bigO(n/\log_{\sigma} n + S(m, \ell, \sigma))$ space and supporting
  all standard operations (see \cref{tab:st-operations}) in
  $\bigO(\log \log n + Q(m, \ell, \sigma))$ time.
\end{theorem}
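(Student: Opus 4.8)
The plan is to prove \cref{th:st-general} by tracing through the entire construction of \cref{sec:st} and isolating every place where \cref{th:wavelet-tree} is invoked, replacing each such invocation with a call to the abstract prefix rank/selection structure whose bounds $S, P_t, P_s, Q$ are given as black boxes. First I would observe that \cref{th:st} is already proved as a corollary of \cref{pr:st-construction} plus the per-operation propositions in \cref{sec:st-summary}, and that every component of the final structure except the instances of \cref{th:wavelet-tree} is built in $\bigO(n/\log_\sigma n)$ time, occupies $\bigO(n/\log_\sigma n)$ space, and supports queries in $\bigO(\log\log n)$ time. This is already spelled out in \cref{pr:sa-core-construction,pr:sa-periodic-construction,pr:st-core-construction,pr:st-periodic-construction} and \cref{pr:range-queries,pr:compact-trie,pr:meta-trie}; none of these touches prefix rank/selection. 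So the only moving part is the data structure for the sequence $W[1\dd n']$ of length-$3\tau$ strings over $\Alphabet$ used in \cref{sec:sa-nonperiodic-ds}, which is the sole consumer of \cref{th:wavelet-tree} in the whole pipeline.

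Next I would set the parameters explicitly: take $m := n'$ (the size of the $\tau$-synchronizing set $\S$, so $m = \bigO(n/\tau) = \bigO(n/\log_\sigma n)$) and $\ell := 3\tau = \bigO(\log_\sigma n)$; the alphabet is $\Alphabet$. One needs $m \ge \sigma^\ell$, which holds because $\sigma^{3\tau} = \bigO(\sqrt n) = o(n/\log n) = o(m)$ (this is exactly the bound already used in item 7 of \cref{sec:sa-nonperiodic-ds}). Then I would rerun \cref{pr:sa-nonperiodic-construction}: the construction of $\S$, of $\BVS$, of $\ARRsmap/\ARRsinvmap$, of $\LTrev$, and of $\LTD$ all take $\bigO(n/\log_\sigma n)$ time and space; the only change is that building the structure for $W$ now costs $\bigO(n/\log_\sigma n + P_t(m,\ell,\sigma))$ time and $\bigO(n/\log_\sigma n + P_s(m,\ell,\sigma))$ working space, and contributes $\bigO(n/\log_\sigma n + S(m,\ell,\sigma))$ to the final size. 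Every prefix rank query $\rank{W}{\revstr X}{y}$ and prefix selection query $\select{W}{\revstr X}{i-b_X}$ appearing in \cref{pr:sa-nonperiodic-isa,pr:sa-nonperiodic-sa,pr:st-nonperiodic-map,pr:st-nonperiodic-invmap,pr:pm-nonperiodic,lm:pm-nonperiodic} now costs $Q(m,\ell,\sigma)$ instead of $\bigO(\log^\epsilon n)$. Since $Q$ appears only a constant number of times in each suffix-tree operation (chase through \cref{pr:st-nonperiodic-lca,pr:st-nonperiodic-child,pr:st-nonperiodic-pred,pr:st-nonperiodic-wa,pr:st-sdepth,pr:st-lca,pr:st-child,pr:st-pred,pr:st-wa,pr:st-wlink,pr:st-slink,pr:st-slink-iter,pr:st-parent,pr:st-firstchild,pr:st-lastchild,pr:st-rightsibling,pr:st-leftsibling,pr:st-index,pr:st-letter,pr:st-findleaf}), each operation runs in $\bigO(\log\log n + Q(m,\ell,\sigma))$ time. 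The periodic branch of every operation never uses $W$ at all — it only uses $\BVexp$, $\BVRprim$, the lookup tables, the range-counting structure of \cref{pr:range-queries}, and the compact tries $\TZ$, $\Tshort$, $\TSSS$ — so it stays at $\bigO(\log\log n)$; I would remark this explicitly.

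The two points requiring a little care, which I would address separately, are: (a) \cref{th:wavelet-tree} has the extra hypothesis $m \ge \sigma^\ell \ge 2$ which I verified above; if the abstract structure in the statement of \cref{th:st-general} is taken with the same domain restriction (a string over $\Alphabet^\ell$ with $m$ characters, as written), nothing further is needed, but if one wanted to be scrupulous one would note that the text-side reductions only ever feed it sequences of same-length strings, matching the hypothesis; and (b) the doubling of $\Trev$ in \cref{sec:st-ds} means the prefix rank/selection structure is instantiated \emph{twice} (once for $\T$, once for $\Trev$), so the final bounds carry a factor $2$ in front of $S, P_t, P_s$, which is absorbed into the $\bigO(\cdot)$. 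The main obstacle — really the only nontrivial bookkeeping — is making sure that no operation secretly invokes prefix rank/selection a non-constant number of times: for instance $\parent(v)$ in \cref{pr:st-parent} calls $\LCA$ twice and $\sdepth$ a constant number of times, each of which in turn calls $\SA$/$\ISA$ and hence $W$ a constant number of times, so the total is still $\bigO(1)$ calls; I would state this as the observation that the call graph of each operation has bounded depth and bounded out-degree, hence a constant total number of $W$-queries. With that, combining the replacements gives preprocessing time $\bigO(n/\log_\sigma n + P_t(m,\ell,\sigma))$, preprocessing space $\bigO(n/\log_\sigma n + P_s(m,\ell,\sigma))$, size $\bigO(n/\log_\sigma n + S(m,\ell,\sigma))$, and query time $\bigO(\log\log n + Q(m,\ell,\sigma))$, which is exactly the statement; instantiating the abstract structure with \cref{th:wavelet-tree} recovers \cref{th:st}.
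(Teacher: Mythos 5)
Your proposal is correct and takes essentially the same route as the paper, which states \cref{th:st-general} as an immediate consequence of the construction behind \cref{th:st}: the only component depending on the concrete prefix rank/selection structure is the one built for $W[1\dd n']$ (instantiated twice, for $\T$ and $\Trev$), every operation issues only $\bigO(1)$ such queries, and all remaining components already meet the stated $\bigO(n/\log_\sigma n)$ and $\bigO(\log\log n)$ bounds. The only nit is that your check of $m \ge \sigma^\ell$ tacitly assumes $n' = \Omega(n/\log n)$, which need not hold for highly periodic texts; since $m$ is only required to be $\bigO(n/\log_{\sigma} n)$, one simply takes $m$ to be a suitable upper bound on $n'$ (and this matters only when instantiating the abstract structure with \cref{th:wavelet-tree}, not for the reduction itself).
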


\bibliographystyle{plainurl}
\bibliography{paper}

\begin{thebibliography}{10}

\bibitem{bwtbook}
Donald Adjeroh, Tim Bell, and Amar Mukherjee.
\newblock {\em The {B}urrows-{W}heeler Transform: Data Compression, Suffix
  Arrays, and Pattern Matching}.
\newblock Springer, Boston, MA, USA, 2008.
\newblock \href {https://doi.org/10.1007/978-0-387-78909-5}
  {\path{doi:10.1007/978-0-387-78909-5}}.

\bibitem{AmirLLS07}
Amihood Amir, Gad~M. Landau, Moshe Lewenstein, and Dina Sokol.
\newblock Dynamic text and static pattern matching.
\newblock {\em {ACM} Trans. Algorithms}, 3(2):19, 2007.
\newblock \href {https://doi.org/10.1145/1240233.1240242}
  {\path{doi:10.1145/1240233.1240242}}.

\bibitem{WaveletSuffixTree}
Maxim Babenko, Pawe{\l} Gawrychowski, Tomasz Kociumaka, and Tatiana
  Starikovskaya.
\newblock Wavelet trees meet suffix trees.
\newblock In {\em 26th Annual {ACM}-{SIAM} Symposium on Discrete Algorithms
  (SODA)}, pages 572--591, 2015.
\newblock \href {https://doi.org/10.1137/1.9781611973730.39}
  {\path{doi:10.1137/1.9781611973730.39}}.

\bibitem{BarbayCGNN14}
J{\'{e}}r{\'{e}}my Barbay, Francisco Claude, Travis Gagie, Gonzalo Navarro, and
  Yakov Nekrich.
\newblock Efficient fully-compressed sequence representations.
\newblock {\em Algorithmica}, 69(1):232--268, 2014.
\newblock \href {https://doi.org/10.1007/s00453-012-9726-3}
  {\path{doi:10.1007/s00453-012-9726-3}}.

\bibitem{Belazzougui14}
Djamal Belazzougui.
\newblock Linear time construction of compressed text indices in compact space.
\newblock In {\em 46th Annual {ACM} {SIGACT} Symposium on Theory of Computing
  ({STOC})}, pages 148--193, 2014.
\newblock \href {https://doi.org/10.1145/2591796.2591885}
  {\path{doi:10.1145/2591796.2591885}}.

\bibitem{DCC2015}
Djamal Belazzougui, Travis Gagie, Pawe{\l} Gawrychowski, Juha
  K{\"{a}}rkk{\"{a}}inen, Alberto~Ord{\'{o}}{\~{n}}ez Pereira, Simon~J.
  Puglisi, and Yasuo Tabei.
\newblock Queries on {LZ}-bounded encodings.
\newblock In {\em Data Compression Conference (DCC)}, pages 83--92, 2015.
\newblock \href {https://doi.org/10.1109/DCC.2015.69}
  {\path{doi:10.1109/DCC.2015.69}}.

\bibitem{BelazzouguiN14}
Djamal Belazzougui and Gonzalo Navarro.
\newblock Alphabet-independent compressed text indexing.
\newblock {\em {ACM} Trans. Algorithms}, 10(4):23:1--23:19, 2014.
\newblock \href {https://doi.org/10.1145/2635816} {\path{doi:10.1145/2635816}}.

\bibitem{BelazzouguiN15}
Djamal Belazzougui and Gonzalo Navarro.
\newblock Optimal lower and upper bounds for representing sequences.
\newblock {\em {ACM} Trans. Algorithms}, 11(4):31:1--31:21, 2015.
\newblock \href {https://doi.org/10.1145/2629339} {\path{doi:10.1145/2629339}}.

\bibitem{Belazzougui2016}
Djamal Belazzougui and Simon~J. Puglisi.
\newblock Range predecessor and {L}empel-{Z}iv parsing.
\newblock In {\em 27th Annual {ACM-SIAM} Symposium on Discrete Algorithms
  (SODA)}, pages 2053--2071, 2016.
\newblock \href {https://doi.org/10.1137/1.9781611974331.ch143}
  {\path{doi:10.1137/1.9781611974331.ch143}}.

\bibitem{BenderF00}
Michael~A. Bender and Martin Farach{-}Colton.
\newblock The {LCA} problem revisited.
\newblock In {\em 4th Latin American Symposium on Theoretical Informatics
  (LATIN)}, pages 88--94, 2000.
\newblock \href {https://doi.org/10.1007/10719839_9}
  {\path{doi:10.1007/10719839_9}}.

\bibitem{BilleEGV18}
Philip Bille, Mikko~Berggren Ettienne, Inge~Li G{\o}rtz, and Hjalte~Wedel
  Vildh{\o}j.
\newblock Time-space trade-offs for {L}empel-{Z}iv compressed indexing.
\newblock {\em Theor. Comput. Sci.}, 713:66--77, 2018.
\newblock \href {https://doi.org/10.1016/j.tcs.2017.12.021}
  {\path{doi:10.1016/j.tcs.2017.12.021}}.

\bibitem{BilleGS17}
Philip Bille, Inge~Li G{\o}rtz, and Frederik~Rye Skjoldjensen.
\newblock Deterministic indexing for packed strings.
\newblock In {\em 28th Annual Symposium on Combinatorial Pattern Matching
  (CPM)}, pages 6:1--6:11, 2017.
\newblock \href {https://doi.org/10.4230/LIPIcs.CPM.2017.6}
  {\path{doi:10.4230/LIPIcs.CPM.2017.6}}.

\bibitem{BLRSRW15}
Philip Bille, Gad~M. Landau, Rajeev Raman, Kunihiko Sadakane, Srinivasa~Rao
  Satti, and Oren Weimann.
\newblock Random access to grammar-compressed strings and trees.
\newblock {\em {SIAM} J. Comput.}, 44(3):513--539, 2015.
\newblock \href {https://doi.org/10.1137/130936889}
  {\path{doi:10.1137/130936889}}.

\bibitem{BoucherCGHMNR21}
Christina Boucher, Ondrej Cvacho, Travis Gagie, Jan Holub, Giovanni Manzini,
  Gonzalo Navarro, and Massimiliano Rossi.
\newblock {PFP} compressed suffix trees.
\newblock In {\em 24th Symposium on Algorithm Engineering and Experiments
  (ALENEX)}, pages 60--72, 2021.
\newblock \href {https://doi.org/10.1137/1.9781611976472.5}
  {\path{doi:10.1137/1.9781611976472.5}}.

\bibitem{bwt}
Michael Burrows and David~J. Wheeler.
\newblock A block-sorting lossless data compression algorithm.
\newblock Technical Report 124, Digital Equipment Corporation, 1994.
\newblock URL:
  \url{https://www.hpl.hp.com/techreports/Compaq-DEC/SRC-RR-124.pdf}.

\bibitem{CaceresN22}
Manuel C{\'{a}}ceres and Gonzalo Navarro.
\newblock Faster repetition-aware compressed suffix trees based on block trees.
\newblock {\em Inf. Comput.}, 285(Part):104749, 2022.
\newblock \href {https://doi.org/10.1016/j.ic.2021.104749}
  {\path{doi:10.1016/j.ic.2021.104749}}.

\bibitem{ChanHLS07}
Ho{-}Leung Chan, Wing{-}Kai Hon, Tak~Wah Lam, and Kunihiko Sadakane.
\newblock Compressed indexes for dynamic text collections.
\newblock {\em {ACM} Trans. Algorithms}, 3(2):21, 2007.
\newblock \href {https://doi.org/10.1145/1240233.1240244}
  {\path{doi:10.1145/1240233.1240244}}.

\bibitem{ChanP10}
Timothy~M. Chan and Mihai P\u{a}tra\c{s}cu.
\newblock Counting inversions, offline orthogonal range counting, and related
  problems.
\newblock In {\em 21st Annual {ACM}-{SIAM} Symposium on Discrete Algorithms
  (SODA)}, pages 161--173, 2010.
\newblock \href {https://doi.org/10.1137/1.9781611973075.15}
  {\path{doi:10.1137/1.9781611973075.15}}.

\bibitem{chazelle}
Bernard Chazelle.
\newblock A functional approach to data structures and its use in
  multidimensional searching.
\newblock {\em {SIAM} J. Comput.}, 17(3):427--462, 1988.
\newblock \href {https://doi.org/10.1137/0217026} {\path{doi:10.1137/0217026}}.

\bibitem{ChristiansenEKN21}
Anders~Roy Christiansen, Mikko~Berggren Ettienne, Tomasz Kociumaka, Gonzalo
  Navarro, and Nicola Prezza.
\newblock Optimal-time dictionary-compressed indexes.
\newblock {\em {ACM} Trans. Algorithms}, 17(1):8:1--8:39, 2021.
\newblock \href {https://doi.org/10.1145/3426473} {\path{doi:10.1145/3426473}}.

\bibitem{Clark98}
David~R. Clark.
\newblock {\em Compact Pat Trees}.
\newblock PhD thesis, University of Waterloo, 1998.
\newblock URL:
  \url{https://uwspace.uwaterloo.ca/bitstream/handle/10012/64/nq21335.pdf}.

\bibitem{CKL15}
Richard Cole, Tsvi Kopelowitz, and Moshe Lewenstein.
\newblock Suffix trays and suffix trists: Structures for faster text indexing.
\newblock {\em Algorithmica}, 72(2):450--466, 2015.
\newblock \href {https://doi.org/10.1007/s00453-013-9860-6}
  {\path{doi:10.1007/s00453-013-9860-6}}.

\bibitem{cover2006elements}
T.~M. Cover and J.~A. Thomas.
\newblock {\em Elements of information theory}.
\newblock Wiley, 2nd edition, 2006.
\newblock \href {https://doi.org/10.1002/047174882X}
  {\path{doi:10.1002/047174882X}}.

\bibitem{AlgorithmsOnStrings}
Maxime Crochemore, Christophe Hancart, and Thierry Lecroq.
\newblock {\em Algorithms on strings}.
\newblock Cambridge University Press, Cambridge, UK, 2007.
\newblock \href {https://doi.org/10.1017/cbo9780511546853}
  {\path{doi:10.1017/cbo9780511546853}}.

\bibitem{FarachM96}
Martin Farach and S.~Muthukrishnan.
\newblock Perfect hashing for strings: Formalization and algorithms.
\newblock In {\em 7th Annual Symposium on Combinatorial Pattern Matching
  (CPM)}, pages 130--140, 1996.
\newblock \href {https://doi.org/10.1007/3-540-61258-0\_11}
  {\path{doi:10.1007/3-540-61258-0\_11}}.

\bibitem{FarachFM00}
Martin Farach{-}Colton, Paolo Ferragina, and S.~Muthukrishnan.
\newblock On the sorting-complexity of suffix tree construction.
\newblock {\em J. {ACM}}, 47(6):987--1011, 2000.
\newblock \href {https://doi.org/10.1145/355541.355547}
  {\path{doi:10.1145/355541.355547}}.

\bibitem{fm}
Paolo Ferragina and Giovanni Manzini.
\newblock Opportunistic data structures with applications.
\newblock In {\em 41st {IEEE} Annual Symposium on Foundations of Computer
  Science (FOCS)}, pages 390--398, 2000.
\newblock \href {https://doi.org/10.1109/SFCS.2000.892127}
  {\path{doi:10.1109/SFCS.2000.892127}}.

\bibitem{FerraginaM05}
Paolo Ferragina and Giovanni Manzini.
\newblock Indexing compressed text.
\newblock {\em J. {ACM}}, 52(4):552--581, 2005.
\newblock \href {https://doi.org/10.1145/1082036.1082039}
  {\path{doi:10.1145/1082036.1082039}}.

\bibitem{fine1965uniqueness}
Nathan~J. Fine and Herbert~S. Wilf.
\newblock Uniqueness theorems for periodic functions.
\newblock {\em Proc. Am. Math. Soc.}, 16(1):109--114, 1965.
\newblock \href {https://doi.org/10.2307/2034009} {\path{doi:10.2307/2034009}}.

\bibitem{wexp}
Johannes Fischer and Pawe{\l} Gawrychowski.
\newblock Alphabet-dependent string searching with {W}exponential search trees.
\newblock In {\em 26th Annual Symposium on Combinatorial Pattern Matching
  (CPM)}, pages 160--171, 2015.
\newblock Full version: \url{https://arxiv.org/abs/1302.3347}.
\newblock \href {https://doi.org/10.1007/978-3-319-19929-0_14}
  {\path{doi:10.1007/978-3-319-19929-0_14}}.

\bibitem{FischerMN09}
Johannes Fischer, Veli M{\"{a}}kinen, and Gonzalo Navarro.
\newblock Faster entropy-bounded compressed suffix trees.
\newblock {\em Theor. Comput. Sci.}, 410(51):5354--5364, 2009.
\newblock \href {https://doi.org/10.1016/j.tcs.2009.09.012}
  {\path{doi:10.1016/j.tcs.2009.09.012}}.

\bibitem{GagieGKNP12}
Travis Gagie, Pawe{\l} Gawrychowski, Juha K{\"{a}}rkk{\"{a}}inen, Yakov
  Nekrich, and Simon~J. Puglisi.
\newblock A faster grammar-based self-index.
\newblock In {\em 6th International Conference on Language and Automata Theory
  and Applications (LATA)}, pages 240--251, 2012.
\newblock \href {https://doi.org/10.1007/978-3-642-28332-1_21}
  {\path{doi:10.1007/978-3-642-28332-1_21}}.

\bibitem{GagieGKNP14}
Travis Gagie, Pawe{\l} Gawrychowski, Juha K{\"{a}}rkk{\"{a}}inen, Yakov
  Nekrich, and Simon~J. Puglisi.
\newblock {LZ77}-based self-indexing with faster pattern matching.
\newblock In {\em 11th Latin American Symposium on Theoretical Informatics
  (LATIN)}, pages 731--742, 2014.
\newblock \href {https://doi.org/10.1007/978-3-642-54423-1_63}
  {\path{doi:10.1007/978-3-642-54423-1_63}}.

\bibitem{Gagie2020}
Travis Gagie, Gonzalo Navarro, and Nicola Prezza.
\newblock Fully functional suffix trees and optimal text searching in
  {BWT}-runs bounded space.
\newblock {\em J. {ACM}}, 67(1):1--54, apr 2020.
\newblock \href {https://doi.org/10.1145/3375890} {\path{doi:10.1145/3375890}}.

\bibitem{Gao0N20}
Younan Gao, Meng He, and Yakov Nekrich.
\newblock Fast preprocessing for optimal orthogonal range reporting and range
  successor with applications to text indexing.
\newblock In {\em 28th Annual European Symposium on Algorithms (ESA)}, pages
  54:1--54:18, 2020.
\newblock \href {https://doi.org/10.4230/LIPIcs.ESA.2020.54}
  {\path{doi:10.4230/LIPIcs.ESA.2020.54}}.

\bibitem{Gawrychowski2015}
Pawel Gawrychowski, Adam Karczmarz, Tomasz Kociumaka, Jakub Lacki, and Piotr
  Sankowski.
\newblock Optimal dynamic strings.
\newblock In {\em 29th Annual {ACM-SIAM} Symposium on Discrete Algorithms
  (SODA)}, pages 1509--1528, 2018.
\newblock Full version: \url{https://arxiv.org/abs/1511.02612}.
\newblock \href {https://doi.org/10.1137/1.9781611975031.99}
  {\path{doi:10.1137/1.9781611975031.99}}.

\bibitem{Gog11}
Simon Gog.
\newblock {\em Compressed suffix trees: design, construction, and
  applications}.
\newblock PhD thesis, University of Ulm, 2011.
\newblock URL: \url{http://vts.uni-ulm.de/docs/2011/7786/vts\_7786\_11228.pdf}.

\bibitem{sdsl}
Simon Gog, Timo Beller, Alistair Moffat, and Matthias Petri.
\newblock From theory to practice: Plug and play with succinct data structures.
\newblock In {\em 13th International Symposium on Experimental Algorithms
  (SEA)}, pages 326--337, 2014.
\newblock \href {https://doi.org/10.1007/978-3-319-07959-2_28}
  {\path{doi:10.1007/978-3-319-07959-2_28}}.

\bibitem{GogKKPP19}
Simon Gog, Juha K{\"{a}}rkk{\"{a}}inen, Dominik Kempa, Matthias Petri, and
  Simon~J. Puglisi.
\newblock Fixed block compression boosting in {FM}-indexes: Theory and
  practice.
\newblock {\em Algorithmica}, 81(4):1370--1391, 2019.
\newblock \href {https://doi.org/10.1007/s00453-018-0475-9}
  {\path{doi:10.1007/s00453-018-0475-9}}.

\bibitem{GogMP17}
Simon Gog, Alistair Moffat, and Matthias Petri.
\newblock {CSA++}: Fast pattern search for large alphabets.
\newblock In {\em 19th Workshop on Algorithm Engineering and Experiments
  (ALENEX)}, pages 73--82, 2017.
\newblock \href {https://doi.org/10.1137/1.9781611974768.6}
  {\path{doi:10.1137/1.9781611974768.6}}.

\bibitem{wt}
Roberto Grossi, Ankur Gupta, and Jeffrey~Scott Vitter.
\newblock High-order entropy-compressed text indexes.
\newblock In {\em 14th Annual {ACM-SIAM} Symposium on Discrete Algorithms
  (SODA)}, pages 841--850, 2003.
\newblock URL: \url{https://dl.acm.org/doi/10.5555/644108.644250}.

\bibitem{csa}
Roberto Grossi and Jeffrey~Scott Vitter.
\newblock Compressed suffix arrays and suffix trees with applications to text
  indexing and string matching (extended abstract).
\newblock In {\em 32nd Annual {ACM} Symposium on Theory of Computing (STOC)},
  pages 397--406, 2000.
\newblock \href {https://doi.org/10.1145/335305.335351}
  {\path{doi:10.1145/335305.335351}}.

\bibitem{GrossiV05}
Roberto Grossi and Jeffrey~Scott Vitter.
\newblock Compressed suffix arrays and suffix trees with applications to text
  indexing and string matching.
\newblock {\em {SIAM} J. Comput.}, 35(2):378--407, 2005.
\newblock \href {https://doi.org/10.1137/S0097539702402354}
  {\path{doi:10.1137/S0097539702402354}}.

\bibitem{gusfield}
Dan Gusfield.
\newblock {\em Algorithms on Strings, Trees, and Sequences - {C}omputer Science
  and Computational Biology}.
\newblock Cambridge University Press, 1997.
\newblock \href {https://doi.org/10.1017/cbo9780511574931}
  {\path{doi:10.1017/cbo9780511574931}}.

\bibitem{Hagerup98}
Torben Hagerup.
\newblock Sorting and searching on the word {RAM}.
\newblock In {\em 15th Annual Symposium on Theoretical Aspects of Computer
  Science (STACS)}, pages 366--398, 1998.
\newblock \href {https://doi.org/10.1007/BFb0028575}
  {\path{doi:10.1007/BFb0028575}}.

\bibitem{HonSS03}
Wing{-}Kai Hon, Kunihiko Sadakane, and Wing{-}Kin Sung.
\newblock Breaking a time-and-space barrier in constructing full-text indices.
\newblock In {\em 44th {IEEE} Symposium on Foundations of Computer Science
  (FOCS)}, pages 251--260, 2003.
\newblock \href {https://doi.org/10.1109/SFCS.2003.1238199}
  {\path{doi:10.1109/SFCS.2003.1238199}}.

\bibitem{Jac89}
Guy Jacobson.
\newblock Space-efficient static trees and graphs.
\newblock In {\em 30th {IEEE} Symposium on Foundations of Computer Science
  (FOCS)}, pages 549--554, 1989.
\newblock \href {https://doi.org/10.1109/SFCS.1989.63533}
  {\path{doi:10.1109/SFCS.1989.63533}}.

\bibitem{KarkkainenKP14a}
Juha K{\"{a}}rkk{\"{a}}inen, Dominik Kempa, and Simon~J. Puglisi.
\newblock Hybrid compression of bitvectors for the {FM}-index.
\newblock In {\em Data Compression Conference (DCC)}, pages 302--311, 2014.
\newblock \href {https://doi.org/10.1109/DCC.2014.87}
  {\path{doi:10.1109/DCC.2014.87}}.

\bibitem{KarkkainenSB06}
Juha K{\"{a}}rkk{\"{a}}inen, Peter Sanders, and Stefan Burkhardt.
\newblock Linear work suffix array construction.
\newblock {\em J. {ACM}}, 53(6):918--936, 2006.
\newblock \href {https://doi.org/10.1145/1217856.1217858}
  {\path{doi:10.1145/1217856.1217858}}.

\bibitem{sss}
Dominik Kempa and Tomasz Kociumaka.
\newblock String synchronizing sets: Sublinear-time {BWT} construction and
  optimal {LCE} data structure.
\newblock In {\em 51st Annual {ACM} {SIGACT} Symposium on Theory of Computing
  (STOC)}, pages 756--767, 2019.
\newblock \href {https://doi.org/10.1145/3313276.3316368}
  {\path{doi:10.1145/3313276.3316368}}.

\bibitem{resolution}
Dominik Kempa and Tomasz Kociumaka.
\newblock Resolution of the {B}urrows-{W}heeler {T}ransform conjecture.
\newblock In {\em 61st {IEEE} Annual Symposium on Foundations of Computer
  Science (FOCS)}, pages 1002--1013, 2020.
\newblock \href {https://doi.org/10.1109/FOCS46700.2020.00097}
  {\path{doi:10.1109/FOCS46700.2020.00097}}.

\bibitem{dynsa}
Dominik Kempa and Tomasz Kociumaka.
\newblock Dynamic suffix array with polylogarithmic queries and updates.
\newblock In {\em 54th Annual {ACM} {SIGACT} Symposium on Theory of Computing
  (STOC)}, pages 1657--1670, 2022.
\newblock \href {https://doi.org/10.1145/3519935.3520061}
  {\path{doi:10.1145/3519935.3520061}}.

\bibitem{attractors}
Dominik Kempa and Nicola Prezza.
\newblock At the roots of dictionary compression: String attractors.
\newblock In {\em 50th Annual {ACM} {SIGACT} Symposium on Theory of Computing
  (STOC)}, pages 827--840, 2018.
\newblock \href {https://doi.org/10.1145/3188745.3188814}
  {\path{doi:10.1145/3188745.3188814}}.

\bibitem{phdtomek}
Tomasz Kociumaka.
\newblock {\em Efficient Data Structures for Internal Queries in Texts}.
\newblock PhD thesis, University of Warsaw, 2018.
\newblock URL:
  \url{https://depotuw.ceon.pl/bitstream/handle/item/3614/1000-DR-INF-170341.pdf}.

\bibitem{bowtie}
Ben Langmead, Cole Trapnell, Mihai Pop, and Steven~L Salzberg.
\newblock Ultrafast and memory-efficient alignment of short {DNA} sequences to
  the human genome.
\newblock {\em Genome Biol.}, 10(3):R25, 2009.
\newblock \href {https://doi.org/10.1186/gb-2009-10-3-r25}
  {\path{doi:10.1186/gb-2009-10-3-r25}}.

\bibitem{bwa}
Heng Li and Richard Durbin.
\newblock Fast and accurate short read alignment with {B}urrows-{W}heeler
  transform.
\newblock {\em Bioinform.}, 25(14):1754--1760, 2009.
\newblock \href {https://doi.org/10.1093/bioinformatics/btp324}
  {\path{doi:10.1093/bioinformatics/btp324}}.

\bibitem{soap2}
Ruiqiang Li, Chang Yu, Yingrui Li, Tak~Wah Lam, Siu{-}Ming Yiu, Karsten
  Kristiansen, and Jun Wang.
\newblock {SOAP2}: An improved ultrafast tool for short read alignment.
\newblock {\em Bioinform.}, 25(15):1966--1967, 2009.
\newblock \href {https://doi.org/10.1093/bioinformatics/btp336}
  {\path{doi:10.1093/bioinformatics/btp336}}.

\bibitem{MBCT2015}
Veli M{\"{a}}kinen, Djamal Belazzougui, Fabio Cunial, and Alexandru~I. Tomescu.
\newblock {\em Genome-scale algorithm design: Biological sequence analysis in
  the era of high-throughput sequencing}.
\newblock Cambridge University Press, Cambridge, UK, 2015.
\newblock \href {https://doi.org/10.1017/cbo9781139940023}
  {\path{doi:10.1017/cbo9781139940023}}.

\bibitem{MakinenN08}
Veli M{\"{a}}kinen and Gonzalo Navarro.
\newblock Dynamic entropy-compressed sequences and full-text indexes.
\newblock {\em {ACM} Trans. Algorithms}, 4(3):32:1--32:38, 2008.
\newblock \href {https://doi.org/10.1145/1367064.1367072}
  {\path{doi:10.1145/1367064.1367072}}.

\bibitem{mm1993}
Udi Manber and Eugene~W. Myers.
\newblock Suffix arrays: A new method for on-line string searches.
\newblock {\em {SIAM} J. Comput.}, 22(5):935--948, 1993.
\newblock \href {https://doi.org/10.1137/0222058} {\path{doi:10.1137/0222058}}.

\bibitem{MunroNN17}
J.~Ian Munro, Gonzalo Navarro, and Yakov Nekrich.
\newblock Space-efficient construction of compressed indexes in deterministic
  linear time.
\newblock In {\em 28th Annual {ACM-SIAM} Symposium on Discrete Algorithms
  (SODA)}, pages 408--424, 2017.
\newblock \href {https://doi.org/10.1137/1.9781611974782.26}
  {\path{doi:10.1137/1.9781611974782.26}}.

\bibitem{MunroNN20a}
J.~Ian Munro, Gonzalo Navarro, and Yakov Nekrich.
\newblock Fast compressed self-indexes with deterministic linear-time
  construction.
\newblock {\em Algorithmica}, 82(2):316--337, 2020.
\newblock \href {https://doi.org/10.1007/s00453-019-00637-x}
  {\path{doi:10.1007/s00453-019-00637-x}}.

\bibitem{MunroNN20b}
J.~Ian Munro, Gonzalo Navarro, and Yakov Nekrich.
\newblock Text indexing and searching in sublinear time.
\newblock In {\em 31st Annual Symposium on Combinatorial Pattern Matching
  (CPM)}, pages 24:1--24:15, 2020.
\newblock \href {https://doi.org/10.4230/LIPIcs.CPM.2020.24}
  {\path{doi:10.4230/LIPIcs.CPM.2020.24}}.

\bibitem{MunroNV16}
J.~Ian Munro, Yakov Nekrich, and Jeffrey~Scott Vitter.
\newblock Fast construction of wavelet trees.
\newblock {\em Theor. Comput. Sci.}, 638:91--97, 2016.
\newblock \href {https://doi.org/10.1016/j.tcs.2015.11.011}
  {\path{doi:10.1016/j.tcs.2015.11.011}}.

\bibitem{navarrobook}
Gonzalo Navarro.
\newblock {\em Compact data structures: A practical approach}.
\newblock Cambridge University Press, Cambridge, UK, 2016.
\newblock \href {https://doi.org/10.1017/cbo9781316588284}
  {\path{doi:10.1017/cbo9781316588284}}.

\bibitem{NavarroMeasures}
Gonzalo Navarro.
\newblock Indexing highly repetitive string collections, part {I}:
  Repetitiveness measures.
\newblock {\em {ACM} Comput. Surv.}, 54(2), 2021.
\newblock \href {https://doi.org/10.1145/3434399} {\path{doi:10.1145/3434399}}.

\bibitem{NavarroIndexes}
Gonzalo Navarro.
\newblock Indexing highly repetitive string collections, part {II}: Compressed
  indexes.
\newblock {\em {ACM} Comput. Surv.}, 54(2), 2021.
\newblock \href {https://doi.org/10.1145/3432999} {\path{doi:10.1145/3432999}}.

\bibitem{NavarroM07}
Gonzalo Navarro and Veli M{\"{a}}kinen.
\newblock Compressed full-text indexes.
\newblock {\em {ACM} Comput. Surv.}, 39(1):2, 2007.
\newblock \href {https://doi.org/10.1145/1216370.1216372}
  {\path{doi:10.1145/1216370.1216372}}.

\bibitem{NavarroN17}
Gonzalo Navarro and Yakov Nekrich.
\newblock Time-optimal top-k document retrieval.
\newblock {\em {SIAM} J. Comput.}, 46(1):80--113, 2017.
\newblock \href {https://doi.org/10.1137/140998949}
  {\path{doi:10.1137/140998949}}.

\bibitem{navarro201941}
Gonzalo Navarro and Nicola Prezza.
\newblock Universal compressed text indexing.
\newblock {\em Theor. Comput. Sci.}, 762:41--50, 2019.
\newblock \href {https://doi.org/10.1016/j.tcs.2018.09.007}
  {\path{doi:10.1016/j.tcs.2018.09.007}}.

\bibitem{NishimotoDAM}
Takaaki Nishimoto, Tomohiro I, Shunsuke Inenaga, Hideo Bannai, and Masayuki
  Takeda.
\newblock Dynamic index and {LZ} factorization in compressed space.
\newblock {\em Discret. Appl. Math.}, 274:116--129, 2020.
\newblock \href {https://doi.org/10.1016/j.dam.2019.01.014}
  {\path{doi:10.1016/j.dam.2019.01.014}}.

\bibitem{ohl2013}
Enno Ohlebusch.
\newblock {\em Bioinformatics algorithms: Sequence analysis, genome
  rearrangements, and phylogenetic reconstruction}.
\newblock Oldenbusch Verlag, Ulm, Germany, 2013.

\bibitem{OhlebuschFG10}
Enno Ohlebusch, Johannes Fischer, and Simon Gog.
\newblock {CST++}.
\newblock In {\em 17th International Symposium on String Processing and
  Information Retrieval (SPIRE)}, pages 322--333, 2010.
\newblock \href {https://doi.org/10.1007/978-3-642-16321-0_34}
  {\path{doi:10.1007/978-3-642-16321-0_34}}.

\bibitem{Prezza17}
Nicola Prezza.
\newblock A framework of dynamic data structures for string processing.
\newblock In {\em 16th International Symposium on Experimental Algorithms
  (SEA)}, pages 11:1--11:15, 2017.
\newblock \href {https://doi.org/10.4230/LIPIcs.SEA.2017.11}
  {\path{doi:10.4230/LIPIcs.SEA.2017.11}}.

\bibitem{PrezzaR21}
Nicola Prezza and Giovanna Rosone.
\newblock Space-efficient construction of compressed suffix trees.
\newblock {\em Theor. Comput. Sci.}, 852:138--156, 2021.
\newblock \href {https://doi.org/10.1016/j.tcs.2020.11.024}
  {\path{doi:10.1016/j.tcs.2020.11.024}}.

\bibitem{Patrascu07}
Mihai P\u{a}tra\c{s}cu.
\newblock Lower bounds for 2-dimensional range counting.
\newblock In {\em 39th Annual {ACM} Symposium on Theory of Computing (STOC)},
  pages 40--46, 2007.
\newblock \href {https://doi.org/10.1145/1250790.1250797}
  {\path{doi:10.1145/1250790.1250797}}.

\bibitem{RussoNO11}
Lu{\'{\i}}s M.~S. Russo, Gonzalo Navarro, and Arlindo~L. Oliveira.
\newblock Fully compressed suffix trees.
\newblock {\em {ACM} Trans. Algorithms}, 7(4):53:1--53:34, 2011.
\newblock \href {https://doi.org/10.1145/2000807.2000821}
  {\path{doi:10.1145/2000807.2000821}}.

\bibitem{Ruzic08}
Milan R{\v{u}}zi{\'{c}}.
\newblock Constructing efficient dictionaries in close to sorting time.
\newblock In {\em 35th International Colloquium on Automata, Languages, and
  Programming (ICALP)}, pages 84--95, 2008.
\newblock \href {https://doi.org/10.1007/978-3-540-70575-8_8}
  {\path{doi:10.1007/978-3-540-70575-8_8}}.

\bibitem{cst}
Kunihiko Sadakane.
\newblock Succinct representations of lcp information and improvements in the
  compressed suffix arrays.
\newblock In {\em 13th Annual {ACM-SIAM} Symposium on Discrete Algorithms
  (SODA)}, pages 225--232, 2002.
\newblock URL: \url{http://dl.acm.org/citation.cfm?id=545381.545410}.

\bibitem{Sadakane07}
Kunihiko Sadakane.
\newblock Compressed suffix trees with full functionality.
\newblock {\em Theory Comput. Syst.}, 41(4):589--607, 2007.
\newblock \href {https://doi.org/10.1007/s00224-006-1198-x}
  {\path{doi:10.1007/s00224-006-1198-x}}.

\bibitem{Weiner73}
Peter Weiner.
\newblock Linear pattern matching algorithms.
\newblock In {\em 14th Annual Symposium on Switching and Automata Theory
  ({SWAT}/{FOCS})}, pages 1--11, 1973.
\newblock \href {https://doi.org/10.1109/SWAT.1973.13}
  {\path{doi:10.1109/SWAT.1973.13}}.

\bibitem{LZ77}
Jacob Ziv and Abraham Lempel.
\newblock A universal algorithm for sequential data compression.
\newblock {\em Trans. Inf. Theory}, 23(3):337--343, 1977.
\newblock \href {https://doi.org/10.1109/TIT.1977.1055714}
  {\path{doi:10.1109/TIT.1977.1055714}}.

\end{thebibliography}

\end{document}